%% This document created by Scientific Word (R) Version 3.0

\documentclass[fleqn,notitlepage,thmsb,12pt]{article}%
\usepackage{amssymb}
\usepackage{amsthm}
\usepackage{graphicx}
\usepackage{amsmath}
\usepackage{xr-hyper}
\usepackage{hyperref}
\usepackage{amsfonts}
\usepackage{color}%
\setcounter{MaxMatrixCols}{30}
\usepackage{setspace,booktabs,makecell,array,graphicx,siunitx}
\usepackage{threeparttable,longtable,tabularx,ragged2e}
\usepackage[title]{appendix}
%TCIDATA{OutputFilter=latex2.dll}
%TCIDATA{Version=5.50.0.2960}
%TCIDATA{CSTFile=LaTeX article (bright).cst}
%TCIDATA{Created=Mon Aug 24 07:39:23 1998}
%TCIDATA{LastRevised=Friday, January 15, 2021 13:28:02}
%TCIDATA{<META NAME="GraphicsSave" CONTENT="32">}
%TCIDATA{<META NAME="SaveForMode" CONTENT="1">}
%TCIDATA{BibliographyScheme=Manual}
%TCIDATA{<META NAME="DocumentShell" CONTENT="General\Blank Document">}
%TCIDATA{Language=American English}
%BeginMSIPreambleData
\providecommand{\U}[1]{\protect\rule{.1in}{.1in}}
%EndMSIPreambleData

%\usepackage{geometry}
%\geometry{verbose,tmargin=1in,bmargin=1in,lmargin=1in,rmargin=1in}
\setlength{\oddsidemargin}{0pt}
\setlength{\textwidth}{6.35in}
\setlength{\topmargin}{0pt}
\setlength{\headheight}{0pt}
\setlength{\headsep}{0pt}
\setlength{\textheight}{8.75in}
\newtheorem{theorem}{Theorem}

\newtheorem{condition}{Condition}

\newtheorem{corollary}{Corollary}

\newtheorem{definition}{Definition}

\newtheorem{lemma}{Lemma}

\newtheorem{proposition}{Proposition}
\newtheorem{remark}{Remark}

\allowdisplaybreaks

\begin{document}

\title{Efficient Bias Correction for Cross-section and Panel Data\footnote{T. Rothenberg provided helpful comments.}}
\author{Jinyong Hahn\thanks{Department of Economics, UCLA, Los Angeles, CA 90095-1477. Email: hahn@econ.ucla.edu}\\UCLA
\and David W. Hughes\thanks{Boston College, Department of Economics, Chestnut Hill, MA 02467. Email: dw.hughes@bc.edu.}\\Boston College
\and Guido Kuersteiner\thanks{University of Maryland, Department of Economics, College Park, MD 20742-7211. Email:
kuersteiner@econ.umd.edu. Financial support from NSF grants SES-0095132 and
SES-0523186 is gratefully acknowledged. }\\University of Maryland
\and Whitney K. Newey\thanks{MIT, Department of Economics, Cambridge, MA 02139. Email: wnewey@mit.edu. Financial support from NSF Grant 1757140 is gratefully acknowledged.}\\MIT and NBER}
\maketitle

\begin{abstract}
Bias correction can often improve the finite sample performance of estimators. We show that the choice of bias correction method has no effect on the higher-order variance of semiparametrically efficient parametric estimators, so long as the estimate of the bias is asymptotically linear. It is also shown that bootstrap, jackknife, and analytical bias estimates are asymptotically linear for  estimators with higher-order expansions of a standard form. In particular, we find that for a variety of estimators the straightforward bootstrap bias correction gives the same higher-order variance as more complicated analytical or jackknife bias corrections. In contrast, bias corrections that do not estimate the bias at the parametric rate, such as the split-sample jackknife, result in larger higher-order variances in the i.i.d.\ setting we focus on. For both a cross-sectional MLE and a panel model with individual fixed effects, we show that the split-sample jackknife has a higher-order variance term that is twice as large as that of the `leave-one-out' jackknife.

\end{abstract}

\newpage

\section{Introduction}

Asymptotic bias corrections can be useful for centering estimators nearer to the
truth. One approach is to use analytical corrections such as the standard
textbook expansion for functions of sample means and the more complicated
formulas required for other estimators. Alternatively, we may use jackknife and
bootstrap bias corrections. To help choose a bias correction method it would
be useful to know which, if any, is preferable on asymptotic efficiency
grounds. Although the bias correction does not affect the first-order asymptotic
variance, it can affect the higher-order variance.  We show that the method of bias correction does not affect the higher-order variance of any parametric estimator that is efficient in a semiparametric model,  
as long as the bias estimator is asymptotically linear. Thus, one can choose a bias correction
for an efficient estimator based on computational convenience, or some other
criteria, without affecting its higher-order efficiency. We give a formal
expansion showing this property for a parametric estimator in a general semiparametric model, i.e. a model with a parametric component in which some other components are left unspecified. We also prove 
that the bootstrap, jackknife and analytical bias estimates are asymptotically linear, when the estimator of the parameters of interest has a standard form of stochastic expansion (which is known to exist for a large class of models). Derivations in the case of the MLE, show that the jackknife, bootstrap, and one type of analytical bias correction deliver estimators that have identical stochastic expansions to third order, and so have an even stronger equivalence property.

There are many implications of this higher-order efficiency result. One is that bias corrections that are not asymptotically linear may not have the same higher-order variance as those that are. 
For example, split-sample jackknife bias corrections are not asymptotically linear in cross-section or panel data and have a larger higher-order variance than other bias-corrected estimators. 
We find that the higher-order variance term is twice the size of that for the leave-one-out jackknife bias correction. On the other hand, the split-sample jackknife is useful in time series or panel data when the observations are not independent over time because the leave-one-out panel jackknife does not work in this case.  

Another implication of the result is that it allows researchers to choose the bias correction method that is computationally convenient. 
For example, Newey and Smith (2004) showed that the empirical likelihood estimator is higher-order efficient in moment condition models when certain analytical bias corrections are used. 
Asymptotic linearity of the bootstrap means that calculation of the bias formula can be avoided by using the bootstrap bias correction instead. 
As another example, Cattaneo, Jansson, and Ma (2019)
showed that a jackknife bias correction can be important when a first step regression with many regressors is plugged into a second step regression. Although our current results do not include asymptotics in which the number of regressors increases with the sample size, we conjecture that the bootstrap bias correction has a similar properties as the jackknife bias correction when the number of regressors increases slowly enough.

The higher-order variance concept that we consider is the $O(
n^{-1})$ variance of a third-order stochastic expansion of the
estimator. Its use for comparison of estimators was pioneered by Nagar (1959).
As shown in Pfanzagl and Wefelmeyer (1978) and Ghosh, Sinea, and Wieand (1980),
and discussed in Rothenberg (1984), under appropriate regularity conditions
rankings based on this higher-order variance correspond to rankings based on
the variance of an Edgeworth approximation. Thus, the bias and variance of leading terms in a stochastic expansion are also the leading
terms of an expansion of the bias and variance of an approximating
distribution. Furthermore, as noted by Rothenberg (1984), Akahira and
Takeuchi (1981) have shown that all well-behaved asymptotically efficient
estimators of a parametric likelihood model necessarily have the same skewness
and kurtosis to an order $n^{-1}$ approximation and that to compare the 
dispersion of second-order approximations to the distribution of efficient estimators it suffices to compare their higher-order variances.  
This motivates our focus here on the higher-order variance of bias-corrected estimators, which serves to quantify their higher-order  
efficiency. 
In having this focus we follow much of the more recent literature, such as Rilstone, Srivastava, and Ullah (1996) and Newey and Smith (2004).

We also derive asymptotic higher-order variance expressions for panel data 
models with unobserved, individual fixed 
effects as well as for cross-section models. These estimators are known to suffer from asymptotic bias under asymptotic sequences
in which $n$ and $T$ grow at the same rate. Two common methods of bias correction in this setting are the
`leave-one-out' panel jackknife (Hahn and Newey, 2004), and the split-sample jackknife (Dhaene and Jochmans, 2015).
Both deliver estimates that are asymptotically normal and centered at the truth when
$n$ and $T$ grow at the same rate, with equal first-order (asymptotic) variances.
Our analysis makes it possible to compare the two bias corrections in terms of their higher-order variance. We find that with i.i.d.\ data the split-sample correction has a higher-order variance that is twice the size of the `leave-one-out' jackknife. Although we focus on the maximum likelihood setting for panel data, the results are applicable to a broader set of moment condition estimators under suitable assumptions on the moment functions. 
Numerical comparisons in recent papers confirm that the difference in higher-order variance can be meaningful in practice, with the split-sample jackknife having larger dispersion and lower coverage than analytical or leave-one-out jackknife corrections in a variety of settings; see for example, 
Alexander and Breunig (2016), Fernandez-Val and Weidner (2018), and Czarnowske and Stammann (2019). Our own simulations of a panel probit model with one
common parameter and individual fixed effects also support this result. 
This comparison is also true of estimates of a marginal effect parameter.

\subsection{Related literature}

Higher-order efficiency of the MLE was analyzed by Pfanzagl and Wefelmeyer (1978) in terms of risk functions or Akahira and Takeuchi (1981) in terms of concentration probabilities. Ghosh (1994), and Taniguchi and Kakizawa (2000) for time series models, contain surveys of this literature. A common theme is that a higher-order bias-corrected version of the MLE is higher-order efficient in the case of higher-order squared risk. Similar results obtain for median bias-corrected MLE's in the case of concentration probabilities. The bias correction in this literature is typically of a known parametric form that only depends on the estimated parameters. A plug-in estimator is then a regular estimator for the bias term. Amari (1982) obtains similar results for curved exponential families using differential geometry that characterizes the MLE in terms of tangent spaces. Akahira (1983, 1989) shows that when instead of using parametric bias correction one relies on the jackknife the same higher-order efficiency results remain true for the jackknifed MLE. We add to this literature by analyzing the effects of bias correction for first-order, but not necessarily higher-order, semiparametrically efficient estimators.  We demonstrate that any bias-corrected semiparametrically efficient estimator using an asymptotically linear bias estimator has a higher-order variance that does not depend on the nature of the bias estimate. This result applies to cases where the higher-order bias may not be known in closed form. We show that efficient bias correction may be based on sample averages, bootstrap or jackknife methods as long as the bias estimates are asymptotically linear.

The use of a jackknife bias estimator goes back to Quenouille (1949, 1956) and
Tukey (1958). Bootstrap bias estimation was discussed by Parr (1983), Shao
(1988a,b), Hall (1992), and Horowitz (1998) in the context of nonlinear
transformations of OLS estimators of linear models and nonlinear functions of
the mean. Akahira (1983) showed that the jackknife bias-corrected maximum
likelihood estimator is higher-order efficient. Our work extends the
literature on the bootstrap and jackknife bias-corrected estimators by
analyzing any semiparametrically efficient parametric estimator rather than nonlinear transformations
of linear estimators as in Shao (1988a,b).  

In the panel data literature, methods to control for unobserved heterogeneity are well established 
(early literature includes Rasch (1960, 1961) and Andersen (1970)); see for example Chamberlain (1984), Arellano and Honor\'{e} (2001), and Arellano and Hahn (2010) for reviews. Because of the incidental parameters problem, the 
best that can be achieved in a fixed-$T$ setting is partial identification in general; 
this is especially true for policy relevant parameters such as average marginal
effects (see Chernozhukov, Newey, Hahn and Fern\'{a}ndez-Val (2013)). Even
under sequences in which $T$ grows at the same rate as $n$, fixed effects estimators may be
asymptotically biased, as discussed in Hahn and Kuersteiner (2002) and Hahn
and Newey (2004). Given the typical size of panel data sets, in which $n$ is
much larger than $T$, it is desirable to find estimators that have biases of
order $O(T^{-2})$ or smaller, rather than the typical $O(T^{-1})$ of fixed effects estimators.

For a static model, Hahn and Newey (2004) show that a `leave-one-out'
jackknife estimator is asymptotically normal and centered at the truth when
$n$ and $T$ grow at the same rate. Other styles of jackknife bias correction
are also possible. For example, Dhaene and Jochmans (2015) suggest a
split-sample bias correction that, in its simplest form, is constructed by
splitting the sample into two half-panels of length $T/2$. It should be understood that the split-sample jackknife provides valid bias corrections with autocorrelated data where the `leave-one out' jackknife does not. Thus the split-sample jackknife is preferred to the leave-one-out jackknife with autocorrelated data. Our results show that the leave-one-out jackknife is preferred to the split-sample jackknife in i.i.d.\ data in the sense that the higher-order variance is smaller and small sample performance is better.  

The remainder of the paper is set out as follows. In Section \ref{sec:semipar_result} we discuss the higher-order bias and variance of parametric estimators, and provide our main results on the higher-order efficiency of bias corrections for semiparametric efficient estimators and asymptotic linearity of analytical, bootstrap and jackknife bias corrections. In Section \ref{sec:mle_crosssec}, we provide expressions for the higher-order expansions and variances of various bias-corrections in a cross-sectional MLE. We extend the results to panel settings
in Section \ref{section-panel-model}, by deriving the higher-order variances of the leave-one-out and split-sample jackknifes in a model with individual fixed effects.
Section \ref{section-MC} provides Monte Carlo evidence to support the theory in the panel setting. Section \ref{section-summary} concludes.

%%%%%%%%%%%%%%%%%%%%%%%%%%%%%%%%%%%%%%%%%%%%%%%%%%%%%%%%%%%%%%%%%%%%%%%%%%%%%%%%%%%%%%%%%%%%

%%%%% SECTION - HIGHER-ORDER EFFICIENCY RESULT

%%%%%%%%%%%%%%%%%%%%%%%%%%%%%%%%%%%%%%%%%%%%%%%%%%%%%%%%%%%%%%%%%%%%%%%%%%%%%%%%%%%%%%%%%%%%

\section{Higher-order efficiency of bias-corrected estimators} \label{sec:semipar_result}

\subsection{Higher-order bias and variance}
We begin with a discussion of higher-order expansions for parametric estimators, and make precise our definition of higher-order variance, following closely the exposition in Rothenberg (1984). We focus on semiparametric models in which the
data $\left\{  Z_{i}\right\}  _{i=1}^{n}$ are i.i.d.\ with some
distribution $F_{0}$ contained in a class of distributions that is the model.\footnote{By `semiparametric model', we refer to a model which has a parametric component, but leaves the functional form of some other components unspecified. See Newey (1990) for further discussion.} Let $\widehat{\theta}$ denote an estimator of a parameter
$\theta_{0}.$ When $\widehat{\theta}$ is an estimator based on parametric moment conditions,
such as GMM, and the moment conditions are smooth enough in the parameter
$\theta$ and possibly other parameters, there will be a stochastic expansion of
the form%
\begin{align} \label{eq:expan}
\sqrt{n}\left(  \widehat{\theta}-\theta_{0}\right)   &  =A_{1}+\frac{A_{2}}%
{\sqrt{n}}+\frac{A_{3}}{n}+o_{p}\left( n^{-1} \right)  ,
\\
A_{1}  &  =\frac{1}{\sqrt{n}}\sum_{i=1}^{n}\psi(Z_{i}),\quad E[\psi
(Z_{i})]=0,\nonumber
\end{align}
where $A_{2}$ and $A_{3}$ are second and third-order products of sample averages of mean-zero random variables, each multiplied by $\sqrt{n}$; see
for example, Rilstone, Srivastava, and Ullah (1996). The terms $A_{1},$ $A_{2},$
and $A_{3}$ are bounded in probability so that equation (\ref{eq:expan})
is an expansion where the stochastic order of terms is smaller as one moves to
the right in the expansion.

We define the higher-order bias and variance of estimators using the first and second moments of the leading three terms in this expansion. We also compare higher-order efficiency of bias-corrected estimators by comparing their higher-order variance, as discussed in the introduction. The higher-order bias of $\sqrt{n}(\widehat{\theta}-$ $\theta_{0})$ is given by
$E[A_{2}]/\sqrt{n}.$ This follows from the fact that $E[A_{1}]=0$, while the expectation of $A_{3}/n$
is generally of smaller order than $E[A_{2}]/\sqrt{n}$. Dividing through
by $\sqrt{n},$ the higher-order bias of $\widehat{\theta}$ is%
\[
Bias(\widehat{\theta})\approx\frac{b_{0}}{n},\text{ }b_{0}=E[A_{2}].
\]
In general $b_{0}=E[B(Z)],$ where the function $B(z)$ captures the `own observation' term in $A_{2},$ i.e.\ where the observation indices coincide in the
product of sample means that make up $A_{2}$. The term $B(z)$ may depend on the distribution of the
data through the parameter $\theta$ or in other ways. We let this dependence
be implicit for notational convenience. 

Similarly, the higher-order variance of $\sqrt{n}(\widehat{\theta}-\theta_{0})$ can be obtained from the variance of the first three expansion terms, up to order $1/n.$ This gives%
\begin{align*}
&Var\big(\sqrt{n}(\widehat{\theta}-\theta_{0})\big) \approx Var(A_1) + \frac{\upsilon}{n} \\
&\upsilon:=Var(A_{2})+2\sqrt{n}E[A_{1}%
A_{2}]+2E[A_{1}A_{3}].
\end{align*}
where $Var(A_1)$ is the asymptotic (first-order) variance of the estimator and we refer to $\upsilon$ as the higher-order variance.

\medskip
{\bf Example 1: Normally distributed data.}
In order to demonstrate ideas, we will follow a simple example. Suppose that we observe a sample of observations $Z_{i}\sim N\left(
\sqrt{\theta},1\right) $. The MLE provides an efficient estimate for $\theta$, and is given by
$\widehat{\theta}=\left(  \frac{1}{n}\sum_{i=1}^{n}Z_{i}\right)  ^{2}$. It can be shown that the asymptotic expansion for this estimator is of the form in (\ref{eq:expan}), with
\begin{align} \label{eq:exp_example}
    A_1 = 2\sqrt{\theta} \frac{1}{\sqrt{n}}\sum_i( Z_i - \sqrt{\theta}),\qquad
    A_2 = \big(\frac{1}{\sqrt{n}} \sum_i (Z_i - \sqrt{\theta}) \big)^2,
\end{align}
and $A_3=0$.\footnote{See Section \ref{sec-4.2-proofs} in the Supplementary Appendix for details of this example.} In this case, since the estimator is quadratic, the expansion is exact. We can conclude that the bias of the estimator is $E[A_{2}]/n = 1/n$. Since $E[A_1A_2]=0$ in this example, the higher-order variance is
\[
Var\big(\sqrt{n}(\widehat{\theta}-\theta)\big) = Var(A_1) + Var(A_2)/n = 4\theta + 2/n.
\]

\subsection{The effect of bias correction}
A bias-corrected estimator can be formed by subtracting off an estimator $\widehat
{b}$ of $b_{0}$ from $\widehat{\theta}$
\[
\tilde{\theta}=\widehat{\theta}-\frac{\widehat{b}}{n}%
\]
The focus of this paper is on the effect of the choice of $\widehat{b}$ on the
variance of $\tilde{\theta}.$ Generally $\widehat{b}$ has no effect on the
asymptotic variance (i.e. the first-order variance) of $\tilde{\theta}$, as long as $\widehat{b}$ is bounded in
probability, because then $\sqrt{n}(\widehat{b}/n)=\widehat{b}/\sqrt{n}=o_{p}(1).$ The
choice of $\widehat{b}$ can have an effect on the higher-order variance of
$\tilde{\theta}.$ To describe this effect note that the asymptotic expansion
for $\widehat{\theta}$ implies that, by adding and subtracting $b_{0}/\sqrt{n},$%
\begin{align}\label{eq:exp_bias_corr}
\sqrt{n}(\tilde{\theta}-\theta_{0})  &  =A_{1}+\frac{A_{2}-b_{0}}{\sqrt{n}%
}+\frac{A_{3}-\sqrt{n}(\widehat
{b}-b_{0})}{n}+o_{p}(n^{-1})\\
&  =A_{1}+\frac{\tilde{A}_{2}}{\sqrt{n}%
}+\frac{\tilde{A}_{3}}{n}+o_{p}(n^{-1}),
\nonumber
\end{align}
where $\tilde{A}_{2}=A_2-b_0$, and $\tilde{A}_{3}=A_3-
 \sqrt{n}(\widehat
{b}-b_{0})$. 
This is again an expansion whose terms are of decreasing stochastic order, so long as $\widehat{b}$ is
$\sqrt{n}$-consistent so that the third term is smaller order than the second term. Importantly, the second-order term  $\tilde{A}_{2}/\sqrt{n}$ has expectation zero, so that the bias-corrected estimator $\tilde{\theta}$ is higher-order unbiased. 

To analyze the higher-order variance of the bias-corrected estimator it is
helpful to be more specific about $\widehat{b}.$ For now we assume $\widehat{b}$
is asymptotically equivalent to a sample average, i.e.\ that there exists a $\phi(z)$
with $E[\phi(Z)]=0$, $Var(\phi(Z))<\infty,$ such that%
\[
\sqrt{n}(\widehat{b}-b_{0})=\frac{1}{\sqrt{n}}\sum_{i=1}^{n}\phi(Z_{i})+o_{p}(1).
\]
Here we are assuming that $\widehat{b}$ is asymptotically linear with influence
function $\phi(z).$ Bias corrections that are based on directly estimating
$b_{0}=E[B(Z)]$ will often have this property and we find that other bias
corrections, like the jackknife and bootstrap also have this property. In this
case the expansion in equation (\ref{eq:exp_bias_corr}) continues to hold with
$\tilde{A}_{3}=A_{3}-\Delta$ for $\Delta:=\sum_{i=1}^{n}\phi(Z_{i})/\sqrt{n}.$ 
The variance of the third-order approximation for $\tilde{\theta}$ is $Var(A_1) + \tilde{\upsilon}/n$, with higher-order variance%
\begin{align} \label{eq:higher-var}
\tilde{\upsilon}&:=Var(\tilde{A}_{2})+2\sqrt{n}E[A_{1}\tilde{A}_{2}]+2E[A_{1}\tilde{A}_{3}] \\
&= Var(A_{2}-b_0)+2\sqrt{n}E[A_{1}(A_{2}-b_0)]+2E[A_{1}(A_3 - \Delta)] \nonumber\\ 
&= \upsilon -2E[A_{1}\Delta] \nonumber.
\end{align}
Higher-order efficiency of the bias-corrected estimator refers to the size of $\tilde{\upsilon}$. One bias-corrected estimator is higher-order more efficient than another if it has smaller $\tilde{\upsilon}$.

The contribution of the bias correction to the higher-order variance $\tilde{\upsilon}$ is through the term
\begin{align} \label{eq:psi_phi_cov}
-2E[A_{1}\Delta]=-2E[\psi(Z)\phi(Z)].
\end{align}
Thus the bias correction $\widehat{b}$ affects the higher-order variance only
through the covariance of its influence function with that of $\widehat{\theta}$.
In the following theorem, we show that when $\tilde{\theta}$ is an efficient 
estimator, its  higher-order variance does not depend on the choice of $\widehat{b}$, by demonstrating that the covariance in (\ref{eq:psi_phi_cov}) is the same for any $\phi(Z)$. We adopt the notation and terminology of Newey (1990, pp 104-106) for the statement and proof of this result. Here $\mathcal{T}$ denotes the tangent set for the semiparametric model, which is the mean-square closure of the set of all scores for regular parametric submodels. A parameter is differentiable if there is a random variable $d(Z)$ such that the derivative of the parameter with respect to parameters of any regular parametric submodel exists and equals the expected product of $d(Z)$ with the submodel score.  Under additional regularity conditions, the influence function of any asymptotically linear estimator will be equal to such a $d(Z)$.

\begin{theorem} \label{thm:semip_equiv}
If $\mathcal{T}$ is linear,  $\theta_0$ and $b_0$ are differentiable parameters of the semiparametric model,  $\widehat{\theta}$ is asymptotically linear and efficient with influence function $\psi(Z)$, and $\widehat{b}$ is asymptotically linear with influence function $d(Z)$, then the higher-order variance $\tilde{\upsilon}$ does not depend on $\widehat{b}$. 
\end{theorem}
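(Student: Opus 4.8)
The plan is to isolate the only piece of $\tilde{\upsilon}$ that can depend on the bias estimator and then show it is pinned down by the geometry of the tangent space. From equation~(\ref{eq:higher-var}) we already have $\tilde{\upsilon} = \upsilon - 2E[A_1\Delta] = \upsilon - 2E[\psi(Z)\phi(Z)]$, and $\upsilon$ is a property of $\widehat{\theta}$ alone. Since the influence function of $\widehat{b}$ is $\phi = d$, the entire dependence of $\tilde{\upsilon}$ on the choice of $\widehat{b}$ is carried by the single inner product $E[\psi(Z)d(Z)]$ in the Hilbert space $L^2_0(F_0)$ of mean-zero, square-integrable functions. So it suffices to prove that $E[\psi d]$ takes the same value for every admissible $\widehat{b}$.

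First I would invoke efficiency of $\widehat{\theta}$. Because $\theta_0$ is a differentiable parameter and $\mathcal{T}$ is linear, the efficient influence function is the projection of any gradient of $\theta_0$ onto $\mathcal{T}$, so the influence function $\psi$ of the efficient estimator $\widehat{\theta}$ lies in $\mathcal{T}$ (Newey 1990, pp.~104--106). This membership $\psi\in\mathcal{T}$ is the conceptual crux of the argument.

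Next I would use differentiability of $b_0$. By definition, $d(Z)$ is a gradient for $b_0$: for every regular parametric submodel with score $s$, we have $\partial b/\partial t = E[d(Z)s(Z)]$. The right-hand side depends only on $b_0$ and the submodel, not on which $\widehat{b}$ produced $d$. Hence if $d_1$ and $d_2$ are the influence functions of two admissible bias estimators, then $E[(d_1-d_2)s]=0$ for every submodel score, and after passing to the mean-square closure, $d_1-d_2$ is orthogonal to all of $\mathcal{T}$. Equivalently, every admissible $d$ has the same projection onto $\mathcal{T}$, namely the canonical gradient $d^*$ for $b_0$.

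Finally I would combine the two facts. Writing $d = d^* + (d-d^*)$ with $d-d^*\perp\mathcal{T}$ and using $\psi\in\mathcal{T}$ gives $E[\psi d] = E[\psi d^*]$, a quantity determined entirely by $\theta_0$, $b_0$, and the model. Substituting back yields $\tilde{\upsilon} = \upsilon - 2E[\psi d^*]$, with no residual dependence on $\widehat{b}$. The one step requiring care---the main obstacle---is the closure argument of the previous paragraph: the differentiability condition furnishes orthogonality only against actual submodel scores, so I must verify that it extends to the mean-square closure $\mathcal{T}$ and that $\psi$, lying in that closure rather than in the raw linear span of scores, can be approximated by scores along which the identity $E[(d_1-d_2)s]=0$ passes to the limit. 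Once this is secured, the conclusion is immediate.
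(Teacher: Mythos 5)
Your proposal is correct and follows essentially the same route as the paper: both arguments rest on $\psi\in\mathcal{T}$ (from efficiency) together with the fact that every gradient of $b_0$ has the same projection onto $\mathcal{T}$, so that $E[\psi d]$ is invariant across admissible $\widehat{b}$. The closure step you flag is handled by continuity of the $L^2$ inner product and poses no obstacle; the paper simply cites Theorem 3.1 of Newey (1990) for the same decomposition $d=\delta+U$ with $U\perp\mathcal{T}$.
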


\begin{proof}

It follows from asymptotic linearity and efficiency of $\widehat{\theta}$ that its influence function $\psi(Z)$ is an element of $\mathcal{T}$. By Theorem 3.1 of Newey (1990) the efficient influence function $\delta(Z)$ for $b_0$ is equal to the projection of $d(Z)$ on $\mathcal{T}$. This allows us to decompose $d(Z)$ as,
\[
d(Z)=\delta(Z)+U(Z),\qquad E[U(Z)t(Z)]=0\text{ for all }%
t\in\mathcal{T}\text{.}%
\]
As shown above in (\ref{eq:psi_phi_cov}), the higher-order variance of $\tilde{\theta} = \widehat{\theta} - \widehat{b}/n$, depends on $\widehat{b}$ only through the covariance of $\psi(Z)$ with $d(Z)$. By the above decomposition, since $\psi(Z)$ is in the tangent set, this covariance is
\[
E[\psi(Z)d(Z)]=E[\psi(Z)\delta(Z)]+E[\psi(Z)U(Z)]=E[\psi(Z)\delta(Z)],
\]
Since the efficient influence function $\delta(Z)$ is unique, it follows
that $E[\psi(Z)d(Z)]$ does not vary with $d(Z)$, and hence $\tilde{\upsilon}$ does not vary with $\widehat{b}$.

\end{proof}

The proof uses geometry associated with semiparametric models, and relies on the efficiency of $\widehat{\theta}$. 
There is also some relatively simple intuition for this result. Consider any two semiparametric estimators $\widehat{b}_{1}$ and $\widehat{b}_{2}$ of $b_{0}.$ The random variable $\widehat{b}_{1}%
-\widehat{b}_{2}$ is a semiparametric estimator of $0$. If the asymptotic
covariance of $\widehat{\theta}$ with $\widehat{b}_{1}-\widehat{b}_{2}$ were nonzero then
for some fixed constant $C$ the asymptotic variance of $\bar{\theta}=\widehat{\theta
}+C(\widehat{b}_{1}-\widehat{b}_{2})$ would be less than the asymptotic variance of
$\widehat{\theta}$, so that $\widehat{\theta}$ could not be semiparametrically efficient.
That is, semiparametric efficiency of $\widehat{\theta}$ implies zero asymptotic
covariance of $\widehat{\theta}$ with $\widehat{b}_{1}-\widehat{b}_{2}$, which is
equivalent to the asymptotic covariance of $\widehat{\theta}$ with $\widehat{b}_{1}$
being equal to the asymptotic covariance of $\widehat{\theta}$ with $\widehat{b}_{2}.$
The asymptotic covariance of $\widehat{\theta}$ with any asymptotically linear
$\widehat{b}$ is $E[\psi(Z)\phi(Z)],$ so semiparametric efficiency of $\widehat
{\theta}$ implies that this covariance does not depend on $\phi(Z).$
This intuition extends the Hausman (1978) result that the covariance of any estimator of a parameter of interest with an efficient estimator of that parameter is equal to the variance of the efficient estimator, i.e.\ the covariance does not depend on the estimator. The intuition and result here show that the covariance of an efficient estimator of a parameter of interest with an estimator of any object does not depend on the estimator of that object.

\subsection{Asymptotically linear bias correction \label{subsec:2_3}}

There are many examples of bias corrections to semiparametric estimators that
have different influence functions. Consider a parametric model of the
conditional pdf of an outcome variable $Y$ given regressors $X$, where the pdf of
$Y$ conditional on $X$ has a parametric form $f(y|x,\beta)$ and $\theta$ is
some function of $\beta.$ This is a familiar semiparametric model where the
marginal distribution of $X$ is unspecified. A special case is a parametric
likelihood model where $X$ is a constant and $f(y|x,\beta)$ specifies the unconditional pdf
of observation $Y$. Since generally $b_{0}=E[B(Z)]$, for $Z=(Y,X)$ and for some function
$B(z)$ that may also depend on $\beta$, one could devise a number of estimates for $b_{0}$. 
For example, an analytical bias correction could be obtained by plugging in estimates $\widehat{\beta}$ and replacing the expectation operators with
sample averages. Alternative forms of analytical bias correction could make use
of the structure imposed by $f(y|x,\beta)$ to integrate over $y$ in $B(z)$, or use restrictions implied by $\int f(y|x,\beta)dy=1$ to estimate $b_{0}$ (e.g. applying the information equality). 

One could also use a nonparametric method like the bootstrap or jackknife to
estimate $b_{0}$. The bootstrap bias correction estimates the bias by $\frac{1}{n}\widehat{b}_B\equiv E^{\ast}\left[  \widehat{\theta}^{\ast}\right]
-\widehat{\theta}$, where $\widehat{\theta}^{\ast}$ are estimates obtained using bootstrap samples from the empirical distribution of $Z$. Alternatively, the jackknife estimates the bias by taking the difference between $\widehat{\theta}$ and the average of all `leave-one-out' estimates, i.e.\ estimates formed by excluding a single observation. The jackknife bias estimator is given by
$\frac{1}{n} \widehat{b}_J =(n-1)\big(\frac{1}{n}\sum_{i=1}^{n}
\widehat{\theta}_{\left(  i\right)  } - \widehat{\theta} \big)$
where $\widehat{\theta}_{(i)}$ is the estimate of the parameter that excludes observation $i$. In Section \ref{sec:mle_crosssec} we describe these bias correction techniques in detail in the context of MLE.

\medskip
{\bf Example 1, continued:}
Continuing our earlier example, we can compute the jackknife and analytical bias corrections for $\widehat{\theta}$. From the asymptotic expansion in (\ref{eq:exp_example}), we have that
\[
    E[A_2] = E[(Z_i - \sqrt{\theta})^2]
\]
which suggests the analytical bias estimate $\widehat{b}_a = \frac{1}{n}\sum_i (Z_i  - \bar{Z})^2$. Alternatively, using the known variance, we could simply use $b_a=1$. To construct a jackknife bias estimator, we use
\[
    \frac{1}{n} \widehat{b}_J = (n-1)\Big( \frac{1}{n}\sum_i \big(\frac{1}{n-1} \sum_{j\ne i} Z_j \big)^2 - \big(\frac{1}{n} \sum_{i} Z_i \big)^2 \Big)
    = \frac{1}{n(n-1)}\sum_i (Z_i  - \bar{Z})^2
\]
which is the same as the first analytical bias estimate, up to the factor $n/(n-1)$. It is straightforward to show that both bias estimates can be written as
\[
    \sqrt{n}(\widehat{b}-1) = \frac{1}{n}\sum_i \big((Z_i  - \sqrt{\theta})^2 -1\big) + o_p(1)
\]
so that they are both asymptotically linear, in this case with the same influence function $d(Z)= (Z  - \sqrt{\theta})^2 -1$. Given that the third moment of $Z$ is equal to zero, since it is normally distributed, we have $E[A_1(\widehat{b}-1)]=0$, so that the higher-order variance of the bias-corrected estimators that use either of the analytical or the jackknife bias estimates, are the same as that of $\widehat{\theta}$, i.e. $4\theta + 2/n$.
\medskip

Theorem \ref{thm:semip_equiv} implies that the various analytical bias corrections as well as both nonparametric bias correction methods
lead to the same higher-order variance as any other method when the $\widehat{b}$
is asymptotically linear. In the example above, both the analytical and jackknife bias estimates are asymptotically linear. This is not a coincidence. It turns out that for a fairly general class of models, the analytical, jackknife, and bootstrap bias estimates are asymptotically linear, so that Theorem \ref{thm:semip_equiv} can be applied.\footnote{The theorem is stated for a scalar parameter $\theta$, but straightforwardly extends to a vector-valued parameter, at the cost of more complicated notation.}

\begin{theorem} \label{thm:asym_lin}
Let $\widehat{\theta}$ be an estimator with a stochastic expansion
\begin{equation} \label{eq:expansion_asym_lin}
\widehat{\theta}-\theta_{0} = \frac{1}{\sqrt{n}}A_{1}+\frac{1}{n}A_{2}+\frac{1}{n^{3/2}}A_{3}
 +\frac{1}{n^2}A_{4}+\frac{1}{n^{5/2}}A_{5}+o_{p}(n^{5/2})
\end{equation}
where $n^{-k/2}A_{k}$ is a $k$-th order V-statistic, i.e.\
\begin{equation}
\frac{1}{n^{k/2}}A_{k}=\frac{1}{n^k}\sum_{i_1=1}^{n}\cdots \sum_{i_k=1}^{n}
g_{k,1}(z_{i_1},\theta_0)\cdots g_{k,k}(z_{i_k},\theta_0)
\end{equation}
and $g_{k,j}$ are functions of the data that are continuously differentiable in $\theta$ at $\theta_0$, mean zero, i.e.\ $E[g_{j,k}(z_{i},\theta_0)]=0$, and with $E[g_{j,k}(z_i,\theta_0)^{10}]\leq C <\infty$.

Then, the jackknife, bootstrap, and (sample average) analytical\footnote{See Section \ref{sec:mle_crosssec} for exact definitions of these three bias corrections, $\widehat{\theta}_J$, $\widehat{\theta}_B$, and $\tilde{\theta}_a$.} bias estimates of the higher-order 
bias $b_{0}=E[A_2]=E[g_{2,1}(z_{i},\theta_0)g_{2,2}(z_{i},\theta_0)]$, are asymptotically linear.
\end{theorem}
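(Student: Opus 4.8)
Throughout I would rewrite the estimator in terms of the normalized $V$-statistics $T_k:=n^{-k}\sum_{i_1,\dots,i_k} g_{k,1}(z_{i_1})\cdots g_{k,k}(z_{i_k})$, so that (\ref{eq:expansion_asym_lin}) becomes $\widehat{\theta}-\theta_0=\sum_{k=1}^{5}T_k+o_p(n^{-5/2})$ with $T_k=O_p(n^{-k/2})$ and $b_0=nE[T_2]=E[g_{2,1}g_{2,2}]$. Each bias estimate is a simple functional of the data: the analytical one is $\widehat{b}_a=n^{-1}\sum_i g_{2,1}(z_i,\widehat{\theta})g_{2,2}(z_i,\widehat{\theta})$; the jackknife is $\widehat{b}_J=n(n-1)(\bar{\theta}_{(\cdot)}-\widehat{\theta})$ with $\bar{\theta}_{(\cdot)}=n^{-1}\sum_\ell\widehat{\theta}_{(\ell)}$; and the bootstrap is $\widehat{b}_B=n(E^*[\widehat{\theta}^*]-\widehat{\theta})$. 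For each my goal is to establish $\widehat{b}=b_0+n^{-1}\sum_i\phi(z_i)+o_p(n^{-1/2})$ with $E[\phi(Z)]=0$, which is precisely asymptotic linearity.

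The analytical estimate is the warm-up. A one-term mean-value expansion of $g_{2,1}g_{2,2}$ in $\theta$ about $\theta_0$, together with continuous differentiability of the $g_{k,j}$, the moment bound, and the asymptotic linearity $\widehat{\theta}-\theta_0=n^{-1}\sum_i\psi(z_i)+o_p(n^{-1/2})$ coming from the leading term $T_1$, yields $\widehat{b}_a=b_0+n^{-1}\sum_i\{g_{2,1}g_{2,2}(z_i)-b_0+D\,\psi(z_i)\}+o_p(n^{-1/2})$ with $D=E[\partial_\theta(g_{2,1}g_{2,2})]$.

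For the jackknife I would apply the leave-one-out operation to the expansion term by term. Leaving out $\ell$ replaces $T_k$ by $T_{k,(\ell)}=(n-1)^{-k}\sum_{\text{indices}\neq\ell}\prod_j g_{k,j}$; averaging over $\ell$ and regrouping tuples by the number $m(\mathbf i)$ of distinct indices gives
\[
\tfrac1n\textstyle\sum_\ell T_{k,(\ell)}-T_k=\sum_{\mathbf i}\Big(\tfrac{n-m(\mathbf i)}{n(n-1)^k}-\tfrac1{n^k}\Big)\prod_j g_{k,j}(z_{i_j})=\sum_{\mathbf i}\Big(\tfrac{k-m(\mathbf i)}{n^{k+1}}+O(n^{-k-2})\Big)\prod_j g_{k,j}(z_{i_j}).
\]
Multiplying by $n(n-1)\sim n^2$, so that the net scale of a pattern is $n^{1-k}$, I would classify coincidence patterns: a pattern with $s$ singleton groups and $p$ groups of size $\ge2$ (so $m=s+p$) carries a sum of order $O_p(n^{s/2+p})$, and since the nonsingleton sizes force $s/2+p\le k/2$ the net order is at most $n^{1-k/2}$. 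Hence only three cases reach the influence-function order $O_p(n^{-1/2})$: the linear term $k=1$ cancels exactly ($\tfrac1n\sum_\ell T_{1,(\ell)}=T_1$); the all-equal pattern at $k=2$ gives $b_0+n^{-1}\sum_i\{g_{2,1}g_{2,2}(z_i)-b_0\}+O_p(n^{-1})$; and each two-indices-equal pattern at $k=3$ gives the mean-zero term $n^{-1}\sum_i E[g_{3,a}g_{3,b}]g_{3,c}(z_i)$, while every pattern with $k\ge4$ and the remaining patterns at $k\le3$ are $O_p(n^{-1})$. Collecting yields $\widehat{b}_J=b_0+n^{-1}\sum_i\phi_J(z_i)+o_p(n^{-1/2})$ with $\phi_J(z)=g_{2,1}g_{2,2}(z)-b_0+\sum_{\{a,b\},c}E[g_{3,a}g_{3,b}]g_{3,c}(z)$. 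The bootstrap is identical once one notes that, since resampling is i.i.d.\ from the empirical distribution, $E^*[\prod_{j\in G}g_{k,j}(z^*)]=n^{-1}\sum_i\prod_{j\in G}g_{k,j}(z_i)$ for each coincidence group $G$; conditioning $E^*[T_k^*]$ on the coincidence pattern of the bootstrap indices reproduces the same finite sum of products of sample averages, and multiplying $E^*[\widehat{\theta}^*]-\widehat{\theta}$ by $n$ delivers the same leading $b_0$ and the same $\phi_B=\phi_J$ (the $k\ge4$ patterns contributing only a deterministic $O(n^{-1})$ bias plus $O_p(n^{-1})$ noise).

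The main obstacle is remainder control, not the combinatorics. Writing $\widehat{\theta}-\theta_0=\sum_{k\le5}T_k+R$ with $R=o_p(n^{-5/2})$, the jackknife carries $n(n-1)\{n^{-1}\sum_\ell R_{(\ell)}-R\}$ and the bootstrap carries $nE^*[R^*]$; because of the amplifying factors $n(n-1)$ and $n$, showing these are $o_p(n^{-1/2})$ needs more than the pointwise statement $R=o_p(n^{-5/2})$. It requires uniformity of the remainder across the $n$ leave-one-out subsamples and, for the bootstrap, that the expansion (\ref{eq:expansion_asym_lin}) holds under the empirical data-generating process with $E^*$ of its remainder of the stated order. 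This is exactly where the strong hypothesis $E[g_{k,j}^{10}]\le C$ enters: the highest-order term $A_5$ is a fifth-order product, so its second moment and the cross moments feeding the variances involve tenth moments, and these bounds let me upgrade $o_p$ statements to $L^2$/Markov control and thereby dominate the amplified remainders uniformly. I would therefore devote most of the effort to these uniform remainder bounds; once they are secured, the term-by-term identification above delivers asymptotic linearity of all three bias estimates.
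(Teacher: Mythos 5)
Your jackknife and analytical arguments are essentially the paper's proof: the same coincidence-pattern bookkeeping for the leave-one-out averages, the same identification of the surviving terms (the all-equal pattern at $k=2$ and the two-equal patterns at $k=3$), and the same influence functions; your closing remarks on uniform remainder control and on needing the bootstrap expansion to hold under the empirical DGP are exactly where the paper spends its supplementary-appendix effort. The genuine gap is in the bootstrap. You expand $\widehat{\theta}^{*}-\widehat{\theta}$ using the population functions $g_{k,j}(\cdot,\theta_0)$ applied to the resampled data, but the valid bootstrap expansion is anchored at $(\widehat{\theta},\widehat{F}_n)$: its building blocks are $\widehat{g}_{k,j}$, i.e.\ $g_{k,j}(\cdot,\widehat{\theta})$ recentered so that $\sum_{i}\widehat{g}_{k,j}(z_i)=0$. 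With the recentered functions, every coincidence pattern containing a singleton group has conditional expectation exactly zero, so $E^{*}[\widehat{A}_2]=n^{-1}\sum_i\widehat{g}_{2,1}(z_i)\widehat{g}_{2,2}(z_i)$ with no cross terms, and $E^{*}[\widehat{A}_3]=O_p(n^{-1/2})$ enters the bias estimate only at order $n^{-2}$ and is negligible. Hence $\widehat{b}_B$ equals the sample-average analytical estimate up to $o_p(n^{-1/2})$, and its influence function is your $\phi_a$ --- the own-observation term plus the derivative correction $D\psi$ coming from linearizing $\widehat{g}_{2,j}$ around $g_{2,j}$ --- not your $\phi_J$ with the $E[g_{3,a}g_{3,b}]\,g_{3,c}$ terms. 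The claim $\phi_B=\phi_J$ is unsupported at this level of generality (nothing in the hypotheses ties the $g_{3,j}$ to derivatives of $g_{2,j}$ and to $g_{1,1}$) and is internally inconsistent with your own analytical computation.

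Concretely, two repairs are needed. First, if you insist on non-recentered $g$'s, the singleton groups in $E^{*}[T_k^{*}]$ contribute factors $n^{-1}\sum_i g_{k,j}(z_i)=O_p(n^{-1/2})$ rather than zero, so the order counting you import from the jackknife does not carry over and the whole pattern classification must be redone. Second, you must in any case linearize $\widehat{g}_{2,j}=g_{2,j}+h_j\cdot(\widehat{\theta}-\theta_0)+o_p(n^{-1/2})$; this is the source of the derivative term and is absent from your bootstrap sketch. Neither repair threatens the \emph{conclusion} --- the bootstrap bias estimate is asymptotically linear either way --- but as written your derivation produces the wrong influence function and so cannot stand as a proof of the bootstrap case.
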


\begin{proof}
See Appendix \ref{app:main}
\end{proof}

The V-statistic structure assumed in Theorem \ref{thm:asym_lin} is shown to hold for the case of MLE under standard regularity conditions on the likelihood function (see Supplementary Appendix I \ref{Proof-Main-Results}). In Section 4 we derive such expansions for a maximum likelihood estimator. The derivation is based on an expansion of the first order condition of the MLE and so applies to any other estimator that can be characterized by a moment equation satisfying similar regularity conditions. Therefore, the structure can be shown to appear in the expansion of any M-estimator, i.e. an estimator that maximizes the sample average of some function, which need not be a log-likelihood.

An important semiparametric model is a model of unconditional moment
restrictions that motivates GMM estimation. Newey and Smith (2004) derive the asymptotic expansion for generalized empirical likelihood (GEL) estimators up to third order and examination of that expansion shows the same V-statistic structure. Under sufficient regularity conditions, the structure should also exist in the fifth-order expansion used in Theorem \ref{thm:asym_lin}, so that the equivalence result of Theorem \ref{thm:semip_equiv} would imply that jackknife and bootstrap bias corrections lead to the same higher-order variance for GEL estimators. This extends the result in Newey and Smith (2004), who found that
averaging over an efficient estimator of the distribution of the data to
obtain $\widehat{b}$ does not affect the higher-order variance of bias-corrected
GMM and generalized empirical likelihood (GEL) estimators.

Another interesting example is a nonparametric model where no restrictions are placed
on the distribution of the data. In this case $b_{0}$ will also be a
nonparametric object and there will exist only one influence function
corresponding to $b_{0}$; see Van der Vaart (1991) and Newey (1994). Here, all
asymptotically linear estimators of $b_{0}$ will have the same influence
function so that Theorem \ref{thm:semip_equiv} holds trivially. In particular, when general
misspecification is allowed for in debiasing, so that the distribution of
the data is unrestricted, there will only be one influence function for
$b_{0}$. Then any bias correction method, such as the analytical, bootstrap, or jackknife, 
must have the same higher-order variance, as long as they are asymptotically linear.

We emphasize that Theorem \ref{thm:semip_equiv} only applies to comparisons between bias
corrections for a given semiparametrically efficient estimator. The stochastic
expansion term $A_{2}$ may differ across semiparametrically efficient estimators so that their 
higher-order variances need not be the same. 
For example, Newey and Smith (2004) showed that GMM will tend to have larger second-order bias than GEL when there are many instrumental variables, so that $A_{2}$ is different for GMM
and GEL. Theorem \ref{thm:semip_equiv} has nothing to say about how the higher-order variance of 
bias-corrected GMM compares with that of bias-corrected GEL. It only implies that the form
of $\widehat{b}$ does not affect the higher-order variance of bias-corrected GMM
and separately for bias-corrected GEL.

Note that while Theorem \ref{thm:asym_lin} implies that bootstrap, jackknife, and analytical bias estimates are asymptotically linear in general, this is not true of all bias estimates. An example of this is the split-sample jackknife, which estimates the bias using $\frac{1}{n}\widehat{b}_{SS} = \frac{1}{2}(\widehat{\theta}_{(1)} + \widehat{\theta}_{(2)}) - \widehat{\theta}$, where $\widehat{\theta}_{(1)}$ and $\widehat{\theta}_{(2)}$ are estimates using two separate halves of the data set.

{\bf Example 1, continued:}
Let $N=2m$, and define $\xi^{(1)} = \frac{1}{\sqrt{m}} \sum_{i\leq m} (Z_i-\sqrt{\theta})$ and $\xi^{(2)} = \frac{1}{\sqrt{m}} \sum_{i>m} (Z_i-\sqrt{\theta})$.  Some algebra gives the expression
\begin{align*}
    \widehat{b}_{SS} &= \frac{1}{2} \big( \xi^{(1)} - \xi^{(2)} \big)^2
\end{align*}
Since $\xi^{(1)}$ and $\xi^{(2)}$ are independent standard normal variables, we have
$E[\widehat{b}_{SS}]=1$, so that the split-sample jackknife gives an unbiased estimator for $b_0$. However, $\widehat{b}_{SS}=O_p(1)$, so that the bias estimator is also inconsistent, and hence not asymptotically linear. In this case, the higher-order variance of the bias-corrected estimator can be shown to be $Var\big(\sqrt{n}(\widehat{\theta}_{SS})\big) = 4\theta + 4/n$, which is larger than that of the analytical and jackknife estimators. In Section \ref{sec:SS_var}, we give a formal result that shows that this larger higher-order variance is true in general for MLE.

%%%%%%%%%%%%%%%%%%%%%%%%%%%%%%%%%%%%%%%%%%%%%%%%%%%%%%%%%%%%%%%%%%%%%%%%%%%%%%%%%%%%%%%%%%%%

%%%%% SECTION - MLE EXPANSION

%%%%%%%%%%%%%%%%%%%%%%%%%%%%%%%%%%%%%%%%%%%%%%%%%%%%%%%%%%%%%%%%%%%%%%%%%%%%%%%%%%%%%%%%%%%%

\section{Bias-corrected MLE\label{sec:mle_crosssec}}

In this section, we demonstrate the higher-order equivalence of bias corrections in a fully parametric model by deriving higher-order expansions and variance expressions for several bias-corrected maximum likelihood estimators. We consider the bootstrap bias correction, jackknife bias correction,
as well as three different versions of analytical bias correction, including the bias-corrected MLE of Pfanzagl and Wefelmeyer (1978), which was shown to be third-order optimal. 
We show that all of these corrections lead to estimators with the same higher-order variance. Further, we show that the bootstrap, jackknife, and a particular analytical bias-correction, result in identical third-order asymptotic expansions. 

To describe the parametric model, let $\left\{  Z_{i}\right\}  _{i=1}^{n}$ be an i.i.d.\ sample
$Z_{i}\sim f\left(  z,\theta_{0}\right)  $, such that $f(z,\theta)$ satisfies
sufficient smoothness conditions.\footnote{The results in Section
\ref{sec:mle_crosssec} are predicated on regularity conditions,
including Conditions \ref{HH}, \ref{M} and \ref{EC}, which are presented in
Appendix \ref{sec:reg-conds}.} The density $f(z,\theta)$ is a member of
a parametric family of distributions $P_{\theta}$ indexed by $\theta\in\Theta$
with $\Theta\in\mathbb{R}$ a compact set.\footnote{For simplicity of notation,
we will assume that $p=1$, where $p=\dim\left(  \theta\right)  $. The result
is expected to hold for any finite $p$.} We consider properties of the
MLE $\widehat{\theta}$, where
\[
\widehat{\theta}\equiv\arg\sup_{\theta\in\Theta}n^{-1}\sum_{i=1}^{n}\log f\left(
Z_{i},\theta\right)  .
\]

We adopt the following set of notations in this section of the paper. We let $\ell\left(
\cdot,\theta\right)  \equiv\left.  \partial\log f\left(  \cdot,\theta\right)
\right/  \partial\theta$, $\ell^{\theta}\left(  \cdot,\theta\right)
\equiv\left.  \partial^{2}\log f\left(  \cdot,\theta\right)  \right/
\partial\theta^{2}$, $\ell^{\theta\theta}\left(  \cdot,\theta\right)
\equiv\left.  \partial^{3}\log f\left(  \cdot,\theta\right)  \right/
\partial\theta^{3}$, etc. We also define $\mathcal{I}\equiv-E\left[
\ell^{\theta}(Z_{i},\theta_{0})\right]  $, $\mathcal{Q}_{1}\left(
\theta\right)  \equiv E\left[  \ell^{\theta\theta}(Z_{i},\theta)\right]  $,
and $\mathcal{Q}_{2}\left(  \theta\right)  \equiv E\left[  \ell^{\theta
\theta\theta}(Z_{i},\theta)\right]  $. Furthermore, we let $U_{i}\left(
\theta\right)  \equiv\ell\left(  Z_{i},\theta\right)  $, $V_{i}\left(
\theta\right)  \equiv\ell^{\theta}\left(  Z_{i},\theta\right)  -E\left[
\ell^{\theta}\left(  Z_{i},\theta\right)  \right]  $, $W_{i}(\theta) \equiv\ell
^{\theta\theta}\left(  Z_{i},\theta\right)  -E\left[  \ell^{\theta\theta}\left(
Z_{i},\theta\right)  \right]  $. Finally, we define $U\left(  \theta\right)  \equiv
n^{-1/2}\sum_{i=1}^{n}U_{i}\left(  \theta\right)  $, $V\left(  \theta\right)
\equiv n^{-1/2}\sum_{i=1}^{n}V_{i}\left(  \theta\right)  $, and $W\left(
\theta\right)  \equiv n^{-1/2}\sum_{i=1}^{n}W_{i}\left(  \theta\right) $. 
We next describe the bias corrections and derive higher-order properties of the bias-corrected estimators. 

\subsection{The bootstrap and jackknife bias corrections}

The bootstrap estimator constructs an estimate of the higher-order bias as the difference between the average of bootstrap replicates $\widehat{\theta}^{\ast}$ and the MLE $\widehat{\theta}$. In particular, we first obtain bootstrapped estimates $\widehat{\theta}^{\ast}$ by sampling
$Z_{1}^{\ast},...,Z_{n}^{\ast}$ identically and independently from the
empirical distribution $\widehat{F}(z) = \frac{1}{n}\sum 1\{ Z_i \leq z\}$. Let $E^{\ast}$ be the expectation
operator with respect to $\widehat{F}.$ The idea behind the bootstrap bias
correction is to estimate $E\left[  \widehat{\theta}\right]  -\theta_{0},$ if
it exists, by $\frac{1}{n}\widehat{b}_B\equiv E^{\ast}\left[  \widehat{\theta}^{\ast}\right]
-\widehat{\theta}$. This in turn will allow us to construct the bias-corrected estimate
\[
\widehat{\theta}_{B}\equiv\widehat{\theta}-\frac{\widehat{b}_B}{n}=2\widehat{\theta}-E^{\ast
}\left[  \widehat{\theta}^{\ast}\right]  .
\]

An alternative nonparametric bias correction is a jackknife bias-corrected estimator. The jackknife estimates the higher-order bias by taking the difference between the MLE and the average of all `leave-one-out' estimates, i.e.\ estimates formed by excluding a single observation. The jackknife estimate is given by
\[
\widehat{\theta}_{J} = \widehat{\theta} - \frac{\widehat{b}_J}{n}
=n\widehat{\theta}-\frac{n-1}{n}\sum_{i=1}^{n}%
\widehat{\theta}_{\left(  i\right)  }%
\]
where $\widehat{\theta}_{(i)}$ is the estimate of the parameter that excludes observation $i$. The jackknife uses the bias estimate $\frac{1}{n}\widehat{b}_J = (n-1)\left(\frac{1}{n}\sum_{i=1}^{n}\widehat{\theta}_{\left(  i\right)} - \widehat{\theta} \right)$.

The following proposition establishes the higher-order properties of the
bootstrap and jackknife bias-corrected MLEs.

\begin{proposition}
\label{boot-jack-higher}Let $\tilde{b}$ be either the bootstrap bias estimate $\widehat{b}_B$, or the jackknife bias estimate $\widehat{b}_J$. Then, under regularity conditions stated in Appendix \ref{sec:reg-conds}, $\tilde{b}$ satisfies
\begin{align*}
    \sqrt{n}(\tilde{b} - b_0) &= \frac{1}{\sqrt{n}}\sum_{i=1}^n \mathbb{B}_i + o_p(1)
\end{align*}
where
\begin{align*}
    \mathbb{B}_i &= \left( \frac{1}{2}\mathcal{I}^{-3}\mathcal{Q}_{2} + \frac{3}{2}\mathcal{I}^{-4}\mathcal{Q}_{1}^2 + 3\mathcal{I}^{-4}\mathcal{Q}_{1}E[U_iV_i] + \mathcal{I}^{-3}\left(E[U_iW_i] + E[(\ell^\theta)^2] \right) \right) U_i \\
    &+ \left( \frac{3}{2}\mathcal{I}^{-3}\mathcal{Q}_{1} + 2\mathcal{I}^{-3}E[U_iV_i]  \right) V_i 
    + \frac{1}{2}\mathcal{I}^{-2} W_i \\
    &+ \frac{1}{2}\mathcal{I}^{-3}\mathcal{Q}_{1}\left(   \ell(Z_i,\theta_0)^2 - E[\ell^2]\right) 
    + \mathcal{I}^{-2} \left( \ell(Z_i,\theta_0) \ell^\theta(Z_i,\theta_0) - E[\ell \ell^\theta] \right)
\end{align*}
and hence, for $\tilde{\theta}$ either the bootstrap or jackknife bias-corrected MLE,
\[
\sqrt{n}\left(  \tilde{\theta}-\theta_{0}\right)  =A_{1}+\frac{1}%
{\sqrt{n}}\left(  A_{2}-b(\theta_{0})\right)  +\frac{1}{n}(A_{3}-
\mathbb{B})+o_{p}\left( n^{-1} \right)  ,
\]
where $\mathbb{B} = \sum_i \mathbb{B}_i / \sqrt{n}$.
\end{proposition}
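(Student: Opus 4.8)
The plan is to establish asymptotic linearity by showing that the bootstrap and jackknife bias estimates both reproduce, up to $o_p(n^{-1/2})$, the plug-in of a single analytic bias functional, whose von Mises (delta-method) influence function computes to $\mathbb{B}_i$; the displayed expansion for $\tilde{\theta}$ is then immediate from (\ref{eq:exp_bias_corr}). First I would expand the MLE: writing the sample averages of $\ell,\ell^\theta,\ell^{\theta\theta}$ that appear in the first-order condition $n^{-1}\sum_i\ell(Z_i,\widehat{\theta})=0$ in terms of $U,V,W$ and the population quantities $\mathcal{I},\mathcal{Q}_1,\mathcal{Q}_2$, and solving iteratively for $s=\sqrt{n}(\widehat{\theta}-\theta_0)=A_1+n^{-1/2}A_2+n^{-1}A_3+\cdots$, gives $A_1=\mathcal{I}^{-1}U$ and closed forms for $A_2,A_3$ as products of $U,V,W$. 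Taking expectations identifies $b_0=E[A_2]=\mathcal{I}^{-2}E[\ell\ell^\theta]+\tfrac12\mathcal{I}^{-2}\mathcal{Q}_1$. The step I would emphasise here is that, viewed as a functional of the data distribution $F$, this bias is $b_0(F)=\mathcal{I}(F)^{-2}E_F[\ell\ell^\theta]+\tfrac12\mathcal{I}(F)^{-3}\mathcal{Q}_1(F)\,E_F[\ell^2]$ with $\mathcal{I}(F)=-E_F[\ell^\theta]$, and that $E_F[\ell^2]$ must be kept \emph{distinct} from $-E_F[\ell^\theta]$ because $\widehat{F}$ is not a member of the parametric family and so does not obey the information equality $E[\ell^2]=\mathcal{I}$; the two coincide only at $F_0$. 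This distinction is exactly what produces the term $\tfrac12\mathcal{I}^{-3}\mathcal{Q}_1(\ell^2-E[\ell^2])$ in $\mathbb{B}_i$.

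Next I would compute the reference influence function. The sample-average analytic estimate $\tilde{\theta}_a$ is the plug-in $b_0(\widehat{F})$, obtained by replacing the population moments by sample averages and $\theta_0$ by $\widehat{\theta}$; a delta-method expansion gives $\sqrt{n}(b_0(\widehat{F})-b_0)=n^{-1/2}\sum_i\mathbb{B}_i+o_p(1)$ with $\mathbb{B}_i$ the Gateaux derivative of $b_0(\cdot)$. This derivative has a direct channel, in which the empirical measure charges $Z_i$ and yields the centered summands $\tfrac12\mathcal{I}^{-2}W_i$, $\mathcal{I}^{-2}(\ell\ell^\theta-E[\ell\ell^\theta])$ and $\tfrac12\mathcal{I}^{-3}\mathcal{Q}_1(\ell^2-E[\ell^2])$; and an indirect channel through $\widehat{\theta}-\theta_0\approx\mathcal{I}^{-1}U_i$, in which differentiating each moment in $\theta$ (using $\partial_\theta E[\ell^\theta]=\mathcal{Q}_1$, $\partial_\theta\mathcal{Q}_1=\mathcal{Q}_2$, $\partial_\theta E[\ell\ell^\theta]=E[(\ell^\theta)^2]+E[\ell\ell^{\theta\theta}]$ and $\partial_\theta E[\ell^2]=2E[\ell\ell^\theta]$) together with the chain rule on $\mathcal{I}^{-2}$ and $\mathcal{I}^{-3}$ generates the full $U_i$ and $V_i$ coefficients. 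Collecting the two channels reproduces the stated $\mathbb{B}_i$ term by term; this accounting is the computational core of the result.

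It then remains to show that the bootstrap and jackknife estimates share this plug-in limit. For the bootstrap, $E^*[\widehat{\theta}^*]-\widehat{\theta}$ is the higher-order bias of the same functional computed under $\widehat{F}$, so applying the first expansion above to the bootstrap first-order condition and taking $E^*$ gives $\widehat{b}_B=b_0(\widehat{F})+o_p(n^{-1/2})$. For the jackknife I would use the leave-one-out identity $\sum_{j\ne i}\ell(Z_j,\widehat{\theta})=-\ell(Z_i,\widehat{\theta})$, which follows from the full-sample first-order condition, to expand $\Delta_i=\widehat{\theta}_{(i)}-\widehat{\theta}$ in powers of $1/n$; forming $\widehat{b}_J=(n-1)\sum_i\Delta_i$ and collecting terms shows $\widehat{b}_J=b_0(\widehat{F})+o_p(n^{-1/2})$ as well. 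Since all three estimates agree with the plug-in to this order, the second step endows each with influence function $\mathbb{B}_i$, which is the first displayed equation. Substituting $\sqrt{n}(\tilde{b}-b_0)=n^{-1/2}\sum_i\mathbb{B}_i+o_p(1)$, i.e.\ $\Delta=\mathbb{B}$, into (\ref{eq:exp_bias_corr}) and reading off $\tilde{A}_3=A_3-\mathbb{B}$ gives the final expansion.

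The main obstacle is the remainder control in this last step. For the bootstrap one must justify interchanging $E^*$ with the stochastic expansion and show that $E^*$ of the higher-order remainder is $o_p(n^{-3/2})$, which requires uniform integrability of the bootstrap moments and is where the high (tenth-order) moment bounds assumed in Theorem~\ref{thm:asym_lin} and the regularity conditions of Appendix~\ref{sec:reg-conds} enter. For the jackknife the difficulty is that the prefactor $(n-1)$ multiplies a sum of $n$ terms each of size $O_p(n^{-1})$, so each $\Delta_i$ must be expanded to accuracy $o_p(n^{-3/2})$ uniformly in $i$, and the $O(n^{-1})$ and $O(n^{-3/2})$ pieces must be shown to aggregate into precisely $b_0(\widehat{F})$; verifying that the bootstrap and jackknife contributions collapse to the identical $\mathbb{B}_i$ is the crux.
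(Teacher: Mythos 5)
Your strategy is sound and reaches the right answer, but it is organized differently from the paper's proof, and the difference is worth noting. You propose to funnel both nonparametric corrections through a single reference object --- the plug-in $b_0(\widehat F)=\widetilde b(\widehat\theta)$ with $E_F[\ell^2]$ kept distinct from $-E_F[\ell^\theta]$ --- and then compute one Gateaux derivative. The paper instead computes the two influence functions independently: for the bootstrap it truncates, evaluates $E^\ast$ of each term of the bootstrap expansion via explicit bootstrap-moment identities (Lemma \ref{BExp}), and linearizes the resulting $B_n$ using the expansions of $\widehat{\mathcal I}$, $\widehat{\mathcal Q}_1$, $\bar m_3$, $\bar m_4$ in Lemma \ref{equiLemma}; for the jackknife it uses exact combinatorial identities for jackknifed V-statistics (Lemmas \ref{Vjack1}--\ref{Vjack5}) applied to $\theta^\epsilon,\theta^{\epsilon\epsilon},\theta^{\epsilon\epsilon\epsilon}$, obtaining an object $\mathbb J$ that is then verified to equal $\mathbb B$ by direct comparison of coefficients (using the information equality $E[\ell^3]=-E[\ell^{\theta\theta}]-3E[\ell\ell^\theta]$). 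Your route is conceptually cleaner --- it explains \emph{why} the three corrections agree (they all estimate the same functional of $F$) and correctly isolates the source of the $\tfrac12\mathcal I^{-3}\mathcal Q_1(\ell^2-E[\ell^2])$ term --- and for the bootstrap your reduction is essentially free, since $\tfrac12 E^\ast[\widehat\theta^{\epsilon\epsilon}(0)]$ is \emph{identically} $\widetilde b(\widehat\theta)$ and the next bootstrap term is $O_p(n^{-1})$. What it buys less cheaply is the jackknife: the claim $\widehat b_J=b_0(\widehat F)+o_p(n^{-1/2})$ is not obtained by a short identity but requires exactly the uniform-in-$i$ expansion of $\widehat\theta_{(i)}-\widehat\theta$ to order $o_p(n^{-3/2})$ and the aggregation of the $O(n^{-1})$ and $O(n^{-3/2})$ pieces that you flag as the crux; resolving that is the same combinatorial work the paper packages into the V-statistic lemmas, so your reorganization defers rather than avoids it. You should also be explicit that the remainder bounds (the $o_p(n^{-3/2})$ control of the truncated bootstrap expectation and of the sixth-derivative remainders $\theta^{\epsilon\epsilon\epsilon\epsilon\epsilon\epsilon}(\widetilde\epsilon)$) rest on the moment and entropy conditions of Appendix \ref{sec:reg-conds}, as the paper establishes in Lemmas \ref{multi:bound:theta-epsilon-1} and \ref{boot:bound:theta}.
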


\begin{proof}
See Supplementary Appendix I \ref{Proof-Main-Results}\footnote{The Supplementary Appendix is available at \href{http://arxiv.org/abs/2207.09943}{http://arxiv.org/abs/2207.09943}}.
\end{proof}

Here we see that the jackknife and bootstrap bias estimates are $\sqrt{n}$-consistent and asymptotically linear (note that $E[\mathbb{B}_i]=0$), which from Theorem  \ref{thm:semip_equiv} implies that their higher-order variances are the same as any other estimator that uses an asymptotically linear estimator of the bias. In fact, the jackknife and bootstrap share an even stronger equivalence, in the sense that their bias estimates have identical influence functions, which implies that their asymptotic expansions are identical up to the third order.

\subsection{Analytical bias corrections}

We next introduce and discuss three forms of analytical bias correction. An analytical expression for the higher-order bias is given by
\begin{align}
b\left(  \theta_{0}\right)  \equiv E \left[  A_{2}\right]   &
=\frac{1}{2\mathcal{I}^{3}}E\left[  \ell^{\theta\theta}\right]  E\left[
\ell^{2}\right]  +\frac{1}{\mathcal{I}^{2}}E\left[  \ell\ell^{\theta}\right]
\label{bias-no-info}
\end{align}

The first bias correction is based on the bias formula (\ref{bias-no-info}), replacing expectations with sample averages. This gives the estimator $\widetilde{\theta}_{a}\equiv\widehat{\theta}-\frac{\widetilde{b}\left(
\widehat{\theta}\right)  }{n}$, where
\begin{align} \label{analy-no-info}
\widetilde{b}\left(\widehat{\theta}\right) =
-\frac{\left(  \frac{1}{n}\sum_{i}\ell^{\theta\theta}(  Z_{i},\widehat{\theta
})  \right)  \left(  \frac{1}{n}\sum_{i} \ell(  Z_{i}%
,\widehat{\theta})   ^{2}\right)  }{2\left(  \frac{1}{n}\sum_{i}%
\ell^{\theta}(  Z_{i},\widehat{\theta})  \right)  ^{3}}%
+\frac{\left(  \frac{1}{n}\sum_{i}\ell(  Z_{i},\widehat{\theta})
\ell^{\theta}(  Z_{i},\widehat{\theta})  \right)  }{\left(
\frac{1}{n}\sum_{i}\ell^{\theta}(  Z_{i},\widehat{\theta})  \right)
^{2}} 
\end{align}

It can be shown that this form of analytical bias estimate shares the same influence function as the bootstrap and jackknife bias estimators. It follows that it has an identical third-order asymptotic expansion.

\begin{proposition}
\label{HigherOrderEfficiency3} Let the regularity conditions stated in Appendix \ref{sec:reg-conds} hold. The estimator $\widetilde{\theta}_a$ has the higher-order expansion
\[
\sqrt{n}\left(  \widetilde{\theta}_{a}-\theta_{0}\right)  = A_{1}+\frac
{1}{\sqrt{n}}\left(  A_{2}-b\left(  \theta_{0}\right)  \right)  +\frac{1}%
{n}\left(  A_{3}-\mathbb{B}\right)  +o_{p}\left(  n^{-1}\right)  .
\]

\end{proposition}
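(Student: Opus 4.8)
The plan is to show that the analytical bias estimate $\widetilde{b}(\widehat{\theta})$ defined in \eqref{analy-no-info} is asymptotically linear with influence function \emph{exactly} the $\mathbb{B}_i$ of Proposition \ref{boot-jack-higher}; the claimed expansion then follows immediately. Indeed, subtracting $\widetilde{b}(\widehat{\theta})/n$ from the base expansion \eqref{eq:expan} and adding and subtracting $b(\theta_0)/\sqrt{n}$ exactly as in \eqref{eq:exp_bias_corr} gives
\[
\sqrt{n}(\widetilde{\theta}_a - \theta_0) = A_1 + \frac{A_2 - b(\theta_0)}{\sqrt{n}} + \frac{A_3 - \sqrt{n}\big(\widetilde{b}(\widehat{\theta}) - b(\theta_0)\big)}{n} + o_p(n^{-1}),
\]
so it suffices to establish $\sqrt{n}(\widetilde{b}(\widehat{\theta}) - b(\theta_0)) = \mathbb{B} + o_p(1)$ with $\mathbb{B} = n^{-1/2}\sum_i \mathbb{B}_i$.

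First I would write $\widetilde{b}(\theta) = G\big(\widehat{\mu}_1(\theta),\ldots,\widehat{\mu}_4(\theta)\big)$, where $\widehat{\mu}_j(\theta) = n^{-1}\sum_i g_j(Z_i,\theta)$ are the sample averages of $g_1 = \ell^{\theta\theta}$, $g_2 = \ell^2$, $g_3 = \ell^\theta$, $g_4 = \ell\ell^\theta$, and $G(m_1,m_2,m_3,m_4) = -m_1 m_2/(2 m_3^3) + m_4/m_3^2$. Writing $\mu_j = E[g_j(Z,\theta_0)]$, the bias formula \eqref{bias-no-info} is precisely $G(\mu_1,\ldots,\mu_4) = b(\theta_0)$, using $\mu_3 = E[\ell^\theta] = -\mathcal{I}$ and the information equality $\mu_2 = E[\ell^2] = \mathcal{I}$. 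Since $\mathcal{I}\neq 0$, $G$ is smooth near $\mu$, and a Taylor expansion gives $\widetilde{b}(\widehat{\theta}) - b(\theta_0) = \sum_j \partial_j G(\mu)\,\delta_j + O_p(n^{-1})$, where $\delta_j = \widehat{\mu}_j(\widehat{\theta}) - \mu_j$; the quadratic remainder is $O_p(n^{-1})$ because each $\delta_j = O_p(n^{-1/2})$, and so vanishes after multiplication by $\sqrt{n}$.

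Next I would split each perturbation as $\delta_j = [\widehat{\mu}_j(\widehat{\theta}) - \widehat{\mu}_j(\theta_0)] + [\widehat{\mu}_j(\theta_0) - \mu_j]$. The second piece is the centered sample average $n^{-1}\sum_i(g_j(Z_i,\theta_0) - \mu_j)$, which supplies the terms $W_i$, $\ell^2 - E[\ell^2]$, $V_i$, and $\ell\ell^\theta - E[\ell\ell^\theta]$ to the influence function. For the first piece, a mean-value expansion in $\theta$ combined with a uniform law of large numbers gives $\widehat{\mu}_j(\widehat{\theta}) - \widehat{\mu}_j(\theta_0) = E[\partial_\theta g_j(Z,\theta_0)](\widehat{\theta} - \theta_0) + o_p(n^{-1/2})$, and the first-order asymptotic linearity of the MLE, $\sqrt{n}(\widehat{\theta} - \theta_0) = \mathcal{I}^{-1} n^{-1/2}\sum_i U_i + o_p(1)$, converts this into the $U_i$ contribution. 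Collecting terms yields
\[
\mathbb{B}_i = \sum_{j=1}^4 \partial_j G(\mu)\Big\{ \big(g_j(Z_i,\theta_0) - \mu_j\big) + E[\partial_\theta g_j(Z,\theta_0)]\,\mathcal{I}^{-1} U_i \Big\}.
\]

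It then remains to check that this coincides term-by-term with the $\mathbb{B}_i$ of Proposition \ref{boot-jack-higher}. I would compute the partials $\partial_1 G = -m_2/(2m_3^3)$, $\partial_2 G = -m_1/(2m_3^3)$, $\partial_3 G = 3 m_1 m_2/(2 m_3^4) - 2 m_4/m_3^3$, $\partial_4 G = 1/m_3^2$ at $\mu$, together with $E[\partial_\theta g_1] = \mathcal{Q}_2$, $E[\partial_\theta g_2] = 2E[\ell\ell^\theta]$, $E[\partial_\theta g_3] = \mathcal{Q}_1$, and $E[\partial_\theta g_4] = E[(\ell^\theta)^2] + E[\ell\ell^{\theta\theta}]$. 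Matching the coefficients of $W_i$, $V_i$, $\ell^2 - E[\ell^2]$, and $\ell\ell^\theta - E[\ell\ell^\theta]$ is immediate, and the coefficient of $U_i$ agrees after substituting $E[U_iV_i] = E[\ell\ell^\theta]$ and $E[U_iW_i] = E[\ell\ell^{\theta\theta}]$ (both using $E[\ell]=0$) and the information equality $E[\ell^2] = \mathcal{I}$. I expect the main obstacle to be analytical rather than algebraic: one must justify the mean-value and uniform-law-of-large-numbers steps and confirm that the quadratic Taylor remainder is genuinely $o_p(1)$ after $\sqrt{n}$-scaling. This is exactly where the continuous differentiability of the $g_j$ and the uniformly bounded moments in the regularity conditions of Appendix \ref{sec:reg-conds} are needed; once they are in place, the remaining term-by-term verification is routine.
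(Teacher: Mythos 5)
Your proposal is correct and follows essentially the same route as the paper's own proof: the paper writes $\widetilde{b}(\theta)=\zeta\left(n^{-1}\sum_i m(Z_i,\theta)\right)$ for the same smooth function of sample moments (your $G$ is the paper's $\zeta$ up to a sign relabeling of the $\ell^{\theta}$ coordinate), performs the identical delta-method expansion splitting $\widehat{\mu}_j(\widehat{\theta})-\mu_j$ into a plug-in part handled by asymptotic linearity of $\widehat{\theta}$ and a sampling part at $\theta_0$, and then verifies that $\zeta_m M\mathcal{I}^{-1}U(\theta_0)+\zeta_m\left(n^{-1}\sum_i(m(z_i,\theta_0)-\overline{m})\right)$ equals $\mathbb{B}$ by the same coefficient matching (using the information equality exactly where you do). The only slip is immaterial: identifying $G(\mu)=b(\theta_0)$ from (\ref{bias-no-info}) needs only $\mu_3=-\mathcal{I}$, not the information equality, which enters only later when simplifying the partial derivatives.
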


\begin{proof}
See Supplementary Appendix I \ref{Proof-Main-Results}.
\end{proof}

Alternative analytical corrections can be constructed by imposing the information equality, $E[\ell^2]=\mathcal{I}$ in the characterization of the bias, that is, using the bias formula 
\begin{align}
b\left(  \theta_{0}\right)  \equiv E \left[  A_{2}\right]  
  =\mathcal{I}^{-2} \left( \frac{1}{2}E\left[  \ell^{\theta\theta}\right]  + E\left[  \ell\ell^{\theta}\right] \right). \label{bias-info}%
\end{align}

An analytical bias correction that uses the information equality is given by
\[
\widehat{\theta}_{a}\equiv\widehat{\theta}-\frac{\widehat{b}\left(
\widehat{\theta}\right)  }{n}\equiv\widehat{\theta}-\frac{1}{n}\left(
\frac{\left(  n^{-1}\sum_{i}\ell^{\theta\theta}(  Z_{i},\widehat{\theta
})  \right)  }{2\left(  n^{-1}\sum_{i}\ell^{\theta}(
Z_{i},\widehat{\theta})  \right)  ^{2}}+\frac{\left(  n^{-1}\sum_{i}%
\ell(  Z_{i},\widehat{\theta})  \ell^{\theta}(  Z_{i}%
,\widehat{\theta})  \right)  }{\left(  n^{-1}\sum_{i}\ell^{\theta
}(  Z_{i},\widehat{\theta})  \right)  ^{2}}\right)  .
\]

It is also possible to construct a bias correction that computes the expectations in (\ref{bias-info}) in their integral form. This is the analytical bias-corrected estimator of Pfanzagl and Wefelmeyer (1978), which they showed is
higher-order efficient. 
\[
\widehat{\theta}_{c}\equiv\widehat{\theta}-\frac{b\left(  \widehat{\theta
}\right)  }{n}=\widehat{\theta}-\frac{1}{n}\left(  \frac{\int\ell
^{\theta\theta}(  z,\widehat{\theta})  f(  z,\widehat{\theta
})  dz}{2\left(  \int\ell^{\theta}(  z,\widehat{\theta})
f(  z,\widehat{\theta})  dz\right)  ^{2}}+\frac{\int\ell(  z,\widehat{\theta})  \ell^{\theta}(  z,\theta
)  f(  z,\widehat{\theta})  dz}{\left(  \int\ell^{\theta
}(  z,\widehat{\theta})  f(  z,\widehat{\theta})
dz\right)  ^{2}}\right)  .
\]

The analytical bias-corrected estimators that impose the information equality do not have identical higher-order expressions to the bias-corrected estimators that do not rely on the information equality. Instead, they differ in their third expansion terms, as can be seen in the following proposition.

\begin{proposition}
\label{HigherOrderEfficiency}Let the regularity conditions stated in Appendix \ref{sec:reg-conds} hold. The analytical  bias estimates satisfy
\begin{align*}
    \sqrt{n}(\widehat{b}(\widehat{\theta})  - b_0) &= \frac{1}{\sqrt{n}}\sum_{i=1}^n \mathbb{A}_i + o_p(1) \\
    \sqrt{n}(b(\widehat{\theta}) - b_0) &= \frac{1}{\sqrt{n}}\sum_{i=1}^n \mathbb{C}_i + o_p(1)
\end{align*}
where $\mathbb{A}_i$ and $\mathbb{C}_i$ are as defined in Supplementary Appendix I \ref{Proof-Main-Results}. Hence, the analytical bias-corrected estimators $\widehat{\theta}_{c}$ and $\widehat{\theta}_{a}$ have higher-order expansions
\begin{align*}
\sqrt{n}\left(  \widehat{\theta}_{c}-\theta_{0}\right)   
&  =A_{1}+\frac{1}{\sqrt{n}}\left(  A_{2}-b\left(  \theta_{0}\right)  \right)  +\frac{1}%
{n}\left(  A_{3}-\mathbb{C}\right)  +o_{p}\left(  n^{-1}\right)  ,\\
\sqrt{n}\left(  \widehat{\theta}_{a}-\theta_{0}\right)   
&  =A_{1}+\frac{1}{\sqrt{n}}\left(  A_{2}-b\left(  \theta_{0}\right)  \right)  +\frac{1}%
{n}\left(  A_{3}-\mathbb{A}\right)  +o_{p}\left(  n^{-1}\right)  ,
\end{align*}
where $\mathbb{A} = \sum_i \mathbb{A}_i /\sqrt{n}$ and $\mathbb{C} = \sum_i \mathbb{C}_i /\sqrt{n}$.
\end{proposition}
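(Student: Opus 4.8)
The plan is to establish the two asymptotically linear representations directly, since each bias estimate is a smooth functional of the data, and then to read off the estimator expansions from the general formula (\ref{eq:exp_bias_corr}). Throughout I use that the MLE is asymptotically linear with influence function $\psi(Z_i)=\mathcal{I}^{-1}\ell(Z_i,\theta_0)$, so that $\sqrt{n}(\widehat{\theta}-\theta_0)=\mathcal{I}^{-1}U(\theta_0)+o_p(1)$, where $U(\theta_0)=n^{-1/2}\sum_i\ell(Z_i,\theta_0)$.

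The estimator $\widehat{\theta}_c$ is the easier case, because the Pfanzagl--Wefelmeyer bias $b(\widehat{\theta})$ depends on the data only through $\widehat{\theta}$. Writing $b(\theta)=G(\theta)$ for the deterministic function defined by the integrals, a one-term Taylor expansion gives $\sqrt{n}(b(\widehat{\theta})-b_0)=G'(\theta_0)\sqrt{n}(\widehat{\theta}-\theta_0)+o_p(1)$, so that the influence function is $\mathbb{C}_i=G'(\theta_0)\mathcal{I}^{-1}\ell(Z_i,\theta_0)$. The only care needed is in computing $G'(\theta_0)$, which requires differentiating under the integral sign using $\partial f(z,\theta)/\partial\theta=\ell(z,\theta)f(z,\theta)$ together with the information-type identities; this delivers the explicit $\mathbb{C}_i$ recorded in the Supplementary Appendix. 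The remainder is $o_p(1)$ by continuous differentiability of $G$ and $\sqrt{n}$-consistency of $\widehat{\theta}$.

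The estimator $\widehat{\theta}_a$ is the substantive case. Here the bias estimate is $\widehat{b}(\widehat{\theta})=H\big(m_1(\widehat{\theta}),m_2(\widehat{\theta}),m_3(\widehat{\theta})\big)$, where $H(m_1,m_2,m_3)=m_1/(2m_2^2)+m_3/m_2^2$ and the $m_j(\theta)$ are the three sample averages of $\ell^{\theta\theta}$, $\ell^{\theta}$, and $\ell\ell^{\theta}$ appearing in (\ref{analy-no-info}). I would expand in two stages. First, a multivariate delta expansion of $H$ about the population moments $(\mu_1,\mu_2,\mu_3)$ at $\theta_0$ reduces $\sqrt{n}(\widehat{b}(\widehat{\theta})-b_0)$, to first order, to the linear combination $\sum_j \partial_{m_j}H\cdot\sqrt{n}(m_j(\widehat{\theta})-\mu_j)$. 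Second, for each $j$ I write $m_j(\widehat{\theta})-\mu_j=\big(m_j(\theta_0)-\mu_j\big)+m_j'(\theta_0)(\widehat{\theta}-\theta_0)+o_p(n^{-1/2})$ and replace $\widehat{\theta}-\theta_0$ by its linear representation $\mathcal{I}^{-1}U(\theta_0)/\sqrt{n}$; note that $m_j'(\theta_0)$ converges to population derivatives involving $\ell^{\theta\theta}$ and $\ell^{\theta\theta\theta}$, which is how $\mathcal{Q}_1$ and $\mathcal{Q}_2$ enter. Collecting the two sources of randomness --- the sampling noise of each average at $\theta_0$ and the estimation error of $\widehat{\theta}$ propagated through $m_j'(\theta_0)$ and $\partial_{m_j}H$ --- yields a single sum of i.i.d.\ mean-zero terms, which is the claimed representation with influence function $\mathbb{A}_i$.

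The main obstacle is the bookkeeping in this combined expansion for $\widehat{\theta}_a$: one must identify the leading linear term correctly while verifying that all second-order pieces --- products of $\big(m_j(\theta_0)-\mu_j\big)$ with $(\widehat{\theta}-\theta_0)$, the curvature terms of $H$, and the Taylor remainders in $\theta$ --- are $o_p(1)$ after multiplication by $\sqrt{n}$. These negligibility claims follow from the LLN and CLT applied to the relevant averages, using the continuous differentiability of the $g_{k,j}$ in $\theta$ and the moment bound $E[g^{10}]\le C$ of Theorem \ref{thm:asym_lin} to control the products and remainders uniformly in a neighborhood of $\theta_0$. Once the two influence functions $\mathbb{A}_i$ and $\mathbb{C}_i$ are in hand, the estimator expansions follow immediately from (\ref{eq:exp_bias_corr}) with $\tilde{A}_3=A_3-\Delta$ and $\Delta=\sum_i\mathbb{A}_i/\sqrt{n}$ (respectively $\Delta=\sum_i\mathbb{C}_i/\sqrt{n}$), giving the stated third-order terms $A_3-\mathbb{A}$ and $A_3-\mathbb{C}$.
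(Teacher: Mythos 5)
Your proposal is correct and follows essentially the same route as the paper: both treat $b(\widehat{\theta})$ as a smooth function of $\widehat{\theta}$ alone (with the total derivative picking up the score term $\Lambda$ from differentiating $f(z,\theta)$ under the integral) and both expand $\widehat{b}(\widehat{\theta})$ via a delta method around the population moments, splitting the randomness into the sampling noise of the moment averages at $\theta_0$ and the propagated error $\widehat{\theta}-\theta_0$, exactly reproducing the paper's $\mathbb{C}_i=\tau_m(M+\Lambda)\mathcal{I}^{-1}U_i$ and $\mathbb{A}_i=\tau_m M\mathcal{I}^{-1}U_i+\tau_m(m(Z_i,\theta_0)-\overline{m})$. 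The only cosmetic discrepancy is that the remainder bounds should be justified by Conditions \ref{HH}--\ref{EC} (the moment bound $E[M(Z)^Q]<\infty$, $Q>16$) rather than the hypotheses of Theorem \ref{thm:asym_lin}, but this does not affect the argument.
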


\begin{proof}
See Supplementary Appendix I \ref{Proof-Main-Results}.
\end{proof}

Propositions \ref{boot-jack-higher},  \ref{HigherOrderEfficiency3}, and 
\ref{HigherOrderEfficiency} imply that the
higher-order variances of $\widehat{\theta}_{B}$, $\widehat{\theta}_{J}$,
$\widetilde{\theta}_{a}$ are equal to $\operatorname{Var}\left(  A_{2}\right)
+2n^{1/2}E\left[  A_{1}A_{2}\right]  +2E\left[  A_{1}\left(  A_{3}%
-\mathbb{B}\right)  \right]  $,
while the higher-order variances of $\widehat{\theta}_{a}$, and
$\widehat{\theta}_{c}$ are $\operatorname{Var}\left(  A_{2}\right)
+2n^{1/2}E\left[  A_{1}A_{2}\right]  +2E\left[  A_{1}\left(  A_{3}%
-\mathbb{A}\right)  \right]  $ and $\operatorname{Var}\left(  A_{2}\right)
+2n^{1/2}E\left[  A_{1}A_{2}\right]  +2E\left[  A_{1}\left(  A_{3}%
-\mathbb{C}\right)  \right]  $, respectively.
Therefore, it is natural to
conjecture that the higher-order variances of $\widehat{\theta}_{B}$,
$\widehat{\theta}_{J}$, and $\widetilde{\theta}_{a}$ are different from that
of $\widehat{\theta}_{c}$ or $\widehat{\theta}_{a}$. 
However, the propositions also state that each estimator uses an estimate of the higher-order bias $b_0$ that is asymptotically linear, and hence a corollary of Theorem \ref{thm:semip_equiv} is that the higher-order variances of the bias-corrected estimators are all equal.

\begin{corollary}\label{ET1T3}
Let the regularity conditions stated in Appendix \ref{sec:reg-conds} hold. Then, 
\[
    E\left[  \mathbb{B}A_{1}\right]  =E\left[
\mathbb{A}A_{1}\right]  =E\left[  \mathbb{C}A_{1}\right]
\]
\end{corollary}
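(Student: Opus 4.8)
The plan is to recognize Corollary~\ref{ET1T3} as a direct application of Theorem~\ref{thm:semip_equiv} to the fully parametric MLE model, which is the special case of the semiparametric framework in which the tangent set is the one–dimensional linear span of the score, $\mathcal{T}=\{c\,\ell(Z,\theta_0):c\in\mathbb{R}\}$. First I would record the ingredients the theorem requires. The MLE $\widehat{\theta}$ is asymptotically linear and efficient with influence function $\psi(Z)=\mathcal{I}^{-1}\ell(Z,\theta_0)\in\mathcal{T}$; the parameters $\theta_0$ and $b_0=E[A_2]$ are differentiable functionals of the parametric family under the regularity conditions of Appendix~\ref{sec:reg-conds}; and by Propositions~\ref{boot-jack-higher} and~\ref{HigherOrderEfficiency} the three bias estimates are asymptotically linear for $b_0$ with influence functions $\mathbb{B}_i$, $\mathbb{A}_i$, and $\mathbb{C}_i$, each mean zero. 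These three influence functions play the role of $d(Z)$ in the theorem.

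The one computation to carry out is the reduction of each quantity in the statement to a single–observation covariance of influence functions. Writing $A_1=n^{-1/2}\sum_i\psi(Z_i)$ and $\mathbb{B}=n^{-1/2}\sum_j\mathbb{B}_j$, expanding the product gives
\[
E[\mathbb{B}A_1]=\frac{1}{n}\sum_{i,j}E[\psi(Z_i)\mathbb{B}(Z_j)],
\]
and since the data are i.i.d.\ with $E[\psi]=E[\mathbb{B}_i]=0$, every $i\neq j$ term factors into a product of zero means, leaving $E[\mathbb{B}A_1]=E[\psi(Z)\mathbb{B}(Z)]$. The identical manipulation yields $E[\mathbb{A}A_1]=E[\psi(Z)\mathbb{A}(Z)]$ and $E[\mathbb{C}A_1]=E[\psi(Z)\mathbb{C}(Z)]$. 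Thus the claim is exactly that the covariance of $\psi$ with the bias–estimate influence function does not depend on which bias correction is used, which is the content of~(\ref{eq:psi_phi_cov}) together with Theorem~\ref{thm:semip_equiv}.

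To finish I would invoke the decomposition from the proof of Theorem~\ref{thm:semip_equiv}: each of $\mathbb{B}$, $\mathbb{A}$, $\mathbb{C}$, being a valid influence function for the differentiable parameter $b_0$, decomposes as $\delta(Z)+U(Z)$ with $\delta$ the unique efficient influence function for $b_0$ and $E[U(Z)t(Z)]=0$ for all $t\in\mathcal{T}$, where $\delta$ is the common projection onto $\mathcal{T}$ guaranteed by Newey (1990, Theorem 3.1). Because $\psi\in\mathcal{T}$, the orthogonal part contributes nothing and $E[\psi(Z)\mathbb{B}(Z)]=E[\psi(Z)\delta(Z)]$, and identically for $\mathbb{A}$ and $\mathbb{C}$; uniqueness of $\delta$ then forces the three covariances to coincide. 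I do not expect a genuine obstacle, since the heavy lifting—establishing the stochastic expansions and the explicit influence functions—is already done in Propositions~\ref{boot-jack-higher}, \ref{HigherOrderEfficiency3}, and~\ref{HigherOrderEfficiency}. The one point needing care is confirming that $\mathbb{B}_i$, $\mathbb{A}_i$, $\mathbb{C}_i$ are influence functions for the \emph{same} functional $b_0=b(\theta_0)$, so that they share $\delta$; as a concrete cross-check one could instead substitute the explicit formula for $\mathbb{B}_i$ (and those for $\mathbb{A}_i$, $\mathbb{C}_i$ from the Supplementary Appendix) into $E[\psi\,(\cdot)]$ and reduce the three expressions to a common value using the Bartlett-type identities linking $\mathcal{I}$, $\mathcal{Q}_1$, $\mathcal{Q}_2$, $E[U_iV_i]$, and $E[U_iW_i]$, though this brute-force route is precisely what Theorem~\ref{thm:semip_equiv} lets us avoid.
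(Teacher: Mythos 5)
Your proposal is correct in outline but follows a genuinely different route from the paper's own proof. The paper proves the corollary by direct computation: it evaluates $E[\mathbb{A}A_1]$ and $E[\mathbb{C}A_1]$ term by term, obtaining $\tau_m(M+\Lambda)\mathcal{I}^{-1}$ for each, and then reduces $E[\mathbb{B}A_1]$ to the same quantity using the third-order Bartlett identity $E[\ell^3]=-E[\ell^{\theta\theta}]-3E[\ell\ell^{\theta}]$ together with the explicit forms of $M$, $\Lambda$ and $\tau_m$ --- exactly the ``brute-force'' cross-check you mention at the end. Your route through Theorem \ref{thm:semip_equiv} is the one the main text uses to motivate the corollary, and your reduction of $E[\mathbb{B}A_1]$ to the single-observation covariance $E[\psi(Z)\mathbb{B}_i]$ is fine. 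The caveat you raise yourself, however, is more than a formality and is worth making precise. In the scalar parametric model the tangent set is the span of $\ell$, $\psi=\mathcal{I}^{-1}\ell$, and the hypothesis of Theorem \ref{thm:semip_equiv} that each of $\mathbb{A}_i,\mathbb{B}_i,\mathbb{C}_i$ is a gradient of the same differentiable parameter $b_0$ amounts to the identity $E[\mathbb{A}_i\ell]=E[\mathbb{B}_i\ell]=E[\mathbb{C}_i\ell]=b'(\theta_0)$; since $E[\psi\,d]=\mathcal{I}^{-1}E[\ell\,d]$, this is (up to the constant $\mathcal{I}^{-1}$) exactly the conclusion of the corollary. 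So invoking the theorem does not let you avoid the computation unless you separately establish that the jackknife, bootstrap and analytical bias estimates are \emph{regular} estimators of $b(\theta)$ (so that the gradient identity follows from standard results in Newey (1990)); Propositions \ref{boot-jack-higher}, \ref{HigherOrderEfficiency3} and \ref{HigherOrderEfficiency} only give asymptotic linearity at $\theta_0$, not regularity. What each approach buys: your route, once regularity is supplied, explains \emph{why} the three covariances agree and extends immediately to vector parameters and genuinely semiparametric models; the paper's computation is self-contained, requires no regularity argument, and identifies the common value $\tau_m(M+\Lambda)\mathcal{I}^{-1}$ explicitly, which is what feeds into the closed-form higher-order variance of Proposition \ref{higher_var}.
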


\begin{proof}
See Supplementary Appendix I \ref{Proof-Main-Results}.
\end{proof}

By Corollary \ref{ET1T3}, and the expression for the higher-order variance (\ref{eq:higher-var}), we conclude that all five estimators are 
higher-order efficient. This result follows because (i) they have identical higher-order variance; and (ii) $\widehat{\theta}_{c}$ was shown to be higher-order efficient by Pfanzagl and Wefelmeyer (1978).

The next proposition provides an expression for the higher-order variance of these five bias-corrected estimators.

\begin{proposition} \label{higher_var}
Let the regularity conditions stated in Appendix \ref{sec:reg-conds} hold, and define
\begin{align*}
    X_{i} &= \mathcal{I}^{-1}U_i \\
    Y_{i} &= \frac{1}{2}\mathcal{I}^{-2}\mathcal{Q}_1U_i + \mathcal{I}^{-1}V_i \\
    \Upsilon &= \left( E[X_{i}^2]E[Y_{i}^2] + E[X_{1}Y_{i}]^2\right) 
\end{align*}
The higher-order variance of $\widehat{\theta}_B$, $\widehat{\theta}_J$, $\widetilde{\theta}_a$, $\widehat{\theta}_a$, and $\widehat{\theta}_c$ is
\begin{align} \label{eq:higher-var-expr}
    Var(A_1) + \frac{\tilde{\upsilon}}{n} = 
    \mathcal{I}^{-1} + \frac{1}{n}\Upsilon
\end{align}
\end{proposition}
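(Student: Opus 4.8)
The plan is to reduce the claim to two computations: the variance of the leading product term $A_2$, and the vanishing of the cross-covariance terms involving the third-order term of the expansion. First I would make the MLE expansion explicit. Expanding the first-order condition $n^{-1}\sum_i\ell(Z_i,\widehat\theta)=0$ about $\theta_0$ and matching powers of $n^{-1/2}$ gives $A_1=\mathcal{I}^{-1}U$ and
\[
A_2=\mathcal{I}^{-2}UV+\tfrac12\mathcal{I}^{-3}\mathcal{Q}_1U^2=\Big(\tfrac1{\sqrt n}\sum_i X_i\Big)\Big(\tfrac1{\sqrt n}\sum_i Y_i\Big),
\]
so that $A_1=n^{-1/2}\sum_i X_i$ and $A_2$ is exactly the product of the two sample averages $n^{-1/2}\sum_i X_i$ and $n^{-1/2}\sum_i Y_i$, with $X_i,Y_i$ as in the statement. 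Both summands are mean zero, and the information equality $E[U_i^2]=\mathcal{I}$ gives $\operatorname{Var}(A_1)=E[X_i^2]=\mathcal{I}^{-1}$, the first term in (\ref{eq:higher-var-expr}). By Corollary \ref{ET1T3} all five estimators share the same $E[A_1\Delta]$ and hence the same $\tilde{\upsilon}$, so it suffices to evaluate $\tilde{\upsilon}$ once, taking $\Delta=\mathbb{B}$.

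Next I would expand $\tilde{\upsilon}$ using (\ref{eq:higher-var}). Since $\tilde{A}_2=A_2-b_0$ and $E[A_1]=0$, we have $\operatorname{Var}(\tilde{A}_2)=\operatorname{Var}(A_2)$ and $E[A_1\tilde{A}_2]=E[A_1A_2]$, while higher-order unbiasedness ($E[\tilde{A}_2]=0$) renders the squared-mean correction $O(n^{-2})$ and negligible, so that
\[
\tilde{\upsilon}=\operatorname{Var}(A_2)+2\sqrt n\,E[A_1A_2]+2E[A_1(A_3-\mathbb{B})].
\]
The leading term I would evaluate from the product structure: expanding $A_2=(n^{-1/2}\sum X_i)(n^{-1/2}\sum Y_j)$ into a V-statistic and retaining only the $O(1)$ two-pair contributions (the remaining pairings either vanish because an index is isolated or are $O(n^{-1})$) yields the Isserlis-type identity $\operatorname{Var}(A_2)=E[X_i^2]E[Y_i^2]+E[X_iY_i]^2+O(n^{-1})=\Upsilon+o(1)$. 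This already produces the claimed $\Upsilon$; the remaining work is to show the two cross-covariance terms cancel.

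The cross terms I would handle by the same diagonal bookkeeping. A direct count gives $\sqrt n\,E[A_1A_2]=E[X_i^2Y_i]+o(1)$, and since $X_i$ and $\mathbb{B}_j$ are independent and mean zero for $i\neq j$ we have $E[A_1\mathbb{B}]=E[X_i\mathbb{B}_i]$. For $E[A_1A_3]$ I would substitute the third-order coefficient obtained by matching the $n^{-1}$ terms of the expanded first-order condition,
\[
A_3=\mathcal{I}^{-1}\Big(VA_2+\mathcal{Q}_1A_1A_2+\tfrac12WA_1^2+\tfrac16\mathcal{Q}_2A_1^3\Big),
\]
and evaluate each of $E[X^2VY]$, $E[X^3Y]$, $E[WX^3]$ and $E[X^4]$, keeping only the $O(1)$ two-pair contributions. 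The target is the identity
\[
E[A_1A_3]+E[X_i^2Y_i]=E[X_i\mathbb{B}_i],
\]
which makes $2\sqrt n\,E[A_1A_2]+2E[A_1(A_3-\mathbb{B})]=0$ and leaves $\tilde{\upsilon}=\Upsilon$.

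Verifying this last identity is the main obstacle. Once the explicit influence function $\mathbb{B}_i$ of Proposition \ref{boot-jack-higher} is inserted, it reduces to an algebraic equality among the population quantities $\mathcal{I}$, $\mathcal{Q}_1$, $\mathcal{Q}_2$, $E[U_iV_i]$, $E[U_iW_i]$ and $E[\ell\ell^\theta]$; closing it requires the Bartlett identities obtained by differentiating $\int f(z,\theta)\,dz=1$ (for example $E[\ell]=0$, $E[\ell^2]=\mathcal{I}$, and $\mathcal{Q}_1+3E[\ell\ell^\theta]+E[\ell^3]=0$), which tie together the moments appearing in $A_3$ and in $\mathbb{B}_i$. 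By Corollary \ref{ET1T3} I am free to use whichever of $\mathbb{A}$, $\mathbb{B}$, $\mathbb{C}$ yields the most transparent algebra. As a consistency check, the Gaussian location example of Section \ref{intuitive-example} has $Y_i=U_i$, $E[X_i^2]=4\theta_0$, $E[Y_i^2]=1/(4\theta_0)$ and $E[X_iY_i]=1$, giving $\Upsilon=1+1=2$, which matches the $2/n^2$ higher-order variance recorded there.
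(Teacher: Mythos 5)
Your reduction is set up correctly: the expansion coefficients you write for $A_2$ and $A_3$ match the paper's $T_2$ and $T_3$, the Isserlis-type computation $\operatorname{Var}(A_2)=E[X_i^2]E[Y_i^2]+E[X_iY_i]^2+O(n^{-1})$ is right, and your Gaussian consistency check ($Y_i=U_i$, $\Upsilon=2$) does agree with Section \ref{intuitive-example}. However, your route differs from the paper's in a way that matters for how much algebra is left. You work with the centered terms $\tilde{A}_2=A_2-b_0$ and $\tilde{A}_3=A_3-\mathbb{B}$, so the entire burden falls on the cancellation $\sqrt{n}E[A_1A_2]+E[A_1A_3]-E[A_1\mathbb{B}]=o(1)$, which you correctly flag as the main obstacle but do not carry out: it requires expanding $E[A_1A_3]$ into moments like $E[X^2VY]$, $E[X^4]$, $E[WX^3]$, inserting the long expression for $\mathbb{B}_i$ from Proposition \ref{boot-jack-higher}, and closing the books with Bartlett identities. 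The paper avoids this computation entirely by re-expressing the jackknife-corrected estimator through jackknifed V-statistics (Lemma \ref{lem-J-two-terms}): the second-order term becomes the off-diagonal sum $T_{2,J}=\frac{1}{n-1}\sum_{i\neq j}Y_{1,i}Y_{2,j}$, so $E[T_1T_{2,J}]=0$ holds exactly because every term contains an isolated mean-zero factor, and $E[T_1T_{3,J}]=O(n^{-1})$ follows from a similar counting argument (Lemmas \ref{lem-J-first-two-terms} and \ref{lem-J-third-term-details}); the answer for the other four estimators then follows from Corollary \ref{ET1T3}, exactly as you invoke it. In short, the paper makes the cross-terms vanish by construction where you must make them vanish by identity. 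Your approach would work — the identity is true, since it is equivalent to the paper's result — but as written the proof is not complete until that algebraic verification is actually performed, and it is the single hardest step of your plan.
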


\begin{proof}
See Supplementary Appendix I \ref{Proof-Main-Results}.
\end{proof}

The above higher-order variance expression is useful since the higher-order variance formula in (\ref{eq:higher-var}) includes two covariance terms $2E\left[  A_{1}%
A_{3}\right]  +2n^{1/2}E\left[  A_{1}A_{2}\right]  $, and as such, it is not
obvious that (\ref{eq:higher-var}) is positive. On the other hand, because
$\Upsilon>0$, the expression in the proposition consists of
positive terms and eliminates this concern. According to Theorem \ref{thm:semip_equiv},
(\ref{eq:higher-var-expr}) is also the higher-order variance of any other asymptotically linear bias correction.

%%%%%%%%%%%%%%%%%%%%%%%%%%%%%%%%%%%%%%%%%%%%%%%%%%%%%%%%%%%%%%%%%%%%%%%%%%%%%%%%%%%%%%%%%%%%

%%%%% SECTION - INEFFICIENT BIAS CORRECTION

%%%%%%%%%%%%%%%%%%%%%%%%%%%%%%%%%%%%%%%%%%%%%%%%%%%%%%%%%%%%%%%%%%%%%%%%%%%%%%%%%%%%%%%%%%%%

\subsection{Inefficient bias correction} \label{sec:SS_var}

Our discussion above shows that many common methods of bias correction use estimates of the bias that are asymptotically linear, and hence estimate the bias at the $n^{-1/2}$-rate. These bias-corrected estimators have equivalent higher-order variances.
We now show by an example that the equivalence
result does not hold in general if this rate assumption is violated. We showed
in Example 1 (see the end of Section
\ref{subsec:2_3}) that the split-sample jackknife provided an inconsistent estimate of the higher-order bias, so that its 
higher-order variance was larger than that of the leave-one-out jackknife. In the Appendix we show that this result holds more generally, for estimators with the expansion structure used in Theorem 2. Here we derive expressions for the higher-order variance in the MLE case.\footnote{Derivation of the expressions used below are available in Section \ref{HOV-SS} of Supplementary Appendix I.}

Recall that the split-sample jackknife estimator is given by \[\widehat{\theta}_{SS}\equiv2\widehat{\theta}-\left(  \widehat{\theta}_{(1)}+\widehat{\theta}_{(2)}\right)  /2,
\]
where $\widehat{\theta}_{(  1)  }$ and $\widehat{\theta}_{(  2)}$ denote the MLEs based on separate halves of the sample. The implicit bias estimate used by the split-sample jackknife is given by $\frac{1}{n}\widehat{b}_{SS}=(  \widehat{\theta}_{(1)}+\widehat{\theta}_{(2)})  /2 - \widehat{\theta}$.

\begin{proposition} \label{higher_var_ss}
Let the regularity conditions stated in Appendix \ref{sec:reg-conds} hold. The bias estimate used by the split-sample estimator can be written as
\begin{align*}
    \widehat{b}_{SS} - b_0 &= \frac{1}{2}\left( X_{(1)}Y_{(1)} + X_{(2)}Y_{(2)} - 2b_0 -\left( X_{(1)}Y_{(2)} + X_{(2)}Y_{(1)}  \right) \right) \\
    &= O_p(1)
\end{align*}
where, for $n=2m$, $X_{(1)} = \frac{1}{\sqrt{m}}\sum_{i=1}^m X_{i}$ is the scaled sum over $X_{i}$ in the first half of the sample, and $X_{(2)}$, $Y_{(1)}$, and $Y_{(2)}$ are defined similarly (for $X_i$ and $Y_i$ as defined in Proposition \ref{higher_var}).

The higher-order variance of $\widehat{\theta}_{SS}$ is
\begin{align*}
    Var(A_1) + \frac{\tilde{\upsilon}_{SS}}{n} = 
    \mathcal{I}^{-1} + \frac{2}{n}\Upsilon
\end{align*}
\end{proposition}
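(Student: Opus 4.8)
The plan is to feed the MLE expansion behind Proposition~\ref{higher_var} into the split-sample construction and then exploit the independence across the two halves. Writing $n=2m$ and recalling that expanding the first-order condition of the MLE gives $A_1=S_X$ and $A_2=S_XS_Y$ for $S_X:=n^{-1/2}\sum_iX_i$ and $S_Y:=n^{-1/2}\sum_iY_i$ (so that $b_0=E[A_2]=E[X_iY_i]$), I would note that $S_X=(X_{(1)}+X_{(2)})/\sqrt2$ and $S_Y=(Y_{(1)}+Y_{(2)})/\sqrt2$. Applying the same expansion to each half-panel of length $m$ yields $\widehat{\theta}_{(k)}-\theta_0=X_{(k)}/\sqrt m+X_{(k)}Y_{(k)}/m+A_3^{(k)}/m^{3/2}+o_p(m^{-3/2})$ for $k=1,2$, where $A_3^{(k)}$ is the half-sample analogue of the full-sample third-order term $A_3$.

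For the first claim I would substitute these into $\widehat{b}_{SS}=n\big[(\widehat{\theta}_{(1)}+\widehat{\theta}_{(2)})/2-\widehat{\theta}\big]$. The first-order terms cancel, since $S_X/\sqrt n=(X_{(1)}+X_{(2)})/(2\sqrt m)$ is exactly the average of the two half-sample leading terms. Collecting the $O(n^{-1})$ terms and multiplying by $n$ leaves $\widehat{b}_{SS}=\tfrac12\big(X_{(1)}Y_{(1)}+X_{(2)}Y_{(2)}-X_{(1)}Y_{(2)}-X_{(2)}Y_{(1)}\big)+o_p(1)$, which is the stated display after subtracting $b_0$. Each $X_{(k)},Y_{(k)}$ is $O_p(1)$, so $\widehat{b}_{SS}-b_0=O_p(1)$; it is unbiased for $b_0$ yet inconsistent, because $E[X_{(k)}Y_{(k)}]\to b_0$ while $E[X_{(1)}Y_{(2)}]=0$ by cross-half independence.

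For the variance, the crucial point is that the leading bias-corrected second-order term is $\widehat{A}_2^{SS}:=A_2-\tfrac12(X_{(1)}Y_{(1)}+X_{(2)}Y_{(2)}-X_{(1)}Y_{(2)}-X_{(2)}Y_{(1)})=X_{(1)}Y_{(2)}+X_{(2)}Y_{(1)}$, i.e.\ precisely the cross-half products, whereas $A_2$ also retained the within-half products. Mirroring the definition in (\ref{eq:higher-var}), I would evaluate $\tilde{\upsilon}_{SS}=\operatorname{Var}(\widehat{A}_2^{SS})+2\sqrt n\,E[A_1\widehat{A}_2^{SS}]+2E[A_1\widehat{A}_3^{SS}]$ term by term. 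Cross-half independence alone (no normal approximation is needed) gives $\operatorname{Var}(X_{(1)}Y_{(2)})\to E[X_i^2]E[Y_i^2]$ and $\operatorname{Cov}(X_{(1)}Y_{(2)},X_{(2)}Y_{(1)})=E[X_{(1)}Y_{(1)}]E[X_{(2)}Y_{(2)}]\to E[X_iY_i]^2$, so that $\operatorname{Var}(\widehat{A}_2^{SS})\to2\big(E[X_i^2]E[Y_i^2]+E[X_iY_i]^2\big)=2\Upsilon$. Moreover every summand of $E[(X_{(1)}+X_{(2)})(X_{(1)}Y_{(2)}+X_{(2)}Y_{(1)})]$ carries an unpaired mean-zero factor from one half, so $E[A_1\widehat{A}_2^{SS}]=0$ exactly.

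The main obstacle is the third-order cross term $2E[A_1\widehat{A}_3^{SS}]$, which I expect to vanish but which requires carrying the bias expansion one order further. Writing $\widehat{b}_{SS}^{(1)}$ for the $O_p(n^{-1/2})$ part of $\widehat{b}_{SS}$, the third-order bookkeeping gives $\sqrt n\,\widehat{b}_{SS}^{(1)}=\sqrt2(A_3^{(1)}+A_3^{(2)})-A_3+o_p(1)$, whence $\widehat{A}_3^{SS}=A_3-\sqrt n\,\widehat{b}_{SS}^{(1)}=2A_3-\sqrt2(A_3^{(1)}+A_3^{(2)})$. Using $A_1=(X_{(1)}+X_{(2)})/\sqrt2$ and cross-half independence, the off-half covariances $E[X_{(2)}A_3^{(1)}]$ vanish, so $E[A_1A_3^{(k)}]\to\kappa/\sqrt2$ with $\kappa:=\lim E[A_1A_3]$; the essential fact is that this limiting covariance is the \emph{same} functional of $F_0$ whether evaluated on $n$ or on $m$ observations. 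Therefore $E[A_1\widehat{A}_3^{SS}]=2\kappa-\sqrt2\cdot\sqrt2\,\kappa=0$ and $\tilde{\upsilon}_{SS}=2\Upsilon$, giving $\operatorname{Var}(A_1)+\tilde{\upsilon}_{SS}/n=\mathcal{I}^{-1}+2\Upsilon/n$. The two places demanding care, which I would control with the tenth-moment bounds available in the Theorem~\ref{thm:asym_lin} setting, are that the $o_p(1)$ remainder in $\widehat{b}_{SS}$ contributes only $o(n^{-1})$ to the variance, and the sample-size invariance of $\kappa$.
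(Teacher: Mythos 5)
Your proposal is correct and follows essentially the same route as the paper: the same cancellation of first-order terms leaving $\widehat{b}_{SS}$ as the within-minus-cross half products, the same identification of the bias-corrected second-order term as $X_{(1)}Y_{(2)}+X_{(2)}Y_{(1)}$ with variance $2\Upsilon$ and zero covariance with $A_1$, and the same argument that $E[A_1 T_{3,SS}]$ vanishes because the half-sample and full-sample versions of $E[A_1A_3]$ share the same limit. The only cosmetic difference is that the paper controls the last point with an exact finite-$n$ fourth-moment identity (showing the difference is $O(n^{-1})$ rather than merely $o(1)$), which is the precise form of the ``sample-size invariance of $\kappa$'' you flag as needing care.
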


\begin{proof}
See Supplementary Appendix I \ref{Proof-Main-Results}.
\end{proof}

That is, the higher-order variance of $\widehat{\theta}_{SS}$ is strictly larger than that of $\widehat{\theta}_{J}$. As shown in Proposition \ref{higher_var_ss}, the split-sample bias correction provides an inconsistent estimate of the bias term $b_0$, which results in a bias correction that impacts the second-order term $A_2$ in the expansion, and hence affects the higher-order variance of the estimator.

%%%%%%%%%%%%%%%%%%%%%%%%%%%%%%%%%%%%%%%%%%%%%%%%%%%%%%%%%%%%%%%%%%%%%%%%%%%%%%%%%%%%%%%%%%%%

%%%%% SECTION - PANEL DATA

%%%%%%%%%%%%%%%%%%%%%%%%%%%%%%%%%%%%%%%%%%%%%%%%%%%%%%%%%%%%%%%%%%%%%%%%%%%%%%%%%%%%%%%%%%%%

\section{Bias correction for panel data\label{section-panel-model}}

We now extend some of these ideas to the panel data setting by comparing the jackknife and split-sample bias correction methods for a model with individual fixed effects. We begin with a description of the fixed effects maximum likelihood estimator.
Let $Z_{it}$, for $i=1,\dots,n$ and $t=1,\dots,T$, be a vector of observed
data. Denote $\theta$ a $p\times1$ parameter vector and $\alpha_{i}$ a scalar
unobserved individual effect.\footnote{As before, we will assume that $p=1$
for notational simplicity, but results should be expected to hold for any finite
$p>1$. The analysis in this section assumes that time effects are not present.} The data have density function $f(z|\theta,\alpha)$ with respect to
some measure, and so (treating the $\alpha_{i}$ as parameters to be estimated)
we may estimate $\theta$ via maximum likelihood. Assuming that the $Z_{it}$
are independent across both $i$ and $t$, the MLE solves
\[
\widehat{\theta}_{T}\equiv\arg\max_{\theta}\sum_{i=1}^{n}\sum_{t=1}^{T}\ln
f(Z_{it}|\theta,\widehat{\alpha}_{i}(\theta)),\quad\widehat{\alpha}_{i}(\theta
)\equiv\arg\max_{\alpha}\sum_{t=1}^{T}\ln f(Z_{it}|\theta,\alpha)
\]

In contrast to the higher-order bias discussed earlier for the cross-sectional model, the fixed effects panel data estimator suffers from the well-known incidental parameters problem (Neyman and Scott, 1948). For fixed $T$, the probability limit of the estimator $\theta_{T}\equiv
\operatorname*{plim}_{n\rightarrow\infty}\widehat{\theta}_{T}$ generally differs from
$\theta_{0}$.\footnote{$\theta_{T}$ is
given by $\operatorname*{argmax}_{\theta}\lim_{n\rightarrow\infty}\frac{1}%
{n}\sum_{i=1}^{n}E\left[  \sum_{t=1}^{T}\ln f(z_{it}|\theta,\widehat{\alpha}%
_{i}(\theta))\right]  $} Even when the number of time periods grow, so that $n/T\rightarrow\rho$ (which will be assumed in this
section), the estimator $\widehat{\theta}$ remains asymptotically biased, i.e.\
$\sqrt{nT}(\widehat{\theta}-\theta_{0})\Rightarrow N(\sqrt{\rho}\mathbf{B}%
,\Omega)$ for some bias term $\mathbf{B}$. The bias is of order $O(T^{-1})$,
and can be substantial if $T$ is not sufficiently large.

It is useful to think about $\theta_{0}$ and $\alpha_{i}$ as solutions to a
set of moment equations given by the score functions
\[
0=\sum_{i=1}^{n}E\left[  \frac{\partial}{\partial\theta}\ln f(z_{it}%
|\theta_{0},\alpha_{i})\right]  ,\quad0=E\left[  \frac{\partial}%
{\partial\alpha_{i}}\ln f(z_{it}|\theta_{0},\alpha_{i})\right]
\]
As earlier, we can expand these first order
conditions to produce asymptotic expansions. We may also consider other quantities of interest that can be
defined via some moment condition, for example an average effect parameter
$\mu_{0}$, defined as the solution to
\[
0=\mu_{0}-\frac{1}{n}\sum_{i=1}^{n}E\left[  m(z_{it},\theta_{0},\alpha
_{i})\right]
\]
where $m$ is some function of interest, for example, the partial derivative of a conditional expectation function. Stacking these moment conditions, we can define the common parameter to be
$(\theta,\mu)$, so that the results presented below will also apply to these types of parameters.

We use the following notation for this section. Let $u_{it}(\theta
,\alpha)\equiv\frac{\partial}{\partial\theta}\ln f(z_{it}|\theta,\alpha)$ and
$V_{it}(\theta,\alpha)\equiv\frac{\partial}{\partial\alpha_{i}}\ln
f(z_{it}|\theta,\alpha)$ be the score functions. When evaluating functions at
the true value of parameters, arguments will be dropped, e.g. $u_{it}%
=u_{it}(\theta_{0},\alpha_{i})$. Further, let $U_{it}(\theta,\alpha
)=u_{it}(\theta,\alpha)-\delta V_{it}(\theta,\alpha)$, for $\delta
=E[u_{it}V_{it}]/E[V_{it}^{2}]$, be the efficient score for $\theta$. All
expectations are taken with respect to the distribution for an individual $i$,
that is $E[h(z_{it})]=\int h(z)f(z|\theta,\alpha_{i})dz$. Also define $\mathcal{I}_n = \frac{1}{n}\sum_{i=1}^n E[U_{it}^2]$. As in the cross
sectional case, we denote partial derivatives of these functions with
superscripts, e.g. $\partial U_{it}/\partial\theta=U_{it}^{\theta}$.

\subsection{Higher-order comparison of jackknife estimators\label{jackknife-discussion}}

In this section we derive the higher-order properties of both the leave-one-out jackknife and split-sample jackknife estimators for the panel data model. 

It has be shown previously (e.g. Hahn and Kuersteiner (2002), Hahn and Newey (2004)) that the 
MLE $\widehat{\theta}$ has an expansion of the form
\begin{equation}
\sqrt{nT}(\widehat{\theta}-\theta_{0})= A_1 +\frac{\sqrt{n}}{\sqrt{T}}A_2+\frac{\sqrt{n}}{T}A_3+O_{p}(T^{-1}) \label{eq:mle_exp}%
\end{equation}
where the expansion terms $A_1,A_2,A_3$ are each $O_p(1)$.\footnote{In Supplementary Appendix IV, we provide an even
higher-order expansion than is available in Hahn and Newey (2004).} This expansion is similar in style to the asymptotic expansion for the cross-sectional model developed earlier, except that the asymptotic order of the leading terms in the expansion are in terms of $T^{-1/2}$ due to the presence of the individual fixed effects. Here the asymptotic bias is 
given by $\sqrt{\frac{n}{T}}\mathbf{B} = \sqrt{\frac{n}{T}} E[A_2]$.

The `leave-one-out' jackknife estimator is 
\[
\widehat{\theta}_{J}%
=T\widehat{\theta}-(T-1)\frac{1}{T}\sum_{t=1}^{T}\widehat{\theta}_{(t)},
\]
where $\widehat{\theta}_{(t)}$ is the estimator formed from the subsample that
excludes time period $t$.\footnote{Jackknife estimators that drop $k$ time periods, rather than just one, could also be used. However, averaging over all $\binom{T}{k}$ leave-$k$-out estimates would be computationally demanding, and so we do not pursue this idea here.}
Dhaene and Jochmans (2015) propose
the use of split-panel jackknifes that only make use of subpanels that contain  consecutive time periods. The split-sample jackknife estimator is
\[
\widehat{\theta}_{SS}=2\widehat{\theta}-\bar{\theta}_{SS},
\]
where $\bar{\theta}_{SS}=\frac{1}{2}(\widehat{\theta}_{1}+\widehat
{\theta}_{2})$, with $\widehat{\theta}_{1}$ being the estimate using observations from the first half of
time periods, and $\widehat{\theta}_{2}$ the estimate that uses the second half of
time periods. Other choices of split-sample jackknife are available; however, the results in
Dhaene and Jochmans (2015) show that non-overlapping sub-panels in general
have lower asymptotic variance, and that among the non-overlapping options,
splitting in two leads to the smallest inflation of higher-order bias. Hence,
in this paper we focus on this half sample version, and simply refer to it as
the split-sample jackknife.

Both the jackknife and split-sample corrections have no impact on the first-order term in (\ref{eq:mle_exp})  so that both  estimators have the same asymptotic variance, equal to $\lim_{n\rightarrow\infty}%
\operatorname{Var}(A_1)$ $=\lim_{n\rightarrow\infty}\mathcal{I}%
_{n}^{-1}$. To compute the higher-order variance for the estimators, we take the variance
of the first three expansions terms, retaining terms up to $O(T^{-1})$,
similar to what was done in Section \ref{sec:mle_crosssec}. The
following proposition establishes the higher-order variance expressions for the two 
estimators, from which we can conclude that the higher-order variance of the split-sample
estimator is larger than that of the jackknife.

\begin{proposition}
\label{thm:variance} Let the regularity conditions stated in Appendix \ref{sec:reg-conds-panel} hold. The higher-order variances of the jackknife and split-sample bias-corrected estimators are given by
\[
\operatorname{Var}(\widehat{\theta}_{J}) \approx \operatorname{Var}(A_1)+\frac{1}{T-1}\tilde{\upsilon},\quad \operatorname{Var}(\widehat{\theta}_{SS})\approx \operatorname{Var}(A_1)%
+\frac{2}{T}\tilde{\upsilon},
\]
where
\begin{align*}
\operatorname{Var}(A_1)  &  =\mathcal{I}_{n}^{-1},\\
\tilde{\upsilon}  &  =\mathcal{I}_{n}^{-2}\frac{1}{n}\sum_{i}
\frac{\frac{1}{2}E[U_{it}^{\alpha\alpha}]^{2}+2E[U_{it}^{\alpha\alpha}%
]E[V_{it}U_{it}^{\alpha}] 
+E[V_{it}^{2}]E[(U_{it}^{\alpha})^{2}]+E[V_{it}U_{it}^{\alpha}]^{2}}{E[V_{it}^{2}%
]^{2}}.
\end{align*}

\end{proposition}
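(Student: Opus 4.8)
The plan is to follow the same logic as the cross-sectional derivations in Propositions \ref{higher_var} and \ref{higher_var_ss}, but to account for the fact that the bias being removed is the $O(T^{-1})$ incidental-parameters bias, so the effective ``sample size'' for the bias correction is the number of time periods $T$ rather than the cross-sectional dimension. First I would take the expansion (\ref{eq:mle_exp}) of the full-sample MLE and derive the analogous expansions for the subsample estimators that enter each correction: the leave-one-out estimators $\widehat{\theta}_{(t)}$, built from $T-1$ periods, and the half-sample estimators $\widehat{\theta}_1,\widehat{\theta}_2$, each built from $T/2$ periods. Profiling out $\widehat{\alpha}_i(\theta)$ and expanding the resulting profile score, I would obtain explicit representations of $A_1$, $A_2$, $A_3$ in terms of the scores $U_{it},V_{it}$ and their derivatives $U_{it}^\alpha,U_{it}^{\alpha\alpha}$, identifying in particular the nonzero mean of $A_2$ as the per-individual incidental-parameters bias.

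Second, I would substitute these expansions into the definitions $\widehat{\theta}_J=T\widehat{\theta}-(T-1)\frac1T\sum_t\widehat{\theta}_{(t)}$ and $\widehat{\theta}_{SS}=2\widehat{\theta}-\bar{\theta}_{SS}$. For the jackknife, the leave-one-out estimates are strongly correlated with the full-sample estimate (each drops only one of $T$ periods), and the implicit bias estimate averages their per-period influence, so it is $\sqrt{n}$-consistent and asymptotically linear across the cross-section; I would show that the correction cancels the mean of $A_2$ and perturbs only the third-order term $A_3$, exactly as in the consistent-correction case of Proposition \ref{higher_var}. For the split-sample, the two half-sample estimates are, given independence over $t$, independent of one another, so the implicit bias estimate is inconsistent and $O_p(1)$; mirroring Proposition \ref{higher_var_ss}, I would write this estimate in terms of the half-sample sums $X_{(1)},X_{(2)},Y_{(1)},Y_{(2)}$ and show that it perturbs the second-order term $A_2$ rather than $A_3$.

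Third, I would compute the variance of the three-term expansion for each corrected estimator, retaining terms up to $O(T^{-1})$ as in (\ref{eq:higher-var}). The jackknife weights $T$ and $T-1$, combined with the covariance structure of the full and leave-one-out estimates, produce the coefficient $\frac{1}{T-1}$ on the common higher-order variance $\tilde{\upsilon}$; this is the time-series analogue of the $\frac{2}{n(n-1)}$ obtained for the leave-one-out jackknife in the intuitive example of Section \ref{intuitive-example}. For the split-sample, the independence of the two half-panels (in contrast to the tight correlation of the leave-one-out panels) inflates the variance of the inconsistent bias estimate, and because this estimate enters at the level of $A_2$, the contribution from each half adds, doubling the coefficient to $\frac{2}{T}$, following the cross-sectional computation of Proposition \ref{higher_var_ss}. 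A final bookkeeping step collects the surviving expectations, which involve $E[U_{it}^{\alpha\alpha}]$, $E[V_{it}U_{it}^\alpha]$, $E[V_{it}^2]$ and $E[(U_{it}^\alpha)^2]$, into the closed form for $\tilde{\upsilon}$ stated in the proposition.

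The main obstacle will be the first step: obtaining the profile-score expansion to high enough order that the incidental-parameters contributions to $A_2$ and $A_3$ are captured correctly, and verifying that the jackknife bias estimate is genuinely asymptotically linear across individuals while the split-sample estimate is only $O_p(1)$. The delicate point is the covariance bookkeeping that distinguishes $\frac{1}{T-1}$ from $\frac{2}{T}$: one must track exactly which diagonal and cross-sample covariances between $A_1$ and the corrected $A_2,A_3$ survive to order $T^{-1}$, since it is precisely the independence of the half-panels, versus the near-perfect correlation of the leave-one-out panels, that generates the factor of two.
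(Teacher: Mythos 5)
Your overall strategy is the same as the paper's: expand the full-sample and sub-sample estimators, apply each correction term by term to the expansion, compute the variance of the three leading terms retaining everything up to $O(T^{-1})$, verify that the covariances of $A_1$ with the corrected second- and third-order terms vanish at that order, and then substitute the Hahn--Newey kernels $k_{1}(x_{i,t})=\mathcal{I}_{n}^{-1}\big(\tfrac{E[U_{it}^{\alpha\alpha}]}{2E[V_{it}^{2}]}V_{it}+U_{it}^{\alpha}\big)$ and $k_{2}(x_{i,t})=V_{it}/E[V_{it}^{2}]$ to obtain $\tilde{\upsilon}$. The paper organizes the bookkeeping through explicit formulas for jackknifed and half-sample normalized V-statistics of each order, but the substance matches your plan.

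One claim in your second step is in tension with your third step and would, if taken literally, put the $\tfrac{1}{T-1}$ in the wrong place. You say the jackknife ``cancels the mean of $A_2$ and perturbs only the third-order term $A_3$.'' In the paper's proof both coefficients come from the variance of the \emph{modified second-order term}: the jackknife replaces the order-two V-statistic $\tfrac{1}{T}\sum_{t_1,t_2}k_1k_2$ by the U-statistic $\tfrac{1}{T-1}\sum_{t_1\neq t_2}k_1k_2$, whose scaled variance is $\tfrac{T}{T-1}$ times the base quantity, while the split-sample version retains only cross-half products and has variance exactly twice the base quantity; multiplying by the $1/T$ prefactor gives $\tfrac{1}{T-1}$ and $\tfrac{2}{T}$. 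The covariance of $A_1$ with either corrected second-order term is exactly zero, and with the corrected third-order terms it is $O(T^{-1})$, so neither contributes. If you instead attributed the jackknife's effect entirely to $A_3$ and took the variance of the merely demeaned $A_2$, you would obtain the coefficient $\tfrac{T-1}{T^{2}}$ rather than $\tfrac{1}{T-1}$ --- equivalent to $O(T^{-2})$ but not the stated (and, in the Neyman--Scott example, exact) form. Relatedly, calling the split-sample bias estimate ``inconsistent and $O_p(1)$'' imports the cross-sectional picture; in the panel it is consistent (an average over $n$ individuals of inconsistent individual-level estimates) but converges only at rate $T^{-1/2}$, which is precisely why its effect survives in the second-order term. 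Neither point is fatal --- a careful execution of your third step would surface both --- but they need to be corrected for the coefficients to come out as stated.
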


\begin{proof}
See Supplementary Appendix I \ref{proof-for-panel-models}.
\end{proof}

\subsection{Accuracy in estimating the bias}

In Section \ref{sec:SS_var}, it was demonstrated that while the jackknife uses a consistent estimate of the bias term, the split-sample jackknife uses an unbiased, but inconsistent estimate. A similar situation arises in the panel model. For the `leave-one-out' jackknife, the implicit bias estimate is equal to $\frac{1}{T}%
\widehat{b}_{J}=(T-1)\left(  \frac{1}{T}\sum_{t=1}^{T}\widehat{\theta}_{(t)}%
-\widehat{\theta}\right) $, while the split-sample jackknife uses the bias
estimate $\frac{1}{T}\widehat{b}_{SS}=(\bar{\theta}_{SS}-\widehat{\theta})$.
The following
proposition establishes the accuracy of $\widehat{b}_{J}$ and $\widehat{b}_{SS}$
as estimators for $\mathbf{B}=\lim_{n\rightarrow\infty}E[A_2]$.

\begin{proposition}
\label{Thm:bias_est} Let the regularity conditions stated in Appendix \ref{sec:reg-conds-panel} hold. Let $\frac{1}{T}\widehat{b}_{J}=(T-1)(\frac{1}{T}\sum_{t=1}^{T}%
\widehat{\theta}_{(t)}-\widehat{\theta})$ and $\frac{1}{T}\widehat{b}_{SS}=(\bar{\theta}%
_{SS}-\widehat{\theta})$ be the jackknife and split-sample estimators for the
bias term $\mathbf{B}$. Then, 
\begin{align*}
\sqrt{nT}\frac{1}{T}(\widehat{b}_{J} -\mathbf{B})=O_{p}(T^{-1}) \\
\sqrt{nT}\frac{1}{T}(\widehat{b}_{SS}-\mathbf{B})=O_{p}(T^{-1/2}).
\end{align*}
\end{proposition}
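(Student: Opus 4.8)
The plan is to insert the stochastic expansion (\ref{eq:mle_exp}) --- applied on the full panel and on each subpanel --- into the definitions $\tfrac1T\widehat b_J=(T-1)(\bar\theta-\widehat\theta)$ and $\tfrac1T\widehat b_{SS}=\bar\theta_{SS}-\widehat\theta$, and to collect the resulting terms by order. Since $\sqrt{n\tau}(\widehat\theta(\tau)-\theta_0)=A_1+\tau^{-1/2}A_2+\tau^{-1}A_3+O_p(\tau^{-1})$ holds for a panel with any number of periods $\tau$, I would expand each leave-one-out estimator $\widehat\theta_{(t)}$ (with $\tau=T-1$ and the score sums restricted to periods $\neq t$) and each half-panel estimator $\widehat\theta_1,\widehat\theta_2$ (with $\tau=T/2$). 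The two features I would exploit are that the first-order term $A_1$ is an additive average of efficient scores $\mathcal I_n^{-1}U_{it}$ over the retained periods, whereas the second-order term $A_2$ is a sum over individuals of products of within-individual time averages --- so that $A_2$ carries the $O(T^{-1})$ incidental-parameter bias in its mean $\mathbf B$ together with an $O_p(1)$ mean-zero fluctuation $\widetilde A_2$.

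First I would dispose of the deterministic and first-order pieces. Writing $\mathrm{Bias}(\tau)=E[\widehat\theta(\tau)]-\theta_0=\mathbf B/\tau+O(\tau^{-2})$, direct cancellation gives $E[\tfrac1T\widehat b_J]=(T-1)E[\bar\theta-\widehat\theta]=\mathbf B/T+O(T^{-2})$ and $E[\tfrac1T\widehat b_{SS}]=\mathrm{Bias}(T/2)-\mathrm{Bias}(T)=\mathbf B/T+O(T^{-2})$, so in both cases the deterministic error of $\tfrac1T(\widehat b-\mathbf B)$ is $O(T^{-2})$ and contributes only $O(T^{-1})$ after multiplication by $\sqrt{nT}$ (using $n\asymp T$). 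Next I would show the first-order stochastic terms cancel exactly in both bias estimates: because $A_1$ is additive in the per-period scores, $\tfrac1T\sum_t$ of the leave-one-out first-order terms reproduces the full-panel term, and the two half-panel score averages sum to the full-panel average; hence neither $\widehat b_J$ nor $\widehat b_{SS}$ inherits the $O_p((nT)^{-1/2})$ first-order noise, and the leading fluctuation is governed entirely by $\widetilde A_2$.

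The decisive step is to compare how $\widetilde A_2$ responds to the two subsampling schemes. For the split-sample, using the bilinear (product-of-time-averages) form of $A_2$ and writing $A_i^{(j)},B_i^{(j)}$ for the half-panel sums of the two score factors, the surviving second-order fluctuation of $\bar\theta_{SS}-\widehat\theta$ collapses (schematically) to $\tfrac1{\sqrt n T}\sum_i\big[\tfrac1T(A_i^{(1)}-A_i^{(2)})(B_i^{(1)}-B_i^{(2)})-E(\cdot)\big]$; since the two halves use disjoint, independent time periods this cross-half product is genuinely $O_p(1)$ and does not cancel, giving $\tfrac1T\widehat b_{SS}-\mathbf B/T=O_p((\sqrt n T)^{-1})$ and hence $\sqrt{nT}\,\tfrac1T(\widehat b_{SS}-\mathbf B)=O_p(T^{-1/2})$. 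For the jackknife, each leave-one-out term differs from the full term only by deletion of one period out of $T$; expanding $\xi_i(S_{(t)})=\tfrac1{T-1}(A_i-a_{it})(B_i-b_{it})$ with $A_i=\sum_t a_{it}$, $B_i=\sum_t b_{it}$ the full-panel sums and summing over $t$, the combination $\sum_t\widetilde A_2^{(t)}-(T-1)\widetilde A_2$ telescopes so that the $A_iB_i$ pieces cancel down to $O(1/T)$ and only the diagonal average $\tfrac1{T-1}\sum_t(a_{it}b_{it}-E[a_{it}b_{it}])=O_p(T^{-1/2})$ survives; carrying the scaling through yields $\tfrac1T\widehat b_J-\mathbf B/T=O_p((\sqrt n\,T^{3/2})^{-1})$ and therefore $\sqrt{nT}\,\tfrac1T(\widehat b_J-\mathbf B)=O_p(T^{-1})$. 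This is exactly the statement that the split-sample error sits at the $A_2$ (order $T^{-1/2}$) level while the jackknife error sits at the $A_3$ (order $T^{-1}$) level, which is why only the former inflates the higher-order variance of Proposition \ref{thm:variance}.

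I expect the main obstacle to be this last comparison --- proving rigorously that the leave-one-out averaging buys the extra $T^{-1/2}$ that the split-sample does not. This requires tracking the product-of-time-averages structure of $A_2$ (rather than a single linear average), identifying precisely which diagonal terms survive each cancellation, and then controlling them with the moment bounds in the regularity conditions of Appendix \ref{sec:reg-conds-panel} together with a central limit theorem over the independent individuals. I would also need to verify that the third- and higher-order expansion terms $A_3,\dots$, after the same first- and second-order cancellations, contribute at strictly smaller order than the stated rates. The deterministic bias removal and the first-order cancellations are by comparison routine, so the work is concentrated in bounding the mean-zero second-order fluctuation.
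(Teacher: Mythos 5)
Your proposal is correct and follows essentially the same route as the paper: insert the V-statistic expansions of the full-panel and subpanel estimators, observe that the first-order terms cancel exactly, and then show that the jackknife reduces the second-order fluctuation to an off-diagonal/diagonal decomposition of order $O_p(T^{-1})$ after scaling, whereas the split-sample leaves a non-cancelling cross-half product $(A_i^{(1)}-A_i^{(2)})(B_i^{(1)}-B_i^{(2)})$ whose variance calculation yields $O_p(T^{-1/2})$ — which is precisely the computation in the paper's proof (there written with the indicator $\tilde{1}_{t_1,t_2}$). The rate bookkeeping in your final scalings also matches the paper's.
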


\begin{proof}
See Supplementary Appendix  I \ref{proof-for-panel-models}.
\end{proof}

Similarly to the cross-sectional analysis, the jackknife estimates the bias term at a faster rate than the split-sample estimator.
In contrast, in the panel setting the split-sample bias estimate is in fact consistent; it is an average over $n$ unbiased, but inconsistent, estimates of individual-level bias terms.
Nonetheless, Proposition \ref{Thm:bias_est} shows that the jackknife bias
correction affects the third-order, $O_{p}(T^{-1})$, part of the expansion,
while the split-sample bias correction appears as a second-order,
$O_{p}(T^{-1/2})$, term. This implies that the jackknife bias estimate will
only impact the higher-order variance through its covariance with the
first-order term $A_1$, i.e.\ through the term $\operatorname{Cov}\left(
\sqrt{n}A_{1},\sqrt{nT}(\widehat{b}_{J}-\mathbf{B})\right)$. In contrast,
the split-sample bias estimate appears in the higher-order variance both
through its covariance with $A_1$, as well as through its own variance, $\operatorname{Var}\left(\sqrt{n}(\widehat{b}_{SS}-\mathbf{B})\right)$.

It should be noted that the results in Propositions \ref{thm:variance} and \ref{Thm:bias_est} are derived under i.i.d sampling over both individuals and time. In this setting, the results mirror those for the cross sectional analysis. In settings where serial correlation exists, the leave-one-out style jackknife does not provide a valid estimate of the bias, while the  split-sample estimator continues to produce a consistent estimate of the bias, albeit at a slower rate. This is an important advantage of  the method in Dhaene and Jochmans (2015). Although a general result on higher-order properties under serial
correlation is not yet available, it may be reasonable to speculate that the
convergence rate of the estimator of the bias will continue to play an important role. This would suggest that analytical bias corrections involving parametric estimators, when available, are likely to be superior to nonparametric estimators that require estimates of long-run variances.

\subsection{Example 2: panel data\label{sec:example}}

A simple example may help to highlight the results. Suppose that $z_{it}\sim
N\left(  \alpha_{i},\theta\right)  $ are independent across $i$ and $t$. This model was studied by Neyman and Scott (1948). The MLE is given by $\widehat{\theta}=\frac{1}{nT}\sum_{i=1}^{n}\sum_{t=1}^{T}(z_{it}%
-\bar{z}_{i})^{2}$, and a standard calculation gives that $E[\widehat{\theta
}]=\frac{T-1}{T}\theta$ so that the MLE has a bias of $\frac{1}{T}%
\mathbf{B}=-\frac{1}{T}\theta$. It can be shown that the jackknife
bias-corrected estimator has the form
\[
\tilde{\theta}_{J}=\frac{1}{n(T-1)}\sum_{i=1}^{n}\sum_{t=1}^{T}(z_{it}-\bar
{z}_{i})^{2}%
\]
while the split-sample estimator is
\[
\tilde{\theta}_{SS}=2\frac{1}{nT}\sum_{i=1}^{n}\sum_{t=1}^{T}(z_{it}-\bar
{z}_{i})^{2}-\frac{1}{2}\left(  \frac{1}{nM}\sum_{i=1}^{n}\sum_{t=1}%
^{M}(z_{it}-\bar{z}_{i,1})^{2}+\frac{1}{nM}\sum_{i=1}^{n}\sum_{t=M+1}%
^{T}(z_{it}-\bar{z}_{i,2})^{2}\right)
\]
where $T=2M$, and $\bar{z}_{i,1}$ and $\bar{z}_{i,2}$ are the sample means in
the first and second halves of the sample time period. In this simple model,
the formula given in Proposition \ref{thm:variance} gives $V_{J}%
=\frac{2T}{T-1}\theta^{2}$ and $V_{SS}=\frac{2T+4}{T}\theta^{2}$, which we
can easily confirm are also the exact finite sample variances of
the two estimators in this case. Considering the estimation of the bias
itself, we can see that the jackknife bias-correction estimates the bias as
$\frac{1}{T}\widehat{b}_{J}=-\frac{1}{T}\frac{1}{n(T-1)}\sum_{i=1}^{n}%
\sum_{t=1}^{T}(z_{it}-\bar{z}_{i})^{2}$, while the split-sample correction
estimates the bias using $\frac{1}{T}\widehat{b}_{SS}=-\frac{1}{2n}\sum
_{i=1}^{n}  \left(  (\bar{z}_{i,1}-\bar{z}_{i})^{2}+(\bar{z}_{i,2}%
-\bar{z}_{i})^{2}\right)   $. Both are unbiased estimators of the bias
term in this case, i.e.\ $E[\widehat{b}_{J}]=E[\widehat{b}_{SS}]=-\theta$. It
is straightforward to show $\operatorname{Var}\left(  \sqrt{\frac{n}{T}}%
(\widehat{b}_{J}-\mathbf{B})\right)  =\frac{2\theta^{2}}{T(T-1)}$, whereas
$\operatorname{Var}\left(  \sqrt{\frac{n}{T}}(\widehat{b}_{SS}-\mathbf{B}%
)\right)  =\frac{2\theta^{2}}{T}$, so that the variance of the split-sample bias
estimate is larger by a factor $T$, as predicted by Proposition \ref{Thm:bias_est}.

%%%%%%%%%%%%%%%%%%%%%%%%%%%%%%%%%%%%%%%%%%%%%%%%%%%%%%%%%%%%%%%%%%%%%%%%%%%%%%%%%%%%%%%%%%%%

%%%%% SECTION - TIME SERIES

%%%%%%%%%%%%%%%%%%%%%%%%%%%%%%%%%%%%%%%%%%%%%%%%%%%%%%%%%%%%%%%%%%%%%%%%%%%%%%%%%%%%%%%%%%%%

\subsection{Extension to time series data}%

In Proposition \ref{higher_var}, we noted that the higher-order variance of the bias
corrected MLE can be expressed in terms of $\Upsilon$ in (\ref{eq:higher-var-expr}). This
characterization is useful because it is intuitively positive, which is not
obvious from the definition in (\ref{eq:higher-var}). A natural question is
whether an analytical bias correction in a general time series environment
would lead to a higher-order variance of the form $E\left[  A_{1}^{2}\right]
+\frac{1}{n}\left(  \Upsilon+o\left(  n^{-1}\right)  \right)  $, where
$\Upsilon$ is appropriately replaced by the long-run
counterpart of%
\[
E\left[  \left(  \frac{1}{\sqrt{T}}\sum_{t=1}^{T}X_{t}\right)  ^{2}\right]
E\left[  \left(  \frac{1}{\sqrt{T}}\sum_{t=1}^{T}Y_{t}\right)  ^{2}\right]
+\left(  E\left[  \left(  \frac{1}{\sqrt{T}}\sum_{t=1}^{T}X_{t}\right)
\left(  \frac{1}{\sqrt{T}}\sum_{t=1}^{T}Y_{t}\right)  \right]  \right)
^{2}.
\]
Although we are unable to answer the question in its general form, it can be
shown to be valid for a strictly stationary AR(1) model with normally
distributed errors. To be more precise, we go through the formal expansion of
the MLE of the AR(1) model $y_{t}=\theta y_{t-1}+\varepsilon_{t}$, where
$\varepsilon_{t}\sim N\left(  0,\sigma^{2}\right)  $ and $y_{0}\sim N\left(
0,\left.  \sigma^{2}\right/  \left(  1-\theta^{2}\right)  \right)  $.\footnote{The result is available in Supplementary Appendix III.} The
higher-order bias based on the asymptotic expansion for the MLE
yields
$\lim_{T\rightarrow\infty}E\left[  A_{2}\right]  =-2\theta$, so the analytical
bias-corrected estimator takes the form $\widehat{\theta}+\left.
2\widehat{\theta}\right/  T$. This means that the higher-order variance of
the bias-corrected estimator is $\operatorname{Var}\left(  A_{1}\right)
+\frac{1}{T}\operatorname{Var}\left(  A_{2}\right)  +\frac{2}{\sqrt{T}%
}E\left[  A_{1}A_{2}\right]  +\frac{2}{T}E\left[  A_{1}\left(  A_{3}%
+2A_{1}\right)  \right]  $, which is not obviously positive. We show that this
higher-order variance is equal to $\operatorname{Var}\left(  A_{1}\right)
+\frac{1}{T}\left(  \operatorname{Var}\left(  \mathcal{X}\right)
\operatorname{Var}\left(  \mathcal{Y}\right)  +\left(  E\left[
\mathcal{X}\mathcal{Y}\right]  \right)  ^{2}\right)  $ for
$\mathcal{X}=\left(  1-\theta^{2}\right)  \frac{1}{\sqrt{T}}\sum_{t=1}%
^{T}\frac{y_{t-1}\varepsilon_{t}}{\sigma^{2}}$ and $\mathcal{Y}=-\left(
1-\theta^{2}\right)  \frac{1}{\sqrt{T}}\sum_{t=1}^{T}\left(  \frac{y_{t-1}%
^{2}}{\sigma^{2}}-\frac{1}{1-\theta^{2}}\right)$. Analogously to the 
result in Proposition \ref{higher_var_ss}, we show that the split-sample 
jackknife bias-corrected estimator for the AR(1) model has higher-order variance 
that is given by $\operatorname{Var}\left(  A_{1}\right)
+\frac{2}{T}\left(  \operatorname{Var}\left(  \mathcal{X}\right)
\operatorname{Var}\left(  \mathcal{Y}\right)  +\left(  E\left[
\mathcal{X}\mathcal{Y}\right]  \right)  ^{2}\right)$, and so 
the higher-order part of its variance is larger than that of the 
analytical bias-corrected estimator by a factor of two. See 
Propositions \ref{prop-AR1-higher-var} and \ref{prop-AR1-higher-var-SS} in Supplementary Appendix III.
These results are
admittedly specific to the AR(1) model. We conjecture that they carry over to
more general parametric time series models as long as the bias has a closed
form parametric expression that can be estimated at parametic rates. On the
other hand, estimators with non-parametrically estimated bias corrections that
typically involve estimated long run variances may not share the same
efficiency properties. We conjecture the same is true for non-parametric
block bootstrap procedures and other subsampling techniques used to estimate
the higher-order bias.%

%%%%%%%%%%%%%%%%%%%%%%%%%%%%%%%%%%%%%%%%%%%%%%%%%%%%%%%%%%%%%%%%%%%%%%%%%%%%%%%%%%%%%%%%%%%%

%%%%% SECTION - MONTE CARLO

%%%%%%%%%%%%%%%%%%%%%%%%%%%%%%%%%%%%%%%%%%%%%%%%%%%%%%%%%%%%%%%%%%%%%%%%%%%%%%%%%%%%%%%%%%%%

\section{Monte Carlo analysis\label{section-MC}}

To highlight the relevance of the results in a more practical setting,
we conduct two Monte Carlo exercises. In the cross-sectional setting we estimate a marginal treatment effect (MTE) model as in Heckman and Vytlacil (2005). The simulation design follows that used in Catteneo et.\ al.\ (2019) and is a simplified model with no covariates. The treatment $T_i$ is assigned according to $T_i = 1\{ P(Z_i) \geq V_i \}$, where $P(Z_i)$ is a propensity score (which is a function of observed instrumental variables $Z_i$), and $V_i \sim \text{Uniform}[0,1]$ is an unobservable shock that is correlated with potential outcomes and generates selection. Potential outcomes under treated and control states are generated according to
\begin{align*}
    Y_i(0) &= U_{0i}, \quad U_{0i} \vert Z_i, V_i \sim \text{Uniform}[-1,1],\\
    Y_i(1) &= 0.5 + U_{1i}, \quad U_{1i} \vert Z_i, V_i  \sim \text{Uniform}[-0.5,1.5-2V_i].
\end{align*}
The observed outcome variable is given by $Y_i = T_i Y_i(1) + (1-T_i)Y_i(0)$. We create a set of 19 potential instruments, $Z_{j,i}\sim \text{Uniform}[0,1]$ for $j=1,\dots,19$, to be used in estimation, although the true propensity score only depends on the first four of these (in addition to the constant) $P(Z_i)  = 0.1 + Z_{1,i}+ Z_{2,i}+ Z_{3,i}+ Z_{4,i}$.

The MTE function is defined as $MTE(v) = E[Y_1 - Y_0 \vert V=v]$, and measures the average treatment effect for individuals with a given level of unobserved resistance to treatment $V_i=v$. Many objects of interest can be represented as weighted averages of the MTE function, although for the purposes of this simulation we follow Catteneo et.\ al.\ (2019) and focus on $\theta=MTE(0.5)$. The MTE function can be identified as $MTE(v)=\partial E[Y_i \vert P(Z_i)=p]/\partial p \vert_{p=v}$.

To estimate the marginal treatment effect parameter, the propensity score $P(z)$ is first estimated via regression of the treatment dummy variable $T_i$ on $Z_i(k)=(1, Z_{1,i},\dots,Z_{k-1,i})$, where $k$ ranges from 5 to 20. The second step then regresses the outcome $Y_i$ on a quadratic in the propensity score, $Y_i = \beta_1 + \beta_2\widehat{P}(z) + \beta_3\widehat{P}(z)^2$. The scalar parameter of interest $\theta$ is the derivative of this function at $p=0.5$, i.e. $\widehat{\theta} = \widehat{\beta}_2+\widehat{\beta}_3$.

\begin{table}[!h]
\caption{Simulation of marginal treatment effect estimates}%
\label{tab:sims_mte}
\centering \small
\begin{threeparttable}

\begin{tabular}{cccccccc}
\hline 
 &  & \multicolumn{3}{c}{Conventional} & \multicolumn{3}{c}{Bootstrap} \tabularnewline
n & k & Bias & SD & RMSE & Bias & SD & RMSE \tabularnewline
\hline 
1000 & 5  & 0.441 & 4.922 & 4.941 & 0.099 & 5.135 & 5.135 \tabularnewline
     & 10 & 1.059 & 4.712 & 4.828 & 0.184 & 5.381 & 5.383 \tabularnewline
     & 15 & 1.570 & 4.460 & 4.727 & 0.326 & 5.447 & 5.456 \tabularnewline
     & 20 & 1.952 & 4.231 & 4.659 & 0.425 & 5.435 & 5.450 \tabularnewline
\hline 
2000 & 5  & 0.317 & 4.730 & 4.740 & 0.074 & 4.830 & 4.830 \tabularnewline
     & 10 & 0.798 & 4.636 & 4.703 & 0.122 & 4.964 & 4.964 \tabularnewline
     & 15 & 1.226 & 4.505 & 4.668 & 0.191 & 5.021 & 5.024 \tabularnewline
     & 20 & 1.604 & 4.421 & 4.702 & 0.264 & 5.105 & 5.110 \tabularnewline
\hline 
3000 & 5  & 0.155 & 4.756 & 4.757 & -0.047 & 4.822 & 4.821 \tabularnewline
     & 10 & 0.582 & 4.666 & 4.701 & 0.008 & 4.881 & 4.880 \tabularnewline
     & 15 & 0.971 & 4.574 & 4.675 & 0.069 & 4.924 & 4.923 \tabularnewline
     & 20 & 1.300 & 4.517 & 4.700 & 0.103 & 4.995 & 4.994 \tabularnewline
\hline 
&  & \multicolumn{3}{c}{Jackknife} & \multicolumn{3}{c}{Split-Sample} \tabularnewline
n & k & Bias & SD & RMSE & Bias & SD & RMSE \tabularnewline
\hline 
1000 & 5  & 0.105 & 5.146 & 5.146 & 0.092 & 5.426 & 5.426 \tabularnewline
     & 10 & 0.254 & 5.320 & 5.325 & 0.243 & 5.683 & 5.686 \tabularnewline
     & 15 & 0.511 & 5.274 & 5.297 & 0.459 & 5.611 & 5.628 \tabularnewline
     & 20 & 0.713 & 5.167 & 5.215 & 0.716 & 5.563 & 5.607 \tabularnewline
\hline 
2000 & 5  & 0.075 & 4.840 & 4.839 & 0.087 & 5.012 & 5.012 \tabularnewline
     & 10 & 0.154 & 4.953 & 4.954 & 0.175 & 5.185 & 5.187 \tabularnewline
     & 15 & 0.275 & 4.972 & 4.979 & 0.260 & 5.204 & 5.210 \tabularnewline
     & 20 & 0.423 & 5.014 & 5.031 & 0.395 & 5.267 & 5.281 \tabularnewline
\hline 
3000 & 5  & -0.044 & 4.827 & 4.826 & -0.068 & 4.924 & 4.923 \tabularnewline
     & 10 & 0.025 & 4.879 & 4.877 & -0.024 & 5.006 & 5.005 \tabularnewline
     & 15 & 0.122 & 4.899 & 4.899 & 0.068 & 5.093 & 5.092 \tabularnewline
     & 20 & 0.202 & 4.951 & 4.953 & 0.243 & 5.180 & 5.184 \tabularnewline
\hline 
\end{tabular}

\begin{tablenotes} \footnotesize
\item[*] Results of estimators over 2000 simulations. \emph{Conventional} denotes the standard two-step estimator described in the text; \emph{Jackknife} denotes the leave-one-out jackknife bias-corrected estimator; \emph{Bootstrap} denotes the bootstrap bias-corrected estimator based on $n/2$ bootstrap draws. 
\end{tablenotes}
\end{threeparttable}
\end{table}

We investigate the performance of the bootstrap, jackknife, and split-sample bias corrections. As predicted by the theory, the standard deviation of the bootstrap and jackknife bias-corrected estimators are very similar across the different simulation settings. This is particularly true as the sample size grows large; for $n=3000$ the jackknife and bootstrap bias corrections are very close in terms of standard deviation. The split-sample estimator has larger standard deviation than the other two bias corrections, as expected given its larger higher-order variance. As the sample size grows, the difference decreases; this is expected given that the estimators all have the same asymptotic variance. 

As a panel data example, we estimate a probit model with strictly exogenous covariates and individual fixed effects.
\begin{align*}
y_{it}  &  = 1\{\theta_{0}'x_{it}+\alpha_{i}+\varepsilon_{it}>0\},\quad
\varepsilon_{it}\sim N(0,1)
\end{align*}
The simulation is calibrated to the female labor force participation application of Fernandez-Val (2009), and is the same as that used in Fernandez-Val and Weidner (2018). Here the outcome is an indicator for participation in the labor force, and the covariates include three measures of fertility, the number of children aged 0-2, 3-5, and 6-17 years, as well as the log of husband's income, and a quadratic in age. We focus on the coefficients on the three fertility variables. Below we report the results of simulations drawn from a sample of $n=500$ individuals and $T=\{4,8\}$ time periods.\footnote{Fernandez-Val and Weidner (2018) report results using the $n=664$ and $T=9$, which matches the sample size in the PSID data set, and find similar results.}

Table \ref{tab:sims_beta} reports the results for the biased MLE fixed effects estimator as well as three bias corrections: the analytical correction, leave-one-out jackknife, and the split-sample jackknife. We report the bias, standard deviation and root mean-squared error as a percentage of the true coefficient values, and the rejection rate for a test with 5\% significance. The MLE fixed effects estimator has large bias and rejection rates as large as 54\%. As is evident from the theory, the size of the bias is decreasing with the number of time periods. Both the jackknife and analytical bias corrections lead to significant reductions in the bias with no cost in precision; in fact, both bias corrected estimators have smaller standard deviations than the MLE. In contrast, while the split-sample jackknife also reduces bias (although to a lesser degree than the other corrections), it has substantially larger variance and mean-squared error. It is evident that, even for $T=8$, the impact of higher-order differences in the bias corrections remains important for the finite sample properties of the estimator.

\begin{table}[t]
\caption{Simulation of probit model with individual fixed effects}%
\label{tab:sims_beta}
\centering \small
\begin{threeparttable}
\begin{tabularx}{0.93\textwidth}{@{}
                       l l S[table-format=4.1,table-column-width=1cm]
                         S[table-format=3.1,table-column-width=1cm]
                         S[table-format=3.1,table-column-width=1cm]
                         S[table-format=1.2,table-column-width=1cm] l 
                         S[table-format=3.1,table-column-width=1cm]
                         S[table-format=3.1,table-column-width=1cm]
                         S[table-format=3.1,table-column-width=1cm]
                         S[table-format=1.2,table-column-width=1cm] @{}}
\toprule
&  & \multicolumn{4}{c}{MLE} & & \multicolumn{4}{c}{Analytical}\\
& & {Bias} & {SD} & {RMSE} & {Rej 5\%} & & {Bias} & {SD} & {RMSE} & {Rej 5\%} \\ 
\hline
T=4 & Ages 0-2 & -41.9 & 24.7 & 48.6 & 0.54 & & -8.2 & 19.0 & 20.7 & 0.06\\
& Ages 3-5 & -42.4 & 47.8 & 63.9 & 0.24 & & -8.2 & 37.4 & 38.3 & 0.05\\
& Ages 6-17 & -42.7 & 132.1 & 138.8 & 0.12 & & -1.1 & 102.9 & 102.9 & 0.03\\ \\[-0.8em]

T=8 & Ages 0-2 & -16.9 & 11.2 & 20.3 & 0.36 & & -3.9 & 10.0 & 10.7 & 0.05\\
& Ages 3-5 & -16.8 & 18.6 & 25.1 & 0.18 & & -3.7 & 16.6 & 17.0 & 0.05\\
& Ages 6-17 & -18.6 & 50.7 & 54.0 & 0.09 & & -4.8 & 45.2 & 45.5 & 0.05\\
\midrule
&  & \multicolumn{4}{c}{Jackknife} & & \multicolumn{4}{c}{Split-sample}\\
&  & {Bias} & {SD} & {RMSE} & {Rej 5\%} & & {Bias} & {SD} & {RMSE} & {Rej 5\%} \\ 
\hline
T=4 & Ages 0-2 & 21.2 & 16.7 & 27.0 & 0.14 & & 37.7 & 49.5 & 62.2 & 0.53\\
& Ages 3-5 & 20.9 & 30.6 & 37.1 & 0.03 & & 35.2 & 99.5 & 105.5 & 0.46\\
& Ages 6-17 & 22.4 & 88.8 & 91.6 & 0.02 & & 31.2 & 303.7 & 305.3 & 0.44\\  \\[-0.8em]
T=8 & Ages 0-2 & 4.3 & 9.3 & 10.2 & 0.04 & & 7.1 & 16.3 & 17.8 & 0.25\\
& Ages 3-5 & 4.4 & 15.6 & 16.2 & 0.04 & & 7.1 & 30.6 & 31.4 & 0.27\\
& Ages 6-17 & 3.1 & 42.2 & 42.3 & 0.03 & & 8.6 & 82.1 & 82.5 & 0.28\\
\bottomrule
\end{tabularx}

\begin{tablenotes} \footnotesize
\item[*] Results of estimators over 1000 simulations. Bias, SD, and RMSE are percentages of the true parameter values. 
\end{tablenotes}
\end{threeparttable}
\end{table}

\section{Summary\label{section-summary}}

We show that the choice of bias correction method does not affect the higher-order
variance of any parametric estimator that is semiparametrically efficient, as long as the bias
estimator is asymptotically linear, i.e. asymptotically equivalent to a sample average. We give a formal expansion
showing this property in a general semiparametric model. We also prove that the bootstrap, jackknife and a version of the analytical bias estimates are asymptotically linear, when the estimator of the parameters of
interest has a standard form of stochastic expansion (which is known to exist for a large class
of models). The result implies that a researcher
may choose a bias correction for an efficient estimator based on computational convenience, or
some other criteria, without affecting its higher-order efficiency.  

We have verified this result using derivations of the asymptotic expansion and higher-order variance for maximum likelihood
estimation of a parametric model, using a bootstrap, jackknife, and three forms of analytical bias corrections. Furthermore, we found that the third-order stochastic expansion of the bootstrap, jackknife, and one type of analytical bias-corrected MLE are identical, and hence have an even stronger higher-order equivalence property.

These results show that the higher-order efficiency of bias-corrected efficient estimators does
not depend on the form of the bias correction, as long as the estimate of the bias term is asymptotically linear (and hence $\sqrt{n}$-consistent). Thus, in practice one might use
whatever bias correction method is most convenient. An important caveat
is that the split-sample jackknife estimator does not estimate the bias at the $\sqrt{n}$-rate, and so is not asymptotically linear, and we show the resultant higher-order variance to be
strictly larger in an i.i.d.\ setting, suggesting the importance of the accuracy in estimating the bias.

We generalized the result to the analysis of a panel data model with fixed effects, and established that
the split-sample bias-corrected estimator has larger higher-order variance
than the jackknife estimator, confirming the importance of the accuracy in
estimation of the bias even in panel settings. In non-i.i.d.\ settings, the
standard jackknife cannot be used. Comparison of the split-sample correction
with alternatives, such as the analytical correction given in Hahn and
Kuersteiner (2002, 2011), is a topic that we leave for future research.

\begin{appendices}
    
\section{Proof of Theorem 2} \label{app:main}

We assume that $\widehat{\theta}$ is an estimator with a stochastic expansion
\begin{align}\label{eq:expansion}
\sqrt{n}(\widehat{\theta}-\theta_{0}) & =A_{1}+\frac{1}{\sqrt{n}}A_{2}+\frac{1}{n}A_{3}\\
 & \quad+\frac{1}{n^{3/2}}A_{4}+\frac{1}{n^{2}}A_{5}+o_{p}(n^{-2})
\end{align}
where $n^{-k/2}A_{k}$ is a $k$-th order V-statistic, i.e.
\begin{equation}
\frac{1}{n^{k/2}}A_{k}=\big(\frac{1}{n}\sum_{i=1}^{n}g_{k,1}(z_{i})\big)\cdots\big(\frac{1}{n}\sum_{i=1}^{n}g_{k,k}(z_{i})\big)\label{eq:V-stat}
\end{equation}
and $g_{k,j}$ are functions of the data, evaluated at $\theta_{0}$
with expectation zero, i.e $E[g_{j,k}(z_{i})]=0$.

The following lemma derives the first-order expansions of the jackknife, bootstrap and (sample-average) analytical bias estimators, from which asymptotic linearity follows.

\begin{lemma}
Define $b_{0}=E[g_{2,1}(z_{i})g_{2,2}(z_{i})]$ as the higher-order
bias of $\widehat{\theta}$. Then, the jackknife bias estimate $\widehat{b}_{J}$
satisfies
\begin{align*}
\sqrt{n}(\widehat{b}_{J}-b_{0}) & =\frac{1}{\sqrt{n}}\sum_{i}\big(g_{2,1}(z_{i})g_{2,2}(z_{i})-E[g_{2,1}(z_{i})g_{2,2}(z_{i})]\big)\\
 & \quad+E[g_{3,1}(z_{i})g_{3,2}(z_{i})]\frac{1}{\sqrt{n}}\sum_{i}g_{3,3}(z_{i})+E[g_{3,1}(z_{i})g_{3,3}(z_{i})]\frac{1}{\sqrt{n}}\sum_{i}g_{3,2}(z_{i})\\
 & \quad+E[g_{3,2}(z_{i})g_{3,3}(z_{i})]\frac{1}{\sqrt{n}}\sum_{i}g_{3,1}(z_{i})+o_{p}(1)
\end{align*}

and the bootstrap and analytical bias corrections satisfy
\begin{align*}
\sqrt{n}(\widehat{b}_{B}-b_{0}) & =\frac{1}{\sqrt{n}}\sum_{i}\big(g_{2,1}(z_{i})g_{2,2}(z_{i})-E[g_{2,1}(z_{i})g_{2,2}(z_{i})]\big)\\
 & \quad+\big(E[g_{2,1}(z_{i})h_{2}(z_{i})]+E[g_{2,2}(z_{i})h_{1}(z_{i})]\big)\frac{1}{\sqrt{n}}\sum_{i}g_{1,1}(z_{i})+o_{p}(1)
\end{align*}
where $h_{1}$ and $h_{2}$ are derivatives of $g_{2,1}$ and $g_{2,2}$
respectively. That is, they are all asymptotically linear estimators
for $b_{0}$.
\end{lemma}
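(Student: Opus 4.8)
The plan is to write all three bias estimators in terms of the single $V$-statistic representation (\ref{eq:V-stat}) and then, for each, identify which terms survive at the $n^{-1/2}$ order after the appropriate centering. Writing $\bar{g}_{k,j}=n^{-1}\sum_{i=1}^{n}g_{k,j}(z_{i})$, the expansion becomes the algebraic identity $\widehat{\theta}-\theta_{0}=\bar{g}_{1,1}+\bar{g}_{2,1}\bar{g}_{2,2}+\bar{g}_{3,1}\bar{g}_{3,2}\bar{g}_{3,3}+\cdots+o_{p}(n^{-5/2})$, where each factor $\bar{g}_{k,j}=O_{p}(n^{-1/2})$ by the CLT and the mean-zero assumption. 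The same identity holds verbatim for every leave-one-out subsample and every bootstrap resample, with the \emph{same} functions $g_{k,j}$, since these functions are fixed by the form of the moment equation defining $\widehat{\theta}$ and do not depend on sample size. This reduces the whole lemma to bookkeeping of products of sample averages.

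For the jackknife I would use the elementary leave-one-out identity $\bar{g}_{k,j}^{(i)}=(n\bar{g}_{k,j}-g_{k,j}(z_{i}))/(n-1)$, where $\bar{g}_{k,j}^{(i)}$ is the average over the sample with observation $i$ deleted. Substituting this into the expansion of $\widehat{\theta}_{(i)}$, averaging over $i$, subtracting $\widehat{\theta}$, and multiplying by $n(n-1)$ (so that the result is exactly $\widehat{b}_{J}$) leaves only finitely many distinct monomials at each order. The first-order terms cancel identically. The second-order term collapses to $\tfrac{n}{n-1}\big(n^{-1}\sum_{i}g_{2,1}(z_{i})g_{2,2}(z_{i})-\bar{g}_{2,1}\bar{g}_{2,2}\big)$, whose leading part is the sample average $n^{-1}\sum_{i}g_{2,1}(z_{i})g_{2,2}(z_{i})\to b_{0}$ and supplies the first line of the stated influence function. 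After the same manipulation the third-order term has leading part $\bar{g}_{3,1}\,\overline{g_{3,2}g_{3,3}}+\bar{g}_{3,2}\,\overline{g_{3,1}g_{3,3}}+\bar{g}_{3,3}\,\overline{g_{3,1}g_{3,2}}$; replacing each sample second moment by its limit $E[g_{3,\cdot}g_{3,\cdot}]$ yields precisely the three cross terms in the statement. A counting argument on the powers of $n$ then shows the $k$-th order term contributes at order $n^{-(k-2)/2}$ to $\widehat{b}_{J}$, so the fourth- and fifth-order terms are $o_{p}(n^{-1/2})$ and drop out after scaling by $\sqrt{n}$.

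The analytical and bootstrap estimators I would treat together, since both reduce to the plug-in functional $b_{0}^{\ast}:=n^{-1}\sum_{i}g_{2,1}(z_{i},\widehat{\theta})g_{2,2}(z_{i},\widehat{\theta})$. For the analytical correction this is immediate, as $\widetilde{b}(\widehat{\theta})$ is by construction the sample analogue of $b_{0}=E[g_{2,1}g_{2,2}]$ evaluated at $\widehat{\theta}$; a single Taylor expansion in $\widehat{\theta}-\theta_{0}$ gives $b_{0}^{\ast}-b_{0}=\big(n^{-1}\sum_{i}g_{2,1}(z_{i})g_{2,2}(z_{i})-b_{0}\big)+\big(E[g_{2,1}h_{2}]+E[g_{2,2}h_{1}]\big)(\widehat{\theta}-\theta_{0})+o_{p}(n^{-1/2})$, and substituting $\widehat{\theta}-\theta_{0}=n^{-1}\sum_{i}g_{1,1}(z_{i})+O_{p}(n^{-1})$ produces the claimed influence function. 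For the bootstrap I would show $\widehat{b}_{B}=nE^{\ast}[\widehat{\theta}^{\ast}-\widehat{\theta}]=b_{0}^{\ast}+o_{p}(n^{-1/2})$: applying the expansion in the bootstrap world, where $\widehat{\theta}$ is the truth and the sample first-order condition $n^{-1}\sum_{i}g_{1,1}(z_{i},\widehat{\theta})=0$ makes the bootstrap-centered functions genuinely mean zero under resampling, and then taking $E^{\ast}$. Because $E^{\ast}$ annihilates any product containing a singleton (bootstrap-mean-zero) index, the first-order term vanishes, the second-order $V$-statistic yields exactly $b_{0}^{\ast}$ up to $O_{p}(n^{-1})$, and every $k\ge3$ term survives only through fully-paired index patterns that are of smaller order; the bootstrap therefore inherits the analytical influence function.

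The main obstacle is not isolating the leading terms, which is essentially algebra, but controlling the remainders once they are multiplied by the large factors $n(n-1)$ (jackknife) or by $n$ together with the expectation $E^{\ast}$ (bootstrap). I would handle this using the moment bound $E[g_{k,j}^{10}]\le C$: it permits uniform control of the leave-one-out perturbations (e.g.\ via $\max_{i}|g_{k,j}(z_{i})|=o_{p}(n^{1/2})$ and Rosenthal-type bounds on the $V$-statistic remainders) and verifies that the $o_{p}(n^{-5/2})$ remainder in the expansion survives multiplication by $n^{2}$ as $o_{p}(n^{-1/2})$. The continuous differentiability of $g_{k,j}$ in $\theta$ is what legitimizes the Taylor step in the analytical and bootstrap arguments. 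Establishing that these error terms are genuinely negligible---uniformly over the $n$ deletions for the jackknife and in $E^{\ast}$-expectation for the bootstrap---is where the real work lies.
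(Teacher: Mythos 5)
Your proposal is correct and follows essentially the same route as the paper: expand each leave-one-out/bootstrap estimator using the same V-statistic structure, apply the identity $\bar{g}_{k,j}^{(i)}=(n\bar{g}_{k,j}-g_{k,j}(z_i))/(n-1)$ (the paper packages this algebra into its jackknife V-statistic lemmas) to find that the diagonal term $\overline{g_{2,1}g_{2,2}}$ and the three cross-moments survive for the jackknife, use the fact that $E^{\ast}$ annihilates any factor with a singleton bootstrap index to reduce the bootstrap to the plug-in $n^{-1}\sum_i\widehat{g}_{2,1}\widehat{g}_{2,2}$, and Taylor-expand in $\widehat{\theta}$ to obtain the $h_1,h_2$ terms. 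Your leading-term computations (including the $\tfrac{n}{n-1}$ factors and the order count showing the fourth- and fifth-order terms are negligible) match the paper's, and your identification of the remainder control as the substantive work is where the paper likewise defers to its supplementary lemmas.
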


\begin{proof}

\emph{\bf Jackknife bias correction}
A jackknife estimate of the higher-order bias of $\widehat{\theta}$
is given by
\[
\frac{\widehat{b}_{J}}{n}=(n-1)(\frac{1}{n}\sum_{i=1}^{n}\widehat{\theta}_{(i)}-\widehat{\theta})
\]
where $\widehat{\theta}_{(i)}$ is the estimator that excludes observation
$i$. Note that $\widehat{\theta}_{(i)}$ has an equivalent expansion
to $\widehat{\theta}$, with terms $A_{k,(i)}$ that are the same
as the terms in the original expansion, simply dropping observation
$i$.

We may write
\begin{align*}
\frac{\widehat{b}_{J}}{n} & =(n-1)\Big((\frac{1}{n}\sum_{i=1}^{n}\frac{1}{\sqrt{n-1}}A_{1,(i)}-\frac{1}{\sqrt{n}}A_{1})
 +(\frac{1}{n}\sum_{i=1}^{n}\frac{1}{n-1}A_{2,(i)}-\frac{1}{n}A_{2})\\
 & \qquad+(\frac{1}{n}\sum_{i=1}^{n}\frac{1}{(n-1)^{3/2}}A_{3,(i)}-\frac{1}{n^{3/2}}A_{3})\Big)
 +(\frac{1}{n}\sum_{i=1}^{n}\frac{1}{(n-1)^{2}}A_{4,(i)}-\frac{1}{n^{2}}A_{4})\Big)\\
 & \qquad+(\frac{1}{n}\sum_{i=1}^{n}\frac{1}{(n-1)^{5/2}}A_{5,(i)}-\frac{1}{n^{5/2}}A_{5})\Big)+o_{p}(n^{-5/2})\\
 & =\frac{1}{n}\Big(\tilde{B}_{1}+\frac{1}{n^{1/2}}\tilde{B}_{2}+\frac{1}{n}\tilde{B}_{3}+\frac{1}{n^{3/2}}\tilde{B}_{4}+\frac{1}{n^{2}}\tilde{B}_{5}\Big)+o_{p}(n^{-5/2}).
\end{align*}

By Lemma 21 in Supplementary Appendix II F we have $\tilde{B}_{1} =0$, while Lemma 22 gives
\begin{align*}
\tilde{B}_{2} & =\frac{1}{\sqrt{n}}\sum_{i}g_{2,1}(z_{i})g_{2,2}(z_{i})-\frac{1}{\sqrt{n}(n-1)}\sum_{i}\sum_{j\ne i}g_{2,1}(z_{i})g_{2,2}(z_{j})\\
 & =\frac{1}{\sqrt{n}}\sum_{i}g_{2,1}(z_{i})g_{2,2}(z_{i})+ o_p(1).
\end{align*}
For the next term, using Lemma 23 from Supplementary Appendix II F, it is straightforward to show that
\begin{align*}
\frac{1}{\sqrt{n}}\tilde{B}_{3} & =\frac{2n-1}{n^{3/2}(n-1)}\sum_{i}g_{3,1}(z_{i})g_{3,2}(z_{i})g_{3,3}(z_{i})\\
 & +\frac{n^{2}-3n+1}{n^{3/2}(n-1)^{2}}\sum_{i}\sum_{j\ne i}\Big(g_{3,1}(z_{i})g_{3,2}(z_{i})g_{3,3}(z_{j})+g_{3,1}(z_{i})g_{3,2}(z_{j})g_{3,3}(z_{i})\\
 & \qquad+g_{3,1}(z_{j})g_{3,2}(z_{i})g_{3,3}(z_{i})\Big)\\
 & -\frac{3n-1}{n^{3/2}(n-1)^{2}}\sum_{i}\sum_{j\ne i}\sum_{k\ne\{i,j\}}g_{3,1}(z_{i})g_{3,2}(z_{j})g_{3,3}(z_{k})\\
 & =\frac{1}{n^{3/2}}\sum_{i}\sum_{j\ne i}\Big(g_{3,1}(z_{i})g_{3,2}(z_{i})g_{3,3}(z_{j})\\
 & \qquad+g_{3,1}(z_{i})g_{3,2}(z_{j})g_{3,3}(z_{i})+g_{3,1}(z_{j})g_{3,2}(z_{i})g_{3,3}(z_{i})\Big)+o_{p}(1)\\
 & =E[g_{3,1}(z_{i})g_{3,2}(z_{i})]\frac{1}{\sqrt{n}}\sum_{i}g_{3,3}(z_{i})+E[g_{3,1}(z_{i})g_{3,3}(z_{i})]\frac{1}{\sqrt{n}}\sum_{i}g_{3,2}(z_{i})\\
 & \quad+E[g_{3,2}(z_{i})g_{3,3}(z_{i})]\frac{1}{\sqrt{n}}\sum_{i}g_{3,1}(z_{i})+o_{p}(1).
\end{align*}

Similar results show that $\tilde{B}_{4}$ and $\tilde{B}_{5}$ are
also $O_{P}(1)$ (see for example Section \ref{subsec:nomV-def} in Supplementary Appendix
IV for results on jackknife V-statistics up to sixth order).

Using these results, we may then write
\begin{align*}
\sqrt{n}(\widehat{b}_{J}-b_{0}) & =\tilde{B}_{2}-\sqrt{n}b_{0}+\frac{1}{\sqrt{n}}\tilde{B}_{3}+o_{p}(1)\\
 & =\frac{1}{\sqrt{n}}\sum_{i}\big(g_{2,1}(z_{i})g_{2,2}(z_{i})-E[g_{2,1}(z_{i})g_{2,2}(z_{i})]\big)\\
 & \quad+E[g_{3,1}(z_{i})g_{3,2}(z_{i})]\frac{1}{\sqrt{n}}\sum_{i}g_{3,3}(z_{i})+E[g_{3,1}(z_{i})g_{3,3}(z_{i})]\frac{1}{\sqrt{n}}\sum_{i}g_{3,2}(z_{i})\\
 & \quad+E[g_{3,2}(z_{i})g_{3,3}(z_{i})]\frac{1}{\sqrt{n}}\sum_{i}g_{3,1}(z_{i})+o_{p}(1)
\end{align*}
which gives the result in the proposition.

\emph{\bf Bootstrap bias correction}

Given the expansion in (\ref{eq:expansion}), the bootstrap estimate
$\widehat{\theta}^{*}$ has an equivalent expansion with reference
to the empirical distribution $\widehat{F}_{n}$, rather than the
population distribution $F_{0}$.
\begin{equation}
\begin{aligned}\sqrt{n}(\widehat{\theta}^{*}-\widehat{\theta}) & =\widehat{A}_{1}+\frac{1}{\sqrt{n}}\widehat{A}_{2}+\frac{1}{n}\widehat{A}_{3}+o_{p}(n^{-1})\end{aligned}
\label{eq:exp_boot}
\end{equation}
where
\begin{equation}
\frac{1}{n^{k/2}}\widehat{A}_{k}=\big(\frac{1}{n}\sum_{i=1}^{n}\widehat{g}_{k,1}(z_{i}^{*})\big)\cdots\big(\frac{1}{n}\sum_{i=1}^{n}\widehat{g}_{k,k}(z_{i}^{*})\big)\label{eq:V-stat-1}
\end{equation}
with $\widehat{g}_{k,j}$ the same function as in (\ref{eq:V-stat}),
evaluated at $\widehat{\theta}$ and $\widehat{F}_{n}$ rather than
$\theta_{0}$ and $F_{0}$. Note that in this expansion terms are zero mean with respect
to $\widehat{F}_{n}$, i.e. $\sum_{i}\widehat{g}_{k,j}(z_{i})=0$.

We can define a bootstrap bias estimate as $\widehat{b}_{B}/n=E^{*}[\widehat{\theta}^{*}-\widehat{\theta}]$,
where $E^{*}$ is expectation over the bootstrap distribution (i.e.
$\widehat{F}_{n}$). We may write
\begin{align*}
\frac{\widehat{b}_{B}}{n} & =\frac{1}{\sqrt{n}}E^{*}[\widehat{A}_{1}]+\frac{1}{n}E^{*}[\widehat{A}_{2}]+\frac{1}{n^{3/2}}E^{*}[\widehat{A}_{3}]+o_{p}(n^{-3/2}).
\end{align*}

From Lemma 17 in Supplementary Appendix II F we have that
\begin{align*}
E^{*}[\widehat{A}_{1}] & =E^{*}[\frac{1}{\sqrt{n}}\sum_{i=1}^{n}\widehat{g}_{1,1}(z_{i}^{*})]
 =\frac{1}{\sqrt{n}}\sum_{i=1}^{n}\widehat{g}_{1,1}(z_{i})
 =0.
\end{align*}
Lemma 18 gives
\begin{align*}
E^{*}[\widehat{A}_{2}] & =E^{*}[\frac{1}{n}\sum_{i}\sum_{j}\widehat{g}_{2,1}(z_{i}^{*})\widehat{g}_{2,2}(z_{j}^{*})]\\
 & =\frac{1}{n}\sum_{i}\widehat{g}_{2,1}(z_{i})\widehat{g}_{2,2}(z_{i})+\frac{n-1}{n^{2}}\sum_{i}\widehat{g}_{2,1}(z_{i})\sum_{j}\widehat{g}_{2,2}(z_{j})\\
 & =\frac{1}{n}\sum_{i}\widehat{g}_{2,1}(z_{i})\widehat{g}_{2,2}(z_{i})
\end{align*}
and similarly, Lemma 19 gives
\begin{align*}
E^{*}[\widehat{A}_{3}] & =E^{*}[\frac{1}{n^{3/2}}\sum_{i}\sum_{j}\sum_{k}g_{3,1}(z_{i}^{*})g_{3,2}(z_{j}^{*})g_{3,3}(z_{k}^{*})]\\
 & =\frac{1}{n^{3/2}}\sum_{i}\widehat{g}_{3,1}(z_{i})\widehat{g}_{3,2}(z_{i})\widehat{g}_{3,3}(z_{i}).
\end{align*}
Using these results we can write
\begin{align*}
\sqrt{n}(\widehat{b}_{B}-b_{0}) & =\frac{1}{\sqrt{n}}\sum_{i}\big(\widehat{g}_{2,1}(z_{i})\widehat{g}_{2,2}(z_{i})-E[g_{2,1}(z_{i})g_{2,2}(z_{i})]\big)+o_{p}(1).
\end{align*}
Next, note that
\begin{align*}
\frac{1}{\sqrt{n}} & \sum_{i}\big(\widehat{g}_{2,1}(z_{i})\widehat{g}_{2,2}(z_{i})-g_{2,1}(z_{i})g_{2,2}(z_{i})\big)\\
 & =\frac{1}{\sqrt{n}}\sum_{i}g_{2,1}(z_{i})\big(\widehat{g}_{2,2}(z_{i})-g_{2,2}(z_{i})\big)\\
 & +\frac{1}{\sqrt{n}}\sum_{i}\big(\widehat{g}_{2,1}(z_{i})-g_{2,1}(z_{i})\big)g_{2,2}(z_{i})\\
 & +\frac{1}{\sqrt{n}}\sum_{i}\big(\widehat{g}_{2,1}(z_{i})-g_{2,1}(z_{i})\big)\big(\widehat{g}_{2,2}(z_{i})-g_{2,2}(z_{i})\big).
\end{align*}

Let $h_{1}(z_{i})$ and $h_{2}(z_{i})$ be first derivatives of $g_{2,1}(z)$
and $g_{2,2}(z_{i})$ with respect to $\theta$ (evaluated at $\theta_{0}$),
so that first-order expansions of $\widehat{g}_{2,1}$ and $\widehat{g}_{2,2}$
are given by
\begin{align*}
\sqrt{n}\big(\widehat{g}_{2,1}(z_{i})-g_{2,1}(z_{i})\big) & =h_{1}(z_{i})\sqrt{n}(\widehat{\theta}-\theta_{0})+o_{p}(1)\\
\sqrt{n}\big(\widehat{g}_{2,2}(z_{i})-g_{2,2}(z_{i})\big) & =h_{2}(z_{i})\sqrt{n}(\widehat{\theta}-\theta_{0})+o_{p}(1).
\end{align*}
Then, we can write
\begin{align*}
\frac{1}{n}\sum_{i}g_{2,1}(z_{i})\big(\widehat{g}_{2,2}(z_{i})-g_{2,2}(z_{i})\big) & =\frac{1}{n}\sum_{i}g_{2,1}(z_{i})h_{2}(z_{i})\sqrt{n}(\widehat{\theta}-\theta_{0})+o_{p}(1)\\
 & =E[g_{2,1}(z_{i})h_{2}(z_{i})]\frac{1}{\sqrt{n}}\sum_{i}g_{1,1}(z_{i})+o_{p}(1)
\end{align*}
and similarly for $\frac{1}{\sqrt{n}}\sum_{i}\big(\widehat{g}_{2,1}(z_{i})-g_{2,1}(z_{i})\big)g_{2,2}(z_{i})$
so that
\begin{align*}
\sqrt{n}(\widehat{b}_{B}-b_{0}) & =\frac{1}{\sqrt{n}}\sum_{i}\big(g_{2,1}(z_{i})g_{2,2}(z_{i})-E[g_{2,1}(z_{i})g_{2,2}(z_{i})]\big)\\
 & \quad+\big(E[g_{2,1}(z_{i})h_{2}(z_{i})]+E[g_{2,2}(z_{i})h_{1}(z_{i})]\big)\frac{1}{\sqrt{n}}\sum_{i}g_{1,1}(z_{i})+o_{p}(1)
\end{align*}
giving the result of the proposition.

\emph{\bf Analytical bias correction}

Under (\ref{eq:expansion}) and (\ref{eq:V-stat}) the bias term has
the form $b_{0}=E[g_{2,1}(z_{i})g_{2,2}(z_{i})]$, 
for some functions $g_{2,1}$ and $g_{2,2}$. Assume that we can construct
consistent estimators of these functions, $\widehat{g}_{2,1}$ and
$\widehat{g}_{2,2}$ by plugging in $\widehat{\theta}$ in place of
$\theta_{0}$ and replacing expectations with sample means. This implies
$\widehat{g}_{2,1}$ and $\widehat{g}_{2,2}$ are the same functions
as in the bootstrap expansion above. We can form a bias estimate using
\[
\widehat{b}_{a}=\frac{1}{n}\sum_{i}\widehat{g}_{2,1}(z_{i})\widehat{g}_{2,2}(z_{i}).
\]
This then gives
\begin{align*}
\sqrt{n}(\widehat{b}_{a}-b_{0}) & =\frac{1}{\sqrt{n}}\sum_{i}\big(\widehat{g}_{2,1}(z_{i})\widehat{g}_{2,2}(z_{i})-E[g_{2,1}(z_{i})g_{2,2}(z_{i})]\big)
\end{align*}

and the result follows from the bootstrap result above.

\end{proof}

\subsection{Split-sample bias correction}
Here we show that, under the same asymptotic expansion structure used above, the split-sample bias estimate is not asymptotically linear. The split-sample bias estimate is given by
$
\frac{\widehat{b}_{ss}}{n}=\frac{1}{2}(\widehat{\theta}_{1}+\widehat{\theta}_{2})-\widehat{\theta}
$. 
Again, we can construct an expansion for this bias estimate from the
expansion of $\widehat{\theta}$. 
\[
\frac{\widehat{b}_{ss}}{n}=\frac{1}{\sqrt{n}}\tilde{B}_{1}+\frac{1}{n}\tilde{B}_{2}+o_{p}(n^{-1})
\]
Let $m=n/2$. We have
\begin{align*}
\tilde{B}_{1}&=\frac{1}{2}(\widehat{A}_{1,1}+\widehat{A}_{1,2})-\widehat{A}_{1}=0 \\
\tilde{B}_{2} & =\frac{1}{2}(\widehat{A}_{2,1}+\widehat{A}_{2,2})-\widehat{A}_{2}\\
 & =\frac{1}{2}\Big(\frac{4}{n}\sum_{i=1}^{m}\sum_{j=1}^{m}g_{2,1}(z_{i})g_{2,2}(z_{j})+\frac{4}{n}\sum_{i=m+1}^{n}\sum_{j=m+1}^{n}g_{2,1}(z_{i})g_{2,2}(z_{j})\Big)\\
 & \quad-\frac{1}{n}\sum_{i=1}^{n}\sum_{j=1}^{n}g_{2,1}(z_{i})g_{2,2}(z_{j})\\
 & =\frac{1}{n}\sum_{i=1}^{m}\sum_{j=1}^{m}g_{2,1}(z_{i})g_{2,2}(z_{j})+\frac{1}{n}\sum_{i=m+1}^{n}\sum_{j=m+1}^{n}g_{2,1}(z_{i})g_{2,2}(z_{j})\\
 & \quad-\frac{1}{n}\sum_{i=1}^{m}\sum_{j=m+1}^{n}g_{2,1}(z_{i})g_{2,2}(z_{j})-\frac{1}{n}\sum_{i=m+1}^{n}\sum_{j=1}^{m}g_{2,1}(z_{i})g_{2,2}(z_{j}).
\end{align*}

This then gives 
\begin{align*}
\widehat{b}_{ss}-b_{0} & =\frac{1}{n}\sum_{i=1}^{m}\sum_{j=1,j\ne i}^{m}g_{2,1}(z_{i})g_{2,2}(z_{j})+\frac{1}{n}\sum_{i=m+1}^{n}\sum_{j=m+1,j\ne i}^{n}g_{2,1}(z_{i})g_{2,2}(z_{j})\\
 & \quad-\frac{1}{n}\sum_{i=1}^{m}\sum_{j=m+1}^{n}g_{2,1}(z_{i})g_{2,2}(z_{j})-\frac{1}{n}\sum_{i=m+1}^{n}\sum_{j=1}^{m}g_{2,1}(z_{i})g_{2,2}(z_{j})\\
 & +o_{p}(1).
\end{align*}
The terms on the RHS are each $O_{p}(1)$ so that the split-sample
bias estimator is inconsistent, and has a U-statistic structure to
first-order, and hence cannot be asymptotically linear.

\section{Regularity conditions} 
\subsection{Conditions for cross-sectional models} \label{sec:reg-conds}

\begin{condition}
\label{HH}(i) The function $\log f\left(  z,\theta\right)  $ is $7$ times
continuously differentiable on $\Theta$ for each $z$; (ii) The parameter space
$\Theta\subset\mathbb{R}$ is a compact set, $\theta_{0}\in\operatorname{int}
(\Theta)$; (iii) there exists a function $M\left(  z\right)  $
such that for all $\theta\in\Theta$
\[
\left\vert \frac{\partial^{m}\log f\left(  z,\theta\right)  }{\partial
\theta^{m}}\right\vert \leq M\left(  z\right)  \qquad0\leq m\leq7
\]
and $E\left[  M\left(  Z_{i}\right)  ^{Q}\right]  <\infty$ for some $Q>16$;
(iv) If $\theta\neq\theta_{0}$ then $f\left(  Z_{i},\theta\right)  \neq
f\left(  Z_{i},\theta_{0}\right)  .$
\end{condition}

\begin{condition}
\label{M}For each $\theta\in\Theta$ and for $m\leq7,$ $\partial^{m}\log
f\left(  z,\theta\right)  /\partial\theta^{m}$ is a $P$-measurable function of
$z$.
\end{condition}

\begin{condition}
\label{EC}Let $\mathfrak{F}$ be the class of functions $\partial^{m}\log
f\left(  z,\theta\right)  /\partial\theta^{m}$ indexed by $\theta\in\Theta$
for $m=1,..,7$ with envelope $M\left(  z\right)  $. Then,
\begin{equation}
\int_{0}^{1}\sup_{\mathcal{Q}\in\mathfrak{P}}\sqrt{\log N\left(
\varepsilon\left(  \int M^{2}d\mathcal{Q}\right)  ^{1/2},\mathfrak{F}%
,L_{2}(\mathcal{Q})\right)  }d\varepsilon<\infty, \label{Entropy}%
\end{equation}
where $\mathfrak{P}$ is the class of probability measures on $\mathbb{R}$ that
concentrate on a finite set and $N$ is the cover number defined in van der
Vaart and Wellner (1996, p.90).
\end{condition}

Condition \ref{HH} is a standard condition guaranteeing identification of the
model and imposing sufficient smoothness conditions as well as existence of
higher moments to allow for a higher order stochastic expansion of the
estimator. Condition \ref{M} together with separability of the parameter space
guarantees measurability of suprema of our empirical processes. As is well
known from the probability literature, measurability conditions could be
relaxed somewhat at the expense of more refined convergence arguments. We are
abstracting from such refinements for the purpose of this paper.

\subsection{Conditions for panel models} \label{sec:reg-conds-panel}

The assumptions for the panel results follow those
used in Hahn and Newey (2004).

\begin{condition}
$n,T\rightarrow\infty$, with $n/T\rightarrow\rho$ for $0<\rho<\infty$.
\end{condition}

\begin{condition}
(i) The data $z_{it}$ are independent over $i$ and $t$ and identically
distributed over $t$ according to the density $f(z|\theta,\alpha)$; (ii) the
log density $\ln f(z|\theta,\alpha)$ is continuous in both $\theta$ and
$\alpha$; (iii) there exists a function $M(z_{it})$ such that $|\ln
f(z_{it}|\theta,\alpha_{i})|\leq M(z_{it})$, $\left\vert \left.  \partial\ln
f(z_{it}|\theta,\alpha_{i})\right/  \partial(\theta,\alpha_{i})\right\vert
\leq M(z_{it})$ and $\sup_{i}E[M(z_{it})^{33}]<\infty$.
\end{condition}

\begin{condition}
For each $\eta>0$, $\inf_{i}\left[  G_{i}(\theta_{0},\alpha_{i})-\sup
_{\{(\theta,\alpha):|(\theta,\alpha)-(\theta_{0},\alpha_{i})|>\eta\}}%
G_{i}(\theta,\alpha)\right]  >0$ where $G_{i}(\theta,\alpha)\equiv E[\ln
f(z_{it}|\theta,\alpha)]$.
\end{condition}

\begin{condition}
(i) There exists some $M(z_{it})$ such that $\left\vert \left.  \partial
^{m_{1}+m_{2}}\ln f(z_{it}|\theta,\alpha)\right/  \partial\theta^{m_{1}%
}\partial\alpha^{m_{2}}\right\vert \leq M(z_{it})$ for $0\leq m_{1}+m_{2}%
\leq7$, and $\sup_{i}E[M(z_{it})^{Q}]<\infty$ for some $Q>64$; (ii)
$\lim_{n\rightarrow\infty}\mathcal{I}_{n}>0$, where $\mathcal{I}_{n}%
\equiv\frac{1}{n}\sum_{i}E[U_{it}^{2}]$; (iii) $\min_{i}E[V_{it}^{2}]>0$.
\end{condition}

\end{appendices}

\newpage

\appendix

\begin{center}
{\LARGE Supplementary Appendix: Efficient Bias Correction for Cross-section and Panel Data}
\end{center}

The Supplementary Appendix contains four parts. Part I contains proofs for the main results in the Sections 3 and 4 of the paper. Part II contains a number of supporting results and technical lemmas for the proofs related to the cross-sectional MLE asymptotic expansions. In Part III we provide some additional analysis of an AR(1) model. Finally, Part IV contains supporting results for the panel data expansions used in Section 4 of the paper. An outline of the sections is listed below.

{\bf Supplementary Appendix 1:}
\begin{itemize}
    \item[A] Proofs for the cross-sectional MLE results of Section \ref{sec:mle_crosssec}
    \item[B] Proofs for the panel data results in Section \ref{section-panel-model}
\end{itemize}

{\bf Supplementary Appendix 2:} supporting results for cross-sectional models.
\begin{itemize}
    \item[A] Proof of Proposition 8 (MLE expansion)
    \item[B] Proof of Proposition 9 (Bootstrap expansion)
    \item[C] Consistency of bias estimator
    \item[D] Technical lemmas
    \item[E] Proofs of technical lemmas
    \item[F] Moments of bootstrapped and jackknifed statistics
    \item[G] Proofs for F
    \item[H] Proofs for Example 1
    \item[I] Higher-order variance of general k-split-sample bias correction
\end{itemize}

{\bf Supplementary Appendix 3:} time series models.
\begin{itemize}
    \item[A] Higher-order analysis of an AR(1) model
    \item[B] Technical results for part A
    \item[C] Further details
\end{itemize}

{\bf Supplementary Appendix 4:} supporting results for panel models.
\begin{itemize}
    \item[A] properties of V-statistics
    \item[B] Jackknifed normalized V-statistics
    \item[C] Split-sample V-statistics
    \item[D] Expansions for panel data models
    \item[E] Additional results for the proof of Proposition 9
    \item[F] Higher-order bias results
    \item[G] Derivation of bias for average effect estimates
\end{itemize}

\newpage
\begin{center}
{\LARGE Supplementary Appendix I: Efficient Bias Correction for Cross-section and Panel Data}
\end{center}

\section{Proofs for Cross Section Models}\label{Proof-Main-Results}

\subsection{Preliminaries}

The following proposition establishes an asymptotic expansion for the MLE, which forms the basis of the results in Section \ref{sec:mle_crosssec}.

\begin{proposition}
\label{ApproxThetaHat-strong}Under Condition \ref{HH}, with probability
tending to one, $\widehat{\theta}$ satisfies the expansion%
\[
\sqrt{n}\left(  \widehat{\theta}-\theta_{0}\right)  =T_{1}+\frac{T_{2}}%
{\sqrt{n}}+\frac{T_{3}}{n}+\frac{T_{4}}{n\sqrt{n}}+\frac{T_{5}}{n^{2}}%
+O_{p}\left(  \frac{1}{n^{2}\sqrt{n}}\right)  ,
\]
where
\begin{align}
T_{1}  &  =\mathcal{I}^{-1}U\left(  \theta_{0}\right)  ,\label{V1}\\
T_{2}  &  =\frac{1}{2}\mathcal{I}^{-3}\mathcal{Q}_{1}\left(  \theta
_{0}\right)  U\left(  \theta_{0}\right)  ^{2}+\mathcal{I}^{-2}U\left(
\theta_{0}\right)  V\left(  \theta_{0}\right)  , \label{V2}\\
T_{3}  &  =\frac{1}{6}\mathcal{I}^{-4}\mathcal{Q}_{2}\left(  \theta
_{0}\right)  U\left(  \theta_{0}\right)  ^{3}+\frac{1}{2}\mathcal{I}%
^{-5}\mathcal{Q}_{1}\left(  \theta_{0}\right)  ^{2}U\left(  \theta_{0}\right)
^{3}+\frac{3}{2}\mathcal{I}^{-4}\mathcal{Q}_{1}\left(  \theta\right)  U\left(
\theta_{0}\right)  ^{2}V\left(  \theta_{0}\right) \nonumber \\
&  +\frac{1}{2}\mathcal{I}^{-3}U\left(  \theta_{0}\right)  ^{2}W\left(
\theta_{0}\right)  +\mathcal{I}^{-3}U\left(  \theta_{0}\right)  V\left(
\theta_{0}\right)  ^{2}, \label{V3}\\
T_{4}  &  =O_{p}\left(  1\right)  , \nonumber\\
T_{5}  &  =O_{p}\left(  1\right)  . \nonumber
\end{align}

\end{proposition}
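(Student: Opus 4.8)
The plan is to obtain the expansion by iteratively solving the first-order condition $0 = n^{-1}\sum_{i=1}^{n}\ell(Z_{i},\widehat{\theta})$ via a Taylor expansion about $\theta_{0}$, matching terms of successive orders in $n^{-1/2}$. Before expanding, I would first establish that $\widehat{\theta}$ is consistent and lies in a shrinking neighborhood of $\theta_{0}$ with probability tending to one; this follows from the identification Condition \ref{HH}(iv) together with a uniform law of large numbers for $n^{-1}\sum_{i}\log f(Z_{i},\theta)$, which the entropy Condition \ref{EC} and the envelope bound in Condition \ref{HH}(iii) guarantee. Consistency then licenses a Taylor expansion of the score about $\theta_{0}$ to seventh order, the remainder being dominated by the envelope $M(z)$ of Condition \ref{HH}(iii).

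Writing $\widehat{h}=\widehat{\theta}-\theta_{0}$ and using the definitions of $U,V,W,\mathcal{I},\mathcal{Q}_{1},\mathcal{Q}_{2}$ to rewrite the sample averages of $\ell,\ell^{\theta},\ell^{\theta\theta},\ell^{\theta\theta\theta}$ as their means plus normalized empirical-process deviations, the score equation becomes
\[
0 = \frac{1}{\sqrt{n}}U(\theta_{0}) + \left(-\mathcal{I} + \frac{1}{\sqrt{n}}V(\theta_{0})\right)\widehat{h} + \frac{1}{2}\left(\mathcal{Q}_{1} + \frac{1}{\sqrt{n}}W(\theta_{0})\right)\widehat{h}^{2} + \frac{1}{6}\mathcal{Q}_{2}\widehat{h}^{3} + \cdots.
\]
Substituting the ansatz $\sqrt{n}\,\widehat{h}=T_{1}+n^{-1/2}T_{2}+n^{-1}T_{3}+\cdots$, multiplying through by $\sqrt{n}$, and equating the coefficients of $n^{0}$, $n^{-1/2}$, and $n^{-1}$ yields the recursion
\[
T_{1}=\mathcal{I}^{-1}U,\quad T_{2}=\mathcal{I}^{-1}\left(VT_{1}+\tfrac{1}{2}\mathcal{Q}_{1}T_{1}^{2}\right),\quad T_{3}=\mathcal{I}^{-1}\left(VT_{2}+\mathcal{Q}_{1}T_{1}T_{2}+\tfrac{1}{2}WT_{1}^{2}+\tfrac{1}{6}\mathcal{Q}_{2}T_{1}^{3}\right),
\]
with $U,V,W$ evaluated at $\theta_{0}$. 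Back-substitution then reproduces the stated closed forms; for instance, the $U^{2}V$ coefficient in $T_{3}$ collects a $\tfrac{1}{2}$ from $VT_{2}$ and a full unit from $\mathcal{Q}_{1}T_{1}T_{2}$, giving $\tfrac{3}{2}\mathcal{I}^{-4}\mathcal{Q}_{1}U^{2}V$. Continuing the same matching two further orders expresses $T_{4}$ and $T_{5}$ as polynomials in $U,V,W$ and higher derivative averages, each a product of $O_{p}(1)$ sample averages and hence $O_{p}(1)$.

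The main obstacle is not the coefficient matching but the uniform control of the empirical-process terms and of the Taylor remainder required to certify that the residual is genuinely $O_{p}(n^{-5/2})$. The derivatives $\ell^{(m)}(Z_{i},\cdot)$ must be handled at the random argument $\widehat{\theta}$ rather than at $\theta_{0}$, and the seventh-order remainder involves a mean-value argument at an intermediate point lying between $\widehat{\theta}$ and $\theta_{0}$. Controlling these uses the stochastic equicontinuity afforded by Condition \ref{EC}, which bounds the fluctuation of the relevant empirical processes uniformly over $\Theta$, together with the moment bound $E[M(Z_{i})^{Q}]<\infty$ for $Q>16$ in Condition \ref{HH}(iii); the latter ensures that the products of sample averages appearing in $T_{2}$ through $T_{5}$ possess enough finite moments for the stated $O_{p}$ orders to hold. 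Assembling these bounds shows each omitted term is of the claimed order, delivering the expansion with remainder $O_{p}(n^{-2}n^{-1/2})$.
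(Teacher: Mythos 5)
Your recursion and closed forms are correct: matching coefficients of $n^{0}$, $n^{-1/2}$, $n^{-1}$ in the expanded score equation gives exactly $T_{1}=\mathcal{I}^{-1}U$, $T_{2}=\mathcal{I}^{-1}(VT_{1}+\tfrac{1}{2}\mathcal{Q}_{1}T_{1}^{2})$, $T_{3}=\mathcal{I}^{-1}(VT_{2}+\mathcal{Q}_{1}T_{1}T_{2}+\tfrac{1}{2}WT_{1}^{2}+\tfrac{1}{6}\mathcal{Q}_{2}T_{1}^{3})$, and back-substitution reproduces (\ref{V2}) and (\ref{V3}), including the $\tfrac{3}{2}\mathcal{I}^{-4}\mathcal{Q}_{1}U^{2}V$ coefficient. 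The algebra is the same as the paper's, but the formal device differs. The paper parametrizes the problem along the path $F_{\epsilon}=F+\epsilon\sqrt{n}(\widehat{F}-F)$ and Taylor-expands $\widehat{\theta}(\epsilon)$ in $\epsilon$ at $0$ up to order six: the derivatives $\theta^{(k)}(0)$ are obtained by repeatedly differentiating the perturbed first-order condition, which yields precisely your recursion (so $T_{k}=\theta^{(k)}(0)/k!$), while the entire residual is packaged as a single Lagrange term $\theta^{(6)}(\tilde{\epsilon})/(6!\,n^{5/2})$ and controlled by maximal inequalities over $\epsilon\in[0,n^{-1/2}]$ (their Lemmas bounding $\max_{\epsilon}|\theta^{(k)}(\epsilon)|$ with explicit tail probabilities). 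Your direct ansatz-inversion gets the same terms more quickly, but it leaves two things implicit that the Lagrange form delivers for free: (i) a proof that $\sqrt{n}(\widehat{\theta}-\theta_{0})$ actually admits a representation of the assumed form, rather than merely that the formal series solves the equation order by order; and (ii) uniform-in-$\theta$ control of the seventh-derivative remainder and of the neglected empirical fluctuations (e.g.\ the $n^{-1/2}$ deviation of $n^{-1}\sum_{i}\ell^{\theta\theta\theta}$, which feeds into $T_{4}$), for which your appeal to Condition \ref{EC} and the envelope moment bound names the right tools but does not carry out the quantitative work. The paper's $\epsilon$-path construction has the further payoff that the identical argument transfers verbatim to the bootstrapped estimator $\widehat{\theta}^{\ast}$, which is why they adopt it.
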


\begin{proof}
Available in Supplementary Appendix II.
\end{proof}

In order to approximate the bias of the bootstrapped estimate $\widehat{\theta
}^{\ast}$ we need a similar higher order expansion as in the case of the MLE.
Here, however, the reference point around which we develop our approximation
is the empirical distribution $\widehat{F}$ rather than the original
distribution $F$. The convergence of $\widehat{F}$ to $F$ then guarantees that
bootstrapped statistics are close to the original statistics.

\begin{definition}
$\widehat{\mathcal{I}}\left(  \theta\right)  \mathcal{\equiv}-n^{-1}\sum
_{i=1}^{n}\ell^{\theta}(Z_{i},\theta),$ $\widehat{\mathcal{Q}}_{1}\left(
\theta\right)  \equiv n^{-1}%
%TCIMACRO{\tsum _{i=1}^{n}}%
%BeginExpansion
{\textstyle\sum_{i=1}^{n}}
%EndExpansion
\ell^{\theta\theta}(Z_{i},\theta)$ and $\widehat{\mathcal{Q}}_{2}\left(
\theta\right)  \equiv n^{-1}%
%TCIMACRO{\tsum _{i=1}^{n}}%
%BeginExpansion
{\textstyle\sum_{i=1}^{n}}
%EndExpansion
\ell^{\theta\theta\theta}\left(  Z_{i},\theta\right)  $.
\end{definition}

\begin{definition}
$U_{i}^{\ast}\left(  \theta\right)  \equiv\ell\left(  Z_{i}^{\ast}%
,\theta\right)  $, $V_{i}^{\ast}\left(  \theta\right)  \equiv\ell^{\theta
}\left(  Z_{i}^{\ast},\theta\right)  -n^{-1}\sum_{i=1}^{n}\ell^{\theta}\left(
Z_{i},\theta\right)  $, $W_{i}^{\ast}\equiv\ell^{\theta\theta}\left(
Z_{i}^{\ast}\right)  -n^{-1}\sum_{i=1}^{n}\ell^{\theta\theta}\left(
Z_{i},\theta\right)  $.
\end{definition}

\begin{definition}
$U^{\ast}\left(  \theta\right)  =n^{-1/2}\sum_{i=1}^{n}U_{i}^{\ast}\left(
\theta\right)  $, $V^{\ast}\left(  \theta\right)  =n^{-1/2}\sum_{i=1}^{n}%
V_{i}^{\ast}\left(  \theta\right)  $ and $W^{\ast}\left(  \theta\right)
=n^{-1/2}\sum_{i=1}^{n}W_{i}^{\ast}\left(  \theta\right)  $.
\end{definition}

The roles of $\mathcal{I}$, $\mathcal{Q}_{1}$ and $\mathcal{Q}_{2}$ are played
by $\widehat{\mathcal{I}}\mathcal{\equiv}\widehat{\mathcal{I}}\left(
\widehat{\theta}\right)  $, $\widehat{\mathcal{Q}}_{1}\left(  \widehat{\theta
}\right)  $, and $\widehat{\mathcal{Q}}_{2}\left(  \widehat{\theta}\right)  $.
We obtain for the following result for the bootstrapped estimate
$\widehat{\theta}^{\ast}$.

\begin{proposition}
\label{ApproxThetaHatStar}Under Conditions \ref{HH},\ref{M} and \ref{EC}, with
probability tending to one, $\widehat{\theta}^{\ast}$ satisfies the expansion
\[
\sqrt{n}\left(  \widehat{\theta}^{\ast}-\widehat{\theta}\right)
=\widehat{T}_{1}+\frac{\widehat{T}_{2}}{\sqrt{n}}+\frac{\widehat{T}_{3}}%
{n}+\frac{\widehat{T}_{4}}{n\sqrt{n}}+O_{p}\left(  \frac{1}{n^{2}}\right)
\text{ a.s.}%
\]
where $\widehat{T}_{1}=\widehat{\mathcal{I}}^{-1}U^{\ast}\left(
\widehat{\theta}\right)  ,$ $\widehat{T}_{2}=\frac{1}{2}\widehat{\mathcal{I}%
}^{-3}\widehat{\mathcal{Q}}_{1}\left(  \widehat{\theta}\right)  U^{\ast
}\left(  \widehat{\theta}\right)  ^{2}+\widehat{\mathcal{I}}^{-2}U^{\ast
}\left(  \widehat{\theta}\right)  V^{\ast}\left(  \widehat{\theta}\right)  $ etc.
\end{proposition}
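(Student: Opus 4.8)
The plan is to mirror the proof of Proposition~\ref{ApproxThetaHat-strong}, carrying out the same Taylor expansion of a first-order condition, but now working in the bootstrap world where the empirical distribution $\widehat{F}$ replaces $F_{0}$ and the sample MLE $\widehat{\theta}$ replaces $\theta_{0}$. The bootstrapped estimator solves $n^{-1}\sum_{i=1}^{n}\ell(Z_{i}^{\ast},\widehat{\theta}^{\ast})=0$. The crucial observation, the one that makes the bootstrap expansion structurally identical to the MLE expansion, is that the conditional mean of the bootstrap score at $\widehat{\theta}$ vanishes: since $E^{\ast}[\ell(Z_{i}^{\ast},\widehat{\theta})]=n^{-1}\sum_{j=1}^{n}\ell(Z_{j},\widehat{\theta})=0$ by the original MLE first-order condition, the quantity $U^{\ast}(\widehat{\theta})=n^{-1/2}\sum_{i}\ell(Z_{i}^{\ast},\widehat{\theta})$ is a sum of conditionally mean-zero terms, exactly as $U(\theta_{0})$ is in the population. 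Thus $\widehat{\theta}$ plays in the bootstrap world precisely the role that $\theta_{0}$ plays in the original problem.

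First I would Taylor expand the bootstrap first-order condition around $\widehat{\theta}$ to the required order and decompose each sample average of a derivative over the bootstrap sample into its conditional mean plus a centered fluctuation: $n^{-1}\sum_{i}\ell^{\theta}(Z_{i}^{\ast},\widehat{\theta})=-\widehat{\mathcal{I}}+n^{-1/2}V^{\ast}(\widehat{\theta})$ and $n^{-1}\sum_{i}\ell^{\theta\theta}(Z_{i}^{\ast},\widehat{\theta})=\widehat{\mathcal{Q}}_{1}+n^{-1/2}W^{\ast}(\widehat{\theta})$, with the analogous identity for the next derivative using $\widehat{\mathcal{Q}}_{2}$. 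Substituting these and solving the expansion iteratively in powers of $n^{-1/2}$ reproduces, term by term, the algebra in the proof of Proposition~\ref{ApproxThetaHat-strong}, with $(\mathcal{I},\mathcal{Q}_{1},\mathcal{Q}_{2},U,V,W)$ replaced throughout by their hatted and starred counterparts $(\widehat{\mathcal{I}},\widehat{\mathcal{Q}}_{1},\widehat{\mathcal{Q}}_{2},U^{\ast},V^{\ast},W^{\ast})$. This immediately yields the claimed forms $\widehat{T}_{1}=\widehat{\mathcal{I}}^{-1}U^{\ast}(\widehat{\theta})$, $\widehat{T}_{2}=\tfrac{1}{2}\widehat{\mathcal{I}}^{-3}\widehat{\mathcal{Q}}_{1}U^{\ast2}+\widehat{\mathcal{I}}^{-2}U^{\ast}V^{\ast}$, and the higher terms $\widehat{T}_{3},\widehat{T}_{4}$ in the same pattern.

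The nontrivial work is justifying that this formal expansion holds with the stated remainder \emph{conditional on the data}, for almost every data sequence, which is the content of the ``a.s.''\ and ``with probability tending to one'' qualifications. Here the additional Conditions~\ref{M} and~\ref{EC}, which are not needed for Proposition~\ref{ApproxThetaHat-strong}, do the work. I would use the entropy condition~(\ref{Entropy}) together with the envelope $M(z)$ to obtain, via bootstrap empirical-process results of the Gine and Zinn (1990) and Arcones and Gine (1992) type, uniform-in-$\theta$ control of the bootstrapped sums of $\ell$ and its derivatives. Two ingredients are required: (i) that $\widehat{\mathcal{I}}$, $\widehat{\mathcal{Q}}_{1}$, $\widehat{\mathcal{Q}}_{2}$ converge to their bounded, nondegenerate population values, so that $\widehat{\mathcal{I}}^{-1}$ is well defined and bounded; and (ii) that the centered bootstrap sums $U^{\ast},V^{\ast},W^{\ast}$ and the relevant powers entering $\widehat{T}_{2},\widehat{T}_{3}$ (such as $U^{\ast2}$ and $U^{\ast3}$) are $O_{p}(1)$ conditionally on the data. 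The moment bound $E[M(Z)^{Q}]<\infty$ with $Q>16$ in Condition~\ref{HH} guarantees the existence of the higher bootstrap moments of these products.

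The main obstacle is step (ii): establishing uniform stochastic bounds on the bootstrap empirical process and on the Taylor remainder conditional on the data. Unlike the MLE case, where the summands are i.i.d.\ draws from the fixed law $F_{0}$, here the summands are i.i.d.\ draws from the data-dependent law $\widehat{F}$, so every concentration and negligibility statement must be made conditionally and then shown to hold for almost every realization of the data. Transferring the regularity of $F_{0}$ to $\widehat{F}$ relies on the convergence of $\widehat{F}$ to $F_{0}$ together with the entropy condition, which keeps the class of log-density derivatives a bootstrap-Donsker class; controlling the higher-order term $\widehat{T}_{4}/(n\sqrt{n})$ and the $O_{p}(n^{-2})$ tail then requires bounding bootstrap moments up to the relevant order uniformly over $\Theta$, which is exactly what the strengthened moment and entropy assumptions deliver.
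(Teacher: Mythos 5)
Your proposal follows essentially the same route as the paper: the paper also Taylor expands the bootstrap first-order condition along the path $\widehat{F}_{\epsilon}=\widehat{F}+\epsilon\sqrt{n}(\widehat{F}^{\ast}-\widehat{F})$, exploits that the bootstrap score is conditionally centered at $\widehat{\theta}$ (so that $U^{\ast},V^{\ast},W^{\ast}$ play the roles of $U,V,W$), and then uses the Gin\'{e}--Zinn/Arcones--Gin\'{e} bootstrap Donsker and stochastic-equicontinuity machinery under Conditions~\ref{M} and~\ref{EC} to control the conditional remainder $P^{\mathbb{N}}$-a.s. The only cosmetic difference is that you describe iterating the expansion directly in powers of $n^{-1/2}$ rather than differentiating $\widehat{\theta}^{\ast}(\epsilon)$ in $\epsilon$, which amounts to the same computation.
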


\begin{proof}
Available in Supplementary Appendix II.
\end{proof}

\subsection{Proof of Proposition \ref{boot-jack-higher} - Bootstrap}

We denote the empirical distribution of $Z_{1}^{\ast},...,Z_{n}^{\ast}$ by
$\widehat{F}^{\ast}\left(  z\right)  =n^{-1}\sum_{i=1}^{n}\mathbf{1}\left\{
Z_{i}^{\ast}\leq z\right\}  $. Our technical conditions in the previous
section guarantee the validity of our stochastic expansions. From Gine and
Zinn (1990, Theorem 2.4) and Conditions \ref{HH},\ref{M} and \ref{EC} it
follows that, almost surely, $n^{1/2}\left(  \hat{F}^{\ast}-\hat{F}\right)
\rightarrow T$ weakly in $l^{\infty}\left(  \mathfrak{F}\right)  $ where $T$
is a Brownian Bridge Process. We use the result on the convergence of the
empirical processes to obtain an expansion of the estimators $\widehat{\theta
}$ and $\widehat{\theta}^{\ast}$.

Introduce the truncation function $h_{n}(x)$, where
\[
h_{n}\left(  x\right)  =\left\{
\begin{array}
[c]{cc}%
-n^{\alpha} & \text{if }x<-n^{\alpha}\\
x & \text{if }\left\vert x\right\vert <n^{\alpha}\\
n^{\alpha} & \text{if }x>n^{\alpha}%
\end{array}
\right.
\]
with $\alpha\in\left(  0,\frac{41}{30}\right)  $. Also, let
\[
\widehat{\theta}_{aa}^{\ast}\equiv\frac{1}{n^{1/2}}\widehat{\theta}^{\epsilon
}\left(  0\right)  +\frac{1}{2}\frac{1}{n}\widehat{\theta}^{\epsilon\epsilon
}\left(  0\right)  +\frac{1}{6}\frac{1}{n^{3/2}}\widehat{\theta}%
^{\epsilon\epsilon\epsilon}\left(  0\right)  .
\]
It can be shown\footnote{See (\ref{hstar1}), (\ref{hstar2}), and
(\ref{hstar3}) in the proof of Proposition \ref{BBC} in the Supplementary
Appendix I.} that $E^{\ast}\left[  h_{n}\left(  \widehat{\theta}^{\ast
}-\widehat{\theta}\right)  \right]  =E^{\ast}\left[  \widehat{\theta}%
_{aa}^{\ast}\right]  +o_{p}\left(  n^{-3/2}\right)  $, from which it follows
that
\begin{align*}
\sqrt{n}\left(  \widehat{\theta}-E^{\ast}\left[  h_{n}\left(  \widehat{\theta
}^{\ast}-\widehat{\theta}\right)  \right]  -\theta_{0}\right)   &  =\sqrt
{n}\left(  \widehat{\theta}-E^{\ast}\left[  \theta_{aa}^{\ast}\right]
-\theta_{0}\right)  \\ &+\sqrt{n}\left(  E^{\ast}\left[  \theta_{aa}^{\ast
}\right]  -E^{\ast}\left[  h_{n}\left(  \widehat{\theta}^{\ast}%
-\widehat{\theta}\right)  \right]  \right) \\
&  =\sqrt{n}\left(  \widehat{\theta}-E^{\ast}\left[  \theta_{aa}^{\ast
}\right]  -\theta_{0}\right)  +o_{p}(n^{-1}).
\end{align*}
It can also be shown\footnote{See Lemma \ref{BExp} Section \ref{sec-bootm} of
Supplementary Appendix II.}%
\[
E^{\ast}\left[  \widehat{\theta}^{\epsilon}\left(  0\right)  \right]  =0,
\]%
\begin{align}
E^{\ast}\left[  \widehat{\theta}^{\epsilon\epsilon}\left(  0\right)  \right]
&  =\widehat{\mathcal{I}}^{-3}\widehat{\mathcal{Q}}_{1}\left(  \widehat{\theta
}\right)  \tfrac{1}{n}%
%TCIMACRO{\tsum _{i=1}^{n}}%
%BeginExpansion
{\textstyle\sum_{i=1}^{n}}
%EndExpansion
\ell\left(  Z_{i},\widehat{\theta}\right)  ^{2}+2\widehat{\mathcal{I}}%
^{-2}\tfrac{1}{n}%
%TCIMACRO{\tsum _{i=1}^{n}}%
%BeginExpansion
{\textstyle\sum_{i=1}^{n}}
%EndExpansion
\ell\left(  Z_{i},\widehat{\theta}\right)  \ell^{\theta}\left(  Z_{i}%
,\widehat{\theta}\right) \nonumber\\
&  =\mathcal{I}^{-2}\mathcal{Q}_{1}\left(  \theta_{0}\right)  +2\mathcal{I}%
^{-2}E\left[  \ell\left(  Z_{i},\theta_{0}\right)  \ell^{\theta}\left(
Z_{i},\theta_{0}\right)  \right]  +B_{n}+o_{p}\left(  n^{-1/2}\right)
,\nonumber
\end{align}
and
\begin{align*}
E^{\ast}\left[  \widehat{\theta}^{\epsilon\epsilon\epsilon}\left(  0\right)
\right]   &  =\widehat{\mathcal{I}}^{-4}\widehat{\mathcal{Q}}_{2}\left(
\widehat{\theta}\right)  \left(  \tfrac{1}{\sqrt{n}}\tfrac{1}{n}%
%TCIMACRO{\tsum _{j=1}^{n}}%
%BeginExpansion
{\textstyle\sum_{j=1}^{n}}
%EndExpansion
\ell\left(  Z_{i},\widehat{\theta}\right)  ^{3}\right)  +3\widehat{\mathcal{I}%
}^{-5}\widehat{\mathcal{Q}}_{1}\left(  \widehat{\theta}\right)  ^{2}\left(
\tfrac{1}{\sqrt{n}}\tfrac{1}{n}%
%TCIMACRO{\tsum _{j=1}^{n}}%
%BeginExpansion
{\textstyle\sum_{j=1}^{n}}
%EndExpansion
\ell\left(  Z_{i},\widehat{\theta}\right)  ^{3}\right) \\
&  +9\widehat{\mathcal{I}}^{-4}\widehat{\mathcal{Q}}_{1}\left(
\widehat{\theta}\right)  \left(  \tfrac{1}{\sqrt{n}}\tfrac{1}{n}%
%TCIMACRO{\tsum _{j=1}^{n}}%
%BeginExpansion
{\textstyle\sum_{j=1}^{n}}
%EndExpansion
\ell\left(  Z_{i},\widehat{\theta}\right)  ^{2}\left(  \ell^{\theta}\left(
Z_{i},\widehat{\theta}\right)  -\overline{\ell^{\theta}\left(  \cdot
,\widehat{\theta}\right)  }\right)  \right) \\
&  +3\widehat{\mathcal{I}}^{-3}\left(  \tfrac{1}{\sqrt{n}}\tfrac{1}{n}%
%TCIMACRO{\tsum _{j=1}^{n}}%
%BeginExpansion
{\textstyle\sum_{j=1}^{n}}
%EndExpansion
\ell\left(  Z_{i},\widehat{\theta}\right)  ^{2}\left(  \ell^{\theta\theta
}\left(  Z_{i},\widehat{\theta}\right)  -\overline{\ell^{\theta\theta}\left(
\cdot,\widehat{\theta}\right)  }\right)  \right) \\
&  +6\widehat{\mathcal{I}}^{-3}\left(  \tfrac{1}{\sqrt{n}}\tfrac{1}{n}%
%TCIMACRO{\tsum _{j=1}^{n}}%
%BeginExpansion
{\textstyle\sum_{j=1}^{n}}
%EndExpansion
\ell\left(  Z_{i},\widehat{\theta}\right)  \left(  \ell^{\theta}\left(
Z_{i},\widehat{\theta}\right)  -\overline{\ell^{\theta}\left(  \cdot
,\widehat{\theta}\right)  }\right)  ^{2}\right) \\
&  =O_{p}\left(  n^{-1/2}\right)  .
\end{align*}
Here,
\begin{align*}
\sqrt{n}B_{n}  &  \equiv-3\mathcal{I}^{-3}\sqrt{n}\left(  \widehat{\mathcal{I}%
}-\mathcal{I}\right)  \mathcal{Q}_{1}\left(  \theta_{0}\right) \\
&  +\mathcal{I}^{-2}\sqrt{n}\left(  \widehat{\mathcal{Q}}_{1}\left(
\widehat{\theta}\right)  -\mathcal{Q}\left(  \theta_{0}\right)  \right) \\
&  +\mathcal{I}^{-3}\mathcal{Q}_{1}\left(  \theta_{0}\right)  \left(  \frac
{1}{\sqrt{n}}%
%TCIMACRO{\tsum _{i=1}^{n}}%
%BeginExpansion
{\textstyle\sum_{i=1}^{n}}
%EndExpansion
\left[  \ell\left(  Z_{i},\theta_{0}\right)  ^{2}-E\left[  \ell\left(
Z_{i},\theta_{0}\right)  ^{2}\right]  \right]  +\sqrt{n}\left(  \bar{m}%
_{4}\left(  \widehat{\theta}\right)  -\bar{m}_{4}\left(  \theta_{0}\right)
\right)  \right) \\
&  -4\mathcal{I}^{-3}\sqrt{n}\left(  \widehat{\mathcal{I}}-\mathcal{I}\right)
E\left[  \ell\left(  Z_{i},\theta_{0}\right)  \ell^{\theta}\left(
Z_{i},\theta\right)  \right] \\
&  +2\mathcal{I}^{-2}\left(  \sqrt{n}\left(  \bar{m}_{3}\left(
\widehat{\theta}\right)  -\bar{m}_{3}\left(  \theta_{0}\right)  \right)
\right. \\
&  \left.  +\sqrt{n}\left[  n^{-1}%
%TCIMACRO{\tsum _{i=1}^{n}}%
%BeginExpansion
{\textstyle\sum_{i=1}^{n}}
%EndExpansion
\ell\left(  Z_{i},\theta_{0}\right)  \ell^{\theta}\left(  Z_{i},\theta
_{0}\right)  -E\left[  \ell\left(  Z_{i},\theta_{0}\right)  \ell^{\theta
}\left(  Z_{i},\theta_{0}\right)  \right]  \right]  \right)
\end{align*}
It follows that
\begin{align}
E^{\ast}\left[  \widehat{\theta}_{aa}^{\ast}\right]   &  =\frac{1}{2}\frac
{1}{n}E^{\ast}\left[  \widehat{\theta}^{\epsilon\epsilon}\left(  0\right)
\right]  +O_{p}\left(  n^{-2}\right) \nonumber\\
&  =\frac{1}{2}\frac{1}{n}\left(  \mathcal{I}^{-2}\mathcal{Q}_{1}\left(
\theta_{0}\right)  +2\mathcal{I}^{-2}E\left[  \ell\left(  Z_{i},\theta
_{0}\right)  \ell^{\theta}\left(  Z_{i},\theta_{0}\right)  \right]  \right)
\nonumber\\
&  +\frac{1}{2}\frac{1}{n^{3/2}}\left(  \sqrt{n}B_{n}\right)  +O_{p}\left(
n^{-2}\right)  . \label{E*2}%
\end{align}
It can further be shown\footnote{Proo is available in Section \ref{sec-E*3} of
Supplementary Appendix II.}, we obtain
\begin{equation}
\frac{\sqrt{n}B_{n}}{2}=\mathbb{B}+o_{p}\left(  1\right)  , \label{E*3}%
\end{equation}
where%
\begin{align}
2\mathbb{B}  &  \equiv3\mathcal{I}^{-4}\mathcal{Q}_{1}\left(  \theta
_{0}\right)  ^{2}U\left(  \theta_{0}\right)  +\mathcal{I}^{-3}\mathcal{Q}%
_{2}\left(  \theta_{0}\right)  U\left(  \theta_{0}\right)  +6\mathcal{I}%
^{-4}\mathcal{Q}_{1}\left(  \theta_{0}\right)  E\left[  U_{i}\left(
\theta_{0}\right)  V_{i}\left(  \theta_{0}\right)  \right]  U\left(
\theta_{0}\right) \nonumber\\
&  +2\mathcal{I}^{-3}E\left[  U_{i}\left(  \theta_{0}\right)  W_{i}\left(
\theta_{0}\right)  \right]  U\left(  \theta_{0}\right)  +2\mathcal{I}%
^{-3}E\left[  \ell^{\theta}\left(  Z_{i},\theta_{0}\right)  ^{2}\right]
U\left(  \theta_{0}\right) \nonumber\\
&  +3\mathcal{I}^{-3}\mathcal{Q}_{1}\left(  \theta_{0}\right)  V\left(
\theta_{0}\right)  +4\mathcal{I}^{-3}E\left[  U_{i}\left(  \theta_{0}\right)
V_{i}\left(  \theta_{0}\right)  \right]  V\left(  \theta_{0}\right)
\nonumber\\
&  +\mathcal{I}^{-2}W\left(  \theta_{0}\right) \nonumber\\
&  +2\mathcal{I}^{-2}n^{-1/2}\left(
%TCIMACRO{\tsum _{i=1}^{n}}%
%BeginExpansion
{\textstyle\sum_{i=1}^{n}}
%EndExpansion
\ell\left(  Z_{i},\theta_{0}\right)  \ell^{\theta}\left(  Z_{i},\theta
_{0}\right)  -E\left[  \ell\left(  Z_{i},\theta_{0}\right)  \ell^{\theta
}\left(  Z_{i},\theta_{0}\right)  \right]  \right) \nonumber\\
&  +\mathcal{I}^{-3}\mathcal{Q}_{1}\left(  \theta_{0}\right)  n^{-1/2}\left(
%TCIMACRO{\tsum _{i=1}^{n}}%
%BeginExpansion
{\textstyle\sum_{i=1}^{n}}
%EndExpansion
\left[  \ell\left(  Z_{i},\theta_{0}\right)  ^{2}-E\left[  \ell\left(
Z_{i},\theta_{0}\right)  ^{2}\right]  \right]  \right)  \label{def-boldB}%
\end{align}

Combining (\ref{E*2}) and (\ref{E*3}), we obtain
\begin{align*}
\left(  \widehat{\theta}-E^{\ast}\left[  h_{n}\left(  \widehat{\theta}^{\ast
}-\widehat{\theta}\right)  \right]  -\theta_{0}\right)   &  =\widehat{\theta
}-\theta_{0}\\
&  -\frac{1}{2}\frac{1}{n}\left(  \mathcal{I}^{-2}\mathcal{Q}_{1}\left(
\theta_{0}\right)  +2\mathcal{I}^{-2}E\left[  \ell\left(  Z_{i},\theta
_{0}\right)  \ell^{\theta}\left(  Z_{i},\theta_{0}\right)  \right]  \right) \\
&  -\frac{1}{n^{3/2}}\mathbb{B}+o_{p}\left(  \frac{1}{n^{3/2}}\right)  ,
\end{align*}
from which the conclusion follows.

\subsection{Proof of Proposition \ref{boot-jack-higher} - Jackknife}

Write $\theta^{\epsilon}=\theta^{\epsilon}\left(  0\right)  $, etc, for
notational simplicity. Because
\begin{align*}
\widehat{\theta}  &  =\theta_{0}+\frac{1}{\sqrt{n}}\theta^{\epsilon}+\frac
{1}{2}\frac{1}{n}\theta^{\epsilon\epsilon}+\frac{1}{6}\frac{1}{n\sqrt{n}%
}\theta^{\epsilon\epsilon\epsilon}\\
&  +\frac{1}{24}\frac{1}{n^{2}}\theta^{\epsilon\epsilon\epsilon\epsilon}%
+\frac{1}{120}\frac{1}{n^{2}\sqrt{n}}\theta^{\epsilon\epsilon\epsilon
\epsilon\epsilon}+\frac{1}{720}\frac{1}{n^{3}}\theta^{\epsilon\epsilon
\epsilon\epsilon\epsilon\epsilon}\left(  \widetilde{\epsilon}\right)  ,
\end{align*}
we should have
\begin{align*}
\widehat{\theta}_{\left(  j\right)  }  &  =\theta_{0}+\frac{1}{\sqrt{n-1}%
}\theta_{\left(  j\right)  }^{\epsilon}+\frac{1}{2}\frac{1}{n-1}%
\theta_{\left(  j\right)  }^{\epsilon\epsilon}+\frac{1}{6}\frac{1}{\left(
n-1\right)  \sqrt{n-1}}\theta_{\left(  j\right)  }^{\epsilon\epsilon\epsilon
}\\
&  +\frac{1}{24}\frac{1}{\left(  n-1\right)  ^{2}}\theta_{\left(  j\right)
}^{\epsilon\epsilon\epsilon\epsilon}+\frac{1}{120}\frac{1}{\left(  n-1\right)
^{2}\sqrt{n-1}}\theta_{\left(  j\right)  }^{\epsilon\epsilon\epsilon
\epsilon\epsilon}+\frac{1}{720}\frac{1}{\left(  n-1\right)  ^{3}}%
\theta_{\left(  j\right)  }^{\epsilon\epsilon\epsilon\epsilon\epsilon\epsilon
}\left(  \widetilde{\epsilon}_{\left(  j\right)  }\right)  .
\end{align*}
Therefore,
\begin{align}
\lefteqn{\sqrt{n}\left(  \widetilde{\theta}-\theta_{0}\right)  =\sqrt
{n}\left(  n\widehat{\theta}-\frac{n-1}{n}\sum_{j=1}^{n}\widehat{\theta
}_{\left(  j\right)  }-\theta_{0}\right)  }\nonumber\\
&  =\left(  n\theta^{\epsilon}-\sqrt{n}\sqrt{n-1}\frac{1}{n}\sum_{j=1}%
^{n}\theta_{\left(  j\right)  }^{\epsilon}\right)  +\frac{1}{2}\frac{1}%
{\sqrt{n}}\left(  n\theta^{\epsilon\epsilon}-\sum_{j=1}^{n}\theta_{\left(
j\right)  }^{\epsilon\epsilon}\right) \nonumber\\ 
& +\frac{1}{6}\frac{1}{n}\left(
n\theta^{\epsilon\epsilon\epsilon}-\sqrt{\frac{n}{n-1}}\sum_{j=1}^{n}%
\theta_{\left(  j\right)  }^{\epsilon\epsilon\epsilon}\right) +\frac{1}{24}\frac{1}{n\sqrt{n}}\left(  n\theta^{\epsilon\epsilon
\epsilon\epsilon}-\frac{n^{2}}{n-1}\frac{1}{n}\sum_{j=1}^{n}\theta_{\left(
j\right)  }^{\epsilon\epsilon\epsilon\epsilon}\right) \nonumber\\
&   +\frac{1}{120}\frac
{1}{n^{2}}n\left(  \theta^{\epsilon\epsilon\epsilon\epsilon\epsilon
}\allowbreak-\frac{n\sqrt{n}}{\left(  n-1\right)  \sqrt{n-1}}\frac{1}{n}%
\sum_{j=1}^{n}\theta_{\left(  j\right)  }^{\epsilon\epsilon\epsilon
\epsilon\epsilon}\right) \nonumber\\
&  +\frac{1}{720}\frac{1}{n\sqrt{n}}\theta^{\epsilon\epsilon\epsilon
\epsilon\epsilon\epsilon}\left(  \widetilde{\epsilon}\right)  -\frac{1}%
{720}\frac{1}{\left(  n-1\right)  ^{2}\sqrt{n}}\sum_{i=1}^{n}\theta_{\left(
j\right)  }^{\epsilon\epsilon\epsilon\epsilon\epsilon\epsilon}\left(
\widetilde{\epsilon}_{\left(  j\right)  }\right)  . \label{J-expansion}%
\end{align}
It can be shown\footnote{See Section \ref{sec-r6} in Supplementary Appendix
I.} that
\begin{align}
\frac{1}{720}\frac{1}{n\sqrt{n}}\theta^{\epsilon\epsilon\epsilon
\epsilon\epsilon\epsilon}\left(  \widetilde{\epsilon}\right)  -\frac{1}%
{720}\frac{1}{\left(  n-1\right)  ^{2}\sqrt{n}}\sum_{i=1}^{n}\theta_{\left(
j\right)  }^{\epsilon\epsilon\epsilon\epsilon\epsilon\epsilon}\left(
\widetilde{\epsilon}_{\left(  j\right)  }\right)   &  =o_{p}\left(  \frac
{1}{n}\right)  ,\label{r6}\\
\frac{1}{24}\frac{1}{n\sqrt{n}}\left(  n\theta^{\epsilon\epsilon
\epsilon\epsilon}-\frac{n^{2}}{n-1}\frac{1}{n}\sum_{j=1}^{n}\theta_{\left(
j\right)  }^{\epsilon\epsilon\epsilon\epsilon}\right)   &  =o_{p}\left(
\frac{1}{n}\right)  ,\label{r4}\\
\frac{1}{120}\frac{1}{n^{2}}n\left(  \theta^{\epsilon\epsilon\epsilon
\epsilon\epsilon}-\frac{n\sqrt{n}}{\left(  n-1\right)  \sqrt{n-1}}\frac{1}%
{n}\sum_{j=1}^{n}\theta_{\left(  j\right)  }^{\epsilon\epsilon\epsilon
\epsilon\epsilon}\right)   &  =o_{p}\left(  \frac{1}{n}\right)  . \label{r5}%
\end{align}
It can also be shown\footnote{It can be established by combining (\ref{V1}),
(\ref{V2}), (\ref{V3}) with Lemmas \ref{Vjack1}, \ref{Vjack2}, \ref{Vjack3} in
Section \ref{sec-bootm} of Supplementary Appendix II} that%
\begin{align}
\lefteqn{\left(  n\theta^{\epsilon}-\sqrt{n}\sqrt{n-1}\frac{1}{n}\sum
_{j=1}^{n}\theta_{\left(  j\right)  }^{\epsilon}\right)  +\frac{1}{2}\frac
{1}{\sqrt{n}}\left(  n\theta^{\epsilon\epsilon}-\sum_{j=1}^{n}\theta_{\left(
j\right)  }^{\epsilon\epsilon}\right)  }\nonumber\\
&  +\frac{1}{6}\frac{1}{n}\left(  n\theta^{\epsilon\epsilon\epsilon}%
-\sqrt{\frac{n}{n-1}}\sum_{j=1}^{n}\theta_{\left(  j\right)  }^{\epsilon
\epsilon\epsilon}\right) \nonumber\\
&  =\theta^{\epsilon}+\frac{1}{\sqrt{n}}\left(  \frac{1}{2}\theta
^{\epsilon\epsilon}-b\left(  \theta_{0}\right)  \right)  +\frac{1}{6}\frac
{1}{n}\theta^{\epsilon\epsilon\epsilon}-n^{-1}\left(  \mathbb{J}+o_{p}\left(
1\right)  \right)  , \label{r1-3}%
\end{align}
where
\begin{align}
2\mathbb{J}  &  =\left(  \mathcal{I}^{-3}E\left[  \ell^{\theta\theta\theta
}\right]  +3\mathcal{I}^{-4}\left(  E\left[  \ell^{\theta\theta}\right]
\right)  ^{2}\right)  U\left(  \theta_{0}\right)  +6\mathcal{I}^{-4}E\left[
\ell^{\theta\theta}\right]  E\left[  U_{i}V_{i}\right]  U\left(  \theta
_{0}\right) \nonumber\\
&  +\frac{3}{\mathcal{I}^{3}}E\left[  \ell^{\theta\theta}\right]  V\left(
\theta_{0}\right)  +\frac{2}{\mathcal{I}^{3}}E\left[  U_{i}W_{i}\right]
U\left(  \theta_{0}\right)  +\frac{1}{\mathcal{I}^{2}}W(\theta_{0})\nonumber\\
&  +\frac{2}{\mathcal{I}^{3}}E\left[  V_{i}^{2}\right]  U\left(  \theta
_{0}\right)  +\frac{4}{\mathcal{I}^{3}}E\left[  U_{i}V_{i}\right]
V(\theta_{0})\nonumber\\
&  +\mathcal{I}^{-3}E\left[  \ell^{\theta\theta}\right]  n^{-1/2}\left(
%TCIMACRO{\tsum _{i=1}^{n}}%
%BeginExpansion
{\textstyle\sum_{i=1}^{n}}
%EndExpansion
\left[  \ell\left(  Z_{i},\theta_{0}\right)  ^{2}-E\left[  \ell\left(
Z_{i},\theta_{0}\right)  ^{2}\right]  \right]  \right) \nonumber\\
&  +2\mathcal{I}^{-2}n^{-1/2}\left(
%TCIMACRO{\tsum _{i=1}^{n}}%
%BeginExpansion
{\textstyle\sum_{i=1}^{n}}
%EndExpansion
\ell\left(  Z_{i},\theta_{0}\right)  \ell^{\theta}\left(  Z_{i},\theta
_{0}\right)  -E\left[  \ell\left(  Z_{i},\theta_{0}\right)  \ell^{\theta
}\left(  Z_{i},\theta_{0}\right)  \right]  \right)  . \label{def-boldJ}%
\end{align}
Combining (\ref{J-expansion}) with (\ref{r6}), (\ref{r5}), (\ref{r4}), and
(\ref{r1-3}), we obtain
\begin{align*}
\sqrt{n}\left(  \widetilde{\theta}-\theta_{0}\right)   &  =\theta^{\epsilon
}+\frac{1}{2}\frac{1}{\sqrt{n}}\left\{  \mathcal{I}^{-3}E\left[  \ell
^{\theta\theta}\right]  \left(  U(\theta_{0})^{2}-E\left[  U_{i}\left(
\theta_{0}\right)  ^{2}\right]  \right)  \right. \\
&  \left.  +\frac{2}{\mathcal{I}^{2}}\left(  U\left(  \theta_{0}\right)
V\left(  \theta_{0}\right)  -E\left[  U_{i}\left(  \theta_{0}\right)
V_{i}\left(  \theta_{0}\right)  \right]  \right)  \right\} \\
&  +\frac{1}{6}\frac{1}{n}\theta^{\epsilon\epsilon\epsilon}-\frac{1}%
{n}\mathbb{J}+o_{p}\left(  \frac{1}{n}\right)  .
\end{align*}
The conclusion follows by comparing (\ref{def-boldB}) and (\ref{def-boldJ}).

\subsection{Proof of Proposition \ref{HigherOrderEfficiency}}

For the proofs in this section it is useful to note that the bias term can be represented as
\begin{equation}
b\left(  \theta\right)  \equiv\tau\left(  \int m\left(  z,\theta\right)
f\left(  z,\theta\right)  dz\right)  \label{bias-PW}%
\end{equation}
where $\tau\left(  t_{1},t_{2},t_{3},t_{4}\right)  \equiv\frac{1}{2t_{1}^{2}%
}t_{2}+\frac{1}{t_{1}^{2}}t_{3}$, $t_{1}\equiv-\int\ell^{\theta}\left(
z,\theta\right)  f\left(  z,\theta\right)  dz,$ $t_{2}\equiv\int\ell
^{\theta\theta}\left(  z,\theta\right)  f\left(  z,\theta\right)  dz,$
$t_{3}\equiv\int\ell\left(  z,\theta\right)  \ell^{\theta}\left(
z,\theta\right)  f\left(  z,\theta\right)  dz,$ $t_{4}\equiv\int\ell\left(
z,\theta\right)  ^{2}f\left(  z,\theta\right)  dz,$ and\footnote{It is not
necessary to introduce the 4th component of $m\left(  z,\theta\right)  $ here,
but it makes it convenient to discuss $\widetilde{\theta}_{a}$ later.}
\[
m\left(  z,\theta\right)  \equiv\left(  -\ell^{\theta}\left(  z,\theta\right)
,\ell^{\theta\theta}\left(  z,\theta\right)  ,\ell\left(  z,\theta\right)
\ell^{\theta}\left(  z,\theta\right)  ,\left(  \ell\left(  z,\theta\right)
\right)  ^{2}\right)  ^{\prime}.
\]
This leads to the two analytical bias corrected estimators
\[
\widehat{\theta}_{c}\equiv\widehat{\theta}-\frac{b\left(  \widehat{\theta
}\right)  }{n}=\widehat{\theta}-\frac{1}{n}\left(  \frac{\int\ell
^{\theta\theta}\left(  z,\widehat{\theta}\right)  f\left(  z,\widehat{\theta
}\right)  dz}{2\left(  \int\ell^{\theta}\left(  z,\widehat{\theta}\right)
f\left(  z,\widehat{\theta}\right)  dz\right)  ^{2}}+\frac{\int\ell^{\theta
}\left(  z,\widehat{\theta}\right)  \ell^{\theta\theta}\left(  z,\theta
\right)  f\left(  z,\widehat{\theta}\right)  dz}{\left(  \int\ell^{\theta
}\left(  z,\widehat{\theta}\right)  f\left(  z,\widehat{\theta}\right)
dz\right)  ^{2}}\right)  .
\]

and, for $\widehat{b}\left(  \theta\right)
\equiv\tau\left(  n^{-1}\sum_{i}m\left(  Z_{i},\theta\right)  \right)  $, 
\[
\widehat{\theta}_{a}\equiv\widehat{\theta}-\frac{\widehat{b}\left(
\widehat{\theta}\right)  }{n}=\widehat{\theta}-\frac{1}{n}\left(
\frac{\left(  n^{-1}\sum_{i}\ell^{\theta\theta}\left(  Z_{i},\widehat{\theta
}\right)  \right)  }{2\left(  n^{-1}\sum_{i}\ell^{\theta}\left(
Z_{i},\widehat{\theta}\right)  \right)  ^{2}}+\frac{\left(  n^{-1}\sum_{i}%
\ell\left(  Z_{i},\widehat{\theta}\right)  \ell^{\theta}\left(  Z_{i}%
,\widehat{\theta}\right)  \right)  }{\left(  n^{-1}\sum_{i}\ell^{\theta
}\left(  Z_{i},\widehat{\theta}\right)  \right)  ^{2}}\right)  .
\]
Let
\begin{equation}
\overline{m}\left(  \theta\right)  \equiv E\left[  m\left(  Z_{i}%
,\theta\right)  \right]  =\int m\left(  z,\theta\right)  f\left(  z,\theta
_{0}\right)  dz \label{m-bar}%
\end{equation}
with j-th element $\overline{m}_{j}\left(  \theta\right)  $ and write
$\overline{m}=\overline{m}\left(  \theta_{0}\right)  $, $\tau_{m}%
\equiv\partial\tau\left(  \bar{m}\right)  /\partial m^{\prime}$, $M\equiv
E\left[  \frac{\partial m\left(  Z_{i},\theta_{0}\right)  }{\partial\theta
}\right]  =\int\frac{\partial m\left(  z,\theta_{0}\right)  }{\partial\theta
}f\left(  z,\theta_{0}\right)  dz$, and $\Lambda\equiv E\left[  m\left(
Z_{i},\theta_{0}\right)  \ell\left(  Z_{i},\theta_{0}\right)  \right]  $.

An expansion of $b\left(  \widehat{\theta}\right)  $ gives
\begin{align*}
b\left(  \widehat{\theta}\right)  -b\left(  \theta_{0}\right)   &
=n^{-1/2}\tau_{m}\left(  \left.  \frac{\partial\left(  \int m\left(
z,\theta\right)  f\left(  z,\theta\right)  dz\right)  }{\partial\theta
}\right\vert _{\theta=\theta_{0}}\right)  \mathcal{I}^{-1}U\left(  \theta
_{0}\right)  +O_{p}\left(  n^{-1}\right) \\
&  =n^{-1/2}\tau_{m}\left(  M+\Lambda\right)  \mathcal{I}^{-1}U\left(
\theta_{0}\right)  +O_{p}\left(  n^{-1}\right)  .
\end{align*}
A similar expansion for $\widehat{b}\left(  \widehat{\theta}\right)  $ gives
\begin{align*}
\widehat{b}\left(  \widehat{\theta}\right)  -b\left(  \theta_{0}\right)   &
=\widehat{b}\left(  \widehat{\theta}\right)  -\widehat{b}\left(  \theta
_{0}\right)  +\widehat{b}\left(  \theta_{0}\right)  -b\left(  \theta
_{0}\right) \\
&  =\tau_{m}\left(  \frac{1}{n}\sum_{i}m\left(  z_{i},\widehat{\theta}\right)
-\frac{1}{n}\sum_{i}m\left(  z_{i},\theta_{0}\right)  \right) \\
&  +\tau_{m}\left(  \frac{1}{n}\sum_{i}m\left(  z_{i},\theta_{0}\right)
-E\left[  m\left(  z_{i},\theta_{0}\right)  \right]  \right)  +O_{p}\left(
n^{-1}\right) \\
&  =n^{-1/2}\tau_{m}M\mathcal{I}^{-1}U\left(  \theta_{0}\right)  +\tau
_{m}\left(  n^{-1}%
%TCIMACRO{\tsum \nolimits_{i}}%
%BeginExpansion
{\textstyle\sum\nolimits_{i}}
%EndExpansion
\left(  m\left(  z_{i},\theta_{0}\right)  -\overline{m}\right)  \right)
+O_{p}\left(  n^{-1}\right)  .
\end{align*}
Plugging these expansions into that for $\widehat{\theta}$ gives
\begin{align*}
\sqrt{n}\left(  \widehat{\theta}_{c}-\theta_{0}\right)   &  =\sqrt{n}\left(
\widehat{\theta}-\theta_{0}\right)  -\frac{1}{\sqrt{n}}b\left(
\widehat{\theta}\right) \\
&  =\mathcal{I}^{-1}U\left(  \theta_{0}\right)  +\frac{1}{\sqrt{n}}\left(
\frac{1}{2}\theta^{\epsilon\epsilon}(0)-b\left(  \theta_{0}\right)  \right) \\
&  +\frac{1}{n}\left(  \frac{1}{6}\theta^{\epsilon\epsilon\epsilon}%
(0)-\tau_{m}\left(  M+\Lambda\right)  \mathcal{I}^{-1}U\left(  \theta
_{0}\right)  \right)  +O_{p}\left(  \frac{1}{n^{3/2}}\right)
\end{align*}
and
\begin{align*}
\sqrt{n}\left(  \widehat{\theta}_{a}-\theta_{0}\right)   &  =\sqrt{n}\left(
\widehat{\theta}-\theta_{0}\right)  -\frac{1}{\sqrt{n}}\widehat{b}\left(
\widehat{\theta}\right) \\
&  =\mathcal{I}^{-1}U\left(  \theta_{0}\right)  +\frac{1}{\sqrt{n}}\left(
\frac{1}{2}\theta^{\epsilon\epsilon}(0)-b\left(  \theta_{0}\right)  \right) \\
&  +\frac{1}{n}\left(  \frac{1}{6}\theta^{\epsilon\epsilon\epsilon}%
(0)-\tau_{m}\left(  M\mathcal{I}^{-1}U\left(  \theta_{0}\right)  +n^{-1/2}%
%TCIMACRO{\tsum _{i}}%
%BeginExpansion
{\textstyle\sum_{i}}
%EndExpansion
\left(  m\left(  z_{i},\theta_{0}\right)  -\overline{m}\right)  \right)
\right)  +O_{p}\left(  \frac{1}{n^{3/2}}\right)  .
\end{align*}

\subsection{Proof of Proposition \ref{HigherOrderEfficiency3}}

Let $\zeta\left(  t_{1},t_{2},t_{3},t_{4}\right)  \equiv\frac
{1}{2t_{1}^{3}}t_{2}t_{4}+\frac{1}{t_{1}^{2}}t_{3}$, $\widetilde{b}\left(
\theta\right)  \equiv\zeta\left(  n^{-1}\sum_{i}m\left(  Z_{i},\theta\right)
\right)  $, so that%
\begin{align*}
\widetilde{\theta}_{a}\equiv\widehat{\theta}-\frac{\widetilde{b}\left(
\widehat{\theta}\right)  }{n} &=\widehat{\theta}-\frac{1}{n}\bigg[
-\frac{\left(  n^{-1}\sum_{i}\ell^{\theta\theta}\left(  Z_{i},\widehat{\theta
}\right)  \right)  \left(  n^{-1}\sum_{i}\left(  \ell\left(  Z_{i}%
,\widehat{\theta}\right)  \right)  ^{2}\right)  }{2\left(  n^{-1}\sum_{i}%
\ell^{\theta}\left(  Z_{i},\widehat{\theta}\right)  \right)  ^{3}}%
\\ &+\frac{\left(  n^{-1}\sum_{i}\ell\left(  Z_{i},\widehat{\theta}\right)
\ell^{\theta}\left(  Z_{i},\widehat{\theta}\right)  \right)  }{\left(
n^{-1}\sum_{i}\ell^{\theta}\left(  Z_{i},\widehat{\theta}\right)  \right)
^{2}}\bigg].
\end{align*}

An expansion for $\widetilde{b}\left(  \widehat{\theta}\right)  $ gives
\begin{align*}
\widetilde{b}\left(  \widehat{\theta}\right)  -b\left(  \theta_{0}\right)   &
=\widetilde{b}\left(  \widehat{\theta}\right)  -\widetilde{b}\left(
\theta_{0}\right)  +\widetilde{b}\left(  \theta_{0}\right)  -b\left(
\theta_{0}\right) \\
&  =\zeta_{m}\left(  \frac{1}{n}\sum_{i}m\left(  z_{i},\widehat{\theta
}\right)  -\frac{1}{n}\sum_{i}m\left(  z_{i},\theta_{0}\right)  \right) \\
&  +\zeta_{m}\left(  \frac{1}{n}\sum_{i}m\left(  z_{i},\theta_{0}\right)
-E\left[  m\left(  z_{i},\theta_{0}\right)  \right]  \right)  +O_{p}\left(
n^{-1}\right) \\
&  =n^{-1/2}\zeta_{m}M\mathcal{I}^{-1}U\left(  \theta_{0}\right)  +\zeta
_{m}\left(  n^{-1}%
%TCIMACRO{\tsum \nolimits_{i}}%
%BeginExpansion
{\textstyle\sum\nolimits_{i}}
%EndExpansion
\left(  m\left(  z_{i},\theta_{0}\right)  -\overline{m}\right)  \right)
+O_{p}\left(  n^{-1}\right)  ,
\end{align*}
and%
\begin{align*}
\sqrt{n}\left(  \widetilde{\theta}_{a}-\theta_{0}\right)   &  =\sqrt{n}\left(
\widehat{\theta}-\theta_{0}\right)  -\frac{1}{\sqrt{n}}\widetilde{b}\left(
\widehat{\theta}\right) \\
&  =\mathcal{I}^{-1}U\left(  \theta_{0}\right)  +\frac{1}{\sqrt{n}}\left(
\frac{1}{2}\theta^{\epsilon\epsilon}(0)-b\left(  \theta_{0}\right)  \right) \\
&  +\frac{1}{n}\left(  \frac{1}{6}\theta^{\epsilon\epsilon\epsilon}%
(0)-\zeta_{m}\left(  M\mathcal{I}^{-1}U\left(  \theta_{0}\right)  +n^{-1/2}%
%TCIMACRO{\tsum _{i}}%
%BeginExpansion
{\textstyle\sum_{i}}
%EndExpansion
\left(  m\left(  z_{i},\theta_{0}\right)  -\overline{m}\right)  \right)
\right)  +O_{p}\left(  \frac{1}{n^{3/2}}\right)
\end{align*}
where $\zeta_{m}\equiv\partial\zeta\left(  \bar{m}\right)  /\partial
m^{\prime}$. We note that
\[
M=\left(  -E\left[  \ell^{\theta\theta}\right]  ,E\left[  \ell^{\theta
\theta\theta}\right]  ,E\left[  \left(  \ell^{\theta}\right)  ^{2}+\ell
\ell^{\theta\theta}\right]  ,2E\left[  \ell\ell^{\theta}\right]  \right)
^{\prime},
\]
and%
\begin{align*}
\zeta_{m}  &  =\left(  -\frac{3E\left[  \ell^{\theta\theta}\right]  E\left[
\ell^{2}\right]  }{2\mathcal{I}^{4}}-\frac{2E\left[  \ell\ell^{\theta}\right]
}{\mathcal{I}^{3}},\frac{E\left[  \ell^{2}\right]  }{2\mathcal{I}^{3}}%
,\frac{1}{\mathcal{I}^{2}},\frac{E\left[  \ell^{\theta\theta}\right]
}{2\mathcal{I}^{3}}\right)  ^{\prime}\\
&  =\left(  -\frac{3E\left[  \ell^{\theta\theta}\right]  }{2\mathcal{I}^{3}%
}-\frac{2E\left[  \ell\ell^{\theta}\right]  }{\mathcal{I}^{3}},\frac
{1}{2\mathcal{I}^{2}},\frac{1}{\mathcal{I}^{2}},\frac{E\left[  \ell
^{\theta\theta}\right]  }{2\mathcal{I}^{3}}\right)  ,
\end{align*}
so%
\[
\zeta_{m}M\mathcal{I}^{-1}U\left(  \theta_{0}\right)  =\left(  \frac{3\left(
E\left[  \ell^{\theta\theta}\right]  \right)  ^{2}}{2\mathcal{I}^{4}}%
+\frac{3E\left[  \ell^{\theta\theta}\right]  E\left[  \ell\ell^{\theta
}\right]  }{\mathcal{I}^{4}}+\frac{E\left[  \ell^{\theta\theta\theta}\right]
}{2\mathcal{I}^{3}}+\frac{E\left[  \left(  \ell^{\theta}\right)  ^{2}\right]
}{\mathcal{I}^{3}}+\frac{E\left[  \ell\ell^{\theta\theta}\right]
}{\mathcal{I}^{3}}\right)  U\left(  \theta_{0}\right)  ,
\]%
\begin{align*}
\zeta_{m}\left(  n^{-1}%
%TCIMACRO{\tsum \nolimits_{i}}%
%BeginExpansion
{\textstyle\sum\nolimits_{i}}
%EndExpansion
\left(  m\left(  z_{i},\theta_{0}\right)  -\overline{m}\right)  \right)   &
=\left(  \frac{3E\left[  \ell^{\theta\theta}\right]  }{2\mathcal{I}^{3}}%
+\frac{2E\left[  \ell\ell^{\theta}\right]  }{\mathcal{I}^{3}}\right)  V\left(
\theta_{0}\right)  +\frac{1}{2\mathcal{I}^{2}}W\left(  \theta_{0}\right) \\
&  +\frac{1}{\mathcal{I}^{2}}n^{-1/2}\left(
%TCIMACRO{\tsum _{i=1}^{n}}%
%BeginExpansion
{\textstyle\sum_{i=1}^{n}}
%EndExpansion
\ell\left(  Z_{i},\theta_{0}\right)  \ell^{\theta}\left(  Z_{i},\theta
_{0}\right)  -E\left[  \ell\left(  Z_{i},\theta_{0}\right)  \ell^{\theta
}\left(  Z_{i},\theta_{0}\right)  \right]  \right) \\
&  +\frac{E\left[  \ell^{\theta\theta}\right]  }{2\mathcal{I}^{3}}%
n^{-1/2}\left(
%TCIMACRO{\tsum _{i=1}^{n}}%
%BeginExpansion
{\textstyle\sum_{i=1}^{n}}
%EndExpansion
\left[  \ell\left(  Z_{i},\theta_{0}\right)  ^{2}-E\left[  \ell\left(
Z_{i},\theta_{0}\right)  ^{2}\right]  \right]  \right)  ,
\end{align*}
and their sum can be easily verified to be equal to $\mathbb{J}=\mathbb{B}$.

\subsection{Proof of Corollary \ref{ET1T3}}

Note first that
\begin{align*}
\lefteqn{E\left[  \mathbb{A}\theta^{\epsilon}(0)\right]  =E\left[  \tau
_{m}\left(  M\mathcal{I}^{-1}U\left(  \theta_{0}\right)  +n^{-1/2}%
%TCIMACRO{\tsum _{i}}%
%BeginExpansion
{\textstyle\sum_{i}}
%EndExpansion
\left(  m\left(  z_{i},\theta_{0}\right)  -\overline{m}\right)  \right)
\cdot\left(  \mathcal{I}^{-1}U\left(  \theta_{0}\right)  \right)  \right]  }\\
&  =\tau_{m}M\mathcal{I}^{-1}E\left[  U\left(  \theta_{0}\right)  ^{2}\right]
\mathcal{I}^{-1}+\tau_{m}E\left[  \left(  n^{-1/2}%
%TCIMACRO{\tsum _{i}}%
%BeginExpansion
{\textstyle\sum_{i}}
%EndExpansion
\left(  m\left(  z_{i},\theta_{0}\right)  -\overline{m}\right)  \right)
U\left(  \theta_{0}\right)  \right]  \mathcal{I}^{-1}\\
&  =\tau_{m}\left(  M+\Lambda\right)  \mathcal{I}^{-1},
\end{align*}
and%
\[
E\left[  \mathbb{C}\theta^{\epsilon}(0)\right]  =E\left[  \left(  \tau
_{m}\left(  M+\Lambda\right)  \mathcal{I}^{-1}U\left(  \theta_{0}\right)
\right)  \left(  \mathcal{I}^{-1}U\left(  \theta_{0}\right)  \right)  \right]
=\tau_{m}\left(  M+\Lambda\right)  \mathcal{I}^{-1}.
\]

To show that $E\left[  \mathbb{B}\theta^{\epsilon}\left(  0\right)  \right]
=\tau_{m}\left(  M+\Lambda\right)  \mathcal{I}^{-1}$, it suffices to prove
that $E\left[  \mathbb{B}U\left(  \theta_{0}\right)  \right]  =\tau_{m}\left(
M+\Lambda\right)  $. We first note that%
\begin{align*}
E\left[  2\mathbb{B}U\left(  \theta_{0}\right)  \right]   &  =3\mathcal{I}%
^{-3}\left(  E\left[  \ell^{\theta\theta}\right]  \right)  ^{2}+\mathcal{I}%
^{-2}E\left[  \ell^{\theta\theta\theta}\right]  +6\mathcal{I}^{-3}E\left[
\ell^{\theta\theta}\right]  E\left[  \ell\ell^{\theta}\right] \\
&  +2\mathcal{I}^{-2}E\left[  \ell\ell^{\theta\theta}\right]  +2\mathcal{I}%
^{-2}E\left[  \left(  \ell^{\theta}\right)  ^{2}\right]  +3\mathcal{I}%
^{-3}E\left[  \ell^{\theta\theta}\right]  E\left[  \ell\ell^{\theta}\right]
+4\mathcal{I}^{-3}\left(  E\left[  \ell\ell^{\theta}\right]  \right)  ^{2}\\
&  +\mathcal{I}^{-2}E\left[  \ell\ell^{\theta\theta}\right]  +2\mathcal{I}%
^{-2}E\left[  \ell^{2}\ell^{\theta}\right]  +\mathcal{I}^{-3}E\left[
\ell^{\theta\theta}\right]  E\left[  \ell^{3}\right]  .
\end{align*}
We can then use the information equality $E\left[  \ell^{3}\right]  =-E\left[
\ell^{\theta\theta}\right]  -3E\left[  \ell\ell^{\theta}\right]  $, along with
the characterizations
\begin{align*}
M  &  =\left(  -E\left[  \ell^{\theta\theta}\right]  ,E\left[  \ell
^{\theta\theta\theta}\right]  ,E\left[  \left(  \ell^{\theta}\right)
^{2}+\ell\ell^{\theta\theta}\right]  ,2E\left[  \ell\ell^{\theta}\right]
\right)  ^{\prime},\\
\Lambda &  =\left(  -E\left[  \ell\ell^{\theta}\right]  ,E\left[  \ell
\ell^{\theta\theta}\right]  ,E\left[  \ell^{2}\ell^{\theta}\right]  ,E\left[
\ell^{3}\right]  \right)  ^{\prime},\\
\tau_{m}  &  =\left(  -\frac{E\left[  \ell^{\theta\theta}\right]  +2E\left[
\ell\ell^{\theta}\right]  }{\left(  E\left[  -\ell^{\theta}\right]  \right)
^{3}},\frac{1}{2\left(  E\left[  -\ell^{\theta}\right]  \right)  ^{2}}%
,\frac{1}{\left(  E\left[  -\ell^{\theta}\right]  \right)  ^{2}},0\right)
^{\prime}\\
&  =\left(  -\frac{E\left[  \ell^{\theta\theta}\right]  +2E\left[  \ell
\ell^{\theta}\right]  }{\mathcal{I}^{3}},\frac{1}{2\mathcal{I}^{2}},\frac
{1}{\mathcal{I}^{2}},0\right)  ^{\prime}%
\end{align*}
to conclude that $E\left[  \mathbb{B}U\left(  \theta_{0}\right)  \right]
=\tau_{m}\left(  M+\Lambda\right)  $.

\subsection{Proof of Proposition \ref{higher_var_ss} \label{HOV-SS}}

Here we derive the result for the two-sample version of the split-sample jackknife. Suppose that
\[
n=2m
\]
for some integer $k$ and $m$. From%
\[
E\left[  \widehat{\theta}\right]  =\theta+\frac{1}{n}b
\]
we see that%
\[
E\left[  \widehat{\theta}^{\left(  s\right)  }\right]  =\theta+\frac{b}%
{m}=\theta+\frac{1}{n/2}b=\theta+\frac{2}{n}b
\]
for $s=1,2$, where $\widehat{\theta}^{\left(  1\right)  }$ and
$\widehat{\theta}^{\left(  2\right)  }$ denote the estimators based on the
first and second subsamples of size $m$. We therefore have%
\begin{align*}
E\left[  2\widehat{\theta}\right]   &  =2\theta+\frac{2}{n}b\\
E\left[  \frac{1}{2}\left(  \widehat{\theta}^{\left(  1\right)  }%
+\widehat{\theta}^{\left(  2\right)  }\right)  \right]   &  =\theta+\frac
{2}{n}b
\end{align*}
so%
\[
E\left[  2\widehat{\theta}-\frac{1}{2}\left(  \widehat{\theta}^{\left(
1\right)  }+\widehat{\theta}^{\left(  2\right)  }\right)  \right]  =\theta
\]
which can be used as a basis of bias correction. We let
\[
\widehat{\theta}_{SS}=2\widehat{\theta}-\frac{1}{2}\left(  \widehat{\theta
}^{\left(  1\right)  }+\widehat{\theta}^{\left(  2\right)  }\right)  .
\]

From (\ref{V1}), (\ref{V2}), and (\ref{V3}), we can write%
\[
\sqrt{n}\left(  \widehat{\theta}-\theta_{0}\right)  =T_{1}+\frac{1}{\sqrt{n}%
}T_{2}+\frac{1}{n}T_{3}+o_{p}\left(  n^{-1}\right)
\]
where%
\begin{align}
T_{1}=  &  \frac{1}{\sqrt{n}}\sum_{i=1}^{n}Y_{1,i},\nonumber\\
T_{2}=  &  \left(  \frac{1}{\sqrt{n}}\sum_{i=1}^{n}Y_{1,i}\right)  \left(
\frac{1}{\sqrt{n}}\sum_{i=1}^{n}Y_{2,i}\right)  ,\label{T2-X1X2}\\
T_{3}=  &  \left(  \frac{1}{\sqrt{n}}\sum_{i=1}^{n}Y_{1,i}\right)  \left(
\frac{1}{\sqrt{n}}\sum_{i=1}^{n}Y_{3,i}\right)  \left(  \frac{1}{\sqrt{n}}%
\sum_{i=1}^{n}Y_{4,i}\right) \label{T3-X1X3X4}\\
&  +\left(  \frac{1}{\sqrt{n}}\sum_{i=1}^{n}Y_{1,i}\right)  \left(  \frac
{1}{\sqrt{n}}\sum_{i=1}^{n}Y_{5,i}\right)  \left(  \frac{1}{\sqrt{n}}%
\sum_{i=1}^{n}Y_{6,i}\right)  . \label{T3-X1X5X6}%
\end{align}
Here, we defined%
\begin{align*}
Y_{1,i}  &  \equiv\mathcal{I}^{-1}U_{i}\left(  \theta_{0}\right)  ,\\
Y_{2,i}  &  \equiv\frac{1}{2}\mathcal{I}^{-2}\mathcal{Q}_{1}\left(  \theta
_{0}\right)  U_{i}\left(  \theta_{0}\right)  +\mathcal{I}^{-1}V_{i}\left(
\theta_{0}\right)  ,\\
Y_{3,i}  &  \equiv Y_{1,i},\\
Y_{4,i}  &  \equiv\left(  \frac{1}{6}\mathcal{I}^{-2}\mathcal{Q}_{2}\left(
\theta_{0}\right)  +\frac{1}{2}\mathcal{I}^{-3}\mathcal{Q}_{1}\left(
\theta_{0}\right)  ^{2}\right)  U_{i}\left(  \theta_{0}\right)  +\frac{3}%
{2}\mathcal{I}^{-2}\mathcal{Q}_{1}\left(  \theta_{0}\right)  V_{i}\left(
\theta_{0}\right)  +\frac{1}{2}\mathcal{I}^{-1}W_{i}\left(  \theta_{0}\right)
,\\
Y_{5,i}  &  \equiv Y_{6,i}=\mathcal{I}^{-1}V_{i}\left(  \theta_{0}\right)  .
\end{align*}
Note that they all have zero means.

Let
\begin{align*}
\mathcal{Y}_{1}  &  =\frac{1}{\sqrt{n}}\sum_{i=1}^{n}Y_{1,i},\quad
\mathcal{Y}_{1}^{\left(  1\right)  }=\frac{1}{\sqrt{m}}\sum_{i=1}^{m}%
Y_{1,i},\quad\mathcal{Y}_{1}^{\left(  2\right)  }=\frac{1}{\sqrt{m}}%
\sum_{i=m+1}^{n}Y_{1,i}\\
\mathcal{Y}_{2}  &  =\frac{1}{\sqrt{n}}\sum_{i=1}^{n}Y_{2,i},\quad
\mathcal{Y}_{2}^{\left(  1\right)  }=\frac{1}{\sqrt{m}}\sum_{i=1}^{m}%
Y_{2,i},\quad\mathcal{Y}_{2}^{\left(  2\right)  }=\frac{1}{\sqrt{m}}%
\sum_{i=m+1}^{n}Y_{2,i}\\
\mathcal{Y}_{3}  &  =\frac{1}{\sqrt{n}}\sum_{i=1}^{n}Y_{3,i},\quad
\mathcal{Y}_{3}^{\left(  1\right)  }=\frac{1}{\sqrt{m}}\sum_{i=1}^{m}%
Y_{3,i},\quad\mathcal{Y}_{3}^{\left(  2\right)  }=\frac{1}{\sqrt{m}}%
\sum_{i=m+1}^{n}Y_{3,i}\\
\mathcal{Y}_{4}  &  =\frac{1}{\sqrt{n}}\sum_{i=1}^{n}Y_{4,i},\quad
\mathcal{Y}_{4}^{\left(  1\right)  }=\frac{1}{\sqrt{m}}\sum_{i=1}^{m}%
Y_{4,i},\quad\mathcal{Y}_{4}^{\left(  2\right)  }=\frac{1}{\sqrt{m}}%
\sum_{i=m+1}^{n}Y_{4,i}\\
\mathcal{Y}_{5}  &  =\frac{1}{\sqrt{n}}\sum_{i=1}^{n}Y_{5,i},\quad
\mathcal{Y}_{5}^{\left(  1\right)  }=\frac{1}{\sqrt{m}}\sum_{i=1}^{m}%
Y_{5,i},\quad\mathcal{Y}_{5}^{\left(  2\right)  }=\frac{1}{\sqrt{m}}%
\sum_{i=m+1}^{n}Y_{5,i}\\
\mathcal{Y}_{6}  &  =\frac{1}{\sqrt{n}}\sum_{i=1}^{n},\quad\mathcal{Y}%
_{6}^{\left(  1\right)  }=\frac{1}{\sqrt{m}}\sum_{i=1}^{m}Y_{6,i}%
,\quad\mathcal{Y}_{6}^{\left(  2\right)  }=\frac{1}{\sqrt{m}}\sum_{i=m+1}%
^{n}Y_{6,i}%
\end{align*}
Note that we can write%
\begin{equation}
\sqrt{n}\left(  \widehat{\theta}-\theta_{0}\right)  =\mathcal{Y}_{1}+\frac
{1}{\sqrt{n}}\mathcal{Y}_{1}\mathcal{Y}_{2}+\frac{1}{n}\left(  \mathcal{Y}%
_{1}\mathcal{Y}_{3}\mathcal{Y}_{4}+\mathcal{Y}_{1}\mathcal{Y}_{5}%
\mathcal{Y}_{6}\right)  +o_{p}\left(  n^{-1}\right)  \label{theta-abcdfg}%
\end{equation}

We prove

\begin{lemma}
\label{lem-ss-expansion}%
\[
\sqrt{n}\left(  \widehat{\theta}_{SS}-\theta\right)  =T_{1,SS}+\frac{1}%
{\sqrt{n}}T_{2,SS}+\frac{1}{n}T_{3,SS}+o_{p}\left(  n^{-1}\right)  ,
\]
where%
\begin{align*}
T_{1,SS}  &  =\frac{\mathcal{Y}_{1}^{\left(  1\right)  }+\mathcal{Y}%
_{1}^{\left(  2\right)  }}{\sqrt{2}}=\mathcal{Y}_{1}=T_{1},\\
T_{2,SS}  &  =\left(  \mathcal{Y}_{1}^{\left(  1\right)  }+\mathcal{Y}%
_{1}^{\left(  2\right)  }\right)  \left(  \mathcal{Y}_{2}^{\left(  1\right)
}+\mathcal{Y}_{2}^{\left(  2\right)  }\right)  -\left(  \mathcal{Y}%
_{1}^{\left(  1\right)  }\mathcal{Y}_{2}^{\left(  1\right)  }+\mathcal{Y}%
_{1}^{\left(  2\right)  }\mathcal{Y}_{2}^{\left(  2\right)  }\right)
=\mathcal{Y}_{1}^{\left(  1\right)  }\mathcal{Y}_{2}^{\left(  2\right)
}+\mathcal{Y}_{1}^{\left(  2\right)  }\mathcal{Y}_{2}^{\left(  1\right)  },
\end{align*}
and%
\begin{align}
T_{3,SS}=  &  \frac{1}{\sqrt{2}}\left(  \mathcal{Y}_{1}^{\left(  1\right)
}+\mathcal{Y}_{1}^{\left(  2\right)  }\right)  \left(  \mathcal{Y}%
_{3}^{\left(  1\right)  }+\mathcal{Y}_{3}^{\left(  2\right)  }\right)  \left(
\mathcal{Y}_{4}^{\left(  1\right)  }+\mathcal{Y}_{4}^{\left(  2\right)
}\right)   \nonumber\\
& -\sqrt{2}\left(  \mathcal{Y}_{1}^{\left(  1\right)  }\mathcal{Y}%
_{3}^{\left(  1\right)  }\mathcal{Y}_{4}^{\left(  1\right)  }+\mathcal{Y}%
_{1}^{\left(  2\right)  }\mathcal{Y}_{3}^{\left(  2\right)  }\mathcal{Y}%
_{4}^{\left(  2\right)  }\right) \nonumber\\
&  +\frac{1}{\sqrt{2}}\left(  \mathcal{Y}_{1}^{\left(  1\right)  }%
+\mathcal{Y}_{1}^{\left(  2\right)  }\right)  \left(  \mathcal{Y}_{5}^{\left(
1\right)  }+\mathcal{Y}_{5}^{\left(  2\right)  }\right)  \left(
\mathcal{Y}_{6}^{\left(  1\right)  }+\mathcal{Y}_{6}^{\left(  2\right)
}\right)   \nonumber\\
& -\sqrt{2}\left(  \mathcal{Y}_{1}^{\left(  1\right)  }\mathcal{Y}%
_{5}^{\left(  1\right)  }\mathcal{Y}_{6}^{\left(  1\right)  }+\mathcal{Y}%
_{1}^{\left(  2\right)  }\mathcal{Y}_{5}^{\left(  2\right)  }\mathcal{Y}%
_{6}^{\left(  2\right)  }\right) \nonumber\\
=  &  2\mathcal{Y}_{1}\mathcal{Y}_{3}\mathcal{Y}_{4}-\sqrt{2}\left(
\mathcal{Y}_{1}^{\left(  1\right)  }\mathcal{Y}_{3}^{\left(  1\right)
}\mathcal{Y}_{4}^{\left(  1\right)  }+\mathcal{Y}_{1}^{\left(  2\right)
}\mathcal{Y}_{3}^{\left(  2\right)  }\mathcal{Y}_{4}^{\left(  2\right)
}\right) \nonumber\\
&  +2\mathcal{Y}_{1}\mathcal{Y}_{5}\mathcal{Y}_{6}-\sqrt{2}\left(
\mathcal{Y}_{1}^{\left(  1\right)  }\mathcal{Y}_{5}^{\left(  1\right)
}\mathcal{Y}_{6}^{\left(  1\right)  }+\mathcal{Y}_{1}^{\left(  2\right)
}\mathcal{Y}_{5}^{\left(  2\right)  }\mathcal{Y}_{6}^{\left(  2\right)
}\right)  . \label{T3SS}%
\end{align}

\end{lemma}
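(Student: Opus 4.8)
The plan is to derive the stated expansion by inserting the MLE expansion (\ref{theta-abcdfg}) into the definition $\widehat{\theta}_{SS}=2\widehat{\theta}-\tfrac{1}{2}(\widehat{\theta}^{(1)}+\widehat{\theta}^{(2)})$ and collecting terms order by order in powers of $n^{-1/2}$. First I would recenter each component at $\theta_0$, writing
\[
\sqrt{n}\left(\widehat{\theta}_{SS}-\theta_0\right)=2\sqrt{n}\left(\widehat{\theta}-\theta_0\right)-\frac{1}{2}\sum_{s=1}^{2}\sqrt{n}\left(\widehat{\theta}^{(s)}-\theta_0\right),
\]
which preserves centering at $\theta_0$ since $2\theta_0-\tfrac{1}{2}(2\theta_0)=\theta_0$. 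The full-sample term is handled immediately by (\ref{theta-abcdfg}).

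The key intermediate step is to express the half-sample estimators $\widehat{\theta}^{(s)}$ in $\sqrt{n}$ units. Each half is itself an i.i.d.\ sample of size $m=n/2$ from $f(\cdot,\theta_0)$, so Proposition \ref{ApproxThetaHat-strong} applies verbatim and gives
\[
\sqrt{m}\left(\widehat{\theta}^{(s)}-\theta_0\right)=\mathcal{Y}_1^{(s)}+\frac{1}{\sqrt{m}}\mathcal{Y}_1^{(s)}\mathcal{Y}_2^{(s)}+\frac{1}{m}\left(\mathcal{Y}_1^{(s)}\mathcal{Y}_3^{(s)}\mathcal{Y}_4^{(s)}+\mathcal{Y}_1^{(s)}\mathcal{Y}_5^{(s)}\mathcal{Y}_6^{(s)}\right)+o_p(m^{-1}).
\]
Multiplying by $\sqrt{2}$ and substituting $\sqrt{m}=\sqrt{n}/\sqrt{2}$ and $m=n/2$ rescales these into $\sqrt{n}$-normalized expansions whose zeroth, first, and second order coefficients carry factors $\sqrt{2}$, $2$, and $2\sqrt{2}$ respectively; because $m^{-1}=2n^{-1}$ the remainder stays $o_p(n^{-1})$.

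The heart of the argument is the aggregation identity $\mathcal{Y}_j=\tfrac{1}{\sqrt{2}}\bigl(\mathcal{Y}_j^{(1)}+\mathcal{Y}_j^{(2)}\bigr)$, valid for every index $j$ because splitting the full sum over $i=1,\dots,n$ into its two halves gives $\tfrac{1}{\sqrt{n}}\sum_{i=1}^{n}Y_{j,i}=\tfrac{1}{\sqrt{2}}\bigl(\tfrac{1}{\sqrt{m}}\sum_{i=1}^{m}Y_{j,i}+\tfrac{1}{\sqrt{m}}\sum_{i=m+1}^{n}Y_{j,i}\bigr)$. At leading order this yields $2\mathcal{Y}_1-\tfrac{\sqrt{2}}{2}(\mathcal{Y}_1^{(1)}+\mathcal{Y}_1^{(2)})=2\mathcal{Y}_1-\mathcal{Y}_1=\mathcal{Y}_1=T_1$. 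At order $n^{-1/2}$ the full-sample contribution $2\mathcal{Y}_1\mathcal{Y}_2$ becomes $(\mathcal{Y}_1^{(1)}+\mathcal{Y}_1^{(2)})(\mathcal{Y}_2^{(1)}+\mathcal{Y}_2^{(2)})$ under the identity, and subtracting the diagonal subsample terms $\mathcal{Y}_1^{(1)}\mathcal{Y}_2^{(1)}+\mathcal{Y}_1^{(2)}\mathcal{Y}_2^{(2)}$ leaves exactly the cross terms $\mathcal{Y}_1^{(1)}\mathcal{Y}_2^{(2)}+\mathcal{Y}_1^{(2)}\mathcal{Y}_2^{(1)}=T_{2,SS}$. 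The $n^{-1}$ coefficient follows in the same way, applying the identity separately to $\mathcal{Y}_1\mathcal{Y}_3\mathcal{Y}_4$ and $\mathcal{Y}_1\mathcal{Y}_5\mathcal{Y}_6$ and matching against (\ref{T3SS}).

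I expect the main obstacle to be organizational rather than conceptual: keeping track of the powers of $\sqrt{2}$ and $2$ that accumulate when renormalizing from $\sqrt{m}$ to $\sqrt{n}$, and confirming that the three $o_p(m^{-1})$ remainders consolidate into a single $o_p(n^{-1})$. The substantive payoff is visible already at order $n^{-1/2}$: unlike the full-sample term $\mathcal{Y}_1\mathcal{Y}_2$ in (\ref{T2-X1X2}), the split-sample correction fails to remove the second-order term and instead leaves the nonvanishing cross product $T_{2,SS}$, which is ultimately what doubles the higher-order variance.
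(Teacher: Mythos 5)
Your proposal is correct and follows essentially the same route as the paper's proof: write $\widehat{\theta}_{SS}-\theta_0=2(\widehat{\theta}-\theta_0)-\tfrac{1}{2}\sum_s(\widehat{\theta}^{(s)}-\theta_0)$, apply the expansion of Proposition \ref{ApproxThetaHat-strong} to the full sample and to each half sample of size $m=n/2$, and collect terms using the identity $\mathcal{Y}_j=\tfrac{1}{\sqrt{2}}\bigl(\mathcal{Y}_j^{(1)}+\mathcal{Y}_j^{(2)}\bigr)$. The powers of $\sqrt{2}$ and the consolidation of the $o_p(m^{-1})$ remainders work out exactly as you describe.
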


\begin{proof}
We first prove%
\begin{align}
&  \sqrt{n}\left(  \widehat{\theta}_{SS}-\theta\right) \nonumber\\
&  =\frac{\mathcal{Y}_{1}^{\left(  1\right)  }+\mathcal{Y}_{1}^{\left(
2\right)  }}{\sqrt{2}}\nonumber\\
&  +\frac{1}{\sqrt{n}}\left(  \left(  \mathcal{Y}_{1}^{\left(  1\right)
}+\mathcal{Y}_{1}^{\left(  2\right)  }\right)  \left(  \mathcal{Y}%
_{2}^{\left(  1\right)  }+\mathcal{Y}_{2}^{\left(  2\right)  }\right)
-\left(  \mathcal{Y}_{1}^{\left(  1\right)  }\mathcal{Y}_{2}^{\left(
1\right)  }+\mathcal{Y}_{1}^{\left(  2\right)  }\mathcal{Y}_{2}^{\left(
2\right)  }\right)  \right) \nonumber\\
&  +\frac{1}{\sqrt{2}}\frac{1}{n}\left(  \mathcal{Y}_{1}^{\left(  1\right)
}+\mathcal{Y}_{1}^{\left(  2\right)  }\right)  \left(  \mathcal{Y}%
_{3}^{\left(  1\right)  }+\mathcal{Y}_{3}^{\left(  2\right)  }\right)  \left(
\mathcal{Y}_{4}^{\left(  1\right)  }+\mathcal{Y}_{4}^{\left(  2\right)
}\right)   \nonumber\\
& -\frac{\sqrt{2}}{n}\left(  \mathcal{Y}_{1}^{\left(  1\right)
}\mathcal{Y}_{3}^{\left(  1\right)  }\mathcal{Y}_{4}^{\left(  1\right)
}+\mathcal{Y}_{1}^{\left(  2\right)  }\mathcal{Y}_{3}^{\left(  2\right)
}\mathcal{Y}_{4}^{\left(  2\right)  }\right) \nonumber\\
&  +\frac{1}{\sqrt{2}}\frac{1}{n}\left(  \mathcal{Y}_{1}^{\left(  1\right)
}+\mathcal{Y}_{1}^{\left(  2\right)  }\right)  \left(  \mathcal{Y}%
_{5}^{\left(  1\right)  }+\mathcal{Y}_{5}^{\left(  2\right)  }\right)  \left(
\mathcal{Y}_{6}^{\left(  1\right)  }+\mathcal{Y}_{6}^{\left(  2\right)
}\right)   \nonumber\\
& -\frac{\sqrt{2}}{n}\left(  \mathcal{Y}_{1}^{\left(  1\right)
}\mathcal{Y}_{5}^{\left(  1\right)  }\mathcal{Y}_{6}^{\left(  1\right)
}+\mathcal{Y}_{1}^{\left(  2\right)  }\mathcal{Y}_{5}^{\left(  2\right)
}\mathcal{Y}_{6}^{\left(  2\right)  }\right)  \nonumber \\
&+o_{p}\left(  n^{-1}\right)
\label{SS-abcdfg}%
\end{align}
For this purpose, we write (\ref{theta-abcdfg}) as%
\begin{align*}
\widehat{\theta}-\theta_{0}  &  =\frac{1}{\sqrt{n}}\mathcal{Y}_{1}+\frac{1}%
{n}\mathcal{Y}_{1}\mathcal{Y}_{2}+\frac{1}{n\sqrt{n}}\left(  \mathcal{Y}%
_{1}\mathcal{Y}_{3}\mathcal{Y}_{4}+\mathcal{Y}_{1}\mathcal{Y}_{5}%
\mathcal{Y}_{6}\right)  +o_{p}\left(  n^{-3/2}\right) \\
&  =\frac{1}{\sqrt{2m}}\mathcal{Y}_{1}+\frac{1}{2m}\mathcal{Y}_{1}%
\mathcal{Y}_{2}+\left(  \frac{1}{\sqrt{2m}}\right)  ^{3}\left(  \mathcal{Y}%
_{1}\mathcal{Y}_{3}\mathcal{Y}_{4}+\mathcal{Y}_{1}\mathcal{Y}_{5}%
\mathcal{Y}_{6}\right)  +o_{p}\left(  n^{-3/2}\right)
\end{align*}
and use
\[
\mathcal{Y}_{1}=\frac{1}{\sqrt{2}}\left(  \mathcal{Y}_{1}^{\left(  1\right)
}+\mathcal{Y}_{1}^{\left(  2\right)  }\right)
\]
etc., to write
\begin{align*}
\widehat{\theta}-\theta_{0}=  &  \frac{1}{2\sqrt{m}}\left(  \mathcal{Y}%
_{1}^{\left(  1\right)  }+\mathcal{Y}_{1}^{\left(  2\right)  }\right)
+\left(  \frac{1}{2\sqrt{m}}\right)  ^{2}\left(  \mathcal{Y}_{1}^{\left(
1\right)  }+\mathcal{Y}_{1}^{\left(  2\right)  }\right)  \left(
\mathcal{Y}_{2}^{\left(  1\right)  }+\mathcal{Y}_{2}^{\left(  2\right)
}\right) \\
&  +\left(  \frac{1}{2\sqrt{m}}\right)  ^{3}\left(  \mathcal{Y}_{1}^{\left(
1\right)  }+\mathcal{Y}_{1}^{\left(  2\right)  }\right)  \left(
\mathcal{Y}_{3}^{\left(  1\right)  }+\mathcal{Y}_{3}^{\left(  2\right)
}\right)  \left(  \mathcal{Y}_{4}^{\left(  1\right)  }+\mathcal{Y}%
_{4}^{\left(  2\right)  }\right) \\
&  +\left(  \frac{1}{2\sqrt{m}}\right)  ^{3}\left(  \mathcal{Y}_{1}^{\left(
1\right)  }+\mathcal{Y}_{1}^{\left(  2\right)  }\right)  \left(
\mathcal{Y}_{5}^{\left(  1\right)  }+\mathcal{Y}_{5}^{\left(  2\right)
}\right)  \left(  \mathcal{Y}_{6}^{\left(  1\right)  }+\mathcal{Y}%
_{6}^{\left(  2\right)  }\right) \\
&  +o_{p}\left(  n^{-3/2}\right)  .
\end{align*}
Using (\ref{theta-abcdfg}), we also obtain%
\begin{align*}
\widehat{\theta}^{\left(  1\right)  }-\theta_{0}=  &  \frac{1}{\sqrt{m}%
}\mathcal{Y}_{1}^{\left(  1\right)  }+\left(  \frac{1}{\sqrt{m}}\right)
^{2}\mathcal{Y}_{1}^{\left(  1\right)  }\mathcal{Y}_{2}^{\left(  1\right)
}+\left(  \frac{1}{\sqrt{m}}\right)  ^{3}\mathcal{Y}_{1}^{\left(  1\right)
}\mathcal{Y}_{3}^{\left(  1\right)  }\mathcal{Y}_{4}^{\left(  1\right)  }\\
&  +\left(  \frac{1}{\sqrt{m}}\right)  ^{3}\mathcal{Y}_{1}^{\left(  1\right)
}\mathcal{Y}_{5}^{\left(  1\right)  }\mathcal{Y}_{6}^{\left(  1\right)
}+o_{p}\left(  n^{-3/2}\right)  ,\\
\widehat{\theta}^{\left(  2\right)  }-\theta_{0}=  &  \frac{1}{\sqrt{m}%
}\mathcal{Y}_{1}^{\left(  2\right)  }+\left(  \frac{1}{\sqrt{m}}\right)
^{2}\mathcal{Y}_{1}^{\left(  2\right)  }\mathcal{Y}_{2}^{\left(  2\right)
}+\left(  \frac{1}{\sqrt{m}}\right)  ^{3}\mathcal{Y}_{1}^{\left(  2\right)
}\mathcal{Y}_{3}^{\left(  2\right)  }\mathcal{Y}_{4}^{\left(  2\right)  }\\
&  +\left(  \frac{1}{\sqrt{m}}\right)  ^{3}\mathcal{Y}_{1}^{\left(  2\right)
}\mathcal{Y}_{5}^{\left(  2\right)  }\mathcal{Y}_{6}^{\left(  2\right)
}+o_{p}\left(  n^{-3/2}\right)  ,
\end{align*}
Combining them, we obtain%
\begin{align*}
\widehat{\theta}_{SS}-\theta &  =2\left(  \widehat{\theta}-\theta_{0}\right)
-\frac{\left(  \widehat{\theta}^{\left(  1\right)  }-\theta_{0}\right)
+\left(  \widehat{\theta}^{\left(  2\right)  }-\theta_{0}\right)  }{2}\\
&  =\frac{1}{\sqrt{m}}\left(  \mathcal{Y}_{1}^{\left(  1\right)  }%
+\mathcal{Y}_{1}^{\left(  2\right)  }\right)  -\frac{1}{2}\frac{1}{\sqrt{m}%
}\left(  \mathcal{Y}_{1}^{\left(  1\right)  }+\mathcal{Y}_{1}^{\left(
2\right)  }\right) \\
&  +\frac{1}{2}\left(  \frac{1}{\sqrt{m}}\right)  ^{2}\left(  \mathcal{Y}%
_{1}^{\left(  1\right)  }+\mathcal{Y}_{1}^{\left(  2\right)  }\right)  \left(
\mathcal{Y}_{2}^{\left(  1\right)  }+\mathcal{Y}_{2}^{\left(  2\right)
}\right)  -\frac{1}{2}\left(  \frac{1}{\sqrt{m}}\right)  ^{2}\left(
\mathcal{Y}_{1}^{\left(  1\right)  }\mathcal{Y}_{2}^{\left(  1\right)
}+\mathcal{Y}_{1}^{\left(  2\right)  }\mathcal{Y}_{2}^{\left(  2\right)
}\right) \\
&  +\frac{1}{4}\left(  \frac{1}{\sqrt{m}}\right)  ^{3}\left(  \mathcal{Y}%
_{1}^{\left(  1\right)  }+\mathcal{Y}_{1}^{\left(  2\right)  }\right)  \left(
\mathcal{Y}_{3}^{\left(  1\right)  }+\mathcal{Y}_{3}^{\left(  2\right)
}\right)  \left(  \mathcal{Y}_{4}^{\left(  1\right)  }+\mathcal{Y}%
_{4}^{\left(  2\right)  }\right) \\
&  -\frac{1}{2}\left(  \frac{1}{\sqrt{m}}\right)  ^{3}\left(  \mathcal{Y}%
_{1}^{\left(  1\right)  }\mathcal{Y}_{3}^{\left(  1\right)  }\mathcal{Y}%
_{4}^{\left(  1\right)  }+\mathcal{Y}_{1}^{\left(  2\right)  }\mathcal{Y}%
_{3}^{\left(  2\right)  }\mathcal{Y}_{4}^{\left(  2\right)  }\right) \\
&  +\frac{1}{4}\left(  \frac{1}{\sqrt{m}}\right)  ^{3}\left(  \mathcal{Y}%
_{1}^{\left(  1\right)  }+\mathcal{Y}_{1}^{\left(  2\right)  }\right)  \left(
\mathcal{Y}_{5}^{\left(  1\right)  }+\mathcal{Y}_{5}^{\left(  2\right)
}\right)  \left(  \mathcal{Y}_{6}^{\left(  1\right)  }+\mathcal{Y}%
_{6}^{\left(  2\right)  }\right) \\
&  -\frac{1}{2}\left(  \frac{1}{\sqrt{m}}\right)  ^{3}\left(  \mathcal{Y}%
_{1}^{\left(  1\right)  }\mathcal{Y}_{5}^{\left(  1\right)  }\mathcal{Y}%
_{6}^{\left(  1\right)  }+\mathcal{Y}_{1}^{\left(  2\right)  }\mathcal{Y}%
_{5}^{\left(  2\right)  }\mathcal{Y}_{6}^{\left(  2\right)  }\right)
+o_{p}\left(  n^{-3/2}\right)
\end{align*}
which can be rewritten as%
\begin{align*}
\widehat{\theta}_{SS}-\theta &  =\frac{1}{2}\frac{\sqrt{2}}{\sqrt{n}}\left(
\mathcal{Y}_{1}^{\left(  1\right)  }+\mathcal{Y}_{1}^{\left(  2\right)
}\right) \\
&  +\frac{1}{2}\left(  \frac{\sqrt{2}}{\sqrt{n}}\right)  ^{2}\left(
\mathcal{Y}_{1}^{\left(  1\right)  }+\mathcal{Y}_{1}^{\left(  2\right)
}\right)  \left(  \mathcal{Y}_{2}^{\left(  1\right)  }+\mathcal{Y}%
_{2}^{\left(  2\right)  }\right)  -\frac{1}{2}\left(  \frac{\sqrt{2}}{\sqrt
{n}}\right)  ^{2}\left(  \mathcal{Y}_{1}^{\left(  1\right)  }\mathcal{Y}%
_{2}^{\left(  1\right)  }+\mathcal{Y}_{1}^{\left(  2\right)  }\mathcal{Y}%
_{2}^{\left(  2\right)  }\right) \\
&  +\frac{1}{4}\left(  \frac{\sqrt{2}}{\sqrt{n}}\right)  ^{3}\left(
\mathcal{Y}_{1}^{\left(  1\right)  }+\mathcal{Y}_{1}^{\left(  2\right)
}\right)  \left(  \mathcal{Y}_{3}^{\left(  1\right)  }+\mathcal{Y}%
_{3}^{\left(  2\right)  }\right)  \left(  \mathcal{Y}_{4}^{\left(  1\right)
}+\mathcal{Y}_{4}^{\left(  2\right)  }\right) \\
&  -\frac{1}{2}\left(  \frac{\sqrt{2}}{\sqrt{n}}\right)  ^{3}\left(
\mathcal{Y}_{1}^{\left(  1\right)  }\mathcal{Y}_{3}^{\left(  1\right)
}\mathcal{Y}_{4}^{\left(  1\right)  }+\mathcal{Y}_{1}^{\left(  2\right)
}\mathcal{Y}_{3}^{\left(  2\right)  }\mathcal{Y}_{4}^{\left(  2\right)
}\right) \\
&  +\frac{1}{4}\left(  \frac{\sqrt{2}}{\sqrt{n}}\right)  ^{3}\left(
\mathcal{Y}_{1}^{\left(  1\right)  }+\mathcal{Y}_{1}^{\left(  2\right)
}\right)  \left(  \mathcal{Y}_{5}^{\left(  1\right)  }+\mathcal{Y}%
_{5}^{\left(  2\right)  }\right)  \left(  \mathcal{Y}_{6}^{\left(  1\right)
}+\mathcal{Y}_{6}^{\left(  2\right)  }\right) \\
&  -\frac{1}{2}\left(  \frac{\sqrt{2}}{\sqrt{n}}\right)  ^{3}\left(
\mathcal{Y}_{1}^{\left(  1\right)  }\mathcal{Y}_{5}^{\left(  1\right)
}\mathcal{Y}_{6}^{\left(  1\right)  }+\mathcal{Y}_{1}^{\left(  2\right)
}\mathcal{Y}_{5}^{\left(  2\right)  }\mathcal{Y}_{6}^{\left(  2\right)
}\right)  +o_{p}\left(  n^{-3/2}\right)
\end{align*}
which proves (\ref{SS-abcdfg}).
\end{proof}

Note that the higher order variance of $\widehat{\theta}_{SS}$ is equal to
\begin{equation}
E\left[  T_{1,SS}^{2}\right]  +\frac{1}{n}\left(  E\left[  \left(
T_{2,SS}\right)  ^{2}\right]  +2\sqrt{n}E\left[  T_{1,SS}T_{2,SS}\right]
+2E\left[  T_{1,SS}T_{3,SS}\right]  \right)  .\nonumber
\end{equation}

\begin{lemma}
\label{lem-ss-first-two-terms}%
\begin{align*}
E\left[  \left(  T_{2,SS}\right)  ^{2}\right]   &  =2E\left[  Y_{1,i}%
^{2}\right]  E\left[  Y_{2,i}^{2}\right]  +2\left(  E\left[  Y_{1,i}%
Y_{2,i}\right]  \right)  ^{2},\\
E\left[  T_{1}T_{2,SS}\right]   &  =0.
\end{align*}

\end{lemma}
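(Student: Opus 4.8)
The plan is to read off the explicit forms of $T_{1,SS}$ and $T_{2,SS}$ supplied by Lemma \ref{lem-ss-expansion}, namely $T_{1,SS}=T_1=(\mathcal{Y}_1^{(1)}+\mathcal{Y}_1^{(2)})/\sqrt{2}$ and $T_{2,SS}=\mathcal{Y}_1^{(1)}\mathcal{Y}_2^{(2)}+\mathcal{Y}_1^{(2)}\mathcal{Y}_2^{(1)}$, and then reduce both moments to population moments of the $Y_{j,i}$ using two structural facts. The first is that the two half-sample blocks are built from disjoint sets of i.i.d.\ observations, so any function of $\{\mathcal{Y}_j^{(1)}\}_j$ is independent of any function of $\{\mathcal{Y}_k^{(2)}\}_k$. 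The second is that each $Y_{j,i}$ is mean zero (as noted after its definition), hence each $\mathcal{Y}_j^{(s)}$ has mean zero. I will also record the within-half identity $E[\mathcal{Y}_j^{(s)}\mathcal{Y}_k^{(s)}]=E[Y_{j,i}Y_{k,i}]$ for $s=1,2$: expanding the double sum $m^{-1}\sum_{i}\sum_{i'}E[Y_{j,i}Y_{k,i'}]$, the off-diagonal terms vanish by independence and the mean-zero property, while the $m$ diagonal terms are equal by identical distribution, leaving $E[Y_{j,i}Y_{k,i}]$; in particular $E[(\mathcal{Y}_j^{(s)})^2]=E[Y_{j,i}^2]$.

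For the first claim I would square $T_{2,SS}$ to get $(\mathcal{Y}_1^{(1)}\mathcal{Y}_2^{(2)})^2+(\mathcal{Y}_1^{(2)}\mathcal{Y}_2^{(1)})^2+2\mathcal{Y}_1^{(1)}\mathcal{Y}_2^{(1)}\mathcal{Y}_1^{(2)}\mathcal{Y}_2^{(2)}$ and take expectations term by term. In the first term $(\mathcal{Y}_1^{(1)})^2$ and $(\mathcal{Y}_2^{(2)})^2$ sit in opposite halves, so cross-half independence factors the expectation into $E[Y_{1,i}^2]E[Y_{2,i}^2]$; the second term gives the same by symmetry. In the cross term the factor $\mathcal{Y}_1^{(1)}\mathcal{Y}_2^{(1)}$ lives in the first half and $\mathcal{Y}_1^{(2)}\mathcal{Y}_2^{(2)}$ in the second, so independence and the within-half identity yield $2E[\mathcal{Y}_1^{(1)}\mathcal{Y}_2^{(1)}]E[\mathcal{Y}_1^{(2)}\mathcal{Y}_2^{(2)}]=2(E[Y_{1,i}Y_{2,i}])^2$. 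Summing reproduces $E[(T_{2,SS})^2]=2E[Y_{1,i}^2]E[Y_{2,i}^2]+2(E[Y_{1,i}Y_{2,i}])^2$.

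For the second claim I would substitute $T_1=(\mathcal{Y}_1^{(1)}+\mathcal{Y}_1^{(2)})/\sqrt{2}$ and expand $T_1 T_{2,SS}$ into four triple products. The key observation is that each of the four products contains exactly one factor that comes from a half distinct from the other two factors and is therefore independent of them; since that isolated factor is one of the mean-zero $\mathcal{Y}_j^{(s)}$, the expectation of each product factors and vanishes. Concretely $(\mathcal{Y}_1^{(1)})^2\mathcal{Y}_2^{(2)}$ and $(\mathcal{Y}_1^{(2)})^2\mathcal{Y}_2^{(1)}$ vanish because the lone second-half (resp.\ first-half) factor has mean zero, while the two mixed products $\mathcal{Y}_1^{(1)}\mathcal{Y}_2^{(1)}\mathcal{Y}_1^{(2)}$ and $\mathcal{Y}_1^{(1)}\mathcal{Y}_1^{(2)}\mathcal{Y}_2^{(2)}$ vanish because the isolated $\mathcal{Y}_1^{(2)}$ (resp.\ $\mathcal{Y}_1^{(1)}$) is independent of the remaining pair and has mean zero. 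Hence $E[T_1 T_{2,SS}]=0$.

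The computation is essentially bookkeeping, so the only real obstacle is applying the cross-half factorization cleanly: one must check in each term which factors belong to the same half before invoking independence, and must be careful that the within-half reduction $E[\mathcal{Y}_j^{(s)}\mathcal{Y}_k^{(s)}]=E[Y_{j,i}Y_{k,i}]$ (rather than a sum over $i$) is exactly the normalization that removes the sample-size dependence. Everything else is a direct consequence of independence across halves and the mean-zero property of the $Y_{j,i}$.
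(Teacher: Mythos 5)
Your proposal is correct and follows essentially the same route as the paper's own proof: expand $(T_{2,SS})^2$ and $T_1T_{2,SS}$ into products of the half-sample sums, then use independence of the two disjoint halves together with the mean-zero property of the $Y_{j,i}$ to factor the expectations and reduce them to $E[Y_{1,i}^2]E[Y_{2,i}^2]$ and $(E[Y_{1,i}Y_{2,i}])^2$. Your explicit recording of the within-half identity $E[\mathcal{Y}_j^{(s)}\mathcal{Y}_k^{(s)}]=E[Y_{j,i}Y_{k,i}]$ is a detail the paper leaves implicit, but it does not change the argument.
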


\begin{proof}
It follows from%
\begin{align*}
E\left[  \left(  T_{2,SS}\right)  ^{2}\right]   &  =E\left[  \left(
\mathcal{Y}_{1}^{\left(  1\right)  }\mathcal{Y}_{2}^{\left(  2\right)
}+\mathcal{Y}_{1}^{\left(  2\right)  }\mathcal{Y}_{2}^{\left(  1\right)
}\right)  ^{2}\right] \\
&  =E\left[  \left(  \mathcal{Y}_{1}^{\left(  1\right)  }\right)  ^{2}\left(
\mathcal{Y}_{2}^{\left(  2\right)  }\right)  ^{2}+\left(  \mathcal{Y}%
_{1}^{\left(  2\right)  }\right)  ^{2}\left(  \mathcal{Y}_{2}^{\left(
1\right)  }\right)  ^{2}+2\mathcal{Y}_{1}^{\left(  1\right)  }\mathcal{Y}%
_{2}^{\left(  1\right)  }\mathcal{Y}_{1}^{\left(  2\right)  }\mathcal{Y}%
_{2}^{\left(  2\right)  }\right] \\
&  =E\left[  \left(  \mathcal{Y}_{1}^{\left(  1\right)  }\right)  ^{2}\right]
E\left[  \left(  \mathcal{Y}_{2}^{\left(  2\right)  }\right)  ^{2}\right]
+E\left[  \left(  \mathcal{Y}_{1}^{\left(  2\right)  }\right)  ^{2}\right]
E\left[  \left(  \mathcal{Y}_{2}^{\left(  1\right)  }\right)  ^{2}\right] \\
& 
+2E\left[  \mathcal{Y}_{1}^{\left(  1\right)  }\mathcal{Y}_{2}^{\left(
1\right)  }\right]  E\left[  \mathcal{Y}_{1}^{\left(  2\right)  }%
\mathcal{Y}_{2}^{\left(  2\right)  }\right] \\
&  =2E\left[  Y_{1,i}^{2}\right]  E\left[  Y_{2,i}^{2}\right]  +2\left(
E\left[  Y_{1,i}Y_{2,i}\right]  \right)  ^{2},
\end{align*}
and%
\[
E\left[  T_{1}T_{2,SS}\right]  =E\left[  \left(  \frac{1}{\sqrt{2}}%
\mathcal{Y}_{1}^{\left(  1\right)  }+\frac{1}{\sqrt{2}}\mathcal{Y}%
_{1}^{\left(  2\right)  }\right)  \left(  \mathcal{Y}_{1}^{\left(  1\right)
}\mathcal{Y}_{2}^{\left(  2\right)  }+\mathcal{Y}_{1}^{\left(  2\right)
}\mathcal{Y}_{2}^{\left(  1\right)  }\right)  \right]  =0,
\]
where we used the fact that the two subsamples are independent of each other.
\end{proof}

\begin{lemma}
\label{lem-fourth-moment-Y}%
\begin{align*}
E\left[  \mathcal{Y}_{1}\mathcal{Y}_{2}\mathcal{Y}_{3}\mathcal{Y}_{4}\right]
&  =\frac{1}{n}E\left[  Y_{1,i}Y_{2,i}Y_{3,i}Y_{4,i}\right] \\
&  +\frac{n\left(  n-1\right)  }{n^{2}}\left(
\begin{array}
[c]{c}%
E\left[  Y_{1,j}Y_{2,j}\right]  E\left[  Y_{3,i}Y_{4,i}\right]  +E\left[
Y_{1,j}Y_{3,j}\right]  E\left[  Y_{2,i}Y_{4,i}\right] \\
+E\left[  Y_{1,j}Y_{4,j}\right]  E\left[  Y_{2,i}Y_{3,i}\right]
\end{array}
\right)
\end{align*}

\end{lemma}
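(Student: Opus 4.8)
The plan is to expand the product of scaled sums into a quadruple sum over observation indices and then exploit the fact that the $Y_{k,i}$ are i.i.d.\ across $i$ with mean zero (as noted when these variables were introduced). Writing $\mathcal{Y}_{k}=n^{-1/2}\sum_{i}Y_{k,i}$, the first step is simply
\[
E[\mathcal{Y}_{1}\mathcal{Y}_{2}\mathcal{Y}_{3}\mathcal{Y}_{4}]
=\frac{1}{n^{2}}\sum_{i_{1}=1}^{n}\sum_{i_{2}=1}^{n}\sum_{i_{3}=1}^{n}\sum_{i_{4}=1}^{n}
E[Y_{1,i_{1}}Y_{2,i_{2}}Y_{3,i_{3}}Y_{4,i_{4}}].
\]
Since the observations are independent across $i$ and each $Y_{k,i}$ has mean zero, the expectation of a generic term factorizes across the distinct values taken by $i_{1},\dots,i_{4}$, and any value appearing exactly once contributes a mean-zero factor, annihilating that term.

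Next I would classify the index tuples by the partition they induce on the four positions $\{1,2,3,4\}$, retaining only partitions with no singleton block. With four positions there are exactly two surviving types: the single block, corresponding to $i_{1}=i_{2}=i_{3}=i_{4}$, and the three groupings into two pairs, namely $\{1,2\}\{3,4\}$, $\{1,3\}\{2,4\}$, and $\{1,4\}\{2,3\}$. The full-coincidence case supplies $n$ tuples, each with value $E[Y_{1,i}Y_{2,i}Y_{3,i}Y_{4,i}]$, while each two-pair grouping supplies the $n(n-1)$ tuples in which the two distinct index values range freely, yielding for example $E[Y_{1,j}Y_{2,j}]E[Y_{3,i}Y_{4,i}]$ for the pairing $\{1,2\}\{3,4\}$, and analogously for the other two.

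Collecting these contributions and dividing by $n^{2}$ then gives
\begin{align*}
E[\mathcal{Y}_{1}\mathcal{Y}_{2}\mathcal{Y}_{3}\mathcal{Y}_{4}]
&=\frac{n}{n^{2}}E[Y_{1,i}Y_{2,i}Y_{3,i}Y_{4,i}]\\
&\quad+\frac{n(n-1)}{n^{2}}\Big(E[Y_{1,j}Y_{2,j}]E[Y_{3,i}Y_{4,i}]+E[Y_{1,j}Y_{3,j}]E[Y_{2,i}Y_{4,i}]+E[Y_{1,j}Y_{4,j}]E[Y_{2,i}Y_{3,i}]\Big),
\end{align*}
which is the claimed identity. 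Finiteness of every moment appearing here is guaranteed by the moment bounds of Condition \ref{HH} (together with the definitions of $Y_{k,i}$ in terms of $U_{i},V_{i},W_{i}$), so each expectation is well defined.

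There is essentially no analytic difficulty in this argument; it is a bookkeeping exercise in which mean-zero independence does all the work. The only point requiring care is the combinatorial classification: correctly recognizing that any singleton block kills a term, and that exactly three distinct two-pair partitions survive, each with multiplicity $n(n-1)$ rather than $\binom{n}{2}$, because the four positions carry the distinct labels $Y_{1},\dots,Y_{4}$. Keeping these position labels attached to each pairing is precisely what produces the three genuinely different cross-moment products on the right-hand side, and is the step I would be most careful to get right.
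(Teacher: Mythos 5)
Your argument is correct and coincides with the paper's own proof: both expand the product into a quadruple sum over indices, use independence and mean zero of the $Y_{k,i}$ to eliminate any term with a singleton index, and then collect the all-equal case (multiplicity $n$) and the three labeled two-pair partitions (multiplicity $n(n-1)$ each). No gaps.
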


\begin{proof}
The result follows from%
\begin{align*}
E\left[  \mathcal{Y}_{1}\mathcal{Y}_{2}\mathcal{Y}_{3}\mathcal{Y}_{4}\right]
&  =E\left[  \left(  \frac{1}{\sqrt{n}}\sum_{i=1}^{n}Y_{1,i}\right)  \left(
\frac{1}{\sqrt{n}}\sum_{i=1}^{n}Y_{2,i}\right)  \left(  \frac{1}{\sqrt{n}}%
\sum_{i=1}^{n}Y_{3,i}\right)  \left(  \frac{1}{\sqrt{n}}\sum_{i=1}^{n}%
Y_{4,i}\right)  \right] \\
&  =\frac{1}{n^{2}}\sum_{i=1}^{n}E\left[  Y_{1,i}Y_{2,i}Y_{3,i}Y_{4,i}\right]
\\
&  +\frac{1}{n^{2}}\sum_{i\neq j}\left(
\begin{array}
[c]{c}%
E\left[  Y_{1,j}Y_{2,j}Y_{3,i}Y_{4,i}\right]  +E\left[  Y_{1,j}Y_{2,i}%
Y_{3,j}Y_{4,i}\right]  +E\left[  Y_{1,j}Y_{2,i}Y_{3,i}Y_{4,j}\right] \\
E\left[  Y_{1,i}Y_{2,j}Y_{3,j}Y_{4,i}\right]  +E\left[  Y_{1,i}Y_{2,j}%
Y_{3,i}Y_{4,j}\right]  +E\left[  Y_{1,i}Y_{2,i}Y_{3,j}Y_{4,j}\right]
\end{array}
\right) \\
&  =\frac{1}{n}E\left[  Y_{1,i}Y_{2,i}Y_{3,i}Y_{4,i}\right] \\
&  +\frac{n\left(  n-1\right)  }{n^{2}}\left(  E\left[  Y_{1,j}Y_{2,j}%
Y_{3,i}Y_{4,i}\right]  +E\left[  Y_{1,j}Y_{2,i}Y_{3,j}Y_{4,i}\right]
+E\left[  Y_{1,j}Y_{2,i}Y_{3,i}Y_{4,j}\right]  \right)  .
\end{align*}

\end{proof}

\begin{lemma}
\label{lem-SS-3-direct}%
\[
E\left[  T_{1}T_{3,SS}\right]  =O\left(  \frac{1}{n}\right)
\]

\end{lemma}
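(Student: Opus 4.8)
\emph{Proof proposal.} The plan is to substitute the expressions for $T_{1}$ and $T_{3,SS}$ from Lemma \ref{lem-ss-expansion} and to combine the independence of the two half-samples with the fourth-moment formula of Lemma \ref{lem-fourth-moment-Y}. Recall that $T_{1}=T_{1,SS}=\mathcal{Y}_{1}=\tfrac{1}{\sqrt 2}(\mathcal{Y}_{1}^{(1)}+\mathcal{Y}_{1}^{(2)})$, and that $T_{3,SS}$ in (\ref{T3SS}) is the sum of two structurally identical groups, one built from $(\mathcal{Y}_{1},\mathcal{Y}_{3},\mathcal{Y}_{4})$ and one from $(\mathcal{Y}_{1},\mathcal{Y}_{5},\mathcal{Y}_{6})$. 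It therefore suffices to bound the contribution of the first group, namely
\[
2E[\mathcal{Y}_{1}^{2}\mathcal{Y}_{3}\mathcal{Y}_{4}]-\sqrt 2\,E\!\left[\mathcal{Y}_{1}\left(\mathcal{Y}_{1}^{(1)}\mathcal{Y}_{3}^{(1)}\mathcal{Y}_{4}^{(1)}+\mathcal{Y}_{1}^{(2)}\mathcal{Y}_{3}^{(2)}\mathcal{Y}_{4}^{(2)}\right)\right],
\]
the $(\mathcal{Y}_{1},\mathcal{Y}_{5},\mathcal{Y}_{6})$ group being handled identically.

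First I would reduce the cross-subsample expectations. Writing $\mathcal{Y}_{1}=\tfrac{1}{\sqrt 2}(\mathcal{Y}_{1}^{(1)}+\mathcal{Y}_{1}^{(2)})$ and using that the two halves are independent with $E[\mathcal{Y}_{1}^{(s)}]=0$, the mixed terms drop out, leaving
\[
E\!\left[\mathcal{Y}_{1}\,\mathcal{Y}_{1}^{(s)}\mathcal{Y}_{3}^{(s)}\mathcal{Y}_{4}^{(s)}\right]=\tfrac{1}{\sqrt 2}\,E\!\left[(\mathcal{Y}_{1}^{(s)})^{2}\mathcal{Y}_{3}^{(s)}\mathcal{Y}_{4}^{(s)}\right],\qquad s=1,2.
\]
Since the two subsamples are identically distributed, the group collapses to $2E[\mathcal{Y}_{1}^{2}\mathcal{Y}_{3}\mathcal{Y}_{4}]-2E[(\mathcal{Y}_{1}^{(1)})^{2}\mathcal{Y}_{3}^{(1)}\mathcal{Y}_{4}^{(1)}]$.

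Next I would apply Lemma \ref{lem-fourth-moment-Y} to each fourth moment. For the full sample it yields a diagonal term $\tfrac{1}{n}E[Y_{1,i}^{2}Y_{3,i}Y_{4,i}]=O(n^{-1})$ plus the leading term $\tfrac{n-1}{n}\big(E[Y_{1,i}^{2}]E[Y_{3,i}Y_{4,i}]+2E[Y_{1,i}Y_{3,i}]E[Y_{1,i}Y_{4,i}]\big)$; the same lemma applied on a half-panel of size $m=n/2$ produces the identical products of covariances but with coefficient $\tfrac{m-1}{m}$, together with an $O(m^{-1})=O(n^{-1})$ diagonal term. The leading products of second moments therefore enter with net coefficient $\tfrac{n-1}{n}-\tfrac{m-1}{m}=\tfrac{1}{m}-\tfrac{1}{n}=\tfrac{1}{n}$, so their $O(1)$ parts cancel and only an $O(n^{-1})$ remainder survives. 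Collecting this difference with the diagonal remainders shows the group is $O(n^{-1})$, and adding the identical bound for the $(\mathcal{Y}_{1},\mathcal{Y}_{5},\mathcal{Y}_{6})$ group gives $E[T_{1}T_{3,SS}]=O(n^{-1})$.

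The main obstacle is purely the bookkeeping: one must track the three distinct covariance pairings generated by Lemma \ref{lem-fourth-moment-Y} and verify that, for each, the full-sample coefficient $\tfrac{n-1}{n}$ and the subsample coefficient $\tfrac{m-1}{m}$ enter with opposite signs so that their leading parts cancel and only the $\tfrac{1}{n}$-order difference remains. All the conceptual content lies in the coefficient identity $\tfrac{n-1}{n}-\tfrac{m-1}{m}=n^{-1}$ and in the vanishing of the mixed subsample terms by independence; no further probabilistic input is needed.
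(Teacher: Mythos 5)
Your proposal is correct and follows essentially the same route as the paper: decompose $\mathcal{Y}_{1}$ into the two half-sample pieces, use independence and zero means to kill the mixed terms, and then apply Lemma \ref{lem-fourth-moment-Y} (with $\mathcal{Y}_{2}$ replaced by $\mathcal{Y}_{1}$) to both the full-sample and half-sample fourth moments so that the $O(1)$ covariance products cancel, leaving the $O(n^{-1})$ difference. The only difference is presentational — you make the coefficient identity $\tfrac{n-1}{n}-\tfrac{m-1}{m}=n^{-1}$ explicit where the paper simply asserts the $O(n^{-1})$ equality — so nothing further is needed.
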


\begin{proof}
We use $\mathcal{Y}_{1}=\frac{\mathcal{Y}_{1}^{\left(  1\right)  }%
+\mathcal{Y}_{1}^{\left(  2\right)  }}{\sqrt{2}}$ and write
\begin{align*}
&  E\left[  \mathcal{Y}_{1}\left(  2\mathcal{Y}_{1}\mathcal{Y}_{3}%
\mathcal{Y}_{4}-\sqrt{2}\left(  \mathcal{Y}_{1}^{\left(  1\right)
}\mathcal{Y}_{3}^{\left(  1\right)  }\mathcal{Y}_{4}^{\left(  1\right)
}+\mathcal{Y}_{1}^{\left(  2\right)  }\mathcal{Y}_{3}^{\left(  2\right)
}\mathcal{Y}_{4}^{\left(  2\right)  }\right)  \right)  \right] \\
&  =E\left[  2\mathcal{Y}_{1}^{2}\mathcal{Y}_{3}\mathcal{Y}_{4}-\sqrt
{2}\mathcal{Y}_{1}\left(  \mathcal{Y}_{1}^{\left(  1\right)  }\mathcal{Y}%
_{3}^{\left(  1\right)  }\mathcal{Y}_{4}^{\left(  1\right)  }+\mathcal{Y}%
_{1}^{\left(  2\right)  }\mathcal{Y}_{3}^{\left(  2\right)  }\mathcal{Y}%
_{4}^{\left(  2\right)  }\right)  \right] \\
&  =E\left[  2\mathcal{Y}_{1}^{2}\mathcal{Y}_{3}\mathcal{Y}_{4}-\sqrt
{2}\left(  \frac{\mathcal{Y}_{1}^{\left(  1\right)  }+\mathcal{Y}_{1}^{\left(
2\right)  }}{\sqrt{2}}\right)  \left(  \mathcal{Y}_{1}^{\left(  1\right)
}\mathcal{Y}_{3}^{\left(  1\right)  }\mathcal{Y}_{4}^{\left(  1\right)
}+\mathcal{Y}_{1}^{\left(  2\right)  }\mathcal{Y}_{3}^{\left(  2\right)
}\mathcal{Y}_{4}^{\left(  2\right)  }\right)  \right] \\
&  =E\left[  2\mathcal{Y}_{1}^{2}\mathcal{Y}_{3}\mathcal{Y}_{4}-\left(
\mathcal{Y}_{1}^{\left(  1\right)  }+\mathcal{Y}_{1}^{\left(  2\right)
}\right)  \left(  \mathcal{Y}_{1}^{\left(  1\right)  }\mathcal{Y}_{3}^{\left(
1\right)  }\mathcal{Y}_{4}^{\left(  1\right)  }+\mathcal{Y}_{1}^{\left(
2\right)  }\mathcal{Y}_{3}^{\left(  2\right)  }\mathcal{Y}_{4}^{\left(
2\right)  }\right)  \right] \\
&  =2E\left[  \mathcal{Y}_{1}^{2}\mathcal{Y}_{3}\mathcal{Y}_{4}\right]
-E\left[  \left(  \mathcal{Y}_{1}^{\left(  1\right)  }+\mathcal{Y}%
_{1}^{\left(  2\right)  }\right)  \left(  \mathcal{Y}_{1}^{\left(  1\right)
}\mathcal{Y}_{3}^{\left(  1\right)  }\mathcal{Y}_{4}^{\left(  1\right)
}+\mathcal{Y}_{1}^{\left(  2\right)  }\mathcal{Y}_{3}^{\left(  2\right)
}\mathcal{Y}_{4}^{\left(  2\right)  }\right)  \right]
\end{align*}
but using the fact that the two subsamples are independent of each other for
the second equality, and the fact that $\mathcal{Y}_{1}^{\left(  1\right)  }$
and $\mathcal{Y}_{1}^{\left(  2\right)  }$ have identical distribution, along
with the modification of Lemma \ref{lem-fourth-moment-Y} replacing
$\mathcal{Y}_{2}$ by $\mathcal{Y}_{1}$, , e.g. we get
\begin{align*}
E\left[  \left(  \mathcal{Y}_{1}^{\left(  1\right)  }+\mathcal{Y}_{1}^{\left(
2\right)  }\right)  \left(  \mathcal{Y}_{1}^{\left(  1\right)  }%
\mathcal{Y}_{3}^{\left(  1\right)  }\mathcal{Y}_{4}^{\left(  1\right)
}+\mathcal{Y}_{1}^{\left(  2\right)  }\mathcal{Y}_{3}^{\left(  2\right)
}\mathcal{Y}_{4}^{\left(  2\right)  }\right)  \right]   &  =2E\left[  \left(
\mathcal{Y}_{1}^{\left(  1\right)  }\right)  ^{2}\mathcal{Y}_{3}^{\left(
1\right)  }\mathcal{Y}_{4}^{\left(  1\right)  }\right] \\
&  =2E\left[  \mathcal{Y}_{1}^{2}\mathcal{Y}_{3}\mathcal{Y}_{4}\right]
+O\left(  \frac{1}{n}\right)  ,
\end{align*}
from which we obtain%
\[
E\left[  \mathcal{Y}_{1}\left(  2\mathcal{Y}_{1}\mathcal{Y}_{3}\mathcal{Y}%
_{4}-\sqrt{2}\left(  \mathcal{Y}_{1}^{\left(  1\right)  }\mathcal{Y}%
_{3}^{\left(  1\right)  }\mathcal{Y}_{4}^{\left(  1\right)  }+\mathcal{Y}%
_{1}^{\left(  2\right)  }\mathcal{Y}_{3}^{\left(  2\right)  }\mathcal{Y}%
_{4}^{\left(  2\right)  }\right)  \right)  \right]  =O\left(  \frac{1}%
{n}\right)
\]
Likewise, we have%
\[
E\left[  \mathcal{Y}_{1}\left(  2\mathcal{Y}_{1}\mathcal{Y}_{5}\mathcal{Y}%
_{6}-\sqrt{2}\left(  \mathcal{Y}_{1}^{\left(  1\right)  }\mathcal{Y}%
_{5}^{\left(  1\right)  }\mathcal{Y}_{6}^{\left(  1\right)  }+\mathcal{Y}%
_{1}^{\left(  2\right)  }\mathcal{Y}_{5}^{\left(  2\right)  }\mathcal{Y}%
_{6}^{\left(  2\right)  }\right)  \right)  \right]  =O\left(  \frac{1}%
{n}\right)
\]
which gives us the conclusion that%
\[
E\left[  T_{1}T_{3,SS}\right]  =O\left(  \frac{1}{n}\right)
\]

\end{proof}

\subsection{Proof of Proposition \ref{higher_var} \label{HOV-jack}}
Here we derive the higher-order variance for the leave-one-out jackknife estimator. By the equivalence results in the main paper, this is also the higher-order variance of the bootstrap and analytical bias-corrected estimators. 
We present an alternative expansion of the jackknife estimator.

\begin{lemma}
\label{lem-J-two-terms}%
\[
\sqrt{n}\left(  \widehat{\theta}_{J}-\theta_{0}\right)  =T_{1}+\frac{1}%
{\sqrt{n}}T_{2,J}+\frac{1}{n}T_{3,J}+o_{p}\left(  n^{-1}\right)  ,
\]
where%
\begin{align*}
T_{2,J}  &  =\frac{1}{n-1}\sum_{i\neq j}Y_{1,i}Y_{2,j},\\
T_{3,J}  &  =n\mathcal{Y}_{1}\mathcal{Y}_{3}\mathcal{Y}_{4}-\sqrt{\frac
{n}{n-1}}\sum_{j=1}^{n}\mathcal{Y}_{1,\left(  i\right)  }\mathcal{Y}%
_{3,\left(  i\right)  }\mathcal{Y}_{4,\left(  i\right)  }\\
&  +n\mathcal{Y}_{1}\mathcal{Y}_{5}\mathcal{Y}_{6}-\sqrt{\frac{n}{n-1}}%
\sum_{j=1}^{n}\mathcal{Y}_{1,\left(  i\right)  }\mathcal{Y}_{5,\left(
i\right)  }\mathcal{Y}_{6,\left(  i\right)  },
\end{align*}
and%
\[
\mathcal{Y}_{1,\left(  i\right)  }=\frac{1}{\sqrt{n-1}}\sum_{j\neq i}%
^{n}Y_{1j}=\frac{\sqrt{n}\mathcal{Y}_{1}-Y_{1,i}}{\sqrt{n-1}}%
\]
etc.
\end{lemma}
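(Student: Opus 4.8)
The plan is to derive the expansion directly from the $\mathcal{Y}$-form of the MLE expansion in (\ref{theta-abcdfg}), exactly as was done for the split-sample estimator in Lemma \ref{lem-ss-expansion}, but now forming the leave-one-out combination. The starting observation is that each leave-one-out estimator $\widehat{\theta}_{(i)}$ is itself an MLE on a subsample of size $n-1$, so it admits the same expansion as (\ref{theta-abcdfg}) with $\mathcal{Y}_k$ replaced by the leave-one-out sums $\mathcal{Y}_{k,(i)}=(n-1)^{-1/2}\sum_{j\neq i}Y_{k,j}$ and $n$ replaced by $n-1$. Writing $S=\sqrt{n}(\widehat{\theta}-\theta_0)$ and $S_{(i)}=\sqrt{n-1}(\widehat{\theta}_{(i)}-\theta_0)$, the jackknife identity $\widehat{\theta}_J=n\widehat{\theta}-\frac{n-1}{n}\sum_i\widehat{\theta}_{(i)}$ gives
\[
\sqrt{n}(\widehat{\theta}_J-\theta_0)=nS-\sqrt{\tfrac{n-1}{n}}\sum_{i=1}^nS_{(i)},
\]
and I would substitute the two expansions and collect terms by stochastic order.

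First I would handle the leading term. Using $\sum_j Y_{1,j}=\sqrt{n}\,\mathcal{Y}_1$ one gets the key identity $\sum_i\mathcal{Y}_{1,(i)}=\sqrt{n}\sqrt{n-1}\,\mathcal{Y}_1$, so the order-$n$ contribution collapses to $n\mathcal{Y}_1-(n-1)\mathcal{Y}_1=\mathcal{Y}_1=T_1$. For the second-order term the claim is that the ``own-observation'' (diagonal) part, which carries the higher-order bias, cancels. I would verify this by the standard jackknife V-statistic bookkeeping: expanding $\sum_i\mathcal{Y}_{1,(i)}\mathcal{Y}_{2,(i)}$ into its $j=k$ and $j\neq k$ pieces yields $\sum_jY_{1,j}Y_{2,j}+\frac{n-2}{n-1}\sum_{j\neq k}Y_{1,j}Y_{2,k}$, and subtracting this with the prefactor $\sqrt{(n-1)/n}\cdot(n-1)^{-1/2}$ from $\sqrt{n}\,\mathcal{Y}_1\mathcal{Y}_2$ leaves exactly $\frac{1}{\sqrt{n}}\cdot\frac{1}{n-1}\sum_{i\neq j}Y_{1,i}Y_{2,j}=\frac{1}{\sqrt{n}}T_{2,J}$; the diagonal terms drop out, which is the analytic content of the bias correction. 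The third-order term is then immediate from the coefficient identity $\frac{1}{n}\sqrt{\frac{n}{n-1}}=\frac{1}{\sqrt{n}\sqrt{n-1}}$, which matches the proposed $\frac{1}{n}T_{3,J}$ against the order-one contribution $\mathcal{Y}_1\mathcal{Y}_3\mathcal{Y}_4+\mathcal{Y}_1\mathcal{Y}_5\mathcal{Y}_6$ minus its leave-one-out average.

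The remaining step is to show that the fourth- and higher-order pieces of the two expansions (the $T_4,T_5$ analogues in Proposition \ref{ApproxThetaHat-strong}) contribute only $o_p(n^{-1})$ after the jackknife differencing and the $\sqrt{n}$ scaling. I would dispatch this exactly as in the main-text jackknife proof, where (\ref{r4}), (\ref{r5}), and (\ref{r6}) establish precisely that the analogous remainder terms vanish at the required rate; the differencing $n^{-k/2}A_k-(n-1)^{-k/2}A_{k,(i)}$ behaves like a discrete derivative and gains an extra factor that suppresses it below $n^{-1}$. I expect this remainder control to be the only delicate point---everything else is exact algebra---but since it is already carried out for the $\theta^{\epsilon}$-form of the same estimator, it transfers verbatim to the $\mathcal{Y}$-notation used here.
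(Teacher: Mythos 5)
Your proposal is correct and follows essentially the same route as the paper: the paper likewise substitutes the leave-one-out expansions into the jackknife identity, cancels the diagonal terms via the exact V-statistic algebra (its Lemmas \ref{Vjack1} and \ref{Vjack2}, which are precisely your $j=k$ versus $j\neq k$ bookkeeping), and controls the higher-order remainders by appealing to (\ref{r4}), (\ref{r5}), and (\ref{r6}). Working directly in the $\mathcal{Y}$-notation rather than through $\theta^{\epsilon},\theta^{\epsilon\epsilon},\dots$ is only a cosmetic difference given (\ref{T2-X1X2})--(\ref{T3-X1X5X6}).
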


\begin{proof}
Combining (\ref{J-expansion}) with (\ref{r6}), (\ref{r5}), (\ref{r4}), the LHS
of (\ref{r1-3}) as well as, we obtain%
\begin{align*}
\lefteqn{\sqrt{n}\left(  \widehat{\theta}_{J}-\theta_{0}\right)
=T_{1,J}+\frac{1}{\sqrt{n}}T_{2,J}+\frac{1}{n}T_{3,J}+o_{p}\left(
n^{-1}\right)  }\\
&  =T_{1}+\frac{1}{2}\frac{1}{\sqrt{n}}\left(  n\theta^{\epsilon\epsilon}%
-\sum_{j=1}^{n}\theta_{\left(  j\right)  }^{\epsilon\epsilon}\right)
+\frac{1}{6}\frac{1}{n}\left(  n\theta^{\epsilon\epsilon\epsilon}-\sqrt
{\frac{n}{n-1}}\sum_{j=1}^{n}\theta_{\left(  j\right)  }^{\epsilon
\epsilon\epsilon}\right)  +o_{p}\left(  n^{-1}\right)  .
\end{align*}
By (\ref{T2-X1X2}) and Lemma \ref{Vjack2}, we see that%
\[
T_{2,J}=\frac{1}{2}\left(  n\theta^{\epsilon\epsilon}-\sum_{j=1}^{n}%
\theta_{\left(  j\right)  }^{\epsilon\epsilon}\right)  =\frac{1}{n-1}%
\sum_{i\neq j}Y_{1,i}Y_{2,j}.
\]
Also by definition, we see that%
\begin{align*}
T_{3,J}  &  =\frac{1}{6}\left(  n\theta^{\epsilon\epsilon\epsilon}-\sqrt
{\frac{n}{n-1}}\sum_{j=1}^{n}\theta_{\left(  j\right)  }^{\epsilon
\epsilon\epsilon}\right) \\
&  =n\mathcal{Y}_{1}\mathcal{Y}_{3}\mathcal{Y}_{4}-\sqrt{\frac{n}{n-1}}%
\sum_{j=1}^{n}\mathcal{Y}_{1,\left(  i\right)  }\mathcal{Y}_{3,\left(
i\right)  }\mathcal{Y}_{4,\left(  i\right)  }\\
&  +n\mathcal{Y}_{1}\mathcal{Y}_{5}\mathcal{Y}_{6}-\sqrt{\frac{n}{n-1}}%
\sum_{j=1}^{n}\mathcal{Y}_{1,\left(  i\right)  }\mathcal{Y}_{5,\left(
i\right)  }\mathcal{Y}_{6,\left(  i\right)  }.
\end{align*}

\end{proof}

The higher order variance of the jackknife estimator is%
\begin{equation}
E\left[  T_{1,J}^{2}\right]  +\frac{1}{n}\left(  E\left[  \left(
T_{2,J}\right)  ^{2}\right]  +2\sqrt{n}E\left[  T_{1}T_{2,J}\right]
+2E\left[  T_{1}T_{3,J}\right]  \right)  . \label{J-high-var}%
\end{equation}

\begin{lemma}
\label{lem-J-first-two-terms}%
\begin{align*}
E\left[  \left(  T_{2,J}\right)  ^{2}\right]   &  =\frac{1}{2}E\left[  \left(
T_{2,SS}\right)  ^{2}\right]  +O\left(  n^{-1}\right)  ,\\
E\left[  T_{1}T_{2,J}\right]   &  =0.
\end{align*}

\end{lemma}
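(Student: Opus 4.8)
The plan is to evaluate both moments directly, exploiting that the pairs $\{(Y_{1,i},Y_{2,i})\}_{i=1}^n$ are i.i.d.\ across $i$ and that each $Y_{1,i}$ and $Y_{2,i}$ has mean zero. For the second identity I would expand
\[
E[T_1 T_{2,J}] = \frac{1}{\sqrt{n}(n-1)}\sum_{m=1}^n \sum_{i\neq j} E[Y_{1,m}Y_{1,i}Y_{2,j}],
\]
and note that each summand is a product of three mean-zero factors indexed by $m,i,j$ with $i\neq j$. For such an expectation to be nonzero, every index value must occur at least twice among the three factors; but with only three factors this forces $m=i=j$, which is excluded by $i\neq j$. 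Hence every term vanishes and $E[T_1 T_{2,J}]=0$ exactly.

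For the first identity I would expand the square as a quadruple sum
\[
E[(T_{2,J})^2] = \frac{1}{(n-1)^2}\sum_{i\neq j}\sum_{k\neq l} E[Y_{1,i}Y_{2,j}Y_{1,k}Y_{2,l}],
\]
and again retain only the index configurations in which no observation appears exactly once. The constraints $i\neq j$ and $k\neq l$ rule out any pattern in which a value appears three or four times, so the surviving configurations are exactly the two ``two-pair'' patterns: $(i=k,\ j=l)$, which contributes $E[Y_{1,i}^2]E[Y_{2,j}^2]=E[Y_1^2]E[Y_2^2]$, and $(i=l,\ j=k)$, which contributes $(E[Y_1 Y_2])^2$. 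Each pattern is realized by $n(n-1)$ ordered tuples, and the two are disjoint (their intersection would force $i=j$), so
\[
E[(T_{2,J})^2] = \frac{n(n-1)}{(n-1)^2}\Big(E[Y_1^2]E[Y_2^2] + (E[Y_1 Y_2])^2\Big) = \frac{n}{n-1}\Big(E[Y_1^2]E[Y_2^2] + (E[Y_1 Y_2])^2\Big).
\]
Writing $n/(n-1)=1+O(n^{-1})$ and comparing with $E[(T_{2,SS})^2]=2E[Y_1^2]E[Y_2^2]+2(E[Y_1Y_2])^2$ from Lemma~\ref{lem-ss-first-two-terms} then yields $E[(T_{2,J})^2]=\tfrac12 E[(T_{2,SS})^2]+O(n^{-1})$, as claimed.

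The only real work is the combinatorial bookkeeping in the quadruple sum: verifying that the diagonal terms are excluded by the constraints, that no triple- or quadruple-coincidence survives, and that the two admissible pairings are counted without overlap. This is routine once the ``every index appears at least twice'' principle is invoked, and I expect it to be the main (though minor) obstacle. It is worth noting that the factor-of-two gap between the jackknife and split-sample higher-order variances is ultimately traceable precisely to the $n/(n-1)\to 1$ normalization appearing here, in contrast to the cross-sample structure $T_{2,SS}=\mathcal{Y}_1^{(1)}\mathcal{Y}_2^{(2)}+\mathcal{Y}_1^{(2)}\mathcal{Y}_2^{(1)}$, which doubles the relevant variance.
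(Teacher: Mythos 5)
Your proposal is correct and follows essentially the same route as the paper's proof: both expand $E[(T_{2,J})^2]$ as a quadruple sum, observe that only the two two-pair index configurations $(i_1=i_2\neq j_1=j_2)$ and $(i_1=j_2\neq j_1=i_2)$ survive, each with $n(n-1)$ terms, yielding $\tfrac{n}{n-1}\left(E[Y_{1,i}^2]E[Y_{2,i}^2]+(E[Y_{1,i}Y_{2,i}])^2\right)$, and then invoke Lemma~\ref{lem-ss-first-two-terms}; the vanishing of $E[T_1T_{2,J}]$ by the mean-zero/independence argument is likewise identical.
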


\begin{proof}
We note that%
\begin{align}
E\left[  \left(  T_{2,J}\right)  ^{2}\right]   &  =\frac{1}{\left(
n-1\right)  ^{2}}E\left[  \sum_{i_{1}\neq j_{1}}\sum_{i_{2}\neq j_{2}%
}Y_{1,i_{1}}Y_{2,j_{1}}Y_{1,i_{2}}Y_{2,j_{2}}\right] \nonumber\\
&  =\frac{1}{\left(  n-1\right)  ^{2}}\sum_{i_{1}=i_{2}\neq j_{1}=j_{2}%
}E\left[  Y_{1,i_{1}}Y_{2,j_{1}}Y_{1,i_{2}}Y_{2,j_{2}}\right] \nonumber\\
&  +\frac{1}{\left(  n-1\right)  ^{2}}\sum_{i_{1}=j_{2}\neq j_{1}=i_{2}%
}E\left[  Y_{1,i_{1}}Y_{2,j_{1}}Y_{1,i_{2}}Y_{2,j_{2}}\right] \nonumber\\
&  =\frac{n\left(  n-1\right)  }{\left(  n-1\right)  ^{2}}\left(  E\left[
Y_{1,i}^{2}\right]  E\left[  Y_{2,i}^{2}\right]  +\left(  E\left[
Y_{1,i}Y_{2,i}\right]  \right)  ^{2}\right) \nonumber\\
&  =E\left[  Y_{1,i}^{2}\right]  E\left[  Y_{2,i}^{2}\right]  +\left(
E\left[  Y_{1,i}Y_{2,i}\right]  \right)  ^{2}+O\left(  n^{-1}\right)
\nonumber\\
&  =\frac{1}{2}E\left[  \left(  T_{2,SS}\right)  ^{2}\right]  +O\left(
n^{-1}\right)  , \label{J1}%
\end{align}
and%
\begin{equation}
E\left[  T_{1}T_{2,J}\right]  =E\left[  \left(  \frac{1}{\sqrt{n}}\sum
_{i=1}^{n}Y_{1,i}\right)  \left(  \frac{1}{n-1}\sum_{i\neq j}Y_{1,i}%
Y_{2,j}\right)  \right]  =0=E\left[  T_{1}T_{2,SS}\right]  , \label{J2}%
\end{equation}
where we used Lemma \ref{lem-ss-first-two-terms} in the last equalities of
(\ref{J1}) and (\ref{J2}).
\end{proof}

\begin{lemma}
\label{lem-J-third-term-details}%
\[
E\left[  T_{1}T_{3,J}\right]  =O\left(  \frac{1}{n}\right)  .
\]

\end{lemma}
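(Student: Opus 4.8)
The plan is to mirror the strategy used in the proof of Lemma~\ref{lem-SS-3-direct}, relying on the moment identity of Lemma~\ref{lem-fourth-moment-Y} together with the independence of each omitted observation from its leave-one-out sums. Since $T_{3,J}$ (from Lemma~\ref{lem-J-two-terms}) splits into two structurally identical pieces — one built from $\mathcal{Y}_3,\mathcal{Y}_4$ and one from $\mathcal{Y}_5,\mathcal{Y}_6$ — it suffices to bound $E[\mathcal{Y}_1 P_1]$ for $P_1 = n\mathcal{Y}_1\mathcal{Y}_3\mathcal{Y}_4 - \sqrt{n/(n-1)}\,\sum_{i=1}^n \mathcal{Y}_{1,(i)}\mathcal{Y}_{3,(i)}\mathcal{Y}_{4,(i)}$, the second piece following verbatim with $(\mathcal{Y}_5,\mathcal{Y}_6)$ in place of $(\mathcal{Y}_3,\mathcal{Y}_4)$. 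Throughout I would abbreviate $a := E[Y_{1,i}^2 Y_{3,i}Y_{4,i}]$ and $b := E[Y_{1,i}^2]E[Y_{3,i}Y_{4,i}] + 2E[Y_{1,i}Y_{3,i}]E[Y_{1,i}Y_{4,i}]$.

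First I would treat the full-sample term. Using $T_1 = \mathcal{Y}_1$ and Lemma~\ref{lem-fourth-moment-Y} with the second factor set equal to $\mathcal{Y}_1$, one gets $E[\mathcal{Y}_1 \cdot n\mathcal{Y}_1\mathcal{Y}_3\mathcal{Y}_4] = n\,E[\mathcal{Y}_1^2\mathcal{Y}_3\mathcal{Y}_4] = a + (n-1)b$. This term is $O(n)$, so the entire content of the lemma is that it must cancel against the leave-one-out term up to an $O(1/n)$ remainder.

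Next I would evaluate the leave-one-out term. The key simplification is the decomposition $\mathcal{Y}_1 = n^{-1/2}Y_{1,i} + \sqrt{(n-1)/n}\,\mathcal{Y}_{1,(i)}$: because $Y_{1,i}$ is independent of the leave-$i$-out sums and mean zero, the cross term $E[Y_{1,i}\mathcal{Y}_{1,(i)}\mathcal{Y}_{3,(i)}\mathcal{Y}_{4,(i)}]$ factorizes to $E[Y_{1,i}]\,E[\mathcal{Y}_{1,(i)}\mathcal{Y}_{3,(i)}\mathcal{Y}_{4,(i)}]=0$, leaving $E[\mathcal{Y}_1\mathcal{Y}_{1,(i)}\mathcal{Y}_{3,(i)}\mathcal{Y}_{4,(i)}] = \sqrt{(n-1)/n}\,E[\mathcal{Y}_{1,(i)}^2\mathcal{Y}_{3,(i)}\mathcal{Y}_{4,(i)}]$. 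Since the omitted-$i$ quantities are scaled sums of $n-1$ i.i.d.\ terms, applying Lemma~\ref{lem-fourth-moment-Y} with $n-1$ in place of $n$ gives $E[\mathcal{Y}_{1,(i)}^2\mathcal{Y}_{3,(i)}\mathcal{Y}_{4,(i)}] = \tfrac{1}{n-1}a + \tfrac{n-2}{n-1}b$, which is independent of $i$. Summing over the $n$ values of $i$ and multiplying by $\sqrt{n/(n-1)}$, the two square-root factors cancel, so the leave-one-out contribution to $E[\mathcal{Y}_1 P_1]$ equals $\tfrac{n}{n-1}a + \tfrac{n(n-2)}{n-1}b$.

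Finally I would collect terms: $E[\mathcal{Y}_1 P_1] = \big(a + (n-1)b\big) - \big(\tfrac{n}{n-1}a + \tfrac{n(n-2)}{n-1}b\big)$, where the coefficient of $a$ is $1-\tfrac{n}{n-1}=\tfrac{-1}{n-1}$ and that of $b$ is $(n-1)-\tfrac{n(n-2)}{n-1}=\tfrac{1}{n-1}$ since $(n-1)^2 - n(n-2)=1$; hence $E[\mathcal{Y}_1 P_1] = \tfrac{b-a}{n-1} = O(1/n)$. The identical computation with $(\mathcal{Y}_5,\mathcal{Y}_6)$ yields a second $O(1/n)$ term, and adding the two gives $E[T_1 T_{3,J}] = O(1/n)$. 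The main obstacle is purely organizational: tracking the exact $n$- and $(n-1)$-dependent prefactors so that the two $O(n)$ pieces cancel cleanly. The conceptual work is entirely carried by the independence argument that annihilates the cross term and by the exact applicability of Lemma~\ref{lem-fourth-moment-Y} at sample size $n-1$.
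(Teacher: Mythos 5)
Your proposal is correct and follows essentially the same route as the paper's own proof: the same decomposition $\mathcal{Y}_1=n^{-1/2}Y_{1,i}+\sqrt{(n-1)/n}\,\mathcal{Y}_{1,(i)}$, the same independence argument killing the cross term, and the same application of Lemma \ref{lem-fourth-moment-Y} at sample sizes $n$ and $n-1$ to produce the exact $-C_1/(n-1)+C_2/(n-1)$ cancellation. Your explicit bookkeeping of $a$ and $b$ (correctly collapsing the three pairwise products when $\mathcal{Y}_2$ is replaced by $\mathcal{Y}_1$) is, if anything, slightly cleaner than the paper's notation.
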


\begin{proof}
We note that%
\begin{align*}
&  E\left[  \mathcal{Y}_{1}\left(  n\mathcal{Y}_{1}\mathcal{Y}_{3}%
\mathcal{Y}_{4}-\sqrt{\frac{n}{n-1}}\sum_{i=1}^{n}\mathcal{Y}_{1,\left(
i\right)  }\mathcal{Y}_{3,\left(  i\right)  }\mathcal{Y}_{4,\left(  i\right)
}\right)  \right] \\
&  =nE\left[  \mathcal{Y}_{1}^{2}\mathcal{Y}_{3}\mathcal{Y}_{4}\right]
-\sqrt{\frac{n}{n-1}}\sum_{i=1}^{n}E\left[  \mathcal{Y}_{1}\mathcal{Y}%
_{1,\left(  i\right)  }\mathcal{Y}_{3,\left(  i\right)  }\mathcal{Y}%
_{4,\left(  i\right)  }\right] \\
&  =nE\left[  \mathcal{Y}_{1}^{2}\mathcal{Y}_{3}\mathcal{Y}_{4}\right]
-\sqrt{\frac{n}{n-1}}\sum_{i=1}^{n}E\left[  \frac{\sqrt{n-1}\mathcal{Y}%
_{1,\left(  i\right)  }+Y_{1,i}}{\sqrt{n}}\mathcal{Y}_{1,\left(  i\right)
}\mathcal{Y}_{3,\left(  i\right)  }\mathcal{Y}_{4,\left(  i\right)  }\right]
\\
&  =nE\left[  \mathcal{Y}_{1}^{2}\mathcal{Y}_{3}\mathcal{Y}_{4}\right]
-\sum_{i=1}^{n}E\left[  \mathcal{Y}_{1,\left(  i\right)  }^{2}\mathcal{Y}%
_{3,\left(  i\right)  }\mathcal{Y}_{4,\left(  i\right)  }\right]  -\frac
{1}{\sqrt{n-1}}\sum_{i=1}^{n}E\left[  Y_{1,i}\mathcal{Y}_{1,\left(  i\right)
}\mathcal{Y}_{3,\left(  i\right)  }\mathcal{Y}_{4,\left(  i\right)  }\right]
\\
&  =nE\left[  \mathcal{Y}_{1}^{2}\mathcal{Y}_{3}\mathcal{Y}_{4}\right]
-nE\left[  \mathcal{Y}_{1,\left(  i\right)  }^{2}\mathcal{Y}_{3,\left(
i\right)  }\mathcal{Y}_{4,\left(  i\right)  }\right]  +0\\
&  =n\left(  \frac{1}{n}C_{1}+\frac{n\left(  n-1\right)  }{n^{2}}C_{2}\right)
-n\left(  \frac{1}{n-1}C_{1}+\frac{\left(  n-1\right)  \left(  n-2\right)
}{\left(  n-1\right)  ^{2}}C_{2}\right) \\
&  =-\frac{1}{n-1}C_{1}+\frac{C_{2}}{n-1}\\
&  =O\left(  \frac{1}{n}\right)  ,
\end{align*}
where we used the modification of Lemma \ref{lem-fourth-moment-Y} replacing
$\mathcal{Y}_{2}$ by $\mathcal{Y}_{1}$ in the third last equality. Here, we
defined
\begin{align*}
C_{1}  &  =E\left[  Y_{1,i}Y_{2,i}Y_{3,i}Y_{4,i}\right] \\
C_{2}  &  =2\left(  E\left[  Y_{1,j}Y_{2,j}\right]  E\left[  Y_{3,i}%
Y_{4,i}\right]  +E\left[  Y_{1,j}Y_{3,j}\right]  E\left[  Y_{2,i}%
Y_{4,i}\right]  +E\left[  Y_{1,j}Y_{4,j}\right]  E\left[  Y_{2,i}%
Y_{3,i}\right]  \right)  .
\end{align*}

\end{proof}

From Lemmas \ref{lem-ss-first-two-terms} and \ref{lem-SS-3-direct} along with
Lemmas \ref{lem-J-first-two-terms} and \ref{lem-J-third-term-details}, we
conclude that%
\begin{align*}
&  E\left[  T_{1,SS}^{2}\right]  +\frac{1}{n}\left(  E\left[  \left(
T_{2,SS}\right)  ^{2}\right]  +2\sqrt{n}E\left[  T_{1,SS}T_{2,SS}\right]
+2E\left[  T_{1,SS}T_{3,SS}\right]  \right) \\
&  -\left(  E\left[  T_{1,J}^{2}\right]  +\frac{1}{n}\left(  E\left[  \left(
T_{2,J}\right)  ^{2}\right]  +2\sqrt{n}E\left[  T_{1}T_{2,J}\right]
+2E\left[  T_{1}T_{3,J}\right]  \right)  \right) \\
&  =\frac{1}{n}\left(  E\left[  Y_{1,i}^{2}\right]  E\left[  Y_{2,i}%
^{2}\right]  +\left(  E\left[  Y_{1,i}Y_{2,i}\right]  \right)  ^{2}+O\left(
\frac{1}{n}\right)  \right)  .
\end{align*}
In other words, the split-sample jackknife has higher order variance that is
strictly larger than jackknife bias correction by $\frac{1}{n}\left(  E\left[
Y_{1,i}^{2}\right]  E\left[  Y_{2,i}^{2}\right]  +\left(  E\left[
Y_{1,i}Y_{2,i}\right]  \right)  ^{2}\right)  $.

\section{Proofs for Panel Models \label{proof-for-panel-models}}

\subsection{Expansions and V-statistic forms}

The method of deriving the expansions in this paper follows that used in Hahn
and Newey (2004). For notational simplicity the proofs are done for scalar
$\theta$, with the vector case expected to follow similarly. To describe the
expansions, first note that the MLE solves the moment conditions
\begin{align*}
0  &  =\sum_{i=1}^{n}\sum_{t=1}^{T}U(z_{it};\hat{\theta},\hat{\alpha}_{i}%
(\hat{\theta}))\\
0  &  =\sum_{t=1}^{T}V(z_{it};\hat{\theta},\hat{\alpha}_{i}(\hat{\theta
}))\quad\text{for all }i
\end{align*}
for $U(z_{it};\theta,\alpha)=u(z_{it};\theta,\alpha)-\delta V(z_{it}%
;\theta,\alpha)$, where $\delta=E[u_{it}V_{it}]/E[V_{it}^{2}]$. Let
$F\equiv(F_{1},\dots,F_{n})$ be the collection of distribution functions and
$\hat{F}$ the corresponding collection of empirical distributions. Next,
define $F(\epsilon)=F+\epsilon\sqrt{T}(\hat{F}-F)$ for $\epsilon\in
\lbrack0,T^{-1/2}]$, and let $\alpha_{i}(\theta,F_{i}(\epsilon))$ and
$\theta(F(\epsilon))$ be solutions to the moment equations
\begin{align*}
0  &  =\int V\left(  z_{it};\theta,\alpha_{i}(\theta,F_{i}(\epsilon))\right)
dF_{i}(\epsilon),\quad\text{for all }i\\
0  &  =\sum_{i=1}^{n}\int U\left(  z_{it};\theta(F(\epsilon)),\alpha
_{i}\left(  \theta(F(\epsilon)),F_{i}(\epsilon)\right)  \right)
dF_{i}(\epsilon)
\end{align*}

The expansion is generated via Taylor series expansion of $\theta
(F(\epsilon))$
\[
\hat{\theta}-\theta_{0}=\frac{1}{\sqrt{T}}\theta^{\epsilon}(0)+\frac{1}%
{2}\frac{1}{T}\theta^{\epsilon\epsilon}(0)+\frac{1}{6}\frac{1}{T^{3/2}}%
\theta^{\epsilon\epsilon\epsilon}(0)+\frac{1}{24}\frac{1}{T^{2}}%
\theta^{\epsilon\epsilon\epsilon\epsilon}(0)+R_{n,T}%
\]
where $\theta^{\epsilon}(0)=d\theta(F(\epsilon))/d\epsilon$, $\theta
^{\epsilon\epsilon}(0)=d^{2}\theta(F(\epsilon))/d\epsilon^{2}$ etc. evaluated
at $\epsilon=0$. In the main paper we use the notation $A_{n}=\theta
^{\epsilon}(0)$, $B_{n}=\frac{1}{2}\theta^{\epsilon\epsilon}(0)$ and so on to
avoid introducing complicated notation. Lemmas 1-5 of the Supplementary
Appendix II show that $E[A_{n}]=O(1)$, $E[B_{n}]=O(1)$, and so on, while
$\sqrt{n}(A_{n}-E[A_{n}])=O_{p}(1)$, $\sqrt{n}(B_{n}-E[B_{n}])=O_{p}(1)$
etc.\footnote{The results are established up to sixth-order terms} Further, in
Section D of the Supplementary Appendix II we show that $R_{n,T}=o_{p}%
(T^{-1})$. These results allow us to write the expansion
\begin{equation}
\sqrt{nT}(\hat{\theta}-\theta_{0})=\sqrt{n}A_{n}+\frac{\sqrt{n}}{\sqrt{T}%
}B_{n}+\frac{\sqrt{n}}{T}C_{n}+\frac{\sqrt{n}}{T\sqrt{T}}D_{n}+o_{p}(T^{-1})
\label{eq:expansionA}%
\end{equation}

\subsubsection{Normalized V-statistics}

The proofs of most of the results in the paper make use of a particular
structure for the terms in the above expansion. Consider a statistic of the
form
\begin{align*}
W_{i,T,m}  &  \equiv\frac{1}{T^{m/2}}\sum_{t_{1}=1}^{T}\sum_{t_{2}=1}%
^{T}\cdots\sum_{t_{m}=1}^{T}k_{1}\left(  x_{i,t_{1}}\right)  k_{2}\left(
x_{i,t_{2}}\right)  \cdots k_{m}\left(  x_{i,t_{m}}\right) \\
&  \equiv T^{m/2}\overline{k}_{i,1}\overline{k}_{i,2}\cdots\overline{k}_{i,m}%
\end{align*}
where $E\left[  k_{j}\left(  x_{i,t}\right)  \right]  =0$. We will call the
average
\[
W_{T,(m)}=\frac{1}{n}\sum_{i=1}^{n}W_{i,T,m}%
\]
a normalized V-statistic of order $m$. The elements of the expansion can all
be shown to be products of such V-statistics
\[
W_{T,(m_{1},\dots,m_{L})}=W_{T,m_{1}}\cdots W_{T,m_{L}}%
\]
Inspection of the expansion terms (see appendix to Hahn and Newey (2004))
shows the first-order term $A_{n}$ is a V-statistic with $L=1$ and $m_{1}=1$,
while the second-order term $B_{n}$ contains V-statistics with $L\leq2$ and
$\sum_{l}m_{l}=2$, the third-order term $C_{n}$ contains V-statistics with
$L\leq3$ and $\sum_{l}m_{l}=3$, and so on. Below we prove results for these
general statistics, which subsequently imply the results for the expansion terms.

\subsection{Jackknifing V-statistics}

Here we show the impact of the jackknife and split-sample bias corrections on
V-statistics up to third-order. We focus on $W_{i,T,m}$ rather than averages
over $i$, since the bias corrections only act on the time series dimension.
Recall that an $m$-th order expansion term is an average of terms
$T^{-m/2}W_{i,T,m}$. We can write the leave-one-out statistics as
\begin{align*}
W_{i,T,m}^{(-t)}  &  =\frac{1}{(T-1)^{m/2}}\sum_{t_{1}\neq t}^{T}\sum
_{t_{2}\neq t}^{T}\cdots\sum_{t_{m}\neq t}^{T}k_{1}\left(  x_{i,t_{1}}\right)
k_{2}\left(  x_{i,t_{2}}\right)  \cdots k_{m}\left(  x_{i,t_{m}}\right) \\
&  =(T-1)^{m/2}\frac{T\bar{k}_{i,1}-k_{1}(x_{i,t})}{T-1}\cdots\frac{T\bar
{k}_{i,m}-k_{m}(x_{i,t})}{T-1}\\
&  =(T-1)^{-m/2}\left(  T\bar{k}_{i,1}-k_{1}(x_{i,t})\right)  \cdots\left(
T\bar{k}_{i,m}-k_{m}(x_{i,t})\right)
\end{align*}
The corresponding jackknifed statistic is then
\[
T^{-\frac{m}{2}}\tilde{W}_{i,T,m}=T\cdot T^{-\frac{m}{2}}W_{i,T,m}%
-(T-1)\cdot(T-1)^{-\frac{m}{2}}\frac{1}{T}\sum_{t=1}^{T}W_{i,T,m}^{(-t)}%
\]
The following Lemma outlines the impact of the jackknife for $m=1,2,3$.

\begin{lemma}
\label{lem:WiT_jack} Let $\tilde{W}_{i,T,m}=TW_{i,T,m}-(T-1)(\frac{T}%
{T-1})^{m/2}\frac{1}{T}\sum_{t}W_{i,T,m}^{(-t)}$. Then we have:
\begin{align*}
\tilde{W}_{i,T,1}  &  =W_{i,T,1}\\
\tilde{W}_{i,T,2}  &  =\frac{1}{T-1}\sum_{t_{1}\neq t_{2}}k_{1}(x_{i,t_{1}%
})k_{2}(x_{i,t_{2}})\\
\tilde{W}_{i,T,3}  &  =\frac{-1}{T^{1/2}(T-1)}\frac{1}{T}\sum_{t}k_{1}%
(x_{i,t})k_{2}(x_{i,t})k_{3}(x_{i,t})\\
&  \quad+\frac{1}{T^{1/2}(T-1)^{2}}\sum_{t_{1}\neq t_{2}}\left(
\begin{array}
[c]{c}%
k_{1}(x_{i,t_{1}})k_{2}(x_{i,t_{2}})k_{3}(x_{i,t_{2}})\\
+k_{1}(x_{i,t_{2}})k_{2}(x_{i,t_{1}})k_{3}(x_{i,t_{2}})\\
+k_{1}(x_{i,t_{2}})k_{2}(x_{i,t_{2}})k_{3}(x_{i,t_{1}})
\end{array}
\right) \\
&  \quad+\frac{T^{2}-T-2}{T^{1/2}(T-1)^{2}(T-2)}\sum_{t_{1}\neq t_{2}\neq
t_{3}}k_{1}(x_{i,t_{1}})k_{2}(x_{i,t_{2}})k_{3}(x_{i,t_{3}})
\end{align*}

\end{lemma}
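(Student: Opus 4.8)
The plan is to treat the lemma as an exact algebraic identity in the finite-$T$ averages $\bar k_{i,j}\equiv T^{-1}\sum_{t=1}^T k_j(x_{i,t})$ and to verify it by direct expansion for each of $m=1,2,3$. The starting point is the factorization of the leave-one-out statistic displayed just above the lemma,
\[
W_{i,T,m}^{(-t)}=(T-1)^{-m/2}\prod_{j=1}^m\left(T\bar k_{i,j}-k_j(x_{i,t})\right),
\]
together with the observation that, upon averaging over $t$, any product of factors $\prod_{j\in S}k_j(x_{i,t})$ indexed by a subset $S$ contributes the power sum $T^{-1}\sum_t\prod_{j\in S}k_j(x_{i,t})$, while the complementary factors $\prod_{j\notin S}(T\bar k_{i,j})$ are constant in $t$ and pass through unchanged. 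I would also record once the scalar $(T-1)(T/(T-1))^{m/2}=T^{m/2}(T-1)^{1-m/2}$ multiplying $T^{-1}\sum_t W_{i,T,m}^{(-t)}$ in the definition of $\tilde W_{i,T,m}$, which combines with the $(T-1)^{-m/2}$ prefactor inside $W_{i,T,m}^{(-t)}$ to give an overall factor $T^{m/2}(T-1)^{1-m}$.

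First I would dispatch $m=1$ and $m=2$, which involve no genuine combinatorics. For $m=1$ there are no index coincidences: $T^{-1}\sum_t W_{i,T,1}^{(-t)}=(T-1)^{1/2}\bar k_{i,1}$, and substituting into the definition gives $\tilde W_{i,T,1}=T^{3/2}\bar k_{i,1}-(T-1)T^{1/2}\bar k_{i,1}=T^{1/2}\bar k_{i,1}=W_{i,T,1}$. For $m=2$, expanding the product and averaging yields $T^{-1}\sum_t W_{i,T,2}^{(-t)}=(T-1)^{-1}[\,T^2\bar k_{i,1}\bar k_{i,2}-2T\bar k_{i,1}\bar k_{i,2}+T^{-1}\sum_t k_1k_2\,]$; multiplying by the coefficient $(T-1)(T/(T-1))=T$ and subtracting from $TW_{i,T,2}=T^2\bar k_{i,1}\bar k_{i,2}$, the $\bar k_{i,1}\bar k_{i,2}$ contributions collapse to $T^2/(T-1)$, leaving $\tilde W_{i,T,2}=(T-1)^{-1}[\,T^2\bar k_{i,1}\bar k_{i,2}-\sum_t k_1k_2\,]$. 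The inclusion--exclusion identity $T^2\bar k_{i,1}\bar k_{i,2}=\sum_{t_1,t_2}k_1(x_{i,t_1})k_2(x_{i,t_2})=\sum_t k_1k_2+\sum_{t_1\neq t_2}k_1k_2$ then produces $(T-1)^{-1}\sum_{t_1\neq t_2}k_1(x_{i,t_1})k_2(x_{i,t_2})$, the claimed form.

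The substantive case is $m=3$. I would expand $\prod_{j=1}^3(T\bar k_{i,j}-k_j(x_{i,t}))$ into its eight terms, grouped by the number $r\in\{0,1,2,3\}$ of factors of the form $k_j(x_{i,t})$, average over $t$, and multiply by the overall factor $T^{3/2}(T-1)^{-2}$. Subtracting from $TW_{i,T,3}=T^{5/2}\bar k_{i,1}\bar k_{i,2}\bar k_{i,3}$ and collecting the pure $\bar k_{i,1}\bar k_{i,2}\bar k_{i,3}$ terms (which recombine through $(T-1)^2=T^2-2T+1$) leaves an expression in three kinds of objects: the all-equal power sum $T^{-1}\sum_t k_1k_2k_3$, the three ``two-equal'' quantities $\bar k_{i,j}\sum_t k_{j'}k_{j''}$, and the full triple product $T^3\bar k_{i,1}\bar k_{i,2}\bar k_{i,3}$. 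The final step is to convert each of these back into the restricted sums $\sum_t$, $\sum_{t_1\neq t_2}$, and $\sum_{t_1\neq t_2\neq t_3}$ using the three-index decomposition of the full sum into its distinct-index classes, and then to collect coefficients.

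The main obstacle is precisely this last bookkeeping for $m=3$. The full triple product splits into four coincidence classes (all three indices distinct, three distinct patterns with exactly one pair coinciding, and the fully diagonal term), each ``two-equal'' quantity splits into a diagonal and an off-diagonal piece, and all of these must be recombined so that the coefficient of $\sum_{t_1\neq t_2\neq t_3}$ reduces to $(T^2-T-2)/(T^{1/2}(T-1)^2(T-2))$, that of the symmetric two-index sum to $1/(T^{1/2}(T-1)^2)$, and that of $T^{-1}\sum_t k_1k_2k_3$ to $-1/(T^{1/2}(T-1))$. I would keep this mechanical by tracking the coefficient attached to each coincidence pattern separately before converting to the distinct-index sums; with that organization everything else is elementary algebra.
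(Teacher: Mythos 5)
Your approach is correct and is essentially the paper's own: the supplementary proof likewise expands $\prod_{j}\bigl(T\bar k_{i,j}-k_j(x_{i,t})\bigr)$ (equivalently, counts for each coincidence class of $(t_1,\dots,t_m)$ how many deletions $t\notin\{t_1,\dots,t_m\}$ the term survives) and then regroups into the restricted sums. Your explicit verifications for $m=1$ and $m=2$ are right, and your identification of the overall scalar $T^{m/2}(T-1)^{1-m}$ and of the coincidence-class bookkeeping as the only real work in the $m=3$ case is exactly on target.

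One warning before you grind through the $m=3$ algebra: if you carry out your plan faithfully you will \emph{not} reproduce the first line of the display as printed. Writing $D=\sum_t k_1(x_{i,t})k_2(x_{i,t})k_3(x_{i,t})$, $P$ for the symmetric two-index sum and $S$ for the all-distinct sum, the computation
\begin{align*}
\tilde W_{i,T,3}&=\frac{1}{T^{1/2}}\bigl(D+P+S\bigr)-\frac{T^{1/2}}{(T-1)^{2}}\bigl((T-1)D+(T-2)P+(T-3)S\bigr)\\
&=\frac{-1}{T^{1/2}(T-1)}\,D+\frac{1}{T^{1/2}(T-1)^{2}}\,P+\frac{T+1}{T^{1/2}(T-1)^{2}}\,S
\end{align*}
gives coefficient $-1/(T^{1/2}(T-1))$ on $D$ itself, not on $T^{-1}D$ as the lemma displays; a check at $T=2$ with $k_1=k_2=k_3$ confirms the extra $1/T$ in the statement is spurious. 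The other two coefficients do come out exactly as stated, since $T^{2}-T-2=(T+1)(T-2)$. The downstream use of this term in the proof of Proposition \ref{thm:variance} (the covariance $Cov(\sqrt n\tilde W_{T,(1)},\sqrt n J_1)=-\frac{1}{T-1}\frac1n\sum_i E[k_1k_2k_3k_4]$) is consistent with the corrected coefficient, so treat the printed first line as a typo rather than as the target of your verification; your method is sound and will produce the correct identity.
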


The proof of this Lemma is simple algebra and is left for the Supplementary
Appendix II, which also details results for higher-order terms.

For the split-sample correction we write $W_{i,T,m}^{(1)},W_{i,T,m}^{(2)}$ for
the V-statistics using the first and second halves of time periods, so that
the expansion terms take the form
\[
T^{-m/2}\check{W}_{i,T,m}=2\times T^{-m/2}W_{i,T,m}-\frac{1}{2}\times(\frac
{T}{2})^{-m/2}\left(  W_{i,T,m}^{(1)}+W_{i,T,m}^{(2)}\right)
\]

\begin{lemma}
\label{lem:wit_half-1} Let $\check{W}_{i,T,m}=2W_{i,T,m}-2^{m/2}\frac{1}%
{2}(W_{i,T,m}^{(1)}+W_{i,T,m}^{(2)})$, and let $\tau_{1}=\{1,2,\dots,M\}$ and
$\tau_{2}=\{M+1,\dots,T\}$ for $T=2M$. Then:
\begin{align*}
\check{W}_{i,T,1}  &  =W_{i,T,1}\\
\check{W}_{i,T,2}  &  =2\left(  \frac{1}{T}\sum_{t_{1}\in\tau_{1}}\sum
_{t_{2}\in\tau_{2}}k_{1}(x_{it_{1}})k_{2}(x_{it_{2}})+\frac{1}{T}\sum
_{t_{1}\in\tau_{2}}\sum_{t_{2}\in\tau_{1}}k_{1}(x_{it_{1}})k_{2}(x_{it_{2}%
})\right) \\
\check{W}_{i,T,3}  &  =2W_{i,T,3}-4\left(
\begin{array}
[c]{c}%
\frac{1}{T^{3/2}}\sum_{t_{1},t_{2},t_{3}\in\tau_{1}}k_{1}(x_{it_{1}}%
)k_{2}(x_{it_{2}})k_{3}(x_{it_{3}})\\
+\frac{1}{T^{3/2}}\sum_{t_{1},t_{2},t_{3}\in\tau_{2}}k_{1}(x_{it_{1}}%
)k_{2}(x_{it_{2}})k_{3}(x_{it_{3}})
\end{array}
\right)
\end{align*}

\end{lemma}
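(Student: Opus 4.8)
The plan is to prove all three identities by direct substitution of the definitions of the full-sample and half-sample V-statistics into $\check{W}_{i,T,m}=2W_{i,T,m}-2^{m/2}\tfrac12\big(W_{i,T,m}^{(1)}+W_{i,T,m}^{(2)}\big)$ and then simplifying using $T=2M$. The only quantity that requires care is the normalization: the order-$m$ V-statistic built from a half-sample of length $M$ carries a factor $M^{-m/2}=(T/2)^{-m/2}=2^{m/2}T^{-m/2}$, which combines with the split-sample weight $2^{m/2}/2$ to put a net coefficient $2^{m/2}\cdot\tfrac12\cdot 2^{m/2}T^{-m/2}=2^{m-1}T^{-m/2}$ on each half-sample's unnormalized block sum. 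Tracking these powers of two is essentially the entire content of the argument, exactly as the analogous jackknife statement (Lemma \ref{lem:WiT_jack}) is dispatched by ``simple algebra.''

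First I would decompose the full unnormalized sum $\sum_{t_1,\dots,t_m=1}^{T}k_1(x_{i,t_1})\cdots k_m(x_{i,t_m})$ into $2^m$ blocks, indexed by the assignment of each index $t_1,\dots,t_m$ to either $\tau_1$ or $\tau_2$. By definition $W_{i,T,m}^{(s)}$ is precisely the ``pure'' block in which every index lies in $\tau_s$, scaled by $M^{-m/2}$. The split-sample correction therefore removes exactly the two pure blocks (all-$\tau_1$ and all-$\tau_2$), and the remaining mixed blocks are what survive in $\check{W}_{i,T,m}$.

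For $m=1$ there is a single index, so only the two pure blocks exist and there are no mixed blocks. Writing $S_1=\sum_{t\in\tau_1}k_1(x_{i,t})$ and $S_2=\sum_{t\in\tau_2}k_1(x_{i,t})$, the full statistic equals $T^{-1/2}(S_1+S_2)$ with weight $2$, while the correction weight on the identical sum is $2^{1/2}\cdot\tfrac12\cdot M^{-1/2}=T^{-1/2}$, leaving net coefficient $2T^{-1/2}-T^{-1/2}=T^{-1/2}$, i.e. $\check{W}_{i,T,1}=W_{i,T,1}$. For $m=2$, the four blocks are the pure blocks $P_{11},P_{22}$ and the mixed blocks $P_{12},P_{21}$; here $2W_{i,T,2}=\tfrac1M(P_{11}+P_{12}+P_{21}+P_{22})$ while the correction equals $\tfrac1M(P_{11}+P_{22})$, so the pure blocks cancel exactly and $\check{W}_{i,T,2}=\tfrac1M(P_{12}+P_{21})=\tfrac2T(P_{12}+P_{21})$, which is the stated expression. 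For $m=3$ no cancellation against $W_{i,T,3}$ is required: one simply records that the correction weight on each pure block is $2^{3/2}\cdot\tfrac12\cdot M^{-3/2}=\sqrt2\cdot 2^{3/2}T^{-3/2}=4\,T^{-3/2}$, so $\check{W}_{i,T,3}=2W_{i,T,3}-4\,T^{-3/2}\big(\sum_{t_1,t_2,t_3\in\tau_1}k_1k_2k_3+\sum_{t_1,t_2,t_3\in\tau_2}k_1k_2k_3\big)$ directly.

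I do not expect any serious obstacle. Unlike the leave-one-out case, where the identity $T\bar k_{i,j}-k_j(x_{i,t})$ spreads corrections across several distinct orders and index multiplicities, the split-sample correction acts block-by-block, so the proof reduces entirely to the power-of-two bookkeeping above. The one point to verify carefully is that the full-statistic normalization $T^{-m/2}$ and the half-statistic normalization $M^{-m/2}$ align, after the split-sample weights, to the common factor $2^{m-1}T^{-m/2}$; this is exactly what forces the pure blocks to cancel for $m\le2$ while producing a pure rescaling (the factor $4$) for $m=3$.
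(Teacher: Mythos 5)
Your proposal is correct and matches the paper's own proof, which likewise verifies the three cases by direct substitution of the half-sample normalization $M^{-m/2}=2^{m/2}T^{-m/2}$ into the definition and checks the resulting powers of two (the paper writes out $m=1$ and $m=2$ explicitly and notes that $m=3$ follows from the definitions). Your block-decomposition framing is a slightly more systematic way of organizing the same bookkeeping; the only nitpick is that your closing summary says the pure blocks ``cancel'' for $m\le 2$, whereas for $m=1$ they only partially cancel to leave $W_{i,T,1}$ — but your explicit $m=1$ computation handles this correctly.
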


Again, the proof is left for the Supplementary Appendix IV.

\subsection{Proof of Proposition \ref{thm:variance}}

We first note that from Lemmas \ref{lem:WiT_jack} and \ref{lem:wit_half-1},
neither the jackknife nor the split-sample correction impact the first-order
expansion term so that $A_{n}=\tilde{A}_{n,J}=\tilde{A}_{n,1/2}$ and the
first-order variance $V_{1,n}=\operatorname{Var}(\sqrt{n}A_{n})$ is the same
for both estimators.

Next, we consider the variance of the second order term. There are two forms
of V-statistic contained in $B_{n}$, $W_{T,(2)}$ and $W_{T,(1,1)}$ and we
examine each in turn.

For the first case, we have
\begin{align*}
\operatorname{Var}(\sqrt{n}\tilde{W}_{T,(2)})  &  =\operatorname{Var}\left(
\frac{1}{(T-1)}\frac{1}{\sqrt{n}}\sum_{i}\sum_{t_{1}\neq t_{2}}k_{1}\left(
x_{i,t_{1}}\right)  k_{2}\left(  x_{i,t_{2}}\right)  \right) \\
&  =\frac{1}{n}\sum_{i}\frac{1}{(T-1)^{2}}\sum_{t_{1}\neq t_{2}}\left(
\begin{array}
[c]{c}%
E\left[  k_{1}(x_{i,t_{1}})^{2}\right]  E\left[  k_{2}(x_{i,t_{2}})^{2}\right]
\\
+E\left[  k_{1}(x_{i,t_{1}})k_{2}(x_{i,t_{1}})\right]  E\left[  k_{1}%
(x_{i,t_{2}})k_{2}(x_{i,t_{2}})\right]
\end{array}
\right) \\
&  =\frac{T}{T-1}\frac{1}{n}\sum_{i}\left(
\begin{array}
[c]{c}%
E\left[  k_{1}(x_{i,t})^{2}\right]  E\left[  k_{2}(x_{i,t})^{2}\right] \\
+E\left[  k_{1}(x_{i,t})k_{2}(x_{i,t})\right]  E\left[  k_{1}(x_{i,t}%
)k_{2}(x_{i,t})\right]
\end{array}
\right)
\end{align*}

\begin{align*}
&  \operatorname{Var}(\sqrt{n}\check{W}_{T,(2)})\\
&  =\operatorname{Var}\left(  \frac{1}{\sqrt{n}}\sum_{i}\frac{2}{T}\sum
_{t_{1}\in\tau_{1}}\sum_{t_{2}\in\tau_{2}}\left(  k_{1}(x_{it_{1}}%
)k_{2}(x_{it_{2}})+k_{1}(x_{it_{2}})k_{2}(x_{it_{1}})\right)  \right) \\
&  =\frac{1}{n}\sum_{i}\frac{4}{T^{2}}\sum_{t_{1}\in\tau_{1}}\sum_{t_{2}%
\in\tau_{2}}2\left(
\begin{array}
[c]{c}%
E[k_{1}(x_{it_{1}})^{2}]E[k_{2}(x_{it_{2}})^{2}]\\
+E[k_{1}(x_{it_{1}})k_{2}(x_{it_{1}})]E[k_{1}(x_{it_{2}})k_{2}(x_{it_{2}})]
\end{array}
\right) \\
&  =\frac{2}{n}\sum_{i}\left(  E[k_{1}(x_{i,t})^{2}]E[k_{2}(x_{i,t}%
)^{2}]+E[k_{1}(x_{i,t})k_{2}(x_{i,t})]^{2}\right) \\
&  =2\frac{T-1}{T}\operatorname{Var}(\sqrt{n}\tilde{W}_{T,(2)})
\end{align*}
For the second case we have
\begin{align*}
\operatorname{Var}(\sqrt{n}\tilde{W}_{T,(1,1)})  &  =\operatorname{Var}\left(
\frac{1}{n^{3/2}}\sum_{i_{1},i_{2}}\frac{1}{T-1}\sum_{t_{1},t_{2}}%
k_{1}(x_{i_{1},t_{1}})k_{2}(x_{i_{2},t_{2}})\right) \\
&  =\frac{T-1}{T}\frac{1}{n^{3}}\sum_{i_{1},i_{2}}\left(
\begin{array}
[c]{c}%
E\left[  k_{1}(x_{i_{1},t})^{2}\right]  E\left[  k_{2}(x_{i_{2},t})^{2}\right]
\\
+E\left[  k_{1}(x_{i_{1},t})k_{2}(x_{i_{1},t})\right]  E\left[  k_{1}%
(x_{i_{2},t})k_{2}(x_{i_{2},t})\right]
\end{array}
\right) \\
&  =O(n^{-1})\\
\operatorname{Var}(\sqrt{n}\check{W}_{T,(1,1)})  &  =2\frac{T-1}%
{T}\operatorname{Var}(\sqrt{n}\tilde{W}_{T,(1,1)})=O(n^{-1})
\end{align*}
Finally, similar calculations give $Cov(\sqrt{n}\tilde{W}_{T,(2)},\sqrt
{n}\tilde{W}_{T,(1,1)})=O(n^{-1})$ as well as \\ $Cov(\sqrt{n}\check{W}_{T,(2)},\sqrt
{n}\check{W}_{T,(1,1)})=O(n^{-1})$.

Next, we establish that $Cov(A_{n},\tilde{B}_{n,J})=Cov(A_{n},\tilde
{B}_{n,1/2})=0$. This result is immediate since the bias-corrected second
order terms are sums over $k_{1}(x_{i,t_{1}})k_{2}(x_{i,t_{2}})$, for which
$t_{1}\ne t_{2}$. The product of these second-order terms with the first order
terms in $A_{n}$ will then contain triplets $k_{1}(x_{i,t_{1}})k_{2}%
(x_{i,t_{2}})k_{3}(x_{i,t_{3}})$ which must have mean zero by $t_{1}\ne t_{2}$.

The final part of the higher-order variance comes from $Cov(\sqrt{n}%
A_{n},\sqrt{n}\tilde{C}_{n,J})$ and \\ $Cov(\sqrt{n}A_{n},\sqrt{n}\tilde
{C}_{n,1/2})$. Here we show that these terms are both $o(1)$. Recall from
Lemma \ref{lem:WiT_jack} the third-order jackknife term has the form
\begin{align*}
\tilde{W}_{i,T,3}  &  =\frac{-1}{T^{1/2}(T-1)}\frac{1}{T}\sum_{t}k_{1}%
(x_{i,t})k_{2}(x_{i,t})k_{3}(x_{i,t})\\
&  \quad+\frac{1}{T^{1/2}(T-1)^{2}}\sum_{t_{1}\neq t_{2}}\left(
\begin{array}
[c]{c}%
k_{1}(x_{i,t_{1}})k_{2}(x_{i,t_{2}})k_{3}(x_{i,t_{2}})+k_{1}(x_{i,t_{2}}%
)k_{2}(x_{i,t_{1}})k_{3}(x_{i,t_{2}})\\
+k_{1}(x_{i,t_{2}})k_{2}(x_{i,t_{2}})k_{3}(x_{i,t_{!}})
\end{array}
\right) \\
&  \quad+\frac{T^{2}-T-2}{T^{1/2}(T-1)^{2}(T-2)}\sum_{t_{1}\neq t_{2}\neq
t_{3}}k_{1}(x_{i,t_{1}})k_{2}(x_{i,t_{2}})k_{3}(x_{i,t_{3}})\\
&  =J_{i,1}+J_{i,2}+J_{i,3}%
\end{align*}
First consider $\tilde{W}_{T,(3)}=\frac{1}{n}\sum_{i}(J_{i,1}+J_{i,2}%
+J_{i,3})$ and its covariance with a first order statistic $\tilde{W}_{T,(1)}%
$.
\begin{align*}
&  Cov(\sqrt{n}\tilde{W}_{T,(1)},\sqrt{n}J_{1})\\
&  =\frac{-1}{T^{1/2}(T-1)}\frac{1}{n}\sum_{i_{1},i_{2}}\frac{1}{T^{3/2}}%
\sum_{t_{1},t_{2}}E\left[  k_{1}(x_{i_{1},t})k_{2}(x_{i_{2},t})k_{3}%
(x_{i_{3},t})k_{4}(x_{i_{4},t})\right] \\
&  =-\frac{1}{T-1}\frac{1}{n}\sum_{i}E\left[  k_{1}(x_{i,t})k_{2}%
(x_{i,t})k_{3}(x_{i,t})k_{4}(x_{i,t})\right] \\
&  =O(T^{-1})
\end{align*}

\begin{align*}
&  Cov(\sqrt{n}\tilde{W}_{T,(1)},\sqrt{n}J_{2})\\
&  =\frac{1}{T(T-1)^{2}}\frac{1}{n}\sum_{i_{1},i_{2}}\sum_{t}\sum_{t_{2}\neq
t_{3}}E\left[  k_{1}(x_{i_{1},t_{1}})\left(
\begin{array}
[c]{c}%
k_{2}(x_{i_{2},t_{3}})k_{3}(x_{i_{2},t_{2}})k_{4}(x_{i_{2},t_{2}})\\
+k_{1}(x_{i_{2},t_{2}})k_{2}(x_{i_{2},t_{3}})k_{3}(x_{i_{2},t_{2}})\\
+k_{1}(x_{i_{2},t_{2}})k_{2}(x_{i_{2},t_{2}})k_{3}(x_{i_{2},t_{3}})
\end{array}
\right)  \right] \\
&  =\frac{1}{n}\sum_{i}\frac{1}{T(T-1)^{2}}\sum_{t_{1}\neq t_{2}}\left(
\begin{array}
[c]{c}%
E\left[  k_{1}(x_{i,t_{1}})k_{2}(x_{i,t_{1}})\right]  E\left[  k_{3}%
(x_{i,t_{2}})k_{4}(x_{i,t_{2}})\right] \\
+E\left[  k_{1}(x_{i,t_{1}})k_{3}(x_{i,t_{1}})\right]  E\left[  k_{2}%
(x_{i,t_{2}})k_{4}(x_{i,t_{2}})\right] \\
+E\left[  k_{1}(x_{i,t_{1}})k_{4}(x_{i,t_{1}})\right]  E\left[  k_{2}%
(x_{i,t_{2}})k_{3}(x_{i,t_{2}})\right]
\end{array}
\right) \\
&  =O(T^{-1})
\end{align*}
Since $J_{3}$ contains terms with $t_{1}\neq t_{2}\neq t_{3}$, the covariance
involving these terms is zero. Identical steps applied to $\sqrt{n}\tilde
{W}_{T,(2,1)}$ and $\sqrt{n}\tilde{W}_{T,(1,1,1)}$ will confirm that the
covariances of these terms with $\sqrt{n}\tilde{W}_{T,(1)}$ are also
$O(T^{-1})$ or lower order, so that $Cov(\sqrt{n}A_{n},\sqrt{n}\tilde{C}%
_{n,J})=o(1)$, as required.

For the split-sample case, we have
\begin{align*}
\check{W}_{i,T,3}  &  =2W_{i,T,3}-4\left(
\begin{array}
[c]{c}%
\frac{1}{T^{3/2}}\sum_{t_{1},t_{2},t_{3}\in\tau_{1}}k_{1}(x_{it_{1}}%
)k_{2}(x_{it_{2}})k_{3}(x_{it_{3}})\\
+\frac{1}{T^{3/2}}\sum_{t_{1},t_{2},t_{3}\in\tau_{2}}k_{1}(x_{it_{1}}%
)k_{2}(x_{it_{2}})k_{3}(x_{it_{3}})
\end{array}
\right) \\
&  =2W_{i,T,3}-4S_{1}-4S_{2}%
\end{align*}
The covariance with the first term is
\begin{align*}
&  Cov(\sqrt{n}\check{W}_{T,(1)},\sqrt{n}W_{T,(3)})\\
&  =\frac{1}{n}\sum_{i}\frac{1}{T^{2}}\sum_{t_{1},t_{2},t_{3},t_{4}}E\left[
k_{1}(x_{i,t_{1}})k_{2}(x_{i,t_{2}})k_{3}(x_{i,t_{3}})k_{4}(x_{i,t_{4}%
})\right] \\
&  =\frac{1}{n}\sum_{i}\frac{1}{T^{2}}\left\{
\begin{array}
[c]{c}%
\sum_{t}E\left[  k_{1}(x_{i,t})k_{2}(x_{i,t})k_{3}(x_{i,t})k_{4}%
(x_{i,t})\right] \\
+\sum_{t_{1}\neq t_{2}}\left(
\begin{array}
[c]{c}%
E\left[  k_{1}(x_{i,t_{1}})k_{2}(x_{i,t_{1}})\right]  E\left[  k_{3}%
(x_{i,t_{2}})k_{4}(x_{i,t_{2}})\right] \\
+E\left[  k_{1}(x_{i,t_{1}})k_{3}(x_{i,t_{1}})\right]  E\left[  k_{2}%
(x_{i,t_{2}})k_{4}(x_{i,t_{2}})\right] \\
+E\left[  k_{1}(x_{i,t_{1}})k_{4}(x_{i,t_{1}})\right]  E\left[  k_{2}%
(x_{i,t_{2}})k_{3}(x_{i,t_{2}})\right]
\end{array}
\right)
\end{array}
\right\} \\
&  =\frac{1}{T}\frac{1}{n}\sum_{i}E\left[  k_{1}(x_{i,t})k_{2}(x_{i,t}%
)k_{3}(x_{i,t})k_{4}(x_{i,t})\right] \\
&  \quad+\frac{T-1}{T}\frac{1}{n}\sum_{i}\left(
\begin{array}
[c]{c}%
E\left[  k_{1}(x_{i,t})k_{2}(x_{i,t})\right]  E\left[  k_{3}(x_{i,t}%
)k_{4}(x_{i,t})\right] \\
+E\left[  k_{1}(x_{i,t})k_{3}(x_{i,t})\right]  E\left[  k_{2}(x_{i,t}%
)k_{4}(x_{i,t})\right] \\
+E\left[  k_{1}(x_{i,t})k_{4}(x_{i,t})\right]  E\left[  k_{2}(x_{i,t}%
)k_{3}(x_{i,t})\right]
\end{array}
\right)
\end{align*}
And for $S_{1}$
\begin{align*}
&  Cov(\sqrt{n}\check{W}_{T,(1)},\sqrt{n}S_{1})\\
&  =\frac{1}{n}\sum_{i}\frac{1}{T^{2}}\sum_{t_{1},t_{2},t_{3},t_{4}\in\tau
_{1}}E\left[  k_{1}(x_{it_{1}})k_{2}(x_{it_{2}})k_{3}(x_{it_{3}}%
)k_{4}(x_{it_{4}})\right] \\
&  =\frac{1}{2}\frac{1}{T}\frac{1}{n}\sum_{i}E\left[  k_{1}(x_{i,t}%
)k_{2}(x_{i,t})k_{3}(x_{i,t})k_{4}(x_{i,t})\right] \\
&  \quad+\frac{1}{4}\frac{T-2}{T}\frac{1}{n}\sum_{i}\left(
\begin{array}
[c]{c}%
E\left[  k_{1}(x_{i,t})k_{2}(x_{i,t})\right]  E\left[  k_{3}(x_{i,t}%
)k_{4}(x_{i,t})\right] \\
+E\left[  k_{1}(x_{i,t})k_{3}(x_{i,t})\right]  E\left[  k_{2}(x_{i,t}%
)k_{4}(x_{i,t})\right] \\
+E\left[  k_{1}(x_{i,t})k_{4}(x_{i,t})\right]  E\left[  k_{2}(x_{i,t}%
)k_{3}(x_{i,t})\right]
\end{array}
\right)
\end{align*}
and similarly for $S_{2}$. Combining, gives
\begin{align*}
&  Cov(\sqrt{n}\check{W}_{T,(1)},\sqrt{n}\check{W}_{T,(3)})\\
&  =-2\frac{1}{T}\frac{1}{n}\sum_{i}E\left[  k_{1}(x_{i,t})k_{2}(x_{i,t}%
)k_{3}(x_{i,t})k_{4}(x_{i,t})\right] \\
&  \quad+2\frac{1}{T}\frac{1}{n}\sum_{i}\left(
\begin{array}
[c]{c}%
E\left[  k_{1}(x_{i,t})k_{2}(x_{i,t})\right]  E\left[  k_{3}(x_{i,t}%
)k_{4}(x_{i,t})\right] \\
+E\left[  k_{1}(x_{i,t})k_{3}(x_{i,t})\right]  E\left[  k_{2}(x_{i,t}%
)k_{4}(x_{i,t})\right] \\
+E\left[  k_{1}(x_{i,t})k_{4}(x_{i,t})\right]  E\left[  k_{2}(x_{i,t}%
)k_{3}(x_{i,t})\right]
\end{array}
\right) \\
&  =O(T^{-1})
\end{align*}
Similar statements can be made for the terms $\check{W}_{T,(2,1)}$ and
$\check{W}_{T,(1,1,1)}$, giving $Cov(\sqrt{n}A_{n},\sqrt{n}\tilde{C}%
_{n,1/2})=o(1)$ as before. We can now derive the expression for $V_{2,n}$ in
the theorem. From Hahn and Newey (2004), we have
\begin{align*}
B_{n}  &  =\frac{1}{2}\theta^{\epsilon\epsilon}(0)=\frac{1}{n}\sum_{i=1}%
^{n}\frac{1}{T}\sum_{s,t}k_{1}(x_{i,s})k_{2}(x_{i,t})+o_{p}(1)\\
k_{1}(x_{i,t})  &  =\mathcal{I}_{n}^{-1}\left(  \frac{E[U_{it}^{\alpha\alpha
}]}{2E[V_{it}^{2}]}V_{it}+U_{it}^{\alpha}\right) \\
k_{2}(x_{i,t})  &  =\frac{1}{E[V_{it}^{2}]}V_{it}%
\end{align*}
We can then apply the form of $\operatorname{Var}(\tilde{W}_{T,(2)})$ and
$\operatorname{Var}(\check{W}_{T,(2)})$ above to give the result.

\subsection{Proof of Proposition \ref{Thm:bias_est}\label{sec:Thm_bias_proof}}

We begin by establishing an expansion for the bias estimators of the form
\begin{align*}
\sqrt{nT}\frac{1}{T}(\hat{\beta}_{J}-\mathbf{B})  &  =\frac{\sqrt{n}}{\sqrt
{T}}(B_{n}-\tilde{B}_{n,J}-\mathbf{B})+O_{p}(T^{-1})\\
\sqrt{nT}\frac{1}{T}(\hat{\beta}_{1/2}-\mathbf{B})  &  =\frac{\sqrt{n}}%
{\sqrt{T}}(B_{n}-\tilde{B}_{n,1/2}-\mathbf{B})+O_{p}(T^{-1/2})
\end{align*}
Since the implicit bias estimates are given by $\frac{1}{T}\hat{\beta}%
=\hat{\theta}-\tilde{\theta}$, the expansion in (\ref{eq:expansionA}), and the
equivalent expansions for the bias-corrected estimators, imply the result as
long as $\frac{\sqrt{n}}{T}(C_{n}-\tilde{C}_{n,J})=O_{p}(T^{-1})$ and
$\frac{\sqrt{n}}{T}(C_{n}-\tilde{C}_{n,1/2})=O_{p}(T^{-1/2})$. This follows
straightforwardly from Lemmas \ref{lem:WiT_jack} and \ref{lem:wit_half-1}, and
is shown in the Supplementary Appendix IV.

Given this expansion for the bias estimates, we analyze the V-statistic forms
that are present in $\frac{\sqrt{n}}{\sqrt{T}}(B_{n}-\tilde{B}_{n,J}%
-\mathbf{B})$
\begin{align*}
&  \frac{\sqrt{n}}{\sqrt{T}}\left(  W_{T,(2)}-\tilde{W}_{T,(2)}-E[W_{T,(2)}%
]\right) \\
&  =\frac{1}{\sqrt{n}}\sum_{i}\frac{1}{T\sqrt{T}}\sum_{t_{1},t_{2}}\left(
k_{1}(x_{i,t_{1}})k_{2}(x_{i,t_{2}})-E[k_{1}(x_{i,t_{1}})k_{2}(x_{i,t_{2}%
})]\right) \\
&  \quad-\frac{1}{\sqrt{n}}\sum_{i}\frac{1}{(T-1)\sqrt{T}}\sum_{t_{1}\neq
t_{2}}k_{1}(x_{i,t_{1}})k_{2}(x_{i,t_{2}})\\
&  =\frac{1}{T}\frac{1}{\sqrt{nT}}\sum_{i}\sum_{t}\left(  k_{1}(x_{i,t}%
)k_{2}(x_{i,t})-E[k_{1}(x_{i,t})k_{2}(x_{i,t})]\right) \\
&  \quad-\frac{1}{\sqrt{T}(T-1)}\frac{1}{\sqrt{n}}\sum_{i}\frac{1}{T}%
\sum_{t_{1}\neq t_{2}}k_{1}(x_{i,t_{1}})k_{2}(x_{i,t_{2}})\\
&  =O_{p}(T^{-1})
\end{align*}
And similarly, we can show $\frac{\sqrt{n}}{\sqrt{T}}\left(  W_{T,(1,1)}%
-\tilde{W}_{T,(1,1)}-E[W_{T,(1,1)}]\right)  =O_{p}(T^{-1})$. Consequently,
$\frac{\sqrt{n}}{\sqrt{T}}(B_{n}-\tilde{B}_{n,J}-\mathbf{B})=O_{p}(T^{-1})$.

For the split-sample version of the statistics, let $1_{t_{1},t_{2}}$ be an
indicator that is equal to one whenever $t_{1}$ and $t_{2}$ are in the same
half of time periods, and zero otherwise, and $\tilde{1}_{t_{1},t_{2}%
}=(-1)^{(1-1_{t_{1},t_{2}})}$. Then we have
\begin{align*}
&  \frac{\sqrt{n}}{\sqrt{T}}\left(  W_{T,(2)}-\check{W}_{T,(2)}-E[W_{T,(2)}%
]\right) \\
&  =\frac{1}{\sqrt{n}}\sum_{i}\frac{1}{T\sqrt{T}}\sum_{t_{1},t_{2}}\left(
k_{1}(x_{i,t_{1}})k_{2}(x_{i,t_{2}})-E[k_{1}(x_{i,t_{1}})k_{2}(x_{i,t_{2}%
})]\right) \\
&  \quad-\frac{1}{\sqrt{n}}\sum_{i}\frac{2}{T\sqrt{T}}\sum_{t_{1}\in\tau_{1}%
}\sum_{t_{2}\in\tau_{2}}\left(  k_{1}(x_{it_{1}})k_{2}(x_{it_{2}}%
)+k_{1}(x_{it_{2}})k_{2}(x_{it_{1}})\right) \\
&  =\frac{1}{T\sqrt{nT}}\sum_{i}\sum_{t_{1},t_{2}}\tilde{1}_{t_{1},t_{2}%
}\left(  k_{1}(x_{i,t_{1}})k_{2}(x_{i,t_{2}})-E[k_{1}(x_{i,t_{1}}%
)k_{2}(x_{i,t_{2}})]\right)
\end{align*}
Consider the variance of the final term
\begin{align*}
&  \frac{1}{nT^{3}}\operatorname{Var}\left(  \sum_{i}\sum_{t_{1},t_{2}}%
\tilde{1}_{t_{1},t_{2}}\left(  k_{1}(x_{i,t_{1}})k_{2}(x_{i,t_{2}}%
)-E[k_{1}(x_{i,t_{1}})k_{2}(x_{i,t_{2}})]\right)  \right) \\
&  =\frac{1}{nT^{3}}\sum_{i_{1},i_{2}}\sum_{t_{1},t_{2},t_{3},t_{4}}E\left[
\begin{array}
[c]{c}%
\left(  k_{1}(x_{i_{1},t_{1}})k_{2}(x_{i_{1},t_{2}})-E[k_{1}(x_{i_{1},t_{1}%
})k_{2}(x_{i_{1},t_{2}})]\right) \\
\times\left(  k_{1}(x_{i_{2},t_{3}})k_{2}(x_{i_{2},t_{4}})-E[k_{1}%
(x_{i_{2},t_{3}})k_{2}(x_{i_{2},t_{4}})]\right)
\end{array}
\right]  \tilde{1}_{t_{1},t_{2}}\tilde{1}_{t_{3},t_{4}}\\
&  =\frac{1}{nT^{3}}\sum_{i}\sum_{t}E\left[  \left(  k_{1}(x_{i,t}%
)k_{2}(x_{i,t})-E[k_{1}(x_{i,t})k_{2}(x_{i,t})]\right)  ^{2}\right] \\
&  \quad+\frac{1}{nT^{3}}\sum_{i}\sum_{t_{1}\neq t_{2}}\left(
\begin{array}
[c]{c}%
E[k_{1}(x_{i,t_{1}})^{2}]E[k_{2}(x_{i,t_{2}})^{2}]\\
+E[k_{1}(x_{i,t_{1}})k_{2}(x_{i,t_{1}})]E[k_{1}(x_{i,t_{2}})k_{2}(x_{i,t_{2}%
})]
\end{array}
\right) \\
&  =O(T^{-1})
\end{align*}
Consequently, $\frac{\sqrt{n}}{\sqrt{T}}\left(  W_{T,(2)}-\check{W}%
_{T,(2)}-E[W_{T,(2)}]\right)  =O_{p}(T^{-1/2})$ as required. Again, the same
steps can also be used to show $\frac{\sqrt{n}}{\sqrt{T}}\left(
W_{T,(1,1)}-\check{W}_{T,(1,1)}-E[W_{T,(1,1)}]\right)  =O_{p}(T^{-1/2})$ , and
hence that $\frac{\sqrt{n}}{\sqrt{T}}(B_{n}-\tilde{B}_{n,1/2}-\mathbf{B}%
)=O_{p}(T^{-1/2})$, giving the result of the theorem.

\newpage

\begin{center}
{\LARGE Supplementary Appendix II: Efficient Bias Correction for Cross-section and Panel Data}
\end{center}
\setcounter{section}{0}

\section{Proof of Proposition \ref{ApproxThetaHat-strong}}

It is convenient to understand $\widehat{\theta}\equiv\widehat{\theta}\left(
\frac{1}{\sqrt{n}}\right)  $, where $\theta\left(  \epsilon\right)  $ denotes
the solution
\[
\hat{\theta}(\epsilon)=\sup_{\theta\in\Theta}\int\log f\left(  \cdot
,\theta\right)  dF_{\epsilon}\left(  z\right)  .
\]
Here, $F_{\epsilon}\equiv F+\epsilon\Delta\equiv F+\epsilon\sqrt{n}\left(
\widehat{F}-F\right)  $, $\epsilon\in\left[  0,\frac{1}{\sqrt{n}}\right]  $,
and $F$ and $\widehat{F}$ denote the underlying cumulative distribution
function and the empirical distribution function $\widehat{F}\left(  z\right)
\equiv n^{-1}\sum_{i=1}^{n}\mathbf{1}\left\{  Z_{i}\leq z\right\}  $.
Proposition \ref{ApproxThetaHat-strong} will be established later
using this device. The bootstrap counterpart will be established with a
similar device. Using the above notation, it follows that $\widehat{\theta
}^{\ast}\equiv\widehat{\theta}^{\ast}\left(  \frac{1}{\sqrt{n}}\right)  $ is
the solution
\[
\hat{\theta}^{\ast}\left(  \epsilon\right)  =\sup_{\theta\in\Theta}\int\log
f\left(  \cdot,\theta\right)  d\widehat{F}_{\epsilon}\left(  z\right)  ,
\]
where
\[
\widehat{F}_{\epsilon}\equiv\widehat{F}+\epsilon\widehat{\Delta}%
\equiv\widehat{F}+\epsilon\sqrt{n}\left(  \widehat{F}^{\ast}-\widehat{F}%
\right)  ,\qquad\epsilon\in\left[  0,\frac{1}{\sqrt{n}}\right]  .
\]
Here, $\widehat{\Delta}$ is the bootstrap empirical process $\widehat{\Delta
}\equiv\sqrt{n}\left(  \widehat{F}^{\ast}-\widehat{F}\right)  $.

Let $\hat{Q}(\theta)=\int\log f\left(  \cdot,\theta\right)  d\hat{F}\left(
z\right)  ,Q_{\epsilon}(\theta)=\int\log f\left(  \cdot,\theta\right)
dF_{\epsilon}\left(  z\right)  $ and $Q(\theta)=\int\log f\left(  \cdot
,\theta\right)  dQ\left(  z\right)  $ such that $Q_{\epsilon}(\theta
)-Q(\theta)=\epsilon\sqrt{n}\left(  \hat{Q}(\theta)-Q(\theta)\right)  .$ By
Conditions \ref{HH}, \ref{M} and \ref{EC} and van der Vaart and Wellner (1996,
Theorem 2.4.3) it follows that $\sup_{\theta}\left\vert Q_{\epsilon}%
(\theta)-Q(\theta)\right\vert \leq\sup_{\theta}\left\vert \hat{Q}%
(\theta)-Q(\theta)\right\vert \rightarrow0$ in probability. By van der Vaart
and Wellner (1996, Corollary 3.2.3), it follows that uniformly in $\epsilon
\in\left[  -n^{-1/2},n^{-1/2}\right]  ,$ $\hat{\theta}\left(  \epsilon\right)
\overset{p}{\rightarrow}\theta_{0}$. This implies that for any compact set
$K\subset\Theta$ with $\theta_{0}\in K,$ $P(\hat{\theta}\left(  \epsilon
\right)  \in K)\rightarrow1$, as $n\rightarrow\infty.$ Consider the function
$G\left(  \epsilon,\theta\right)  \equiv\int\ell\left(  \cdot,\theta\right)
dF_{\epsilon}\left(  z\right)  .$ If $\partial\Theta$ is the boundary of
$\Theta$ then $P(G\left(  \epsilon,\hat{\theta}\left(  \epsilon\right)
\right)  \neq0)\leq P\left(  \hat{\theta}\left(  \epsilon\right)  \in
\partial\Theta\right)  \leq1-P\left(  \theta\left(  \epsilon\right)  \in
K\right)  \rightarrow0$. We now condition on the event $\left\{  G\left(
\epsilon,\hat{\theta}\left(  \epsilon\right)  \right)  =0\right\}  $.

By Taylor's theorem there exists some$\ \tilde{\epsilon}\in\left[
0,1/\sqrt{n}\right]  $ such that
\[
\widehat{\theta}\left(  n^{-1/2}\right)  =\theta(0)+\sum_{k=1}^{m-1}\frac
{1}{k!n^{k/2}}\theta^{\left(  k\right)  }\left(  0\right)  +\frac{1}%
{m!n^{m/2}}\theta^{\left(  m\right)  }\left(  \tilde{\epsilon}\right)  .
\]
By Lemmas \ref{Lem-consistency-of-theta-epsilon} and \ref{multi:k} (in
Appendix \ref{sec-lfbs}), it follows that $\max_{0\leq\epsilon\leq n^{-1/2}%
}\theta^{\left(  k\right)  }\left(  \epsilon\right)  =O_{p}(1)$ such that the
remainder term $\frac{1}{m!n^{m/2}}\theta^{\left(  m\right)  }\left(
\tilde{\epsilon}\right)  =O_{p}\left(  n^{-m/2}\right)  $ for $m\leq6$. To
find the derivatives $\theta^{\left(  k\right)  }$, let
\[
h\left(  z,\epsilon\right)  \equiv\ell\left(  z,\theta\left(  \epsilon\right)
\right)  ,
\]
and rewrite the first order condition as
\[
0=\int h\left(  z,\epsilon\right)  dF_{\epsilon}\left(  z\right)
\]
Differentiating repeatedly with respect to $\epsilon$, we obtain
\begin{align}
0  &  =\int\frac{dh\left(  z,\epsilon\right)  }{d\epsilon}dF_{\epsilon}\left(
z\right)  +\int h\left(  z,\epsilon\right)  d\Delta\left(  z\right)
\label{h1}\\
0  &  =\int\frac{d^{2}h\left(  z,\epsilon\right)  }{d\epsilon^{2}}%
dF_{\epsilon}\left(  z\right)  +2\int\frac{dh\left(  z,\epsilon\right)
}{d\epsilon}d\Delta\left(  z\right) \label{h2}\\
0  &  =\int\frac{d^{3}h\left(  z,\epsilon\right)  }{d\epsilon^{3}}%
dF_{\epsilon}\left(  z\right)  +3\int\frac{d^{2}h\left(  z,\epsilon\right)
}{d\epsilon^{2}}d\Delta\left(  z\right) \label{h3}\\
0  &  =\int\frac{d^{4}h\left(  z,\epsilon\right)  }{d\epsilon^{4}}%
dF_{\epsilon}\left(  z\right)  +4\int\frac{d^{3}h\left(  z,\epsilon\right)
}{d\epsilon^{3}}d\Delta\left(  z\right) \label{h4}\\
0  &  =\int\frac{d^{5}h\left(  z,\epsilon\right)  }{d\epsilon^{5}}%
dF_{\epsilon}\left(  z\right)  +5\int\frac{d^{4}h\left(  z,\epsilon\right)
}{d\epsilon^{4}}d\Delta\left(  z\right) \label{h5}\\
0  &  =\int\frac{d^{6}h\left(  z,\epsilon\right)  }{d\epsilon^{6}}%
dF_{\epsilon}\left(  z\right)  +6\int\frac{d^{5}h\left(  z,\epsilon\right)
}{d\epsilon^{5}}d\Delta\left(  z\right)  \label{h6}%
\end{align}
Note that
\begin{equation}
\frac{dh\left(  \epsilon\right)  }{d\epsilon}=\ell^{\theta}\theta^{\epsilon}
\label{dh1}%
\end{equation}%
\begin{equation}
\frac{d^{2}h\left(  \epsilon\right)  }{d\epsilon^{2}}=\ell^{\theta\theta
}\left(  \theta^{\epsilon}\right)  ^{2}+\ell^{\theta}\theta^{\epsilon\epsilon}
\label{dh2}%
\end{equation}%
\begin{equation}
\frac{d^{3}h\left(  \epsilon\right)  }{d\epsilon^{3}}=\ell^{\theta\theta
\theta}\left(  \theta^{\epsilon}\right)  ^{3}+3\ell^{\theta\theta}%
\theta^{\epsilon}\theta^{\epsilon\epsilon}+\ell^{\theta}\theta^{\epsilon
\epsilon\epsilon} \label{dh3}%
\end{equation}%
\begin{equation}
\frac{d^{4}h\left(  \epsilon\right)  }{d\epsilon^{4}}=\ell^{\theta\theta
\theta\theta}\left(  \theta^{\epsilon}\right)  ^{4}+\allowbreak6\ell
^{\theta\theta\theta}\left(  \theta^{\epsilon}\right)  ^{2}\theta
^{\epsilon\epsilon}+3\ell^{\theta\theta}\left(  \theta^{\epsilon\epsilon
}\right)  ^{2}+4\ell^{\theta\theta}\theta^{\epsilon}\theta^{\epsilon
\epsilon\epsilon}+\ell^{\theta}\theta^{\epsilon\epsilon\epsilon\epsilon}
\label{dh4}%
\end{equation}%
\begin{align}
\frac{d^{5}h\left(  \epsilon\right)  }{d\epsilon^{5}}  &  =\allowbreak
\ell^{\theta\theta\theta\theta\theta}\left(  \theta^{\epsilon}\right)
^{5}+\allowbreak10\ell^{\theta\theta\theta\theta}\left(  \theta^{\epsilon
}\right)  ^{3}\allowbreak\theta^{\epsilon\epsilon}+15\ell^{\theta\theta\theta
}\allowbreak\theta^{\epsilon}\left(  \theta^{\epsilon\epsilon}\right)
^{2}\label{dh5}\\
&  +10\ell^{\theta\theta\theta}\left(  \theta^{\epsilon}\right)  ^{2}%
\theta^{\epsilon\epsilon\epsilon}+10\ell^{\theta\theta}\theta^{\epsilon
\epsilon}\theta^{\epsilon\epsilon\epsilon}+5\ell^{\theta\theta}\theta
^{\epsilon}\theta^{\epsilon\epsilon\epsilon\epsilon}+\ell^{\theta}%
\theta^{\epsilon\epsilon\epsilon\epsilon\epsilon}\nonumber
\end{align}%
\begin{align}
\frac{d^{6}h\left(  \epsilon\right)  }{d\epsilon^{6}}  &  =\allowbreak
\ell^{\theta\theta\theta\theta\theta\theta}\left(  \theta^{\epsilon}\right)
^{6}+\allowbreak15\ell^{\theta\theta\theta\theta\theta}\left(  \theta
^{\epsilon}\right)  ^{4}\theta^{\epsilon\epsilon}+45\ell^{\theta\theta
\theta\theta}\allowbreak\left(  \theta^{\epsilon}\right)  ^{2}\left(
\theta^{\epsilon\epsilon}\right)  ^{2}\label{dh6}\\
&  +20\allowbreak\ell^{\theta\theta\theta\theta}\left(  \theta^{\epsilon
}\right)  ^{3}\theta^{\epsilon\epsilon\epsilon}+15\ell^{\theta\theta\theta
}\left(  \theta^{\epsilon\epsilon}\right)  ^{3}+60\ell^{\theta\theta\theta
}\allowbreak\theta^{\epsilon}\theta^{\epsilon\epsilon}\theta^{\epsilon
\epsilon\epsilon}\nonumber\\
&  +\allowbreak15\ell^{\theta\theta\theta}\left(  \theta^{\epsilon}\right)
^{2}\theta^{\epsilon\epsilon\epsilon\epsilon}+10\ell^{\theta\theta}\left(
\theta^{\epsilon\epsilon\epsilon}\right)  ^{2}+15\ell^{\theta\theta}%
\theta^{\epsilon\epsilon}\theta^{\epsilon\epsilon\epsilon\epsilon}%
+6\ell^{\theta\theta}\theta^{\epsilon}\theta^{\epsilon\epsilon\epsilon
\epsilon\epsilon}\nonumber\\
&  +\ell^{\theta}\theta^{\epsilon\epsilon\epsilon\epsilon\epsilon\epsilon
}\nonumber
\end{align}
Here, $\theta^{\epsilon}$ denotes the derivative of $\theta$ with respect to
$\epsilon$. Combining (\ref{h1}) - (\ref{h4}) with (\ref{dh1}) - (\ref{dh4}),
we obtain
\begin{equation}
0=E_{\epsilon}\left[  \ell^{\theta}\left(  Z_{i},\epsilon\right)  \right]
\theta^{\epsilon}\left(  \epsilon\right)  +\int\ell\left(  z,\epsilon\right)
d\Delta\left(  z\right)  \label{multi:alt-h1}%
\end{equation}%
\begin{equation}
0=E_{\epsilon}\left[  \ell^{\theta\theta}\left(  Z_{i},\epsilon\right)
\right]  \left(  \theta^{\epsilon}\left(  \epsilon\right)  \right)
^{2}+E_{\epsilon}\left[  \ell^{\theta}\left(  Z_{i},\epsilon\right)  \right]
\theta^{\epsilon\epsilon}\left(  \epsilon\right)  +2\left(  \int\ell^{\theta
}\left(  z,\epsilon\right)  d\Delta\left(  z\right)  \right)  \theta
^{\epsilon}\left(  \epsilon\right)  \label{multi:alt-h2}%
\end{equation}%
\begin{align}
0  &  =E_{\epsilon}\left[  \ell^{\theta\theta\theta}\left(  Z_{i}%
,\epsilon\right)  \right]  \left(  \theta^{\epsilon}\left(  \epsilon\right)
\right)  ^{3}+3E_{\epsilon}\left[  \ell^{\theta\theta}\left(  Z_{i}%
,\epsilon\right)  \right]  \theta^{\epsilon}\left(  \epsilon\right)
\theta^{\epsilon\epsilon}\left(  \epsilon\right)  +E_{\epsilon}\left[
\ell^{\theta}\left(  Z_{i},\epsilon\right)  \right]  \theta^{\epsilon
\epsilon\epsilon}\left(  \epsilon\right) \nonumber\\
&  +3\left(  \int\ell^{\theta\theta}\left(  z,\epsilon\right)  d\Delta\left(
z\right)  \right)  \left(  \theta^{\epsilon}\left(  \epsilon\right)  \right)
^{2}+3\left(  \int\ell^{\theta}\left(  z,\epsilon\right)  d\Delta\left(
z\right)  \right)  \theta^{\epsilon\epsilon}\left(  \epsilon\right)
\label{multi:alt-h3}%
\end{align}%
\begin{align}
0  &  =E_{\epsilon}\left[  \ell^{\theta\theta\theta\theta}\left(
Z_{i},\epsilon\right)  \right]  \left(  \theta^{\epsilon}\left(
\epsilon\right)  \right)  ^{4}+6E_{\epsilon}\left[  \ell^{\theta\theta\theta
}\left(  Z_{i},\epsilon\right)  \right]  \left(  \theta^{\epsilon}\left(
\epsilon\right)  \right)  ^{2}\theta^{\epsilon\epsilon}\left(  \epsilon
\right)  \nonumber \\
&+3E_{\epsilon}\left[  \ell^{\theta\theta}\left(  Z_{i},\epsilon
\right)  \right]  \left(  \theta^{\epsilon\epsilon}\left(  \epsilon\right)
\right)  ^{2} +4E_{\epsilon}\left[  \ell^{\theta\theta}\left(  Z_{i},\epsilon\right)
\right]  \theta^{\epsilon}\left(  \epsilon\right)  \theta^{\epsilon
\epsilon\epsilon}\left(  \epsilon\right)  +E_{\epsilon}\left[  \ell^{\theta
}\left(  Z_{i},\epsilon\right)  \right]  \theta^{\epsilon\epsilon
\epsilon\epsilon}\left(  \epsilon\right)  \nonumber\\
&  +4\left(  \theta^{\epsilon}\left(
\epsilon\right)  \right)  ^{3}\left(  \int\ell^{\theta\theta\theta}\left(
z,\epsilon\right)  d\Delta\left(  z\right)  \right) 
  +12\theta^{\epsilon}\left(  \epsilon\right)  \theta^{\epsilon\epsilon
}\left(  \epsilon\right)  \left(  \int\ell^{\theta\theta}\left(
z,\epsilon\right)  d\Delta\left(  z\right)  \right)  \nonumber\\
&  +4\theta^{\epsilon
\epsilon\epsilon}\left(  \epsilon\right)  \left(  \int\ell^{\theta}\left(
z,\epsilon\right)  d\Delta\left(  z\right)  \right)  \label{multi:alt-h4}%
\end{align}%
\begin{align}
0  &  =E_{\epsilon}\left[  \allowbreak\ell^{\theta\theta\theta\theta\theta
}\left(  Z_{i},\epsilon\right)  \right]  \left(  \theta^{\epsilon}\left(
\epsilon\right)  \right)  ^{5}+\allowbreak10E_{\epsilon}\left[  \ell
^{\theta\theta\theta\theta}\left(  Z_{i},\epsilon\right)  \right]  \left(
\theta^{\epsilon}\left(  \epsilon\right)  \right)  ^{3}\allowbreak
\theta^{\epsilon\epsilon}\left(  \epsilon\right) \nonumber\\
&  +15E_{\epsilon}\left[  \ell^{\theta\theta\theta}\left(  Z_{i}%
,\epsilon\right)  \right]  \allowbreak\theta^{\epsilon}\left(  \epsilon
\right)  \left(  \theta^{\epsilon\epsilon}\left(  \epsilon\right)  \right)
^{2}\\
&  +10E_{\epsilon}\left[  \ell^{\theta\theta\theta}\left(  Z_{i}%
,\epsilon\right)  \right]  \left(  \theta^{\epsilon}\left(  \epsilon\right)
\right)  ^{2}\theta^{\epsilon\epsilon\epsilon}\left(  \epsilon\right)
+10E_{\epsilon}\left[  \ell^{\theta\theta}\left(  Z_{i},\epsilon\right)
\right]  \theta^{\epsilon\epsilon}\left(  \epsilon\right)  \theta
^{\epsilon\epsilon\epsilon}\left(  \epsilon\right) \nonumber\\
&  +5E_{\epsilon}\left[  \ell^{\theta\theta}\left(  Z_{i},\epsilon\right)
\right]  \theta^{\epsilon}\left(  \epsilon\right)  \theta^{\epsilon
\epsilon\epsilon\epsilon}\left(  \epsilon\right)  +E_{\epsilon}\left[
\ell^{\theta}\left(  Z_{i},\epsilon\right)  \right]  \theta^{\epsilon
\epsilon\epsilon\epsilon\epsilon}\left(  \epsilon\right)   \nonumber\\
& +5\left(
\theta^{\epsilon}\left(  \epsilon\right)  \right)  ^{4}\left(  \int%
\ell^{\theta\theta\theta\theta}\left(  z,\epsilon\right)  d\Delta\left(
z\right)  \right) +30\left(  \theta^{\epsilon}\left(  \epsilon\right)  \right)  ^{2}%
\theta^{\epsilon\epsilon}\left(  \epsilon\right)  \left(  \int\ell
^{\theta\theta\theta}\left(  z,\epsilon\right)  d\Delta\left(  z\right)
\right)  \nonumber\\
&  +15\left(  \theta^{\epsilon\epsilon}\left(  \epsilon\right)  \right)
^{2}\left(  \int\ell^{\theta\theta}\left(  z,\epsilon\right)  d\Delta\left(
z\right)  \right) +20\theta^{\epsilon}\left(  \epsilon\right)  \theta^{\epsilon
\epsilon\epsilon}\left(  \epsilon\right)  \left(  \int\ell^{\theta\theta
}\left(  z,\epsilon\right)  d\Delta\left(  z\right)  \right)   \nonumber\\
& +5\theta
^{\epsilon\epsilon\epsilon\epsilon}\left(  \epsilon\right)  \left(  \int%
\ell^{\theta}\left(  z,\epsilon\right)  d\Delta\left(  z\right)  \right)
\label{multi:alt-h5}%
\end{align}
and
\begin{align}
0  &  =E_{\epsilon}\left[  \allowbreak\ell^{\theta\theta\theta\theta
\theta\theta}\left(  Z_{i},\epsilon\right)  \right]  \left(  \theta^{\epsilon
}\left(  \epsilon\right)  \right)  ^{6}+\allowbreak15E_{\epsilon}\left[
\ell^{\theta\theta\theta\theta\theta}\left(  Z_{i},\epsilon\right)  \right]
\left(  \theta^{\epsilon}\left(  \epsilon\right)  \right)  ^{4}\theta
^{\epsilon\epsilon}\left(  \epsilon\right) \nonumber\\
&  +45E_{\epsilon}\left[  \ell^{\theta\theta\theta\theta}\left(
Z_{i},\epsilon\right)  \right]  \allowbreak\left(  \theta^{\epsilon}\left(
\epsilon\right)  \right)  ^{2}\left(  \theta^{\epsilon\epsilon}\left(
\epsilon\right)  \right)  ^{2}+20\allowbreak E_{\epsilon}\left[  \ell
^{\theta\theta\theta\theta}\left(  Z_{i},\epsilon\right)  \right]  \left(
\theta^{\epsilon}\left(  \epsilon\right)  \right)  ^{3}\theta^{\epsilon
\epsilon\epsilon}\left(  \epsilon\right) \nonumber\\
&  +15E_{\epsilon}\left[  \ell^{\theta\theta\theta}\left(  Z_{i}%
,\epsilon\right)  \right]  \left(  \theta^{\epsilon\epsilon}\left(
\epsilon\right)  \right)  ^{3}+60E_{\epsilon}\left[  \ell^{\theta\theta\theta
}\left(  Z_{i},\epsilon\right)  \right]  \allowbreak\theta^{\epsilon}\left(
\epsilon\right)  \theta^{\epsilon\epsilon}\left(  \epsilon\right)
\theta^{\epsilon\epsilon\epsilon}\left(  \epsilon\right) \nonumber\\
&  +\allowbreak15E_{\epsilon}\left[  \ell^{\theta\theta\theta}\left(
Z_{i},\epsilon\right)  \right]  \left(  \theta^{\epsilon}\left(
\epsilon\right)  \right)  ^{2}\theta^{\epsilon\epsilon\epsilon\epsilon}\left(
\epsilon\right)  +10E_{\epsilon}\left[  \ell^{\theta\theta}\left(
Z_{i},\epsilon\right)  \right]  \left(  \theta^{\epsilon\epsilon\epsilon
}\left(  \epsilon\right)  \right)  ^{2}\nonumber\\
&  +15E_{\epsilon}\left[  \ell^{\theta\theta}\left(  Z_{i},\epsilon\right)
\right]  \theta^{\epsilon\epsilon}\left(  \epsilon\right)  \theta
^{\epsilon\epsilon\epsilon\epsilon}\left(  \epsilon\right)  +6E_{\epsilon
}\left[  \ell^{\theta\theta}\left(  Z_{i},\epsilon\right)  \right]
\theta^{\epsilon}\left(  \epsilon\right)  \theta^{\epsilon\epsilon
\epsilon\epsilon\epsilon}\left(  \epsilon\right) \nonumber\\
&  +E_{\epsilon}\left[  \ell^{\theta}\left(  Z_{i},\epsilon\right)  \right]
\theta^{\epsilon\epsilon\epsilon\epsilon\epsilon\epsilon}\left(
\epsilon\right)  +6\left(  \theta^{\epsilon}\left(  \epsilon\right)  \right)
^{5}\left(  \int\ell^{\theta\theta\theta\theta\theta}\left(  z,\epsilon
\right)  d\Delta\left(  z\right)  \right) \nonumber\\
&  +60\left(  \theta^{\epsilon}\left(  \epsilon\right)  \right)
^{3}\allowbreak\theta^{\epsilon\epsilon}\left(  \epsilon\right)  \left(
\int\ell^{\theta\theta\theta\theta}\left(  z,\epsilon\right)  d\Delta\left(
z\right)  \right)  +90\theta^{\epsilon}\left(  \theta^{\epsilon\epsilon
}\left(  \epsilon\right)  \right)  ^{2}\left(  \int\ell^{\theta\theta\theta
}\allowbreak\left(  z,\epsilon\right)  d\Delta\left(  z\right)  \right)
\nonumber\\
&  +60\left(  \theta^{\epsilon}\left(  \epsilon\right)  \right)  ^{2}%
\theta^{\epsilon\epsilon\epsilon}\left(  \epsilon\right)  \left(  \int%
\ell^{\theta\theta\theta}\left(  z,\epsilon\right)  d\Delta\left(  z\right)
\right)  +60\theta^{\epsilon\epsilon}\left(  \epsilon\right)  \theta
^{\epsilon\epsilon\epsilon}\left(  \epsilon\right)  \left(  \int\ell
^{\theta\theta}\left(  z,\epsilon\right)  d\Delta\left(  z\right)  \right)
\nonumber\\
&  +30\theta^{\epsilon}\left(  \epsilon\right)  \theta^{\epsilon
\epsilon\epsilon\epsilon}\left(  \epsilon\right)  \left(  \int\ell
^{\theta\theta}\left(  z,\epsilon\right)  d\Delta\left(  z\right)  \right)
+6\theta^{\epsilon\epsilon\epsilon\epsilon\epsilon}\left(  \epsilon\right)
\left(  \int\ell^{\theta}\left(  z,\epsilon\right)  d\Delta\left(  z\right)
\right)  \label{multi:alt-h6}%
\end{align}
Here, $E_{\epsilon}\left[  \cdot\right]  $ is defined such that
\[
E_{\epsilon}\left[  g\left(  Z_{i},\epsilon\right)  \right]  \equiv\int
g\left(  z,\epsilon\right)  dF_{\epsilon}\left(  z\right)
\]
Evaluating expressions (\ref{multi:alt-h1}) - (\ref{multi:alt-h4}) at
$\epsilon=0$, we obtain
\begin{equation}
\theta^{\epsilon}=\frac{1}{-E\left[  \ell^{\theta}\right]  }\left(  \int\ell
d\Delta\right)  =\frac{1}{\mathcal{I}}\int\ell d\Delta, \label{theta-e1}%
\end{equation}%
\begin{align}
\theta^{\epsilon\epsilon}  &  =\frac{1}{-E\left[  \ell^{\theta}\right]
}\left(  E\left[  \ell^{\theta\theta}\right]  \left(  \theta^{\epsilon
}\right)  ^{2}+2\left(  \int\ell^{\theta}d\Delta\right)  \theta^{\epsilon
}\right) \nonumber\\
&  =\frac{E\left[  \ell^{\theta\theta}\right]  }{-E\left[  \ell^{\theta
}\right]  }\left(  \theta^{\epsilon}\right)  ^{2}+2\frac{1}{-E\left[
\ell^{\theta}\right]  }\left(  \int\ell^{\theta}d\Delta\right)  \theta
^{\epsilon}\nonumber\\
&  =\frac{E\left[  \ell^{\theta\theta}\right]  }{\mathcal{I}^{3}}\left(
\int\ell d\Delta\right)  ^{2}+\frac{2}{\mathcal{I}^{2}}\left(  \int%
\ell^{\theta}d\Delta\right)  \left(  \int\ell d\Delta\right)  ,
\label{theta-e2}%
\end{align}%
\begin{align}
\theta^{\epsilon\epsilon\epsilon}  &  =\frac{E\left[  \ell^{\theta\theta
\theta}\right]  }{-E\left[  \ell^{\theta}\right]  }\left(  \theta^{\epsilon
}\right)  ^{3}+3\frac{E\left[  \ell^{\theta\theta}\right]  }{-E\left[
\ell^{\theta}\right]  }\theta^{\epsilon}\theta^{\epsilon\epsilon}+3\frac
{1}{-E\left[  \ell^{\theta}\right]  }\left(  \int\ell^{\theta\theta}%
d\Delta\right)  \left(  \theta^{\epsilon}\right)  ^{2}  \nonumber\\
& +3\frac{1}{-E\left[
\ell^{\theta}\right]  }\left(  \int\ell^{\theta}d\Delta\right)  \theta
^{\epsilon\epsilon}\nonumber\\
&  =\left(  \frac{E\left[  \ell^{\theta\theta\theta}\right]  }{\mathcal{I}%
^{4}}+\frac{3\left(  E\left[  \ell^{\theta\theta}\right]  \right)  ^{2}%
}{\mathcal{I}^{5}}\right)  \left(  \int\ell d\Delta\right)  ^{3}%
+\frac{9E\left[  \ell^{\theta\theta}\right]  }{\mathcal{I}^{4}}\left(
\int\ell d\Delta\right)  ^{2}\left(  \int\ell^{\theta}d\Delta\right)
\nonumber\\
&  +\frac{3}{\mathcal{I}^{3}}\left(  \int\ell d\Delta\right)  ^{2}\left(
\int\ell^{\theta\theta}d\Delta\right)  +\frac{6}{\mathcal{I}^{3}}\left(
\int\ell d\Delta\right)  \left(  \int\ell^{\theta}d\Delta\right)  ^{2}
\label{theta-e3}%
\end{align}

\begin{align}
\theta^{\epsilon\epsilon\epsilon\epsilon}  &  =\frac{E\left[  \ell
^{\theta\theta\theta\theta}\right]  }{-E\left[  \ell^{\theta}\right]  }\left(
\theta^{\epsilon}\right)  ^{4}+\allowbreak6\frac{E\left[  \ell^{\theta
\theta\theta}\right]  }{-E\left[  \ell^{\theta}\right]  }\left(
\theta^{\epsilon}\right)  ^{2}\theta^{\epsilon\epsilon}+3\frac{E\left[
\ell^{\theta\theta}\right]  }{-E\left[  \ell^{\theta}\right]  }\left(
\theta^{\epsilon\epsilon}\right)  ^{2}\nonumber\\
&  +4\frac{E\left[  \ell^{\theta\theta}\right]  }{-E\left[  \ell^{\theta
}\right]  }\theta^{\epsilon}\theta^{\epsilon\epsilon\epsilon}+4\frac
{1}{-E\left[  \ell^{\theta}\right]  }\left(  \theta^{\epsilon}\right)
^{3}\left(  \int\ell^{\theta\theta\theta}d\Delta\right) \nonumber\\
&  +12\frac{1}{-E\left[  \ell^{\theta}\right]  }\theta^{\epsilon}%
\theta^{\epsilon\epsilon}\left(  \int\ell^{\theta\theta}d\Delta\right)
+4\frac{1}{-E\left[  \ell^{\theta}\right]  }\theta^{\epsilon\epsilon\epsilon
}\left(  \int\ell^{\theta}d\Delta\right)  \label{theta-e4}%
\end{align}%
\begin{align}
\theta^{\epsilon\epsilon\epsilon\epsilon\epsilon}  &  =\frac{E\left[
\allowbreak\ell^{\theta\theta\theta\theta\theta}\right]  }{-E\left[
\ell^{\theta}\right]  }\left(  \theta^{\epsilon}\right)  ^{5}+\allowbreak
10\frac{E\left[  \ell^{\theta\theta\theta\theta}\right]  }{-E\left[
\ell^{\theta}\right]  }\left(  \theta^{\epsilon}\right)  ^{3}\allowbreak
\theta^{\epsilon\epsilon}+15\frac{E\left[  \ell^{\theta\theta\theta}\right]
}{-E\left[  \ell^{\theta}\right]  }\allowbreak\theta^{\epsilon}\left(
\theta^{\epsilon\epsilon}\right)  ^{2}\nonumber\\
&  +10\frac{E\left[  \ell^{\theta\theta\theta}\right]  }{-E\left[
\ell^{\theta}\right]  }\left(  \theta^{\epsilon}\right)  ^{2}\theta
^{\epsilon\epsilon\epsilon}+10\frac{E\left[  \ell^{\theta\theta}\right]
}{-E\left[  \ell^{\theta}\right]  }\theta^{\epsilon\epsilon}\theta
^{\epsilon\epsilon\epsilon}\nonumber\\
&  +5\frac{E\left[  \ell^{\theta\theta}\right]  }{-E\left[  \ell^{\theta
}\right]  }\theta^{\epsilon}\theta^{\epsilon\epsilon\epsilon\epsilon}%
+5\frac{1}{-E\left[  \ell^{\theta}\right]  }\left(  \theta^{\epsilon}\right)
^{4}\left(  \int\ell^{\theta\theta\theta\theta}d\Delta\right) \nonumber\\
&  +30\frac{1}{-E\left[  \ell^{\theta}\right]  }\left(  \theta^{\epsilon
}\right)  ^{2}\theta^{\epsilon\epsilon}\left(  \int\ell^{\theta\theta\theta
}d\Delta\right)  +15\frac{1}{-E\left[  \ell^{\theta}\right]  }\left(
\theta^{\epsilon\epsilon}\right)  ^{2}\left(  \int\ell^{\theta\theta}%
d\Delta\right) \nonumber\\
&  +20\frac{1}{-E\left[  \ell^{\theta}\right]  }\theta^{\epsilon}%
\theta^{\epsilon\epsilon\epsilon}\left(  \int\ell^{\theta\theta}%
d\Delta\right)  +5\frac{1}{-E\left[  \ell^{\theta}\right]  }\theta
^{\epsilon\epsilon\epsilon\epsilon}\left(  \int\ell^{\theta}d\Delta\right)  .
\label{theta-e5}%
\end{align}

\section{Proof of Proposition \ref{ApproxThetaHatStar}}

Let $\hat{Q}^{\ast}(\theta)=\int\log f\left(  \cdot,\theta\right)  d\hat
{F}^{\ast}\left(  z\right)  ,\hat{Q}_{\epsilon}(\theta)=\int\log f\left(
\cdot,\theta\right)  d\hat{F}_{\epsilon}\left(  z\right)  $ and $\hat
{Q}(\theta)=\int\log f\left(  \cdot,\theta\right)  d\hat{F}\left(  z\right)  $
such that $\hat{Q}_{\epsilon}(\theta)-\hat{Q}(\theta)=\epsilon\sqrt{n}\left(
\hat{Q}^{\ast}(\theta)-\hat{Q}(\theta)\right)  .$ By Conditions (\ref{HH}),
(\ref{M}) and (\ref{EC}) and Gin\'{e} and Zinn (1996, Theorem 2.6) it follows
that $\sup_{\theta}\left\vert \hat{Q}_{\epsilon}(\theta)-\hat{Q}%
(\theta)\right\vert \leq\sup_{\theta}\left\vert \hat{Q}^{\ast}(\theta)-\hat
{Q}(\theta)\right\vert \rightarrow0$ in probability, $P^{\mathbb{N}}$a.s. By
standard arguments such as Arcones and Gin\'{e} (1992), it follows that
uniformly in $\epsilon\in\left[  -n^{-1/2},n^{-1/2}\right]  ,$ $\hat{\theta
}^{\ast}\left(  \epsilon\right)  \overset{P^{\ast}}{\rightarrow}\hat{\theta},$
$P^{\mathbb{N}}$a.s. This implies that for any compact set $K\subset\Theta$
with $\theta_{0}\in K,$ $P^{\ast}(\hat{\theta}^{\ast}\left(  \epsilon\right)
\in K)\rightarrow1,P^{\mathbb{N}}$a.s., as $n\rightarrow\infty.$ Consider the
function $\hat{G}\left(  \epsilon,\theta\right)  \equiv\int\ell\left(
\cdot,\theta\right)  d\hat{F}_{\epsilon}\left(  z\right)  .$ If $\partial
\Theta$ is the boundary of $\Theta$ then $P^{\ast}(\hat{G}\left(
\epsilon,\hat{\theta}^{\ast}\left(  \epsilon\right)  \right)  \neq0)\leq
P^{\ast}\left(  \hat{\theta}^{\ast}\left(  \epsilon\right)  \in\partial
\Theta\right)  \leq1-P^{\ast}\left(  \theta^{\ast}\left(  \epsilon\right)  \in
K\right)  \rightarrow0,.P^{\mathbb{N}}$a.s. We now condition on the event
$\left\{  \hat{G}\left(  \epsilon,\hat{\theta}^{\ast}\left(  \epsilon\right)
\right)  =0\right\}  $. By the same arguments as in the proof of Proposition
\ref{ApproxThetaHat-strong} it follows that there exists some $\tilde
{\epsilon}\in\left[  0,n^{-1/2}\right]  $ such that $\sqrt{n}\left(
\widehat{\theta}^{\ast}-\widehat{\theta}\right)  =\widehat{\theta}^{\epsilon
}\left(  0\right)  +\sum_{k=1}^{m-1}\frac{1}{k!n^{k/2}}\widehat{\theta
}^{\left(  k\right)  }\left(  0\right)  +\frac{1}{m!n^{m/2}}\widehat{\theta
}^{\left(  m\right)  }\left(  \tilde{\epsilon}\right)  $ $P^{\mathbb{N}}$a.s.,
where $\widehat{\theta}^{\epsilon}\left(  0\right)  $ is obtained from
evaluating $\int\frac{dh\left(  z,\epsilon\right)  }{d\epsilon}d\widehat{F}%
_{\epsilon}\left(  z\right)  +\int h\left(  z,\epsilon\right)
d\widehat{\Delta}\left(  z\right)  $ at $\epsilon=0$. We obtain
\[
o_{p}(n^{-m/2})=\int\ell^{\theta}(z,\widehat{\theta})d\widehat{F}\left(
z\right)  \widehat{\theta}^{\epsilon}(0)+\int\ell\left(  z,\widehat{\theta
}\right)  d\widehat{\Delta}\left(  z\right)  ,
\]
where $\int\ell^{\theta}(z,\widehat{\theta})d\widehat{F}\left(  z\right)
\equiv n^{-1}\sum_{i=1}^{n}\ell^{\theta}(Z_{i},\widehat{\theta})$ and
$\widehat{\Delta}\left(  z\right)  \equiv\sqrt{n}\left(  \widehat{F}^{\ast
}\left(  z\right)  -\widehat{F}\left(  z\right)  \right)  $. Similar
expressions can be found for higher order derivatives of $\widehat{\theta
}(\epsilon).$ These expressions depend on $n^{-1}\sum_{i=1}^{n}\ell^{\left(
k\right)  }\left(  Z_{i},\widehat{\theta}\right)  $ and $\int\ell^{\left(
k\right)  }\left(  z,\widehat{\theta}\right)  d\widehat{\Delta}\left(
z\right)  $ for $k=0,1,...,6$. By Condition \ref{HH} and Lemma
\ref{Lem-consistency-of-theta-epsilon} (in Appendix \ref{sec-lfbs}), it
follows that $n^{-1}\sum_{i=1}^{n}\ell^{\left(  k\right)  }\left(
Z_{i},\widehat{\theta}\right)  =E\left[  \ell^{\left(  k\right)  }\left(
Z_{i},\theta_{0}\right)  \right]  +o_{p}\left(  1\right)  $ by a uniform law
of large numbers. By Proposition \ref{Donsker} (in Appendix \ref{sec-lfbs}),
the class $\mathfrak{F}$ is Donsker. By the proof of Theorem 2.4 in Gine and
Zinn (1990) it follows that the following conditional stochastic
equicontinuity property
\[
P^{\mathbb{N}}\text{a.s., }\lim_{\delta\downarrow0}\underset{n\rightarrow
\infty}{\lim\sup}P^{\ast}\left(  \sup_{\left\vert t-s\right\vert <\delta
}\left\vert \int\left(  \ell^{\left(  k\right)  }\left(  z,t\right)
-\ell^{\left(  k\right)  }(z,s)\right)  d\widehat{\Delta}\left(  z\right)
\right\vert >\eta\right)  =0
\]
holds. Then
\begin{align}
\lefteqn{P^{\ast}\left(  \left\vert \int\left(  \ell^{\left(  k\right)
}\left(  z,\widehat{\theta}\right)  -\ell^{\left(  k\right)  }\left(
z,\theta_{0}\right)  \right)  d\widehat{\Delta}\left(  z\right)  \right\vert
>\eta\right)  }\nonumber\\
&  \leq P^{\ast}\left(  \sup_{\left\vert \theta-\theta_{0}\right\vert <\delta
}\left\vert \int\left(  \ell^{\left(  k\right)  }\left(  z,\theta\right)
-\ell^{\left(  k\right)  }(z,\theta_{0})\right)  d\widehat{\Delta}\left(
z\right)  \right\vert >\eta/2\right)  +P^{\ast}\left(  \left\vert
\widehat{\theta}-\theta_{0}\right\vert \geq\eta/2\right)  \label{StochEqBound}%
\end{align}
or
\[
\int\ell^{\left(  k\right)  }\left(  z,\widehat{\theta}\right)
d\widehat{\Delta}\left(  z\right)  =\int\ell^{\left(  k\right)  }\left(
z,\theta_{0}\right)  d\widehat{\Delta}\left(  z\right)  +o_{p}(1)\text{
}P^{\mathbb{N}}\text{a.s.}%
\]
It now follows from Proposition \ref{Donsker} and Theorem 2.4 of Gine and Zinn
(1990) that \\ $\int\ell^{\left(  k\right)  }\left(  z,\theta_{0}\right)
d\widehat{\Delta}\left(  z\right)  \rightsquigarrow\int\ell^{\left(  k\right)
}\left(  z,\theta_{0}\right)  dT\left(  z\right)  $ almost surely, where
$T\left(  z\right)  $ is a Brownian Bridge process. We finally have to analyze
the term $\widehat{\theta}^{\left(  m\right)  }\left(  \tilde{\epsilon
}\right)  $ which contains expressions of the form $\int\ell^{\left(
k\right)  }(z,\widehat{\theta}^{\ast}(\epsilon))d\widehat{F}_{\epsilon}\left(
z\right)  $ and $\int\ell^{\left(  k\right)  }\left(  z,\widehat{\theta}%
^{\ast}(\epsilon)\right)  d\widehat{\Delta}\left(  z\right)  .$ For $\int%
\ell^{\left(  k\right)  }\left(  z,\widehat{\theta}^{\ast}(\epsilon)\right)
d\widehat{\Delta}\left(  z\right)  $ we use the same inequality as in
(\ref{StochEqBound}) together with Lemma \ref{BootstrapThetaEpsilonConv} (in
Appendix \ref{sec-lfbs}) to show that
\[
\int\ell^{\left(  k\right)  }\left(  z,\widehat{\theta}^{\ast}(\epsilon
)\right)  d\widehat{\Delta}\left(  z\right)  =\int\ell^{\left(  k\right)
}\left(  z,\theta_{0}\right)  d\widehat{\Delta}\left(  z\right)
+o_{p}(1)\text{ }P^{\mathbb{N}}\text{a.s.}%
\]
Next consider
\begin{align*}
&  \left\vert \int\ell^{\left(  k\right)  }(z,\widehat{\theta}^{\ast}%
(\epsilon))d\widehat{F}_{\epsilon}\left(  z\right)  -\ell^{\left(  k\right)
}(z,\theta_{0})dF\left(  z\right)  \right\vert \\
&  \leq\left\vert \epsilon\right\vert \left\vert \int\ell^{\left(  k\right)
}(z,\widehat{\theta}^{\ast}(\epsilon))d\widehat{\Delta}\left(  z\right)
\right\vert +\left\vert \int\ell^{\left(  k\right)  }(z,\theta_{0})d\left(
F\left(  z\right)  -\widehat{F}\left(  z\right)  \right)  \right\vert \\
&  +\left\vert \int\left[  \ell^{\left(  k\right)  }(z,\widehat{\theta}^{\ast
}(\epsilon))-\ell^{\left(  k\right)  }(z,\theta_{0})\right]  d\widehat{F}%
\left(  z\right)  \right\vert
\end{align*}
where $\int\ell^{\left(  k\right)  }(z,\widehat{\theta}^{\ast}(\epsilon
))d\widehat{\Delta}\left(  z\right)  =O_{p}(1)$ $P^{\mathbb{N}}$a.s. by
Proposition \ref{Donsker} and $\sup\left\vert \epsilon\right\vert
=O(n^{-1/2})$. The second term is $o_{p}(1)$ by a law of large numbers.
Finally,
\begin{multline*}
P^{\ast}\left(  \sup_{\epsilon}\left\vert \int\left[  \ell^{\left(  k\right)
}(z,\widehat{\theta}^{\ast}(\epsilon))-\ell^{\left(  k\right)  }(z,\theta
_{0})\right]  d\widehat{F}\left(  z\right)  \right\vert >\eta\right) \\
\leq P^{\ast}\left(  \sup_{\left\vert \theta-\theta_{0}\right\vert <\delta
}\left\vert \int\left[  \ell^{\left(  k\right)  }(z,\theta)-\ell^{\left(
k\right)  }(z,\theta_{0})\right]  d\widehat{F}\left(  z\right)  \right\vert
>\eta\right)  +P^{\ast}\left(  \sup_{0\leq\epsilon\leq1/\sqrt{n}}\left\vert
\widehat{\theta}^{\ast}(\epsilon)-\theta_{0}\right\vert \geq\delta\right)
\end{multline*}
where the first probability is zero with $P^{\mathbb{N}}$-probability tending
to one by stochastic equicontinuity and the second probability goes to zero by
Lemma \ref{BootstrapThetaEpsilonConv} (in Appendix \ref{sec-lfbs}). It follows
that $\int\ell^{\left(  k\right)  }(z,\widehat{\theta}^{\ast}(\epsilon
))d\widehat{F}_{\epsilon}\left(  z\right)  \overset{p}{\rightarrow}%
E\ell^{\left(  k\right)  }(z,\theta_{0})$ $P^{\mathbb{N}}$a.s. Together, these
results imply that $\sup_{\epsilon}\left\vert \widehat{\theta}^{\left(
k\right)  }(\epsilon)\right\vert =O_{p}(1)$ $P^{\mathbb{N}}$a.s. for $k\leq6.$
This establishes the validity of the expansion.

\begin{proof}
See Appendix \ref{Proof-Main-Results}.
\end{proof}

While this result establishes that we can consistently estimate the higher
order bias it is not sufficient to guarantee good higher order properties of
the bias corrected estimator. For this reason we establish the next result.

\section{Consistency of \texorpdfstring{$b^{\ast}$}{b*}}

We will establish that $b^{\ast}$ estimates the higher order bias $b\left(
\theta\right)  $ consistently.

\begin{proposition}
\label{BBC}Assume Conditions \ref{HH},\ref{M} and \ref{EC} hold. Then
$b^{\ast}=\left.  b\left(  \theta_{0}\right)  \right/  n+o_{p}\left(
n^{-1}\right)  $.
\end{proposition}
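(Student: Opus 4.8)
The plan is to combine the bootstrap expansion of Proposition \ref{ApproxThetaHatStar} with a truncation argument, and then to observe that the only term surviving at order $n^{-1}$ reproduces the bias formula (\ref{bias-no-info}), with the remaining fluctuation dropping to $o_p(n^{-1})$. I write $b^{\ast}=E^{\ast}\left[h_n\!\left(\widehat{\theta}^{\ast}-\widehat{\theta}\right)\right]$ with $h_n$ the truncation function defined above; the truncation is what makes the bootstrap expectation well-defined, since $\widehat{\theta}^{\ast}$ need not possess finite bootstrap moments.

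The substantive step, which I would carry out first, is to show that truncation is innocuous at the required order, namely
\[
b^{\ast}=E^{\ast}\left[h_n\!\left(\widehat{\theta}^{\ast}-\widehat{\theta}\right)\right]=E^{\ast}\left[\widehat{\theta}_{aa}^{\ast}\right]+o_p\!\left(n^{-3/2}\right),
\]
where $\widehat{\theta}_{aa}^{\ast}=n^{-1/2}\widehat{\theta}^{\epsilon}(0)+\tfrac{1}{2}n^{-1}\widehat{\theta}^{\epsilon\epsilon}(0)+\tfrac{1}{6}n^{-3/2}\widehat{\theta}^{\epsilon\epsilon\epsilon}(0)$ gathers the three leading terms of the expansion. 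Establishing this requires bounding both the truncation error and the bootstrap expectation of the expansion remainder of Proposition \ref{ApproxThetaHatStar} uniformly at the $o_p(n^{-3/2})$ level. I would obtain these bounds from the almost-sure weak convergence of the bootstrap empirical process $n^{1/2}(\widehat{F}^{\ast}-\widehat{F})$ to a Brownian bridge (Gine and Zinn (1990)), the Donsker property of the class $\mathfrak{F}$, and the higher-moment condition $E[M(Z_i)^Q]<\infty$ for $Q>16$ in Condition \ref{HH}; the mass beyond the truncation level $n^{\alpha}$ then contributes negligibly by Markov's inequality once these moments are in hand.

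Next I would evaluate the bootstrap moments term by term, reusing the computations recorded at (\ref{E*2}). The first term vanishes, $E^{\ast}[\widehat{\theta}^{\epsilon}(0)]=0$, because $\widehat{\theta}^{\epsilon}(0)$ is linear in the mean-zero bootstrap process $\widehat{\Delta}$; the third term is $O_p(n^{-1/2})$; and the second satisfies
\[
E^{\ast}\left[\widehat{\theta}^{\epsilon\epsilon}(0)\right]=\mathcal{I}^{-3}\mathcal{Q}_{1}(\theta_0)\,E[\ell^2]+2\mathcal{I}^{-2}E[\ell\ell^{\theta}]+B_n+o_p\!\left(n^{-1/2}\right),
\]
obtained by replacing the plug-in quantities $\widehat{\mathcal{I}}$, $\widehat{\mathcal{Q}}_{1}(\widehat{\theta})$, $n^{-1}\sum_i\ell(Z_i,\widehat{\theta})^2$ by their population limits and collecting the mean-zero fluctuations into $B_n$. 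Feeding these into $\widehat{\theta}_{aa}^{\ast}$, the first and third terms contribute only at orders $0$ and $n^{-2}$, so that
\[
E^{\ast}\left[\widehat{\theta}_{aa}^{\ast}\right]=\frac{1}{2n}\left(\mathcal{I}^{-3}\mathcal{Q}_{1}(\theta_0)\,E[\ell^2]+2\mathcal{I}^{-2}E[\ell\ell^{\theta}]\right)+\frac{1}{2n}B_n+O_p\!\left(n^{-2}\right).
\]

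To conclude, I would identify the leading bracket with $2b(\theta_0)$ via the bias formula (\ref{bias-no-info}) and $\mathcal{Q}_{1}(\theta_0)=E[\ell^{\theta\theta}]$, so that its contribution is exactly $b(\theta_0)/n$. By (\ref{E*3}) the correction $B_n$ is $\sqrt{n}$-consistent, $\tfrac{1}{2}\sqrt{n}\,B_n=\mathbb{B}+o_p(1)$ with $\mathbb{B}=O_p(1)$, hence $\tfrac{1}{2n}B_n=O_p(n^{-3/2})=o_p(n^{-1})$; combined with the truncation step this yields $b^{\ast}=b(\theta_0)/n+o_p(n^{-1})$. The main obstacle is the truncation-and-remainder control of the first step, which is where the empirical-process results and moment conditions genuinely enter; everything downstream is the term-by-term bootstrap-moment algebra already assembled for Proposition \ref{boot-jack-higher}.
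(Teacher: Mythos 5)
Your proposal is correct and follows essentially the same route as the paper's proof: truncate, replace the truncated bootstrap expectation by $E^{\ast}[\widehat{\theta}_{aa}^{\ast}]$ up to $o_p(n^{-3/2})$ using Lemma \ref{boot:bound:theta} and the empirical-process machinery, evaluate the bootstrap moments via Lemma \ref{BExp}, identify the second-order term with $2b(\theta_0)$, and absorb $B_n$ into $O_p(n^{-3/2})$. One small quibble: the truncation is not needed to make $E^{\ast}[\widehat{\theta}^{\ast}-\widehat{\theta}]$ well-defined (it is an expectation over a finite discrete distribution on a compact parameter space), but rather to permit the Lipschitz comparison with the unbounded polynomial approximation $\widehat{\theta}_{aa}^{\ast}$ and to control the contribution of rare bootstrap draws where the expansion is inaccurate; your subsequent steps handle exactly this, so the argument goes through.
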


\begin{proof}
Introduce the truncation function $h_{n}(x)$, where
\begin{equation}
h_{n}\left(  x\right)  =\left\{
\begin{array}
[c]{cc}%
-n^{\alpha} & \text{if }x<-n^{\alpha}\\
x & \text{if }\left\vert x\right\vert <n^{\alpha}\\
n^{\alpha} & \text{if }x>n^{\alpha}%
\end{array}
\right.  \label{hn}%
\end{equation}
with $\alpha\in\left(  0,\frac{41}{30}\right)  .$ We first show that
\[
E^{\ast}\left[  \widehat{\theta}^{\ast}-\widehat{\theta}\right]  -E^{\ast
}\left[  h_{n}\left(  \widehat{\theta}^{\ast}-\widehat{\theta}\right)
\right]  =o_{p}(n^{-3/2}).
\]
Note that $\left(  \widehat{\theta}^{\ast}-\widehat{\theta}\right)
-h_{n}\left(  \widehat{\theta}^{\ast}-\widehat{\theta}\right)  \leq\left\vert
\widehat{\theta}^{\ast}-\widehat{\theta}\right\vert 1\left\{  \left\vert
\widehat{\theta}^{\ast}-\widehat{\theta}\right\vert >n^{\alpha}\right\}  $. By
compactness of $\Theta$ there exists a constant $C$ such that $\left\vert
\widehat{\theta}^{\ast}-\widehat{\theta}\right\vert <C$ such that
\[
\left\vert E^{\ast}\left[  \widehat{\theta}^{\ast}-\widehat{\theta}\right]
-E^{\ast}\left[  h_{n}\left(  \widehat{\theta}^{\ast}-\widehat{\theta}\right)
\right]  \right\vert \leq CP^{\ast}\left(  \sqrt{n}\left\vert \widehat{\theta
}^{\ast}-\widehat{\theta}\right\vert >n^{\alpha+1/2}\right)  .
\]
Using the expansion for $\sqrt{n}\left(  \widehat{\theta}^{\ast}%
-\widehat{\theta}\right)  $ from Proposition \ref{ApproxThetaHatStar} together
with Lemma \ref{boot:bound:theta} (in Appendix \ref{sec-lfbs}), it follows
that $P^{\ast}\left(  \sqrt{n}\left\vert \widehat{\theta}^{\ast}%
-\widehat{\theta}\right\vert >n^{\alpha+1/2}\right)  =o_{p}(n^{-20/3}).$ This
shows that we can replace $E^{\ast}\left[  \widehat{\theta}^{\ast
}-\widehat{\theta}\right]  $ with a truncated integral $E^{\ast}\left[
h_{n}\left(  \widehat{\theta}^{\ast}-\widehat{\theta}\right)  \right]  $. Let
\begin{equation}
\widehat{\theta}_{a}^{\ast}\equiv n^{-1/2}\widehat{\theta}^{\epsilon}\left(
0\right)  +\frac{1}{2}\frac{1}{n}\widehat{\theta}^{\epsilon\epsilon}\left(
0\right)  +\frac{1}{6}\frac{1}{n^{3/2}}\widehat{\theta}^{\epsilon
\epsilon\epsilon}\left(  0\right)  +\frac{1}{24}\frac{1}{n^{2}}\widehat{\theta
}^{\epsilon\epsilon\epsilon\epsilon}\left(  0\right)  . \label{Ca}%
\end{equation}
Because $\left\vert h_{n}(x)-h_{n}(y)\right\vert \leq2n^{\alpha}%
\wedge\left\Vert x-y\right\Vert $, we have
\[
\left\vert h_{n}\left(  \widehat{\theta}^{\ast}-\widehat{\theta}\right)
-h_{n}\left(  \widehat{\theta}_{a}^{\ast}\right)  \right\vert \leq\min\left(
2n^{\alpha},\;\frac{1}{96n^{5/2}}\sup_{0\leq\epsilon\leq1/\sqrt{n}}\left\Vert
\widehat{\theta}^{\epsilon\epsilon\epsilon\epsilon\epsilon}\left(
\epsilon\right)  \right\Vert \right)  .
\]
Fix $\varepsilon>0$ and $\frac{7}{96}<\delta<\frac{1}{2}$ arbitrary. Taking
expectations with respect to the measure $\widehat{F}$ leads to
\begin{multline*}
\left\vert E^{\ast}\left[  h_{n}\left(  \widehat{\theta}^{\ast}%
-\widehat{\theta}\right)  \right]  -E^{\ast}\left[  h_{n}\left(
\widehat{\theta}_{a}^{\ast}\right)  \right]  \right\vert \\
\leq\varepsilon/n^{2-\delta}+2n^{\alpha}\cdot P^{\ast}\left[  \frac
{1}{96n^{5/2}}\sup_{0\leq\epsilon\leq1/\sqrt{n}}\left\Vert \widehat{\theta
}^{\epsilon\epsilon\epsilon\epsilon\epsilon}\left(  \epsilon\right)
\right\Vert >\varepsilon/n^{2-\delta}\right]  .
\end{multline*}
Use the fact that $P^{\ast}\left[  \frac{1}{96n^{5/2}}\sup_{0\leq\epsilon
\leq1/\sqrt{n}}\left\Vert \widehat{\theta}^{\epsilon\epsilon\epsilon
\epsilon\epsilon}\left(  \epsilon\right)  \right\Vert >\varepsilon
/n^{2-\delta}\right]  =o_{p}\left(  n^{-76/60-\left(  16/5\right)  \delta
}\right)  $ by setting $-v=1/60+\delta/5$ in Lemma \ref{boot:bound:theta} (in
Appendix \ref{sec-lfbs}). Choose $\delta\in\left(  7/96+\left(  5/16\right)
\alpha,1/2\right)  $. It follows that
\begin{align}
\left\vert E^{\ast}\left[  h_{n}\left(  \widehat{\theta}^{\ast}%
-\widehat{\theta}\right)  \right]  -E^{\ast}\left[  h_{n}\left(
\widehat{\theta}_{a}^{\ast}\right)  \right]  \right\vert  &  \leq
\varepsilon/n^{2-\delta}+2o_{p}\left(  n^{-76/60-\left(  16/5\right)
\delta+\alpha}\right) \nonumber\\
&  =O_{p}(n^{\delta-2})=o_{p}(n^{-3/2}) \label{hstar1}%
\end{align}
Next, we show that
\begin{equation}
E^{\ast}\left[  h_{n}\left(  \widehat{\theta}_{a}^{\ast}\right)  \right]
-E^{\ast}\left[  \widehat{\theta}_{a}^{\ast}\right]  =o_{p}\left(
n^{-3/2}\right)  . \label{hstar2}%
\end{equation}
Note that
\begin{align*}
\left\vert E^{\ast}\left[  h_{n}\left(  \widehat{\theta}_{a}^{\ast}\right)
\right]  -E^{\ast}\left[  \widehat{\theta}_{a}^{\ast}\right]  \right\vert  &
\leq E^{\ast}\left[  \left\vert n^{\alpha}-\widehat{\theta}_{a}^{\ast
}\right\vert \mathbf{1}\left\{  \left\vert \widehat{\theta}_{a}^{\ast
}\right\vert \geq n^{\alpha}\right\}  \right] \\
&  \leq E^{\ast}\left[  \left\vert \widehat{\theta}_{a}^{\ast}\right\vert
\mathbf{1}\left\{  \left\vert \widehat{\theta}_{a}^{\ast}\right\vert \geq
n^{\alpha}\right\}  \right] \\
&  +n^{\alpha}E^{\ast}\left[  \mathbf{1}\left\{  \left\vert \widehat{\theta
}_{a}^{\ast}\right\vert \geq n^{\alpha}\right\}  \right] \\
&  \leq2E^{\ast}\left[  \frac{\left\vert \widehat{\theta}_{a}^{\ast
}\right\vert ^{4}}{\left(  n^{\alpha}\right)  ^{3}}\right]  .
\end{align*}
Here, $\left\vert \widehat{\theta}_{a}^{\ast}\right\vert ^{4}$ is a forth
order polynomial in $a=\widehat{\theta}^{\epsilon}\left(  0\right)  $,
$b=\frac{1}{2}\widehat{\theta}^{\epsilon\epsilon}\left(  0\right)  $,
$c=\frac{1}{6}\widehat{\theta}^{\epsilon\epsilon\epsilon}\left(  0\right)  $,
and $d=\frac{1}{24}\widehat{\theta}^{\epsilon\epsilon\epsilon\epsilon}\left(
0\right)  .$ Expectations of all terms of the from $E^{\ast}\left[  a^{i}%
b^{j}c^{k}d^{l}\right]  $ where $i,j,k,l\in\left\{  0,1,2,3,4\right\}  $ and
$i+j+k+l=4$ are bounded in probability such that $E^{\ast}\left[  a^{i}%
b^{j}c^{k}d^{l}\right]  =O_{p}(1)$ where $E^{\ast}\left[  \frac{1}{n^{2}}%
a^{4}\right]  =O_{p}(n^{-2})$ is the largest term. It follows that $\left\vert
E^{\ast}\left[  h_{n}\left(  \widehat{\theta}_{a}^{\ast}\right)  \right]
-E^{\ast}\left[  \widehat{\theta}_{a}^{\ast}\right]  \right\vert =O_{p}\left(
n^{-2-3\alpha}\right)  =o_{p}\left(  n^{-3/2}\right)  $. Because $E^{\ast
}\left[  \frac{1}{24n^{2}}\widehat{\theta}^{\epsilon\epsilon\epsilon\epsilon
}\left(  0\right)  \right]  =O_{p}\left(  n^{-2}\right)  $, we have
\begin{equation}
E^{\ast}\left[  \widehat{\theta}_{a}^{\ast}\right]  =E^{\ast}\left[
\widehat{\theta}_{aa}^{\ast}\right]  +o_{p}(n^{-3/2}), \label{hstar3}%
\end{equation}
where
\begin{equation}
\widehat{\theta}_{aa}^{\ast}\equiv\frac{1}{n^{1/2}}\widehat{\theta}^{\epsilon
}\left(  0\right)  +\frac{1}{2}\frac{1}{n}\widehat{\theta}^{\epsilon\epsilon
}\left(  0\right)  +\frac{1}{6}\frac{1}{n^{3/2}}\widehat{\theta}%
^{\epsilon\epsilon\epsilon}\left(  0\right)  . \label{Caa}%
\end{equation}
In order to evaluate $E^{\ast}\left[  \widehat{\theta}_{aa}^{\ast}\right]  $
we use Proposition \ref{ApproxThetaHatStar} by which $\widehat{\theta
}^{\epsilon}\left(  0\right)  =\widehat{\mathcal{I}}^{-1}U^{\ast}\left(
\widehat{\theta}\right)  ,$ $\widehat{\theta}^{\epsilon\epsilon}\left(
0\right)  =\widehat{\mathcal{I}}^{-3}\widehat{\mathcal{Q}}_{1}\left(
\widehat{\theta}\right)  U^{\ast}\left(  \widehat{\theta}\right)
^{2}+2\widehat{\mathcal{I}}^{-2}U^{\ast}\left(  \widehat{\theta}\right)
V^{\ast}\left(  \widehat{\theta}\right)  $ and
\begin{align*}
\widehat{\theta}^{\epsilon\epsilon\epsilon}\left(  0\right)   &
=\widehat{\mathcal{I}}^{-4}\widehat{\mathcal{Q}}_{2}\left(  \widehat{\theta
}\right)  U^{\ast}\left(  \widehat{\theta}\right)  ^{3}+3\widehat{\mathcal{I}%
}^{-5}\widehat{\mathcal{Q}}_{1}\left(  \widehat{\theta}\right)  ^{2}U^{\ast
}\left(  \widehat{\theta}\right)  ^{3}+9\widehat{\mathcal{I}}^{-4}%
\widehat{\mathcal{Q}}_{1}\left(  \widehat{\theta}\right)  U^{\ast}\left(
\widehat{\theta}\right)  ^{2}V^{\ast}\left(  \widehat{\theta}\right) \\
&  +3\widehat{\mathcal{I}}^{-3}U^{\ast}\left(  \widehat{\theta}\right)
^{2}W^{\ast}\left(  \widehat{\theta}\right)  +6\widehat{\mathcal{I}}%
^{-3}U^{\ast}\left(  \widehat{\theta}\right)  V^{\ast}\left(  \widehat{\theta
}\right)  ^{2}.
\end{align*}
Note that $\widehat{\mathcal{I}},$ $\widehat{\mathcal{Q}}_{1}$ and
$\widehat{\mathcal{Q}}_{2}$ are constants with respect to $E^{\ast}$. It thus
follows that
\begin{equation}
E^{\ast}\left[  \widehat{\theta}^{\epsilon}\left(  0\right)  \right]
=\widehat{\mathcal{I}}^{-1}E^{\ast}\left[  U\left(  \widehat{\theta}\right)
\right]  =0 \label{ECaa1}%
\end{equation}
by Lemma \ref{BExp} (in Appendix \ref{sec-bootm}). We consider $E^{\ast
}\left[  U\left(  \widehat{\theta}\right)  ^{2}\right]  =\tfrac{1}{n}%
%TCIMACRO{\tsum _{i=1}^{n}}%
%BeginExpansion
{\textstyle\sum_{i=1}^{n}}
%EndExpansion
\ell\left(  Z_{i},\widehat{\theta}\right)  ^{2}$. By Proposition \ref{Donsker}
and van der Waart and Wellner (1996, Theorem 1.5.7) it follows that
\[
\underset{n\rightarrow\infty}{\lim\sup}P\left(  \sup_{\left\vert \theta
-\theta_{0}\right\vert <\delta}\left\vert \tfrac{1}{n}%
%TCIMACRO{\tsum _{i=1}^{n}}%
%BeginExpansion
{\textstyle\sum_{i=1}^{n}}
%EndExpansion
\ell\left(  Z_{i},\theta\right)  ^{2}-\tfrac{1}{n}%
%TCIMACRO{\tsum _{i=1}^{n}}%
%BeginExpansion
{\textstyle\sum_{i=1}^{n}}
%EndExpansion
\ell\left(  Z_{i},\theta_{0}\right)  ^{2}\right\vert >\varepsilon\right)  =0
\]
such that by Lemma \ref{Lem-consistency-of-theta-epsilon} (in Appendix
\ref{sec-lfbs}) it follows that
\[
E^{\ast}\left[  U\left(  \widehat{\theta}\right)  ^{2}\right]  =\tfrac{1}{n}%
%TCIMACRO{\tsum _{i=1}^{n}}%
%BeginExpansion
{\textstyle\sum_{i=1}^{n}}
%EndExpansion
\ell\left(  Z_{i},\theta_{0}\right)  ^{2}+o_{p}(1).
\]
Similar results can be established for the other expressions of Lemma
\ref{BExp} (in Appendix \ref{sec-bootm}). It therefore follows that
\begin{align}
E^{\ast}\left[  \widehat{\theta}^{\epsilon\epsilon}\left(  0\right)  \right]
&  =\mathcal{I}^{-3}\mathcal{Q}_{1}\left(  \theta_{0}\right)  \tfrac{1}{n}%
%TCIMACRO{\tsum _{i=1}^{n}}%
%BeginExpansion
{\textstyle\sum_{i=1}^{n}}
%EndExpansion
\ell\left(  Z_{i},\theta_{0}\right)  ^{2}+2\mathcal{I}^{-2}\tfrac{1}{n}%
%TCIMACRO{\tsum _{i=1}^{n}}%
%BeginExpansion
{\textstyle\sum_{i=1}^{n}}
%EndExpansion
\ell\left(  Z_{i},\theta_{0}\right)  \ell^{\theta}\left(  Z_{i},\theta
_{0}\right)  +o_{p}\left(  1\right) \nonumber\\
&  =\mathcal{I}^{-2}\mathcal{Q}_{1}\left(  \theta_{0}\right)  +2\mathcal{I}%
^{-2}E\left[  \ell\ell^{\theta}\right]  +o_{p}\left(  1\right) \nonumber\\
&  =2b\left(  \theta_{0}\right)  +o_{p}(1). \label{ECaa2}%
\end{align}
By the same arguments, we also obtain
\begin{equation}
E^{\ast}\left[  \widehat{\theta}^{\epsilon\epsilon\epsilon}\left(  0\right)
\right]  =O_{p}\left(  n^{-1/2}\right)  . \label{ECaa3}%
\end{equation}
Combining (\ref{Caa}), (\ref{ECaa1}), (\ref{ECaa2}), and (\ref{ECaa3}), we
obtain
\[
E^{\ast}\left[  \widehat{\theta}_{aa}^{\ast}\right]  =\frac{b\left(
\theta_{0}\right)  }{n}+o_{p}\left(  n^{-1}\right)  ,
\]
which establishes the result.
\end{proof}

\section{Technical Lemmas\label{sec-lfbs}}

\begin{lemma}
\label{Lem-consistency-of-theta-epsilon}Under Condition \ref{HH}, we have
\[
\Pr\left[  \max_{0\leq\epsilon\leq\frac{1}{\sqrt{n}}}\left\vert \theta\left(
\epsilon\right)  -\theta_{0}\right\vert \geq\eta\right]  =o\left(
n^{-\frac{23}{3}}\right)
\]
for every $\eta>0$.
\end{lemma}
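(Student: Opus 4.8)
The plan is to reduce the statement to a uniform law of large numbers carrying a \emph{polynomial} bound on the probability, and then to exploit the high moment assumption in Condition~\ref{HH}(iii) to obtain the required rate $o(n^{-23/3})$. Write $L(\theta)=E[\log f(Z_i,\theta)]$, $\widehat{L}(\theta)=n^{-1}\sum_{i=1}^n\log f(Z_i,\theta)$, and $L_\epsilon(\theta)=\int\log f(z,\theta)\,dF_\epsilon(z)$, so that $\theta(\epsilon)$ is a maximizer of $L_\epsilon$ and $\theta(0)=\theta_0$. The key elementary observation is that for $\epsilon\in[0,n^{-1/2}]$ the measure $F_\epsilon=(1-\epsilon\sqrt{n})F+\epsilon\sqrt{n}\widehat{F}$ is a genuine convex combination of $F$ and $\widehat{F}$, whence $L_\epsilon=(1-\epsilon\sqrt{n})L+\epsilon\sqrt{n}\widehat{L}$ and, since $\epsilon\sqrt{n}\le 1$,
\[
\sup_{0\le\epsilon\le n^{-1/2}}\ \sup_{\theta\in\Theta}\bigl|L_\epsilon(\theta)-L(\theta)\bigr|\ \le\ \sup_{\theta\in\Theta}\bigl|\widehat{L}(\theta)-L(\theta)\bigr|.
\]
Thus uniformity over $\epsilon$ is obtained for free, and the whole problem collapses onto the single empirical process $\widehat{L}-L$.

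Next I would invoke identification. By Condition~\ref{HH}(iv) and the information inequality, $\theta_0$ is the unique maximizer of $L$, and by compactness of $\Theta$ (Condition~\ref{HH}(ii)) together with continuity of $L$ the separation $\delta(\eta):=L(\theta_0)-\sup_{|\theta-\theta_0|\ge\eta}L(\theta)$ is strictly positive. The standard argmax argument then gives, for any $\epsilon$, that $|\theta(\epsilon)-\theta_0|\ge\eta$ forces
\[
\delta(\eta)\ \le\ L(\theta_0)-L(\theta(\epsilon))\ \le\ 2\sup_{\theta\in\Theta}\bigl|L_\epsilon(\theta)-L(\theta)\bigr|\ \le\ 2\sup_{\theta\in\Theta}\bigl|\widehat{L}(\theta)-L(\theta)\bigr|,
\]
where the middle inequality uses optimality of $\theta(\epsilon)$ for $L_\epsilon$. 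Because the final bound no longer depends on $\epsilon$, taking the maximum over $\epsilon$ yields the reduction
\[
\Pr\Bigl[\max_{0\le\epsilon\le n^{-1/2}}|\theta(\epsilon)-\theta_0|\ge\eta\Bigr]\ \le\ \Pr\Bigl[\sup_{\theta\in\Theta}\bigl|\widehat{L}(\theta)-L(\theta)\bigr|\ge \tfrac{1}{2}\delta(\eta)\Bigr].
\]

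It remains to bound the right-hand side at rate $o(n^{-23/3})$, which is where the moment assumption enters. Condition~\ref{HH}(iii) with $m=1$ makes $\log f(z,\cdot)$ Lipschitz in $\theta$ with data-dependent constant $\bar{M}:=n^{-1}\sum_i M(Z_i)$ for $\widehat{L}$ and constant $E[M]$ for $L$. On the event $\{\bar{M}\le 2E[M]\}$ I would cover the compact interval $\Theta$ by a \emph{fixed} grid $\theta_1,\dots,\theta_N$ of mesh $\rho<\delta(\eta)/(6E[M])$, so that $\sup_\theta|\widehat{L}-L|\ge\tfrac12\delta(\eta)$ forces $\max_{j\le N}|\widehat{L}(\theta_j)-L(\theta_j)|\ge\tfrac14\delta(\eta)$; here $N$ depends only on $\eta$, not on $n$. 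For each grid point, $\widehat{L}(\theta_j)-L(\theta_j)$ is an average of i.i.d.\ mean-zero variables bounded by $2M(Z_i)$, so a Rosenthal inequality gives $E|\widehat{L}(\theta_j)-L(\theta_j)|^{Q}=O(n^{-Q/2})$ and Markov's inequality yields $\Pr[|\widehat{L}(\theta_j)-L(\theta_j)|\ge\tfrac14\delta(\eta)]=O(n^{-Q/2})$, where $Q>16$ is the moment exponent of Condition~\ref{HH}(iii). The same Rosenthal--Markov argument applied to $\bar{M}-E[M]$ bounds the complementary event $\{\bar{M}>2E[M]\}$ by $O(n^{-Q/2})$. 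Combining through a union bound over the $N$ fixed points gives a total bound of order $O(n^{-Q/2})$, and since $Q>16$ implies $Q/2>8>23/3$, this is $o(n^{-23/3})$, as required.

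The main obstacle is precisely the polynomial \emph{rate} on the probability rather than mere consistency: it is what forces the explicit use of the high moment $Q>16$ and a Rosenthal-type maximal inequality, and it requires controlling the random Lipschitz constant $\bar{M}$ at the same polynomial rate. By contrast, the uniformity over $\epsilon$ and the identification step are standard, the former being immediate from the convex-combination representation of $F_\epsilon$.
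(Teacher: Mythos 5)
Your proposal is correct and its skeleton coincides with the paper's: both exploit the convex-combination identity $L_\epsilon-L=\epsilon\sqrt{n}\,(\widehat{L}-L)$ with $\epsilon\sqrt{n}\le 1$ to collapse the uniformity over $\epsilon$ onto the single empirical process $\widehat{L}-L$, both then run the standard identification/argmax argument on the compact parameter space, and both finish by establishing $\Pr[\sup_\theta|\widehat{L}(\theta)-L(\theta)|\ge c]=o(n^{-23/3})$ from the high moments in Condition \ref{HH}(iii). Where you genuinely diverge is in how that last uniform bound is obtained. The paper rescales the deviation and invokes its maximal-inequality lemma (the analogue of your Rosenthal--Markov step, applied with the $16$th moment to get $O(n^{-8})$), but in that lemma the supremum over the parameter is pushed inside the expectation with only a cursory appeal to ``dominated convergence''; the heavy lifting for uniformity is implicitly delegated to the entropy Condition \ref{EC}. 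You instead make the uniformity explicit and elementary: you use the $m=1$ envelope in Condition \ref{HH}(iii) to get a Lipschitz bound with random constant $\bar{M}$, control $\{\bar{M}>2E[M]\}$ at the same polynomial rate, and reduce the supremum to a union bound over a fixed finite grid. This buys a self-contained argument that does not lean on the entropy condition at all, at the cost of being specific to the scalar compact $\Theta$ and the Lipschitz envelope (which is all that is needed here). One small arithmetic point: with mesh $\rho<\delta(\eta)/(6E[M])$ the Lipschitz correction is only guaranteed to be below $\delta(\eta)/2$, which is not quite enough to force $\max_j|\widehat{L}(\theta_j)-L(\theta_j)|\ge\delta(\eta)/4$; take $\rho<\delta(\eta)/(12E[M])$ instead. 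This is a constant-factor adjustment and does not affect the argument.
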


\begin{lemma}
\label{multi:k}Assume that Condition \ref{HH}\ holds. Suppose that $K\left(
z;\theta\left(  \epsilon\right)  \right)  $ is equal to
\[
\frac{\partial^{m}\log f\left(  z;\theta\left(  \epsilon\right)  \right)
}{\partial\theta^{m}}\qquad
\]
for some $m\leq6$. Then, for any $\eta>0$, we have
\[
\Pr\left[  \max_{0\leq\epsilon\leq\frac{1}{\sqrt{n}}}\left\vert \int K\left(
z;\theta\left(  \epsilon\right)  \right)  dF_{\epsilon}\left(  z\right)
-E\left[  K\left(  Z_{i};\theta_{0}\right)  \right]  \right\vert >\eta\right]
=o\left(  n^{-\frac{23}{3}}\right)  .
\]
Also,
\[
\Pr\left[  \max_{0\leq\epsilon\leq\frac{1}{\sqrt{n}}}\left\vert \int K\left(
\cdot;\theta\left(  \epsilon\right)  \right)  d\Delta\right\vert >Cn^{\frac
{1}{12}-\upsilon}\right]  =o\left(  n^{-1+16\upsilon}\right)
\]
for some constant $C>0$ and for every $\upsilon$ such that $\upsilon<\frac
{1}{16}$. If $\upsilon<\frac{1}{48}$ then the above order is $o\left(
n^{-1}\right)  $.
\end{lemma}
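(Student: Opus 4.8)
The plan is to reduce both displays to a single ingredient: an $L_Q$ bound, uniform in $\theta$, on the empirical process
\[
\nu_n(\theta) \;:=\; \int K(\cdot;\theta)\,d\Delta \;=\; \frac{1}{\sqrt{n}}\sum_{i=1}^{n}\Big(K(Z_i;\theta)-E[K(Z_i;\theta)]\Big),
\]
indexed by $\theta\in\Theta$. Because $\max_{0\le\epsilon\le n^{-1/2}}\big|\int K(\cdot;\theta(\epsilon))\,d\Delta\big|\le\sup_{\theta\in\Theta}|\nu_n(\theta)|$, the second display reduces to a tail bound on $\sup_\theta|\nu_n(\theta)|$, and, as shown next, the first display reduces to the same bound together with Lemma~\ref{Lem-consistency-of-theta-epsilon}.

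For the first display I would use $dF_\epsilon=dF+\epsilon\,d\Delta$ to write
\[
\int K(\cdot;\theta(\epsilon))\,dF_\epsilon-E[K(Z_i;\theta_0)]=\int\big(K(\cdot;\theta(\epsilon))-K(\cdot;\theta_0)\big)\,dF\;+\;\epsilon\,\nu_n(\theta(\epsilon)).
\]
By the mean value theorem and Condition~\ref{HH} (which bounds $\partial K/\partial\theta=\partial^{m+1}\log f/\partial\theta^{m+1}$ by $M$ for $m\le6$), the first term is at most $|\theta(\epsilon)-\theta_0|\,E[M(Z_i)]$; hence its maximum over $\epsilon$ exceeds $\eta/2$ only on the event $\{\max_\epsilon|\theta(\epsilon)-\theta_0|>\eta/(2E[M])\}$, whose probability is $o(n^{-23/3})$ by Lemma~\ref{Lem-consistency-of-theta-epsilon}. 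The second term is bounded by $n^{-1/2}\sup_\theta|\nu_n(\theta)|$ (using $\epsilon\le n^{-1/2}$), so by the moment bound below and Markov's inequality its contribution is $O(n^{-Q/2})=o(n^{-23/3})$, since $Q>16$ gives $Q/2>8>23/3$. Combining the two terms gives the first display.

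The core of the argument, and the step I expect to be the main obstacle, is the uniform control of $\nu_n$; concretely I would show $E\big[\sup_{\theta\in\Theta}|\nu_n(\theta)|^Q\big]=O(1)$ for the $Q>16$ of Condition~\ref{HH}. The class $\{K(\cdot;\theta):\theta\in\Theta\}$ is Lipschitz in the scalar parameter with Lipschitz modulus $M(\cdot)$, so its covering numbers obey $N\big(\varepsilon\|M\|_{\mathcal Q,2},\cdot,L_2(\mathcal Q)\big)\le C/\varepsilon$ uniformly over measures $\mathcal Q$; the uniform entropy integral is therefore finite \emph{without} invoking Condition~\ref{EC}. A high-moment maximal inequality for empirical processes then yields a bound of the form $\big\|\sup_\theta|\nu_n|\big\|_{L_Q}\lesssim \|M\|_{L_2}\,J+n^{1/Q-1/2}\|M\|_{L_Q}$, in which the second, heavy-tail term (coming from the single largest summand) vanishes because $1/Q-1/2<0$, leaving $E[\sup_\theta|\nu_n|^Q]=O(1)$. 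The delicate point is precisely this heavy-tail term, which is why the envelope is assumed in $L_Q$ with $Q>16$ rather than merely bounded; a crude grid-plus-union-bound argument loses a factor equal to the grid size and falls short of the claimed exponent, so the chaining built into the maximal inequality is what supplies the needed margin.

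With the moment bound in hand, Markov's inequality gives, for a suitable constant $C$,
\[
\Pr\Big[\sup_{\theta}|\nu_n(\theta)|>Cn^{1/12-\upsilon}\Big]\le C'\big(n^{1/12-\upsilon}\big)^{-Q}=C'n^{-Q/12+Q\upsilon}.
\]
For $Q>16$ this is $o(n^{-1+16\upsilon})$ for every $\upsilon<1/16$, and it improves to $o(n^{-1})$ exactly when $\upsilon<1/12-1/Q$, i.e. (at the margin $Q\downarrow16$) when $\upsilon<1/48$; these are the stated rates, and the same calculation completes the second display via $\max_\epsilon|\int K(\cdot;\theta(\epsilon))\,d\Delta|\le\sup_\theta|\nu_n(\theta)|$.
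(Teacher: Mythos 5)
Your proposal is correct in outline and reaches the stated rates, but it drives the second display with a different engine than the paper. For the first display you and the paper do essentially the same thing: split off a mean-value term controlled by $\left\vert\theta(\epsilon)-\theta_0\right\vert$ times (a moment of) the envelope $M$, invoke Lemma \ref{Lem-consistency-of-theta-epsilon}, and kill an $O(n^{-1/2})\times O_p(1)$ empirical-average term by Markov; your variant integrates the mean-value term against $dF$ rather than $dF_\epsilon$, which is marginally cleaner because it avoids separately controlling $\frac{1}{n}\sum_i M(Z_i)$ (the paper does this via Lemma \ref{modifiedHH:Lem1}). The real divergence is in how $\sup_{\theta\in\Theta}\left\vert\int K(\cdot;\theta)\,d\Delta\right\vert$ is controlled. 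The paper's Lemma \ref{modifiedHH:Lem3} uses Markov with the $16$th moment and the combinatorial bound $E[(\sum_i\xi_i(\phi))^{16}]\leq Cn^{8}$ of Lemma \ref{sum-bound}, and passes from a pointwise bound to the supremum by interchanging $E$ and $\sup_\phi$ with only a one-line appeal to ``dominated convergence''; that interchange is the weakest link in the paper's own argument. Your chaining route --- a $C/\varepsilon$ covering bound from the Lipschitz-in-$\theta$ structure, an $L_Q$ maximal inequality giving $E[\sup_\theta\vert\nu_n(\theta)\vert^{Q}]=O(1)$, then Markov with exponent $Q$ --- obtains the supremum honestly, and you are right that Condition \ref{EC} is not needed for this lemma since the entropy integral is automatic for a scalar Lipschitz-indexed class; you do, however, still need Condition \ref{M} for measurability of the supremum, and the one step you assert rather than establish is the precise $L_Q$ maximal inequality (e.g.\ van der Vaart and Wellner, Theorem 2.14.5), which should be cited or proved, including the exact exponent on the heavy-tail term. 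The exponent arithmetic on both sides agrees: $n^{-Q/12+Q\upsilon}$ is $o(n^{-1+16\upsilon})$ for every $\upsilon<1/16$ once $Q\geq 16$, and is $o(n^{-1})$ for $\upsilon<1/12-1/Q$, which at $Q=16$ gives the stated threshold $1/48$.
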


\begin{lemma}
\label{multi:bound:theta-epsilon-1}Suppose that Condition \ref{HH} holds.
Then, we have
\begin{align*}
\Pr\left[  \max_{0\leq\epsilon\leq\frac{1}{\sqrt{n}}}\left\vert \theta
^{\epsilon}\left(  \epsilon\right)  \right\vert >Cn^{\frac{1}{12}-\upsilon
}\right]   &  =o\left(  n^{-1+16\upsilon}\right) \\
\Pr\left[  \max_{0\leq\epsilon\leq\frac{1}{\sqrt{n}}}\left\vert \theta
^{\epsilon\epsilon}\left(  \epsilon\right)  \right\vert >C\left(  n^{\frac
{1}{12}-\upsilon}\right)  ^{2}\right]   &  =o\left(  n^{-1+16\upsilon}\right)
\\
&  \vdots\\
\Pr\left[  \max_{0\leq\epsilon\leq\frac{1}{\sqrt{n}}}\left\vert \theta
^{\epsilon\epsilon\epsilon\epsilon\epsilon\epsilon}\left(  \epsilon\right)
\right\vert >C\left(  n^{\frac{1}{12}-\upsilon}\right)  ^{6}\right]   &
=o\left(  n^{-1+16\upsilon}\right)
\end{align*}
for some constant $C>0$ and for every $\upsilon$ such that $\upsilon<\frac
{1}{16}$. If $\upsilon<\frac{1}{48}$ then the above orders are $o\left(
n^{-1}\right)  $.
\end{lemma}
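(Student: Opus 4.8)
The plan is to read the identities (\ref{multi:alt-h1})--(\ref{multi:alt-h6}) as a recursive system that expresses each derivative $\theta^{(k)}(\epsilon)$ explicitly in terms of lower-order derivatives, and then to propagate a single uniform-in-$\epsilon$ bound through the recursion by induction on $k$. Each of these identities contains the top-order derivative $\theta^{(k)}(\epsilon)$ only through the term $E_{\epsilon}[\ell^{\theta}(Z_i,\epsilon)]\,\theta^{(k)}(\epsilon)$, so solving for $\theta^{(k)}(\epsilon)$ yields a ratio whose denominator is $E_{\epsilon}[\ell^{\theta}(Z_i,\epsilon)]$ and whose numerator is a finite sum of products of three kinds of factors: averaged-derivative coefficients $E_{\epsilon}[\ell^{(j)}(Z_i,\epsilon)]$, empirical-process integrals $\int \ell^{(j)}(\cdot,\theta(\epsilon))\,d\Delta$, and the lower-order derivatives $\theta^{(i)}(\epsilon)$ with $i<k$.

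First I would fix the probabilistic scaffolding. Define the event $\mathcal{E}_n$ on which, uniformly over $\epsilon\in[0,n^{-1/2}]$: (a) $|E_{\epsilon}[\ell^{\theta}(Z_i,\epsilon)]|\ge \mathcal{I}/2$; (b) $|E_{\epsilon}[\ell^{(j)}(Z_i,\epsilon)]|\le C$ for $j=2,\dots,6$; and (c) $|\int \ell^{(j)}(\cdot,\theta(\epsilon))\,d\Delta|\le C n^{1/12-\upsilon}$ for $j=0,\dots,6$. Items (a) and (b) follow from the first display of Lemma \ref{multi:k} together with $E[\ell^{\theta}(Z_i,\theta_0)]=-\mathcal{I}\ne 0$, and hold off an event of probability $o(n^{-23/3})$; item (c) is the second display of Lemma \ref{multi:k} and fails with probability $o(n^{-1+16\upsilon})$. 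Since only finitely many $j$ are involved, a union bound gives $\Pr[\mathcal{E}_n^c]=o(n^{-1+16\upsilon})$, the $\int(\cdot)\,d\Delta$ events being the binding contribution; the sharpened rate $o(n^{-1})$ for $\upsilon<1/48$ is inherited from Lemma \ref{multi:k}. The crucial point here is (a): because $E_{\epsilon}[\cdot]$ already absorbs the $\epsilon$-linear part $\epsilon\int(\cdot)\,d\Delta$, the denominator stays bounded away from zero across the whole interval, not merely at $\epsilon=0$.

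The core of the argument is an induction on $k$ carried out on $\mathcal{E}_n$, establishing $\max_{0\le\epsilon\le n^{-1/2}}|\theta^{(k)}(\epsilon)|\le C_k (n^{1/12-\upsilon})^k$. The base case $k=1$ is immediate from (\ref{multi:alt-h1}), (a) and (c), giving magnitude $O(n^{1/12-\upsilon})$. For the inductive step I would exploit the homogeneity of the recursion: assign weight $i$ to each factor $\theta^{(i)}$, weight $1$ to each factor $\int \ell^{(j)}\,d\Delta$, and weight $0$ to each coefficient $E_{\epsilon}[\cdot]$. Inspection of (\ref{multi:alt-h2})--(\ref{multi:alt-h6}) shows that every term in the numerator of $\theta^{(k)}(\epsilon)$ has total weight exactly $k$. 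This is not accidental: writing $G(\epsilon)=\int \ell(z,\theta(\epsilon))\,dF+\epsilon\int \ell(z,\theta(\epsilon))\,d\Delta$ and differentiating $k$ times, the linearity of $F_{\epsilon}$ in $\epsilon$ forces at most one surviving $d\Delta$ factor per term (contributing weight $1$), while Fa\`a di Bruno applied to the $\theta$-dependence produces products $\prod_l \theta^{(i_l)}$ whose orders sum to the remaining number of differentiations, so the weights always total $k$. Substituting the inductive bounds together with (b) and (c) then bounds each term by $C(n^{1/12-\upsilon})^k$, and dividing by the denominator (bounded above and below by (a)) preserves this order, completing the step.

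The step I expect to be the main obstacle is verifying the exact weight-$k$ homogeneity and converting it into a clean uniform bound---in particular, checking that the larger combinatorial expansions in (\ref{multi:alt-h5})--(\ref{multi:alt-h6}) never produce a term of weight exceeding $k$, and that the uniform lower bound (a) on the denominator genuinely holds across all of $[0,n^{-1/2}]$ rather than only pointwise. Everything else is routine bookkeeping: the smoothness and moment requirements of Condition \ref{HH} (seven derivatives of $\log f$, an envelope with $Q>16$ finite moments) enter only to render Lemma \ref{multi:k} applicable up to $j=6$, i.e.\ up to the sixth derivative $\theta^{\epsilon\epsilon\epsilon\epsilon\epsilon\epsilon}$.
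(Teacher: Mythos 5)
Your proposal is correct and follows essentially the same route as the paper: solve each differentiated first-order condition (\ref{multi:alt-h1})--(\ref{multi:alt-h6}) for the top-order derivative, bound the denominator away from zero and the $\int(\cdot)\,d\Delta$ terms via Lemma \ref{multi:k}, and induct on the order of the derivative. Your explicit weight-$k$ homogeneity bookkeeping simply systematizes what the paper handles with ``the rest of the Lemmas can be established similarly.''
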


\begin{lemma}
\label{equiLemma} Suppose that Condition \ref{HH} holds. Let let $\bar{m}%
_{3}\left(  \theta\right)  $ and $\bar{m}_{4}\left(  \theta\right)  $ be as
defined in \ref{m-bar}. Then
\[
\sqrt{n}\left(  \widehat{\mathcal{I}}-\mathcal{I}\right)  =-V\left(
\theta_{0}\right)  -\mathcal{Q}_{1}\left(  \theta_{0}\right)  \mathcal{I}%
^{-1}U\left(  \theta_{0}\right)  +o_{p}\left(  1\right)  ,
\]%
\[
\sqrt{n}\left(  \widehat{\mathcal{Q}}_{1}\left(  \widehat{\theta}\right)
-\mathcal{Q}_{1}\left(  \theta_{0}\right)  \right)  =W\left(  \theta
_{0}\right)  +\mathcal{Q}_{2}\left(  \theta_{0}\right)  \mathcal{I}%
^{-1}U\left(  \theta_{0}\right)  +o_{p}\left(  1\right)  ,
\]%
\[
\sqrt{n}\left(  \overline{m}_{3}\left(  \widehat{\theta}\right)  -\overline
{m}_{3}\left(  \theta_{0}\right)  \right)  =\left(  E\left[  V_{i}\left(
\theta_{0}\right)  ^{2}\right]  +\left(  E\left[  \ell^{\theta}\left(
Z_{i},\theta_{0}\right)  \right]  \right)  ^{2}+E\left[  U_{i}\left(
\theta_{0}\right)  W_{i}\left(  \theta_{0}\right)  \right]  \right)
\mathcal{I}^{-1}U\left(  \theta_{0}\right)  ,
\]%
\[
\sqrt{n}\left(  \overline{m}_{4}\left(  \widehat{\theta}\right)  -\overline
{m}_{4}\left(  \theta_{0}\right)  \right)  =2E\left[  U_{i}\left(  \theta
_{0}\right)  V_{i}\left(  \theta_{0}\right)  \right]  \mathcal{I}^{-1}U\left(
\theta_{0}\right)  +o_{p}\left(  1\right)  .
\]

\end{lemma}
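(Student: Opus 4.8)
The plan is to split the four identities into two groups according to whether the left-hand object is a \emph{random} sample average evaluated at $\widehat{\theta}$ (the first two) or a \emph{deterministic} population function of $\theta$ evaluated at $\widehat{\theta}$ (the last two). In both groups the engine is a first-order Taylor expansion in $\theta$ about $\theta_{0}$, combined with the influence representation $\sqrt{n}(\widehat{\theta}-\theta_{0})=\mathcal{I}^{-1}U(\theta_{0})+o_{p}(1)$ delivered by the leading term $T_{1}$ of Proposition \ref{ApproxThetaHat-strong}.

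For the first identity I would write $\widehat{\mathcal{I}}=-n^{-1}\sum_{i}\ell^{\theta}(Z_{i},\widehat{\theta})$ and expand by the mean value theorem to get $-n^{-1}\sum_{i}\ell^{\theta}(Z_{i},\widehat{\theta})=-n^{-1}\sum_{i}\ell^{\theta}(Z_{i},\theta_{0})-(n^{-1}\sum_{i}\ell^{\theta\theta}(Z_{i},\bar{\theta}))(\widehat{\theta}-\theta_{0})$ for some intermediate $\bar{\theta}$. Using $V_{i}=\ell^{\theta}-E[\ell^{\theta}]$ and $\mathcal{I}=-E[\ell^{\theta}]$, the centered leading term satisfies $\sqrt{n}(-n^{-1}\sum_{i}\ell^{\theta}(Z_{i},\theta_{0})-\mathcal{I})=-V(\theta_{0})$. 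For the second term, consistency of $\widehat{\theta}$ (hence of $\bar{\theta}$) from Lemma \ref{Lem-consistency-of-theta-epsilon}, together with a uniform law of large numbers for $\ell^{\theta\theta}$ via Lemma \ref{multi:k}, gives $n^{-1}\sum_{i}\ell^{\theta\theta}(Z_{i},\bar{\theta})=\mathcal{Q}_{1}(\theta_{0})+o_{p}(1)$; multiplying by $\sqrt{n}(\widehat{\theta}-\theta_{0})=\mathcal{I}^{-1}U(\theta_{0})+o_{p}(1)=O_{p}(1)$ produces $-\mathcal{Q}_{1}(\theta_{0})\mathcal{I}^{-1}U(\theta_{0})$, and collecting terms yields the claim. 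The second identity is structurally identical, expanding $n^{-1}\sum_{i}\ell^{\theta\theta}(Z_{i},\widehat{\theta})$ one derivative higher: the centered leading term produces $W(\theta_{0})$ and the derivative term produces $\mathcal{Q}_{2}(\theta_{0})\mathcal{I}^{-1}U(\theta_{0})$.

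For the third and fourth identities the left-hand sides are smooth, nonrandom functions of $\theta$, since the expectation in $\overline{m}(\theta)$ defined in (\ref{m-bar}) is always taken at the true density $f(\cdot,\theta_{0})$. Hence a single Taylor expansion gives $\sqrt{n}(\overline{m}_{j}(\widehat{\theta})-\overline{m}_{j}(\theta_{0}))=\overline{m}_{j}'(\theta_{0})\sqrt{n}(\widehat{\theta}-\theta_{0})+o_{p}(1)=\overline{m}_{j}'(\theta_{0})\mathcal{I}^{-1}U(\theta_{0})+o_{p}(1)$. Differentiating under the integral sign, $\overline{m}_{3}'(\theta_{0})=E[(\ell^{\theta})^{2}]+E[\ell\ell^{\theta\theta}]$ and $\overline{m}_{4}'(\theta_{0})=2E[\ell\ell^{\theta}]$. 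The remaining work is bookkeeping that uses the mean-zero-score identity $E[\ell(Z_{i},\theta_{0})]=0$: I rewrite $E[(\ell^{\theta})^{2}]=E[V_{i}^{2}]+(E[\ell^{\theta}])^{2}$ and $E[\ell\ell^{\theta\theta}]=E[U_{i}W_{i}]$ to match the stated form of the third identity, and $E[\ell\ell^{\theta}]=E[U_{i}V_{i}]$ to match the fourth.

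The algebraic identities and sign bookkeeping are routine; the one step requiring genuine care is the uniform control of remainders. In the first two identities I must ensure the intermediate-point sample averages converge to their population limits and that the cross term (centered empirical process of $\ell^{\theta\theta}$ times $\sqrt{n}(\widehat{\theta}-\theta_{0})$) is $o_{p}(1)$, while in the last two the second-order Taylor remainder $\overline{m}_{j}''(\cdot)(\widehat{\theta}-\theta_{0})^{2}=O_{p}(n^{-1})$ must vanish after $\sqrt{n}$ scaling. This is precisely where Condition \ref{HH} does the work: the dominating envelope $M(z)$ with $E[M^{Q}]<\infty$ for $Q>16$ and the seven-fold differentiability of $\log f$ justify differentiation under the integral sign, supply the uniform laws of large numbers for $\ell^{\theta\theta}$ and $\ell^{\theta\theta\theta}$ invoked above, and bound the neglected remainders by $O_{p}(n^{-1})$. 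Given this machinery the lemma is not deep; its value is the \emph{exact} linear representations it records, which feed directly into the bootstrap and jackknife computations defining $\mathbb{B}$ and $\mathbb{J}$.
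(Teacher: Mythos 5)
Your proposal is correct and follows essentially the same route as the paper: linearize each quantity around $\theta_{0}$, substitute $\sqrt{n}(\widehat{\theta}-\theta_{0})=\mathcal{I}^{-1}U(\theta_{0})+o_{p}(1)$ from the leading term of Proposition \ref{ApproxThetaHat-strong}, and recenter using $E[\ell(Z_{i},\theta_{0})]=0$ to match the stated forms. The only cosmetic difference is in the first two identities, where the paper decomposes the sample average into a centered empirical process (controlled by stochastic equicontinuity) plus the drift $\overline{m}_{0}(\widehat{\theta})-\overline{m}_{0}(\theta_{0})$, whereas you apply the mean value theorem directly to the sample average and invoke a uniform LLN for the intermediate-point derivative; both are justified by Condition \ref{HH} and yield identical linearizations.
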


\begin{proposition}
\label{Donsker}Assume that Conditions \ref{HH},\ref{M} and \ref{EC} hold. Let
$\mathfrak{F}$ be the class of measurable functions defined in Condition
\ref{EC}. Let $\rightsquigarrow$ denote weak convergence. Let $\left(
\Omega,\mathcal{F},P\right)  $ be a probability space such that $Z_{i}:$
$\left(  \Omega^{\mathbb{N}},\mathcal{F}^{\mathbb{N}},P^{\mathbb{N}}\right)
\rightarrow\left(  \Omega,\mathcal{F},P\right)  $ are coordinate projections.
Then, for $f\in\mathfrak{F}$ , $\sqrt{n}\left(  \widehat{F}-F\right)
f\rightsquigarrow Tf$ where $T$ is a tight Brownian bridge with variance
covariance function $F\left(  t\wedge s\right)  -F\left(  s\right)  F\left(
t\right)  $. Let $BL_{1}$ be the set of all function $h:l^{\infty}\left(
\mathfrak{F}\right)  \mapsto\left[  0,1\right]  $ such that $\left\vert
h(z_{1})-h(z_{2})\right\vert \leq\left\Vert z_{1}-z_{2}\right\Vert
_{\mathfrak{F}}$ for every $z_{1}$and $z_{2}$ where $l^{\infty}\left(
\mathfrak{F}\right)  $ is the set of uniformly bounded real functions on
$\mathfrak{F}$ and $\left\Vert .\right\Vert _{\mathfrak{F}}$ is the uniform
norm for maps from $\mathfrak{F}$ to $\mathbb{R}$. Then
\[
\sup_{h\in BL_{1}}\left\vert E^{\ast}h\left[  \sqrt{n}\left(  \widehat{F}%
^{\ast}-\widehat{F}\right)  f\right]  -Eh\left[  Tf\right]  \right\vert
\rightarrow0,\quad P^{\mathbb{N}}-a.s.
\]

\end{proposition}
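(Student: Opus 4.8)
The plan is to treat the two conclusions separately: the first is the assertion that the indexing class $\mathfrak{F}$ is $P$-Donsker, and the second is the conditional (bootstrap) central limit theorem, which I would derive from the Donsker property by appealing to Gin\'e and Zinn (1990). No new analytic machinery is needed; the work is in checking that Conditions \ref{HH}, \ref{M} and \ref{EC} supply exactly the hypotheses of two standard theorems.

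First I would verify that $\mathfrak{F}$ satisfies the hypotheses of a uniform-entropy Donsker theorem. There are three ingredients. (i) The envelope $M$ is square integrable: Condition \ref{HH}(iii) gives $E[M(Z_i)^Q] < \infty$ for some $Q > 16$, and since $Q > 2$ this yields $E[M^2] < \infty$. (ii) The uniform entropy integral is finite: this is precisely the content of Condition \ref{EC}, display (\ref{Entropy}), where the supremum runs over finitely supported probability measures. (iii) The class is pointwise measurable: Condition \ref{M} gives $P$-measurability of each $\partial^m \log f(z,\theta)/\partial\theta^m$, and since $\Theta$ is a compact, hence separable, subset of $\mathbb{R}$ with $\partial^m \log f(z,\cdot)$ continuous by Condition \ref{HH}(i), the relevant suprema may be taken over a countable dense subset, so outer expectations reduce to ordinary ones. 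With these three facts I would apply the uniform-entropy Donsker theorem (van der Vaart and Wellner (1996), Theorem 2.5.2) to conclude that $\mathfrak{F}$ is $P$-Donsker, i.e. $\sqrt{n}(\widehat{F} - F)f \rightsquigarrow Tf$ in $l^\infty(\mathfrak{F})$ with $T$ the tight Brownian bridge having the stated covariance $F(t \wedge s) - F(s)F(t)$.

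For the second conclusion I would invoke the bootstrap functional central limit theorem of Gin\'e and Zinn (1990), Theorem 2.4, which states that for a class with square-integrable envelope the almost-sure conditional weak convergence of the Efron bootstrap empirical process in the bounded-Lipschitz metric is equivalent to the $P$-Donsker property. Since the first part delivers the Donsker property and ingredient (i) supplies $E[M^2] < \infty$, the resampled process $\sqrt{n}(\widehat{F}^* - \widehat{F})$ converges conditionally to the same limit $T$, $P^{\mathbb{N}}$-almost surely, which is exactly the displayed assertion $\sup_{h \in BL_1} |E^* h[\sqrt{n}(\widehat{F}^* - \widehat{F})f] - E h[Tf]| \to 0$.

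The main obstacle is not the analytic content but the bookkeeping required to reach the \emph{almost-sure} rather than merely \emph{in-probability} version of the bootstrap statement. The almost-sure conclusion needs both the strong envelope condition $E[M^2]<\infty$ (comfortably met here, since in fact $E[M^{16}] < \infty$) and enough measurability that the conditional law of the bootstrap process and its bounded-Lipschitz distance to the limit are well defined as measurable objects. I would discharge the measurability requirement with Condition \ref{M} together with the separability of $\Theta$, as in the permissibility arguments underlying van der Vaart and Wellner (1996), and I would note that the multinomial (Efron) resampling weights trivially satisfy the moment conditions Gin\'e and Zinn impose on the weighting scheme, so no further verification of the resampling mechanism is needed.
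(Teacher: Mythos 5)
Your proposal is correct and follows essentially the same route as the paper: both verify the hypotheses of van der Vaart and Wellner (1996, Theorem 2.5.2) — square-integrable envelope from Condition \ref{HH}, the uniform entropy bound from Condition \ref{EC}, and measurability via Condition \ref{M} together with a countable dense subset of the compact $\Theta$ and continuity in $\theta$ — and then invoke Gin\'e and Zinn (1990, Theorem 2.4) for the almost-sure conditional bootstrap convergence. The only cosmetic difference is that the paper spells out the measurability of the specific classes $\mathfrak{F}_{\delta}$ and $\mathfrak{F}_{\infty}^{2}$ via a liminf of suprema over countable grids, whereas you phrase the same step as pointwise measurability of the class.
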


\begin{lemma}
\label{P-rate}Assume that Condition \ref{HH} is satisfied. Suppose that, for
each $i$, $\xi_{i}^{\ast}\left(  \phi\right)  =\tau\left(  Z_{i}^{\ast}%
,\phi\right)  -\frac{1}{n}\sum_{i=1}^{n}\tau\left(  Z_{i},\phi\right)  $,
$i=\left\{  1,2,\ldots\right\}  $\ is a sequence of bootstrapped
transformations of random variables indexed by some parameter $\phi\in\Phi$
with $E^{\ast}\left[  \xi_{i}^{\ast}\left(  \phi\right)  \right]  =0$. We also
assume that $\sup_{\phi\in\Phi}\left\vert \tau\left(  Z_{i},\phi\right)
\right\vert \leq B_{i}$ for some sequence of random variables $B_{i}$ that is
i.i.d. Finally, we assume that $E\left[  \left\vert B_{i}\right\vert
^{16}\right]  <\infty$. We then have
\[
P^{\ast}\left[  \left\vert \frac{1}{\sqrt{n}}\sum_{i=1}^{n}\xi_{i}^{\ast
}\left(  \phi_{n}\right)  \right\vert >n^{\frac{1}{12}-\upsilon}\right]
=o_{p}\left(  n^{-1+16\upsilon}\right)
\]
for every $\upsilon$ such that $\upsilon<\frac{1}{16}$. Moreover,
\[
P^{\ast}\left[  \left\vert \frac{1}{n}\sum_{i=1}^{n}\xi_{i}^{\ast}\left(
\phi_{n}\right)  \right\vert >n^{\frac{1}{12}-\upsilon}\right]  =o_{p}\left(
n^{-\frac{23}{3}}\right)  .
\]
Here, $\phi_{n}$ is an arbitrary sequence in $\Phi$ and $P^{\ast}$ is the
conditional probability measure of $Z_{i}^{\ast}$ given $Z_{i}$.
\end{lemma}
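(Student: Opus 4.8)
The plan is to work conditionally on the original sample $Z_{1},\dots,Z_{n}$ (which simultaneously fixes the empirical distribution $\widehat{F}$ and the sequence $\phi_{n}$), and then to bound the two bootstrap tail probabilities by a single high-moment Markov inequality. Conditionally on the data, the draws $Z_{1}^{\ast},\dots,Z_{n}^{\ast}$ are i.i.d.\ from $\widehat{F}$, so the summands $\xi_{i}^{\ast}(\phi_{n})=\tau(Z_{i}^{\ast},\phi_{n})-n^{-1}\sum_{j}\tau(Z_{j},\phi_{n})$ are i.i.d.\ under $P^{\ast}$ and, by construction, satisfy $E^{\ast}[\xi_{i}^{\ast}(\phi_{n})]=0$. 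The envelope hypothesis supplies the uniform-in-$\phi$ domination $|\xi_{i}^{\ast}(\phi_{n})|\le B_{i}^{\ast}+\bar{B}$, where $B_{i}^{\ast}=\sup_{\phi}|\tau(Z_{i}^{\ast},\phi)|$ and $\bar{B}=n^{-1}\sum_{j}B_{j}$; this is precisely what lets me disregard the particular (possibly data-dependent) choice of $\phi_{n}$ without any empirical-process argument in $\phi$.

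First I would control the conditional moments up to order $16$. Since $E^{\ast}[(B_{i}^{\ast})^{q}]=n^{-1}\sum_{j}B_{j}^{q}$ and $\bar{B}=n^{-1}\sum_{j}B_{j}$, the strong law of large numbers together with $E[|B_{i}|^{16}]<\infty$ gives $E^{\ast}[(B_{i}^{\ast})^{q}]\to E[B_{i}^{q}]<\infty$ and $\bar{B}\to E[B_{i}]<\infty$ almost surely for every $q\le 16$. Combining this with the $c_{r}$-inequality $(a+b)^{q}\le 2^{q-1}(a^{q}+b^{q})$ yields $E^{\ast}[|\xi_{i}^{\ast}(\phi_{n})|^{q}]=O_{p}(1)$ for all $q\le 16$; in particular both the conditional variance $\sigma_{n}^{2}:=E^{\ast}[\xi_{1}^{\ast}(\phi_{n})^{2}]$ and the conditional $16$th moment are $O_{p}(1)$. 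Writing $S_{n}^{\ast}=n^{-1/2}\sum_{i}\xi_{i}^{\ast}(\phi_{n})$ and applying Rosenthal's inequality to this i.i.d.\ mean-zero array then gives
\[
E^{\ast}\big[|S_{n}^{\ast}|^{16}\big]
\le C_{16}\Big(\sigma_{n}^{16}+n^{-7}E^{\ast}[|\xi_{1}^{\ast}(\phi_{n})|^{16}]\Big)
=O_{p}(1),
\]
where $C_{16}$ is an absolute constant free of $n$ and the second term is negligible because $n^{1-16/2}=n^{-7}\to 0$.

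Both conclusions then follow by Markov's inequality. For the first, $P^{\ast}(|S_{n}^{\ast}|>n^{1/12-\upsilon})\le E^{\ast}[|S_{n}^{\ast}|^{16}]\,n^{-16(1/12-\upsilon)}=O_{p}(n^{-4/3+16\upsilon})$, and since $-4/3<-1$ this is $o_{p}(n^{-1+16\upsilon})$. For the second, $n^{-1}\sum_{i}\xi_{i}^{\ast}(\phi_{n})=n^{-1/2}S_{n}^{\ast}$, so $P^{\ast}(|n^{-1}\sum_{i}\xi_{i}^{\ast}(\phi_{n})|>n^{1/12-\upsilon})=P^{\ast}(|S_{n}^{\ast}|>n^{7/12-\upsilon})\le E^{\ast}[|S_{n}^{\ast}|^{16}]\,n^{-16(7/12-\upsilon)}=O_{p}(n^{-28/3+16\upsilon})$, which for $\upsilon<\tfrac{1}{16}$ is $o_{p}(n^{-23/3})$ because then $-28/3+16\upsilon<-23/3$.

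The main obstacle is the moment step, not the Markov step: I must confirm that the single envelope condition $E[|B_{i}|^{16}]<\infty$ transfers to the conditional moments $E^{\ast}[|\xi_{i}^{\ast}(\phi_{n})|^{q}]$ \emph{uniformly} over the arbitrary (and potentially data-dependent) sequence $\phi_{n}$, and that the Rosenthal constant can be taken independent of $n$. The uniformity in $\phi$ is exactly what the envelope $B_{i}$ buys, which is why no control of the process $\tau(\cdot,\phi)$ in $\phi$ is needed here. The only remaining care is exponent bookkeeping: taking $q=16$, the largest order the envelope moment permits, is what pins down the stated rates $o_{p}(n^{-1+16\upsilon})$ and $o_{p}(n^{-23/3})$.
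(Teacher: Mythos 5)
Your proof is correct, but it reaches the key moment bound $E^{\ast}\big[\big|\sum_{i}\xi_{i}^{\ast}(\phi_{n})\big|^{16}\big]=O_{p}(n^{8})$ by a genuinely different route than the paper. The paper rewrites the bootstrap sum in multinomial form, $\sum_{i}\xi_{i}^{\ast}(\phi)=\sum_{i}(N_{ni}-1)\tau(Z_{i},\phi)$ with $(N_{n1},\dots,N_{nn})$ multinomial, and then controls the $16$th conditional moment through a careful accounting of mixed cumulants of the multinomial weights (via Guldberg's recurrence and Wishart's results), tracking which index patterns survive and what power of $n^{-1}$ each cumulant contributes. You instead exploit the equivalent representation of the bootstrap as conditionally i.i.d.\ draws from $\widehat{F}$, so that the $\xi_{i}^{\ast}(\phi_{n})$ are i.i.d.\ mean-zero under $P^{\ast}$, and invoke Rosenthal's inequality with an absolute constant; the envelope $B$ transfers the unconditional moment assumption to the conditional moments via the SLLN, uniformly in $\phi$ exactly as you note. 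Your argument is shorter, avoids the cumulant bookkeeping entirely, and in fact delivers an almost-sure $O(1)$ bound on $E^{\ast}[|S_{n}^{\ast}|^{16}]$ rather than merely $O_{p}(1)$; the exponent arithmetic in both Markov steps matches the paper's ($O_{p}(n^{-4/3+16\upsilon})=o_{p}(n^{-1+16\upsilon})$ and $O_{p}(n^{-28/3+16\upsilon})=o_{p}(n^{-23/3})$ for $\upsilon<\tfrac{1}{16}$). What the paper's heavier machinery buys is a template that extends to bootstrap schemes defined directly through general random weights, where the conditionally-i.i.d.\ reduction is unavailable; for the multinomial bootstrap used here, your route is the more economical one.
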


\begin{lemma}
\label{BootstrapThetaEpsilonConv}Under Condition \ref{HH}, we have
\[
P^{\ast}\left[  \max_{0\leq\epsilon\leq\frac{1}{\sqrt{n}}}\left\vert
\widehat{\theta}^{\ast}\left(  \epsilon\right)  -\widehat{\theta}\right\vert
\geq\eta\right]  =o_{p}\left(  n^{-\frac{23}{3}}\right)  .
\]

\end{lemma}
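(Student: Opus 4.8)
The plan is to mirror the proof of the unconditional Lemma \ref{Lem-consistency-of-theta-epsilon}, replacing the population distribution $F$ by the empirical distribution $\widehat F$ everywhere and using the conditional maximal inequality of Lemma \ref{P-rate} in place of the unconditional one. Write $\widehat Q^{\ast}(\theta)=\int\log f(\cdot,\theta)\,d\widehat F^{\ast}$, $\widehat Q(\theta)=\int\log f(\cdot,\theta)\,d\widehat F$ and $Q(\theta)=\int\log f(\cdot,\theta)\,dF$, so that the criterion defining $\widehat\theta^{\ast}(\epsilon)$ is $\widehat Q_{\epsilon}(\theta)=\widehat Q(\theta)+\epsilon\sqrt n\,\big(\widehat Q^{\ast}(\theta)-\widehat Q(\theta)\big)$. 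Since $\epsilon\sqrt n\le 1$ on $[0,n^{-1/2}]$, the entire family of perturbations is dominated by a single quantity,
\[
\max_{0\le\epsilon\le n^{-1/2}}\sup_{\theta}\big|\widehat Q_{\epsilon}(\theta)-\widehat Q(\theta)\big|\le\sup_{\theta}\big|\widehat Q^{\ast}(\theta)-\widehat Q(\theta)\big|,
\]
so the whole problem reduces to a uniform-in-$\theta$ concentration bound for the bootstrap criterion.

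First I would record the standard argmax reduction. By Condition \ref{HH}(iv) together with compactness of $\Theta$ and continuity of $Q$, for each $\eta>0$ there is a fixed $\delta>0$ with $Q(\theta_{0})-Q(\theta)\ge\delta$ whenever $|\theta-\theta_{0}|\ge\eta/2$. Because $\widehat\theta^{\ast}(\epsilon)$ maximizes $\widehat Q_{\epsilon}$, the usual three-term bound gives $Q(\theta_{0})-Q(\widehat\theta^{\ast}(\epsilon))\le 2\sup_{\theta}|\widehat Q_{\epsilon}(\theta)-Q(\theta)|$, hence
\[
\big\{|\widehat\theta^{\ast}(\epsilon)-\theta_{0}|\ge\eta/2\big\}\subseteq\big\{\sup_{\theta}|\widehat Q_{\epsilon}(\theta)-Q(\theta)|\ge\delta/2\big\}.
\]
Splitting $\widehat Q_{\epsilon}-Q=(\widehat Q_{\epsilon}-\widehat Q)+(\widehat Q-Q)$ and using the previous display, the right-hand event is contained in $\{\sup_{\theta}|\widehat Q^{\ast}-\widehat Q|\ge\delta/4\}\cup\{\sup_{\theta}|\widehat Q-Q|\ge\delta/4\}$. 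The second of these is a statement about the original sample only, and its $P$-probability is $o(n^{-23/3})$ by the uniform law of large numbers with rate already used for Lemma \ref{Lem-consistency-of-theta-epsilon} (it follows from Lemma \ref{multi:k} with $m=0$). Since the same argument shows $\widehat\theta$ lies within $\eta/2$ of $\theta_{0}$ off an event of the same order, a triangle inequality converts closeness of $\widehat\theta^{\ast}(\epsilon)$ to $\theta_{0}$ into the asserted closeness to $\widehat\theta$.

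The core step is therefore to bound $P^{\ast}\big(\sup_{\theta}|\widehat Q^{\ast}(\theta)-\widehat Q(\theta)|\ge\delta/4\big)$ by $o_{p}(n^{-23/3})$. Here $\widehat Q^{\ast}(\theta)-\widehat Q(\theta)=n^{-1}\sum_{i}\xi_{i}^{\ast}(\theta)$ with $\xi_{i}^{\ast}(\theta)=\log f(Z_{i}^{\ast},\theta)-n^{-1}\sum_{j}\log f(Z_{j},\theta)$, which is exactly the centered bootstrap average of Lemma \ref{P-rate}, with envelope $M(Z_{i})$ (Condition \ref{HH}(iii), $m=0$) whose sixteenth moment is finite. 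I would obtain uniformity in $\theta$ by covering the compact $\Theta\subset\mathbb{R}$ with $N_{n}=O(n^{1/4})$ balls of radius $r_{n}=n^{-1/4}$: a union bound over the centers contributes $N_{n}$ times the per-center large-deviation rate $O_{p}(n^{-8})$ (from the $>16$ moments, via Lemma \ref{P-rate} or directly a high-moment Markov/Rosenthal bound), which is still $o_{p}(n^{-23/3})$ because $N_{n}=o(n^{1/3})$. The oscillation inside each cell is at most $r_{n}\big(n^{-1}\sum_{i}M(Z_{i}^{\ast})+n^{-1}\sum_{i}M(Z_{i})\big)$, which falls below $\delta/8$ except on an event of $P^{\ast}$-probability $o_{p}(n^{-23/3})$ (a large deviation of $n^{-1}\sum_{i}M(Z_{i}^{\ast})$ above order $n^{1/4}$, again controlled by the same moment bound). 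Conditions \ref{HH}, \ref{M} and \ref{EC} and the Donsker property of Proposition \ref{Donsker} justify the covering, the measurability of the suprema, and the ``$P^{\mathbb{N}}$-a.s.'' passage that turns these unconditional small-probability events into $o_{p}$ factors in the final conditional bound.

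The main obstacle is precisely this uniform-in-$\theta$ bootstrap concentration at the fast polynomial rate: Lemma \ref{P-rate} is pointwise in its index, so the entire difficulty is upgrading it to a supremum over $\theta$ without degrading the $o_{p}(n^{-23/3})$ order. The balance is delicate, since the cover must be fine enough to make the oscillation term negligible yet coarse enough that the polynomial number of centers does not swamp the per-center rate supplied by the envelope's moments; the choices $r_{n}=n^{-1/4}$, $N_{n}=O(n^{1/4})$ are what reconcile the two. Once this uniform bound is established, the reduction above is routine and the stated conclusion follows.
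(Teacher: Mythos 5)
Your proposal follows the same architecture as the paper's proof: an identification/argmax reduction, a split of the relevant criterion deviation into a bootstrap-versus-empirical piece and an empirical-versus-population piece, with the latter handled by the unconditional $o(n^{-23/3})$ uniform law (Lemma \ref{Lem-uniform-convergence}) together with the indicator device that converts a data-measurable small-probability event into an $o_{p}(n^{-23/3})$ conditional probability. (The paper packages the reduction through $|G(\widehat{\theta}^{\ast}(\epsilon))-G(\widehat{\theta})|$ and a four-term decomposition in which the argmax term $\max_{\epsilon}|\widehat{G}_{\epsilon}^{\ast}(\widehat{\theta}^{\ast}(\epsilon))-\widehat{G}_{\epsilon}^{\ast}(\widehat{\theta})|$ is itself bounded by $2\sup_{\theta}|\widehat{G}^{\ast}-\widehat{G}|$; your triangle-inequality detour through $\theta_{0}$ is an equivalent packaging.) Where you genuinely depart is the core uniformity step. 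The paper simply cites Lemma \ref{P-rate} for $P^{\ast}[\sup_{\theta}|\widehat{G}^{\ast}(\theta)-\widehat{G}(\theta)|>\delta/6]=o_{p}(n^{-23/3})$, which rests on the interchange of $\sup_{\phi}$ and $E^{\ast}$ asserted in that lemma's proof; your explicit $n^{-1/4}$-covering with a union bound over $O(n^{1/4})$ centers, each contributing $O_{p}(n^{-8})$ so that the total is $O_{p}(n^{-31/4})=o_{p}(n^{-23/3})$, sidesteps that interchange and is the more defensible route, at the cost of the extra discretization machinery.

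One quantitative slip in your oscillation control: applying the sixteenth-moment Markov bound directly to the uncentered average $n^{-1}\sum_{i}M(Z_{i}^{\ast})$ at threshold $cn^{1/4}$ yields only $E^{\ast}[(n^{-1}\sum_{i}M(Z_{i}^{\ast}))^{16}]/(c^{16}n^{4})=O_{p}(n^{-4})$, which is not $o_{p}(n^{-23/3})$. You need to center first: $P^{\ast}[|n^{-1}\sum_{i}(M(Z_{i}^{\ast})-\bar{M})|>cn^{1/4}]\leq O_{p}(n^{8})/(cn^{5/4})^{16}=O_{p}(n^{-12})$, while the residual event $\{2\bar{M}>cn^{1/4}\}$ is measurable with respect to the original data, so its $P^{\ast}$-probability is an indicator whose unconditional probability vanishes and the indicator argument makes it $o_{p}(n^{-23/3})$. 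With that repair the argument closes.
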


\begin{lemma}
\label{multi:k-boot}Assume that Condition \ref{HH} is satisfied. Let $K\left(
\cdot;\theta\left(  \epsilon\right)  \right)  $ be defined as in Lemma
\ref{multi:k}. Then, for any $\eta>0$, we have
\[
P^{\ast}\left[  \max_{0\leq\epsilon\leq\frac{1}{\sqrt{n}}}\left\vert \int
K\left(  z;\theta^{\ast}\left(  \epsilon\right)  \right)  d\widehat{F}%
_{\epsilon}\left(  z\right)  -E\left[  K\left(  z;\theta_{0}\right)  \right]
\right\vert >\eta\right]  =o_{p}\left(  n^{-\frac{23}{3}}\right)  .
\]
Also,
\[
P^{\ast}\left[  \max_{0\leq\epsilon\leq\frac{1}{\sqrt{n}}}\left\vert \int
K\left(  \cdot;\widehat{\theta}^{\ast}\left(  \epsilon\right)  \right)
d\widehat{\Delta}\right\vert >Cn^{\frac{1}{12}-\upsilon}\right]  =o_{p}\left(
\max\left(  n^{-\frac{23}{3}},n^{-1+16\upsilon}\right)  \right)
\]
for some constant $C>0$ and for every $\upsilon$ such that $\upsilon<\frac
{1}{16}$.
\end{lemma}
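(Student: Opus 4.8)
The plan is to mirror the proof of Lemma \ref{multi:k}, replacing the ordinary empirical process $\Delta=\sqrt{n}(\widehat{F}-F)$ and its deterministic reference point $\theta_{0}$ with the bootstrap empirical process $\widehat{\Delta}=\sqrt{n}(\widehat{F}^{\ast}-\widehat{F})$ and reference point $\widehat{\theta}$, and invoking the conditional (bootstrap) analogues of the probabilistic tools: Lemma \ref{P-rate} in place of the direct moment bound on bootstrapped sums, and Proposition \ref{Donsker} in place of the ordinary Donsker theorem. Throughout I treat $\theta^{\ast}(\epsilon)$ and $\widehat{\theta}^{\ast}(\epsilon)$ as the same bootstrap solution path, and I record that $d\widehat{F}_{\epsilon}=d\widehat{F}+\epsilon\,d\widehat{\Delta}$ with $\epsilon\in[0,n^{-1/2}]$.

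For the first display I would decompose
\[
\int K(z;\widehat{\theta}^{\ast}(\epsilon))\,d\widehat{F}_{\epsilon}(z)-E[K(z;\theta_{0})]=\mathrm{(I)}+\mathrm{(II)},
\]
where $\mathrm{(I)}=\int[K(z;\widehat{\theta}^{\ast}(\epsilon))-K(z;\theta_{0})]\,d\widehat{F}_{\epsilon}(z)$ and $\mathrm{(II)}=\int K(z;\theta_{0})\,d\widehat{F}_{\epsilon}(z)-E[K(z;\theta_{0})]$. Since $K$ is the $m$-th derivative of $\log f$ with $m\leq 6$, Condition \ref{HH} bounds the next derivative by $M(z)$, so a mean-value expansion gives $|\mathrm{(I)}|\leq \max_{\epsilon}|\widehat{\theta}^{\ast}(\epsilon)-\theta_{0}|\cdot\int M(z)\,d\widehat{F}_{\epsilon}(z)$; Lemma \ref{BootstrapThetaEpsilonConv} makes the first factor $o_{p}(n^{-23/3})$-small uniformly in $\epsilon$ while $\int M\,d\widehat{F}_{\epsilon}=O_{p}(1)$. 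Writing $d\widehat{F}_{\epsilon}=d\widehat{F}+\epsilon\,d\widehat{\Delta}$ splits $\mathrm{(II)}$ into $n^{-1}\sum_{i}K(Z_{i};\theta_{0})-E[K]$ plus $\epsilon\int K(z;\theta_{0})\,d\widehat{\Delta}$; the first piece has the required tail rate by the moment bound $E[M^{Q}]<\infty$, $Q>16$, of Condition \ref{HH} via a high-moment Markov inequality, and the second is negligible since $\epsilon\leq n^{-1/2}$ and Lemma \ref{P-rate} (applied with $\tau(\cdot)=K(\cdot;\theta_{0})$, envelope $M$) controls the bootstrap integral.

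For the second display I would first use Lemma \ref{BootstrapThetaEpsilonConv} to restrict, with $P^{\ast}$-probability exceeding $1-o_{p}(n^{-23/3})$, to the event on which $\max_{\epsilon}|\widehat{\theta}^{\ast}(\epsilon)-\theta_{0}|<\delta$, so that
\[
\max_{0\leq\epsilon\leq 1/\sqrt{n}}\left|\int K(\cdot;\widehat{\theta}^{\ast}(\epsilon))\,d\widehat{\Delta}\right|\leq\left|\int K(\cdot;\theta_{0})\,d\widehat{\Delta}\right|+\sup_{|\theta-\theta_{0}|<\delta}\left|\int[K(\cdot;\theta)-K(\cdot;\theta_{0})]\,d\widehat{\Delta}\right|.
\]
Lemma \ref{P-rate}, applied with the fixed index $\theta_{0}$ and envelope $M$, bounds the first term's tail by $o_{p}(n^{-1+16\upsilon})$, which is the leading rate. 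For the fluctuation term I would expand $K(\cdot;\theta)-K(\cdot;\theta_{0})=(\theta-\theta_{0})K^{\theta}(\cdot;\bar{\theta})$, where $K^{\theta}$ is again a derivative of order $\leq 7$ and hence lies in the class $\mathfrak{F}$ of Condition \ref{EC}; the conditional stochastic equicontinuity established in the proof of Proposition \ref{Donsker} (through Gine and Zinn (1990, Theorem 2.4)) then forces $\sup_{|\theta-\theta_{0}|<\delta}|\int[K(\cdot;\theta)-K(\cdot;\theta_{0})]\,d\widehat{\Delta}|$ to be $o_{p}(1)$ conditionally, so that a suitably small $\delta$ makes this term of strictly smaller order than $n^{1/12-\upsilon}$ with the stated probability. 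Combining the two terms yields the bound $o_{p}(\max(n^{-23/3},n^{-1+16\upsilon}))$.

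The main obstacle will be the uniform-over-$\epsilon$ (equivalently, uniform over $\theta$ in a shrinking ball) control of the bootstrap empirical process at the required polynomial tail rate while the integrand is evaluated at the random, $\epsilon$-dependent point $\widehat{\theta}^{\ast}(\epsilon)$. This is precisely where the conditional stochastic equicontinuity of Proposition \ref{Donsker} must be interlocked with the tail estimate of Lemma \ref{P-rate}, and care is needed to ensure that the various ``$P^{\mathbb{N}}$-a.s.'' qualifiers propagate correctly so that the concluding statement is genuinely $o_{p}$, in the outer probability, of $\max(n^{-23/3},n^{-1+16\upsilon})$.
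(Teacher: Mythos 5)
Your treatment of the first display is essentially the paper's argument: the same decomposition into a mean-value term controlled by Lemma \ref{BootstrapThetaEpsilonConv} and an empirical-process term controlled by Lemma \ref{P-rate} together with the high-moment Markov bound (Lemma \ref{modifiedHH:Lem1}), with the only cosmetic difference that you center at $\theta_{0}$ where the paper centers at $\widehat{\theta}$ and then converts the unconditional tail bounds into $P^{\ast}$-statements via the indicator argument. That part is fine.

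For the second display there is a genuine gap. You split $\int K(\cdot;\widehat{\theta}^{\ast}(\epsilon))\,d\widehat{\Delta}$ into the fixed-point term $\int K(\cdot;\theta_{0})\,d\widehat{\Delta}$ plus the increment $\sup_{|\theta-\theta_{0}|<\delta}\bigl|\int[K(\cdot;\theta)-K(\cdot;\theta_{0})]\,d\widehat{\Delta}\bigr|$, and you propose to kill the increment with the conditional stochastic equicontinuity from the proof of Proposition \ref{Donsker}. But stochastic equicontinuity is a qualitative statement: it gives $\lim_{\delta\downarrow 0}\limsup_{n}P^{\ast}(\sup_{|\theta-\theta_{0}|<\delta}|\cdot|>\eta)=0$, i.e.\ the probability merely tends to zero. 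The lemma requires the $P^{\ast}$-probability itself to be $o_{p}\bigl(\max(n^{-23/3},n^{-1+16\upsilon})\bigr)$, a specific polynomial decay, and no amount of shrinking $\delta$ extracts that rate from an equicontinuity statement. The decomposition is also unnecessary: Lemma \ref{P-rate} is proved by pulling the supremum over $\phi\in\Phi$ inside the bootstrap expectation (the multinomial weights do not depend on $\phi$) and using the envelope $M$ with $E[M^{16}]<\infty$, so it applies directly to $\sup_{|\theta-\widehat{\theta}|<\delta}\bigl|\int K(\cdot;\theta)\,d\widehat{\Delta}\bigr|$ and yields the $o_{p}(n^{-1+16\upsilon})$ tail in one step; adding $P^{\ast}[\max_{\epsilon}|\widehat{\theta}^{\ast}(\epsilon)-\widehat{\theta}|\geq\delta]=o_{p}(n^{-23/3})$ from Lemma \ref{BootstrapThetaEpsilonConv} then finishes the proof. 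Replace the equicontinuity step with this uniform moment bound and your argument closes.
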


\begin{lemma}
\label{boot:bound:theta}Suppose that Condition \ref{HH} holds. Then, we have
\begin{align*}
P^{\ast}\left[  \max_{0\leq\epsilon\leq\frac{1}{\sqrt{n}}}\left\vert
\widehat{\theta}^{\epsilon}\left(  \epsilon\right)  \right\vert >Cn^{\frac
{1}{12}-\upsilon}\right]   &  =o_{p}\left(  \max n^{-\frac{23}{3}%
},n^{-1+16\upsilon}\right) \\
P^{\ast}\left[  \max_{0\leq\epsilon\leq\frac{1}{\sqrt{n}}}\left\vert
\widehat{\theta}^{\epsilon\epsilon}\left(  \epsilon\right)  \right\vert
>C\left(  n^{\frac{1}{12}-\upsilon}\right)  ^{2}\right]   &  =o_{p}\left(
\max n^{-\frac{23}{3}},n^{-1+16\upsilon}\right) \\
&  \vdots\\
P^{\ast}\left[  \max_{0\leq\epsilon\leq\frac{1}{\sqrt{n}}}\left\vert
\widehat{\theta}^{\epsilon\epsilon\epsilon\epsilon\epsilon\epsilon}\left(
\epsilon\right)  \right\vert >C\left(  n^{\frac{1}{12}-\upsilon}\right)
^{6}\right]   &  =o_{p}\left(  \max n^{-\frac{23}{3}},n^{-1+16\upsilon
}\right)
\end{align*}
for some constant $C>0$ and for every $\upsilon$ such that $\upsilon<\frac
{1}{16}$.
\end{lemma}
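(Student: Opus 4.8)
The last fully-stated result in the excerpt is Lemma \ref{boot:bound:theta}, which bounds the bootstrap probability that the $k$-th derivatives $\widehat{\theta}^{(k)}(\epsilon)$ (for $k=1,\dots,6$) exceed $C(n^{1/12-\upsilon})^k$ uniformly over $\epsilon\in[0,n^{-1/2}]$.

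Let me think about how to prove this.

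The structure here is: we have bootstrap derivatives $\widehat{\theta}^{\epsilon}(\epsilon)$, $\widehat{\theta}^{\epsilon\epsilon}(\epsilon)$, etc. These are defined implicitly through differentiating the bootstrap first-order condition. The explicit formulas (from Proposition \ref{ApproxThetaHatStar} and the analog of equations \eqref{theta-e1}–\eqref{theta-e5}) express each derivative as a polynomial in:
- Integrals $\int \ell^{(j)}(z,\widehat{\theta}^*(\epsilon)) d\widehat{\Delta}(z)$ (bootstrap empirical process applied to derivatives of the log-likelihood)
- Averages $\int \ell^{(j)}(z,\widehat{\theta}^*(\epsilon))d\widehat{F}_\epsilon(z)$ (which converge to expectations, in particular to $\mathcal{I}$-type quantities)
- Lower-order derivatives $\widehat{\theta}^{(l)}$ with $l<k$.

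The denominators involve $E_\epsilon[\ell^\theta]$ type terms, which converge to $-\mathcal{I} \neq 0$, so they're bounded away from zero with high probability.

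So the proof is by induction on $k$. The base case $k=1$: from the analog of \eqref{theta-e1},
$$\widehat{\theta}^{\epsilon}(\epsilon) = \frac{1}{\widehat{\mathcal{I}}_\epsilon} \int \ell(z,\widehat{\theta}^*(\epsilon)) d\widehat{\Delta}(z),$$
and Lemma \ref{multi:k-boot} directly bounds the numerator by $Cn^{1/12-\upsilon}$ with the stated probability, while the denominator is bounded away from zero (again by Lemma \ref{multi:k-boot}'s first statement giving convergence of integrals against $d\widehat{F}_\epsilon$). For the inductive step, $\widehat{\theta}^{(k)}$ is a polynomial in lower-order derivatives and the two types of integrals; by the induction hypothesis each $\widehat{\theta}^{(l)}$ is $O_p((n^{1/12-\upsilon})^l)$ off an event of probability $o_p(\max(n^{-23/3}, n^{-1+16\upsilon}))$, and each product of $j$ derivatives whose orders sum to $k$ is bounded by $(n^{1/12-\upsilon})^k$; multiplying by a bounded $d\widehat{\Delta}$-integral (order $n^{1/12-\upsilon}$) does not increase the exponent because — and this is the key accounting — the structure of the recursion in \eqref{multi:alt-h1}–\eqref{multi:alt-h6} ensures each term has total "degree" exactly $k$ in the rate $n^{1/12-\upsilon}$, where each $\widehat{\theta}^\epsilon$ factor contributes degree $1$ and each $d\widehat{\Delta}$ integral contributes degree $1$ while consuming one unit from the derivative count.

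The plan is therefore: (i) write $\widehat{\theta}^{(k)}(\epsilon)$ via the recursive formulas obtained by differentiating the bootstrap FOC $0=\int \ell(z,\widehat{\theta}^*(\epsilon))d\widehat{F}_\epsilon(z)$ $k$ times and solving for the top derivative, exactly as in the population case of Proposition \ref{ApproxThetaHat-strong}; (ii) bound each building block using Lemma \ref{multi:k-boot} — the $d\widehat{\Delta}$-integrals by $Cn^{1/12-\upsilon}$ and the $d\widehat{F}_\epsilon$-integrals (including the leading $\widehat{\mathcal{I}}_\epsilon$ in the denominator) by their population limits plus $o_p(1)$, using Lemma \ref{BootstrapThetaEpsilonConv} for the uniform consistency $\widehat{\theta}^*(\epsilon)\to\theta_0$ that Lemma \ref{multi:k-boot} requires; (iii) combine via a union bound over the finitely many ($k\le 6$) terms and over the polynomial summands, using that $o_p$ of a finite sum is $o_p$ of the max of the rates. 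Since the events are indexed by $\epsilon$, I take the suprema over $\epsilon\in[0,n^{-1/2}]$ inside each application of Lemma \ref{multi:k-boot}, which is already stated uniformly in $\epsilon$.

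The main obstacle is the inductive bookkeeping that the degree in $n^{1/12-\upsilon}$ of every monomial appearing in the formula for $\widehat{\theta}^{(k)}$ is exactly $k$, together with controlling the denominator uniformly: one must verify that $\widehat{\mathcal{I}}_\epsilon = -\int \ell^\theta(z,\widehat{\theta}^*(\epsilon))d\widehat{F}_\epsilon(z)$ stays bounded away from zero uniformly in $\epsilon$ off a negligible event, so that dividing by it is legitimate and does not degrade the rate. This follows from the first display of Lemma \ref{multi:k-boot} (the integral converges to $\mathcal{I}>0$) combined with Lemma \ref{BootstrapThetaEpsilonConv}, but it is the step requiring the most care because it must hold simultaneously for all $\epsilon$ and at every level of the induction. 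The probability order $o_p(\max(n^{-23/3}, n^{-1+16\upsilon}))$ is then inherited term-by-term: the $d\widehat{\Delta}$ bounds contribute the $n^{-1+16\upsilon}$ rate and the uniform-consistency / ULLN bounds contribute the $n^{-23/3}$ rate, and the maximum of finitely many such rates is the claimed order.
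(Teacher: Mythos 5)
Your proposal matches the paper's own argument: the paper writes $\widehat{\theta}^{\epsilon}(\epsilon)=-\bar{M}_{\epsilon}^{-1}\int\ell(\cdot,\epsilon)\,d\widehat{\Delta}$, bounds the numerator and the deviation of the denominator $\bar{M}_{\epsilon}$ from $E[\ell^{\theta}(z,\theta_{0})]$ uniformly in $\epsilon$ via Lemma \ref{multi:k-boot}, and then handles the higher derivatives "similarly" through the recursive formulas — exactly the induction with degree bookkeeping you describe. Your write-up is, if anything, more explicit about the inductive step than the paper's.
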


\begin{lemma}
\label{sum-bound}Assume that $W_{i}$ are iid with $E\left[  W_{i}\right]  =0$
and $E\left[  W_{i}^{2k}\right]  <\infty.$ Then,
\[
E\left[  \left(
%TCIMACRO{\tsum \nolimits_{i=1}^{n}}%
%BeginExpansion
{\textstyle\sum\nolimits_{i=1}^{n}}
%EndExpansion
W_{i}\right)  ^{2k}\right]  =C(k)n^{k}+o(n^{k})
\]
for some constant $C(k)$.
\end{lemma}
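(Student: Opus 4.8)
The plan is to expand the $2k$-th power of the sum and organize the resulting expectations according to the pattern of coincidences among the summation indices. Writing $S_{n}=\sum_{i=1}^{n}W_{i}$, I would begin from
\[
E\left[  S_{n}^{2k}\right]  =\sum_{i_{1}=1}^{n}\cdots\sum_{i_{2k}=1}^{n}E\left[
W_{i_{1}}\cdots W_{i_{2k}}\right]  .
\]
For each $2k$-tuple $\left(  i_{1},\dots,i_{2k}\right)  $, let $\pi$ be the partition of the position set $\left\{  1,\dots,2k\right\}  $ induced by equality of indices, so that two positions lie in the same block of $\pi$ exactly when they carry the same index value. By the i.i.d.\ assumption the expectation factorizes across blocks as $\prod_{B\in\pi}E\left[  W^{\left\vert B\right\vert }\right]  $, where $\left\vert B\right\vert $ denotes the size of the block $B$.

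Next I would use the mean-zero condition to discard most partitions. Since $E\left[  W_{i}\right]  =0$, any partition possessing a singleton block contributes a factor $E\left[  W\right]  =0$ and hence drops out. Thus only partitions all of whose blocks have size at least $2$ survive. If such a partition has $m$ blocks then $2k=\sum_{B\in\pi}\left\vert B\right\vert \geq 2m$, so $m\leq k$, with equality precisely for the partitions into $k$ pairs. Every moment appearing, $E\left[  W^{\left\vert B\right\vert }\right]  $ with $2\leq\left\vert B\right\vert \leq 2k$, is finite by Lyapunov's (Jensen's) inequality applied to the hypothesis $E\left[  W_{i}^{2k}\right]  <\infty$, so each surviving term is well defined and bounded by a constant that does not depend on $n$.

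Finally I would count index assignments and collect orders. For a fixed admissible partition with $m$ blocks, the distinct index values assigned to the blocks range over $m$-tuples of distinct elements of $\left\{  1,\dots,n\right\}  $, of which there are the falling factorial $\left(  n\right)  _{m}=n\left(  n-1\right)  \cdots\left(  n-m+1\right)  =n^{m}+O\left(  n^{m-1}\right)  $. Grouping the sum by partition type therefore gives
\[
E\left[  S_{n}^{2k}\right]  =\sum_{\pi:\,\left\vert B\right\vert \geq 2\ \forall B}\left(  n\right)  _{m\left(  \pi\right)  }\prod_{B\in\pi}E\left[  W^{\left\vert B\right\vert }\right]  ,
\]
where the number of partition types is finite and independent of $n$. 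The pair partitions ($m\left(  \pi\right)  =k$) contribute at order $n^{k}$, while every admissible partition with $m\left(  \pi\right)  \leq k-1$ gives $O\left(  n^{k-1}\right)  =o\left(  n^{k}\right)  $. Collecting the leading contributions yields $E\left[  S_{n}^{2k}\right]  =C\left(  k\right)  n^{k}+o\left(  n^{k}\right)  $, with $C\left(  k\right)  =\left(  2k-1\right)  !!\left(  E\left[  W^{2}\right]  \right)  ^{k}$ the number of pairings of $2k$ positions times $\left(  E\left[  W^{2}\right]  \right)  ^{k}$; since the lemma only asserts existence of such a constant, identifying $C\left(  k\right)  $ explicitly is optional. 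The computation is entirely routine once the partition bookkeeping is in place, and there is no substantive obstacle: the only point requiring a little care is confirming that each partition with fewer than $k$ blocks is genuinely lower order, which is immediate from $\left(  n\right)  _{m}=O\left(  n^{m}\right)  $ together with $m\leq k-1$.
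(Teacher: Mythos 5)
Your proposal is correct and follows essentially the same route as the paper: the paper also expands $E[(\sum_i W_i)^{2k}]$ into a sum over exponent patterns $(\alpha_1,\dots,\alpha_j)$ with $j$ distinct indices, uses the mean-zero condition to kill every term with $j>k$ (since such a term must contain an exponent equal to one), and bounds the surviving contributions by $O(n^j)$ with $j\leq k$. Your partition bookkeeping and the explicit identification of $C(k)=(2k-1)!!\left(E[W^2]\right)^k$ are just a more detailed rendering of the same argument.
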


\begin{proof}
By adopting an argument in the proof of Lemma 5.1 in Lahiri (1992), we have
\begin{equation}
E\left[  \left(
%TCIMACRO{\tsum \nolimits_{i=1}^{n}}%
%BeginExpansion
{\textstyle\sum\nolimits_{i=1}^{n}}
%EndExpansion
W_{i}\right)  ^{2k}\right]  =%
%TCIMACRO{\tsum _{j=1}^{2k}}%
%BeginExpansion
{\textstyle\sum_{j=1}^{2k}}
%EndExpansion%
%TCIMACRO{\tsum _{\alpha}}%
%BeginExpansion
{\textstyle\sum_{\alpha}}
%EndExpansion
C(\alpha_{1},...,\alpha_{j})%
%TCIMACRO{\tsum _{\emph{I}}}%
%BeginExpansion
{\textstyle\sum_{\emph{I}}}
%EndExpansion
E\left[
%TCIMACRO{\tprod _{s=1}^{j}}%
%BeginExpansion
{\textstyle\prod_{s=1}^{j}}
%EndExpansion
W_{i_{s}}^{\alpha_{s}}\right]  , \label{MomentBound}%
\end{equation}
where for each fixed $j\in\left\{  1,...,2k\right\}  ,$ $\sum_{\alpha}$
extends over all $j$-tuples of positive integers $\left(  \alpha
_{1},...,\alpha_{j}\right)  $ such that $\alpha_{1}+...+\alpha_{j}=2k$ and
$\sum_{\emph{I}}$ extends over all ordered $j$-tuples $\left(  i_{1}%
,...,i_{j}\right)  $ of integers such that $1\leq i_{j}\leq n.$ Also,
$C(\alpha_{1},...,\alpha_{j})$ stands for a bounded constant. Note, that if
$j>k$ then at least one of the indices $\alpha_{j}=1.$ By independence and the
fact that $EW_{i}=0$ it follows that $E%
%TCIMACRO{\tprod _{s=1}^{j}}%
%BeginExpansion
{\textstyle\prod_{s=1}^{j}}
%EndExpansion
W_{i_{s}}^{\alpha_{s}}=0$ whenever $j>k.$ This shows that $E\left(
%TCIMACRO{\tsum \nolimits_{i=1}^{n}}%
%BeginExpansion
{\textstyle\sum\nolimits_{i=1}^{n}}
%EndExpansion
W_{i}\right)  ^{2k}=C(k)n^{k}+o(n^{k})$ for some constant $C(k).$
\end{proof}

\begin{lemma}
\label{modifiedHH:Lem1}Suppose that $\left\{  \xi_{i},i=1,2,\ldots\right\}  $
\ is a sequence of zero mean i.i.d. random variables. We also assume that
$E\left[  \left|  \xi_{i}\right|  ^{16}\right]  <\infty$. We then have
\[
\Pr\left[  \left|  \frac{1}{n}\sum_{i=1}^{n}\xi_{i}\right|  >\eta\right]
=O\left(  n^{-8}\right)
\]
for every $\eta>0$.
\end{lemma}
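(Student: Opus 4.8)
The plan is to obtain the rate from a high-order Markov inequality, using the moment bound already supplied by Lemma \ref{sum-bound}. Since $E[\xi_i]=0$ and $E[|\xi_i|^{16}]<\infty$, the sequence $\{\xi_i\}$ satisfies the hypotheses of Lemma \ref{sum-bound} with $W_i=\xi_i$ and $k=8$ (note $\xi_i^{16}=|\xi_i|^{16}$, so the required $E[W_i^{2k}]<\infty$ is exactly the maintained assumption), so the combinatorial count showing that only the pairings survive has already been carried out there. The remaining work is just to assemble the tail bound.

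First I would rewrite the event in terms of the unscaled sum, $\{|n^{-1}\sum_{i=1}^n \xi_i|>\eta\}=\{|\sum_{i=1}^n \xi_i|>n\eta\}$. Because $16$ is even, $\big(\sum_{i=1}^n \xi_i\big)^{16}=\big|\sum_{i=1}^n \xi_i\big|^{16}\ge 0$, so Markov's inequality applied to the $16$th power gives
\[
\Pr\!\left[\Big|\tfrac{1}{n}\sum_{i=1}^n \xi_i\Big|>\eta\right]\le \frac{E\big[\big(\sum_{i=1}^n \xi_i\big)^{16}\big]}{(n\eta)^{16}}.
\]
Next I would invoke Lemma \ref{sum-bound} with $k=8$, which yields $E\big[\big(\sum_{i=1}^n \xi_i\big)^{16}\big]=C(8)\,n^{8}+o(n^{8})=O(n^{8})$. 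Substituting this into the bound above gives
\[
\Pr\!\left[\Big|\tfrac{1}{n}\sum_{i=1}^n \xi_i\Big|>\eta\right]\le \frac{O(n^{8})}{\eta^{16}\,n^{16}}=O(n^{-8}),
\]
which is the claimed rate.

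There is no genuine obstacle here: the only nontrivial input is the moment estimate of Lemma \ref{sum-bound}, whose proof exploits the fact that independence together with $E[\xi_i]=0$ forces every term in the expansion of $\big(\sum_i \xi_i\big)^{16}$ that contains a singleton index to vanish, leaving at most $k=8$ distinct indices and hence growth of order $n^{8}$ rather than $n^{16}$. Given that lemma, the present statement reduces to the corresponding Markov inequality. The only points I would take care to state explicitly are that the finite $16$th moment is precisely what licenses the application of Lemma \ref{sum-bound} at $k=8$, and that $\eta$ enters only through the harmless constant factor $\eta^{-16}$, so the bound $O(n^{-8})$ holds for every fixed $\eta>0$.
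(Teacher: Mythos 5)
Your proposal is correct and follows exactly the paper's argument: apply Markov's inequality to the sixteenth power of the sum and invoke Lemma \ref{sum-bound} with $k=8$ to obtain $E\bigl[\bigl(\sum_{i=1}^{n}\xi_{i}\bigr)^{16}\bigr]=O(n^{8})$, giving the stated $O(n^{-8})$ rate. No gaps.
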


\begin{proof}
Using Lemma \ref{sum-bound}, we obtain
\[
E\left[  \left|  \sum_{i=1}^{n}\xi_{i}\right|  ^{16}\right]  \leq
Cn^{8}+o(n^{8}),
\]
where $C>0$ is a constant. Therefore, we have
\[
n^{8}\Pr\left[  \left|  \frac{1}{n}\sum_{i=1}^{n}\xi_{i}\right|  >\eta\right]
\leq O(n^{8}\frac{Cn^{8}}{n^{16}\eta^{16}})=O\left(  1\right)  .
\]

\end{proof}

\begin{lemma}
\label{modifiedHH:Lem3}Suppose that, for each $i$, $\left\{  \xi_{i}\left(
\phi\right)  ,i=1,2,\ldots\right\}  $ \ is a sequence of zero mean i.i.d.
random variables indexed by some parameter $\phi\in\Phi$. We also assume that
$\sup_{\phi\in\Phi}\left|  \xi_{i}\left(  \phi\right)  \right|  \leq B_{i}$
for some sequence of random variables $B_{i}$ that is i.i.d. Finally, we
assume that $E\left[  \left|  B_{i}\right|  ^{16}\right]  <\infty$. We then
have
\[
\Pr\left[  \left|  \frac{1}{\sqrt{n}}\sum_{i=1}^{n}\xi_{i}\left(  \phi
_{n}\right)  \right|  >n^{\frac{1}{12}-\upsilon}\right]  =o\left(
n^{-1+16v}\right)
\]
for every $\upsilon$ such that $\upsilon<\frac{1}{16}$. For $\upsilon<\frac
{1}{48}$ we have
\[
\Pr\left[  \left|  \frac{1}{\sqrt{n}}\sum_{i=1}^{n}\xi_{i}\left(  \phi
_{n}\right)  \right|  >n^{\frac{1}{12}-\upsilon}\right]  =o(n^{-1}).
\]
Here, $\phi_{n}$ is an arbitrary sequence in $\Phi$.
\end{lemma}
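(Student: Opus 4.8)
The plan is to reduce the claim to a uniform moment bound on the unnormalized sum $S_{n}\equiv\sum_{i=1}^{n}\xi_{i}(\phi_{n})$ followed by Markov's inequality, paralleling the argument for Lemma \ref{modifiedHH:Lem1} but with a growing threshold and an envelope-based uniformity. The central observation is that, because $\phi_{n}$ is a \textbf{deterministic} sequence, for each fixed $n$ the variables $\xi_{1}(\phi_{n}),\dots,\xi_{n}(\phi_{n})$ are genuinely i.i.d.\ with mean zero, and the envelope condition gives $E[|\xi_{i}(\phi_{n})|^{16}]\leq E[|B_{i}|^{16}]<\infty$ uniformly in $n$.

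First I would apply Lemma \ref{sum-bound} with $k=8$ to the i.i.d.\ zero-mean array $\{\xi_{i}(\phi_{n})\}_{i=1}^{n}$, yielding $E[S_{n}^{16}]=C(8)n^{8}+o(n^{8})$. The key point is uniformity in $\phi_{n}$: inspecting the expansion (\ref{MomentBound}) used to prove Lemma \ref{sum-bound}, every surviving term is a product of expectations $E[\xi_{i}(\phi_{n})^{\alpha_{s}}]$ with $2\leq\alpha_{s}\leq 16$ and $\sum_{s}\alpha_{s}=16$, each bounded by $\max(1,E[|B_{i}|^{16}])$ via $|\xi_{i}(\phi_{n})|\leq B_{i}$ and Lyapunov's inequality. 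Hence the constant $C(8)$ and the remainder are controlled by $E[|B_{i}|^{16}]$ alone and do not depend on the particular value of $\phi_{n}$, so $E[S_{n}^{16}]=O(n^{8})$ with a constant uniform over the sequence $\{\phi_{n}\}$.

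Then I would convert the normalized statement into one about $S_{n}$ and apply Markov's inequality to the sixteenth moment. Since $\big|\tfrac{1}{\sqrt{n}}S_{n}\big|>n^{1/12-\upsilon}$ is equivalent to $|S_{n}|>n^{7/12-\upsilon}$,
\[
\Pr\!\left[\left|\frac{1}{\sqrt{n}}\sum_{i=1}^{n}\xi_{i}(\phi_{n})\right|>n^{\frac{1}{12}-\upsilon}\right]\leq\frac{E[S_{n}^{16}]}{n^{16(7/12-\upsilon)}}=\frac{O(n^{8})}{n^{28/3-16\upsilon}}=O\!\left(n^{-4/3+16\upsilon}\right).
\]
Because $n^{-4/3+16\upsilon}=n^{-1/3}\,n^{-1+16\upsilon}$, this is $o(n^{-1+16\upsilon})$, which gives the first claim for all $\upsilon<1/16$. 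For the second claim, the hypothesis $\upsilon<1/48$ is exactly what makes $-4/3+16\upsilon<-1$, so the same bound reads $O(n^{-4/3+16\upsilon})=o(n^{-1})$.

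I expect the only genuine obstacle to be step one, namely verifying that the implicit constant in Lemma \ref{sum-bound} can be taken independent of $\phi_{n}$. This is handled by the envelope $B_{i}$ together with Lyapunov's inequality, which dominate all moments of order at most $16$ appearing in (\ref{MomentBound}); everything afterward is a mechanical exponent computation, so no further probabilistic input is needed.
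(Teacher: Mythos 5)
Your proof is correct and follows essentially the same route as the paper: Markov's inequality applied to the sixteenth moment, the bound $E[S_n^{16}]=O(n^8)$ from Lemma \ref{sum-bound}, and the exponent computation $-4/3+16\upsilon$, which is $o(n^{-1+16\upsilon})$ in general and $o(n^{-1})$ when $\upsilon<1/48$. Your pointwise treatment at the deterministic value $\phi_n$, with uniformity of the constant secured by the envelope $B_i$, is if anything slightly cleaner than the paper's version, which passes through $E[\sup_\phi(\cdot)^{16}]$ and an interchange of supremum and expectation.
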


\begin{proof}
By Markov's inequality, we have
\begin{align*}
\Pr\left[  \sup_{\phi\in\Phi}\left|  \frac{1}{\sqrt{n}}\sum_{i=1}^{n}\xi
_{i}\left(  \phi\right)  \right|  >n^{\frac{1}{12}-\upsilon}\right]   &
=\Pr\left[  \sup_{\phi\in\Phi}\left|  \sum_{i=1}^{n}\xi_{i}\left(  \phi
_{n}\right)  \right|  >n^{\frac{7}{12}-\upsilon}\right] \\
&  \leq\frac{E\left[  \sup_{\phi\in\Phi}\left(  \sum_{i=1}^{n}\xi_{i}\left(
\phi\right)  \right)  ^{16}\right]  }{n^{\frac{28}{3}-16\upsilon}\eta^{16}}\\
&  =\frac{\sup_{\phi\in\Phi}E\left[  \left(  \sum_{i=1}^{n}\xi_{i}\left(
\phi\right)  \right)  ^{16}\right]  }{n^{\frac{28}{3}-16\upsilon}\eta^{16}},
\end{align*}
where the last equality is based on dominated convergence. By Lemma
\ref{sum-bound}, we have
\[
E\left[  \left(  \sum_{i=1}^{n}\xi_{i}\left(  \phi\right)  \right)
^{16}\right]  \leq Cn^{8},
\]
where $C>0$ is a constant. Therefore, we have
\[
\Pr\left[  \sup_{\phi\in\Phi}\left|  \frac{1}{\sqrt{n}}\sum_{i=1}^{n}\xi
_{i}\left(  \phi\right)  \right|  >n^{\frac{1}{12}-\upsilon}\right]  \leq
\frac{Cn^{8}}{n^{28/3-16\upsilon}\eta^{16}}=O\left(  n^{-4/3+16\upsilon
}\right)  .
\]

\end{proof}

\begin{lemma}
\label{Lem-uniform-convergence}Let $\widehat{G}\left(  \theta\right)
\equiv\frac{1}{n}\sum_{i=1}^{n}\log f\left(  Z_{i},\theta\right)  $. Suppose
that Condition \ref{HH} holds. We then have for all $\eta>0$ that
\[
\Pr\left[  \sup_{\theta}\left|  \widehat{G}\left(  \theta\right)  -G\left(
\theta\right)  \right|  \geq\eta\right]  =o\left(  n^{-\frac{23}{3}}\right)
\]

\end{lemma}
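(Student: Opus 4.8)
The plan is to prove this sharp-rate uniform law by combining a discretization of the compact parameter space with a high-moment Markov bound applied pointwise. Two consequences of Condition \ref{HH}(iii) are the workhorses: taking $m=0$ gives $|\log f(z,\theta)| \le M(z)$, so that for every fixed $\theta$ the centered summand $\log f(Z_i,\theta) - G(\theta)$, where $G(\theta) = E[\log f(Z_i,\theta)]$, has a finite $Q$-th moment with $Q > 16$, uniformly in $\theta$; taking $m=1$ gives $|\ell(z,\theta)| \le M(z)$, so $\theta \mapsto \log f(z,\theta)$ is Lipschitz with constant $M(z)$. Continuity in $\theta$ also ensures the supremum is a measurable random variable, so no appeal to Conditions \ref{M}--\ref{EC} is needed here.

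First I would discretize. Fix $\gamma \in (0,1/3)$ and set $\delta_n = n^{-\gamma}$; since $\Theta \subset \mathbb{R}$ is compact, it is covered by $N_n \asymp n^{\gamma}$ intervals of radius $\delta_n$ centered at grid points $\theta_1,\dots,\theta_{N_n}$. Writing $\bar{M}_n = n^{-1}\sum_i M(Z_i)$, the Lipschitz bound gives, for any $\theta$ in the cell around $\theta_j$, both $|\widehat{G}(\theta) - \widehat{G}(\theta_j)| \le \bar{M}_n \delta_n$ and $|G(\theta) - G(\theta_j)| \le E[M]\,\delta_n$. Hence
\[
\sup_{\theta}\,|\widehat{G}(\theta) - G(\theta)| \;\le\; \max_{1\le j\le N_n} |\widehat{G}(\theta_j) - G(\theta_j)| \;+\; \delta_n\big(\bar{M}_n + E[M]\big),
\]
so the event $\{\sup_\theta |\widehat{G}(\theta) - G(\theta)| \ge \eta\}$ is contained in the union of $\{\max_j |\widehat{G}(\theta_j) - G(\theta_j)| \ge \eta/2\}$ and $\{\delta_n(\bar{M}_n + E[M]) \ge \eta/2\}$.

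Next I would bound the two pieces. For the grid maximum, each $\widehat{G}(\theta_j) - G(\theta_j)$ is an average of i.i.d.\ mean-zero variables whose sixteenth moment is bounded by $2^{16}E[M^{16}]$ uniformly in $j$; Lemma \ref{modifiedHH:Lem1} (via the moment computation of Lemma \ref{sum-bound}) then yields $\Pr[|\widehat{G}(\theta_j) - G(\theta_j)| \ge \eta/2] = O(n^{-8})$ with a constant not depending on $j$, so a union bound gives $\Pr[\max_j |\widehat{G}(\theta_j)-G(\theta_j)| \ge \eta/2] \le N_n \cdot O(n^{-8}) = O(n^{\gamma - 8})$. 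For the discretization term, $\delta_n E[M] \to 0$, so eventually the event forces $\bar{M}_n \ge (\eta/4)n^{\gamma}$; applying Markov's inequality at the sixteenth moment to $\bar{M}_n - E[M]$ (finite by $Q>16$, and of size $O(n^{-8})$ by Lemma \ref{sum-bound}) bounds this probability by $O(n^{-8-16\gamma})$. Summing the two pieces gives $\Pr[\sup_\theta|\widehat{G}(\theta) - G(\theta)| \ge \eta] = O(n^{\gamma - 8})$.

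The only real content is the rate bookkeeping, and it is tight. Sixteen moments buy exactly $O(n^{-8})$ per point, while the target tail is $n^{-23/3}$; since $8 - 23/3 = 1/3$, the union bound can afford at most $o(n^{1/3})$ grid points, forcing $\gamma < 1/3$. The countervailing constraint---that the mesh $\delta_n = n^{-\gamma}$ shrink fast enough for the discretization remainder to be negligible---is comfortably met, since that term is $O(n^{-8-16\gamma})$ for every $\gamma \ge 0$. Thus any $\gamma \in (0,1/3)$ (e.g.\ $\gamma = 1/4$) delivers $O(n^{\gamma-8}) = o(n^{-23/3})$, the claimed bound. The point to verify carefully is the uniformity in $\theta$ of both the per-point moment bound and the Lipschitz constant, which is precisely what the single envelope $M(z)$ of Condition \ref{HH}(iii) supplies.
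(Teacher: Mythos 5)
Your proof is correct, but it takes a genuinely different route from the paper's. The paper's argument is a one-line reduction: it rewrites the event as $\{\sup_{\theta}\sqrt{n}\,|\widehat{G}(\theta)-G(\theta)|\geq \eta n^{1/12-\upsilon}\}$ with $\upsilon=-5/12$ and invokes Lemma \ref{modifiedHH:Lem3}, whose conclusion for this $\upsilon$ is exactly $o(n^{-1+16\upsilon})=o(n^{-23/3})$. That lemma rests on the same sixteenth-moment computation (Lemma \ref{sum-bound}) you use, but it handles the supremum by asserting $E[\sup_{\phi}(\sum_i\xi_i(\phi))^{16}]=\sup_{\phi}E[(\sum_i\xi_i(\phi))^{16}]$, a delicate step since in general one only has the inequality in the unhelpful direction. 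Your discretization argument sidesteps this entirely: you pay for uniformity with a union bound over $O(n^{\gamma})$ grid points, which the $O(n^{-8})$ pointwise Markov tail absorbs precisely when $\gamma<1/3=8-23/3$, and you control the mesh error with the Lipschitz envelope $|\ell(z,\theta)|\leq M(z)$ supplied by Condition \ref{HH}(iii); the remainder event costs only $O(n^{-8-16\gamma})$, so the union bound is indeed the binding constraint and the rate bookkeeping is tight as you say. The trade-off is that the paper's route is shorter and reuses a lemma needed elsewhere in the bootstrap arguments, while yours is self-contained, makes the uniformity in $\theta$ of both the moment bound and the Lipschitz constant fully explicit, and needs nothing from Conditions \ref{M} and \ref{EC} beyond continuity and separability for measurability of the supremum.
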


\begin{proof}
Note that
\[
\Pr\left[  \sup_{\theta}\left\vert \widehat{G}\left(  \theta\right)  -G\left(
\theta\right)  \right\vert \geq\eta\right]  =\Pr\left[  \sup_{\theta}\sqrt
{n}\left\vert \widehat{G}\left(  \theta\right)  -G\left(  \theta\right)
\right\vert \geq\eta n^{\frac{1}{12}-\upsilon}\right]
\]
where $\upsilon=-\frac{5}{12}$. Then the result follows by Lemma
\ref{modifiedHH:Lem3}.
\end{proof}

\begin{lemma}
\label{Vjack4}Let
\begin{align*}
W  &  =\left(  \frac{1}{\sqrt{n}}\sum_{i=1}^{n}X_{1,i}\right)  \left(
\frac{1}{\sqrt{n}}\sum_{i=1}^{n}X_{2,i}\right)  \left(  \frac{1}{\sqrt{n}}%
\sum_{i=1}^{n}X_{3,i}\right)  \left(  \frac{1}{\sqrt{n}}\sum_{i=1}^{n}%
X_{4,i}\right)  ,\\
W_{\left(  j\right)  }  &  =\left(  \frac{1}{\sqrt{n-1}}\sum_{i\neq j}%
X_{1,i}\right)  \left(  \frac{1}{\sqrt{n-1}}\sum_{i\neq j}X_{2,i}\right)
\left(  \frac{1}{\sqrt{n-1}}\sum_{i\neq j}X_{3,i}\right) \\
&  \times\left(  \frac{1}{\sqrt{n-1}}\sum_{i\neq j}X_{4,i}\right)
\end{align*}
Then,
\begin{align*}
&  nW-\frac{n^{2}}{n-1}\frac{1}{n}\sum_{j=1}^{n}W_{\left(  j\right)  }\\
&  =\frac{n\left(  n^{2}+3n-1\right)  }{\left(  n-1\right)  ^{3}}W\\
&  -\frac{n^{3}}{\left(  n-1\right)  ^{3}}\left(  \sqrt{n}\overline{X}%
_{1}\right)  \left(  \sqrt{n}\overline{X}_{2}\right)  \left(  \frac{1}{n}%
\sum_{i=1}^{n}X_{3,i}X_{4,i}\right)   \\
& -\frac{n^{3}}{\left(  n-1\right)  ^{3}%
}\left(  \sqrt{n}\overline{X}_{1}\right)  \left(  \sqrt{n}\overline{X}%
_{3}\right)  \left(  \frac{1}{n}\sum_{i=1}^{n}X_{2,i}X_{4,i}\right) \\
&  -\frac{n^{3}}{\left(  n-1\right)  ^{3}}\left(  \sqrt{n}\overline{X}%
_{1}\right)  \left(  \sqrt{n}\overline{X}_{4}\right)  \left(  \frac{1}{n}%
\sum_{i=1}^{n}X_{2,i}X_{3,i}\right)  \\
&  -\frac{n^{3}}{\left(  n-1\right)  ^{3}%
}\left(  \sqrt{n}\overline{X}_{2}\right)  \left(  \sqrt{n}\overline{X}%
_{3}\right)  \left(  \frac{1}{n}\sum_{i=1}^{n}X_{1,i}X_{4,i}\right) \\
&  -\frac{n^{3}}{\left(  n-1\right)  ^{3}}\left(  \sqrt{n}\overline{X}%
_{2}\right)  \left(  \sqrt{n}\overline{X}_{4}\right)  \left(  \frac{1}{n}%
\sum_{i=1}^{n}X_{1,i}X_{3,i}\right)  \\
&  -\frac{n^{3}}{\left(  n-1\right)  ^{3}%
}\left(  \sqrt{n}\overline{X}_{3}\right)  \left(  \sqrt{n}\overline{X}%
_{4}\right)  \left(  \frac{1}{n}\sum_{i=1}^{n}X_{1,i}X_{2,i}\right) \\
&  +\frac{n^{3}}{\sqrt{n}\left(  n-1\right)  ^{3}}\left(  \sqrt{n}\overline
{X}_{1}\right)  \left(  \frac{1}{n}\sum_{i=1}^{n}X_{2,i}X_{3,i}X_{4,i}\right)  \\
& 
+\frac{n^{3}}{\sqrt{n}\left(  n-1\right)  ^{3}}\left(  \sqrt{n}\overline
{X}_{2}\right)  \left(  \frac{1}{n}\sum_{i=1}^{n}X_{1,i}X_{3,i}X_{4,i}\right)
\\
&  +\frac{n^{3}}{\sqrt{n}\left(  n-1\right)  ^{3}}\left(  \sqrt{n}\overline
{X}_{3}\right)  \left(  \frac{1}{n}\sum_{i=1}^{n}X_{1,i}X_{2,i}X_{4,i}\right)
\\
& +\frac{n^{3}}{\sqrt{n}\left(  n-1\right)  ^{3}}\left(  \sqrt{n}\overline
{X}_{4}\right)  \left(  \frac{1}{n}\sum_{i=1}^{n}X_{1,i}X_{2,i}X_{3,i}\right)
\\
&  -\frac{n^{2}}{\left(  n-1\right)  ^{3}}\left(  \frac{1}{n}\sum_{i=1}%
^{n}X_{1,i}X_{2,i}X_{3,i}X_{4,i}\right)
\end{align*}

\end{lemma}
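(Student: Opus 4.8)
The plan is to prove this as an \emph{exact} algebraic identity by a single multinomial expansion; there is no stochastic content, so the whole argument is deterministic bookkeeping and I would assume nothing beyond the definitions of $W$ and $W_{(j)}$.

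First I would pass from the leave-one-out sums to the full-sample sums. Setting $S_{k}\equiv n^{-1/2}\sum_{i=1}^{n}X_{k,i}=\sqrt{n}\,\overline{X}_{k}$, deletion of observation $j$ gives $\sum_{i\neq j}X_{k,i}=\sqrt{n}\,S_{k}-X_{k,j}$, hence
\[
W_{(j)}=\frac{1}{(n-1)^{2}}\prod_{k=1}^{4}\bigl(\sqrt{n}\,S_{k}-X_{k,j}\bigr),
\qquad
\frac{n^{2}}{n-1}\,\frac{1}{n}\sum_{j=1}^{n}W_{(j)}=\frac{n}{(n-1)^{3}}\sum_{j=1}^{n}\prod_{k=1}^{4}\bigl(\sqrt{n}\,S_{k}-X_{k,j}\bigr).
\]
Next I would expand the product over the $2^{4}$ subsets $A\subseteq\{1,2,3,4\}$,
\[
\prod_{k=1}^{4}\bigl(\sqrt{n}\,S_{k}-X_{k,j}\bigr)=\sum_{A\subseteq\{1,2,3,4\}}(-1)^{|A|}\,(\sqrt{n})^{4-|A|}\Bigl(\prod_{k\notin A}S_{k}\Bigr)\Bigl(\prod_{k\in A}X_{k,j}\Bigr),
\]
and sum over $j$. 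The only sample objects that survive are $\sum_{j}\prod_{k\in A}X_{k,j}$, which equal $n$, $\sqrt{n}\,S_{a}$, $n\cdot\tfrac{1}{n}\sum_{i}X_{a,i}X_{b,i}$, $n\cdot\tfrac{1}{n}\sum_{i}X_{a,i}X_{b,i}X_{c,i}$, or $n\cdot\tfrac{1}{n}\sum_{i}\prod_{k}X_{k,i}$ according to $|A|=0,1,2,3,4$. Grouping by $|A|$ (there are $\binom{4}{|A|}$ subsets of each size, carrying the uniform sign $(-1)^{|A|}$ and power $(\sqrt{n})^{4-|A|}$) produces the $W$ contributions from $|A|\le1$, the six paired terms $S_{c}S_{d}\,\tfrac{1}{n}\sum_{i}X_{a,i}X_{b,i}$ from the two-element subsets, the four tripled terms from the three-element subsets, and the single quadruple term from $A=\{1,2,3,4\}$.

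Finally I would recombine with the $nW$ term and read off the coefficients. The pair, triple, and quadruple terms carry through with no cancellation and, after multiplication by $n/(n-1)^{3}$ and the substitutions $S_{k}=\sqrt{n}\,\overline{X}_{k}$ and $n^{5/2}=n^{3}/\sqrt{n}$, match the displayed right-hand side line for line. The one nontrivial simplification is the coefficient of $W$: collecting the $|A|=0$ and $|A|=1$ contributions inside the bracket gives $n^{3}-4n^{2}$, so
\[
nW-\frac{n}{(n-1)^{3}}\bigl(n^{3}-4n^{2}\bigr)W=\frac{n\bigl[(n-1)^{3}-n^{2}(n-4)\bigr]}{(n-1)^{3}}\,W=\frac{n\,(n^{2}+3n-1)}{(n-1)^{3}}\,W,
\]
using $(n-1)^{3}-n^{2}(n-4)=n^{2}+3n-1$. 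The main (and essentially only) obstacle is bookkeeping: keeping the signs $(-1)^{|A|}$, the powers of $\sqrt{n}$, and the $\binom{4}{|A|}$ multiplicities straight, and matching the six pairings and four triplets to the exact terms in the statement. No analytic estimate or probabilistic input is required, which is why this lemma is relegated to the technical appendix and invoked only to certify that the sixth-order jackknife remainder terms are negligible.
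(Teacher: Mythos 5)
Your proposal is correct and follows essentially the same route as the paper's own proof: both write $\sum_{i\neq j}X_{k,i}=n\overline{X}_{k}-X_{k,j}$, expand the product of the four binomials, sum over $j$, and collect terms by how many factors contribute an $X_{k,j}$, with the coefficient of $W$ reducing via $(n-1)^{3}-n^{2}(n-4)=n^{2}+3n-1$ exactly as you compute. Your subset-indexed bookkeeping is just a more compact organization of the paper's term-by-term expansion; the content is identical.
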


\begin{lemma}
\label{Vjack5}Let
\begin{align*}
W  &  =\left(  \frac{1}{\sqrt{n}}\sum_{i=1}^{n}X_{1,i}\right)  \left(
\frac{1}{\sqrt{n}}\sum_{i=1}^{n}X_{2,i}\right)  \left(  \frac{1}{\sqrt{n}}%
\sum_{i=1}^{n}X_{3,i}\right)  \left(  \frac{1}{\sqrt{n}}\sum_{i=1}^{n}%
X_{4,i}\right)  \left(  \frac{1}{\sqrt{n}}\sum_{i=1}^{n}X_{5,i}\right)  ,\\
W_{\left(  j\right)  }  &  =\left(  \frac{1}{\sqrt{n-1}}\sum_{i\neq j}%
X_{1,i}\right)  \left(  \frac{1}{\sqrt{n-1}}\sum_{i\neq j}X_{2,i}\right)
\left(  \frac{1}{\sqrt{n-1}}\sum_{i\neq j}X_{3,i}\right) \\
&  \times\left(  \frac{1}{\sqrt{n-1}}\sum_{i\neq j}X_{4,i}\right)  \left(
\frac{1}{\sqrt{n}}\sum_{i=1}^{n}X_{5,i}\right)
\end{align*}
Then, $\allowbreak$%
\begin{align*}
&  W-\frac{n\sqrt{n}}{\left(  n-1\right)  \sqrt{n-1}}\frac{1}{n}\sum_{j=1}%
^{n}W_{\left(  j\right)  }\\
&  =\allowbreak\frac{n^{3}+6n^{2}-4n+1}{\left(  n-1\right)  ^{4}}\left(
\sqrt{n}\overline{X}_{1}\right)  \left(  \sqrt{n}\overline{X}_{2}\right)
\left(  \sqrt{n}\overline{X}_{3}\right)  \left(  \sqrt{n}\overline{X}%
_{4}\right)  \left(  \sqrt{n}\overline{X}_{5}\right)
\end{align*}

\begin{align*}
&  -\frac{n^{2}}{\left(  n-1\right)  ^{4}}\left(  \sqrt{n}\overline{X}%
_{1}\right)  \left(  \sqrt{n}\overline{X}_{2}\right)  \left(  \sqrt
{n}\overline{X}_{5}\right)  \left(  \frac{1}{n}\sum_{j=1}^{n}X_{3,j}%
X_{4,j}\right) \\
&  -\frac{n^{2}}{\left(  n-1\right)  ^{4}}\left(  \sqrt{n}\overline{X}%
_{1}\right)  \left(  \sqrt{n}\overline{X}_{2}\right)  \left(  \sqrt
{n}\overline{X}_{3}\right)  \left(  \frac{1}{n}\sum_{j=1}^{n}X_{4,j}%
X_{5,j}\right) \\
&  -\frac{n^{2}}{\left(  n-1\right)  ^{4}}\left(  \sqrt{n}\overline{X}%
_{1}\right)  \left(  \sqrt{n}\overline{X}_{3}\right)  \left(  \sqrt
{n}\overline{X}_{5}\right)  \left(  \frac{1}{n}\sum_{j=1}^{n}X_{2,j}%
X_{4,j}\right) \\
&  -\frac{n^{2}}{\left(  n-1\right)  ^{4}}\left(  \sqrt{n}\overline{X}%
_{1}\right)  \left(  \sqrt{n}\overline{X}_{4}\right)  \left(  \sqrt
{n}\overline{X}_{5}\right)  \left(  \frac{1}{n}\sum_{j=1}^{n}X_{2,j}%
X_{3,j}\right) \\
&  -\frac{n^{2}}{\left(  n-1\right)  ^{4}}\left(  \sqrt{n}\overline{X}%
_{1}\right)  \left(  \sqrt{n}\overline{X}_{3}\right)  \left(  \sqrt
{n}\overline{X}_{4}\right)  \left(  \frac{1}{n}\sum_{j=1}^{n}X_{2,j}%
X_{5,j}\right) \\
&  -\frac{n^{2}}{\left(  n-1\right)  ^{4}}\left(  \sqrt{n}\overline{X}%
_{3}\right)  \left(  \sqrt{n}\overline{X}_{4}\right)  \left(  \sqrt
{n}\overline{X}_{5}\right)  \left(  \frac{1}{n}\sum_{j=1}^{n}X_{1,j}%
X_{2,j}\right) \\
&  -\frac{n^{2}}{\left(  n-1\right)  ^{4}}\left(  \sqrt{n}\overline{X}%
_{2}\right)  \left(  \sqrt{n}\overline{X}_{3}\right)  \left(  \sqrt
{n}\overline{X}_{5}\right)  \left(  \frac{1}{n}\sum_{j=1}^{n}X_{1,j}%
X_{4,j}\right) \\
&  -\frac{n^{2}}{\left(  n-1\right)  ^{4}}\left(  \sqrt{n}\overline{X}%
_{2}\right)  \left(  \sqrt{n}\overline{X}_{4}\right)  \left(  \sqrt
{n}\overline{X}_{5}\right)  \left(  \frac{1}{n}\sum_{j=1}^{n}X_{1,j}%
X_{3,j}\right) \\
&  -\frac{n^{2}}{\left(  n-1\right)  ^{4}}\left(  \sqrt{n}\overline{X}%
_{2}\right)  \left(  \sqrt{n}\overline{X}_{3}\right)  \left(  \sqrt
{n}\overline{X}_{4}\right)  \left(  \frac{1}{n}\sum_{j=1}^{n}X_{1,j}%
X_{5,j}\right) \\
&  -\frac{n^{2}}{\left(  n-1\right)  ^{4}}\left(  \sqrt{n}\overline{X}%
_{1}\right)  \left(  \sqrt{n}\overline{X}_{2}\right)  \left(  \sqrt
{n}\overline{X}_{4}\right)  \left(  \frac{1}{n}\sum_{j=1}^{n}X_{3,j}%
X_{5,j}\right)
\end{align*}

\begin{align*}
&  +\frac{n\sqrt{n}}{\left(  n-1\right)  ^{4}}\left(  \sqrt{n}\overline{X}%
_{3}\right)  \left(  \sqrt{n}\overline{X}_{5}\right)  \left(  \frac{1}{n}%
\sum_{j=1}^{n}X_{1,j}X_{2,j}X_{4,j}\right) \\
&  +\frac{n\sqrt{n}}{\left(  n-1\right)  ^{4}}\left(  \sqrt{n}\overline{X}%
_{1}\right)  \left(  \sqrt{n}\overline{X}_{5}\right)  \left(  \frac{1}{n}%
\sum_{j=1}^{n}X_{2,j}X_{3,j}X_{4,j}\right) \\
&  +\frac{n\sqrt{n}}{\left(  n-1\right)  ^{4}}\left(  \sqrt{n}\overline{X}%
_{1}\right)  \left(  \sqrt{n}\overline{X}_{3}\right)  \left(  \frac{1}{n}%
\sum_{j=1}^{n}X_{2,j}X_{4,j}X_{5,j}\right) \\
&  +\frac{n\sqrt{n}}{\left(  n-1\right)  ^{4}}\left(  \sqrt{n}\overline{X}%
_{1}\right)  \left(  \sqrt{n}\overline{X}_{4}\right)  \left(  \frac{1}{n}%
\sum_{j=1}^{n}X_{2,j}X_{3,j}X_{5,j}\right) \\
&  +\frac{n\sqrt{n}}{\left(  n-1\right)  ^{4}}\left(  \sqrt{n}\overline{X}%
_{1}\right)  \left(  \sqrt{n}\overline{X}_{2}\right)  \left(  \frac{1}{n}%
\sum_{j=1}^{n}X_{3,j}X_{4,j}X_{5,j}\right) \\
&  +\frac{n\sqrt{n}}{\left(  n-1\right)  ^{4}}\left(  \sqrt{n}\overline{X}%
_{2}\right)  \left(  \sqrt{n}\overline{X}_{5}\right)  \left(  \frac{1}{n}%
\sum_{j=1}^{n}X_{1,j}X_{3,j}X_{4,j}\right) \\
&  +\frac{n\sqrt{n}}{\left(  n-1\right)  ^{4}}\left(  \sqrt{n}\overline{X}%
_{2}\right)  \left(  \sqrt{n}\overline{X}_{3}\right)  \left(  \frac{1}{n}%
\sum_{j=1}^{n}X_{1,j}X_{4,j}X_{5,j}\right) \\
&  +\frac{n\sqrt{n}}{\left(  n-1\right)  ^{4}}\left(  \sqrt{n}\overline{X}%
_{2}\right)  \left(  \sqrt{n}\overline{X}_{4}\right)  \left(  \frac{1}{n}%
\sum_{j=1}^{n}X_{1,j}X_{3,j}X_{5,j}\right) \\
&  +\frac{n\sqrt{n}}{\left(  n-1\right)  ^{4}}\left(  \sqrt{n}\overline{X}%
_{4}\right)  \left(  \sqrt{n}\overline{X}_{5}\right)  \left(  \frac{1}{n}%
\sum_{j=1}^{n}X_{1,j}X_{2,j}X_{3,j}\right) \\
&  +\frac{n\sqrt{n}}{\left(  n-1\right)  ^{4}}\left(  \sqrt{n}\overline{X}%
_{3}\right)  \left(  \sqrt{n}\overline{X}_{4}\right)  \left(  \frac{1}{n}%
\sum_{j=1}^{n}X_{1,j}X_{2,j}X_{5,j}\right)
\end{align*}

\begin{align*}
&  -\frac{n}{\left(  n-1\right)  ^{4}}\left(  \sqrt{n}\overline{X}_{1}\right)
\left(  \frac{1}{n}\sum_{j=1}^{n}X_{2,j}X_{3,j}X_{4,j}X_{5,j}\right) \\
&  -\frac{n}{\left(  n-1\right)  ^{4}}\left(  \sqrt{n}\overline{X}_{2}\right)
\left(  \frac{1}{n}\sum_{j=1}^{n}X_{1,j}X_{3,j}X_{4,j}X_{5,j}\right) \\
&  -\frac{n}{\left(  n-1\right)  ^{4}}\left(  \sqrt{n}\overline{X}_{3}\right)
\left(  \frac{1}{n}\sum_{j=1}^{n}X_{1,j}X_{2,j}X_{4,j}X_{5,j}\right) \\
&  -\frac{n}{\left(  n-1\right)  ^{4}}\left(  \sqrt{n}\overline{X}_{4}\right)
\left(  \frac{1}{n}\sum_{j=1}^{n}X_{1,j}X_{2,j}X_{3,j}X_{5,j}\right) \\
&  -\frac{n}{\left(  n-1\right)  ^{4}}\left(  \sqrt{n}\overline{X}_{5}\right)
\left(  \frac{1}{n}\sum_{j=1}^{n}X_{1,j}X_{2,j}X_{3,j}X_{4,j}\right) \\
&  +\frac{n\sqrt{n}}{\left(  n-1\right)  ^{4}}\frac{1}{n}\sum_{j=1}^{n}%
X_{1,j}X_{2,j}X_{3,j}X_{4,j}X_{5,j}%
\end{align*}

\end{lemma}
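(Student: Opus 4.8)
The plan is to establish the identity by direct algebraic expansion, exploiting the fact that the fifth factor is a \emph{full-sample} sum that is common to $W$ and to every leave-one-out term $W_{(j)}$. Writing $S_k=\sum_{i=1}^n X_{k,i}$, the fifth factor equals $S_5/\sqrt n=\sqrt n\,\overline X_5$ in $W$ and in each $W_{(j)}$ alike, and crucially it does \emph{not} depend on the deleted index $j$; only the first four factors are jackknifed. Hence it pulls out of the averaged jackknife term, and the left-hand side becomes
\begin{equation*}
\Big(\tfrac{1}{\sqrt n}\sum_{i} X_{5,i}\Big)\Big[\,\prod_{k=1}^4\Big(\tfrac{1}{\sqrt n}\sum_{i} X_{k,i}\Big)-\tfrac{n\sqrt n}{(n-1)\sqrt{n-1}}\,\tfrac{1}{n}\sum_{j}\prod_{k=1}^4\Big(\tfrac{1}{\sqrt{n-1}}\sum_{i\neq j} X_{k,i}\Big)\Big].
\end{equation*}
The bracketed four-factor object is of exactly the same type treated in Lemma~\ref{Vjack4}, differing only in the scalar prefactor $\tfrac{n\sqrt n}{(n-1)\sqrt{n-1}}$ (and in there being no leading $n$ on the product of full sums), so the same expansion machinery applies to the four deleted factors and the bulk of the work is in that four-index computation.

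The key computational step is the expansion of the leave-one-out product. Writing $\tfrac{1}{\sqrt{n-1}}\sum_{i\neq j}X_{k,i}=(S_k-X_{k,j})/\sqrt{n-1}$ and expanding $\prod_{k=1}^4(S_k-X_{k,j})$ by inclusion--exclusion over subsets $T\subseteq\{1,2,3,4\}$, the term indexed by $T$ is $(-1)^{|T|}\big(\prod_{k\notin T}S_k\big)\prod_{k\in T}X_{k,j}$. Averaging over $j$ and using $\sum_j X_{k,j}=S_k$ together with $\tfrac1n\sum_j\prod_{k\in T}X_{k,j}=m_T$, the normalized empirical moment of order $|T|$, collapses each subset-term into a product of full-sample sums times a single sample moment. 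I would then group the resulting terms by the order of the empirical moment (orders $0$ through $4$), re-express every $S_k$ as $\sqrt n\,(\sqrt n\,\overline X_k)$, reinstate the common factor $\sqrt n\,\overline X_5$, and read off the coefficient attached to each monomial type. The $(n-1)^{-2}$ normalization of the four deleted factors, the prefactor $\tfrac{n\sqrt n}{(n-1)\sqrt{n-1}}$, and the $\tfrac1n\sum_j 1=1$ bookkeeping are then combined and reduced to ratios of polynomials in $n$ and $n-1$.

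Matching coefficients is the main obstacle. Each monomial type — a product of several first-order factors $\sqrt n\,\overline X_k$ times a higher-order empirical moment — collects contributions from several subsets $T$, and the simplification to the stated coefficients $(n^3+6n^2-4n+1)/(n-1)^4$, $n^2/(n-1)^4$, $n\sqrt n/(n-1)^4$, and $n/(n-1)^4$ demands careful rebalancing of the powers of $n$ and $n-1$. The delicate point, relative to the fully symmetric four-factor case of Lemma~\ref{Vjack4}, is that the fifth index enters \emph{asymmetrically}: in the left-hand side it is never deleted and so multiplies the four-index expansion as a fixed scalar $\sqrt n\,\overline X_5$, whereas the displayed right-hand side organizes the answer symmetrically across all five indices, with the fifth index appearing inside some of the empirical moments. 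I would therefore track the non-jackknifed fifth factor separately throughout, keeping its powers of $n$ distinct from those produced by the jackknifed factors, and the crux of the argument is to verify that, after expansion, the coefficients regroup precisely into the displayed second-, third-, fourth-, and fifth-order moment terms. Once each monomial's coefficient is reduced, term-by-term comparison with the right-hand side completes the proof.
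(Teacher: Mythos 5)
Your expansion machinery (inclusion--exclusion over the deleted index, averaging over $j$, collapsing each subset term into a product of full-sample sums times an empirical moment, then matching coefficients) is exactly the right tool, and it is what the paper's own proof uses. But your reading of which factors are jackknifed is fatal to the plan. The displayed right-hand side contains empirical moments in which the fifth variable is tied to the deleted index --- e.g.\ $\frac{1}{n}\sum_{j}X_{4,j}X_{5,j}$, $\frac{1}{n}\sum_{j}X_{3,j}X_{4,j}X_{5,j}$, $\frac{1}{n}\sum_{j}X_{2,j}X_{3,j}X_{4,j}X_{5,j}$, and the full fifth-order moment $\frac{1}{n}\sum_{j}X_{1,j}X_{2,j}X_{3,j}X_{4,j}X_{5,j}$. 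If, as you assume, the fifth factor is the full-sample sum and pulls out of the average over $j$, then the left-hand side depends on $(X_{5,1},\dots,X_{5,n})$ only through $S_{5}=\sum_{i}X_{5,i}$, so no such moment can ever be produced: as a polynomial identity, your left-hand side must assign equal coefficients to the monomials $X_{1,1}X_{2,1}X_{3,1}X_{4,1}X_{5,1}$ and $X_{1,1}X_{2,1}X_{3,1}X_{4,1}X_{5,2}$, whereas the stated right-hand side does not (the fifth-order moment term contributes only to the first). The ``crux'' verification you defer --- that the coefficients regroup into the displayed moments involving the fifth index --- is therefore not delicate; it is impossible.

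The resolution is that the definition of $W_{(j)}$ in the statement contains a typo: its fifth factor should be $\frac{1}{\sqrt{n-1}}\sum_{i\neq j}X_{5,i}$, i.e.\ all five factors are leave-one-out. This is what the paper's proof does, expanding every factor as $\left(n\overline{X}_{k}-X_{k,j}\right)/(n-1)$, and it is forced both by the application (the lemma is applied to $\theta_{(j)}^{\epsilon\epsilon\epsilon\epsilon\epsilon}$, the expansion term of $\widehat{\theta}_{(j)}$, which is built entirely from the deleted sample) and by the normalization: with five deleted factors the scaling $(n-1)^{-5/2}$ combines with the prefactor $\frac{n\sqrt{n}}{(n-1)\sqrt{n-1}}$ to give the rational coefficients in the display, and the leading coefficient is $1-\frac{n^{4}-5n^{3}}{(n-1)^{4}}=\frac{n^{3}+6n^{2}-4n+1}{(n-1)^{4}}$ as stated. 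Under your reading the leading coefficient works out to $1-n^{5/2}(n-4)/(n-1)^{7/2}$, which is not even a rational function of $n$, so the identity you would prove is a different (and differently shaped) one. The four-factor reduction via Lemma \ref{Vjack4} must be abandoned; carry out the symmetric five-factor inclusion--exclusion instead.
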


\section{Proofs for Section \ref{sec-lfbs}}

\subsection{Proof of Lemma \ref{Lem-consistency-of-theta-epsilon}}

Let $\eta$ be given, and let $\varepsilon\equiv G\left(  \theta_{0}\right)
-\sup_{\left\{  \theta:\left\vert \theta-\theta_{0}\right\vert >\eta\right\}
}G\left(  \theta\right)  >0$. Letting $g\left(  z,\theta\right)  \equiv\log
f\left(  z,\theta\right)  $, we have
\[
\int g\left(  z,\theta\right)  dF_{\epsilon}\left(  z\right)  =\left(
1-\epsilon\sqrt{n}\right)  G\left(  \theta\right)  +\epsilon\sqrt
{n}\widehat{G}\left(  \theta\right)
\]
and
\[
\left\vert \int g\left(  z,\theta\right)  dF_{\epsilon}\left(  z\right)
-G\left(  \theta\right)  \right\vert \leq\left(  1-\epsilon\sqrt{n}\right)
\left\vert \widehat{G}\left(  \theta\right)  -G\left(  \theta\right)
\right\vert \leq\left\vert \widehat{G}\left(  \theta\right)  -G\left(
\theta\right)  \right\vert .
\]
Here, the last inequality is based on the fact that $0\leq\epsilon\leq\frac
{1}{\sqrt{n}}$. By Lemma \ref{Lem-uniform-convergence}, we have
\[
\Pr\left[  \max_{0\leq\epsilon\leq\frac{1}{\sqrt{n}}}\sup_{\theta}\left\vert
\int g\left(  z,\theta\right)  dF_{\epsilon}\left(  z\right)  -G\left(
\theta\right)  \right\vert \geq\eta\right]  =o\left(  n^{-\frac{23}{3}%
}\right)
\]
Therefore, for every $0\leq\epsilon\leq\frac{1}{\sqrt{n}}$ with probability
equal to $1-o\left(  n^{-\frac{23}{3}}\right)  $, we have
\begin{align*}
\max_{\left\vert \theta-\theta_{0}\right\vert >\eta}\int g\left(
z,\theta\right)  dF_{\epsilon}\left(  z\right)   &  \leq\max_{\left\vert
\theta-\theta_{0}\right\vert >\eta}G\left(  \theta\right)  +\frac{1}%
{3}\varepsilon\\
&  <G\left(  \theta_{0}\right)  -\frac{2}{3}\varepsilon\\
&  <\int g\left(  z,\theta_{0}\right)  dF_{\epsilon}\left(  z\right)
-\frac{1}{3}\varepsilon.
\end{align*}
We also have
\[
\max_{\theta}\int g\left(  z,\theta\right)  dF_{\epsilon}\left(  z\right)
\geq\int g\left(  z,\theta_{0}\right)  dF_{\epsilon}\left(  z\right)
\]
by definition. It follows that
\[
\max_{\left\vert \theta-\theta_{0}\right\vert >\eta}\int g\left(
z,\theta\right)  dF_{\epsilon}\left(  z\right)  <\max_{\theta}\int g\left(
z,\theta\right)  dF_{\epsilon}\left(  z\right)  -\frac{1}{3}\varepsilon
\]
for every $0\leq\epsilon\leq\frac{1}{\sqrt{n}}$. We therefore obtain that
$\Pr\left[  \max_{0\leq\epsilon\leq\frac{1}{\sqrt{n}}}\left\vert \theta\left(
\epsilon\right)  -\theta_{0}\right\vert \geq\eta\right]  =o\left(
n^{-\frac{23}{3}}\right)  $.

\subsection{Proof of Lemma \ref{multi:k}}

Note that we may write
\begin{align*}
&  \int K\left(  z;\theta\left(  \epsilon\right)  \right)  dF_{\epsilon
}\left(  z\right)  -E\left[  K\left(  Z_{i};\theta_{0}\right)  \right] \\
&  =\int K\left(  z;\theta\left(  \epsilon\right)  \right)  dF_{\epsilon
}\left(  z\right)  -\int K\left(  z;\theta_{0}\right)  dF\left(  z\right) \\
&  =\int K\left(  z;\theta\left(  \epsilon\right)  \right)  dF_{\epsilon
}\left(  z\right)  -\int K\left(  z;\theta_{0}\right)  dF_{\epsilon}\left(
z\right)  +\int K\left(  z;\theta_{0}\right)  dF_{\epsilon}\left(  z\right)
-\int K\left(  z;\theta\left(  \epsilon\right)  \right)  dF\left(  z\right) \\
&  =\int\frac{\partial K\left(  z;\theta^{\ast}\right)  }{\partial\theta
}\left(  \theta\left(  \epsilon\right)  -\theta_{0}\right)  dF_{\epsilon
}\left(  z\right)  +\epsilon\sqrt{n}\int K\left(  z;\theta_{0}\right)
d\left(  \widehat{F}-F\right)  \left(  z\right)
\end{align*}
where $\theta^{\ast}$ is between $\theta_{0}$ and $\theta\left(
\epsilon\right)  $. Therefore, we have
\begin{align*}
\left\vert \int K\left(  z;\theta\left(  \epsilon\right)  \right)
dF_{\epsilon}\left(  z\right)  -E\left[  K\left(  Z_{i};\theta_{0}\right)
\right]  \right\vert  &  \leq\left\vert \theta\left(  \epsilon\right)
-\theta_{0}\right\vert \cdot\left(  E\left[  M\left(  Z_{i}\right)  \right]
+\frac{1}{n}\sum_{i=1}^{n}M\left(  Z_{i}\right)  \right) \\
&  +\left\vert \frac{1}{n}\sum_{i=1}^{n}\left(  M\left(  Z_{i}\right)
-E\left[  M\left(  Z_{i}\right)  \right]  \right)  \right\vert
\end{align*}
where $M\left(  \cdot\right)  $ is defined in Condition \ref{HH}. Using Lemma
\ref{Lem-consistency-of-theta-epsilon}, we can bound
\[
\max_{0\leq\epsilon\leq\frac{1}{\sqrt{T}}}\left\vert \int K\left(
z;\theta\left(  \epsilon\right)  \right)  dF_{\epsilon}\left(  z\right)
-E\left[  K\left(  Z_{i};\theta_{0}\right)  \right]  \right\vert
\]
in absolute value by some $\eta>0$ with probability $1-o\left(  n^{-\frac
{23}{3}}\right)  $.

Using Condition \ref{HH} and Lemmas \ref{modifiedHH:Lem3}, we can also show
that $\left\vert \int K\left(  \cdot;\theta\left(  \epsilon\right)  \right)
d\Delta\right\vert $ can be bounded by $Cn^{\frac{1}{12}-\upsilon}$ for some
constant $C>0$ and $\upsilon$ such that $\upsilon<\frac{1}{16}$ with
probability $1-o\left(  n^{-1+16\upsilon}\right)  $. Similarly, if
$\upsilon<\frac{1}{48}$, then the statement holds with probability
$o(n^{-1}).$

\subsection{Proof of Lemma \ref{multi:bound:theta-epsilon-1}}

From (\ref{multi:alt-h1}), we have
\[
\theta^{\epsilon}\left(  \epsilon\right)  =-\left[  \int\ell^{\theta}\left(
z,\epsilon\right)  dF_{\epsilon}\left(  z\right)  \right]  ^{-1}\left[
\int\ell\left(  \cdot,\epsilon\right)  d\Delta\right]
\]
Using Lemma \ref{multi:k}, we can bound the denominator by some $C>0$, and the
numerator by some $Cn^{\frac{1}{12}-\upsilon}$ with probability $1-o\left(
n^{-1+16\upsilon}\right)  $, from which the first conclusion follows. As for
the second conclusion, we note from (\ref{multi:alt-h2}) that we have
\[
0=E_{\epsilon}\left[  \ell^{\theta\theta}\left(  Z_{i},\epsilon\right)
\right]  \left(  \theta^{\epsilon}\left(  \epsilon\right)  \right)
^{2}+E_{\epsilon}\left[  \ell^{\theta}\left(  Z_{i},\epsilon\right)  \right]
\theta^{\epsilon\epsilon}\left(  \epsilon\right)  +2\left(  \int\ell^{\theta
}\left(  z,\epsilon\right)  d\Delta\left(  z\right)  \right)  \theta
^{\epsilon}\left(  \epsilon\right)
\]
The second conclusion follows by using Lemmas \ref{multi:k} along with the
first conclusion. The rest of the Lemmas can be established similarly. Note
that if $\upsilon<\frac{1}{48}$ then we can apply the specialized result of
Lemma \ref{multi:k} in the same way as before.

\subsection{Proof of Lemma \ref{equiLemma}}

Let $\overline{m}_{0}\left(  \theta\right)  \equiv\int\ell^{\theta}\left(
z,\theta\right)  f\left(  z,\theta_{0}\right)  dz$. Note that
\begin{align*}
\widehat{\mathcal{I}}-\mathcal{I}  &  =-n^{-1}%
%TCIMACRO{\tsum _{i=1}^{n}}%
%BeginExpansion
{\textstyle\sum_{i=1}^{n}}
%EndExpansion
\ell^{\theta}\left(  Z_{i},\widehat{\theta}\right)  +E\left[  \ell^{\theta
}\left(  Z_{i},\theta_{0}\right)  \right] \\
&  =-n^{-1}%
%TCIMACRO{\tsum _{i=1}^{n}}%
%BeginExpansion
{\textstyle\sum_{i=1}^{n}}
%EndExpansion
\left(  \ell^{\theta}\left(  Z_{i},\theta_{0}\right)  -\overline{m}_{0}\left(
\theta_{0}\right)  \right)  +o_{p}\left(  n^{-1/2}\right)  -\left(
\overline{m}_{0}\left(  \widehat{\theta}\right)  -\overline{m}_{0}\left(
\theta_{0}\right)  \right)  ,
\end{align*}
where the last equality is based on the usual stochastic equicontinuity. Also
note that $\left.  \partial\overline{m}_{0}\left(  \theta\right)  \right/
\partial\theta=\int\ell^{\theta\theta}\left(  z,\theta\right)  f\left(
z,\theta_{0}\right)  dz$ by dominated convergence. We therefore obtain
\begin{align*}
\sqrt{n}\left(  \widehat{\mathcal{I}}-\mathcal{I}\right)   &  =-n^{-1/2}%
%TCIMACRO{\tsum _{i=1}^{n}}%
%BeginExpansion
{\textstyle\sum_{i=1}^{n}}
%EndExpansion
\left(  \ell^{\theta}\left(  Z_{i},\theta_{0}\right)  -E\left[  \ell^{\theta
}\left(  Z_{i},\theta_{0}\right)  \right]  \right)  \\
&-E\left[  \ell
^{\theta\theta}\left(  Z_{i},\theta_{0}\right)  \right]  \sqrt{n}\left(
\widehat{\theta}-\theta_{0}\right)  +o_{p}\left(  1\right) \\
&  =-V\left(  \theta_{0}\right)  -\mathcal{Q}_{1}\left(  \theta_{0}\right)
\mathcal{I}^{-1}U\left(  \theta_{0}\right)  +o_{p}\left(  1\right)  ,
\end{align*}
Likewise, we obtain
\begin{align*}
\sqrt{n}\left(  \widehat{\mathcal{Q}}_{1}\left(  \widehat{\theta}\right)
-\mathcal{Q}_{1}\left(  \theta_{0}\right)  \right)   &  =n^{-1/2}%
%TCIMACRO{\tsum _{i=1}^{n}}%
%BeginExpansion
{\textstyle\sum_{i=1}^{n}}
%EndExpansion
\left(  \ell^{\theta\theta}\left(  Z_{i},\theta_{0}\right)  -E\left[
\ell^{\theta\theta}\left(  Z_{i},\theta_{0}\right)  \right]  \right) \\
&  +E\left[  \ell^{\theta\theta\theta}\left(  Z_{i},\theta_{0}\right)
\right]  \sqrt{n}\left(  \widehat{\theta}-\theta_{0}\right)  +o_{p}\left(
1\right) \\
&  =W\left(  \theta_{0}\right)  +\mathcal{Q}_{2}\left(  \theta_{0}\right)
\mathcal{I}^{-1}U\left(  \theta_{0}\right)  +o_{p}\left(  1\right)  ,
\end{align*}%
\begin{align*}
\sqrt{n}\left(  \overline{m}_{1}\left(  \widehat{\theta}\right)  -\overline
{m}_{1}\left(  \theta_{0}\right)  \right)   &  =-E\left[  \ell^{\theta\theta
}\left(  Z_{i},\theta_{0}\right)  \right]  \sqrt{n}\left(  \widehat{\theta
}-\theta_{0}\right)  +o_{p}\left(  1\right) \\
&  =-\mathcal{Q}_{1}\left(  \theta_{0}\right)  \mathcal{I}^{-1}U\left(
\theta_{0}\right)  +o_{p}\left(  1\right)  ,
\end{align*}%
\begin{align*}
\sqrt{n}\left(  \overline{m}_{3}\left(  \widehat{\theta}\right)  -\overline
{m}_{3}\left(  \theta_{0}\right)  \right)   &  =\left(  E\left[  \ell^{\theta
}\left(  Z_{i},\theta_{0}\right)  ^{2}\right]  +E\left[  \ell\left(
Z_{i},\theta_{0}\right)  \ell^{\theta\theta}\left(  Z_{i},\theta_{0}\right)
\right]  \right)  \sqrt{n}\left(  \widehat{\theta}-\theta_{0}\right)
\\ &+o_{p}\left(  1\right) \\
&  =\Big(  E\left[  V_{i}\left(  \theta_{0}\right)  ^{2}\right]  +\left(
E\left[  \ell^{\theta}\left(  Z_{i},\theta_{0}\right)  \right]  \right)
^{2} \\
&\quad +E\left[  U_{i}\left(  \theta_{0}\right)  W_{i}\left(  \theta_{0}\right)
\right]  \Big)  \mathcal{I}^{-1}U\left(  \theta_{0}\right) + o_p(1)
\end{align*}

\subsection{Proof of Proposition \ref{Donsker}}

We first show that for $f\in\mathfrak{F}$ , $\sqrt{n}\left(  \widehat{F}%
-F\right)  f\rightsquigarrow Tf$ or in other words that $\mathfrak{F}$ is a
Donsker class. Define $\mathfrak{F}_{\delta}=\left\{  f-g:f,g\in
\mathfrak{F,}E\left[  \left\Vert f-g\right\Vert ^{2}\right]  <\delta\right\}
$, $\mathfrak{F}_{\infty}=\left\{  f-g:f,g\in\mathfrak{F}\right\}  $ and
$\mathfrak{F}_{\infty}^{2}=\left\{  f^{2}:f\in\mathfrak{F}_{\infty}\right\}
.$ In light of van der Vaart and Wellner (1996, Theorem 2.5.2), it is enough
to show that $\mathfrak{F}_{\delta}$ and $\mathfrak{F}_{\infty}^{2}$ are $F $
measurable classes for every $\delta>0$ and $E\left[  M\left(  z\right)
^{2}\right]  <\infty$. The second requirement is satisfied by Condition
\ref{HH}. Since $\mathfrak{F}_{\delta}\subset\mathfrak{F}_{\infty}$ the first
condition holds if for $f\in\mathfrak{F}_{\infty}^{2}$ and any vector
$a\in\mathbb{R}^{n}$ and any $n$ the function $s(Z_{1},...,Z_{n})=\sup
_{\theta_{1},\theta_{2}\in\Theta}\left\vert \sum_{i}^{n}a_{i}\left(
\ell^{\left(  k\right)  }(Z_{i},\theta_{1})-\ell^{\left(  k\right)  }%
(Z_{i},\theta_{2})\right)  ^{2}\right\vert $ is measurable. Let $\Theta_{k}$
be an increasing sequence of countable subsets of $\Theta$ whose limit is
dense in $\Theta.$ Then
\[
s_{k}(Z_{1},...,Z_{n})=\sup_{\theta_{1},\theta_{2}\in\Theta_{k}}\left\vert
\sum_{i}^{n}a_{i}\left(  \ell^{\left(  k\right)  }(Z_{i},\theta_{1}%
)-\ell^{\left(  k\right)  }(Z_{i},\theta_{2})\right)  ^{2}\right\vert
\]
is measurable by Condition \ref{M}. By continuity of $\ell^{\left(  k\right)
}(Z_{i},\theta)$ in $\theta$ it follows that
\[
\lim\inf_{k}s_{k}(Z_{1},...,Z_{n})=s(Z_{1},...,Z_{n})
\]
such that measurability of $s$ follows from Royden (1988, Theorem 20, p.68).
Conditional weak convergence of $\widehat{\Delta}$ follows from Gine and Zinn
(1990, Theorem 2.4). Note that by measurability of $\sqrt{n}\left(  \hat
{F}-F\right)  f$ and Gine and Zinn (1990, p861) the convergence of $\sup_{h\in
BL_{1}}\left\vert E^{\ast}h\left[  \sqrt{n}\left(  \widehat{F}^{\ast
}-\widehat{F}\right)  f\right]  -Eh\left[  Tf\right]  \right\vert $ is a.s.

\subsection{Proof of Lemma \ref{P-rate}}

Note that $\sum_{i=1}^{n}\xi_{i}^{\ast}(\phi)=\sum_{i=1}^{n}\left(
N_{ni}-1\right)  \tau(Z_{i},\phi)$ where $N_{n1},...,N_{nn}$ is multinomially
distributed with parameters $\left(  n,1/n,...,1/n\right)  =\left(
k,p_{1},...,p_{n}\right)  $ and independent of $Z_{i}$ such that $\Pr\left(
\cap_{i=1}^{n}\left\{  N_{ni}=n_{i}\right\}  \right)  =n!/\left(  \prod
_{i}n_{i}!\right)  \prod_{i}n^{-n_{i}}$ where $\sum_{i}^{n}n_{i}=n,$
$n_{i}\geq0.$ Let $\kappa_{r_{1}r_{2}....r_{n}}$ be the mixed higher order
cumulant of $N_{n1},...,N_{nn}$ of order $r=r_{1}+...+r_{n}$ for $r_{i}\geq0,$
$r_{i}$ integer. Mixed higher order cumulants can be obtained from Guldberg's
(1935) recurrence relation $\kappa_{r_{1}r_{2}..r_{i}+1...r_{n}}=a_{i}%
\partial\left(  \kappa_{r_{1}r_{2}..r_{i}...r_{n}}\right)  /\partial a_{i}$
where $a_{i}=p_{i}/p_{1} $. Let $b$ be the number of non zero indices $r_{i}.$
The arguments in Wishart (1949) imply that for $p_{i}=n^{-1}$ we have
$\kappa_{r_{1}r_{2}....r_{n}}\leq cn^{-b+1}$ for some constant $c.$ For
notational convenience we will represent cumulants with zero indices as lower
order cumulants of the variables with non-zero indices, i.e. write
$\kappa_{...r_{i\neq j}..}=\kappa_{r_{1}r_{2}....r_{n}}$ where $r_{j}=0.$

Consider
\begin{align*}
P^{\ast}\left[  \sup_{\phi\in\Phi}\left\vert \frac{1}{\sqrt{n}}\sum_{i=1}%
^{n}\xi_{i}^{\ast}\left(  \phi\right)  \right\vert >n^{\frac{1}{12}-\upsilon
}\right]   &  =P^{\ast}\left[  \sup_{\phi\in\Phi}\left\vert \sum_{i=1}^{n}%
\xi_{i}^{\ast}\left(  \phi_{n}\right)  \right\vert >n^{\frac{7}{12}-\upsilon
}\right] \\
&  \leq\frac{E^{\ast}\left[  \sup_{\phi\in\Phi}\left(  \sum_{i=1}^{n}\xi
_{i}^{\ast}\left(  \phi\right)  \right)  ^{16}\right]  }{n^{\frac{28}%
{3}-16\upsilon}\eta^{16}}\\
&  =\frac{\sup_{\phi\in\Phi}E^{\ast}\left[  \left(  \sum_{i=1}^{n}\xi
_{i}^{\ast}\left(  \phi\right)  \right)  ^{16}\right]  }{n^{\frac{28}%
{3}-16\upsilon}\eta^{16}},
\end{align*}
where the last equality uses the fact that $\sup_{\phi\in\Phi}$ does not
involve $N_{n1},...,N_{nn}.$ By adopting an argument in the proof of Lemma 5.1
in Lahiri (1992), we have
\begin{equation}
E^{\ast}\left(
%TCIMACRO{\tsum \nolimits_{i=1}^{n}}%
%BeginExpansion
{\textstyle\sum\nolimits_{i=1}^{n}}
%EndExpansion
\xi_{i}^{\ast}\left(  \phi\right)  \right)  ^{2k}=%
%TCIMACRO{\tsum _{j=1}^{2k}}%
%BeginExpansion
{\textstyle\sum_{j=1}^{2k}}
%EndExpansion%
%TCIMACRO{\tsum _{\alpha}}%
%BeginExpansion
{\textstyle\sum_{\alpha}}
%EndExpansion
C(\alpha_{1},...,\alpha_{j})%
%TCIMACRO{\tsum _{\emph{I}}}%
%BeginExpansion
{\textstyle\sum_{\emph{I}}}
%EndExpansion%
%TCIMACRO{\tprod _{t=1}^{j}}%
%BeginExpansion
{\textstyle\prod_{t=1}^{j}}
%EndExpansion
\tau(Z_{i_{t}},\phi)^{\alpha_{t}}E^{\ast}%
%TCIMACRO{\tprod _{s=1}^{j}}%
%BeginExpansion
{\textstyle\prod_{s=1}^{j}}
%EndExpansion
\left(  N_{ni_{s}}-1\right)  ^{\alpha_{s}}, \label{MomentBound1}%
\end{equation}
where for each fixed $j\in\left\{  1,...,2k\right\}  ,$ $\sum_{\alpha}$
extends over all $j$-tuples of positive integers $\left(  \alpha
_{1},...,\alpha_{j}\right)  $ such that $\alpha_{1}+...+\alpha_{j}=2k$ and
$\sum_{\emph{I}}$ extends over all ordered $j$-tuples $\left(  i_{1}%
,...,i_{j}\right)  $ of integers such that $1\leq i_{j}\leq n.$ Also,
$C(\alpha_{1},...,\alpha_{j})$ stands for a bounded constant. Next we consider
the mixed central moments $\mu(\alpha_{1},...,\alpha_{j})=E^{\ast}\prod
_{s=1}^{j}\left(  N_{ni_{s}}-1\right)  ^{\alpha_{s}}.$ From Shiryaev (1989,
Theorem 6, p.290) we obtain a relationship between cumulants and mixed
moments. Let $\alpha=\left(  \alpha_{1},...,\alpha_{j}\right)  ^{\prime
},r^{(p)}=\left(  r_{1}^{(p)},...,r_{j}^{(p)}\right)  ,\left\vert
r^{(p)}\right\vert =r_{1}^{(p)}+...+r_{j}^{(p)}$ and $r^{(p)}!=r_{1}%
^{(p)}!...r_{j}^{(p)}!$ such that
\[
\mu(\alpha_{1},...,\alpha_{j})=\sum_{r^{(1)}+...+r^{(q)}=\alpha}\frac{1}%
{q!}\frac{\alpha!}{r^{(1)}!...r^{(q)}!}\prod_{p=1}^{q}\kappa_{r_{1}^{(p)}%
r_{2}^{(p)}....r_{j}^{(p)}}%
\]
where $\sum_{r^{(1)}+...+r^{(q)}=\alpha}$ indicates the sum over all ordered
sets of nonnegative integral vectors $r^{(p)},\left\vert r^{(p)}\right\vert
>0,$whose sum is $\alpha.$ Since the order of \ref{MomentBound1} depends both
on the number of nonzero terms in $\sum_{\emph{I}}$ and the size of
$\mu(\alpha_{1},...,\alpha_{j})$ for each $j,$ we analyze the term
\[
S(n,j)=\sum_{\emph{I}}\prod_{t=1}^{j}\tau(Z_{i_{t}},\phi)^{\alpha_{s}}E^{\ast
}\prod_{s=1}^{j}\left(  N_{ni_{s}}-1\right)  ^{\alpha_{s}}%
\]
for each $j.$ Note that $\left\vert \prod_{t=1}^{j}\tau(Z_{i_{t}}%
,\phi)^{\alpha_{s}}\right\vert $ is bounded almost surely and therefore does
not affect the analysis. Also, $\sum_{\emph{I}}$ is a sum over $n^{j}$ terms
and thus is $O(n^{j})$ if all these terms are nonzero. The crucial factor in
determining the overall order is therefore $E^{\ast}\prod_{s=1}^{j}\left(
N_{ni_{s}}-1\right)  ^{\alpha_{s}}.$ We start with $j=1.$ Then $\alpha
_{1}=2k,$ $q=1...2k$ and $r^{(p)}$ are scalars. Consequently, $\kappa
_{r_{1}^{(p)}}=c_{1}$ where $c_{1}$ is some constant and $S(n,1)\leq c_{2}%
\sum_{\emph{i=1}}^{n}\left\vert \tau(Z_{i_{t}},\phi)\right\vert ^{2k}$ for
some other constant $c_{2}.$ If $j\leq k$ then for $q=1...2q,$ $r^{(p)}$ are
vectors with possibly only one element different from zero. Again, $S(n,j)\leq
c_{2}\sum_{\emph{I}}\prod_{t=1}^{j}\left\vert \tau(Z_{i_{t}},\phi)\right\vert
^{\alpha_{s}}$ for $j\leq k.$ If $j\geq k$ then $\alpha$ contains at least
$2(j-k)$ elements $\alpha_{i}=1.$ Now assume that for some $p,$ $r_{i}%
^{(p)}=1$ and $r_{j}^{(p)}=0$ for $i\neq j.$ Then $\kappa_{r_{i}^{(p)}%
}=E\left(  N_{ni_{s}}-1\right)  =0$ and thus $\prod_{p=1}^{q}\kappa
_{r_{1}^{(p)}r_{2}^{(p)}....r_{j}^{(p)}}=0.$ On the other hand if $r_{i}%
^{(p)}=1$ and $r_{j}^{(p)}\neq0$ for at least one $j\neq i$ then
$\kappa_{r_{1}^{(p)}r_{2}^{(p)}....r_{n}^{(p)}}\leq c_{1}n^{-1}.$ Since there
must exists $p^{\prime}$ corresponding to the other $\alpha_{i^{\prime}}=1$
such that either $r_{i^{^{\prime}}}^{(p^{\prime})}=1$ and $r_{j}^{(p^{\prime
})}=0$ for $i^{\prime}\neq j$ or $r_{i^{\prime}}^{(p^{\prime})}=1$ and
$r_{j}^{(p^{\prime})}\neq0$ for at least one $j\neq i^{\prime},$ it follows
that $\prod_{p=1}^{q}\kappa_{r_{1}^{(p)}r_{2}^{(p)}....r_{j}^{(p)}}%
=c_{3}n^{-2(j-k)},$ at most. It now follows that $S(n,j)\leq c_{2}%
n^{-2(j-k)}\sum_{\emph{I}}\prod_{t=1}^{j}\left\vert \tau(Z_{i_{t}}%
,\phi)\right\vert ^{\alpha_{s}}$ for all $j>k.$ Then,
\[
E\left\vert S(n,j)\right\vert \leq c_{2}\sum_{\emph{I}}E\left(  \prod
_{t=1}^{j}\left\vert \tau(Z_{i_{t}},\phi)\right\vert ^{\alpha_{s}}\right)
\leq c_{2}n^{j}E\left\vert \tau(Z_{i_{t}},\phi)\right\vert ^{2k}%
\]
for $j\leq k$ and
\[
E\left\vert S(n,j)\right\vert \leq c_{2}n^{-2(j-k)}\sum_{\emph{I}}E\left(
\prod_{t=1}^{j}\left\vert \tau(Z_{i_{t}},\phi)\right\vert ^{\alpha_{s}%
}\right)  \leq c_{2}n^{2k-j}E\left\vert \tau(Z_{i_{t}},\phi)\right\vert
^{2k}\leq c_{2}n^{k}E\left\vert \tau(Z_{i_{t}},\phi)\right\vert ^{2k}%
\]
for $j>k.$ Together these results imply that
\[
E\left\vert E^{\ast}\left(
%TCIMACRO{\tsum \nolimits_{i=1}^{n}}%
%BeginExpansion
{\textstyle\sum\nolimits_{i=1}^{n}}
%EndExpansion
\xi_{i}^{\ast}\left(  \phi\right)  \right)  ^{2k}\right\vert \leq
C(k)n^{k}E\left\vert \tau(Z_{i_{t}},\phi)\right\vert ^{2k}%
\]
where $C(k)$ is a constant that depends on $k.$ By the Markov inequality it
follows that $E^{\ast}\left(
%TCIMACRO{\tsum \nolimits_{i=1}^{n}}%
%BeginExpansion
{\textstyle\sum\nolimits_{i=1}^{n}}
%EndExpansion
\xi_{i}^{\ast}\left(  \phi\right)  \right)  ^{2k}=O_{p}(n^{k})$. We conclude
that
\[
P^{\ast}\left[  \left\vert \frac{1}{\sqrt{n}}\sum_{i=1}^{n}\xi_{i}\left(
\phi_{n}\right)  \right\vert >n^{\frac{1}{12}-\upsilon}\right]  \leq
\frac{O_{p}(n^{8})}{n^{\frac{28}{3}-16\upsilon}\eta^{16}}=O_{p}\left(
n^{-\frac{4}{3}+16\upsilon}\right)  .
\]
The second result follows immediately from
\[
P^{\ast}\left[  \left\vert \frac{1}{n}\sum_{i=1}^{n}\xi_{i}\left(  \phi
_{n}\right)  \right\vert >\eta\right]  =P^{\ast}\left[  \left\vert \frac
{1}{\sqrt{n}}\sum_{i=1}^{n}\xi_{i}\left(  \phi_{n}\right)  \right\vert >\eta
n^{1/2}\right]  \leq o_{p}\left(  n^{-\frac{23}{3}}\right)
\]
by the previous result.

\subsection{Proof of Lemma \ref{BootstrapThetaEpsilonConv}}

For any $\eta>0$, there exists some $\delta>0$ such that $\left\vert
\theta-\theta_{0}\right\vert >\eta/2$ implies $\left\vert G\left(
\theta\right)  -G\left(  \theta_{0}\right)  \right\vert >\delta$. Let
$\widehat{G}^{\ast}\left(  \theta\right)  \equiv\int g(z,\theta)d\widehat{F}%
^{\ast}\left(  z\right)  $ and $\widehat{G}_{\epsilon}^{\ast}\left(
\theta\right)  \equiv\int g\left(  z,\theta\right)  d\widehat{F}_{\epsilon
}\left(  z\right)  $. Then,
\[
P^{\ast}\left[  \max_{0\leq\epsilon\leq\frac{1}{\sqrt{n}}}\left\vert
\widehat{\theta}^{\ast}\left(  \epsilon\right)  -\widehat{\theta}\right\vert
\geq\eta\right]  \leq P^{\ast}\left[  \max_{0\leq\epsilon\leq\frac{1}{\sqrt
{n}}}\left\vert G(\widehat{\theta}^{\ast}(\epsilon))-G\left(  \widehat{\theta
}\right)  \right\vert >\delta\right]  .
\]
Because
\begin{align*}
G\left(  \widehat{\theta}^{\ast}\left(  \epsilon\right)  \right)  -G\left(
\widehat{\theta}\right)   &  =\left(  G\left(  \widehat{\theta}^{\ast}\left(
\epsilon\right)  \right)  -\widehat{G}\left(  \widehat{\theta}^{\ast}\left(
\epsilon\right)  \right)  \right)  +\left(  \widehat{G}\left(  \widehat{\theta
}^{\ast}\left(  \epsilon\right)  \right)  -\widehat{G}_{\epsilon}^{\ast
}\left(  \widehat{\theta}^{\ast}\left(  \epsilon\right)  \right)  \right) \\
&  +\left(  \widehat{G}_{\epsilon}^{\ast}\left(  \widehat{\theta}^{\ast
}\left(  \epsilon\right)  \right)  -\widehat{G}_{\epsilon}^{\ast}\left(
\widehat{\theta}\right)  \right)  +\left(  \widehat{G}_{\epsilon}^{\ast
}\left(  \widehat{\theta}\right)  -G\left(  \widehat{\theta}\right)  \right)
\end{align*}
and
\[
\left\vert \widehat{G}_{\epsilon}^{\ast}\left(  \theta\right)  -\widehat{G}%
\left(  \theta\right)  \right\vert \leq\left\vert \widehat{G}^{\ast}\left(
\theta\right)  -\widehat{G}\left(  \theta\right)  \right\vert ,
\]
we obtain
\begin{align}
\lefteqn{\max_{0\leq\epsilon\leq\frac{1}{\sqrt{n}}}\left\vert G\left(
\widehat{\theta}^{\ast}\left(  \epsilon\right)  \right)  -G\left(
\widehat{\theta}\right)  \right\vert }\nonumber\\
&  \leq\sup_{\theta\in\Theta}\left\vert \widehat{G}^{\ast}\left(
\theta\right)  -\widehat{G}\left(  \theta\right)  \right\vert +\sup_{\theta
\in\Theta}\left\vert \widehat{G}\left(  \theta\right)  -G\left(
\theta\right)  \right\vert \nonumber\\
&  +\max_{0\leq\epsilon\leq\frac{1}{\sqrt{n}}}\left\vert \widehat{G}%
_{\epsilon}^{\ast}\left(  \widehat{\theta}^{\ast}\left(  \epsilon\right)
\right)  -\widehat{G}_{\epsilon}^{\ast}\left(  \widehat{\theta}\right)
\right\vert +\max_{0\leq\epsilon\leq\frac{1}{\sqrt{n}}}\left\vert
\widehat{G}_{\epsilon}^{\ast}\left(  \widehat{\theta}\right)  -G\left(
\widehat{\theta}\right)  \right\vert \nonumber\\
&  \leq\sup_{\theta\in\Theta}\left\vert \widehat{G}^{\ast}\left(
\theta\right)  -\widehat{G}\left(  \theta\right)  \right\vert +\sup_{\theta
\in\Theta}\left\vert \widehat{G}\left(  \theta\right)  -G\left(
\theta\right)  \right\vert \nonumber\\
&  +\max_{0\leq\epsilon\leq\frac{1}{\sqrt{n}}}\left\vert \widehat{G}%
_{\epsilon}^{\ast}\left(  \widehat{\theta}^{\ast}\left(  \epsilon\right)
\right)  -\widehat{G}_{\epsilon}^{\ast}\left(  \widehat{\theta}\right)
\right\vert +\max_{0\leq\epsilon\leq\frac{1}{\sqrt{n}}}\left\vert
\widehat{G}_{\epsilon}^{\ast}\left(  \widehat{\theta}\right)  -\widehat{G}%
\left(  \widehat{\theta}\right)  \right\vert +\left\vert \widehat{G}\left(
\widehat{\theta}\right)  -G\left(  \widehat{\theta}\right)  \right\vert
\nonumber\\
&  \leq\sup_{\theta\in\Theta}\left\vert \widehat{G}^{\ast}\left(
\theta\right)  -\widehat{G}\left(  \theta\right)  \right\vert +\sup_{\theta
\in\Theta}\left\vert \widehat{G}\left(  \theta\right)  -G\left(
\theta\right)  \right\vert \nonumber\\
&  +\max_{0\leq\epsilon\leq\frac{1}{\sqrt{n}}}\left\vert \widehat{G}%
_{\epsilon}^{\ast}\left(  \widehat{\theta}^{\ast}\left(  \epsilon\right)
\right)  -\widehat{G}_{\epsilon}^{\ast}\left(  \widehat{\theta}\right)
\right\vert +\left\vert \widehat{G}^{\ast}\left(  \widehat{\theta}\right)
-\widehat{G}\left(  \widehat{\theta}\right)  \right\vert +\left\vert
\widehat{G}\left(  \widehat{\theta}\right)  -G\left(  \widehat{\theta}\right)
\right\vert \nonumber\\
&  \leq2\sup_{\theta\in\Theta}\left\vert \widehat{G}^{\ast}\left(
\theta\right)  -\widehat{G}\left(  \theta\right)  \right\vert +2\sup
_{\theta\in\Theta}\left\vert \widehat{G}\left(  \theta\right)  -G\left(
\theta\right)  \right\vert +\max_{0\leq\epsilon\leq\frac{1}{\sqrt{n}}%
}\left\vert \widehat{G}_{\epsilon}^{\ast}\left(  \widehat{\theta}^{\ast
}\left(  \epsilon\right)  \right)  -\widehat{G}_{\epsilon}^{\ast}\left(
\widehat{\theta}\right)  \right\vert \label{G-approx}%
\end{align}
By Lemma \ref{P-rate}, we have
\begin{equation}
P^{\ast}\left[  \sup_{\theta\in\Theta}\left\vert \widehat{G}^{\ast}\left(
\theta\right)  -\widehat{G}\left(  \theta\right)  \right\vert >\frac{\delta
}{6}\right]  =o_{p}\left(  n^{-\frac{23}{3}}\right)  \label{G-approx-1}%
\end{equation}
Conditional on data, $\sup_{\theta\in\Theta}\left\vert \widehat{G}\left(
\theta\right)  -G\left(  \theta\right)  \right\vert >\delta$ is a
non-stochastic event. Therefore, we can write
\[
P^{\ast}\left[  \sup_{\theta\in\Theta}\left\vert \widehat{G}\left(
\theta\right)  -G\left(  \theta\right)  \right\vert >\delta\right]  =1\left\{
\sup_{\theta\in\Theta}\left\vert \widehat{G}\left(  \theta\right)  -G\left(
\theta\right)  \right\vert >\delta\right\}  ,
\]
where $1\left\{  \cdot\right\}  $ denotes an indicator function. For every
$\sigma>0$, we have
\begin{align}
&  \Pr\left[  P^{\ast}\left[  \sup_{\theta\in\Theta}\left\vert \widehat{G}%
\left(  \theta\right)  -G\left(  \theta\right)  \right\vert >\frac{\delta}%
{6}\right]  >\sigma n^{-\frac{23}{3}}\right] \label{P*-bound}\\
&  =\Pr\left[  1\left\{  \sup_{\theta\in\Theta}\left\vert \widehat{G}\left(
\theta\right)  -G\left(  \theta\right)  \right\vert >\frac{\delta}{6}\right\}
>0\right] \nonumber\\
&  =\Pr\left[  \sup_{\theta\in\Theta}\left\vert \widehat{G}\left(
\theta\right)  -G\left(  \theta\right)  \right\vert >\frac{\delta}{6}\right]
=o(1)\nonumber
\end{align}
where the last equality is implied by Lemma \ref{Lem-uniform-convergence}. It
therefore follows that
\begin{equation}
P^{\ast}\left[  \sup_{\theta\in\Theta}\left\vert \widehat{G}\left(
\theta\right)  -G\left(  \theta\right)  \right\vert >\frac{\delta}{6}\right]
=o_{p}\left(  n^{-\frac{23}{3}}\right)  . \label{G-approx-2}%
\end{equation}
Finally,
\begin{align*}
\max_{0\leq\epsilon\leq\frac{1}{\sqrt{n}}}&\left\vert \widehat{G}_{\epsilon
}^{\ast}\left(  \widehat{\theta}^{\ast}\left(  \epsilon\right)  \right)
-\widehat{G}_{\epsilon}^{\ast}\left(  \widehat{\theta}\right)  \right\vert  \\
&\leq\max_{0\leq\epsilon\leq\frac{1}{\sqrt{n}}}\left\vert \widehat{G}%
_{\epsilon}^{\ast}\left(  \widehat{\theta}^{\ast}\left(  \epsilon\right)
\right)  -\widehat{G}\left(  \widehat{\theta}\right)  \right\vert +\max
_{0\leq\epsilon\leq\frac{1}{\sqrt{n}}}\left\vert \widehat{G}_{\epsilon}^{\ast
}\left(  \widehat{\theta}\right)  -\widehat{G}\left(  \widehat{\theta}\right)
\right\vert \\
&  =\max_{0\leq\epsilon\leq\frac{1}{\sqrt{n}}}\left\vert \sup_{\theta
}\widehat{G}_{\epsilon}^{\ast}\left(  \theta\right)  -\sup_{\theta}%
\widehat{G}\left(  \theta\right)  \right\vert +\max_{0\leq\epsilon\leq\frac
{1}{\sqrt{n}}}\left\vert \widehat{G}_{\epsilon}^{\ast}\left(  \widehat{\theta
}\right)  -\widehat{G}\left(  \widehat{\theta}\right)  \right\vert \\
&  \leq\max_{0\leq\epsilon\leq\frac{1}{\sqrt{n}}}\left\vert \sup_{\theta
}\widehat{G}_{\epsilon}^{\ast}\left(  \theta\right)  -\sup_{\theta}%
\widehat{G}\left(  \theta\right)  \right\vert +\left\vert \widehat{G}^{\ast
}\left(  \widehat{\theta}\right)  -\widehat{G}\left(  \widehat{\theta}\right)
\right\vert \\
&  \leq\max_{0\leq\epsilon\leq\frac{1}{\sqrt{n}}}\sup_{\theta}\left\vert
\widehat{G}_{\epsilon}^{\ast}\left(  \theta\right)  -\widehat{G}\left(
\theta\right)  \right\vert +\sup_{\theta}\left\vert \widehat{G}^{\ast}%
(\theta)-\widehat{G}\left(  \theta\right)  \right\vert \\
&  \leq\max_{0\leq\epsilon\leq\frac{1}{\sqrt{n}}}\sup_{\theta}\left\vert
\widehat{G}^{\ast}\left(  \theta\right)  -\widehat{G}\left(  \theta\right)
\right\vert +\sup_{\theta}\left\vert \widehat{G}^{\ast}(\theta)-\widehat{G}%
\left(  \theta\right)  \right\vert \\
&  \leq\sup_{\theta}\left\vert \widehat{G}^{\ast}\left(  \theta\right)
-\widehat{G}\left(  \theta\right)  \right\vert +\sup_{\theta}\left\vert
\widehat{G}^{\ast}(\theta)-\widehat{G}\left(  \theta\right)  \right\vert \\
&  =2\sup_{\theta}\left\vert \widehat{G}^{\ast}\left(  \theta\right)
-\widehat{G}\left(  \theta\right)  \right\vert
\end{align*}
Here, the first equality is based on the definitions of $\widehat{\theta
}^{\ast}\left(  \epsilon\right)  $ and $\widehat{\theta}$. Because
\[
P^{\ast}\left[  \sup_{\theta\in\Theta}\left\vert \widehat{G}^{\ast}\left(
\theta\right)  -\widehat{G}\left(  \theta\right)  \right\vert >\delta\right]
=o_{p}\left(  n^{-\frac{23}{3}}\right)
\]
we can conclude that
\begin{equation}
P^{\ast}\left[  \max_{0\leq\epsilon\leq\frac{1}{\sqrt{n}}}\left\vert
\widehat{G}_{\epsilon}^{\ast}\left(  \widehat{\theta}^{\ast}\left(
\epsilon\right)  \right)  -\widehat{G}_{\epsilon}^{\ast}\left(
\widehat{\theta}\right)  \right\vert >\frac{\delta}{3}\right]  =o_{p}\left(
n^{-\frac{23}{3}}\right)  . \label{G-approx-3}%
\end{equation}
The conclusion follows by combining (\ref{G-approx}) - (\ref{G-approx-3}).

\subsection{Proof of Lemma \ref{multi:k-boot}}

In the same way as in the proof of Lemma \ref{multi:k}
\begin{align*}
&  \int K\left(  z;\widehat{\theta}^{\ast}\left(  \epsilon\right)  \right)
d\widehat{F}_{\epsilon}\left(  z\right)  -\int K\left(  z;\theta_{0}\right)
d\widehat{F}\left(  z\right) \\
&  =\int\frac{\partial K\left(  z;\theta^{\ast}\right)  }{\partial\theta
}\left(  \widehat{\theta}^{\ast}\left(  \epsilon\right)  -\theta_{0}\right)
d\widehat{F}_{\epsilon}\left(  z\right)  +\epsilon\sqrt{n}\int K\left(
z;\theta_{0}\right)  d\left(  \widehat{F}^{\ast}-\widehat{F}\right)  \left(
z\right)
\end{align*}
where $\theta^{\ast}$ is between $\theta_{0}$ and $\widehat{\theta}^{\ast
}\left(  \epsilon\right)  $. Therefore, we have
\begin{align*}
\bigg\vert \int K\left(  z;\theta\left(  \epsilon\right)  \right)
dF_{\epsilon}\left(  z\right)  &-\int K\left(  z;\theta_{0}\right)
d\widehat{F}\left(  z\right)  \bigg\vert  \\
&  \leq\left\vert \widehat{\theta
}^{\ast}\left(  \epsilon\right)  -\theta_{0}\right\vert \cdot\left(  \frac
{1}{n}\sum_{i=1}^{n}M\left(  Z_{i}\right)  +\frac{1}{n}\sum_{i=1}^{n}M\left(
Z_{i}^{\ast}\right)  \right) \\
&  +\left\vert \frac{1}{n}\sum_{i=1}^{n}M\left(  Z_{i}^{\ast}\right)
-\frac{1}{n}\sum_{i=1}^{n}M\left(  Z_{i}\right)  \right\vert
\end{align*}
where $M\left(  \cdot\right)  $ is defined in Condition \ref{HH}. Let $\bar
{M}=\frac{1}{n}\sum_{i=1}^{n}M\left(  Z_{i}\right)  $ and $\bar{M}^{\ast
}=\frac{1}{n}\sum_{i=1}^{n}M\left(  Z_{i}^{\ast}\right)  $. Then, for any
$\eta$ and some $c$
\[
P^{\ast}\left[  \left\vert \widehat{\theta}^{\ast}\left(  \epsilon\right)
-\theta_{0}\right\vert \bar{M}>\eta\right]  \leq P^{\ast}\left[  \left\vert
\widehat{\theta}^{\ast}\left(  \epsilon\right)  -\theta_{0}\right\vert
>\eta/c\right]  +P^{\ast}\left[  \left\vert \bar{M}-E\left[  M\left(
Z_{i}\right)  \right]  \right\vert >c\right]  =o_{p}\left(  n^{-\frac{23}{3}%
}\right)
\]
since $P^{\ast}\left[  \left\vert \bar{M}-E\left[  M\left(  Z_{i}\right)
\right]  \right\vert >c\right]  =1$ with probability equal to $P\left[
\left\vert \bar{M}-E\left[  M\left(  Z_{i}\right)  \right]  \right\vert
>c\right]  =o\left(  n^{-\frac{23}{3}}\right)  $ by Lemma
\ref{modifiedHH:Lem1} and zero otherwise for some $c.$ Then, $P^{\ast}\left[
\left\vert \bar{M}-E\left[  M\left(  Z_{i}\right)  \right]  \right\vert
>c\right]  =o_{p}\left(  n^{-\frac{23}{3}}\right)  $ by the same argument as
in \ref{P*-bound} Moreover,
\[
P^{\ast}\left[  \left\vert \widehat{\theta}^{\ast}\left(  \epsilon\right)
-\theta_{0}\right\vert \left\vert \bar{M}^{\ast}-\bar{M}\right\vert
>\eta\right]  \leq P^{\ast}\left[  \left\vert \widehat{\theta}^{\ast}\left(
\epsilon\right)  -\theta_{0}\right\vert >\eta/c\right]  +P^{\ast}\left[
\left\vert \bar{M}^{\ast}-\bar{M}\right\vert >c\right]  =o_{p}\left(
n^{-\frac{23}{3}}\right)
\]
by Lemmas \ref{P-rate} and \ref{BootstrapThetaEpsilonConv}. It thus follows
that for any $\eta>0,$%
\[
P^{\ast}\left(  \left\vert \int K\left(  z;\theta\left(  \epsilon\right)
\right)  dF_{\epsilon}\left(  z\right)  -\int K\left(  z;\theta_{0}\right)
d\widehat{F}\left(  z\right)  \right\vert >\eta\right)  =o_{p}\left(
n^{-\frac{23}{3}}\right)  .
\]
Finally note that $P^{\ast}\left(  \left\vert \int K\left(  z;\theta
_{0}\right)  d\widehat{F}\left(  z\right)  -EK\left(  z;\theta_{0}\right)
\right\vert >\eta\right)  =1$ with probability
\[
P\left(  \left\vert \int K\left(  z;\theta_{0}\right)  d\widehat{F}\left(
z\right)  -E\left[  K\left(  z;\theta_{0}\right)  \right]  \right\vert
>\eta\right)  =o(n^{-\frac{23}{3}})
\]
by Lemma \ref{modifiedHH:Lem1}. Thus, by the same argument as in
\ref{P*-bound}
\[
P^{\ast}\left(  \left\vert \int K\left(  z;\theta_{0}\right)  d\widehat{F}%
\left(  z\right)  -EK\left(  z;\theta_{0}\right)  \right\vert >\eta\right)
=o_{p}\left(  n^{-\frac{23}{3}}\right)  .
\]

For the second result fix $\delta>0$ arbitrary. Then
\begin{align*}
P^{\ast}\left[  \max_{0\leq\epsilon\leq\frac{1}{\sqrt{n}}}\left\vert \int
K\left(  \cdot;\widehat{\theta}^{\ast}\left(  \epsilon\right)  \right)
d\widehat{\Delta}\right\vert >Cn^{\frac{1}{12}-\upsilon}\right]   &  \leq
P^{\ast}\left[  \sup_{\left\vert \theta-\widehat{\theta}\right\vert <\delta
}\left\vert \int K\left(  \cdot;\theta\right)  d\widehat{\Delta}\right\vert
>Cn^{\frac{1}{12}-\upsilon}\right] \\
&  +P^{\ast}\left[  \max_{0\leq\epsilon\leq\frac{1}{\sqrt{n}}}\left\vert
\widehat{\theta}^{\ast}\left(  \epsilon\right)  -\widehat{\theta}\right\vert
\geq\delta\right]
\end{align*}
where
\[
P^{\ast}\left[  \sup_{\left\vert \theta-\widehat{\theta}\right\vert <\delta
}\left\vert \int K\left(  \cdot;\theta\right)  d\widehat{\Delta}\right\vert
>Cn^{\frac{1}{12}-\upsilon}\right]  =o_{p}\left(  n^{-1+16\upsilon}\right)
\]
follows directly from Lemma \ref{P-rate} and
\[
P^{\ast}\left[  \max_{0\leq\epsilon\leq\frac{1}{\sqrt{n}}}\left\vert
\widehat{\theta}^{\ast}\left(  \epsilon\right)  -\widehat{\theta}\right\vert
\geq\delta\right]  =o_{p}\left(  n^{-\frac{23}{3}}\right)
\]
follows from Lemma \ref{BootstrapThetaEpsilonConv}.

\subsection{Proof of Lemma \ref{boot:bound:theta}}

Let $\bar{M}_{\epsilon}=\int\ell^{\theta}\left(  z,\epsilon\right)
d\widehat{F}_{\epsilon}\left(  z\right)  $ such that
\[
\widehat{\theta}^{\epsilon}\left(  \epsilon\right)  =-\bar{M}_{\epsilon}%
^{-1}\int\ell\left(  \cdot,\epsilon\right)  d\widehat{\Delta}%
\]
and for any $\delta>0$ some $C>0$ and for every $\upsilon$ such that
$\upsilon<\frac{1}{16}$
\begin{align*}
P^{\ast}\left[  \left\vert \widehat{\theta}^{\epsilon}\left(  \epsilon\right)
\right\vert >Cn^{\frac{1}{12}-\upsilon}\right]   &  \leq P^{\ast}\left[
\sup_{\epsilon}\left\vert \int\ell\left(  \cdot,\epsilon\right)
d\widehat{\Delta}\right\vert >\delta Cn^{\frac{1}{12}-\upsilon}\right] \\
&  +P^{\ast}\left[  \sup_{\epsilon}\left\vert \bar{M}_{\epsilon}-E\left[
\ell^{\theta}\left(  z,\theta_{0}\right)  \right]  \right\vert \geq
\delta\right] \\
&  =o_{p}\left(  \max n^{-\frac{23}{3}},n^{-1+16\upsilon}\right)
\end{align*}
by Lemma \ref{multi:k-boot}. The rest of the Lemma can be established similarly.

\subsection{Detailed Derivation for (\ref{E*3})\label{sec-E*3}}

We have%
\begin{align*}
\sqrt{n}B_{n}  &  \equiv-3\mathcal{I}^{-3}\sqrt{n}\left(  \widehat{\mathcal{I}%
}-\mathcal{I}\right)  \mathcal{Q}_{1}\left(  \theta_{0}\right) \\
&  +\mathcal{I}^{-2}\sqrt{n}\left(  \widehat{\mathcal{Q}}_{1}\left(
\widehat{\theta}\right)  -\mathcal{Q}\left(  \theta_{0}\right)  \right) \\
&  +\mathcal{I}^{-3}\mathcal{Q}_{1}\left(  \theta_{0}\right)  \left(  \frac
{1}{\sqrt{n}}%
%TCIMACRO{\tsum _{i=1}^{n}}%
%BeginExpansion
{\textstyle\sum_{i=1}^{n}}
%EndExpansion
\left[  \ell\left(  Z_{i},\theta_{0}\right)  ^{2}-E\left[  \ell\left(
Z_{i},\theta_{0}\right)  ^{2}\right]  \right]  +\sqrt{n}\left(  \bar{m}%
_{4}\left(  \widehat{\theta}\right)  -\bar{m}_{4}\left(  \theta_{0}\right)
\right)  \right) \\
&  -4\mathcal{I}^{-3}\sqrt{n}\left(  \widehat{\mathcal{I}}-\mathcal{I}\right)
E\left[  \ell\left(  Z_{i},\theta_{0}\right)  \ell^{\theta}\left(
Z_{i},\theta\right)  \right] \\
&  +2\mathcal{I}^{-2}\left(  \sqrt{n}\left(  \bar{m}_{3}\left(
\widehat{\theta}\right)  -\bar{m}_{3}\left(  \theta_{0}\right)  \right)
\right. \\
&  \left.  +\sqrt{n}\left[  n^{-1}%
%TCIMACRO{\tsum _{i=1}^{n}}%
%BeginExpansion
{\textstyle\sum_{i=1}^{n}}
%EndExpansion
\ell\left(  Z_{i},\theta_{0}\right)  \ell^{\theta}\left(  Z_{i},\theta
_{0}\right)  -E\left[  \ell\left(  Z_{i},\theta_{0}\right)  \ell^{\theta
}\left(  Z_{i},\theta_{0}\right)  \right]  \right]  \right)
\end{align*}
From Lemma \ref{equiLemma}, we have
\begin{align*}
\sqrt{n}\left(  \widehat{\mathcal{I}}-\mathcal{I}\right)   &  =-V\left(
\theta_{0}\right)  -\mathcal{Q}_{1}\left(  \theta_{0}\right)  \mathcal{I}%
^{-1}U\left(  \theta_{0}\right)  +o_{p}\left(  1\right)  ,\\
\sqrt{n}\left(  \widehat{\mathcal{Q}}_{1}\left(  \widehat{\theta}\right)
-\mathcal{Q}_{1}\left(  \theta_{0}\right)  \right)   &  =W\left(  \theta
_{0}\right)  +\mathcal{Q}_{2}\left(  \theta_{0}\right)  \mathcal{I}%
^{-1}U\left(  \theta_{0}\right)  +o_{p}\left(  1\right)  ,\\
\sqrt{n}\left(  \overline{m}_{4}\left(  \widehat{\theta}\right)  -\overline
{m}_{4}\left(  \theta_{0}\right)  \right)   &  =2E\left[  \ell\left(
Z_{i},\theta_{0}\right)  \ell^{\theta}\left(  Z_{i},\theta_{0}\right)
\right]  \mathcal{I}^{-1}U\left(  \theta_{0}\right)  +o_{p}\left(  1\right)
,\\
\sqrt{n}\left(  \overline{m}_{3}\left(  \widehat{\theta}\right)  -\overline
{m}_{3}\left(  \theta_{0}\right)  \right)   &  =E\left[  \left(  \ell^{\theta
}\left(  Z_{i},\theta_{0}\right)  \right)  ^{2}\right]  \mathcal{I}%
^{-1}U\left(  \theta_{0}\right)  \\
&+E\left[  \ell\left(  Z_{i},\theta
_{0}\right)  \ell^{\theta\theta}\left(  Z_{i},\theta_{0}\right)  \right]
\mathcal{I}^{-1}U\left(  \theta_{0}\right)  +o_{p}\left(  1\right)  ,
\end{align*}

\[
E\left[  \ell^{\theta}\left(  Z_{i},\theta\right)  \right]
\]
Therefore, we have%
\begin{align*}
\sqrt{n}B_{n}  &  =-3\mathcal{I}^{-3}\mathcal{Q}_{1}\left(  \theta_{0}\right)
\left(  -V\left(  \theta_{0}\right)  -\mathcal{Q}_{1}\left(  \theta
_{0}\right)  \mathcal{I}^{-1}U\left(  \theta_{0}\right)  \right) \\
&  +\mathcal{I}^{-2}\left(  W_{i}\left(  \theta_{0}\right)  +\mathcal{Q}%
_{2}\left(  \theta_{0}\right)  \mathcal{I}^{-1}U\left(  \theta_{0}\right)
\right) \\
&  +\mathcal{I}^{-3}\mathcal{Q}_{1}\left(  \theta_{0}\right)  \left(  \frac
{1}{\sqrt{n}}%
%TCIMACRO{\tsum _{i=1}^{n}}%
%BeginExpansion
{\textstyle\sum_{i=1}^{n}}
%EndExpansion
\left[  \ell\left(  Z_{i},\theta_{0}\right)  ^{2}-E\left[  \ell\left(
Z_{i},\theta_{0}\right)  ^{2}\right]  \right]  \right) \\
&  +2\mathcal{I}^{-3}\mathcal{Q}_{1}\left(  \theta_{0}\right)  \left(
E\left[  \ell\left(  Z_{i},\theta_{0}\right)  \ell^{\theta}\left(
Z_{i},\theta_{0}\right)  \right]  \mathcal{I}^{-1}U\left(  \theta_{0}\right)
\right) \\
&  -4\mathcal{I}^{-3}E\left[  \ell\left(  Z_{i},\theta_{0}\right)
\ell^{\theta}\left(  Z_{i},\theta\right)  \right]  \left(  -V\left(
\theta_{0}\right)  -\mathcal{Q}_{1}\left(  \theta_{0}\right)  \mathcal{I}%
^{-1}U\left(  \theta_{0}\right)  \right) \\
&  +2\mathcal{I}^{-2}\left(  E\left[  \left(  \ell^{\theta}\left(
Z_{i},\theta_{0}\right)  \right)  ^{2}\right]  \mathcal{I}^{-1}U\left(
\theta_{0}\right)  +E\left[  \ell\left(  Z_{i},\theta_{0}\right)  \ell
^{\theta\theta}\left(  Z_{i},\theta_{0}\right)  \right]  \mathcal{I}%
^{-1}U\left(  \theta_{0}\right)  \right) \\
&  +2\mathcal{I}^{-2}\sqrt{n}\left(  n^{-1}%
%TCIMACRO{\tsum _{i=1}^{n}}%
%BeginExpansion
{\textstyle\sum_{i=1}^{n}}
%EndExpansion
\ell\left(  Z_{i},\theta_{0}\right)  \ell^{\theta}\left(  Z_{i},\theta
_{0}\right)  -E\left[  \ell\left(  Z_{i},\theta_{0}\right)  \ell^{\theta
}\left(  Z_{i},\theta_{0}\right)  \right]  \right) \\
&  +o_{p}\left(  1\right)
\end{align*}
from which we obtain%
\begin{align*}
2\mathbb{B}  &  \equiv3\mathcal{I}^{-4}\mathcal{Q}_{1}\left(  \theta
_{0}\right)  ^{2}U\left(  \theta_{0}\right)  +\mathcal{I}^{-3}\mathcal{Q}%
_{2}\left(  \theta_{0}\right)  U\left(  \theta_{0}\right) \\
&  -\mathcal{I}^{-4}\left(  \mathcal{Q}_{1}\left(  \theta_{0}\right)  \right)
^{2}U\left(  \theta_{0}\right)  +4\mathcal{I}^{-4}\mathcal{Q}_{1}\left(
\theta_{0}\right)  E\left[  \ell\left(  Z_{i},\theta_{0}\right)  \ell^{\theta
}\left(  Z_{i},\theta\right)  \right]  U\left(  \theta_{0}\right) \\
&  +2\mathcal{I}^{-3}E\left[  U_{i}\left(  \theta_{0}\right)  W_{i}\left(
\theta_{0}\right)  \right]  U\left(  \theta_{0}\right)  +2\mathcal{I}%
^{-3}E\left[  \ell^{\theta}\left(  Z_{i},\theta_{0}\right)  ^{2}\right]
U\left(  \theta_{0}\right) \\
&  +3\mathcal{I}^{-3}\mathcal{Q}_{1}\left(  \theta_{0}\right)  V\left(
\theta_{0}\right)  +4\mathcal{I}^{-3}E\left[  U_{i}\left(  \theta_{0}\right)
V_{i}\left(  \theta_{0}\right)  \right]  V\left(  \theta_{0}\right) \\
&  +\mathcal{I}^{-2}W\left(  \theta_{0}\right) \\
&  +2\mathcal{I}^{-2}n^{-1/2}\left(
%TCIMACRO{\tsum _{i=1}^{n}}%
%BeginExpansion
{\textstyle\sum_{i=1}^{n}}
%EndExpansion
\ell\left(  Z_{i},\theta_{0}\right)  \ell^{\theta}\left(  Z_{i},\theta
_{0}\right)  -E\left[  \ell\left(  Z_{i},\theta_{0}\right)  \ell^{\theta
}\left(  Z_{i},\theta_{0}\right)  \right]  \right) \\
&  +\mathcal{I}^{-3}\mathcal{Q}_{1}\left(  \theta_{0}\right)  n^{-1/2}\left(
%TCIMACRO{\tsum _{i=1}^{n}}%
%BeginExpansion
{\textstyle\sum_{i=1}^{n}}
%EndExpansion
\left[  \ell\left(  Z_{i},\theta_{0}\right)  ^{2}-E\left[  \ell\left(
Z_{i},\theta_{0}\right)  ^{2}\right]  \right]  \right)  .
\end{align*}

\subsection{Proofs of (\ref{r6}), (\ref{r4}) and (\ref{r5})\label{sec-r6}}

From Lemma \ref{multi:bound:theta-epsilon-1}, we have
\[
\Pr\left[  \left\vert \frac{1}{n^{\frac{1}{2}-6\upsilon}}\theta^{\epsilon
\epsilon\epsilon\epsilon\epsilon\epsilon}\left(  \widetilde{\epsilon}\right)
\right\vert >C\right]  \leq\Pr\left[  \max_{0\leq\epsilon\leq\frac{1}{\sqrt
{n}}}\left\vert \theta^{\epsilon\epsilon\epsilon\epsilon\epsilon\epsilon
}\left(  \epsilon\right)  \right\vert >Cn^{\frac{1}{2}-6\upsilon}\right]
=o\left(  1\right)
\]
for every $\upsilon$ such that $\upsilon<\frac{1}{48}$. In particular, we
have
\begin{equation}
\frac{1}{\sqrt{n}}\theta^{\epsilon\epsilon\epsilon\epsilon\epsilon\epsilon
}\left(  \widetilde{\epsilon}\right)  =o_{p}\left(  1\right)  . \label{r6-1}%
\end{equation}
By Lemma \ref{multi:bound:theta-epsilon-1} again, we obtain
\begin{align*}
\Pr\left[  \left\vert \frac{1}{n^{\frac{1}{2}-6\upsilon}}\frac{1}{n}\sum
_{j=1}^{n}\theta_{\left(  j\right)  }^{\epsilon\epsilon\epsilon\epsilon
\epsilon\epsilon}\left(  \widetilde{\epsilon}_{\left(  j\right)  }\right)
\right\vert >C\right]   &  \leq\sum_{j=1}^{n}\Pr\left[  \left\vert \frac
{1}{n^{\frac{1}{2}-6\upsilon}}\theta_{\left(  j\right)  }^{\epsilon
\epsilon\epsilon\epsilon\epsilon\epsilon}\left(  \widetilde{\epsilon}_{\left(
j\right)  }\right)  \right\vert >C\right] \\
&  \leq\sum_{j=1}^{n}\Pr\left[  \max_{0\leq\epsilon\leq\frac{1}{\sqrt{n}}%
}\left\vert \frac{1}{n^{\frac{1}{2}-6\upsilon}}\theta_{\left(  j\right)
}^{\epsilon\epsilon\epsilon\epsilon\epsilon\epsilon}\left(  \epsilon\right)
\right\vert >C\right] \\
&  =n\Pr\left[  \max_{0\leq\epsilon\leq\frac{1}{\sqrt{n}}}\left\vert \frac
{1}{n^{\frac{1}{2}-6\upsilon}}\theta_{\left(  j\right)  }^{\epsilon
\epsilon\epsilon\epsilon\epsilon\epsilon}\left(  \epsilon\right)  \right\vert
>C\right] \\
&  =o\left(  1\right)  .
\end{align*}
Here, the first equality is based on the fact that $Z_{i}$ are i.i.d., so that
$\theta_{\left(  j\right)  }^{\epsilon\epsilon\epsilon\epsilon\epsilon
\epsilon}\left(  \epsilon\right)  $ are identically distributed for
$j=1,\ldots,n$. In particular, we have
\begin{equation}
\frac{1}{\left(  n-1\right)  \sqrt{n}}\sum_{j=1}^{n}\theta_{\left(  j\right)
}^{\epsilon\epsilon\epsilon\epsilon\epsilon\epsilon}\left(
\widetilde{\epsilon}_{\left(  j\right)  }\right)  =o_{p}\left(  1\right)  .
\label{r6-2}%
\end{equation}
Combining (\ref{r6-1}) and (\ref{r6-2}), we obtain (\ref{r6}).

Note that $\theta^{\epsilon\epsilon\epsilon\epsilon}$ is a sum of V-statistic
of order 4 as considered in Lemma \ref{Vjack4}. Likewise, $\theta
^{\epsilon\epsilon\epsilon\epsilon\epsilon}$ is a sum of V-statistic of order
5 as considered in Lemma \ref{Vjack5}. Therefore, combining (\ref{theta-e4})
and (\ref{theta-e5}) with Lemmas \ref{Vjack4} and \ref{Vjack5}, we obtain
\begin{align*}
&  n\theta^{\epsilon\epsilon\epsilon\epsilon}-\frac{n^{2}}{n-1}\frac{1}{n}%
\sum_{j=1}^{n}\theta_{\left(  j\right)  }^{\epsilon\epsilon\epsilon\epsilon
}=O_{p}\left(  1\right) \\
&  n\left(  \theta^{\epsilon\epsilon\epsilon\epsilon\epsilon}-\frac{n\sqrt{n}%
}{\left(  n-1\right)  \sqrt{n-1}}\frac{1}{n}\sum_{j=1}^{n}\theta_{\left(
j\right)  }^{\epsilon\epsilon\epsilon\epsilon\epsilon}\right)  =O_{p}\left(
1\right)
\end{align*}
from which we further obtain (\ref{r4}) and (\ref{r5}).

\subsection{Detailed Derivation for (\ref{r1-3})}

We have
\[
\theta^{\epsilon}=T_{1}=\mathcal{I}^{-1}U\left(  \theta_{0}\right)
=n^{-1/2}\sum_{i=1}^{n}\mathcal{I}^{-1}U_{i}\left(  \theta\right)  ,
\]
so it takes the form of $n^{-1/2}\sum_{i=1}^{n}X_{i}$. We can therefore apply
Lemma \ref{Vjack1} and conclude that%
\begin{equation}
n\theta^{\epsilon}-\sqrt{n}\sqrt{n-1}\frac{1}{n}\sum_{j=1}^{n}\theta_{\left(
j\right)  }^{\epsilon}=\theta^{\epsilon}. \label{r1-3-1}%
\end{equation}

We also have%
\begin{align*}
T_{2}  &  =\frac{1}{2}\theta^{\epsilon\epsilon}=\frac{1}{2}\mathcal{I}%
^{-3}\mathcal{Q}_{1}\left(  \theta_{0}\right)  U\left(  \theta_{0}\right)
^{2}+\mathcal{I}^{-2}U\left(  \theta_{0}\right)  V\left(  \theta_{0}\right) \\
&  =\frac{1}{2}\mathcal{I}^{-3}E\left[  \ell^{\theta\theta}\right]  \left(
n^{-1/2}\sum_{i=1}^{n}U_{i}\left(  \theta\right)  \right)  ^{2}+\mathcal{I}%
^{-2}\left(  n^{-1/2}\sum_{i=1}^{n}U_{i}\left(  \theta\right)  \right)
\left(  n^{-1/2}\sum_{i=1}^{n}V_{i}\left(  \theta\right)  \right) \\
&  =\left(  n^{-1/2}\sum_{i=1}^{n}\mathcal{I}^{-2}U_{i}\left(  \theta\right)
\right)  \left(  n^{-1/2}\sum_{i=1}^{n}\left(  \frac{1}{2}\mathcal{I}%
^{-1}E\left[  \ell^{\theta\theta}\right]  U_{i}\left(  \theta\right)
+V_{i}\left(  \theta\right)  \right)  \right)  ,
\end{align*}
so it takes the form of $\left(  n^{-1/2}\sum_{i=1}^{n}X_{1,i}\right)  \left(
n^{-1/2}\sum_{i=1}^{n}X_{2,i}\right)  $. Because%
\begin{align*}
\frac{1}{n-1}\sum_{i\neq j}X_{1,i}X_{2,j}  &  =\frac{1}{n}\sum_{i\neq
j}X_{1,i}X_{2,j}+\left(  \frac{1}{n-1}-\frac{1}{n}\right)  \sum_{i\neq
j}X_{1,i}X_{2,j}\\
&  =\left(  n^{-1/2}\sum_{i=1}^{n}X_{1,i}\right)  \left(  n^{-1/2}\sum
_{i=1}^{n}X_{2,i}\right) \\
&-\frac{1}{n}\sum_{i=1}^{n}X_{1,i}X_{2,i}+\frac
{1}{n\left(  n-1\right)  }\sum_{i\neq j}X_{1,i}X_{2,j}%
\end{align*}
and if $X_{1,i}$ and $X_{2,j}$ have zero means, we would have%
\begin{align*}
E\left[  \frac{1}{n\left(  n-1\right)  }\sum_{i\neq j}X_{1,i}X_{2,j}\right]
&  =0\\
E\left[  \left(  \frac{1}{n\left(  n-1\right)  }\sum_{i\neq j}X_{1,i}%
X_{2,j}\right)  ^{2}\right]   &  =\frac{n\left(  n-1\right)  E\left[
X_{1,i}^{2}\right]  E\left[  X_{2,j}^{2}\right]  }{n^{2}\left(  n-1\right)
^{2}}=O\left(  \frac{1}{n^{2}}\right)  .
\end{align*}
In other words, we have%
\[
\frac{1}{n-1}\sum_{i\neq j}X_{1,i}X_{2,j}=\left(  n^{-1/2}\sum_{i=1}%
^{n}X_{1,i}\right)  \left(  n^{-1/2}\sum_{i=1}^{n}X_{2,i}\right)  -\frac{1}%
{n}\sum_{i=1}^{n}X_{1,i}X_{2,i}+O_{p}\left(  n^{-1}\right)
\]
under the zero mean assumption. The zero mean assumption is satisfied by
$\mathcal{I}^{-2}U_{i}\left(  \theta\right)  $ and $\frac{1}{2}\mathcal{I}%
^{-1}E\left[  \ell^{\theta\theta}\right]  U_{i}\left(  \theta\right)
+V_{i}\left(  \theta\right)  $, so we can apply Lemma \ref{Vjack2} and
conclude that%
\begin{align*}
\frac{1}{2}\frac{1}{\sqrt{n}}\left(  n\theta^{\epsilon\epsilon}-\sum_{j=1}%
^{n}\theta_{\left(  j\right)  }^{\epsilon\epsilon}\right)   &  =\frac{1}%
{2}\theta^{\epsilon\epsilon}-\frac{1}{n}\sum_{i=1}^{n}\mathcal{I}^{-2}%
U_{i}\left(  \theta\right)  \left(  \frac{1}{2}\mathcal{I}^{-1}E\left[
\ell^{\theta\theta}\right]  U_{i}\left(  \theta\right)  +V_{i}\left(
\theta\right)  \right)  +O_{p}\left(  n^{-1}\right) \\
&  =\frac{1}{2}\theta^{\epsilon\epsilon}\\
&  -\frac{1}{2}\mathcal{I}^{-3}E\left[  \ell^{\theta\theta}\right]  \frac
{1}{n}\sum_{i=1}^{n}\ell\left(  Z_{i},\theta_{0}\right)  ^{2}\\
&  -\mathcal{I}^{-2}\frac{1}{n}\sum_{i=1}^{n}\ell\left(  Z_{i},\theta
_{0}\right)  \left(  \ell^{\theta}\left(  Z_{i},\theta_{0}\right)  -E\left[
\ell^{\theta}\left(  Z_{i},\theta_{0}\right)  \right]  \right)  +O_{p}\left(
n^{-1}\right)  ,
\end{align*}
which we rewrite as%
\begin{align*}
&  \frac{1}{2}\frac{1}{\sqrt{n}}\left(  n\theta^{\epsilon\epsilon}-\sum
_{j=1}^{n}\theta_{\left(  j\right)  }^{\epsilon\epsilon}\right) \\
&  =\frac{1}{2}\theta^{\epsilon\epsilon}-\frac{1}{n}\sum_{i=1}^{n}%
\mathcal{I}^{-2}U_{i}\left(  \theta\right)  \left(  \frac{1}{2}\mathcal{I}%
^{-1}E\left[  \ell^{\theta\theta}\right]  U_{i}\left(  \theta\right)
+V_{i}\left(  \theta\right)  \right)  +O_{p}\left(  n^{-1}\right) \\
&  =\frac{1}{2}\theta^{\epsilon\epsilon}-\frac{1}{2}\mathcal{I}^{-3}E\left[
\ell^{\theta\theta}\right]  E\left[  \ell\left(  Z_{i},\theta_{0}\right)
^{2}\right] \\
&  -\mathcal{I}^{-2}E\left[  \ell\left(  Z_{i},\theta_{0}\right)  \left(
\ell^{\theta}\left(  Z_{i},\theta_{0}\right)  -E\left[  \ell^{\theta}\left(
Z_{i},\theta_{0}\right)  \right]  \right)  \right] \\
&  -\frac{1}{2}\mathcal{I}^{-3}E\left[  \ell^{\theta\theta}\right]  \frac
{1}{n}\sum_{i=1}^{n}\left(  \ell\left(  Z_{i},\theta_{0}\right)  ^{2}-E\left[
\ell\left(  Z_{i},\theta_{0}\right)  ^{2}\right]  \right) \\
&  -\mathcal{I}^{-2}\frac{1}{n}\sum_{i=1}^{n}\left(
\begin{array}
[c]{c}%
\ell\left(  Z_{i},\theta_{0}\right)  \left(  \ell^{\theta}\left(  Z_{i}%
,\theta_{0}\right)  +\mathcal{I}\right) \\
-E\left[  \ell\left(  Z_{i},\theta_{0}\right)  \left(  \ell^{\theta}\left(
Z_{i},\theta_{0}\right)  +\mathcal{I}\right)  \right]
\end{array}
\right)  +O_{p}\left(  n^{-1}\right)  .
\end{align*}
Recalling that $E\left[  \ell\left(  Z_{i},\theta_{0}\right)  \right]  =0$ and
using the information equality, we obtain the simplification%
\begin{align*}
\frac{1}{2}\frac{1}{\sqrt{n}}\left(  n\theta^{\epsilon\epsilon}-\sum_{j=1}%
^{n}\theta_{\left(  j\right)  }^{\epsilon\epsilon}\right)   &  =\frac{1}%
{2}\theta^{\epsilon\epsilon}-\frac{1}{2}\mathcal{I}^{-2}E\left[  \ell
^{\theta\theta}\right]  -\mathcal{I}^{-2}E\left[  \ell\left(  Z_{i},\theta
_{0}\right)  \ell^{\theta}\left(  Z_{i},\theta_{0}\right)  \right] \\
&  -\frac{1}{2}\mathcal{I}^{-3}E\left[  \ell^{\theta\theta}\right]  \frac
{1}{n}\sum_{i=1}^{n}\left(  \ell\left(  Z_{i},\theta_{0}\right)  ^{2}-E\left[
\ell\left(  Z_{i},\theta_{0}\right)  ^{2}\right]  \right) \\
&  -\mathcal{I}^{-2}\frac{1}{n}\sum_{i=1}^{n}\left(  \ell\left(  Z_{i}%
,\theta_{0}\right)  \ell^{\theta}\left(  Z_{i},\theta_{0}\right)  -E\left[
\ell\left(  Z_{i},\theta_{0}\right)  \ell^{\theta}\left(  Z_{i},\theta
_{0}\right)  \right]  \right) \\
&  -\mathcal{I}^{-1}\frac{1}{n}\sum_{i=1}^{n}\ell\left(  Z_{i},\theta
_{0}\right) \\
&  +O_{p}\left(  n^{-1}\right)  ,
\end{align*}
or%
\begin{align}
\frac{1}{2}\frac{1}{\sqrt{n}}&\left(  n\theta^{\epsilon\epsilon}-\sum_{j=1}%
^{n}\theta_{\left(  j\right)  }^{\epsilon\epsilon}\right)  \\
&  =\frac{1}%
{2}\theta^{\epsilon\epsilon}-b\left(  \theta_{0}\right) \nonumber\\
&  -\frac{1}{2}\mathcal{I}^{-3}E\left[  \ell^{\theta\theta}\right]  \frac
{1}{\sqrt{n}}\left(  \frac{1}{\sqrt{n}}\sum_{i=1}^{n}\left(  \ell\left(
Z_{i},\theta_{0}\right)  ^{2}-E\left[  \ell\left(  Z_{i},\theta_{0}\right)
^{2}\right]  \right)  \right) \nonumber\\
&  -\mathcal{I}^{-2}\frac{1}{\sqrt{n}}\left(  \frac{1}{\sqrt{n}}\sum_{i=1}%
^{n}\left(  \ell\left(  Z_{i},\theta_{0}\right)  \ell^{\theta}\left(
Z_{i},\theta_{0}\right)  -E\left[  \ell\left(  Z_{i},\theta_{0}\right)
\ell^{\theta}\left(  Z_{i},\theta_{0}\right)  \right]  \right)  \right)
\nonumber\\
&  -\mathcal{I}^{-1}\frac{1}{\sqrt{n}}U\left(  \theta_{0}\right)
+O_{p}\left(  n^{-1}\right)  . \label{r1-3-2}%
\end{align}

As for
\begin{align*}
\frac{1}{6}\theta^{\epsilon\epsilon\epsilon}  &  =T_{3}=\frac{1}{6}%
\mathcal{I}^{-4}\mathcal{Q}_{2}\left(  \theta_{0}\right)  U\left(  \theta
_{0}\right)  ^{3}+\frac{1}{2}\mathcal{I}^{-5}\mathcal{Q}_{1}\left(  \theta
_{0}\right)  ^{2}U\left(  \theta_{0}\right)  ^{3}+\frac{3}{2}\mathcal{I}%
^{-4}\mathcal{Q}_{1}\left(  \theta\right)  U\left(  \theta_{0}\right)
^{2}V\left(  \theta_{0}\right) \\
&  +\frac{1}{2}\mathcal{I}^{-3}U\left(  \theta_{0}\right)  ^{2}W\left(
\theta_{0}\right)  +\mathcal{I}^{-3}U\left(  \theta_{0}\right)  V\left(
\theta_{0}\right)  ^{2}.
\end{align*}
we see that all the terms on the right takes the form of \\ $\left(  n^{-1/2}%
\sum_{i=1}^{n}X_{1,i}\right)  \left(  n^{-1/2}\sum_{i=1}^{n}X_{2,i}\right)
\left(  n^{-1/2}\sum_{i=1}^{n}X_{3,i}\right)  $ where $X_{1,i}$, $X_{2,i}$
and$X_{3,i}$ have zero means. The zero mean assumption implies that%
\[
\frac{n^{2}}{\left(  n-1\right)  ^{2}}\left(  \sqrt{n}\overline{X}_{1}\right)
\left(  \frac{1}{n}\sum_{i=1}^{n}X_{2,i}X_{3,i}\right)  =E\left[
X_{2,i}X_{3,i}\right]  n^{-1/2}\sum_{i=1}^{n}X_{1,i}+o_{p}\left(  1\right)
\]
and likewise%
\begin{align*}
\frac{n^{2}}{\left(  n-1\right)  ^{2}}\left(  \sqrt{n}\overline{X}_{2}\right)
\left(  \frac{1}{n}\sum_{i=1}^{n}X_{3,i}X_{1,i}\right)   &  =E\left[
X_{3,i}X_{1,i}\right]  n^{-1/2}\sum_{i=1}^{n}X_{2,i}+o_{p}\left(  1\right)
,\\
\frac{n^{2}}{\left(  n-1\right)  ^{2}}\left(  \sqrt{n}\overline{X}_{3}\right)
\left(  \frac{1}{n}\sum_{i=1}^{n}X_{1,i}X_{2,i}\right)   &  =E\left[
X_{1,i}X_{2,i}\right]  n^{-1/2}\sum_{i=1}^{n}X_{3,i}+o_{p}\left(  1\right)  .
\end{align*}
Finally,%
\[
\frac{n\sqrt{n}}{\left(  n-1\right)  ^{2}}\left(  \frac{1}{n}\sum_{i=1}%
^{n}X_{1,i}X_{2,i}X_{3,i}\right)  =o_{p}\left(  1\right)  .
\]
It follows that Lemma \ref{Vjack3} applied to $\frac{1}{6}\frac{1}{n}\left(
n\theta^{\epsilon\epsilon\epsilon}-\sqrt{\frac{n}{n-1}}\sum_{j=1}^{n}%
\theta_{\left(  j\right)  }^{\epsilon\epsilon\epsilon}\right)  $ results in%
\begin{align*}
&  \frac{1}{6}\frac{1}{n}\left(  n\theta^{\epsilon\epsilon\epsilon}%
-\sqrt{\frac{n}{n-1}}\sum_{j=1}^{n}\theta_{\left(  j\right)  }^{\epsilon
\epsilon\epsilon}\right) \\
&  =\frac{1}{6}\frac{1}{n}\theta^{\epsilon\epsilon\epsilon}-\frac{1}%
{2}\mathcal{I}^{-4}\mathcal{Q}_{2}\left(  \theta_{0}\right)  E\left[
U_{i}^{2}\right]  U\left(  \theta_{0}\right) \\
&  -\frac{3}{2}\mathcal{I}^{-5}\mathcal{Q}_{1}\left(  \theta_{0}\right)
^{2}E\left[  U_{i}^{2}\right]  U\left(  \theta_{0}\right) \\
&  -3\mathcal{I}^{-4}\mathcal{Q}_{1}\left(  \theta\right)  E\left[  U_{i}%
V_{i}\right]  U\left(  \theta_{0}\right)  -\frac{3}{2}\mathcal{I}%
^{-4}\mathcal{Q}_{1}\left(  \theta\right)  E\left[  U_{i}^{2}\right]  V\left(
\theta_{0}\right) \\
&  -\mathcal{I}^{-3}E\left[  U_{i}W_{i}\right]  U\left(  \theta_{0}\right)
-\frac{1}{2}\mathcal{I}^{-3}E\left[  U_{i}^{2}\right]  W\left(  \theta
_{0}\right) \\
&  -2\mathcal{I}^{-3}E\left[  U_{i}V_{i}\right]  V\left(  \theta_{0}\right)
-\mathcal{I}^{-3}E\left[  V_{i}^{2}\right]  U\left(  \theta_{0}\right)
+o_{p}\left(  1\right)  .
\end{align*}
Using information equality, we simplify a bit and obtain
\begin{align*}
&  \frac{1}{6}\frac{1}{n}\left(  n\theta^{\epsilon\epsilon\epsilon}%
-\sqrt{\frac{n}{n-1}}\sum_{j=1}^{n}\theta_{\left(  j\right)  }^{\epsilon
\epsilon\epsilon}\right) \\
&  =\frac{1}{6}\frac{1}{n}\theta^{\epsilon\epsilon\epsilon}-\frac{1}%
{2}\mathcal{I}^{-3}E\left[  \ell^{\theta\theta\theta}\right]  U\left(
\theta_{0}\right) \\
&  -\frac{3}{2}\mathcal{I}^{-4}\left(  E\left[  \ell^{\theta\theta}\right]
\right)  ^{2}U\left(  \theta_{0}\right) \\
&  -3\mathcal{I}^{-4}E\left[  \ell^{\theta\theta}\right]  E\left[  \ell
\ell^{\theta}\right]  U\left(  \theta_{0}\right)  -\frac{3}{2}\mathcal{I}%
^{-3}E\left[  \ell^{\theta\theta}\right]  V\left(  \theta_{0}\right) \\
&  -\mathcal{I}^{-3}E\left[  \ell\ell^{\theta\theta}\right]  U\left(
\theta_{0}\right)  -\frac{1}{2}\mathcal{I}^{-2}W\left(  \theta_{0}\right) \\
&  -2\mathcal{I}^{-3}E\left[  \ell\ell^{\theta}\right]  V\left(  \theta
_{0}\right)  -\mathcal{I}^{-3}\left(  E\left[  \left(  \ell^{\theta}\right)
^{2}\right]  -\mathcal{I}^{2}\right)  U\left(  \theta_{0}\right)
+o_{p}\left(  1\right)
\end{align*}
or%
\begin{align}
&  \frac{1}{6}\frac{1}{n}\left(  n\theta^{\epsilon\epsilon\epsilon}%
-\sqrt{\frac{n}{n-1}}\sum_{j=1}^{n}\theta_{\left(  j\right)  }^{\epsilon
\epsilon\epsilon}\right) \nonumber\\
&  =\frac{1}{6}\frac{1}{n}\theta^{\epsilon\epsilon\epsilon}-\frac{1}{2}\left(
\mathcal{I}^{-3}E\left[  \ell^{\theta\theta\theta}\right]  +3\mathcal{I}%
^{-4}\left(  E\left[  \ell^{\theta\theta}\right]  \right)  ^{2}\right)
U\left(  \theta_{0}\right) \nonumber\\
&  -3\mathcal{I}^{-4}E\left[  \ell^{\theta\theta}\right]  E\left[  \ell
\ell^{\theta}\right]  U\left(  \theta_{0}\right)  -\frac{3}{2}\mathcal{I}%
^{-3}E\left[  \ell^{\theta\theta}\right]  V\left(  \theta_{0}\right)
\nonumber\\
&  -\mathcal{I}^{-3}E\left[  \ell\ell^{\theta\theta}\right]  U\left(
\theta_{0}\right)  -\frac{1}{2}\mathcal{I}^{-2}W\left(  \theta_{0}\right)
\nonumber\\
&  -2\mathcal{I}^{-3}E\left[  \ell\ell^{\theta}\right]  V\left(  \theta
_{0}\right)  -\mathcal{I}^{-3}E\left[  \left(  \ell^{\theta}\right)
^{2}\right]  U\left(  \theta_{0}\right)  +\mathcal{I}^{-1}U\left(  \theta
_{0}\right)  +o_{p}\left(  1\right)  . \label{r1-3-3}%
\end{align}

Combining (\ref{r1-3-1}), (\ref{r1-3-2}), and (\ref{r1-3-3}), we obtain
\begin{align*}
\lefteqn{\left(  n\theta^{\epsilon}-\sqrt{n}\sqrt{n-1}\frac{1}{n}\sum
_{j=1}^{n}\theta_{\left(  j\right)  }^{\epsilon}\right)  +\frac{1}{2}\frac
{1}{\sqrt{n}}\left(  n\theta^{\epsilon\epsilon}-\sum_{j=1}^{n}\theta_{\left(
j\right)  }^{\epsilon\epsilon}\right)  }\\
&  +\frac{1}{6}\frac{1}{n}\left(  n\theta^{\epsilon\epsilon\epsilon}%
-\sqrt{\frac{n}{n-1}}\sum_{j=1}^{n}\theta_{\left(  j\right)  }^{\epsilon
\epsilon\epsilon}\right) \\
&  =\theta^{\epsilon}+\frac{1}{\sqrt{n}}\left(  \frac{1}{2}\theta
^{\epsilon\epsilon}-b\left(  \theta_{0}\right)  \right)  +\frac{1}{6}\frac
{1}{n}\theta^{\epsilon\epsilon\epsilon}-n^{-1}\left(  \mathbb{J}+o_{p}\left(
1\right)  \right)  ,
\end{align*}
where%
\begin{align*}
\mathbb{J}  &  =\frac{1}{2}\mathcal{I}^{-3}E\left[  \ell^{\theta\theta
}\right]  \left(  \frac{1}{\sqrt{n}}\sum_{i=1}^{n}\left(  \ell\left(
Z_{i},\theta_{0}\right)  ^{2}-E\left[  \ell\left(  Z_{i},\theta_{0}\right)
^{2}\right]  \right)  \right) \\
&  +\mathcal{I}^{-2}\left(  \frac{1}{\sqrt{n}}\sum_{i=1}^{n}\left(
\ell\left(  Z_{i},\theta_{0}\right)  \ell^{\theta}\left(  Z_{i},\theta
_{0}\right)  -E\left[  \ell\left(  Z_{i},\theta_{0}\right)  \ell^{\theta
}\left(  Z_{i},\theta_{0}\right)  \right]  \right)  \right) \\
&  +\mathcal{I}^{-1}U\left(  \theta_{0}\right) \\
&  +\frac{1}{2}\left(  \mathcal{I}^{-3}E\left[  \ell^{\theta\theta\theta
}\right]  +3\mathcal{I}^{-4}\left(  E\left[  \ell^{\theta\theta}\right]
\right)  ^{2}\right)  U\left(  \theta_{0}\right) \\
&  +3\mathcal{I}^{-4}E\left[  \ell^{\theta\theta}\right]  E\left[  \ell
\ell^{\theta}\right]  U\left(  \theta_{0}\right)  +\frac{3}{2}\mathcal{I}%
^{-3}E\left[  \ell^{\theta\theta}\right]  V\left(  \theta_{0}\right) \\
&  +\mathcal{I}^{-3}E\left[  \ell\ell^{\theta\theta}\right]  U\left(
\theta_{0}\right)  +\frac{1}{2}\mathcal{I}^{-2}W\left(  \theta_{0}\right) \\
&  +2\mathcal{I}^{-3}E\left[  \ell\ell^{\theta}\right]  V\left(  \theta
_{0}\right)  +\mathcal{I}^{-3}E\left[  \left(  \ell^{\theta}\right)
^{2}\right]  U\left(  \theta_{0}\right)  -\mathcal{I}^{-1}U\left(  \theta
_{0}\right)  .
\end{align*}
Organizing a bit, we obtain
\begin{align*}
2\mathbb{J}  &  =\left(  \mathcal{I}^{-3}E\left[  \ell^{\theta\theta\theta
}\right]  +3\mathcal{I}^{-4}\left(  E\left[  \ell^{\theta\theta}\right]
\right)  ^{2}\right)  U\left(  \theta_{0}\right)  +6\mathcal{I}^{-4}E\left[
\ell^{\theta\theta}\right]  E\left[  U_{i}V_{i}\right]  U\left(  \theta
_{0}\right) \\
&  +\frac{3}{\mathcal{I}^{3}}E\left[  \ell^{\theta\theta}\right]  V\left(
\theta_{0}\right)  +\frac{2}{\mathcal{I}^{3}}E\left[  U_{i}W_{i}\right]
U\left(  \theta_{0}\right)  +\frac{1}{\mathcal{I}^{2}}W(\theta_{0})\\
&  +\frac{2}{\mathcal{I}^{3}}E\left[  V_{i}^{2}\right]  U\left(  \theta
_{0}\right)  +\frac{4}{\mathcal{I}^{3}}E\left[  U_{i}V_{i}\right]
V(\theta_{0})\\
&  +\mathcal{I}^{-3}E\left[  \ell^{\theta\theta}\right]  n^{-1/2}\left(
%TCIMACRO{\tsum _{i=1}^{n}}%
%BeginExpansion
{\textstyle\sum_{i=1}^{n}}
%EndExpansion
\left[  \ell\left(  Z_{i},\theta_{0}\right)  ^{2}-E\left[  \ell\left(
Z_{i},\theta_{0}\right)  ^{2}\right]  \right]  \right) \\
&  +2\mathcal{I}^{-2}n^{-1/2}\left(
%TCIMACRO{\tsum _{i=1}^{n}}%
%BeginExpansion
{\textstyle\sum_{i=1}^{n}}
%EndExpansion
\ell\left(  Z_{i},\theta_{0}\right)  \ell^{\theta}\left(  Z_{i},\theta
_{0}\right)  -E\left[  \ell\left(  Z_{i},\theta_{0}\right)  \ell^{\theta
}\left(  Z_{i},\theta_{0}\right)  \right]  \right)  .
\end{align*}

\section{Moments of Bootstrapped and Jackknifed Statistics\label{sec-bootm}}

\begin{lemma}
\label{L1}Let $X_{i}^{\ast}=\tau\left(  Z_{i}^{\ast},\widehat{\theta}\right)
$ be some transformation of $Z_{i}^{\ast}$, where $\tau$ possibly depends on
the sample $\left\{  Z_{i}\right\}  _{i=1}^{n}$ through $\widehat{\theta}$.
Then
\[
E^{\ast}\left[  \tfrac{1}{\sqrt{n}}%
%TCIMACRO{\tsum _{i=1}^{n}}%
%BeginExpansion
{\textstyle\sum_{i=1}^{n}}
%EndExpansion
X_{i}^{\ast}\right]  =\tfrac{1}{\sqrt{n}}%
%TCIMACRO{\tsum _{i=1}^{n}}%
%BeginExpansion
{\textstyle\sum_{i=1}^{n}}
%EndExpansion
X_{i}%
\]
where $X_{i}=\tau\left(  Z_{i},\widehat{\theta}\right)  $.
\end{lemma}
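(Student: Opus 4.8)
The plan is to exploit the fact that $E^{\ast}$ is a \emph{conditional} expectation taken with respect to the empirical distribution $\widehat{F}$, holding the original sample $\{Z_i\}_{i=1}^n$ fixed. The crucial observation is that, under $E^{\ast}$, the estimator $\widehat{\theta}$ is a non-stochastic constant, since it is a function of the already-realized data; hence $\tau(\cdot,\widehat{\theta})$ is simply a fixed measurable function, and the only randomness comes from the bootstrap draws $Z_1^{\ast},\dots,Z_n^{\ast}$. Each $Z_i^{\ast}$ is drawn independently from $\widehat{F}$, which places mass $1/n$ on each observed point $Z_j$, so the $Z_i^{\ast}$ are identically distributed conditional on the sample.

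First I would invoke linearity of the conditional expectation to write $E^{\ast}\big[\tfrac{1}{\sqrt{n}}\sum_{i=1}^n X_i^{\ast}\big]=\tfrac{1}{\sqrt{n}}\sum_{i=1}^n E^{\ast}[X_i^{\ast}]$. Next I would compute a single term: because $Z_i^{\ast}$ has the distribution $\widehat{F}$ and $\widehat{\theta}$ is frozen under $E^{\ast}$,
\[
E^{\ast}[X_i^{\ast}]=E^{\ast}\big[\tau(Z_i^{\ast},\widehat{\theta})\big]=\sum_{j=1}^n \frac{1}{n}\,\tau(Z_j,\widehat{\theta})=\frac{1}{n}\sum_{j=1}^n X_j,
\]
which is independent of the index $i$. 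Substituting this back and summing over the $n$ values of $i$ gives
\[
\frac{1}{\sqrt{n}}\sum_{i=1}^n E^{\ast}[X_i^{\ast}]=\frac{1}{\sqrt{n}}\cdot n\cdot\frac{1}{n}\sum_{j=1}^n X_j=\frac{1}{\sqrt{n}}\sum_{j=1}^n X_j,
\]
which is exactly the claimed identity after relabeling the summation index.

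There is essentially no substantive obstacle here; the result is an immediate consequence of the definition of the bootstrap expectation. The only point requiring care—and the one I would state explicitly—is that $\widehat{\theta}$ must be treated as a constant when evaluating $E^{\ast}$, so that no additional variation enters through the argument $\widehat{\theta}$ of $\tau$. Once that is noted, the computation of $E^{\ast}[\tau(Z_i^{\ast},\widehat{\theta})]$ as the empirical average of $\tau(Z_j,\widehat{\theta})$ is routine, and the factor of $n$ from summing over $i$ cancels against the $1/n$ weight of the empirical distribution.
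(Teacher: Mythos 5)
Your proof is correct and follows essentially the same route as the paper: compute $E^{\ast}[X_{i}^{\ast}]=\tfrac{1}{n}\sum_{j=1}^{n}X_{j}$ directly from the definition of the empirical distribution and then apply linearity of $E^{\ast}$. Your explicit remark that $\widehat{\theta}$ is held fixed under the bootstrap expectation is a useful clarification that the paper leaves implicit, but it does not change the argument.
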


\begin{lemma}
\label{L2}Let $X_{k,i}^{\ast}=\tau_{k}\left(  Z_{i}^{\ast},\widehat{\theta
}\right)  $ for $k=1,2$ be some transformation of $Z_{i}^{\ast}$, where
$\tau_{k}$ possibly depends on the sample $\left\{  Z_{i}\right\}  _{i=1}^{n}$
through $\widehat{\theta}$. Then
\begin{align*}
&  E^{\ast}\left[  \left(  \tfrac{1}{\sqrt{n}}%
%TCIMACRO{\tsum _{i=1}^{n}}%
%BeginExpansion
{\textstyle\sum_{i=1}^{n}}
%EndExpansion
X_{1,i}^{\ast}\right)  \left(  \tfrac{1}{\sqrt{n}}%
%TCIMACRO{\tsum _{i=1}^{n}}%
%BeginExpansion
{\textstyle\sum_{i=1}^{n}}
%EndExpansion
X_{2,i}^{\ast}\right)  \right] \\
&  =\tfrac{1}{n}%
%TCIMACRO{\tsum _{i=1}^{n}}%
%BeginExpansion
{\textstyle\sum_{i=1}^{n}}
%EndExpansion
X_{1,i}X_{2,i}+\dfrac{n-1}{n}\left(  \tfrac{1}{\sqrt{n}}%
%TCIMACRO{\tsum _{i=1}^{n}}%
%BeginExpansion
{\textstyle\sum_{i=1}^{n}}
%EndExpansion
X_{1,i}\right)  \left(  \tfrac{1}{\sqrt{n}}%
%TCIMACRO{\tsum _{i=1}^{n}}%
%BeginExpansion
{\textstyle\sum_{i=1}^{n}}
%EndExpansion
X_{2,i}\right)
\end{align*}
where $X_{k,i}=\tau_{k}(Z_{i},\widehat{\theta}).$
\end{lemma}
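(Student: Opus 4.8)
The plan is to evaluate the bootstrap expectation $E^{\ast}$ directly, exploiting the fact that, conditional on the original sample $\left\{ Z_{i}\right\} _{i=1}^{n}$, the statistic $\widehat{\theta}$ is nonrandom. Consequently the maps $\tau_{1}\left(\cdot,\widehat{\theta}\right)$ and $\tau_{2}\left(\cdot,\widehat{\theta}\right)$ are fixed functions under $E^{\ast}$, the values $X_{k,i}=\tau_{k}\left(Z_{i},\widehat{\theta}\right)$ are constants, and the bootstrap draws $Z_{1}^{\ast},\ldots,Z_{n}^{\ast}$ are i.i.d.\ draws from the empirical measure $\widehat{F}$, which places mass $1/n$ on each $Z_{k}$. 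This is the second-moment analogue of Lemma \ref{L1}, and the first step is simply to expand the product of the two scaled sums into a finite double sum,
\[
\left(\tfrac{1}{\sqrt{n}}\sum_{i=1}^{n}X_{1,i}^{\ast}\right)\left(\tfrac{1}{\sqrt{n}}\sum_{j=1}^{n}X_{2,j}^{\ast}\right)=\tfrac{1}{n}\sum_{i=1}^{n}\sum_{j=1}^{n}X_{1,i}^{\ast}X_{2,j}^{\ast},
\]
and then to pass $E^{\ast}$ through this sum by linearity.

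Next I would split the double sum into its diagonal part ($i=j$) and its off-diagonal part ($i\neq j$) and treat each separately. On the diagonal, $Z_{i}^{\ast}$ is a single draw uniform on $\left\{ Z_{1},\ldots,Z_{n}\right\}$, so $E^{\ast}\left[X_{1,i}^{\ast}X_{2,i}^{\ast}\right]=\tfrac{1}{n}\sum_{k=1}^{n}X_{1,k}X_{2,k}$ for each of the $n$ such terms. Off the diagonal, $Z_{i}^{\ast}$ and $Z_{j}^{\ast}$ are independent bootstrap draws, so the expectation factorizes as $E^{\ast}\left[X_{1,i}^{\ast}X_{2,j}^{\ast}\right]=E^{\ast}\left[X_{1,i}^{\ast}\right]E^{\ast}\left[X_{2,j}^{\ast}\right]$, and by the computation underlying Lemma \ref{L1} each factor equals the corresponding empirical mean $\tfrac{1}{n}\sum_{k}X_{1,k}$ and $\tfrac{1}{n}\sum_{l}X_{2,l}$; there are $n(n-1)$ such terms.

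Collecting the two contributions gives
\[
E^{\ast}\left[\tfrac{1}{n}\sum_{i,j}X_{1,i}^{\ast}X_{2,j}^{\ast}\right]=\tfrac{1}{n}\sum_{k}X_{1,k}X_{2,k}+\frac{n(n-1)}{n}\left(\tfrac{1}{n}\sum_{k}X_{1,k}\right)\left(\tfrac{1}{n}\sum_{l}X_{2,l}\right),
\]
and rewriting $\tfrac{1}{n}\sum_{k}X_{j,k}=\tfrac{1}{\sqrt{n}}\bigl(\tfrac{1}{\sqrt{n}}\sum_{k}X_{j,k}\bigr)$ converts the product of empirical means into $\tfrac{n-1}{n}\bigl(\tfrac{1}{\sqrt{n}}\sum_{i}X_{1,i}\bigr)\bigl(\tfrac{1}{\sqrt{n}}\sum_{i}X_{2,i}\bigr)$, which is exactly the claimed identity. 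There is no analytic difficulty here; the computation is exact and finite. The only point requiring care is the bookkeeping of the $1/\sqrt{n}$ normalizations and the term counts ($n$ diagonal versus $n(n-1)$ off-diagonal), so that the $\tfrac{n-1}{n}$ factor and the residual empirical covariance term emerge in precisely the stated form.
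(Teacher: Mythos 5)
Your proposal is correct and follows essentially the same route as the paper's own proof: expand the product into a double sum, split into the $n$ diagonal terms (each with bootstrap expectation $\tfrac{1}{n}\sum_{k}X_{1,k}X_{2,k}$) and the $n(n-1)$ off-diagonal terms (which factorize by independence of distinct bootstrap draws into products of empirical means), and collect. The bookkeeping of the normalizations matches the paper exactly.
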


\begin{lemma}
\label{L3}Let $X_{k,i}^{\ast}=\tau_{k}\left(  Z_{i}^{\ast},\widehat{\theta
}\right)  $ for $k=1,2$ be some transformation of $Z_{i}^{\ast}$, where
$\tau_{k}$ possibly depends on the sample $\left\{  Z_{i}\right\}  _{i=1}^{n}$
through $\widehat{\theta}.$ Then
\begin{align*}
&  E^{\ast}\left[  \left(  \tfrac{1}{\sqrt{n}}%
%TCIMACRO{\tsum _{i=1}^{n}}%
%BeginExpansion
{\textstyle\sum_{i=1}^{n}}
%EndExpansion
X_{1,i}^{\ast}\right)  \left(  \tfrac{1}{\sqrt{n}}%
%TCIMACRO{\tsum _{i=1}^{n}}%
%BeginExpansion
{\textstyle\sum_{i=1}^{n}}
%EndExpansion
X_{2,i}^{\ast}\right)  \left(  \tfrac{1}{\sqrt{n}}%
%TCIMACRO{\tsum _{i=1}^{n}}%
%BeginExpansion
{\textstyle\sum_{i=1}^{n}}
%EndExpansion
X_{3,i}^{\ast}\right)  \right] \\
&  =\tfrac{1}{\sqrt{n}}\tfrac{1}{n}%
%TCIMACRO{\tsum _{j=1}^{n}}%
%BeginExpansion
{\textstyle\sum_{j=1}^{n}}
%EndExpansion
X_{1,j}X_{2,j}X_{3,j}+\frac{n-1}{n}\left(  \tfrac{1}{n}%
%TCIMACRO{\tsum _{j=1}^{n}}%
%BeginExpansion
{\textstyle\sum_{j=1}^{n}}
%EndExpansion
X_{1,j}X_{2,j}\right)  \left(  \tfrac{1}{\sqrt{n}}%
%TCIMACRO{\tsum _{j=1}^{n}}%
%BeginExpansion
{\textstyle\sum_{j=1}^{n}}
%EndExpansion
X_{3,j}\right) \\
&  +\frac{n-1}{n}\left(  \tfrac{1}{n}%
%TCIMACRO{\tsum _{j=1}^{n}}%
%BeginExpansion
{\textstyle\sum_{j=1}^{n}}
%EndExpansion
X_{3,j}X_{1,j}\right)  \left(  \tfrac{1}{\sqrt{n}}%
%TCIMACRO{\tsum _{j=1}^{n}}%
%BeginExpansion
{\textstyle\sum_{j=1}^{n}}
%EndExpansion
X_{2,j}\right)  +\frac{n-1}{n}\left(  \tfrac{1}{n}%
%TCIMACRO{\tsum _{j=1}^{n}}%
%BeginExpansion
{\textstyle\sum_{j=1}^{n}}
%EndExpansion
X_{2,j}X_{3,j}\right)  \left(  \tfrac{1}{\sqrt{n}}%
%TCIMACRO{\tsum _{j=1}^{n}}%
%BeginExpansion
{\textstyle\sum_{j=1}^{n}}
%EndExpansion
X_{1,j}\right) \\
&  +\frac{n^{2}-3n+2}{n^{2}}\left(  \tfrac{1}{\sqrt{n}}%
%TCIMACRO{\tsum _{j=1}^{n}}%
%BeginExpansion
{\textstyle\sum_{j=1}^{n}}
%EndExpansion
X_{1,j}\right)  \left(  \tfrac{1}{\sqrt{n}}%
%TCIMACRO{\tsum _{j=1}^{n}}%
%BeginExpansion
{\textstyle\sum_{j=1}^{n}}
%EndExpansion
X_{2,j}\right)  \left(  \tfrac{1}{\sqrt{n}}%
%TCIMACRO{\tsum _{j=1}^{n}}%
%BeginExpansion
{\textstyle\sum_{j=1}^{n}}
%EndExpansion
X_{3,j}\right)  .
\end{align*}

\end{lemma}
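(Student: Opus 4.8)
The plan is to expand the product of the three normalized sums into a triple sum over resampling indices and evaluate the bootstrap expectation by conditioning on the equality pattern of those indices. Writing the left-hand side as
\[
\frac{1}{n^{3/2}}\sum_{i_1=1}^{n}\sum_{i_2=1}^{n}\sum_{i_3=1}^{n} E^{\ast}\big[X_{1,i_1}^{\ast} X_{2,i_2}^{\ast} X_{3,i_3}^{\ast}\big],
\]
I would use the fact that $Z_1^{\ast},\dots,Z_n^{\ast}$ are drawn i.i.d.\ from the empirical distribution $\widehat{F}$, so that draws carrying distinct labels are mutually independent, while each satisfies $E^{\ast}[X_{k,i}^{\ast}]=\frac1n\sum_j X_{k,j}$ and $E^{\ast}[X_{k,i}^{\ast}X_{l,i}^{\ast}]=\frac1n\sum_j X_{k,j}X_{l,j}$ when the label $i$ is shared. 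Consequently the summand depends only on which of $i_1,i_2,i_3$ coincide, not on their particular values.

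First I would split the index set into the four equality classes: all three labels equal; exactly two equal (three sub-cases according to which pair coincides); and all three distinct. On the all-equal class the three draws collapse to one, giving $\frac1n\sum_j X_{1,j}X_{2,j}X_{3,j}$; on a two-equal class, say $i_1=i_2\neq i_3$, independence factorizes the expectation into $\big(\frac1n\sum_j X_{1,j}X_{2,j}\big)\big(\frac1n\sum_j X_{3,j}\big)$; and on the all-distinct class it factorizes completely into $\prod_k\big(\frac1n\sum_j X_{k,j}\big)$. These evaluations reuse exactly the single- and paired-draw identities already recorded in Lemmas \ref{L1} and \ref{L2}, so no new independence argument is needed.

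Next I would count the tuples in each class — $n$ all-equal tuples, $n(n-1)$ in each two-equal sub-case, and $n(n-1)(n-2)$ all-distinct — multiply each count by its factorized expectation and by the overall prefactor $n^{-3/2}$, and rewrite $\frac1n\sum_j X_{k,j}=\frac1{\sqrt n}\big(\frac1{\sqrt n}\sum_j X_{k,j}\big)$ so that each group reassembles into the normalized sums on the right-hand side. Tracking powers of $n$ gives $n^{-3/2}\cdot n\cdot\frac1n=\frac1{\sqrt n}\frac1n$ for the diagonal term, $n^{-3/2}\cdot n(n-1)\cdot\frac1{\sqrt n}=\frac{n-1}{n}$ for each paired term, and $n^{-3/2}\cdot n(n-1)(n-2)\cdot n^{-3/2}=\frac{n^2-3n+2}{n^2}$ for the fully factored term, matching the stated coefficients.

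The calculation poses no conceptual difficulty; the only real hazard is the combinatorial bookkeeping. The step most prone to error is keeping the three two-equal sub-cases distinct and correctly pairing the retained product $X_{k,j}X_{l,j}$ with the leftover normalized sum in each, since this is what produces the three separate cross terms sharing the coefficient $\frac{n-1}{n}$, together with the matching of each $n$-power to the right normalization factor. I would organize the write-up so that each equality class occupies a single displayed line, making the coefficient accounting transparent and preventing the sub-cases from being conflated.
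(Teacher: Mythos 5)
Your proposal is correct and follows essentially the same route as the paper: decompose the triple sum over resampling indices by the equality pattern of the labels, use conditional independence of distinct bootstrap draws together with the single- and paired-draw moments from Lemmas \ref{L1} and \ref{L2}, and count tuples in each class to recover the coefficients $\tfrac{1}{\sqrt{n}}\tfrac{1}{n}$, $\tfrac{n-1}{n}$, and $\tfrac{n^{2}-3n+2}{n^{2}}$. The power-of-$n$ bookkeeping you describe matches the paper's computation exactly.
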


\begin{lemma}
\label{BExp}Let $U_{i}^{\ast}\left(  \theta\right)  \equiv\ell\left(
Z_{i}^{\ast},\theta\right)  $, $V_{i}^{\ast}\left(  \theta\right)  \equiv
\ell^{\theta}\left(  Z_{i}^{\ast},\theta\right)  -\overline{\ell^{\theta
}\left(  \cdot,\theta\right)  }\equiv\ell^{\theta}\left(  Z_{i}^{\ast}%
,\theta\right)  -n^{-1}\sum_{i=1}^{n}\ell^{\theta}\left(  Z_{i},\theta\right)
$, $W_{i}^{\ast}\left(  \theta\right)  \equiv\ell^{\theta\theta}\left(
Z_{i}^{\ast}\right)  -\overline{\ell^{\theta\theta}\left(  \cdot
,\theta\right)  }\equiv\ell^{\theta\theta}\left(  Z_{i}^{\ast}\right)
-n^{-1}\sum_{i=1}^{n}\ell^{\theta\theta}\left(  Z_{i},\theta\right)  $ and let
$U^{\ast}\left(  \theta\right)  =n^{-1/2}\sum_{i=1}^{n}U_{i}^{\ast}\left(
\theta\right)  $, $V^{\ast}\left(  \theta\right)  =n^{-1/2}\sum_{i=1}^{n}%
V_{i}^{\ast}\left(  \theta\right)  $ and $W^{\ast}\left(  \theta\right)
=n^{-1/2}\sum_{i=1}^{n}W_{i}^{\ast}\left(  \theta\right)  $. Then \newline(a)
\begin{align*}
E^{\ast}\left[  U^{\ast}\left(  \widehat{\theta}\right)  \right]   &  =0,\\
E^{\ast}\left[  V^{\ast}\left(  \widehat{\theta}\right)  \right]   &  =0,\\
E^{\ast}\left[  W^{\ast}\left(  \widehat{\theta}\right)  \right]   &  =0,
\end{align*}
(b)
\begin{align*}
E^{\ast}\left[  U^{\ast}\left(  \widehat{\theta}\right)  ^{2}\right]   &
=\tfrac{1}{n}%
%TCIMACRO{\tsum _{i=1}^{n}}%
%BeginExpansion
{\textstyle\sum_{i=1}^{n}}
%EndExpansion
\ell\left(  Z_{i},\widehat{\theta}\right)  ^{2}\\
E^{\ast}\left[  U^{\ast}\left(  \widehat{\theta}\right)  V^{\ast}\left(
\widehat{\theta}\right)  \right]   &  =\tfrac{1}{n}%
%TCIMACRO{\tsum _{i=1}^{n}}%
%BeginExpansion
{\textstyle\sum_{i=1}^{n}}
%EndExpansion
\ell\left(  Z_{i},\widehat{\theta}\right)  \ell^{\theta}\left(  Z_{i}%
,\widehat{\theta}\right)
\end{align*}
(c)
\begin{align*}
E^{\ast}\left[  U^{\ast}\left(  \theta\right)  ^{3}\right]   &  =\tfrac
{1}{\sqrt{n}}\tfrac{1}{n}%
%TCIMACRO{\tsum _{j=1}^{n}}%
%BeginExpansion
{\textstyle\sum_{j=1}^{n}}
%EndExpansion
\ell\left(  Z_{i},\widehat{\theta}\right)  ^{3},\\
E^{\ast}\left[  U^{\ast}\left(  \widehat{\theta}\right)  ^{2}V^{\ast}\left(
\widehat{\theta}\right)  \right]   &  =\tfrac{1}{\sqrt{n}}\tfrac{1}{n}%
%TCIMACRO{\tsum _{j=1}^{n}}%
%BeginExpansion
{\textstyle\sum_{j=1}^{n}}
%EndExpansion
\ell\left(  Z_{i},\widehat{\theta}\right)  ^{2}\left(  \ell^{\theta}\left(
Z_{i},\widehat{\theta}\right)  -\overline{\ell^{\theta}\left(  \cdot
,\widehat{\theta}\right)  }\right) \\
E^{\ast}\left[  U^{\ast}\left(  \widehat{\theta}\right)  ^{2}W^{\ast}\left(
\widehat{\theta}\right)  \right]   &  =\tfrac{1}{\sqrt{n}}\tfrac{1}{n}%
%TCIMACRO{\tsum _{j=1}^{n}}%
%BeginExpansion
{\textstyle\sum_{j=1}^{n}}
%EndExpansion
\ell\left(  Z_{i},\widehat{\theta}\right)  ^{2}\left(  \ell^{\theta\theta
}\left(  Z_{i},\widehat{\theta}\right)  -\overline{\ell^{\theta\theta}\left(
\cdot,\widehat{\theta}\right)  }\right) \\
E^{\ast}\left[  U^{\ast}\left(  \widehat{\theta}\right)  V^{\ast}\left(
\widehat{\theta}\right)  ^{2}\right]   &  =\tfrac{1}{\sqrt{n}}\tfrac{1}{n}%
%TCIMACRO{\tsum _{j=1}^{n}}%
%BeginExpansion
{\textstyle\sum_{j=1}^{n}}
%EndExpansion
\ell\left(  Z_{i},\widehat{\theta}\right)  \left(  \ell^{\theta}\left(
Z_{i},\widehat{\theta}\right)  -\overline{\ell^{\theta}\left(  \cdot
,\widehat{\theta}\right)  }\right)  ^{2}%
\end{align*}

\end{lemma}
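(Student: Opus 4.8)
The plan is to reduce every moment in the statement to a direct application of the general bootstrap moment formulas in Lemmas \ref{L1}, \ref{L2}, and \ref{L3}, and then to collapse the resulting expressions using two structural facts about the demeaned scores evaluated at the MLE. Under $E^{\ast}$ the resamples $Z_1^{\ast},\dots,Z_n^{\ast}$ are i.i.d.\ draws from $\widehat{F}$, so each of $U^{\ast}(\widehat{\theta})$, $V^{\ast}(\widehat{\theta})$, $W^{\ast}(\widehat{\theta})$ is a normalized sum of the form treated in those lemmas, with $X_i$ given respectively by $U_i=\ell(Z_i,\widehat{\theta})$, $V_i=\ell^{\theta}(Z_i,\widehat{\theta})-\overline{\ell^{\theta}(\cdot,\widehat{\theta})}$, and $W_i=\ell^{\theta\theta}(Z_i,\widehat{\theta})-\overline{\ell^{\theta\theta}(\cdot,\widehat{\theta})}$.

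First I would record the two facts that make everything telescope. (i) The first-order condition defining the MLE is $\sum_{i=1}^{n}\ell(Z_i,\widehat{\theta})=0$, so $n^{-1/2}\sum_i U_i=0$. (ii) By construction $V_i$ and $W_i$ are centered at their own empirical means, so $\sum_i V_i=\sum_i W_i=0$ and hence $n^{-1/2}\sum_i V_i=n^{-1/2}\sum_i W_i=0$. Part (a) is then immediate: Lemma \ref{L1} gives $E^{\ast}[U^{\ast}(\widehat{\theta})]=n^{-1/2}\sum_i U_i$, $E^{\ast}[V^{\ast}(\widehat{\theta})]=n^{-1/2}\sum_i V_i$, and $E^{\ast}[W^{\ast}(\widehat{\theta})]=n^{-1/2}\sum_i W_i$, each of which vanishes by (i) or (ii).

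For part (b) I would apply Lemma \ref{L2}. In each case the ``product of sample means'' term carries a factor $n^{-1/2}\sum_i U_i$ or $n^{-1/2}\sum_i V_i$ and therefore drops out, leaving only $n^{-1}\sum_i X_{1,i}X_{2,i}$. For $E^{\ast}[U^{\ast}(\widehat{\theta})^2]$ this is $n^{-1}\sum_i\ell(Z_i,\widehat{\theta})^2$ directly. For $E^{\ast}[U^{\ast}(\widehat{\theta})V^{\ast}(\widehat{\theta})]$ the surviving term is $n^{-1}\sum_i U_i V_i$, and here I would invoke the first-order condition a \emph{second} time: expanding $V_i$ gives $n^{-1}\sum_i\ell(Z_i,\widehat{\theta})\ell^{\theta}(Z_i,\widehat{\theta})-\overline{\ell^{\theta}(\cdot,\widehat{\theta})}\,n^{-1}\sum_i\ell(Z_i,\widehat{\theta})$, whose second term is zero by (i), yielding the stated $n^{-1}\sum_i\ell(Z_i,\widehat{\theta})\ell^{\theta}(Z_i,\widehat{\theta})$.

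Part (c) follows the same template with Lemma \ref{L3}, taking $(X_1,X_2,X_3)$ equal to $(U,U,U)$, $(U,U,V)$, $(U,U,W)$, and $(U,V,V)$ in turn. In every case each of the three paired cross-terms and the triple-product term contains at least one factor $n^{-1/2}\sum_i U_i$, $n^{-1/2}\sum_i V_i$, or $n^{-1/2}\sum_i W_i$, all of which are zero by (i)--(ii); consequently only the leading term $n^{-1/2}n^{-1}\sum_j X_{1,j}X_{2,j}X_{3,j}$ survives, which is exactly the claimed expression. I expect the only genuinely nontrivial content to lie in Lemmas \ref{L1}--\ref{L3} themselves, which rest on the combinatorics of multinomial-resampling moments and are already established; the remainder is bookkeeping. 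The one point I would flag explicitly is the asymmetry between (b) and (c): in (b) the demeaning constant in $V_i$ can be dropped because it multiplies $n^{-1}\sum_i\ell(Z_i,\widehat{\theta})=0$, whereas in (c) the analogous constant multiplies $n^{-1}\sum_i\ell(Z_i,\widehat{\theta})^2$, which is \emph{not} zero, so the demeaned $V_i$ and $W_i$ must be retained in the final formulas.
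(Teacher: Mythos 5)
Your proposal is correct and follows essentially the same route as the paper's own proof: apply Lemmas \ref{L1}--\ref{L3} with $X_i$ equal to the (de)meaned scores, then kill every term containing a factor $n^{-1/2}\sum_i U_i$, $n^{-1/2}\sum_i V_i$, or $n^{-1/2}\sum_i W_i$ using the first-order condition $\sum_i\ell(Z_i,\widehat{\theta})=0$ and the empirical centering of $V_i$ and $W_i$. Your closing remark on why the demeaning constant can be dropped in (b) but must be retained in (c) is exactly the simplification the paper carries out.
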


\begin{lemma}
\label{Vjack1}Let
\[
W=\frac{1}{\sqrt{n}}\sum_{i=1}^{n}X_{i},\qquad W_{\left(  j\right)  }=\frac
{1}{\sqrt{n-1}}\sum_{i\neq j}X_{i}%
\]
Then, we have
\[
nW-\sqrt{n}\sqrt{n-1}\frac{1}{n}\sum_{j=1}^{n}W_{\left(  j\right)  }=W
\]

\end{lemma}

\begin{lemma}
\label{Vjack2}Let
\[
W=\left(  \frac{1}{\sqrt{n}}\sum_{i=1}^{n}X_{1,i}\right)  \left(  \frac
{1}{\sqrt{n}}\sum_{i=1}^{n}X_{2,i}\right)  ,\qquad W_{\left(  j\right)
}=\left(  \frac{1}{\sqrt{n-1}}\sum_{i\neq j}X_{1,i}\right)  \left(  \frac
{1}{\sqrt{n-1}}\sum_{i\neq j}X_{2,i}\right)
\]
Then,
\[
nW-\sum_{j=1}^{n}W_{\left(  j\right)  }=\frac{1}{n-1}\sum_{i\neq j}%
X_{1,i}X_{2,j}%
\]

\end{lemma}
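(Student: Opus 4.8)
The plan is to prove this identity by direct algebraic manipulation, since it is a deterministic statement with no probabilistic content: both sides are fixed functions of the data $X_{1,i}, X_{2,i}$. First I would introduce the shorthand $S_{1}=\sum_{i=1}^{n}X_{1,i}$ and $S_{2}=\sum_{i=1}^{n}X_{2,i}$, so that the full-sample statistic satisfies $nW=S_{1}S_{2}$ and each leave-one-out statistic can be written as $W_{(j)}=\tfrac{1}{n-1}(S_{1}-X_{1,j})(S_{2}-X_{2,j})$, exploiting the fact that deleting observation $j$ simply removes $X_{1,j}$ and $X_{2,j}$ from the respective sums.

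Next I would expand the product inside $W_{(j)}$ as $(S_{1}-X_{1,j})(S_{2}-X_{2,j})=S_{1}S_{2}-S_{1}X_{2,j}-X_{1,j}S_{2}+X_{1,j}X_{2,j}$ and sum over $j$. Using $\sum_{j}X_{2,j}=S_{2}$ and $\sum_{j}X_{1,j}=S_{1}$, the middle two terms each contribute $-S_{1}S_{2}$, giving
\[
\sum_{j=1}^{n}W_{(j)}=\frac{1}{n-1}\Big((n-2)S_{1}S_{2}+\sum_{j=1}^{n}X_{1,j}X_{2,j}\Big).
\]
Subtracting this from $nW=S_{1}S_{2}$ and collecting the $S_{1}S_{2}$ terms over the common denominator $n-1$ leaves $\tfrac{1}{n-1}\big(S_{1}S_{2}-\sum_{j}X_{1,j}X_{2,j}\big)$.

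The final step is to recognize the diagonal/off-diagonal decomposition $S_{1}S_{2}=\sum_{i}\sum_{k}X_{1,i}X_{2,k}=\sum_{j}X_{1,j}X_{2,j}+\sum_{i\neq j}X_{1,i}X_{2,j}$, so that the diagonal term cancels exactly and one is left with $\tfrac{1}{n-1}\sum_{i\neq j}X_{1,i}X_{2,j}$, which is the claimed right-hand side. There is no genuine obstacle here; the only point demanding care is the bookkeeping of the $(n-1)$ and $(n-2)$ coefficients, ensuring that the coefficient of $S_{1}S_{2}$ reduces to $\tfrac{1}{n-1}$ rather than something larger. This same structure (diagonal term surviving, off-diagonal term reorganized) is exactly what drives the asymptotic analysis elsewhere in the paper, where the surviving $\tfrac{1}{n-1}\sum_{i\neq j}X_{1,i}X_{2,j}$ is shown to be $o_{p}(1)$ relative to the leading $U$-statistic part after centering.
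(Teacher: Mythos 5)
Your proof is correct and follows essentially the same route as the paper: both write the leave-one-out sums as $\frac{1}{n-1}(S_{k}-X_{k,j})$ (the paper uses the equivalent form $\frac{n\overline{X}_{k}-X_{k,j}}{n-1}$), expand the product, sum over $j$, and cancel the diagonal term via $S_{1}S_{2}=\sum_{j}X_{1,j}X_{2,j}+\sum_{i\neq j}X_{1,i}X_{2,j}$. The bookkeeping of the $(n-2)$ and $(n-1)$ coefficients checks out, so there is nothing to add.
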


\begin{lemma}
\label{Vjack3}Let
\begin{align*}
W  &  =\left(  \frac{1}{\sqrt{n}}\sum_{i=1}^{n}X_{1,i}\right)  \left(
\frac{1}{\sqrt{n}}\sum_{i=1}^{n}X_{2,i}\right)  \left(  \frac{1}{\sqrt{n}}%
\sum_{i=1}^{n}X_{3,i}\right)  ,\\
W_{\left(  j\right)  }  &  =\left(  \frac{1}{\sqrt{n-1}}\sum_{i\neq j}%
X_{1,i}\right)  \left(  \frac{1}{\sqrt{n-1}}\sum_{i\neq j}X_{2,i}\right)
\left(  \frac{1}{\sqrt{n-1}}\sum_{i\neq j}X_{3,i}\right)
\end{align*}
Then,
\begin{align*}
&  nW-\sqrt{\frac{n}{n-1}}\sum_{j=1}^{n}W_{\left(  j\right)  }\\
&  =\frac{n^{2}+n}{\left(  n-1\right)  ^{2}}W\\
&  -\frac{n^{2}}{\left(  n-1\right)  ^{2}}\left(  \sqrt{n}\overline{X}%
_{1}\right)  \left(  \frac{1}{n}\sum_{i=1}^{n}X_{2,i}X_{3,i}\right)
-\frac{n^{2}}{\left(  n-1\right)  ^{2}}\left(  \sqrt{n}\overline{X}%
_{2}\right)  \left(  \frac{1}{n}\sum_{i=1}^{n}X_{3,i}X_{1,i}\right) \\
&  -\frac{n^{2}}{\left(  n-1\right)  ^{2}}\left(  \sqrt{n}\overline{X}%
_{3}\right)  \left(  \frac{1}{n}\sum_{i=1}^{n}X_{1,i}X_{2,i}\right)
+\frac{n\sqrt{n}}{\left(  n-1\right)  ^{2}}\left(  \frac{1}{n}\sum_{i=1}%
^{n}X_{1,i}X_{2,i}X_{3,i}\right)
\end{align*}

\end{lemma}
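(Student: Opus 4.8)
The plan is to treat this as an exact finite-sample algebraic identity; there is no probabilistic content, so the argument is pure bookkeeping that parallels the first- and second-order versions in Lemmas \ref{Vjack1} and \ref{Vjack2}. First I would abbreviate $S_k=\sum_{i=1}^n X_{k,i}=n\overline{X}_k$ for $k=1,2,3$, and introduce the diagonal sums $P_{kl}=\sum_{i=1}^n X_{k,i}X_{l,i}$ and $P_{123}=\sum_{i=1}^n X_{1,i}X_{2,i}X_{3,i}$. In this notation $W=n^{-3/2}S_1S_2S_3$, so $nW=n^{-1/2}S_1S_2S_3$, while each leave-one-out factor is $\tfrac{1}{\sqrt{n-1}}\sum_{i\neq j}X_{k,i}=(S_k-X_{k,j})/\sqrt{n-1}$, giving $W_{(j)}=(n-1)^{-3/2}(S_1-X_{1,j})(S_2-X_{2,j})(S_3-X_{3,j})$.

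The key step is to expand the triple product $(S_1-X_{1,j})(S_2-X_{2,j})(S_3-X_{3,j})$ into its eight monomials and sum over $j=1,\dots,n$. Using $\sum_j X_{k,j}=S_k$, the constant term gives $nS_1S_2S_3$, the three terms linear in a single $X_{k,j}$ each collapse to $S_1S_2S_3$, the three terms quadratic in the $X$'s give $S_1P_{23}$, $S_2P_{13}$ and $S_3P_{12}$, and the cubic term gives $P_{123}$. Collecting signs yields
\[
\sum_{j=1}^n (S_1-X_{1,j})(S_2-X_{2,j})(S_3-X_{3,j}) = (n-3)S_1S_2S_3 + (S_1P_{23}+S_2P_{13}+S_3P_{12}) - P_{123}.
\]

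Finally I would assemble $nW-\sqrt{\tfrac{n}{n-1}}\sum_j W_{(j)}$, noting that the prefactor simplifies as $\sqrt{\tfrac{n}{n-1}}(n-1)^{-3/2}=\sqrt{n}(n-1)^{-2}$. The coefficient of $S_1S_2S_3$ becomes $n^{-1/2}-\sqrt{n}(n-3)(n-1)^{-2}$; combining over the common denominator and using $(n-1)^2-n(n-3)=n+1$ gives $(n+1)/[\sqrt{n}(n-1)^2]$, and re-expressing through $S_1S_2S_3=n^{3/2}W$ produces the stated leading term $(n^2+n)(n-1)^{-2}W$. The remaining pair- and triple-product terms are converted to the displayed normalizations via $S_k=\sqrt{n}\,(\sqrt{n}\,\overline{X}_k)$ and $P_{kl}=n\cdot\tfrac{1}{n}\sum_i X_{k,i}X_{l,i}$, a substitution that supplies the factor $n^{3/2}$ turning $\sqrt{n}(n-1)^{-2}$ into $n^2(n-1)^{-2}$; the overall minus sign on $\sum_j W_{(j)}$ then sends the pair terms to the stated $-\tfrac{n^2}{(n-1)^2}$ coefficient and the cubic term to $+\tfrac{n\sqrt{n}}{(n-1)^2}$.

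The only real obstacle is arithmetic care rather than any conceptual difficulty: one must track the exact (not asymptotic) $n$-dependence and keep the $1/\sqrt{n}$ versus $1/\sqrt{n-1}$ normalizations and the $\sqrt{n/(n-1)}$ prefactor straight, so that each term lands with the correct power of $n$ and $(n-1)$. The cancellation producing the numerator $n+1$ in the leading coefficient is the single spot that warrants a double-check.
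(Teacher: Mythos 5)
Your proposal is correct and follows essentially the same route as the paper's proof: writing each leave-one-out sum as $S_k-X_{k,j}$, expanding the triple product, summing over $j$, and collecting the coefficients with the $\sqrt{n/(n-1)}$ prefactor. The algebra checks out, including the cancellation $(n-1)^2-n(n-3)=n+1$ that yields the leading coefficient $(n^2+n)/(n-1)^2$.
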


\section{Proofs for Bootstrap and Jackknife Moments}

\subsection{Proof of Lemma \ref{L1}}

Note that
\[
E^{\ast}\left[  X_{i}^{\ast}\right]  =\tfrac{1}{n}%
%TCIMACRO{\tsum _{j=1}^{n}}%
%BeginExpansion
{\textstyle\sum_{j=1}^{n}}
%EndExpansion
X_{j}%
\]
which in turn implies
\[
E^{\ast}\left(  \tfrac{1}{\sqrt{n}}%
%TCIMACRO{\tsum _{i=1}^{n}}%
%BeginExpansion
{\textstyle\sum_{i=1}^{n}}
%EndExpansion
X_{i}^{\ast}\right)  =\tfrac{1}{\sqrt{n}}%
%TCIMACRO{\tsum _{i=1}^{n}}%
%BeginExpansion
{\textstyle\sum_{i=1}^{n}}
%EndExpansion
\left(  \tfrac{1}{n}%
%TCIMACRO{\tsum _{j=1}^{n}}%
%BeginExpansion
{\textstyle\sum_{j=1}^{n}}
%EndExpansion
X_{j}\right)  =\tfrac{1}{n}%
%TCIMACRO{\tsum _{j=1}^{n}}%
%BeginExpansion
{\textstyle\sum_{j=1}^{n}}
%EndExpansion
X_{j}.
\]

\subsection{Proof of Lemma \ref{L2}}

This is because
\[
E^{\ast}\left[  X_{1,i}^{\ast}X_{2,i}^{\ast}\right]  =\tfrac{1}{n}%
%TCIMACRO{\tsum _{j=1}^{n}}%
%BeginExpansion
{\textstyle\sum_{j=1}^{n}}
%EndExpansion
X_{1,j}X_{2,j},
\]
which in turn implies that
\begin{align*}
&  E^{\ast}\left[  \left(  \tfrac{1}{\sqrt{n}}%
%TCIMACRO{\tsum _{i=1}^{n}}%
%BeginExpansion
{\textstyle\sum_{i=1}^{n}}
%EndExpansion
X_{1,i}^{\ast}\right)  \left(  \tfrac{1}{\sqrt{n}}%
%TCIMACRO{\tsum _{i=1}^{n}}%
%BeginExpansion
{\textstyle\sum_{i=1}^{n}}
%EndExpansion
X_{2,i}^{\ast}\right)  \right] \\
&  =\tfrac{1}{n}%
%TCIMACRO{\tsum _{i=1}^{n}}%
%BeginExpansion
{\textstyle\sum_{i=1}^{n}}
%EndExpansion
E^{\ast}\left[  X_{1,i}^{\ast}X_{2,i}^{\ast}\right]  +\tfrac{1}{n}%
%TCIMACRO{\tsum _{i\neq i^{\prime}}^{n}}%
%BeginExpansion
{\textstyle\sum_{i\neq i^{\prime}}^{n}}
%EndExpansion
E^{\ast}\left[  X_{1,i}^{\ast}\right]  E\left[  X_{2,i^{\prime}}^{\ast}\right]
\\
&  =\tfrac{1}{n}%
%TCIMACRO{\tsum _{i=1}^{n}}%
%BeginExpansion
{\textstyle\sum_{i=1}^{n}}
%EndExpansion
\left(  \tfrac{1}{n}%
%TCIMACRO{\tsum _{j=1}^{n}}%
%BeginExpansion
{\textstyle\sum_{j=1}^{n}}
%EndExpansion
X_{1,j}X_{2,j}\right)  +\tfrac{1}{n}%
%TCIMACRO{\tsum _{i\neq i^{\prime}}^{n}}%
%BeginExpansion
{\textstyle\sum_{i\neq i^{\prime}}^{n}}
%EndExpansion
\left(  \tfrac{1}{n}%
%TCIMACRO{\tsum _{j=1}^{n}}%
%BeginExpansion
{\textstyle\sum_{j=1}^{n}}
%EndExpansion
X_{1,j}\right)  \left(  \tfrac{1}{n}%
%TCIMACRO{\tsum _{j=1}^{n}}%
%BeginExpansion
{\textstyle\sum_{j=1}^{n}}
%EndExpansion
X_{2,j}\right) \\
&  =\tfrac{1}{n}%
%TCIMACRO{\tsum _{j=1}^{n}}%
%BeginExpansion
{\textstyle\sum_{j=1}^{n}}
%EndExpansion
X_{1,j}X_{2,j}+\frac{n\left(  n-1\right)  }{n}\left(  \tfrac{1}{n}%
%TCIMACRO{\tsum _{j=1}^{n}}%
%BeginExpansion
{\textstyle\sum_{j=1}^{n}}
%EndExpansion
X_{1,j}\right)  \left(  \tfrac{1}{n}%
%TCIMACRO{\tsum _{j=1}^{n}}%
%BeginExpansion
{\textstyle\sum_{j=1}^{n}}
%EndExpansion
X_{2,j}\right) \\
&  =\tfrac{1}{n}%
%TCIMACRO{\tsum _{j=1}^{n}}%
%BeginExpansion
{\textstyle\sum_{j=1}^{n}}
%EndExpansion
X_{1,j}X_{2,j}+\frac{n-1}{n}\left(  \tfrac{1}{\sqrt{n}}%
%TCIMACRO{\tsum _{j=1}^{n}}%
%BeginExpansion
{\textstyle\sum_{j=1}^{n}}
%EndExpansion
X_{1,j}\right)  \left(  \tfrac{1}{\sqrt{n}}%
%TCIMACRO{\tsum _{j=1}^{n}}%
%BeginExpansion
{\textstyle\sum_{j=1}^{n}}
%EndExpansion
X_{2,j}\right)  .
\end{align*}
The conclusion follows easily from this result.

\subsection{Proof of Lemma \ref{L3}}

Note that
\begin{align*}
&  \left(
%TCIMACRO{\tsum _{i=1}^{n}}%
%BeginExpansion
{\textstyle\sum_{i=1}^{n}}
%EndExpansion
X_{1,i}^{\ast}\right)  \left(
%TCIMACRO{\tsum _{i=1}^{n}}%
%BeginExpansion
{\textstyle\sum_{i=1}^{n}}
%EndExpansion
X_{2,i}^{\ast}\right)  \left(
%TCIMACRO{\tsum _{i=1}^{n}}%
%BeginExpansion
{\textstyle\sum_{i=1}^{n}}
%EndExpansion
X_{3,i}^{\ast}\right) \\
&  =%
%TCIMACRO{\tsum _{i=1}^{n}}%
%BeginExpansion
{\textstyle\sum_{i=1}^{n}}
%EndExpansion
X_{1,i}^{\ast}X_{2,i}^{\ast}X_{3,i}^{\ast}\\
&  +%
%TCIMACRO{\tsum _{i\neq i^{\prime}}}%
%BeginExpansion
{\textstyle\sum_{i\neq i^{\prime}}}
%EndExpansion
X_{1,i}^{\ast}X_{2,i}^{\ast}X_{3,i^{\prime}}^{\ast}+%
%TCIMACRO{\tsum _{i\neq i^{\prime}}}%
%BeginExpansion
{\textstyle\sum_{i\neq i^{\prime}}}
%EndExpansion
X_{1,i}^{\ast}X_{2,i^{\prime}}^{\ast}X_{3,i}^{\ast}+%
%TCIMACRO{\tsum _{i\neq i^{\prime}}}%
%BeginExpansion
{\textstyle\sum_{i\neq i^{\prime}}}
%EndExpansion
X_{1,i^{\prime}}^{\ast}X_{2,i}^{\ast}X_{3,i}^{\ast}\\
&  +%
%TCIMACRO{\tsum _{i\neq i^{\prime}\neq i^{\prime\prime}\neq i}}%
%BeginExpansion
{\textstyle\sum_{i\neq i^{\prime}\neq i^{\prime\prime}\neq i}}
%EndExpansion
X_{1,i}^{\ast}X_{2,i^{\prime}}^{\ast}X_{3,i^{\prime\prime}}^{\ast}%
\end{align*}
Therefore, we have
\begin{align*}%
%TCIMACRO{\tsum _{i=1}^{n}}%
%BeginExpansion
{\textstyle\sum_{i=1}^{n}}
%EndExpansion
E^{\ast}\left[  X_{1,i}^{\ast}X_{2,i}^{\ast}X_{3,i}^{\ast}\right]   &  =%
%TCIMACRO{\tsum \nolimits_{i=1}^{n}}%
%BeginExpansion
{\textstyle\sum\nolimits_{i=1}^{n}}
%EndExpansion
\left(  \tfrac{1}{n}%
%TCIMACRO{\tsum _{j=1}^{n}}%
%BeginExpansion
{\textstyle\sum_{j=1}^{n}}
%EndExpansion
X_{1,j}X_{2,j}X_{3,j}\right) \\
&  =%
%TCIMACRO{\tsum _{j=1}^{n}}%
%BeginExpansion
{\textstyle\sum_{j=1}^{n}}
%EndExpansion
X_{1,j}X_{2,j}X_{3,j}%
\end{align*}%
\begin{align*}%
%TCIMACRO{\tsum _{i\neq i^{\prime}}}%
%BeginExpansion
{\textstyle\sum_{i\neq i^{\prime}}}
%EndExpansion
E^{\ast}\left[  X_{1,i}^{\ast}X_{2,i}^{\ast}X_{3,i^{\prime}}^{\ast}\right]
&  =%
%TCIMACRO{\tsum _{i\neq i^{\prime}}}%
%BeginExpansion
{\textstyle\sum_{i\neq i^{\prime}}}
%EndExpansion
E^{\ast}\left[  X_{1,i}^{\ast}X_{2,i}^{\ast}\right]  E^{\ast}\left[
X_{3,i^{\prime}}^{\ast}\right] \\
&  =%
%TCIMACRO{\tsum _{i\neq i^{\prime}}}%
%BeginExpansion
{\textstyle\sum_{i\neq i^{\prime}}}
%EndExpansion
\left(  \tfrac{1}{n}%
%TCIMACRO{\tsum _{j=1}^{n}}%
%BeginExpansion
{\textstyle\sum_{j=1}^{n}}
%EndExpansion
X_{1,j}X_{2,j}\right)  \left(  \tfrac{1}{n}%
%TCIMACRO{\tsum _{j=1}^{n}}%
%BeginExpansion
{\textstyle\sum_{j=1}^{n}}
%EndExpansion
X_{3,j}\right) \\
&  =\frac{n-1}{n}\left(
%TCIMACRO{\tsum _{j=1}^{n}}%
%BeginExpansion
{\textstyle\sum_{j=1}^{n}}
%EndExpansion
X_{1,j}X_{2,j}\right)  \left(
%TCIMACRO{\tsum _{j=1}^{n}}%
%BeginExpansion
{\textstyle\sum_{j=1}^{n}}
%EndExpansion
X_{3,j}\right)
\end{align*}%
\begin{align*}%
%TCIMACRO{\tsum _{i\neq i^{\prime}}}%
%BeginExpansion
{\textstyle\sum_{i\neq i^{\prime}}}
%EndExpansion
E^{\ast}\left[  X_{1,i}^{\ast}X_{2,i^{\prime}}^{\ast}X_{3,i}^{\ast}\right]
&  =\frac{n-1}{n}\left(
%TCIMACRO{\tsum _{j=1}^{n}}%
%BeginExpansion
{\textstyle\sum_{j=1}^{n}}
%EndExpansion
X_{3,j}X_{1,j}\right)  \left(
%TCIMACRO{\tsum _{j=1}^{n}}%
%BeginExpansion
{\textstyle\sum_{j=1}^{n}}
%EndExpansion
X_{2,j}\right) \\%
%TCIMACRO{\tsum _{i\neq i^{\prime}}}%
%BeginExpansion
{\textstyle\sum_{i\neq i^{\prime}}}
%EndExpansion
E^{\ast}\left[  X_{1,i^{\prime}}^{\ast}X_{2,i}^{\ast}X_{3,i}^{\ast}\right]
&  =\frac{n-1}{n}\left(
%TCIMACRO{\tsum _{j=1}^{n}}%
%BeginExpansion
{\textstyle\sum_{j=1}^{n}}
%EndExpansion
X_{2,j}X_{3,j}\right)  \left(
%TCIMACRO{\tsum _{j=1}^{n}}%
%BeginExpansion
{\textstyle\sum_{j=1}^{n}}
%EndExpansion
X_{1,j}\right)
\end{align*}
and
\begin{align*}%
%TCIMACRO{\tsum _{i\neq i^{\prime}\neq i^{\prime\prime}\neq i}}%
%BeginExpansion
{\textstyle\sum_{i\neq i^{\prime}\neq i^{\prime\prime}\neq i}}
%EndExpansion
E^{\ast}\left[  X_{1,i}^{\ast}X_{2,i^{\prime}}^{\ast}X_{3,i^{\prime\prime}%
}^{\ast}\right]   &  =%
%TCIMACRO{\tsum _{i\neq i^{\prime}\neq i^{\prime\prime}\neq i}}%
%BeginExpansion
{\textstyle\sum_{i\neq i^{\prime}\neq i^{\prime\prime}\neq i}}
%EndExpansion
E^{\ast}\left[  X_{1,i}^{\ast}\right]  E^{\ast}\left[  X_{2,i^{\prime}}^{\ast
}\right]  E^{\ast}\left[  X_{3,i^{\prime\prime}}^{\ast}\right] \\
&  =%
%TCIMACRO{\tsum _{i\neq i^{\prime}\neq i^{\prime\prime}\neq i}}%
%BeginExpansion
{\textstyle\sum_{i\neq i^{\prime}\neq i^{\prime\prime}\neq i}}
%EndExpansion
\left(  \tfrac{1}{n}%
%TCIMACRO{\tsum _{j=1}^{n}}%
%BeginExpansion
{\textstyle\sum_{j=1}^{n}}
%EndExpansion
X_{1,j}\right)  \left(  \tfrac{1}{n}%
%TCIMACRO{\tsum _{j=1}^{n}}%
%BeginExpansion
{\textstyle\sum_{j=1}^{n}}
%EndExpansion
X_{2,j}\right)  \left(  \tfrac{1}{n}%
%TCIMACRO{\tsum _{j=1}^{n}}%
%BeginExpansion
{\textstyle\sum_{j=1}^{n}}
%EndExpansion
X_{3,j}\right) \\
&  =\frac{n^{2}-3n+2}{n^{2}}\left(
%TCIMACRO{\tsum _{j=1}^{n}}%
%BeginExpansion
{\textstyle\sum_{j=1}^{n}}
%EndExpansion
X_{1,j}\right)  \left(
%TCIMACRO{\tsum _{j=1}^{n}}%
%BeginExpansion
{\textstyle\sum_{j=1}^{n}}
%EndExpansion
X_{2,j}\right)  \left(
%TCIMACRO{\tsum _{j=1}^{n}}%
%BeginExpansion
{\textstyle\sum_{j=1}^{n}}
%EndExpansion
X_{3,j}\right)
\end{align*}
from which we obtain
\begin{align*}
&  E^{\ast}\left[  \left(
%TCIMACRO{\tsum _{i=1}^{n}}%
%BeginExpansion
{\textstyle\sum_{i=1}^{n}}
%EndExpansion
X_{1,i}^{\ast}\right)  \left(
%TCIMACRO{\tsum _{i=1}^{n}}%
%BeginExpansion
{\textstyle\sum_{i=1}^{n}}
%EndExpansion
X_{2,i}^{\ast}\right)  \left(
%TCIMACRO{\tsum _{i=1}^{n}}%
%BeginExpansion
{\textstyle\sum_{i=1}^{n}}
%EndExpansion
X_{3,i}^{\ast}\right)  \right] \\
&  =%
%TCIMACRO{\tsum _{j=1}^{n}}%
%BeginExpansion
{\textstyle\sum_{j=1}^{n}}
%EndExpansion
X_{1,j}X_{2,j}X_{3,j}+\frac{n-1}{n}\left(
%TCIMACRO{\tsum _{j=1}^{n}}%
%BeginExpansion
{\textstyle\sum_{j=1}^{n}}
%EndExpansion
X_{1,j}X_{2,j}\right)  \left(
%TCIMACRO{\tsum _{j=1}^{n}}%
%BeginExpansion
{\textstyle\sum_{j=1}^{n}}
%EndExpansion
X_{3,j}\right) \\
&  +\frac{n-1}{n}\left(
%TCIMACRO{\tsum _{j=1}^{n}}%
%BeginExpansion
{\textstyle\sum_{j=1}^{n}}
%EndExpansion
X_{3,j}X_{1,j}\right)  \left(
%TCIMACRO{\tsum _{j=1}^{n}}%
%BeginExpansion
{\textstyle\sum_{j=1}^{n}}
%EndExpansion
X_{2,j}\right)  +\frac{n-1}{n}\left(
%TCIMACRO{\tsum _{j=1}^{n}}%
%BeginExpansion
{\textstyle\sum_{j=1}^{n}}
%EndExpansion
X_{2,j}X_{3,j}\right)  \left(
%TCIMACRO{\tsum _{j=1}^{n}}%
%BeginExpansion
{\textstyle\sum_{j=1}^{n}}
%EndExpansion
X_{1,j}\right) \\
&  +\frac{n^{2}-3n+2}{n^{2}}\left(
%TCIMACRO{\tsum _{j=1}^{n}}%
%BeginExpansion
{\textstyle\sum_{j=1}^{n}}
%EndExpansion
X_{1,j}\right)  \left(
%TCIMACRO{\tsum _{j=1}^{n}}%
%BeginExpansion
{\textstyle\sum_{j=1}^{n}}
%EndExpansion
X_{2,j}\right)  \left(
%TCIMACRO{\tsum _{j=1}^{n}}%
%BeginExpansion
{\textstyle\sum_{j=1}^{n}}
%EndExpansion
X_{3,j}\right)  .
\end{align*}

\subsection{Proof of Lemma \ref{BExp}}

For part $\left(  a\right)  $ it follows from Lemma \ref{L1} that
\[
E^{\ast}\left[  U_{i}^{\ast}\left(  \widehat{\theta}\right)  \right]  =n^{-1}%
%TCIMACRO{\tsum _{i=1}^{n}}%
%BeginExpansion
{\textstyle\sum_{i=1}^{n}}
%EndExpansion
\ell\left(  Z_{i},\widehat{\theta}\right)  =0,
\]
where the last inequality follows from the definition of $\widehat{\theta}.$
The remaining results can be established in the same way. For part $\left(
b\right)  $ we note that it follows from Lemma \ref{L2} that
\begin{align*}
E^{\ast}\left[  U^{\ast}\left(  \widehat{\theta}\right)  ^{2}\right]   &
=\tfrac{1}{n}%
%TCIMACRO{\tsum _{i=1}^{n}}%
%BeginExpansion
{\textstyle\sum_{i=1}^{n}}
%EndExpansion
\ell\left(  Z_{i},\widehat{\theta}\right)  ^{2}+\frac{n-1}{n}\left(  \tfrac
{1}{\sqrt{n}}%
%TCIMACRO{\tsum _{i=1}^{n}}%
%BeginExpansion
{\textstyle\sum_{i=1}^{n}}
%EndExpansion
\ell\left(  Z_{i},\widehat{\theta}\right)  \right)  ^{2}\\
&  =\tfrac{1}{n}%
%TCIMACRO{\tsum _{i=1}^{n}}%
%BeginExpansion
{\textstyle\sum_{i=1}^{n}}
%EndExpansion
\ell\left(  Z_{i},\widehat{\theta}\right)  ^{2},
\end{align*}
where the last equality is based on $%
%TCIMACRO{\tsum _{i=1}^{n}}%
%BeginExpansion
{\textstyle\sum_{i=1}^{n}}
%EndExpansion
\ell\left(  Z_{i},\widehat{\theta}\right)  =0$. For the second part we use
\begin{align*}
E^{\ast}&\left[  U^{\ast}\left(  \widehat{\theta}\right)  V^{\ast}\left(
\widehat{\theta}\right)  \right]  \\
&  =\tfrac{1}{n}%
%TCIMACRO{\tsum _{i=1}^{n}}%
%BeginExpansion
{\textstyle\sum_{i=1}^{n}}
%EndExpansion
\ell\left(  Z_{i},\widehat{\theta}\right)  \left(  \ell^{\theta}\left(
Z_{i},\widehat{\theta}\right)  -\left(  \tfrac{1}{n}%
%TCIMACRO{\tsum _{i=1}^{n}}%
%BeginExpansion
{\textstyle\sum_{i=1}^{n}}
%EndExpansion
\ell^{\theta}\left(  Z_{i},\widehat{\theta}\right)  \right)  \right) \\
&  +\frac{n-1}{n}\left(  \tfrac{1}{\sqrt{n}}%
%TCIMACRO{\tsum _{i=1}^{n}}%
%BeginExpansion
{\textstyle\sum_{i=1}^{n}}
%EndExpansion
\ell\left(  Z_{i},\widehat{\theta}\right)  \right)  \left(  \tfrac{1}{\sqrt
{n}}%
%TCIMACRO{\tsum _{i=1}^{n}}%
%BeginExpansion
{\textstyle\sum_{i=1}^{n}}
%EndExpansion
\left(  \ell^{\theta}\left(  Z_{i},\widehat{\theta}\right)  -\left(  \tfrac
{1}{n}%
%TCIMACRO{\tsum _{i=1}^{n}}%
%BeginExpansion
{\textstyle\sum_{i=1}^{n}}
%EndExpansion
\ell^{\theta}\left(  Z_{j},\widehat{\theta}\right)  \right)  \right)  \right)
\\
&  =\tfrac{1}{n}%
%TCIMACRO{\tsum _{i=1}^{n}}%
%BeginExpansion
{\textstyle\sum_{i=1}^{n}}
%EndExpansion
\ell\left(  Z_{i},\widehat{\theta}\right)  \ell^{\theta}\left(  Z_{i}%
,\widehat{\theta}\right)  .
\end{align*}
For part $\left(  c\right)  $ we use Lemma \ref{L3} to obtain
\begin{align*}
E^{\ast}\left[  U^{\ast}\left(  \widehat{\theta}\right)  ^{3}\right]   &
=\tfrac{1}{\sqrt{n}}\tfrac{1}{n}%
%TCIMACRO{\tsum _{j=1}^{n}}%
%BeginExpansion
{\textstyle\sum_{j=1}^{n}}
%EndExpansion
\ell\left(  Z_{i},\widehat{\theta}\right)  ^{3}+3\frac{n-1}{n}\left(
\tfrac{1}{n}%
%TCIMACRO{\tsum _{j=1}^{n}}%
%BeginExpansion
{\textstyle\sum_{j=1}^{n}}
%EndExpansion
\ell\left(  Z_{i},\widehat{\theta}\right)  ^{2}\right)  \left(  \tfrac
{1}{\sqrt{n}}%
%TCIMACRO{\tsum _{j=1}^{n}}%
%BeginExpansion
{\textstyle\sum_{j=1}^{n}}
%EndExpansion
\ell\left(  Z_{i},\widehat{\theta}\right)  \right) \\
&  +\frac{n^{2}-3n+2}{n^{2}}\left(  \tfrac{1}{\sqrt{n}}%
%TCIMACRO{\tsum _{i=1}^{n}}%
%BeginExpansion
{\textstyle\sum_{i=1}^{n}}
%EndExpansion
\ell\left(  Z_{i},\widehat{\theta}\right)  \right)  ^{3},
\end{align*}%
\begin{align*}
E^{\ast}\left[  U^{\ast}\left(  \widehat{\theta}\right)  ^{2}V^{\ast}\left(
\widehat{\theta}\right)  \right]   &  =\tfrac{1}{\sqrt{n}}\tfrac{1}{n}%
%TCIMACRO{\tsum _{j=1}^{n}}%
%BeginExpansion
{\textstyle\sum_{j=1}^{n}}
%EndExpansion
U_{j}\left(  \widehat{\theta}\right)  ^{2}V_{j}\left(  \widehat{\theta
}\right)  \\
&+\frac{n-1}{n}\left(  \tfrac{1}{n}%
%TCIMACRO{\tsum _{j=1}^{n}}%
%BeginExpansion
{\textstyle\sum_{j=1}^{n}}
%EndExpansion
U_{j}\left(  \widehat{\theta}\right)  ^{2}\right)  \left(  \tfrac{1}{\sqrt{n}}%
%TCIMACRO{\tsum _{j=1}^{n}}%
%BeginExpansion
{\textstyle\sum_{j=1}^{n}}
%EndExpansion
V_{j}\left(  \widehat{\theta}\right)  \right) \\
&  +2\frac{n-1}{n}\left(  \tfrac{1}{n}%
%TCIMACRO{\tsum _{j=1}^{n}}%
%BeginExpansion
{\textstyle\sum_{j=1}^{n}}
%EndExpansion
U_{j}\left(  \widehat{\theta}\right)  V_{j}\left(  \widehat{\theta}\right)
\right)  \left(  \tfrac{1}{\sqrt{n}}%
%TCIMACRO{\tsum _{j=1}^{n}}%
%BeginExpansion
{\textstyle\sum_{j=1}^{n}}
%EndExpansion
U_{j}\left(  \widehat{\theta}\right)  \right) \\
&  +\frac{n^{2}-3n+2}{n^{2}}\left(  \tfrac{1}{\sqrt{n}}%
%TCIMACRO{\tsum _{j=1}^{n}}%
%BeginExpansion
{\textstyle\sum_{j=1}^{n}}
%EndExpansion
U_{j}\left(  \widehat{\theta}\right)  \right)  ^{2}\left(  \tfrac{1}{\sqrt{n}}%
%TCIMACRO{\tsum _{j=1}^{n}}%
%BeginExpansion
{\textstyle\sum_{j=1}^{n}}
%EndExpansion
V_{j}\left(  \widehat{\theta}\right)  \right)  ,
\end{align*}%
\begin{align*}
E^{\ast}\left[  U^{\ast}\left(  \widehat{\theta}\right)  ^{2}W^{\ast}\left(
\widehat{\theta}\right)  \right]   &  =\tfrac{1}{\sqrt{n}}\tfrac{1}{n}%
%TCIMACRO{\tsum _{j=1}^{n}}%
%BeginExpansion
{\textstyle\sum_{j=1}^{n}}
%EndExpansion
U_{j}\left(  \widehat{\theta}\right)  ^{2}W_{j}\left(  \widehat{\theta
}\right)  \\
&+\frac{n-1}{n}\left(  \tfrac{1}{n}%
%TCIMACRO{\tsum _{j=1}^{n}}%
%BeginExpansion
{\textstyle\sum_{j=1}^{n}}
%EndExpansion
U_{j}\left(  \widehat{\theta}\right)  ^{2}\right)  \left(  \tfrac{1}{\sqrt{n}}%
%TCIMACRO{\tsum _{j=1}^{n}}%
%BeginExpansion
{\textstyle\sum_{j=1}^{n}}
%EndExpansion
W_{j}\left(  \widehat{\theta}\right)  \right) \\
&  +2\frac{n-1}{n}\left(  \tfrac{1}{n}%
%TCIMACRO{\tsum _{j=1}^{n}}%
%BeginExpansion
{\textstyle\sum_{j=1}^{n}}
%EndExpansion
U_{j}\left(  \widehat{\theta}\right)  W_{j}\left(  \widehat{\theta}\right)
\right)  \left(  \tfrac{1}{\sqrt{n}}%
%TCIMACRO{\tsum _{j=1}^{n}}%
%BeginExpansion
{\textstyle\sum_{j=1}^{n}}
%EndExpansion
U_{j}\left(  \widehat{\theta}\right)  \right) \\
&  +\frac{n^{2}-3n+2}{n^{2}}\left(  \tfrac{1}{\sqrt{n}}%
%TCIMACRO{\tsum _{j=1}^{n}}%
%BeginExpansion
{\textstyle\sum_{j=1}^{n}}
%EndExpansion
U_{j}\left(  \widehat{\theta}\right)  \right)  ^{2}\left(  \tfrac{1}{\sqrt{n}}%
%TCIMACRO{\tsum _{j=1}^{n}}%
%BeginExpansion
{\textstyle\sum_{j=1}^{n}}
%EndExpansion
W_{j}\left(  \widehat{\theta}\right)  \right)  ,
\end{align*}
and
\begin{align*}
E^{\ast}&\left[  U^{\ast}\left(  \widehat{\theta}\right)  V^{\ast}\left(
\widehat{\theta}\right)  ^{2}\right]  \\
&  =\tfrac{1}{\sqrt{n}}\tfrac{1}{n}%
%TCIMACRO{\tsum _{j=1}^{n}}%
%BeginExpansion
{\textstyle\sum_{j=1}^{n}}
%EndExpansion
U_{j}\left(  \widehat{\theta}\right)  V_{j}\left(  \widehat{\theta}\right)
^{2}+\frac{n-1}{n}\left(  \tfrac{1}{n}%
%TCIMACRO{\tsum _{j=1}^{n}}%
%BeginExpansion
{\textstyle\sum_{j=1}^{n}}
%EndExpansion
V_{j}\left(  \widehat{\theta}\right)  ^{2}\right)  \left(  \tfrac{1}{\sqrt{n}}%
%TCIMACRO{\tsum _{j=1}^{n}}%
%BeginExpansion
{\textstyle\sum_{j=1}^{n}}
%EndExpansion
U_{j}\left(  \widehat{\theta}\right)  \right) \\
&  +2\frac{n-1}{n}\left(  \tfrac{1}{n}%
%TCIMACRO{\tsum _{j=1}^{n}}%
%BeginExpansion
{\textstyle\sum_{j=1}^{n}}
%EndExpansion
U_{j}\left(  \widehat{\theta}\right)  V_{j}\left(  \widehat{\theta}\right)
\right)  \left(  \tfrac{1}{\sqrt{n}}%
%TCIMACRO{\tsum _{j=1}^{n}}%
%BeginExpansion
{\textstyle\sum_{j=1}^{n}}
%EndExpansion
V_{j}\left(  \widehat{\theta}\right)  \right) \\
&  +\frac{n^{2}-3n+2}{n^{2}}\left(  \tfrac{1}{\sqrt{n}}%
%TCIMACRO{\tsum _{j=1}^{n}}%
%BeginExpansion
{\textstyle\sum_{j=1}^{n}}
%EndExpansion
U_{j}\left(  \widehat{\theta}\right)  \right)  \left(  \tfrac{1}{\sqrt{n}}%
%TCIMACRO{\tsum _{j=1}^{n}}%
%BeginExpansion
{\textstyle\sum_{j=1}^{n}}
%EndExpansion
V_{j}\left(  \widehat{\theta}\right)  \right)  ^{2}%
\end{align*}
such that the result follows.

\subsection{Proof of Lemma \ref{Vjack1}}

Note that
\[
\frac{1}{n}\sum_{j=1}^{n}W_{\left(  j\right)  }=\frac{1}{n\sqrt{n-1}}%
\sum_{j=1}^{n}\sum_{i\neq j}X_{i}=\frac{n-1}{n\sqrt{n-1}}\sum_{i=1}^{n}%
X_{i}=\frac{\sqrt{n-1}}{\sqrt{n}}W
\]
It therefore follows that
\[
nW-\sqrt{n}\sqrt{n-1}\frac{1}{n}\sum_{j=1}^{n}W_{\left(  j\right)  }%
=nW-\sqrt{n}\sqrt{n-1}\frac{\sqrt{n-1}}{\sqrt{n}}W=W
\]

\subsection{Proof of Lemma \ref{Vjack2}}

We first prove that
\[
\frac{1}{n}\sum_{j=1}^{j}W_{\left(  j\right)  }=\frac{n-2}{n-1}W+\frac
{1}{n\left(  n-1\right)  }\sum_{i=1}^{n}X_{1,i}X_{2,i}%
\]
For this purpose, note that
\begin{align*}
\frac{1}{n-1}\frac{1}{n}\sum_{j=1}^{n}W_{\left(  j\right)  }  &  =\frac{1}%
{n}\sum_{j=1}^{n}\left(  \frac{1}{n-1}\sum_{i\neq j}X_{1,i}\right)  \left(
\frac{1}{n-1}\sum_{i\neq j}X_{2,i}\right) \\
&  =\frac{1}{n}\sum_{j=1}^{n}\left(  \frac{n\overline{X}_{1}-X_{1,j}}%
{n-1}\right)  \left(  \frac{n\overline{X}_{2}-X_{2,j}}{n-1}\right) \\
&  =\frac{1}{n}\frac{n^{3}\overline{X}_{1}\overline{X}_{2}-2n^{2}\overline
{X}_{1}\overline{X}_{2}+\sum_{i=1}^{n}X_{1,i}X_{2,i}}{\left(  n-1\right)
^{2}}\\
&  =\frac{n^{2}-2n}{\left(  n-1\right)  ^{2}}\overline{X}_{1}\overline{X}%
_{2}+\frac{1}{n\left(  n-1\right)  ^{2}}\sum_{i=1}^{n}X_{1,i}X_{2,i}\\
&  =\frac{n-2}{\left(  n-1\right)  ^{2}}W+\frac{1}{n\left(  n-1\right)  ^{2}%
}\sum_{i=1}^{n}X_{1,i}X_{2,i},
\end{align*}
where $\overline{X}_{1}$ denotes the sample average of $X_{1,i}$. Therefore,
we have
\[
\frac{1}{n}\sum_{j=1}^{n}W_{\left(  j\right)  }=\frac{n-2}{n-1}W+\frac
{1}{n\left(  n-1\right)  }\sum_{i=1}^{n}X_{1,i}X_{2,i}%
\]
and
\begin{align*}
nW-\sum_{j=1}^{n}W_{\left(  j\right)  }  &  =\left(  n-\frac{n\left(
n-2\right)  }{n-1}\right)  W-\frac{1}{n-1}\sum_{i=1}^{n}X_{1,i}X_{2,i}\\
&  =\allowbreak\frac{n}{n-1}W-\frac{1}{n-1}\sum_{i=1}^{n}X_{1,i}X_{2,i}\\
&  =\allowbreak\frac{1}{n-1}\left(  \sum_{i=1}^{n}X_{1,i}\right)  \left(
\sum_{i=1}^{n}X_{2,i}\right)  -\frac{1}{n-1}\sum_{i=1}^{n}X_{1,i}X_{2,i}\\
&  =\frac{1}{n-1}\sum_{i\neq j}X_{1,i}X_{2,j}%
\end{align*}

\subsection{Proof of Lemma \ref{Vjack3}}

Observe that
\begin{align*}
&  \frac{1}{n}\sum_{j=1}^{n}\left(  \frac{1}{n-1}\sum_{i\neq j}X_{1,i}\right)
\left(  \frac{1}{n-1}\sum_{i\neq j}X_{2,i}\right)  \left(  \frac{1}{n-1}%
\sum_{i\neq j}X_{3,i}\right) \\
&  =\frac{1}{n}\sum_{j=1}^{n}\left(  \frac{n\overline{X}_{1}-X_{1,j}}%
{n-1}\right)  \left(  \frac{n\overline{X}_{2}-X_{2,j}}{n-1}\right)  \left(
\frac{n\overline{X}_{3}-X_{3,j}}{n-1}\right) \\
&  =\frac{n^{3}-3n^{2}}{\left(  n-1\right)  ^{3}}\overline{X}_{1}\overline
{X}_{2}\overline{X}_{3}\\
&  +\frac{1}{\left(  n-1\right)  ^{3}}\overline{X}_{1}\sum_{i=1}^{n}%
X_{2,i}X_{3,i}+\frac{1}{\left(  n-1\right)  ^{3}}\overline{X}_{2}\sum
_{t=1}^{T}X_{3,i}X_{1,i}\\
&  +\frac{1}{\left(  n-1\right)  ^{3}}\overline{X}_{3}\sum_{t=1}^{T}%
X_{1,i}X_{2,i}-\frac{1}{n\left(  n-1\right)  ^{3}}\sum_{t=1}^{T}X_{1,i}%
X_{2,i}X_{3,i}%
\end{align*}
It therefore follows that
\begin{align*}
\frac{1}{n}\sum_{j=1}^{n}W_{\left(  j\right)  }  &  =\frac{n-3}{\left(
n-1\right)  ^{3/2}}\sqrt{n}\left(  \sqrt{n}\overline{X}_{1}\right)  \left(
\sqrt{n}\overline{X}_{2}\right)  \left(  \sqrt{n}\overline{X}_{3}\right) \\
&  +\frac{\sqrt{n}}{\left(  n-1\right)  ^{3/2}}\left(  \sqrt{n}\overline
{X}_{1}\right)  \left(  \frac{1}{n}\sum_{i=1}^{n}X_{2,i}X_{3,i}\right)
+\frac{\sqrt{n}}{\left(  n-1\right)  ^{3/2}}\left(  \sqrt{n}\overline{X}%
_{2}\right)  \left(  \frac{1}{n}\sum_{i=1}^{n}X_{3,i}X_{1,i}\right) \\
&  +\frac{\sqrt{n}}{\left(  n-1\right)  ^{3/2}}\left(  \sqrt{n}\overline
{X}_{3}\right)  \left(  \frac{1}{n}\sum_{i=1}^{n}X_{1,i}X_{2,i}\right)
-\frac{1}{\left(  n-1\right)  ^{3/2}}\left(  \frac{1}{n}\sum_{t=1}^{T}%
X_{1,i}X_{2,i}X_{3,i}\right)
\end{align*}
and therefore,
\begin{align*}
&  nW-\sqrt{\frac{n}{n-1}}\sum_{j=1}^{n}W_{\left(  j\right)  }\\
&  =\allowbreak\frac{n^{2}+n}{\left(  n-1\right)  ^{2}}W\\
&  -\frac{n^{2}}{\left(  n-1\right)  ^{2}}\left(  \sqrt{n}\overline{X}%
_{1}\right)  \left(  \frac{1}{n}\sum_{i=1}^{n}X_{2,i}X_{3,i}\right)
-\frac{n^{2}}{\left(  n-1\right)  ^{2}}\left(  \sqrt{n}\overline{X}%
_{2}\right)  \left(  \frac{1}{n}\sum_{i=1}^{n}X_{3,i}X_{1,i}\right) \\
&  -\frac{n^{2}}{\left(  n-1\right)  ^{2}}\left(  \sqrt{n}\overline{X}%
_{3}\right)  \left(  \frac{1}{n}\sum_{i=1}^{n}X_{1,i}X_{2,i}\right)
+\frac{n\sqrt{n}}{\left(  n-1\right)  ^{2}}\left(  \frac{1}{n}\sum_{t=1}%
^{T}X_{1,i}X_{2,i}X_{3,i}\right)
\end{align*}

\subsection{Proof of Lemma \ref{Vjack4}}

Observe that
\begin{align*}
&  \frac{1}{n}\sum_{j=1}^{n}\left(  \frac{1}{n-1}\sum_{i\neq j}X_{1,i}\right)
\left(  \frac{1}{n-1}\sum_{i\neq j}X_{2,i}\right)  \left(  \frac{1}{n-1}%
\sum_{i\neq j}X_{3,i}\right)  \left(  \frac{1}{n-1}\sum_{i\neq j}%
X_{4,i}\right) \\
&  =\frac{1}{n}\sum_{j=1}^{n}\left(  \frac{n\overline{X}_{1}-X_{1,j}}%
{n-1}\right)  \left(  \frac{n\overline{X}_{2}-X_{2,j}}{n-1}\right)  \left(
\frac{n\overline{X}_{3}-X_{3,j}}{n-1}\right)  \left(  \frac{n\overline{X}%
_{4}-X_{4,j}}{n-1}\right) \\
&  =\frac{n^{4}-4n^{3}}{\left(  n-1\right)  ^{4}}\overline{X}_{1}\overline
{X}_{2}\overline{X}_{3}\overline{X}_{4}\\
&  +\frac{n}{\left(  n-1\right)  ^{4}}\overline{X}_{1}\overline{X}_{2}%
\sum_{i=1}^{n}X_{3,i}X_{4,i}+\frac{n}{\left(  n-1\right)  ^{4}}\overline
{X}_{1}\overline{X}_{3}\sum_{i=1}^{n}X_{2,i}X_{4,i}+\frac{n}{\left(
n-1\right)  ^{4}}\overline{X}_{1}\overline{X}_{4}\sum_{i=1}^{n}X_{2,i}%
X_{3,i}\\
&  +\frac{n}{\left(  n-1\right)  ^{4}}\overline{X}_{2}\overline{X}_{3}%
\sum_{i=1}^{n}X_{1,i}X_{4,i}+\frac{n}{\left(  n-1\right)  ^{4}}\overline
{X}_{2}\overline{X}_{4}\sum_{i=1}^{n}X_{1,i}X_{3,i}+\frac{n}{\left(
n-1\right)  ^{4}}\overline{X}_{3}\overline{X}_{4}\sum_{i=1}^{n}X_{1,i}%
X_{2,i}\\
&  -\frac{1}{\left(  n-1\right)  ^{4}}\overline{X}_{1}\sum_{i=1}^{n}%
X_{2,i}X_{3,i}X_{4,i}-\frac{1}{\left(  n-1\right)  ^{4}}\overline{X}_{2}%
\sum_{i=1}^{n}X_{1,i}X_{3,i}X_{4,i}\\
&  -\frac{1}{\left(  n-1\right)  ^{4}}\overline{X}_{3}\sum_{i=1}^{n}%
X_{1,i}X_{2,i}X_{4,i}-\frac{1}{\left(  n-1\right)  ^{4}}\overline{X}_{4}%
\sum_{i=1}^{n}X_{1,i}X_{2,i}X_{3,i}\\
&  +\frac{1}{n\left(  n-1\right)  ^{4}}\sum_{i=1}^{n}X_{1,i}X_{2,i}%
X_{3,i}X_{4,i}%
\end{align*}
It therefore follows that
\begin{align*}
\frac{1}{n}\sum_{j=1}^{n}W_{\left(  j\right)  }  &  =\frac{n^{2}-4n}{\left(
n-1\right)  ^{2}}\left(  \sqrt{n}\overline{X}_{1}\right)  \left(  \sqrt
{n}\overline{X}_{2}\right)  \left(  \sqrt{n}\overline{X}_{3}\right)  \left(
\sqrt{n}\overline{X}_{4}\right) \\
&  +\frac{n}{\left(  n-1\right)  ^{2}}\left(  \sqrt{n}\overline{X}_{1}\right)
\left(  \sqrt{n}\overline{X}_{2}\right)  \left(  \frac{1}{n}\sum_{i=1}%
^{n}X_{3,i}X_{4,i}\right) \\
& +\frac{n}{\left(  n-1\right)  ^{2}}\left(  \sqrt
{n}\overline{X}_{1}\right)  \left(  \sqrt{n}\overline{X}_{3}\right)  \left(
\frac{1}{n}\sum_{i=1}^{n}X_{2,i}X_{4,i}\right) \\
&  +\frac{n}{\left(  n-1\right)  ^{2}}\left(  \sqrt{n}\overline{X}_{1}\right)
\left(  \sqrt{n}\overline{X}_{4}\right)  \left(  \frac{1}{n}\sum_{i=1}%
^{n}X_{2,i}X_{3,i}\right) \\
& +\frac{n}{\left(  n-1\right)  ^{2}}\left(  \sqrt
{n}\overline{X}_{2}\right)  \left(  \sqrt{n}\overline{X}_{3}\right)  \left(
\frac{1}{n}\sum_{i=1}^{n}X_{1,i}X_{4,i}\right) \\
&  +\frac{n}{\left(  n-1\right)  ^{2}}\left(  \sqrt{n}\overline{X}_{2}\right)
\left(  \sqrt{n}\overline{X}_{4}\right)  \left(  \frac{1}{n}\sum_{i=1}%
^{n}X_{1,i}X_{3,i}\right) \\
& +\frac{n}{\left(  n-1\right)  ^{2}}\left(  \sqrt
{n}\overline{X}_{3}\right)  \left(  \sqrt{n}\overline{X}_{4}\right)  \left(
\frac{1}{n}\sum_{i=1}^{n}X_{1,i}X_{2,i}\right) \\
&  -\frac{n}{\sqrt{n}\left(  n-1\right)  ^{2}}\left(  \sqrt{n}\overline{X}%
_{1}\right)  \left(  \frac{1}{n}\sum_{i=1}^{n}X_{2,i}X_{3,i}X_{4,i}\right)
\\
&-\frac{n}{\sqrt{n}\left(  n-1\right)  ^{2}}\left(  \sqrt{n}\overline{X}%
_{2}\right)  \left(  \frac{1}{n}\sum_{i=1}^{n}X_{1,i}X_{3,i}X_{4,i}\right) \\
&  -\frac{n}{\sqrt{n}\left(  n-1\right)  ^{2}}\left(  \sqrt{n}\overline{X}%
_{3}\right)  \left(  \frac{1}{n}\sum_{i=1}^{n}X_{1,i}X_{2,i}X_{4,i}\right)
\\
&-\frac{n}{\sqrt{n}\left(  n-1\right)  ^{2}}\left(  \sqrt{n}\overline{X}%
_{4}\right)  \left(  \frac{1}{n}\sum_{i=1}^{n}X_{1,i}X_{2,i}X_{3,i}\right) \\
&  +\frac{1}{\left(  n-1\right)  ^{2}}\left(  \frac{1}{n}\sum_{i=1}^{n}%
X_{1,i}X_{2,i}X_{3,i}X_{4,i}\right)  ,
\end{align*}
from which the conclusion follows.

\subsection{Proof of Lemma \ref{Vjack5}}

Observe that
\begin{align*}
&  \frac{1}{n}\sum_{j=1}^{n}\left(  \frac{1}{n-1}\sum_{i\neq j}X_{1,i}\right)
\left(  \frac{1}{n-1}\sum_{i\neq j}X_{2,i}\right) \times  \\
&\qquad\left(  \frac{1}{n-1}%
\sum_{i\neq j}X_{3,i}\right)  \left(  \frac{1}{n-1}\sum_{i\neq j}%
X_{4,i}\right)  \left(  \frac{1}{n-1}\sum_{i\neq j}X_{5,i}\right) \\
&  =\frac{1}{n}\sum_{j=1}^{n}\left(  \frac{n\overline{X}_{1}-X_{1,j}}%
{n-1}\right)  \left(  \frac{n\overline{X}_{2}-X_{2,j}}{n-1}\right)  \left(
\frac{n\overline{X}_{3}-X_{3,j}}{n-1}\right)  \left(  \frac{n\overline{X}%
_{4}-X_{4,j}}{n-1}\right)  \left(  \frac{n\overline{X}_{5}-X_{5,j}}%
{n-1}\right) \\
&  =\frac{n^{5}-5n^{4}}{\left(  n-1\right)  ^{5}}\overline{X}_{1}\overline
{X}_{2}\overline{X}_{3}\overline{X}_{4}\overline{X}_{5}%
\end{align*}

\begin{align*}
&  +\frac{n^{2}}{\left(  n-1\right)  ^{5}}\overline{X}_{1}\overline{X}%
_{2}\overline{X}_{5}\left(  \frac{1}{n}\sum_{j=1}^{n}X_{3,j}X_{4,j}\right)
+\frac{n^{2}}{\left(  n-1\right)  ^{5}}\overline{X}_{1}\overline{X}%
_{2}\overline{X}_{3}\left(  \frac{1}{n}\sum_{j=1}^{n}X_{4,j}X_{5,j}\right) \\
&  +\frac{n^{2}}{\left(  n-1\right)  ^{5}}\overline{X}_{1}\overline{X}%
_{3}\overline{X}_{5}\left(  \frac{1}{n}\sum_{j=1}^{n}X_{2,j}X_{4,j}\right)
+\frac{n^{2}}{\left(  n-1\right)  ^{5}}\overline{X}_{1}\overline{X}%
_{4}\overline{X}_{5}\left(  \frac{1}{n}\sum_{j=1}^{n}X_{2,j}X_{3,j}\right) \\
&  +\frac{n^{2}}{\left(  n-1\right)  ^{5}}\overline{X}_{1}\overline{X}%
_{3}\overline{X}_{4}\left(  \frac{1}{n}\sum_{j=1}^{n}X_{2,j}X_{5,j}\right)
+\frac{n^{2}}{\left(  n-1\right)  ^{5}}\overline{X}_{3}\overline{X}%
_{4}\overline{X}_{5}\left(  \frac{1}{n}\sum_{j=1}^{n}X_{1,j}X_{2,j}\right) \\
&  +\frac{n^{2}}{\left(  n-1\right)  ^{5}}\overline{X}_{2}\overline{X}%
_{3}\overline{X}_{5}\left(  \frac{1}{n}\sum_{j=1}^{n}X_{1,j}X_{4,j}\right)
+\frac{n^{2}}{\left(  n-1\right)  ^{5}}\overline{X}_{2}\overline{X}%
_{4}\overline{X}_{5}\left(  \frac{1}{n}\sum_{j=1}^{n}X_{1,j}X_{3,j}\right) \\
&  +\frac{n^{2}}{\left(  n-1\right)  ^{5}}\overline{X}_{2}\overline{X}%
_{3}\overline{X}_{4}\left(  \frac{1}{n}\sum_{j=1}^{n}X_{1,j}X_{5,j}\right)
+\frac{n^{2}}{\left(  n-1\right)  ^{5}}\overline{X}_{1}\overline{X}%
_{2}\overline{X}_{4}\left(  \frac{1}{n}\sum_{j=1}^{n}X_{3,j}X_{5,j}\right)
\end{align*}

\begin{align*}
&  -\frac{n}{\left(  n-1\right)  ^{5}}\overline{X}_{3}\overline{X}_{5}\left(
\frac{1}{n}\sum_{j=1}^{n}X_{1,j}X_{2,j}X_{4,j}\right)  -\frac{n}{\left(
n-1\right)  ^{5}}\overline{X}_{1}\overline{X}_{5}\left(  \frac{1}{n}\sum
_{j=1}^{n}X_{2,j}X_{3,j}X_{4,j}\right) \\
&  -\frac{n}{\left(  n-1\right)  ^{5}}\overline{X}_{1}\overline{X}_{3}\left(
\frac{1}{n}\sum_{j=1}^{n}X_{2,j}X_{4,j}X_{5,j}\right)  -\frac{n}{\left(
n-1\right)  ^{5}}\overline{X}_{1}\overline{X}_{4}\left(  \frac{1}{n}\sum
_{j=1}^{n}X_{2,j}X_{3,j}X_{5,j}\right) \\
&  -\frac{n}{\left(  n-1\right)  ^{5}}\overline{X}_{1}\overline{X}_{2}\left(
\frac{1}{n}\sum_{j=1}^{n}X_{3,j}X_{4,j}X_{5,j}\right)  -\frac{n}{\left(
n-1\right)  ^{5}}\overline{X}_{2}\overline{X}_{5}\left(  \frac{1}{n}\sum
_{j=1}^{n}X_{1,j}X_{3,j}X_{4,j}\right) \\
&  -\frac{n}{\left(  n-1\right)  ^{5}}\overline{X}_{2}\overline{X}_{3}\left(
\frac{1}{n}\sum_{j=1}^{n}X_{1,j}X_{4,j}X_{5,j}\right)  -\frac{n}{\left(
n-1\right)  ^{5}}\overline{X}_{2}\overline{X}_{4}\left(  \frac{1}{n}\sum
_{j=1}^{n}X_{1,j}X_{3,j}X_{5,j}\right) \\
&  -\frac{n}{\left(  n-1\right)  ^{5}}\overline{X}_{4}\overline{X}_{5}\left(
\frac{1}{n}\sum_{j=1}^{n}X_{1,j}X_{2,j}X_{3,j}\right)  -\frac{n}{\left(
n-1\right)  ^{5}}\overline{X}_{3}\overline{X}_{4}\left(  \frac{1}{n}\sum
_{j=1}^{n}X_{1,j}X_{2,j}X_{5,j}\right)
\end{align*}

\begin{align*}
&  +\frac{1}{\left(  n-1\right)  ^{5}}\overline{X}_{1}\left(  \frac{1}{n}%
\sum_{j=1}^{n}X_{2,j}X_{3,j}X_{4,j}X_{5,j}\right)  +\frac{1}{\left(
n-1\right)  ^{5}}\overline{X}_{2}\left(  \frac{1}{n}\sum_{j=1}^{n}%
X_{1,j}X_{3,j}X_{4,j}X_{5,j}\right) \\
&  +\frac{1}{\left(  n-1\right)  ^{5}}\overline{X}_{3}\left(  \frac{1}{n}%
\sum_{j=1}^{n}X_{1,j}X_{2,j}X_{4,j}X_{5,j}\right)  +\frac{1}{\left(
n-1\right)  ^{5}}\overline{X}_{4}\left(  \frac{1}{n}\sum_{j=1}^{n}%
X_{1,j}X_{2,j}X_{3,j}X_{5,j}\right) \\
&  +\frac{1}{\left(  n-1\right)  ^{5}}\overline{X}_{5}\left(  \frac{1}{n}%
\sum_{j=1}^{n}X_{1,j}X_{2,j}X_{3,j}X_{4,j}\right) \\
&  -\frac{1}{\left(  n-1\right)  ^{5}}\frac{1}{n}\sum_{j=1}^{n}X_{1,j}%
X_{2,j}X_{3,j}X_{4,j}X_{5,j}%
\end{align*}
Therefore, we have
\begin{align*}
&  \frac{1}{n}\sum_{j=1}^{n}W_{\left(  j\right)  }\\
&  =\frac{n^{3}-5n^{2}}{\left(  n-1\right)  ^{2}\sqrt{n-1}\sqrt{n}}\left(
\sqrt{n}\overline{X}_{1}\right)  \left(  \sqrt{n}\overline{X}_{2}\right)
\left(  \sqrt{n}\overline{X}_{3}\right)  \left(  \sqrt{n}\overline{X}%
_{4}\right)  \left(  \sqrt{n}\overline{X}_{5}\right) \\
&  +\frac{\sqrt{n}}{\left(  n-1\right)  ^{2}\sqrt{n-1}}\left(  \sqrt
{n}\overline{X}_{1}\right)  \left(  \sqrt{n}\overline{X}_{2}\right)  \left(
\sqrt{n}\overline{X}_{5}\right)  \left(  \frac{1}{n}\sum_{j=1}^{n}%
X_{3,j}X_{4,j}\right) \\
&  +\frac{\sqrt{n}}{\left(  n-1\right)  ^{2}\sqrt{n-1}}\left(  \sqrt
{n}\overline{X}_{1}\right)  \left(  \sqrt{n}\overline{X}_{2}\right)  \left(
\sqrt{n}\overline{X}_{3}\right)  \left(  \frac{1}{n}\sum_{j=1}^{n}%
X_{4,j}X_{5,j}\right) \\
&  +\frac{\sqrt{n}}{\left(  n-1\right)  ^{2}\sqrt{n-1}}\left(  \sqrt
{n}\overline{X}_{1}\right)  \left(  \sqrt{n}\overline{X}_{3}\right)  \left(
\sqrt{n}\overline{X}_{5}\right)  \left(  \frac{1}{n}\sum_{j=1}^{n}%
X_{2,j}X_{4,j}\right) \\
&  +\frac{\sqrt{n}}{\left(  n-1\right)  ^{2}\sqrt{n-1}}\left(  \sqrt
{n}\overline{X}_{1}\right)  \left(  \sqrt{n}\overline{X}_{4}\right)  \left(
\sqrt{n}\overline{X}_{5}\right)  \left(  \frac{1}{n}\sum_{j=1}^{n}%
X_{2,j}X_{3,j}\right) \\
&  +\frac{\sqrt{n}}{\left(  n-1\right)  ^{2}\sqrt{n-1}}\left(  \sqrt
{n}\overline{X}_{1}\right)  \left(  \sqrt{n}\overline{X}_{3}\right)  \left(
\sqrt{n}\overline{X}_{4}\right)  \left(  \frac{1}{n}\sum_{j=1}^{n}%
X_{2,j}X_{5,j}\right) \\
&  +\frac{\sqrt{n}}{\left(  n-1\right)  ^{2}\sqrt{n-1}}\left(  \sqrt
{n}\overline{X}_{3}\right)  \left(  \sqrt{n}\overline{X}_{4}\right)  \left(
\sqrt{n}\overline{X}_{5}\right)  \left(  \frac{1}{n}\sum_{j=1}^{n}%
X_{1,j}X_{2,j}\right) \\
&  +\frac{\sqrt{n}}{\left(  n-1\right)  ^{2}\sqrt{n-1}}\left(  \sqrt
{n}\overline{X}_{2}\right)  \left(  \sqrt{n}\overline{X}_{3}\right)  \left(
\sqrt{n}\overline{X}_{5}\right)  \left(  \frac{1}{n}\sum_{j=1}^{n}%
X_{1,j}X_{4,j}\right) \\
&  +\frac{\sqrt{n}}{\left(  n-1\right)  ^{2}\sqrt{n-1}}\left(  \sqrt
{n}\overline{X}_{2}\right)  \left(  \sqrt{n}\overline{X}_{4}\right)  \left(
\sqrt{n}\overline{X}_{5}\right)  \left(  \frac{1}{n}\sum_{j=1}^{n}%
X_{1,j}X_{3,j}\right) \\
&  +\frac{\sqrt{n}}{\left(  n-1\right)  ^{2}\sqrt{n-1}}\left(  \sqrt
{n}\overline{X}_{2}\right)  \left(  \sqrt{n}\overline{X}_{3}\right)  \left(
\sqrt{n}\overline{X}_{4}\right)  \left(  \frac{1}{n}\sum_{j=1}^{n}%
X_{1,j}X_{5,j}\right) \\
&  +\frac{\sqrt{n}}{\left(  n-1\right)  ^{2}\sqrt{n-1}}\left(  \sqrt
{n}\overline{X}_{1}\right)  \left(  \sqrt{n}\overline{X}_{2}\right)  \left(
\sqrt{n}\overline{X}_{4}\right)  \left(  \frac{1}{n}\sum_{j=1}^{n}%
X_{3,j}X_{5,j}\right)
\end{align*}

\begin{align*}
&  -\frac{1}{\left(  n-1\right)  ^{2}\sqrt{n-1}}\left(  \sqrt{n}\overline
{X}_{3}\right)  \left(  \sqrt{n}\overline{X}_{5}\right)  \left(  \frac{1}%
{n}\sum_{j=1}^{n}X_{1,j}X_{2,j}X_{4,j}\right) \\
&  -\frac{1}{\left(  n-1\right)  ^{2}\sqrt{n-1}}\left(  \sqrt{n}\overline
{X}_{1}\right)  \left(  \sqrt{n}\overline{X}_{5}\right)  \left(  \frac{1}%
{n}\sum_{j=1}^{n}X_{2,j}X_{3,j}X_{4,j}\right) \\
&  -\frac{1}{\left(  n-1\right)  ^{2}\sqrt{n-1}}\left(  \sqrt{n}\overline
{X}_{1}\right)  \left(  \sqrt{n}\overline{X}_{3}\right)  \left(  \frac{1}%
{n}\sum_{j=1}^{n}X_{2,j}X_{4,j}X_{5,j}\right) \\
&  -\frac{1}{\left(  n-1\right)  ^{2}\sqrt{n-1}}\left(  \sqrt{n}\overline
{X}_{1}\right)  \left(  \sqrt{n}\overline{X}_{4}\right)  \left(  \frac{1}%
{n}\sum_{j=1}^{n}X_{2,j}X_{3,j}X_{5,j}\right) \\
&  -\frac{1}{\left(  n-1\right)  ^{2}\sqrt{n-1}}\left(  \sqrt{n}\overline
{X}_{1}\right)  \left(  \sqrt{n}\overline{X}_{2}\right)  \left(  \frac{1}%
{n}\sum_{j=1}^{n}X_{3,j}X_{4,j}X_{5,j}\right) \\
&  -\frac{1}{\left(  n-1\right)  ^{2}\sqrt{n-1}}\left(  \sqrt{n}\overline
{X}_{2}\right)  \left(  \sqrt{n}\overline{X}_{5}\right)  \left(  \frac{1}%
{n}\sum_{j=1}^{n}X_{1,j}X_{3,j}X_{4,j}\right) \\
&  -\frac{1}{\left(  n-1\right)  ^{2}\sqrt{n-1}}\left(  \sqrt{n}\overline
{X}_{2}\right)  \left(  \sqrt{n}\overline{X}_{3}\right)  \left(  \frac{1}%
{n}\sum_{j=1}^{n}X_{1,j}X_{4,j}X_{5,j}\right) \\
&  -\frac{1}{\left(  n-1\right)  ^{2}\sqrt{n-1}}\left(  \sqrt{n}\overline
{X}_{2}\right)  \left(  \sqrt{n}\overline{X}_{4}\right)  \left(  \frac{1}%
{n}\sum_{j=1}^{n}X_{1,j}X_{3,j}X_{5,j}\right) \\
&  -\frac{1}{\left(  n-1\right)  ^{2}\sqrt{n-1}}\left(  \sqrt{n}\overline
{X}_{4}\right)  \left(  \sqrt{n}\overline{X}_{5}\right)  \left(  \frac{1}%
{n}\sum_{j=1}^{n}X_{1,j}X_{2,j}X_{3,j}\right) \\
&  -\frac{1}{\left(  n-1\right)  ^{2}\sqrt{n-1}}\left(  \sqrt{n}\overline
{X}_{3}\right)  \left(  \sqrt{n}\overline{X}_{4}\right)  \left(  \frac{1}%
{n}\sum_{j=1}^{n}X_{1,j}X_{2,j}X_{5,j}\right)
\end{align*}

\begin{align*}
&  +\frac{1}{\left(  n-1\right)  ^{2}\sqrt{n-1}\sqrt{n}}\left(  \sqrt
{n}\overline{X}_{1}\right)  \left(  \frac{1}{n}\sum_{j=1}^{n}X_{2,j}%
X_{3,j}X_{4,j}X_{5,j}\right) \\
&  +\frac{1}{\left(  n-1\right)  ^{2}\sqrt{n-1}\sqrt{n}}\left(  \sqrt
{n}\overline{X}_{2}\right)  \left(  \frac{1}{n}\sum_{j=1}^{n}X_{1,j}%
X_{3,j}X_{4,j}X_{5,j}\right) \\
&  +\frac{1}{\left(  n-1\right)  ^{2}\sqrt{n-1}\sqrt{n}}\left(  \sqrt
{n}\overline{X}_{3}\right)  \left(  \frac{1}{n}\sum_{j=1}^{n}X_{1,j}%
X_{2,j}X_{4,j}X_{5,j}\right) \\
&  +\frac{1}{\left(  n-1\right)  ^{2}\sqrt{n-1}\sqrt{n}}\left(  \sqrt
{n}\overline{X}_{4}\right)  \left(  \frac{1}{n}\sum_{j=1}^{n}X_{1,j}%
X_{2,j}X_{3,j}X_{5,j}\right) \\
&  +\frac{1}{\left(  n-1\right)  ^{2}\sqrt{n-1}\sqrt{n}}\left(  \sqrt
{n}\overline{X}_{5}\right)  \left(  \frac{1}{n}\sum_{j=1}^{n}X_{1,j}%
X_{2,j}X_{3,j}X_{4,j}\right) \\
&  -\frac{1}{\left(  n-1\right)  ^{2}\sqrt{n-1}}\frac{1}{n}\sum_{j=1}%
^{n}X_{1,j}X_{2,j}X_{3,j}X_{4,j}X_{5,j}%
\end{align*}
from which the conclusion follows.

\section{Proofs for Example 1\label{sec-4.2-proofs}}

It is straightforward to show that $E\left[  \widehat{\theta}\right]
=\theta+\frac{1}{n}$, so the higher order bias is equal to 1. Consider two
versions of bias correction. The first one is a split-sample jackknife. (For
simplicity, suppose that $n=2m$ for some integer $m$.) It is given by the
formula
\[
\widehat{\theta}_{SS}=2\left(  \frac{1}{n}\sum_{i=1}^{n}Z_{i}\right)
^{2}-\frac{1}{2}\left(  \left(  \frac{1}{m}\sum_{i=1}^{m}Z_{i}\right)
^{2}+\left(  \frac{1}{m}\sum_{i=m+1}^{n}Z_{i}\right)  ^{2}\right)  ,
\]
which can be shown to be equivalent to%
\begin{equation}
\widehat{\theta}_{SS}=\left(  \frac{1}{m}\sum_{i=1}^{m}Z_{i}\right)  \left(
\frac{1}{m}\sum_{i=m+1}^{n}Z_{i}\right)  . \label{SS-equiv}%
\end{equation}
The second one is the jackknife estimator, which can be shown to be equivalent
to%
\begin{equation}
\widehat{\theta}_{J}=\frac{1}{n\left(  n-1\right)  }\sum_{i\neq j}Z_{i}Z_{j}.
\label{jack-equiv}%
\end{equation}

In order to understand this issue, it is useful to define $\varepsilon
_{i}=Z_{i}-\sqrt{\theta}\sim N\left(  0,1\right)  $. It is straightforward to
show that%
\begin{align*}
\widehat{\theta}  &  =\theta+\frac{2\sqrt{\theta}}{n}\sum_{i=1}^{n}%
\varepsilon_{i}+\left(  \frac{1}{n}\sum_{i=1}^{n}\varepsilon_{i}\right)
^{2},\\
\widehat{\theta}_{SS}  &  =\theta+\frac{2\sqrt{\theta}}{n}\sum_{i=1}%
^{n}\varepsilon_{i}+\frac{1}{m}\left(  \frac{1}{\sqrt{m}}\sum_{i=1}%
^{m}\varepsilon_{i}\right)  \left(  \frac{1}{\sqrt{m}}\sum_{i=m+1}%
^{n}\varepsilon_{i}\right)  ,\\
\widehat{\theta}_{J}  &  =\theta+\frac{2\sqrt{\theta}}{n}\sum_{i=1}%
^{n}\varepsilon_{i}+\frac{1}{n\left(  n-1\right)  }\sum_{i\neq j}%
\varepsilon_{i}\varepsilon_{j}.
\end{align*}
Note that the first two terms are identical across the three estimators, i.e.,
the differences are all in the second order. The difference of the second
order terms reflect the difference of the implicit bias estimators.

\subsection{Proof of (\ref{SS-equiv})}%

\begin{align*}
\widehat{\theta}_{SS}  &  =2\left(  \frac{1}{n}\sum_{i=1}^{n}Z_{i}\right)
^{2}-\frac{1}{2}\left(  \left(  \frac{1}{m}\sum_{i=1}^{m}Z_{i}\right)
^{2}+\left(  \frac{1}{m}\sum_{i=m+1}^{n}Z_{i}\right)  ^{2}\right) \\
&  =2\left(  \frac{\frac{1}{m}\sum_{i=1}^{m}Z_{i}+\frac{1}{m}\sum_{i=m+1}%
^{n}Z_{i}}{2}\right)  ^{2}-\frac{1}{2}\left(  \left(  \frac{1}{m}\sum
_{i=1}^{m}Z_{i}\right)  ^{2}+\left(  \frac{1}{m}\sum_{i=m+1}^{n}Z_{i}\right)
^{2}\right) \\
&  =\frac{1}{2}\left(  \frac{1}{m}\sum_{i=1}^{m}Z_{i}\right)  ^{2}+\left(
\frac{1}{m}\sum_{i=1}^{m}Z_{i}\right)  \left(  \frac{1}{m}\sum_{i=m+1}%
^{n}Z_{i}\right)  +\frac{1}{2}\left(  \frac{1}{m}\sum_{i=m+1}^{n}Z_{i}\right)
^{2}\\
&  -\frac{1}{2}\left(  \frac{1}{m}\sum_{i=1}^{m}Z_{i}\right)  ^{2}-\frac{1}%
{2}\left(  \frac{1}{m}\sum_{i=m+1}^{n}Z_{i}\right)  ^{2}\\
&  =\left(  \frac{1}{m}\sum_{i=1}^{m}Z_{i}\right)  \left(  \frac{1}{m}%
\sum_{i=m+1}^{n}Z_{i}\right)  .
\end{align*}

\subsection{Proof of (\ref{jack-equiv})}

We can write%
\begin{align*}
\widehat{\theta}_{J}  &  =n\left(  \overline{Z}\right)  ^{2}-\frac{n-1}{n}%
\sum_{i=1}^{n}\left(  \frac{1}{n-1}\left(  n\overline{Z}-Z_{i}\right)
\right)  ^{2}\\
&  =n\left(  \overline{Z}\right)  ^{2}-\frac{n-1}{n}\frac{1}{\left(
n-1\right)  ^{2}}\sum_{i=1}^{n}\left(  \left(  n\overline{Z}\right)
^{2}-2n\overline{Z}Z_{i}+Z_{i}^{2}\right) \\
&  =n\left(  \overline{Z}\right)  ^{2}-\frac{1}{n\left(  n-1\right)  }\left(
n^{3}\left(  \overline{Z}\right)  ^{2}-2n\overline{Z}\sum_{i=1}^{n}Z_{i}%
+\sum_{i=1}^{n}Z_{i}^{2}\right) \\
&  =n\left(  \overline{Z}\right)  ^{2}-\frac{1}{n\left(  n-1\right)  }\left(
n^{3}\left(  \overline{Z}\right)  ^{2}-2n\overline{Z}\left(  n\overline
{Z}\right)  +\sum_{i=1}^{n}Z_{i}^{2}\right) \\
&  =\left(  n-\frac{n^{3}-2n^{2}}{n\left(  n-1\right)  }\right)  \left(
\overline{Z}\right)  ^{2}-\frac{1}{n\left(  n-1\right)  }\sum_{i=1}^{n}%
Z_{i}^{2}\\
&  =\frac{n}{n-1}\left(  \overline{Z}\right)  ^{2}-\frac{1}{n\left(
n-1\right)  }\sum_{i=1}^{n}Z_{i}^{2}\\
&  =\frac{1}{n\left(  n-1\right)  }\left(  \sum_{i=1}^{n}Z_{i}\right)
^{2}-\frac{1}{n\left(  n-1\right)  }\sum_{i=1}^{n}Z_{i}^{2}\\
&  =\frac{1}{n\left(  n-1\right)  }\sum_{i\neq j}Z_{i}Z_{j}%
\end{align*}

\subsection{Variance of split-sample\label{proof-SS-var}}

Letting
\begin{align*}
S_{1}  &  =\frac{1}{\sqrt{m}}\sum_{i=1}^{m}\varepsilon_{i}\sim N\left(
0,1\right)  ,\\
S_{2}  &  =\frac{1}{\sqrt{m}}\sum_{i=m+1}^{n}\varepsilon_{i}\sim N\left(
0,1\right)  ,
\end{align*}
we can write%
\begin{align*}
\widehat{\theta}_{SS}  &  =\left(  \frac{1}{m}\sum_{i=1}^{m}Z_{i}\right)
\left(  \frac{1}{m}\sum_{i=m+1}^{n}Z_{i}\right) \\
&  =\left(  \sqrt{\theta}+\frac{1}{\sqrt{m}}\left(  \frac{1}{\sqrt{m}}%
\sum_{i=1}^{m}\varepsilon_{i}\right)  \right)  \left(  \sqrt{\theta}+\frac
{1}{\sqrt{m}}\left(  \frac{1}{\sqrt{m}}\sum_{i=m+1}^{n}\varepsilon_{i}\right)
\right) \\
&  =\left(  \sqrt{\theta}+\frac{1}{\sqrt{m}}S_{1}\right)  \left(  \sqrt
{\theta}+\frac{1}{\sqrt{m}}S_{2}\right) \\
&  =\theta+\frac{1}{\sqrt{m}}\left(  S_{1}+S_{2}\right)  \sqrt{\theta}%
+\frac{1}{m}S_{1}S_{2}.
\end{align*}
Therefore,
\[
\left(  \widehat{\theta}_{SS}-\theta\right)  ^{2}=\frac{1}{m}\left(
S_{1}+S_{2}\right)  ^{2}\theta+\frac{1}{m^{2}}S_{1}^{2}S_{2}^{2}+\frac
{2\sqrt{\theta}}{m\sqrt{m}}S_{1}S_{2}\left(  S_{1}+S_{2}\right)
\]
Because $S_{1}$ and $S_{2}$ are independent, we obtain that%
\[
E\left[  \left(  \widehat{\theta}_{SS}-\theta\right)  ^{2}\right]  =\frac
{2}{m}\theta+\frac{1}{m^{2}}=\frac{4}{n}\theta+\frac{4}{n^{2}}%
\]

\subsection{Variance of jackknife\label{proof-jack-var}}

We have%
\begin{align*}
\widehat{\theta}_{J}  &  =\frac{1}{n\left(  n-1\right)  }\sum_{i\neq j}%
Z_{i}Z_{j}=\frac{1}{n\left(  n-1\right)  }\sum_{i\neq j}\left(  \sqrt{\theta
}+\varepsilon_{i}\right)  \left(  \sqrt{\theta}+\varepsilon_{j}\right) \\
&  =\theta+2\sqrt{\theta}\frac{1}{n\left(  n-1\right)  }\sum_{i\neq j}\left(
\varepsilon_{i}+\varepsilon_{j}\right)  +\frac{1}{n\left(  n-1\right)  }%
\sum_{i\neq j}\varepsilon_{i}\varepsilon_{j}\\
&  =\theta+2\sqrt{\theta}\frac{\sum_{i}\varepsilon_{i}}{n}+\frac{1}{n\left(
n-1\right)  }\sum_{i\neq j}\varepsilon_{i}\varepsilon_{j}.
\end{align*}
Regarding the third equality, I used%
\begin{align*}
\sum_{i\neq j}\left(  \varepsilon_{i}+\varepsilon_{j}\right)   &  =\sum
_{i,j}\left(  \varepsilon_{i}+\varepsilon_{j}\right)  -\sum_{i}\left(
\varepsilon_{i}+\varepsilon_{i}\right)  =\sum_{i}\left(  \varepsilon
_{i}+n\overline{\varepsilon}\right)  -2n\overline{\varepsilon}\\
&  =n\overline{\varepsilon}+n^{2}\overline{\varepsilon}-2n\overline
{\varepsilon}\\
&  =\left(  n^{2}-n\right)  \overline{\varepsilon}.
\end{align*}
It follows that
\[
\left(  \widehat{\theta}_{J}-\theta\right)  ^{2}=4\theta\left(  \frac{\sum
_{i}\varepsilon_{i}}{n}\right)  ^{2}+\left(  \frac{1}{n\left(  n-1\right)
}\sum_{i\neq j}\varepsilon_{i}\varepsilon_{j}\right)  ^{2}+4\sqrt{\theta
}\left(  \frac{\sum_{i}\varepsilon_{i}}{n}\right)  \left(  \frac{1}{n\left(
n-1\right)  }\sum_{i\neq j}\varepsilon_{i}\varepsilon_{j}\right)
\]
Because the third term consists of third moments, we can see that it has a
zero expectation. As for the second term, we see that%
\begin{align*}
E\left[  \left(  \sum_{i\neq j}\varepsilon_{i}\varepsilon_{j}\right)
^{2}\right]   &  =E\left[  \sum_{i_{1}\neq j_{1}}\sum_{i_{2}\neq j_{2}%
}\varepsilon_{i_{1}}\varepsilon_{j_{1}}\varepsilon_{i_{2}}\varepsilon_{j_{2}%
}\right] \\
&  =E\left[  \sum_{i_{1}=i_{2}\neq j_{1}=j_{2}}\varepsilon_{i_{1}}%
\varepsilon_{j_{1}}\varepsilon_{i_{2}}\varepsilon_{j_{2}}\right]  +E\left[
\sum_{i_{1}=j_{2}\neq j_{1}=i_{2}}\varepsilon_{i_{1}}\varepsilon_{j_{1}%
}\varepsilon_{i_{2}}\varepsilon_{j_{2}}\right] \\
&  =E\left[  \sum_{i_{1}\neq j_{1}}\left(  \varepsilon_{i_{1}}\varepsilon
_{j_{1}}\right)  ^{2}\right]  +E\left[  \sum_{i_{1}\neq j_{1}}\left(
\varepsilon_{i_{1}}\varepsilon_{j_{1}}\right)  ^{2}\right] \\
&  =2n\left(  n-1\right)  E\left[  \left(  \varepsilon_{i_{1}}\varepsilon
_{j_{1}}\right)  ^{2}\right]  =2n\left(  n-1\right)  ,
\end{align*}
so%
\[
E\left[  \left(  \frac{1}{n\left(  n-1\right)  }\sum_{i\neq j}\varepsilon
_{i}\varepsilon_{j}\right)  ^{2}\right]  =\frac{2}{n\left(  n-1\right)  }.
\]
Finally, the first term is such that%
\[
E\left[  4\theta\left(  \frac{\sum_{i}\varepsilon_{i}}{n}\right)  ^{2}\right]
=\frac{4\theta}{n}.
\]

\subsection{Accuracy of jackknife bias estimate)\label{proof-jack-bias}}

We have%
\begin{align*}
\widehat{\theta}-\widehat{\theta}_{J}  &  =\left(  \frac{1}{n}\sum_{i=1}%
^{n}\varepsilon_{i}\right)  ^{2}-\frac{1}{n\left(  n-1\right)  }\sum_{i\neq
j}\varepsilon_{i}\varepsilon_{j}\\
&  =\frac{1}{n^{2}}\left(  \sum_{i=1}^{n}\varepsilon_{i}^{2}+\sum_{i\neq
j}\varepsilon_{i}\varepsilon_{j}\right)  -\frac{1}{n\left(  n-1\right)  }%
\sum_{i\neq j}\varepsilon_{i}\varepsilon_{j}\\
&  =\frac{1}{n^{2}}\sum_{i=1}^{n}\varepsilon_{i}^{2}+\left(  \frac{1}{n^{2}%
}-\frac{1}{n\left(  n-1\right)  }\right)  \sum_{i\neq j}\varepsilon
_{i}\varepsilon_{j}\\
&  =\frac{1}{n}\left(  \frac{1}{n}\sum_{i=1}^{n}\varepsilon_{i}^{2}\right)
-\frac{1}{n^{2}\left(  n-1\right)  }\sum_{i\neq j}\varepsilon_{i}%
\varepsilon_{j}.
\end{align*}
Because $E\left[  \left(  \sum_{i\neq j}\varepsilon_{i}\varepsilon_{j}\right)
^{2}\right]  =2n\left(  n-1\right)  $,\footnote{See Section
\ref{proof-jack-var}.} we have%
\[
\frac{1}{n^{2}\left(  n-1\right)  }\sum_{i\neq j}\varepsilon_{i}%
\varepsilon_{j}=\frac{O_{p}\left(  n\right)  }{n^{2}\left(  n-1\right)
}=O_{p}\left(  \frac{1}{n^{2}}\right)  .
\]
We also have%
\[
\frac{1}{n}\sum_{i=1}^{n}\varepsilon_{i}^{2}=E\left[  \varepsilon_{i}%
^{2}\right]  +O_{p}\left(  \frac{1}{\sqrt{n}}\right)  ,
\]
so%
\begin{align*}
\left(  \frac{1}{n}\sum_{i=1}^{n}\varepsilon_{i}\right)  ^{2}-\frac
{1}{n\left(  n-1\right)  }\sum_{i\neq j}\varepsilon_{i}\varepsilon_{j}%
&=\frac{1}{n}\left(  E\left[  \varepsilon_{i}^{2}\right]  +O_{p}\left(
\frac{1}{\sqrt{n}}\right)  \right)  +O_{p}\left(  \frac{1}{n^{2}}\right) \\
&=\frac{1}{n}\left(  E\left[  \varepsilon_{i}^{2}\right]  +O_{p}\left(
\frac{1}{\sqrt{n}}\right)  \right)  .
\end{align*}

\subsection{Accuracy of split-sample bias estimate)\label{proof-SS-bias}}

We have%
\begin{align*}
\widehat{\theta}-\widehat{\theta}_{SS}  &  =\left(  \frac{1}{n}\sum_{i=1}%
^{n}\varepsilon_{i}\right)  ^{2}-\frac{1}{m}\left(  \frac{1}{\sqrt{m}}%
\sum_{i=1}^{m}\varepsilon_{i}\right)  \left(  \frac{1}{\sqrt{m}}\sum
_{i=m+1}^{n}\varepsilon_{i}\right) \\
&  =\frac{1}{n^{2}}\left(  \sum_{i=1}^{n}\varepsilon_{i}\right)  ^{2}-\frac
{1}{m^{2}}\left(  \sum_{i=1}^{m}\varepsilon_{i}\right)  \left(  \sum
_{i=m+1}^{n}\varepsilon_{i}\right) \\
&  =\frac{1}{n^{2}}\left(  \left(  \sum_{i=1}^{m}\varepsilon_{i}\right)
+\left(  \sum_{i=m+1}^{n}\varepsilon_{i}\right)  \right)  ^{2}-\frac{1}{m^{2}%
}\left(  \sum_{i=1}^{m}\varepsilon_{i}\right)  \left(  \sum_{i=m+1}%
^{n}\varepsilon_{i}\right) \\
&  =\frac{1}{n^{2}}\left(  \sum_{i=1}^{m}\varepsilon_{i}\right)  ^{2}+\frac
{1}{n^{2}}\left(  \sum_{i=m+1}^{n}\varepsilon_{i}\right)  ^{2}+\left(
\frac{2}{n^{2}}-\frac{1}{m^{2}}\right)  \left(  \sum_{i=1}^{m}\varepsilon
_{i}\right)  \left(  \sum_{i=m+1}^{n}\varepsilon_{i}\right) \\
&  =\frac{m}{n^{2}}\left(  \frac{1}{\sqrt{m}}\sum_{i=1}^{m}\varepsilon
_{i}\right)  ^{2}+\frac{m}{n^{2}}\left(  \frac{1}{\sqrt{m}}\sum_{i=m+1}%
^{n}\varepsilon_{i}\right)  ^{2} \\
&+m\left(  \frac{2}{n^{2}}-\frac{1}{m^{2}%
}\right)  \left(  \frac{1}{\sqrt{m}}\sum_{i=1}^{m}\varepsilon_{i}\right)
\left(  \frac{1}{\sqrt{m}}\sum_{i=m+1}^{n}\varepsilon_{i}\right) \\
&  =\frac{1}{2n}\left(  \frac{1}{\sqrt{m}}\sum_{i=1}^{m}\varepsilon
_{i}\right)  ^{2}+\frac{1}{2n}\left(  \frac{1}{\sqrt{m}}\sum_{i=m+1}%
^{n}\varepsilon_{i}\right)  ^{2}-\frac{1}{n}\left(  \frac{1}{\sqrt{m}}%
\sum_{i=1}^{m}\varepsilon_{i}\right)  \left(  \frac{1}{\sqrt{m}}\sum
_{i=m+1}^{n}\varepsilon_{i}\right)
\end{align*}
Because $\frac{1}{\sqrt{m}}\sum_{i=1}^{m}\varepsilon_{i}$ and $\frac{1}%
{\sqrt{m}}\sum_{i=m+1}^{n}\varepsilon_{i}$ are independent $N\left(
0,1\right)  $, we conclude that%
\[
\widehat{\theta}-\widehat{\theta}_{SS}=\frac{1}{n}\left(  E\left[
\varepsilon_{i}^{2}\right]  +O_{p}\left(  1\right)  \right)  .
\]

\section{Higher Order Variance of Split Sample Bias Corrected
Estimator - Generalization \label{app-HOV-SS-gen}}

Suppose that
\[
n=km
\]
for some integer $k$ and $m$. From%
\[
E\left[  \widehat{\theta}\right]  =\theta+\frac{1}{n}b
\]
we see that%
\[
E\left[  \widehat{\theta}^{\left(  s\right)  }\right]  =\theta+\frac{b}%
{m}=\theta+\frac{1}{n/k}b=\theta+\frac{k}{n}b
\]
for $s=1,\cdots,k$. Here, $\widehat{\theta}^{\left(  s\right)  }$ denotes the
estimator based on the subsample $\left\{  Z_{\left(  m-1\right)  s+1}%
,\ldots,Z_{m}\right\}  $ of size $m$. For simplicity, let $\sum_{\left(
s\right)  }$ denote the sum over the indices $\left(  m-1\right)
s+1,\ldots,,m$. We therefore have%
\begin{align*}
E\left[  k\widehat{\theta}\right]   &  =k\theta+\frac{k}{n}b\\
E\left[  \frac{1}{k}\sum_{s=1}^{k}\widehat{\theta}^{\left(  s\right)
}\right]   &  =\theta+\frac{k}{n}b
\end{align*}
so%
\[
E\left[  \frac{1}{k-1}\left(  k\widehat{\theta}-\left(  \frac{1}{k}\sum
_{s=1}^{k}\widehat{\theta}^{\left(  s\right)  }\right)  \right)  \right]
=\theta
\]
which can be used as a basis of bias correction. We let
\[
\widehat{\theta}_{SS}=\frac{k}{k-1}\widehat{\theta}-\frac{1}{k\left(
k-1\right)  }\sum_{s=1}^{k}\widehat{\theta}^{\left(  s\right)  }%
\]

We can write%
\begin{equation}
\sqrt{n}\left(  \widehat{\theta}-\theta_{0}\right)  =T_{1}+\frac{1}{\sqrt{n}%
}T_{2}+\frac{1}{n}T_{3}+o_{p}\left(  n^{-1}\right)
\label{eq:generic-expansion}%
\end{equation}
where%
\begin{align}
T_{1}=  &  \frac{1}{\sqrt{n}}\sum_{i=1}^{n}Y_{1,i},\nonumber\\
T_{2}=  &  \left(  \frac{1}{\sqrt{n}}\sum_{i=1}^{n}Y_{1,i}\right)  \left(
\frac{1}{\sqrt{n}}\sum_{i=1}^{n}Y_{2,i}\right)  ,\label{T2-X1X2g}\\
T_{3}=  &  \left(  \frac{1}{\sqrt{n}}\sum_{i=1}^{n}Y_{1,i}\right)  \left(
\frac{1}{\sqrt{n}}\sum_{i=1}^{n}Y_{3,i}\right)  \left(  \frac{1}{\sqrt{n}}%
\sum_{i=1}^{n}Y_{4,i}\right) \label{T3-X1X3X4g}\\
&  +\left(  \frac{1}{\sqrt{n}}\sum_{i=1}^{n}Y_{1,i}\right)  \left(  \frac
{1}{\sqrt{n}}\sum_{i=1}^{n}Y_{5,i}\right)  \left(  \frac{1}{\sqrt{n}}%
\sum_{i=1}^{n}Y_{6,i}\right)  . \label{T3-X1X5X6g}%
\end{align}
Here, we defined%
\begin{align*}
Y_{1,i}  &  \equiv\mathcal{I}^{-1}U_{i}\left(  \theta_{0}\right)  ,\\
Y_{2,i}  &  \equiv\frac{1}{2}\mathcal{I}^{-2}\mathcal{Q}_{1}\left(  \theta
_{0}\right)  U_{i}\left(  \theta_{0}\right)  +\mathcal{I}^{-2}V_{i}\left(
\theta_{0}\right)  ,\\
Y_{3,i}  &  \equiv Y_{1,i},\\
Y_{4,i}  &  \equiv\frac{1}{6}\mathcal{I}^{-2}\mathcal{Q}_{2}\left(  \theta
_{0}\right)  U_{i}\left(  \theta_{0}\right)  +\frac{1}{2}\mathcal{I}%
^{-3}\mathcal{Q}_{1}\left(  \theta_{0}\right)  ^{2}V_{i}\left(  \theta
_{0}\right)  +\frac{1}{2}\mathcal{I}^{-1}W_{i}\left(  \theta_{0}\right)  ,\\
Y_{5,i}  &  \equiv Y_{6,i}=\mathcal{I}^{-1}V_{i}\left(  \theta_{0}\right)  .
\end{align*}
Note that they all have zero means.

Let
\begin{align*}
\mathcal{Y}_{1}  &  =\frac{1}{\sqrt{n}}\sum_{i=1}^{n}Y_{1,i},\quad
\mathcal{Y}_{1}^{\left(  s\right)  }=\frac{1}{\sqrt{m}}\sum_{\left(  s\right)
}Y_{1,i},\\
\mathcal{Y}_{2}  &  =\frac{1}{\sqrt{n}}\sum_{i=1}^{n}Y_{2,i},\quad
\mathcal{Y}_{2}^{\left(  s\right)  }=\frac{1}{\sqrt{m}}\sum_{\left(  s\right)
}Y_{2,i},\\
\mathcal{Y}_{3}  &  =\frac{1}{\sqrt{n}}\sum_{i=1}^{n}Y_{3,i},\quad
\mathcal{Y}_{3}^{\left(  s\right)  }=\frac{1}{\sqrt{m}}\sum_{\left(  s\right)
}Y_{3,i},\\
\mathcal{Y}_{4}  &  =\frac{1}{\sqrt{n}}\sum_{i=1}^{n}Y_{4,i},\quad
\mathcal{Y}_{4}^{\left(  s\right)  }=\frac{1}{\sqrt{m}}\sum_{\left(  s\right)
}Y_{4,i},\\
\mathcal{Y}_{5}  &  =\frac{1}{\sqrt{n}}\sum_{i=1}^{n}Y_{5,i},\quad
\mathcal{Y}_{5}^{\left(  s\right)  }=\frac{1}{\sqrt{m}}\sum_{\left(  s\right)
}Y_{5,i},\\
\mathcal{Y}_{6}  &  =\frac{1}{\sqrt{n}}\sum_{i=1}^{n},\quad\mathcal{Y}%
_{6}^{\left(  s\right)  }=\frac{1}{\sqrt{m}}\sum_{\left(  s\right)  }Y_{6,i},
\end{align*}
Note that we can write%
\begin{equation}
\sqrt{n}\left(  \widehat{\theta}-\theta_{0}\right)  =\mathcal{Y}_{1}+\frac
{1}{\sqrt{n}}\mathcal{Y}_{1}\mathcal{Y}_{2}+\frac{1}{n}\left(  \mathcal{Y}%
_{1}\mathcal{Y}_{3}\mathcal{Y}_{4}+\mathcal{Y}_{1}\mathcal{Y}_{5}%
\mathcal{Y}_{6}\right)  +o_{p}\left(  n^{-1}\right)  \label{theta-abcdfgg}%
\end{equation}

We prove

\begin{lemma}
\label{lem-ss-expansiong}%
\[
\sqrt{n}\left(  \widehat{\theta}_{SS}-\theta\right)  =T_{1,SS}+\frac{1}%
{\sqrt{n}}T_{2,SS}+\frac{1}{n}T_{3,SS}+o_{p}\left(  n^{-1}\right)  ,
\]
where%
\begin{align*}
T_{1,SS}  &  =\mathcal{Y}_{1}=T_{1},\\
T_{2,SS}  &  =\frac{1}{k-1}\left(  \sum_{s=1}^{k}\mathcal{Y}_{1}^{\left(
s\right)  }\right)  \left(  \sum_{s=1}^{k}\mathcal{Y}_{2}^{\left(  s\right)
}\right)  -\frac{1}{k-1}\sum_{s=1}^{k}\mathcal{Y}_{1}^{\left(  s\right)
}\mathcal{Y}_{2}^{\left(  s\right)  }\\
&  =\frac{1}{k-1}\sum_{s\neq s^{\prime}}\mathcal{Y}_{1}^{\left(  s\right)
}\mathcal{Y}_{2}^{\left(  s^{\prime}\right)  }\\
T_{3,SS}  &  =T_{3,1,SS}+T_{3,2,SS}%
\end{align*}
and%
\begin{align}
T_{3,1,SS}  &  =\frac{1}{\left(  k-1\right)  \sqrt{k}}\left(  \sum_{s=1}%
^{k}\mathcal{Y}_{1}^{\left(  s\right)  }\right)  \left(  \sum_{s=1}%
^{k}\mathcal{Y}_{3}^{\left(  s\right)  }\right)  \left(  \sum_{s=1}%
^{k}\mathcal{Y}_{4}^{\left(  s\right)  }\right)  -\frac{\sqrt{k}}{k-1}\left(
\sum_{s=1}^{k}\mathcal{Y}_{1}^{\left(  s\right)  }\mathcal{Y}_{3}^{\left(
s\right)  }\mathcal{Y}_{4}^{\left(  s\right)  }\right) \nonumber\\
T_{3,2,SS}  &  =\frac{1}{\left(  k-1\right)  \sqrt{k}}\left(  \sum_{s=1}%
^{k}\mathcal{Y}_{1}^{\left(  s\right)  }\right)  \left(  \sum_{s=1}%
^{k}\mathcal{Y}_{5}^{\left(  s\right)  }\right)  \left(  \sum_{s=1}%
^{k}\mathcal{Y}_{6}^{\left(  s\right)  }\right)  -\frac{\sqrt{k}}{k-1}\left(
\sum_{s=1}^{k}\mathcal{Y}_{1}^{\left(  s\right)  }\mathcal{Y}_{5}^{\left(
s\right)  }\mathcal{Y}_{6}^{\left(  s\right)  }\right)  \label{T3SSg}%
\end{align}

\end{lemma}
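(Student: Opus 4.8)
The plan is to mirror the two-subsample argument of Lemma~\ref{lem-ss-expansion}, replacing the factor-of-two bookkeeping there by the general $k$-subsample weights implicit in $\widehat{\theta}_{SS}=\frac{k}{k-1}\widehat{\theta}-\frac{1}{k(k-1)}\sum_{s=1}^{k}\widehat{\theta}^{(s)}$. First I would record the two structural facts that drive everything. The full-sample expansion (\ref{theta-abcdfgg}) writes $\sqrt{n}(\widehat{\theta}-\theta_0)$ as a polynomial in the normalized averages $\mathcal{Y}_1,\dots,\mathcal{Y}_6$, and applying the same expansion to the $s$-th block of size $m=n/k$ gives
\begin{equation*}
\widehat{\theta}^{(s)}-\theta_0=\frac{1}{\sqrt{m}}\mathcal{Y}_1^{(s)}+\frac{1}{m}\mathcal{Y}_1^{(s)}\mathcal{Y}_2^{(s)}+\left(\frac{1}{\sqrt{m}}\right)^3\left(\mathcal{Y}_1^{(s)}\mathcal{Y}_3^{(s)}\mathcal{Y}_4^{(s)}+\mathcal{Y}_1^{(s)}\mathcal{Y}_5^{(s)}\mathcal{Y}_6^{(s)}\right)+o_p(n^{-3/2}).
\end{equation*}
The second fact is the aggregation identity $\mathcal{Y}_j=\frac{1}{\sqrt{k}}\sum_{s=1}^{k}\mathcal{Y}_j^{(s)}$, which follows at once from $\frac{1}{\sqrt{n}}\sum_{i=1}^{n}=\frac{\sqrt{m}}{\sqrt{n}}\sum_{s}\frac{1}{\sqrt{m}}\sum_{(s)}$ together with $\sqrt{m}/\sqrt{n}=1/\sqrt{k}$.

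Second, I would substitute both expansions into $\sqrt{n}(\widehat{\theta}_{SS}-\theta_0)=\frac{k}{k-1}\sqrt{n}(\widehat{\theta}-\theta_0)-\frac{1}{k(k-1)}\sum_s\sqrt{n}(\widehat{\theta}^{(s)}-\theta_0)$ and collect by order, using $\sqrt{n}/\sqrt{m}=\sqrt{k}$, $\sqrt{n}/m=k/\sqrt{n}$, and $\sqrt{n}/(m\sqrt{m})=k\sqrt{k}/n$ to put every term on the common scale set by $n$. At first order the block contribution is $\sqrt{k}\cdot\frac{1}{k(k-1)}\sum_s\mathcal{Y}_1^{(s)}=\frac{1}{k-1}\mathcal{Y}_1$, so the two pieces combine to $\big(\frac{k}{k-1}-\frac{1}{k-1}\big)\mathcal{Y}_1=\mathcal{Y}_1$, giving $T_{1,SS}=T_1$ and confirming that the bias correction leaves the first-order term untouched. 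At second order, writing $\frac{k}{k-1}\mathcal{Y}_1\mathcal{Y}_2=\frac{1}{k-1}\big(\sum_s\mathcal{Y}_1^{(s)}\big)\big(\sum_{s'}\mathcal{Y}_2^{(s')}\big)$ via the aggregation identity and subtracting the diagonal block terms $\frac{1}{k-1}\sum_s\mathcal{Y}_1^{(s)}\mathcal{Y}_2^{(s)}$ leaves exactly the off-diagonal sum $T_{2,SS}=\frac{1}{k-1}\sum_{s\neq s'}\mathcal{Y}_1^{(s)}\mathcal{Y}_2^{(s')}$. The identical manipulation applied separately to the $\mathcal{Y}_1\mathcal{Y}_3\mathcal{Y}_4$ and $\mathcal{Y}_1\mathcal{Y}_5\mathcal{Y}_6$ blocks yields $T_{3,1,SS}$ and $T_{3,2,SS}$ in the stated form, since $\frac{k}{k-1}\mathcal{Y}_1\mathcal{Y}_3\mathcal{Y}_4=\frac{1}{(k-1)\sqrt{k}}\big(\sum_s\mathcal{Y}_1^{(s)}\big)\big(\sum_{s'}\mathcal{Y}_3^{(s')}\big)\big(\sum_{s''}\mathcal{Y}_4^{(s'')}\big)$ and the block correction carries the coefficient $\sqrt{k}/(k-1)$.

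The one point requiring care is the treatment of remainders. The $o_p(n^{-3/2})$ error in each subsample expansion must be shown to survive summation over the $k$ blocks and the rescaling by $\sqrt{n}$; because $k$ is fixed this is a finite sum of $o_p(n^{-3/2})$ terms, so after multiplication by $\sqrt{n}$ the aggregate remainder is $o_p(n^{-1})$, as claimed. Justifying the subsample expansion itself requires that each block satisfy the regularity conditions with its own $m\to\infty$, which holds because $m=n/k$ with $k$ fixed; the relevant uniform bounds are precisely those established for the full-sample expansion in Proposition~\ref{ApproxThetaHat-strong} and the technical lemmas of Appendix~\ref{sec-lfbs}, applied blockwise. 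I expect the bookkeeping of matching the powers of $\sqrt{k}$ across the three orders to be the most error-prone step, but it is purely algebraic and is pinned down by the two consistency checks that $T_{1,SS}$ must reduce to $T_1$ and that the $k=2$ specialization must recover Lemma~\ref{lem-ss-expansion}.
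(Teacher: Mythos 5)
Your proposal is correct and follows essentially the same route as the paper's own proof: both substitute the full-sample and blockwise expansions into $\widehat{\theta}_{SS}=\frac{k}{k-1}\widehat{\theta}-\frac{1}{k(k-1)}\sum_{s}\widehat{\theta}^{(s)}$, use the aggregation identity $\mathcal{Y}_{j}=\frac{1}{\sqrt{k}}\sum_{s}\mathcal{Y}_{j}^{(s)}$, and collect terms order by order. Your scaling factors and the remainder argument (a fixed number $k$ of $o_{p}(n^{-3/2})$ terms rescaled by $\sqrt{n}$) match the paper's computation.
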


\begin{remark}
We may alternatively write%
\begin{align*}
T_{3,1,SS}  &  =\frac{k}{\left(  k-1\right)  }\mathcal{Y}_{1}\mathcal{Y}%
_{3}\mathcal{Y}_{4}-\frac{\sqrt{k}}{k-1}\left(  \sum_{s=1}^{k}\mathcal{Y}%
_{1}^{\left(  s\right)  }\mathcal{Y}_{3}^{\left(  s\right)  }\mathcal{Y}%
_{4}^{\left(  s\right)  }\right) \\
T_{3,2,SS}  &  =\frac{k}{\left(  k-1\right)  }\mathcal{Y}_{1}\mathcal{Y}%
_{5}\mathcal{Y}_{6}-\frac{\sqrt{k}}{k-1}\left(  \sum_{s=1}^{k}\mathcal{Y}%
_{1}^{\left(  s\right)  }\mathcal{Y}_{5}^{\left(  s\right)  }\mathcal{Y}%
_{6}^{\left(  s\right)  }\right)
\end{align*}

\end{remark}

\begin{proof}
We write (\ref{theta-abcdfgg}) as%
\begin{align*}
\widehat{\theta}-\theta_{0}  &  =\frac{1}{\sqrt{n}}\mathcal{Y}_{1}+\frac{1}%
{n}\mathcal{Y}_{1}\mathcal{Y}_{2}+\frac{1}{n\sqrt{n}}\left(  \mathcal{Y}%
_{1}\mathcal{Y}_{3}\mathcal{Y}_{4}+\mathcal{Y}_{1}\mathcal{Y}_{5}%
\mathcal{Y}_{6}\right)  +o_{p}\left(  n^{-3/2}\right) \\
&  =\frac{1}{\sqrt{km}}\mathcal{Y}_{1}+\frac{1}{km}\mathcal{Y}_{1}%
\mathcal{Y}_{2}+\left(  \frac{1}{\sqrt{km}}\right)  ^{3}\left(  \mathcal{Y}%
_{1}\mathcal{Y}_{3}\mathcal{Y}_{4}+\mathcal{Y}_{1}\mathcal{Y}_{5}%
\mathcal{Y}_{6}\right)  +o_{p}\left(  n^{-3/2}\right)
\end{align*}
and use
\[
\mathcal{Y}_{1}=\frac{1}{\sqrt{k}}\sum_{s=1}^{k}\mathcal{Y}_{1}^{\left(
s\right)  }%
\]
etc., to write
\begin{align*}
\widehat{\theta}-\theta_{0}=  &  \frac{1}{k\sqrt{m}}\left(  \sum_{s=1}%
^{k}\mathcal{Y}_{1}^{\left(  s\right)  }\right)  +\left(  \frac{1}{k\sqrt{m}%
}\right)  ^{2}\left(  \sum_{s=1}^{k}\mathcal{Y}_{1}^{\left(  s\right)
}\right)  \left(  \sum_{s=1}^{k}\mathcal{Y}_{2}^{\left(  s\right)  }\right) \\
&  +\left(  \frac{1}{k\sqrt{m}}\right)  ^{3}\left(  \sum_{s=1}^{k}%
\mathcal{Y}_{1}^{\left(  s\right)  }\right)  \left(  \sum_{s=1}^{k}%
\mathcal{Y}_{3}^{\left(  s\right)  }\right)  \left(  \sum_{s=1}^{k}%
\mathcal{Y}_{4}^{\left(  s\right)  }\right) \\
&  +\left(  \frac{1}{k\sqrt{m}}\right)  ^{3}\left(  \sum_{s=1}^{k}%
\mathcal{Y}_{1}^{\left(  s\right)  }\right)  \left(  \sum_{s=1}^{k}%
\mathcal{Y}_{5}^{\left(  s\right)  }\right)  \left(  \sum_{s=1}^{k}%
\mathcal{Y}_{6}^{\left(  s\right)  }\right) \\
&  +o_{p}\left(  n^{-3/2}\right)  .
\end{align*}
Using (\ref{theta-abcdfgg}), we also obtain%
\begin{align*}
\widehat{\theta}^{\left(  s\right)  }-\theta_{0}=  &  \frac{1}{\sqrt{m}%
}\mathcal{Y}_{1}^{\left(  s\right)  }+\left(  \frac{1}{\sqrt{m}}\right)
^{2}\mathcal{Y}_{1}^{\left(  s\right)  }\mathcal{Y}_{2}^{\left(  s\right)
}+\left(  \frac{1}{\sqrt{m}}\right)  ^{3}\mathcal{Y}_{1}^{\left(  s\right)
}\mathcal{Y}_{3}^{\left(  s\right)  }\mathcal{Y}_{4}^{\left(  s\right)  }\\
&  +\left(  \frac{1}{\sqrt{m}}\right)  ^{3}\mathcal{Y}_{1}^{\left(  s\right)
}\mathcal{Y}_{5}^{\left(  s\right)  }\mathcal{Y}_{6}^{\left(  s\right)
}+o_{p}\left(  n^{-3/2}\right)  .
\end{align*}
Combining them, we obtain
\begin{align*}
\widehat{\theta}_{SS}-\theta &  =\frac{k}{k-1}\left(  \widehat{\theta}%
-\theta\right)  -\frac{1}{k\left(  k-1\right)  }\sum_{s=1}^{k}\left(
\widehat{\theta}^{\left(  s\right)  }-\theta\right) \\
&  =\frac{1}{\left(  k-1\right)  }\frac{1}{\sqrt{m}}\left(  \sum_{s=1}%
^{k}\mathcal{Y}_{1}^{\left(  s\right)  }\right)  -\frac{1}{k\left(
k-1\right)  }\frac{1}{\sqrt{m}}\left(  \sum_{s=1}^{k}\mathcal{Y}_{1}^{\left(
s\right)  }\right) \\
&  +\frac{k}{k-1}\left(  \frac{1}{k\sqrt{m}}\right)  ^{2}\left(  \sum
_{s=1}^{k}\mathcal{Y}_{1}^{\left(  s\right)  }\right)  \left(  \sum_{s=1}%
^{k}\mathcal{Y}_{2}^{\left(  s\right)  }\right)  -\frac{1}{k\left(
k-1\right)  }\left(  \frac{1}{\sqrt{m}}\right)  ^{2}\left(  \sum_{s=1}%
^{k}\mathcal{Y}_{1}^{\left(  s\right)  }\mathcal{Y}_{2}^{\left(  s\right)
}\right) \\
&  +\frac{k}{k-1}\left(  \frac{1}{k\sqrt{m}}\right)  ^{3}\left(  \sum
_{s=1}^{k}\mathcal{Y}_{1}^{\left(  s\right)  }\right)  \left(  \sum_{s=1}%
^{k}\mathcal{Y}_{3}^{\left(  s\right)  }\right)  \left(  \sum_{s=1}%
^{k}\mathcal{Y}_{4}^{\left(  s\right)  }\right)  \\
&-\frac{1}{k\left(
k-1\right)  }\left(  \frac{1}{\sqrt{m}}\right)  ^{3}\left(  \sum_{s=1}%
^{k}\mathcal{Y}_{1}^{\left(  s\right)  }\mathcal{Y}_{3}^{\left(  s\right)
}\mathcal{Y}_{4}^{\left(  s\right)  }\right) \\
&  +\frac{k}{k-1}\left(  \frac{1}{k\sqrt{m}}\right)  ^{3}\left(  \sum
_{s=1}^{k}\mathcal{Y}_{1}^{\left(  s\right)  }\right)  \left(  \sum_{s=1}%
^{k}\mathcal{Y}_{5}^{\left(  s\right)  }\right)  \left(  \sum_{s=1}%
^{k}\mathcal{Y}_{6}^{\left(  s\right)  }\right)  \\
&-\frac{1}{k\left(
k-1\right)  }\left(  \frac{1}{\sqrt{m}}\right)  ^{3}\left(  \sum_{s=1}%
^{k}\mathcal{Y}_{1}^{\left(  s\right)  }\mathcal{Y}_{5}^{\left(  s\right)
}\mathcal{Y}_{6}^{\left(  s\right)  }\right) \\
&  +o_{p}\left(  n^{-3/2}\right)
\end{align*}
So, we obtain%
\begin{align*}
T_{1,SS}  &  =\sqrt{n}\left(  \frac{1}{\left(  k-1\right)  }\frac{1}{\sqrt{m}%
}\left(  \sum_{s=1}^{k}\mathcal{Y}_{1}^{\left(  s\right)  }\right)  -\frac
{1}{k\left(  k-1\right)  }\frac{1}{\sqrt{m}}\left(  \sum_{s=1}^{k}%
\mathcal{Y}_{1}^{\left(  s\right)  }\right)  \right) \\
&  =\sqrt{n}\frac{1}{k}\frac{1}{\sqrt{m}}\left(  \sum_{s=1}^{k}\mathcal{Y}%
_{1}^{\left(  s\right)  }\right)  =\frac{1}{\sqrt{k}}\left(  \sum_{s=1}%
^{k}\mathcal{Y}_{1}^{\left(  s\right)  }\right)  =\mathcal{Y}_{1},
\end{align*}%
\begin{align*}
T_{2,SS}  &  =n\frac{k}{k-1}\left(  \frac{1}{k\sqrt{m}}\right)  ^{2}\left(
\sum_{s=1}^{k}\mathcal{Y}_{1}^{\left(  s\right)  }\right)  \left(  \sum
_{s=1}^{k}\mathcal{Y}_{2}^{\left(  s\right)  }\right)  -\frac{1}{k\left(
k-1\right)  }n\left(  \frac{1}{\sqrt{m}}\right)  ^{2}\left(  \sum_{s=1}%
^{k}\mathcal{Y}_{1}^{\left(  s\right)  }\mathcal{Y}_{2}^{\left(  s\right)
}\right) \\
&  =\frac{1}{\left(  k-1\right)  }\left(  \left(  \sum_{s=1}^{k}%
\mathcal{Y}_{1}^{\left(  s\right)  }\right)  \left(  \sum_{s=1}^{k}%
\mathcal{Y}_{2}^{\left(  s\right)  }\right)  -\sum_{s=1}^{k}\mathcal{Y}%
_{1}^{\left(  s\right)  }\mathcal{Y}_{2}^{\left(  s\right)  }\right)
\end{align*}
and%
\begin{align*}
T_{3,1,SS}  &  =n\sqrt{n}\frac{k}{k-1}\left(  \frac{1}{k\sqrt{m}}\right)
^{3}\left(  \sum_{s=1}^{k}\mathcal{Y}_{1}^{\left(  s\right)  }\right)  \left(
\sum_{s=1}^{k}\mathcal{Y}_{3}^{\left(  s\right)  }\right)  \left(  \sum
_{s=1}^{k}\mathcal{Y}_{4}^{\left(  s\right)  }\right) \\
&  -n\sqrt{n}\frac{1}{k\left(  k-1\right)  }\left(  \frac{1}{\sqrt{m}}\right)
^{3}\left(  \sum_{s=1}^{k}\mathcal{Y}_{1}^{\left(  s\right)  }\mathcal{Y}%
_{3}^{\left(  s\right)  }\mathcal{Y}_{4}^{\left(  s\right)  }\right) \\
&  =\frac{1}{\left(  k-1\right)  \sqrt{k}}\left(  \sum_{s=1}^{k}%
\mathcal{Y}_{1}^{\left(  s\right)  }\right)  \left(  \sum_{s=1}^{k}%
\mathcal{Y}_{3}^{\left(  s\right)  }\right)  \left(  \sum_{s=1}^{k}%
\mathcal{Y}_{4}^{\left(  s\right)  }\right) \\
&  -\frac{\sqrt{k}}{\left(  k-1\right)  }\left(  \sum_{s=1}^{k}\mathcal{Y}%
_{1}^{\left(  s\right)  }\mathcal{Y}_{3}^{\left(  s\right)  }\mathcal{Y}%
_{4}^{\left(  s\right)  }\right)
\end{align*}
and likewise for $T_{3,1,SS}$.
\end{proof}

Note that the higher order variance of $\widehat{\theta}_{SS}$ is equal to
\begin{align}
&  E\left[  T_{1,SS}^{2}\right]  +\frac{1}{n}\left(  E\left[  \left(
T_{2,SS}\right)  ^{2}\right]  +2\sqrt{n}E\left[  T_{1,SS}T_{2,SS}\right]
+2E\left[  T_{1,SS}T_{3,SS}\right]  \right) \nonumber\\
&  =E\left[  T_{1}^{2}\right]  +\frac{1}{n}\left(
\begin{array}
[c]{c}%
E\left[  \left(  T_{2,SS}\right)  ^{2}\right]  +2\sqrt{n}E\left[
T_{1}T_{2,SS}\right] \\
+2E\left[  T_{1}T_{3,1,SS}\right]  +2E\left[  T_{1}T_{3,2,SS}\right]
\end{array}
\right)  . \label{SS-high-varg}%
\end{align}
We will show that it is equal to
\[
E\left[  T_{1}^{2}\right]  +\frac{1}{n}\frac{k}{k-1}\left(  E\left[
Y_{1,i}^{2}\right]  E\left[  Y_{2,i}^{2}\right]  +\left(  E\left[
Y_{1,i}Y_{2,i}\right]  \right)  ^{2}+O\left(  n^{-1}\right)  \right)
\]

\begin{lemma}
\label{lem-ss-first-two-termsg}%
\begin{align*}
E\left[  \left(  T_{2,SS}\right)  ^{2}\right]   &  =\frac{k}{k-1}\left(
E\left[  Y_{1,i}^{2}\right]  E\left[  Y_{2,i}^{2}\right]  +\left(  E\left[
Y_{1,i}Y_{2,i}\right]  \right)  ^{2}\right)  ,\\
E\left[  T_{1}T_{2,SS}\right]   &  =0.
\end{align*}

\end{lemma}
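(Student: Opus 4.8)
The plan is to work directly from the explicit representation of $T_{2,SS}$ and $T_1$ established in Lemma \ref{lem-ss-expansiong}, exploiting two structural facts: the $k$ subsamples are mutually independent (the data are i.i.d.\ and the blocks are non-overlapping), and each block average $\mathcal{Y}_1^{(s)}$, $\mathcal{Y}_2^{(s)}$ has mean zero because $Y_{1,i}$ and $Y_{2,i}$ do. A preliminary step records the within-block second moments: since $\mathcal{Y}_1^{(s)}=m^{-1/2}\sum_{(s)}Y_{1,i}$ is a scaled sum of $m$ i.i.d.\ mean-zero terms, the off-diagonal contributions vanish and one obtains $E[(\mathcal{Y}_1^{(s)})^2]=E[Y_{1,i}^2]$, $E[(\mathcal{Y}_2^{(s)})^2]=E[Y_{2,i}^2]$, and $E[\mathcal{Y}_1^{(s)}\mathcal{Y}_2^{(s)}]=E[Y_{1,i}Y_{2,i}]$, each independent of the block index $s$.

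For $E[(T_{2,SS})^2]$ I would use $T_{2,SS}=(k-1)^{-1}\sum_{s\neq s'}\mathcal{Y}_1^{(s)}\mathcal{Y}_2^{(s')}$ and expand the square into the quadruple sum $(k-1)^{-2}\sum_{s_1\neq s_1'}\sum_{s_2\neq s_2'}E[\mathcal{Y}_1^{(s_1)}\mathcal{Y}_2^{(s_1')}\mathcal{Y}_1^{(s_2)}\mathcal{Y}_2^{(s_2')}]$. By independence the expectation factors across distinct block indices, and by the mean-zero property any block appearing exactly once annihilates the term. The task then reduces to enumerating the index patterns in which every block among $\{s_1,s_1',s_2,s_2'\}$ occurs at least twice, subject to $s_1\neq s_1'$ and $s_2\neq s_2'$. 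Classifying the four labeled factors into groups of common block index (all groups of size $\geq 2$), the only admissible configurations are the two pairings $\{s_1=s_2,\ s_1'=s_2'\}$ and $\{s_1=s_2',\ s_1'=s_2\}$; the all-equal and $\{s_1=s_1',s_2=s_2'\}$ pairings are killed by the constraints, and these two surviving patterns are disjoint (simultaneous membership would force $s_2=s_2'$). The first contributes $E[Y_{1,i}^2]E[Y_{2,i}^2]$, the second $(E[Y_{1,i}Y_{2,i}])^2$, and each is realized by $k(k-1)$ ordered choices of the two distinct blocks, giving $(k-1)^{-2}k(k-1)\bigl(E[Y_{1,i}^2]E[Y_{2,i}^2]+(E[Y_{1,i}Y_{2,i}])^2\bigr)=\tfrac{k}{k-1}\bigl(E[Y_{1,i}^2]E[Y_{2,i}^2]+(E[Y_{1,i}Y_{2,i}])^2\bigr)$, as claimed.

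For the cross term I would substitute $T_1=\mathcal{Y}_1=k^{-1/2}\sum_t\mathcal{Y}_1^{(t)}$ to get $E[T_1T_{2,SS}]=((k-1)\sqrt{k})^{-1}\sum_t\sum_{s\neq s'}E[\mathcal{Y}_1^{(t)}\mathcal{Y}_1^{(s)}\mathcal{Y}_2^{(s')}]$. Each summand involves three block averages at indices $t,s,s'$ with $s\neq s'$; with only three factors and $s\neq s'$ it is impossible for every occurring block to appear at least twice, since any equality $t=s$ or $t=s'$ leaves the remaining index a singleton. Hence every term carries a mean-zero factor and the whole sum vanishes, so $E[T_1T_{2,SS}]=0$.

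The only genuine obstacle is the combinatorial bookkeeping in the first part: ensuring the enumeration of surviving index patterns is exhaustive and the multiplicities $k(k-1)$ are correct. I would control this by systematically partitioning the four labeled factors into equal-index groups and discarding every partition either incompatible with $s_1\neq s_1'$, $s_2\neq s_2'$ or leaving a singleton; the remaining two patterns then follow unambiguously. Everything else is a routine consequence of cross-block independence together with the zero-mean property.
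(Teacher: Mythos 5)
Your proposal is correct and follows essentially the same route as the paper's own proof: expand $T_{2,SS}=(k-1)^{-1}\sum_{s\neq s'}\mathcal{Y}_1^{(s)}\mathcal{Y}_2^{(s')}$, use cross-block independence and the zero-mean property to isolate the two surviving index pairings, each with multiplicity $k(k-1)$, and kill the cross term with $T_1$ because a three-factor product with $s\neq s'$ must leave some block as a mean-zero singleton. The paper states the counting more tersely but the argument is identical.
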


\begin{proof}
It follows from%
\begin{align*}
E\left[  \left(  T_{2,SS}\right)  ^{2}\right]   &  =E\left[  \left(  \frac
{1}{k-1}\sum_{s\neq s^{\prime}}\mathcal{Y}_{1}^{\left(  s\right)  }%
\mathcal{Y}_{2}^{\left(  s^{\prime}\right)  }\right)  ^{2}\right] \\
&  =\frac{1}{\left(  k-1\right)  ^{2}}k\left(  k-1\right)  E\left[  \left(
\mathcal{Y}_{1}^{\left(  s\right)  }\right)  ^{2}\left(  \mathcal{Y}%
_{2}^{\left(  s^{\prime}\right)  }\right)  ^{2}\right]  \\ &+\frac{k\left(
k-1\right)  }{\left(  k-1\right)  ^{2}}E\left[  \mathcal{Y}_{1}^{\left(
s\right)  }\mathcal{Y}_{2}^{\left(  s\right)  }\right]  E\left[
\mathcal{Y}_{1}^{\left(  s^{\prime}\right)  }\mathcal{Y}_{2}^{\left(
s^{\prime}\right)  }\right] \\
&  =\frac{k}{k-1}\left(  E\left[  Y_{1,i}^{2}\right]  E\left[  Y_{2,i}%
^{2}\right]  +\left(  E\left[  Y_{1,i}Y_{2,i}\right]  \right)  ^{2}\right)
\end{align*}
and%
\begin{align*}
E\left[  T_{1}T_{2,SS}\right]   &  =E\left[  \left(  \frac{1}{\sqrt{k}}%
\sum_{s=1}^{k}\mathcal{Y}_{1}^{\left(  s\right)  }\right)  \left(  \frac
{1}{k-1}\sum_{s\neq s^{\prime}}\mathcal{Y}_{1}^{\left(  s\right)  }%
\mathcal{Y}_{2}^{\left(  s^{\prime}\right)  }\right)  \right] \\
&  =\frac{1}{\sqrt{k}}\frac{1}{k-1}\sum_{s\neq s^{\prime}}\sum_{s^{\prime
\prime}=1}^{k}E\left[  \mathcal{Y}_{1}^{\left(  s\right)  }\mathcal{Y}%
_{2}^{\left(  s^{\prime}\right)  }\mathcal{Y}_{1}^{\left(  s^{\prime\prime
}\right)  }\right]  ,
\end{align*}
and either $s\neq s^{\prime\prime}$ or $s^{\prime}\neq s^{\prime\prime}$ or
both such that $s=s^{\prime}=s^{\prime\prime}$ is excluded. Then there is at
least one index different from the others, which by independence leads to the
zero expectation.
\end{proof}

\begin{lemma}
\label{lem-SS-3-directg}%
\[
E\left[  T_{1}T_{3,SS}\right]  =O\left(  n^{-1}\right)
\]

\end{lemma}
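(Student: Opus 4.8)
The plan is to follow the template of the $k=2$ argument in Lemma \ref{lem-SS-3-direct}, exploiting the facts that the $k$ subsamples are mutually independent and that each $\mathcal{Y}_j^{(s)}$ has mean zero. First I would split $T_{3,SS}=T_{3,1,SS}+T_{3,2,SS}$ as in Lemma \ref{lem-ss-expansiong} and treat the two pieces separately, since they have identical structure with $(\mathcal{Y}_3,\mathcal{Y}_4)$ replaced by $(\mathcal{Y}_5,\mathcal{Y}_6)$; it therefore suffices to bound $E[T_1 T_{3,1,SS}]$. Using $T_1=\mathcal{Y}_1$ and the expression
\[
T_{3,1,SS}=\frac{k}{k-1}\mathcal{Y}_1\mathcal{Y}_3\mathcal{Y}_4-\frac{\sqrt{k}}{k-1}\sum_{s=1}^k \mathcal{Y}_1^{(s)}\mathcal{Y}_3^{(s)}\mathcal{Y}_4^{(s)},
\]
I would substitute $\mathcal{Y}_1=\frac{1}{\sqrt{k}}\sum_{s'=1}^k \mathcal{Y}_1^{(s')}$ and expand the expectation.

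The key step is to collapse the cross-subsample terms. In $E[\mathcal{Y}_1\,\mathcal{Y}_1^{(s)}\mathcal{Y}_3^{(s)}\mathcal{Y}_4^{(s)}]=\frac{1}{\sqrt{k}}\sum_{s'}E[\mathcal{Y}_1^{(s')}\mathcal{Y}_1^{(s)}\mathcal{Y}_3^{(s)}\mathcal{Y}_4^{(s)}]$, every term with $s'\neq s$ factors across independent subsamples into $E[\mathcal{Y}_1^{(s')}]\cdot E[\mathcal{Y}_1^{(s)}\mathcal{Y}_3^{(s)}\mathcal{Y}_4^{(s)}]=0$ by the mean-zero property, leaving only the $s'=s$ term $\frac{1}{\sqrt{k}}E[(\mathcal{Y}_1^{(s)})^2\mathcal{Y}_3^{(s)}\mathcal{Y}_4^{(s)}]$. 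Since the subsamples are identically distributed, summing over $s$ and using $\frac{\sqrt{k}}{k-1}\cdot k\cdot\frac{1}{\sqrt{k}}=\frac{k}{k-1}$ gives
\[
E[T_1 T_{3,1,SS}]=\frac{k}{k-1}\Big(E[\mathcal{Y}_1^2\mathcal{Y}_3\mathcal{Y}_4]-E[(\mathcal{Y}_1^{(s)})^2\mathcal{Y}_3^{(s)}\mathcal{Y}_4^{(s)}]\Big).
\]

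It then remains to show the bracketed difference is $O(n^{-1})$. Here I would apply the fourth-moment formula of Lemma \ref{lem-fourth-moment-Y} (with $\mathcal{Y}_2$ replaced by $\mathcal{Y}_1$) twice: once at the full-sample scale $n$ and once at the subsample scale $m=n/k$. Each evaluation produces a leading term built from pairwise expectations $E[Y_{a,i}Y_{b,i}]$ --- which do not depend on the sample size --- multiplied by $\frac{N-1}{N}$ for $N\in\{n,m\}$, plus a $\frac{1}{N}E[Y_{1,i}^2 Y_{3,i}Y_{4,i}]$ remainder. Subtracting, the pairwise-product coefficients differ by $\frac{n-1}{n}-\frac{m-1}{m}=\frac{k-1}{n}$ and the remainder coefficients by $\frac{1}{n}-\frac{1}{m}=-\frac{k-1}{n}$, so both contributions are $O(n^{-1})$; hence $E[T_1 T_{3,1,SS}]=\frac{k}{k-1}O(n^{-1})=O(n^{-1})$, and the same argument gives $E[T_1 T_{3,2,SS}]=O(n^{-1})$. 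The step requiring the most care is verifying that the $O(1)$ pairwise-product term is genuinely identical at the two scales: this holds because it is assembled purely from per-observation second moments of the $Y_{j,i}$, so the scale dependence enters only through the explicit combinatorial factors, which is precisely what makes the leading terms cancel up to order $n^{-1}$.
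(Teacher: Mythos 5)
Your proposal is correct and follows essentially the same route as the paper's proof: expand $\mathcal{Y}_1=\frac{1}{\sqrt{k}}\sum_{s'}\mathcal{Y}_1^{(s')}$, kill the $s'\neq s$ cross terms by independence and zero means, collapse to $\frac{k}{k-1}\bigl(E[\mathcal{Y}_1^2\mathcal{Y}_3\mathcal{Y}_4]-E[(\mathcal{Y}_1^{(s)})^2\mathcal{Y}_3^{(s)}\mathcal{Y}_4^{(s)}]\bigr)$, and then apply the fourth-moment formula of Lemma \ref{lem-fourth-moment-Y} at the two scales $n$ and $m=n/k$ to see that the difference is $O(n^{-1})$. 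Your explicit computation of the coefficient gaps $\frac{n-1}{n}-\frac{m-1}{m}=\frac{k-1}{n}$ and $\frac{1}{n}-\frac{1}{m}=-\frac{k-1}{n}$ makes the final step if anything slightly more transparent than the paper's.
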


\begin{proof}%
\begin{align*}
E\left[  T_{1}T_{3,1,SS}\right]   &  =E\left[  \mathcal{Y}_{1}\left(  \frac
{k}{\left(  k-1\right)  }\mathcal{Y}_{1}\mathcal{Y}_{3}\mathcal{Y}_{4}%
-\frac{\sqrt{k}}{k-1}\left(  \sum_{s=1}^{k}\mathcal{Y}_{1}^{\left(  s\right)
}\mathcal{Y}_{3}^{\left(  s\right)  }\mathcal{Y}_{4}^{\left(  s\right)
}\right)  \right)  \right] \\
&  =\frac{k}{\left(  k-1\right)  }E\left[  \mathcal{Y}_{1}^{2}\mathcal{Y}%
_{3}\mathcal{Y}_{4}\right]  -\frac{\sqrt{k}}{\left(  k-1\right)  }E\left[
\mathcal{Y}_{1}\left(  \sum_{s=1}^{k}\mathcal{Y}_{1}^{\left(  s\right)
}\mathcal{Y}_{3}^{\left(  s\right)  }\mathcal{Y}_{4}^{\left(  s\right)
}\right)  \right] \\
&  =\frac{k}{\left(  k-1\right)  }E\left[  \mathcal{Y}_{1}^{2}\mathcal{Y}%
_{3}\mathcal{Y}_{4}\right]  -\frac{\sqrt{k}}{k-1}E\left[  \left(  \frac
{1}{\sqrt{k}}\sum_{s=1}^{k}\mathcal{Y}_{1}^{\left(  s\right)  }\right)
\left(  \sum_{s=1}^{k}\mathcal{Y}_{1}^{\left(  s\right)  }\mathcal{Y}%
_{3}^{\left(  s\right)  }\mathcal{Y}_{4}^{\left(  s\right)  }\right)  \right]
\\
&  =\frac{k}{\left(  k-1\right)  }E\left[  \mathcal{Y}_{1}^{2}\mathcal{Y}%
_{3}\mathcal{Y}_{4}\right]  -\frac{1}{k-1}\sum_{s,s^{\prime}}E\left[
\mathcal{Y}_{1}^{\left(  s^{\prime}\right)  }\mathcal{Y}_{1}^{\left(
s\right)  }\mathcal{Y}_{3}^{\left(  s\right)  }\mathcal{Y}_{4}^{\left(
s\right)  }\right] \\
&  =\frac{k}{\left(  k-1\right)  }E\left[  \mathcal{Y}_{1}^{2}\mathcal{Y}%
_{3}\mathcal{Y}_{4}\right]  -\frac{1}{k-1}\sum_{s}E\left[  \mathcal{Y}%
_{1}^{\left(  s\right)  }\mathcal{Y}_{1}^{\left(  s\right)  }\mathcal{Y}%
_{3}^{\left(  s\right)  }\mathcal{Y}_{4}^{\left(  s\right)  }\right] \\
&  =\frac{k}{\left(  k-1\right)  }E\left[  \mathcal{Y}_{1}^{2}\mathcal{Y}%
_{3}\mathcal{Y}_{4}\right]  -\frac{1}{k-1}\sum_{s}E\left[  \left(
\mathcal{Y}_{1}^{\left(  s\right)  }\right)  ^{2}\mathcal{Y}_{3}^{\left(
s\right)  }\mathcal{Y}_{4}^{\left(  s\right)  }\right] \\
&  =\frac{k}{\left(  k-1\right)  }E\left[  \mathcal{Y}_{1}^{2}\mathcal{Y}%
_{3}\mathcal{Y}_{4}\right]  -\frac{k}{k-1}E\left[  \mathcal{Y}_{1}%
^{2}\mathcal{Y}_{3}\mathcal{Y}_{4}\right]  +O\left(  n^{-1}\right) \\
&  =O\left(  n^{-1}\right)
\end{align*}
Likewise, we get $E\left[  T_{1}T_{3,1,SS}\right]  =O\left(  n^{-1}\right)  $,
from which we obtain the conclusion.
\end{proof}

\newpage
\begin{center}
{\LARGE Supplementary Appendix III: Efficient Bias Correction for Cross-section and Panel Data}
\end{center}
\setcounter{section}{0}

\section{Higher Order Analysis of the MLE for AR(1)}

We will go through the formal expansion of the MLE of the AR(1) model%
\begin{equation}
y_{t}=\theta y_{t-1}+\varepsilon_{t}, \label{AR1}%
\end{equation}
where we assume that $\varepsilon_{t}\sim N\left(  0,\sigma^{2}\right)  $. In
order to impose stationarity, we will assume that $y_{0}=\sum_{j=0}^{\infty
}\theta^{j}\varepsilon_{-j}$\textbf{, }when we establish distributional
results.%
\[
\ell\left(  \cdot,\theta\right)  \equiv\left.  \partial\log f\left(
\cdot,\theta\right)  \right/  \partial\theta
\]

The log likelihood is equal to some constant plus%
\[
-\frac{\left(  y_{t}-\theta y_{t-1}\right)  ^{2}}{2\sigma^{2}}%
\]
so the notation from the paper is
\begin{align*}
\ell &  =\frac{y_{t-1}\left(  y_{t}-\theta y_{t-1}\right)  }{\sigma^{2}}%
=\frac{y_{t-1}\varepsilon_{t}}{\sigma^{2}}\\
\ell^{\theta}  &  =-\frac{y_{t-1}^{2}}{\sigma^{2}}\\
\ell^{\theta\theta}  &  =0\\
\mathcal{I}  &  \equiv\frac{1}{\sigma^{2}}E\left[  y_{t-1}^{2}\right]
=\frac{1}{1-\theta^{2}}%
\end{align*}
so%
\begin{align}
T_{1}  &  =\left(  1-\theta^{2}\right)  \frac{1}{\sqrt{T}}\sum_{t=1}^{T}%
\frac{y_{t-1}\varepsilon_{t}}{\sigma^{2}}\label{AR1-T1}\\
T_{2}  &  =-\left(  1-\theta^{2}\right)  ^{2}\left(  \frac{1}{\sqrt{T}}%
\sum_{t=1}^{T}\left(  \frac{y_{t-1}^{2}}{\sigma^{2}}-\frac{1}{1-\theta^{2}%
}\right)  \right)  \left(  \frac{1}{\sqrt{T}}\sum_{t=1}^{T}\frac
{y_{t-1}\varepsilon_{t}}{\sigma^{2}}\right) \label{AR1-T2}\\
T_{3}  &  =\left(  1-\theta^{2}\right)  ^{3}\left(  \frac{1}{\sqrt{T}}%
\sum_{t=1}^{T}\left(  \frac{y_{t-1}^{2}}{\sigma^{2}}-\frac{1}{1-\theta^{2}%
}\right)  \right)  ^{2}\left(  \frac{1}{\sqrt{T}}\sum_{t=1}^{T}\frac
{y_{t-1}\varepsilon_{t}}{\sigma^{2}}\right)  \label{AR1-T3}%
\end{align}

In this section all $o\left(  1\right)  $ and $o_{p}\left(  1\right)  $ terms
are understood to be negligible as $T\rightarrow\infty$. By Proposition
\ref{prop-second-order bias} in Section \ref{app-sec-tech-results}, we have
$\lim_{T\rightarrow\infty}E\left[  T_{2}\right]  =-2\theta$, so we have%
\[
\sqrt{T}\left(  \widehat{\theta}-\theta+\frac{2\theta}{T}\right)  =T_{1}%
+\frac{1}{\sqrt{T}}\left(  T_{2}+2\theta\right)  +\frac{1}{T}T_{3}%
+o_{p}\left(  T^{-1}\right)  ,
\]
so we have%
\begin{align*}
\sqrt{T}\left(  \widehat{\theta}+\frac{2\widehat{\theta}}{T}-\theta\right)
&  =\sqrt{T}\left(  \widehat{\theta}-\theta+\frac{2\theta}{T}\right)
+\sqrt{T}\left(  \frac{2\widehat{\theta}}{T}-\frac{2\theta}{T}\right) \\
&  =T_{1}+\frac{1}{\sqrt{T}}\left(  T_{2}+2\theta\right)  +\frac{1}{T}%
T_{3}+\frac{2}{\sqrt{T}}\left(  \widehat{\theta}-\theta\right)  +o_{p}\left(
T^{-1}\right) \\
&  =T_{1}+\frac{1}{\sqrt{T}}\left(  T_{2}+2\theta\right)  +\frac{1}{T}\left(
T_{3}+2T_{1}\right)  +o_{p}\left(  T^{-1}\right)  .
\end{align*}

This means that the higher order variance of the bias corrected estimator
$\widehat{\theta}+2\widehat{\theta}/T$ is
\[
\operatorname{Var}\left(  T_{1}\right)  +\frac{1}{T}\operatorname{Var}\left(
T_{2}\right)  +\frac{2}{\sqrt{T}}E\left[  T_{1}T_{2}\right]  +\frac{2}%
{T}E\left[  T_{1}\left(  T_{3}+2T_{1}\right)  \right]  .
\]
In order to relate to a result in the iid setting (i.e., Proposition
\ref{higher_var}), write
\begin{align}
\mathcal{Y}_{1}  &  =\left(  1-\theta^{2}\right)  \frac{1}{\sqrt{T}}\sum
_{t=1}^{T}\frac{y_{t-1}\varepsilon_{t}}{\sigma^{2}},\label{AR1-calY1}\\
\mathcal{Y}_{2}  &  =-\left(  1-\theta^{2}\right)  \frac{1}{\sqrt{T}}%
\sum_{t=1}^{T}\left(  \frac{y_{t-1}^{2}}{\sigma^{2}}-\frac{1}{1-\theta^{2}%
}\right)  , \label{AR1-calY2}%
\end{align}
so that
\begin{equation}
T_{2}=\mathcal{Y}_{1}\mathcal{Y}_{2}. \label{AR1-T2-alt}%
\end{equation}
In Proposition \ref{higher_var}, the higher order variance of the bias
corrected estimator in the iid setting was written as%
\[
\mathcal{I}^{-1}+\frac{1}{n}\left(  E[X_{i}^{2}]E[Y_{i}^{2}]+E[X_{i}Y_{i}%
]^{2}\right)  .
\]
It can be recalled from the expression in Proposition (\ref{higher_var}) in the main paper, that $T_{2}$ in the iid
setting was equal to $\left(  n^{-1/2}\sum_{i=1}^{n}X_{i}\right)  \left(
n^{-1/2}\sum_{i=1}^{n}Y_{i}\right)  $.
Also from equation (\ref{V1}), we can see that $\mathcal{I}^{-1}$ in the iid setting is equal to
$\operatorname{Var}\left(  T_{1}\right)  $. So by understanding $n^{-1/2}%
\sum_{i=1}^{n}X_{i}=\mathcal{Y}_{1}$ and $n^{-1/2}\sum_{i=1}^{n}%
Y_{i}=\mathcal{Y}_{2}$, the higher order variance in the iid setting can be
understood as $\operatorname{Var}\left(  T_{1}\right)  +\frac{1}{n}\left(
E\left[  \mathcal{Y}_{1}^{2}\right]  E\left[  \mathcal{Y}_{2}^{2}\right]
+E\left[  \mathcal{Y}_{1}\mathcal{Y}_{2}\right]  ^{2}\right)  $. We prove that
a similar result holds for the AR(1) model. The result for iid samples and for
the AR(1)\ model are slightly different because in the time series case the
terms $E\left[  \mathcal{Y}_{1}^{2}\right]  ,$ $E\left[  \mathcal{Y}_{2}%
^{2}\right]  $ and $E\left[  \mathcal{Y}_{1}\mathcal{Y}_{2}\right]  $ making
up the higher order variance contain long run variance expressions reflecting
the temporal dependence of the data. On the other hand, the expansions for the
AR(1) case are formally the same as for the iid case because the Gaussian
AR(1) model is Markovian and has a likelihood that is formally equivalent to
the likelihood in an iid setting.%

\begin{proposition}
\label{prop-AR1-higher-var}The higher order variance of the bias corrected
estimator $\widehat{\theta}+2\widehat{\theta}/T$ for the AR(1) model is
\[
\operatorname{Var}\left(  T_{1}\right)  +\frac{1}{T}\left(  E\left[
\mathcal{Y}_{1}^{2}\right]  E\left[  \mathcal{Y}_{2}^{2}\right]  +E\left[
\mathcal{Y}_{1}\mathcal{Y}_{2}\right]  ^{2}\right)  ,
\]
where $T_{1}$, $\mathcal{Y}_{1}$, and $\mathcal{Y}_{2}$ are given by
(\ref{AR1-T1}), (\ref{AR1-calY1}), and (\ref{AR1-calY2}).
\end{proposition}

\begin{proof}
It is a consequence of Lemmas \ref{simplification-AR1}, \ref{var-intuition},
and \ref{var-T2-analytic} below.
\end{proof}

\begin{proposition}
\label{prop-AR1-higher-var-SS}The higher order variance of the split sample
estimator for the AR(1) model is
\[
\operatorname{Var}\left(  T_{1}\right)  +\frac{2}{T}\left(  E\left[
\mathcal{Y}_{1}^{2}\right]  E\left[  \mathcal{Y}_{2}^{2}\right]  +E\left[
\mathcal{Y}_{1}\mathcal{Y}_{2}\right]  ^{2}\right)  ,
\]
where $T_{1}$, $\mathcal{Y}_{1}$, and $\mathcal{Y}_{2}$ are given by
(\ref{AR1-T1}), (\ref{AR1-calY1}), and (\ref{AR1-calY2}).
\end{proposition}

\begin{proof}
We work with the generic expression (\ref{eq:generic-expansion}) and
(\ref{SS-high-varg}), which hold whether the environment is iid or the AR(1)
time series model. We present analogs of Lemmas \ref{lem-ss-first-two-termsg}
and \ref{lem-SS-3-directg}.

Lemmas \ref{Chanda} and \ref{cumsum} establish that the conditions of Lemma
\ref{lem-y1y2_T/2} hold for the Gaussian AR(1) model.\textbf{ }As for the
first implication of Lemma \ref{lem-ss-first-two-termsg}, we see that in the
time series with $k=2$, we have%
\begin{align*}
E\left[  \left(  T_{2,SS}\right)  ^{2}\right]   &  =E\left[  \left(
\mathcal{Y}_{1}^{\left(  1\right)  }\mathcal{Y}_{2}^{\left(  2\right)
}+\mathcal{Y}_{1}^{\left(  2\right)  }\mathcal{Y}_{2}^{\left(  1\right)
}\right)  ^{2}\right] \\
&  =E\left[  \left(  \mathcal{Y}_{1}^{\left(  1\right)  }\right)  ^{2}\left(
\mathcal{Y}_{2}^{\left(  2\right)  }\right)  ^{2}\right]  +E\left[  \left(
\mathcal{Y}_{1}^{\left(  2\right)  }\right)  ^{2}\left(  \mathcal{Y}%
_{2}^{\left(  1\right)  }\right)  ^{2}\right]  +2E\left[  \mathcal{Y}%
_{1}^{\left(  1\right)  }\mathcal{Y}_{2}^{\left(  1\right)  }\mathcal{Y}%
_{1}^{\left(  2\right)  }\mathcal{Y}_{2}^{\left(  2\right)  }\right] \\
&  =2E\left[  \mathcal{Y}_{1}^{2}\right]  E\left[  \mathcal{Y}_{2}^{2}\right]
+o\left(  1\right)  +2E\left[  \mathcal{Y}_{1}^{\left(  1\right)  }%
\mathcal{Y}_{2}^{\left(  1\right)  }\mathcal{Y}_{1}^{\left(  2\right)
}\mathcal{Y}_{2}^{\left(  2\right)  }\right]
\end{align*}
where the third equality is based on (\ref{y1y2_T/2_2}) in Lemma
\ref{lem-y1y2_T/2}. By (\ref{y1y2_T/2_4}) in Lemma \ref{lem-y1y2_T/2} we have%
\begin{equation}
E\left[  \mathcal{Y}_{1}^{\left(  1\right)  }\mathcal{Y}_{2}^{\left(
1\right)  }\mathcal{Y}_{1}^{\left(  2\right)  }\mathcal{Y}_{2}^{\left(
2\right)  }\right]  =E\left[  \mathcal{Y}_{1}\mathcal{Y}_{2}\right]
^{2}+o\left(  1\right)  . \label{TBD1}%
\end{equation}
This leads to%
\[
E\left[  \left(  T_{2,SS}\right)  ^{2}\right]  =2\left(  E\left[
\mathcal{Y}_{1}^{2}\right]  E\left[  \mathcal{Y}_{2}^{2}\right]  +E\left[
\mathcal{Y}_{1}\mathcal{Y}_{2}\right]  ^{2}\right)  +o\left(  1\right)  .
\]
As for the second implication of Lemma \ref{lem-ss-first-two-termsg}, we see
that the time series counterpart with $k=2$ is%
\begin{align*}
E\left[  T_{1}T_{2,SS}\right]   &  =E\left[  \left(  \frac{1}{\sqrt{2}%
}\mathcal{Y}_{1}^{\left(  1\right)  }+\frac{1}{\sqrt{2}}\mathcal{Y}%
_{1}^{\left(  2\right)  }\right)  \left(  \mathcal{Y}_{1}^{\left(  1\right)
}\mathcal{Y}_{2}^{\left(  2\right)  }+\mathcal{Y}_{1}^{\left(  2\right)
}\mathcal{Y}_{2}^{\left(  1\right)  }\right)  \right] \\
&  =\frac{1}{\sqrt{2}}E\left[  \left(  \mathcal{Y}_{1}^{\left(  1\right)
}\right)  ^{2}\mathcal{Y}_{2}^{\left(  2\right)  }\right]  +\frac{1}{\sqrt{2}%
}E\left[  \mathcal{Y}_{1}^{\left(  1\right)  }\mathcal{Y}_{1}^{\left(
2\right)  }\mathcal{Y}_{2}^{\left(  2\right)  }\right]
\end{align*}
By (\ref{y1y2_T/2_5}) and (\ref{y1y2_T/2_6}) of Lemma \ref{lem-y1y2_T/2} it
follows that
\begin{align}
E\left[  \left(  \mathcal{Y}_{1}^{\left(  1\right)  }\right)  ^{2}%
\mathcal{Y}_{2}^{\left(  2\right)  }\right]   &  =o\left(  1\right)
,\label{TBD2}\\
E\left[  \mathcal{Y}_{1}^{\left(  1\right)  }\mathcal{Y}_{1}^{\left(
2\right)  }\mathcal{Y}_{2}^{\left(  2\right)  }\right]   &  =o\left(
1\right)  . \label{TBD3}%
\end{align}
As a result, the time series counterpart of Lemma
\ref{lem-ss-first-two-termsg} for $k=2$ states that\textbf{ }%
\begin{align*}
E\left[  \left(  T_{2,SS}\right)  ^{2}\right]   &  =2\left(  E\left[
\mathcal{Y}_{1}^{2}\right]  E\left[  \mathcal{Y}_{2}^{2}\right]  +E\left[
\mathcal{Y}_{1}\mathcal{Y}_{2}\right]  ^{2}\right)  +o\left(  1\right)  ,\\
E\left[  T_{1}T_{2,SS}\right]   &  =o\left(  1\right)  .
\end{align*}
As for the counterpart of Lemma \ref{lem-SS-3-directg}, we have%
\[
T_{3}=\mathcal{Y}_{1}\mathcal{Y}_{2}^{2}%
\]
from (\ref{AR1-T3}), and
\begin{align*}
E\left[  T_{1}T_{3,SS}\right]   &  =E\left[  \mathcal{Y}_{1}\left(
2\mathcal{Y}_{1}\mathcal{Y}_{2}^{2}-\sqrt{2}\left(  \mathcal{Y}_{1}^{\left(
1\right)  }\left(  \mathcal{Y}_{2}^{\left(  1\right)  }\right)  ^{2}%
+\mathcal{Y}_{1}^{\left(  2\right)  }\left(  \mathcal{Y}_{2}^{\left(
2\right)  }\right)  ^{2}\right)  \right)  \right] \\
&  =2E\left[  \mathcal{Y}_{1}^{2}\mathcal{Y}_{2}^{2}\right]  -\sqrt{2}E\left[
\mathcal{Y}_{1}\left(  \mathcal{Y}_{1}^{\left(  1\right)  }\left(
\mathcal{Y}_{2}^{\left(  1\right)  }\right)  ^{2}+\mathcal{Y}_{1}^{\left(
2\right)  }\left(  \mathcal{Y}_{2}^{\left(  2\right)  }\right)  ^{2}\right)
\right] \\
&  =2E\left[  \mathcal{Y}_{1}^{2}\mathcal{Y}_{2}^{2}\right]  -E\left[  \left(
\mathcal{Y}_{1}^{\left(  1\right)  }+\mathcal{Y}_{1}^{\left(  2\right)
}\right)  \left(  \mathcal{Y}_{1}^{\left(  1\right)  }\left(  \mathcal{Y}%
_{2}^{\left(  1\right)  }\right)  ^{2}+\mathcal{Y}_{1}^{\left(  2\right)
}\left(  \mathcal{Y}_{2}^{\left(  2\right)  }\right)  ^{2}\right)  \right] \\
&  =2E\left[  \mathcal{Y}_{1}^{2}\mathcal{Y}_{2}^{2}\right]  -E\left[  \left(
\mathcal{Y}_{1}^{\left(  1\right)  }\right)  ^{2}\left(  \mathcal{Y}%
_{2}^{\left(  1\right)  }\right)  ^{2}\right]  -E\left[  \left(
\mathcal{Y}_{1}^{\left(  2\right)  }\right)  ^{2}\left(  \mathcal{Y}%
_{2}^{\left(  2\right)  }\right)  ^{2}\right] \\
&  -E\left[  \mathcal{Y}_{1}^{\left(  1\right)  }\mathcal{Y}_{1}^{\left(
2\right)  }\left(  \mathcal{Y}_{2}^{\left(  2\right)  }\right)  ^{2}\right]
-E\left[  \mathcal{Y}_{1}^{\left(  1\right)  }\mathcal{Y}_{1}^{\left(
2\right)  }\left(  \mathcal{Y}_{2}^{\left(  1\right)  }\right)  ^{2}\right] \\
&  =o\left(  1\right)  -E\left[  \mathcal{Y}_{1}^{\left(  1\right)
}\mathcal{Y}_{1}^{\left(  2\right)  }\left(  \mathcal{Y}_{2}^{\left(
2\right)  }\right)  ^{2}\right]  -E\left[  \mathcal{Y}_{1}^{\left(  1\right)
}\mathcal{Y}_{1}^{\left(  2\right)  }\left(  \mathcal{Y}_{2}^{\left(
1\right)  }\right)  ^{2}\right] \\
&  =o\left(  1\right)  ,
\end{align*}
where the last equality is based on (\ref{y1y2_T/2_1}) of Lemma
\ref{lem-y1y2_T/2}, and the last equality is based on Lemma
\ref{lem-AR1-4-terms}. Therefore,the higher order variance of the split sample
estimator is%
\[
\operatorname{Var}\left(  T_{1}\right)  +\frac{2}{T}\operatorname{Var}\left(
T_{2}\right)  .
\]

\end{proof}

\begin{lemma}
\label{simplification-AR1}The higher order variance of the bias corrected
estimator $\widehat{\theta}+2\widehat{\theta}/T$ for the AR(1) model is%
\[
\operatorname{Var}\left(  T_{1}\right)  +\frac{1}{T}\operatorname{Var}\left(
T_{2}\right)  +\frac{2}{T}\left(  \operatorname{Var}\left(  T_{2}\right)
-6\theta^{2}-2\right)  ,
\]
where $T_{1}$ and $T_{2}$ are given by (\ref{AR1-T1}) and (\ref{AR1-T2}).
\end{lemma}

\begin{proof}
The higher order variance of the bias corrected estimator $\widehat{\theta
}+2\widehat{\theta}/T$ is
\[
\operatorname{Var}\left(  T_{1}\right)  +\frac{1}{T}\operatorname{Var}\left(
T_{2}\right)  +\frac{2}{\sqrt{T}}E\left[  T_{1}T_{2}\right]  +\frac{2}%
{T}E\left[  T_{1}\left(  T_{3}+2T_{1}\right)  \right]  .
\]
Note that%
\begin{align*}
E\left[  T_{1}T_{3}\right]   &  =\frac{\left(  1-\theta^{2}\right)  ^{4}%
}{\left(  \sigma^{2}\right)  ^{2}}E\left[  \left(  \frac{1}{\sqrt{T}}%
\sum_{t=1}^{T}\left(  \frac{y_{t-1}^{2}}{\sigma^{2}}-\frac{1}{1-\theta^{2}%
}\right)  \right)  ^{2}\left(  \frac{1}{\sqrt{T}}\sum_{t=1}^{T}y_{t-1}%
\varepsilon_{t}\right)  ^{2}\right] \\
&  =E\left[  T_{2}^{2}\right]  =\operatorname{Var}\left(  T_{2}\right)
+\left(  E\left[  T_{2}\right]  \right)  ^{2}=\operatorname{Var}\left(
T_{2}\right)  +\left(  2\theta\right)  ^{2}+O\left(  T^{-1}\right)
\end{align*}
so the higher order variance is%
\begin{align*}
&  \operatorname{Var}\left(  T_{1}\right)  +\frac{1}{T}\operatorname{Var}%
\left(  T_{2}\right)  +\frac{2}{\sqrt{T}}E\left[  T_{1}T_{2}\right]  +\frac
{2}{T}E\left[  T_{1}\left(  T_{3}+2T_{1}\right)  \right] \\
&  =\operatorname{Var}\left(  T_{1}\right)  +\frac{1}{T}\operatorname{Var}%
\left(  T_{2}\right)  \\
&+\frac{2}{T}\left(  \operatorname{Var}\left(
T_{2}\right)  +\left(  2\theta\right)  ^{2}\right)  +\frac{2}{T}\left(
\sqrt{T}E\left[  T_{1}T_{2}\right]  +2\operatorname{Var}\left(  T_{1}\right)
\right)  +O\left(  T^{-2}\right)  .
\end{align*}
Using Proposition \ref{prop-second-order cross-product} in Section
\ref{app-sec-tech-results} along with
\[
\operatorname{Var}\left(  T_{1}\right)  =E\left[  \left(  \left(  1-\theta
^{2}\right)  \frac{1}{\sqrt{T}}\sum_{t=1}^{T}\frac{y_{t-1}\varepsilon_{t}%
}{\sigma^{2}}\right)  ^{2}\right]  =\left(  1-\theta^{2}\right)  ^{2}%
\frac{\frac{\sigma^{2}}{1-\theta^{2}}\sigma^{2}}{\left(  \sigma^{2}\right)
^{2}}=1-\theta^{2}%
\]
we get
\[
\sqrt{T}E\left[  T_{1}T_{2}\right]  +2\operatorname{Var}\left(  T_{1}\right)
=-8\theta^{2}-4+2\left(  1-\theta^{2}\right)  =-8\theta^{2}-2\theta
^{2}-2=-10\theta^{2}-2
\]
so we get that the higher order variance is%
\begin{align*}
&  \operatorname{Var}\left(  T_{1}\right)  +\frac{1}{T}\operatorname{Var}%
\left(  T_{2}\right)  +\frac{2}{T}\left(  \operatorname{Var}\left(
T_{2}\right)  +\left(  2\theta\right)  ^{2}-10\theta^{2}-2\right) \\
&  =\operatorname{Var}\left(  T_{1}\right)  +\frac{1}{T}\operatorname{Var}%
\left(  T_{2}\right)  +\frac{2}{T}\left(  \operatorname{Var}\left(
T_{2}\right)  -6\theta^{2}-2\right)  .
\end{align*}

\end{proof}

\begin{lemma}
\label{var-intuition}%
\[
\operatorname{Var}\left(  T_{2}\right)  =\operatorname{Var}\left(
\mathcal{Y}_{1}\mathcal{Y}_{2}\right)  =E\left[  \mathcal{Y}_{1}^{2}\right]
E\left[  \mathcal{Y}_{2}^{2}\right]  +\left(  E\left[  \mathcal{Y}%
_{1}\mathcal{Y}_{2}\right]  \right)  ^{2}+o\left(  1\right)  .
\]

\end{lemma}

\begin{proof}
By (\ref{eq-AR1-split-alt}) and Lemma \ref{lem-y1y2_T/2}, we get
\begin{align*}
E\left[  \mathcal{Y}_{1}^{2}\mathcal{Y}_{2}^{2}\right]   &  =\frac{1}%
{4}\left(
\begin{array}
[c]{c}%
E\left[  \mathcal{Y}_{1}^{2}\mathcal{Y}_{2}^{2}\right]  +E\left[
\mathcal{Y}_{1}^{2}\right]  E\left[  \mathcal{Y}_{2}^{2}\right] \\
+4E\left[  \mathcal{Y}_{1}\mathcal{Y}_{2}\right]  ^{2}\\
+E\left[  \mathcal{Y}_{1}^{2}\right]  E\left[  \mathcal{Y}_{2}^{2}\right]
+E\left[  \mathcal{Y}_{1}^{2}\mathcal{Y}_{2}^{2}\right]
\end{array}
\right)  +o\left(  1\right) \\
&  =\frac{1}{2}E\left[  \mathcal{Y}_{1}^{2}\mathcal{Y}_{2}^{2}\right]
+\frac{1}{2}E\left[  \mathcal{Y}_{1}^{2}\right]  E\left[  \mathcal{Y}_{2}%
^{2}\right]  +E\left[  \mathcal{Y}_{1}\mathcal{Y}_{2}\right]  ^{2}+o\left(
1\right)  ,
\end{align*}
which can be solved for $E\left[  \mathcal{Y}_{1}^{2}\mathcal{Y}_{2}%
^{2}\right]  $, implying%
\[
E\left[  \mathcal{Y}_{1}^{2}\mathcal{Y}_{2}^{2}\right]  =E\left[
\mathcal{Y}_{1}^{2}\right]  E\left[  \mathcal{Y}_{2}^{2}\right]  +2E\left[
\mathcal{Y}_{1}\mathcal{Y}_{2}\right]  ^{2}+o\left(  1\right)  .
\]
Therefore, we have%
\begin{align*}
\operatorname{Var}\left(  \mathcal{Y}_{1}\mathcal{Y}_{2}\right)   &  =E\left[
\mathcal{Y}_{1}^{2}\mathcal{Y}_{2}^{2}\right]  -E\left[  \mathcal{Y}%
_{1}\mathcal{Y}_{2}\right]  ^{2}\\
&  =E\left[  \mathcal{Y}_{1}^{2}\right]  E\left[  \mathcal{Y}_{2}^{2}\right]
+\left(  E\left[  \mathcal{Y}_{1}\mathcal{Y}_{2}\right]  \right)
^{2}+o\left(  1\right)  ,
\end{align*}
which proves Lemma \ref{var-intuition}.
\end{proof}

\begin{lemma}
\label{var-T2-analytic}%
\[
\operatorname{Var}\left(  T_{2}\right)  -6\theta^{2}-2=o\left(  1\right)
\]

\end{lemma}

\begin{proof}
By Lemma \ref{lem-var-Y2} in Section \ref{app-sec-tech-results}, we have
\begin{align*}
\operatorname{Var}\left(  \mathcal{Y}_{1}\right)  \operatorname{Var}\left(
\mathcal{Y}_{2}\right)   &  =\left(  1-\theta^{2}\right)  ^{4}%
\operatorname{Var}\left(  \frac{1}{\sqrt{T}}\sum_{t=1}^{T}\left(
\frac{y_{t-1}^{2}}{\sigma^{2}}-\frac{1}{1-\theta^{2}}\right)  \right)
\operatorname{Var}\left(  \frac{1}{\sqrt{T}}\sum_{t=1}^{T}\frac{y_{t-1}%
\varepsilon_{t}}{\sigma^{2}}\right) \\
&  =\left(  1-\theta^{2}\right)  ^{4}\frac{2\left(  1+\theta^{2}\right)
}{\left(  1-\theta^{2}\right)  ^{3}}\operatorname{Var}\left(  \frac{1}%
{\sqrt{T}}\sum_{t=1}^{T}\frac{y_{t-1}\varepsilon_{t}}{\sigma^{2}}\right)
+O\left(  T^{-1}\right) \\
&  =\left(  1-\theta^{2}\right)  ^{4}\frac{2\left(  1+\theta^{2}\right)
}{\left(  1-\theta^{2}\right)  ^{3}}\frac{1}{1-\theta^{2}}+O\left(
T^{-1}\right)  =2\left(  1+\theta^{2}\right)  +o\left(  1\right)
\end{align*}
Also,%
\[
E\left[  \mathcal{Y}_{1}\mathcal{Y}_{2}\right]  =-2\theta
\]
by Proposition \ref{prop-second-order bias} in Section
\ref{app-sec-tech-results}, so we have%
\[
\operatorname{Var}\left(  T_{2}\right)  =2\left(  1+\theta^{2}\right)
+\left(  -2\theta\right)  ^{2}+o\left(  1\right)  =2+6\theta^{2}+o\left(
1\right)
\]

\end{proof}

\section{Technical Results\label{app-sec-tech-results}}

\begin{remark}
In this section, it is assumed below that $s<t<u$.
\end{remark}

\begin{proposition}
\label{prop-second-order bias}The second order bias is%
\[
E\left[  T_{2}\right]  =E\left[  -\left(  1-\theta^{2}\right)  ^{2}\left(
\frac{1}{\sqrt{T}}\sum_{t=1}^{T}\left(  \frac{y_{t-1}^{2}}{\sigma^{2}}%
-\frac{1}{1-\theta^{2}}\right)  \right)  \left(  \frac{1}{\sqrt{T}}\sum
_{t=1}^{T}\frac{y_{t-1}\varepsilon_{t}}{\sigma^{2}}\right)  \right]
=-2\theta+O\left(  T^{-1}\right)
\]

\end{proposition}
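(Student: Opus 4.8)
The plan is to compute the expectation directly, exploiting the recursive structure of the AR(1) process together with the Gaussianity of the innovations and the stationary initialization $y_0 \sim N(0,\sigma^2/(1-\theta^2))$. Writing $c \equiv \sigma^2/(1-\theta^2) = E[y_{t-1}^2]$ for the stationary variance, I would first observe that $\frac{y_{t-1}^2}{\sigma^2}-\frac{1}{1-\theta^2} = \frac{1}{\sigma^2}(y_{t-1}^2-c)$, so that the statistic is $T_2 = -\frac{(1-\theta^2)^2}{\sigma^4 T}\big(\sum_t(y_{t-1}^2-c)\big)\big(\sum_s y_{s-1}\varepsilon_s\big)$ and the whole problem reduces to evaluating the cross-moment $m_{t,s}\equiv E[(y_{t-1}^2-c)\,y_{s-1}\varepsilon_s]$ for each pair $(t,s)$:
\[
E[T_2] = -\frac{(1-\theta^2)^2}{\sigma^4 T}\sum_{t=1}^T\sum_{s=1}^T m_{t,s}.
\]

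The key observation is that $\varepsilon_s$ is independent of the $\sigma$-field generated by $\{y_0,\varepsilon_1,\dots,\varepsilon_{s-1}\}$, and in particular of $y_{s-1}$. Hence whenever $t\le s$ both factors $y_{t-1}^2-c$ and $y_{s-1}$ are independent of $\varepsilon_s$, and since $E[\varepsilon_s]=0$ we get $m_{t,s}=0$; this disposes of the lower-triangular part of the double sum at once. For $t>s$ I would use the forward recursion to write $y_{t-1}=\theta^{t-s}y_{s-1}+\theta^{t-1-s}\varepsilon_s+R_{s,t}$, where $R_{s,t}=\sum_{j=s+1}^{t-1}\theta^{t-1-j}\varepsilon_j$ collects the innovations strictly between $s$ and $t-1$ and is independent of both $y_{s-1}$ and $\varepsilon_s$. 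Squaring and multiplying by $y_{s-1}\varepsilon_s$ produces six terms; using independence together with $E[\varepsilon_s]=0$, $E[\varepsilon_s^3]=0$ (Gaussian symmetry), $E[R_{s,t}]=0$ and $E[y_{s-1}]=0$, five of them vanish and the only survivor is the cross term $2\theta^{t-s}\theta^{t-1-s}y_{s-1}^2\varepsilon_s^2$, whose expectation factors as $2\theta^{2(t-s)-1}E[y_{s-1}^2]\sigma^2 = 2\theta^{2(t-s)-1}\sigma^4/(1-\theta^2)$. The subtracted piece $-c\,y_{s-1}\varepsilon_s$ contributes nothing, since $E[y_{s-1}\varepsilon_s]=0$.

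It then remains to sum the resulting geometric series. Summing $m_{t,s}$ over $t$ from $s+1$ to $T$ gives $\frac{2\sigma^4\theta}{(1-\theta^2)^2}\bigl(1-\theta^{2(T-s)}\bigr)$, and summing this over $s$ from $1$ to $T$ yields $\frac{2\sigma^4\theta}{(1-\theta^2)^2}\bigl(T-\frac{1-\theta^{2T}}{1-\theta^2}\bigr)$. Substituting back and cancelling the $\sigma^4$ and $(1-\theta^2)^2$ factors against the prefactor leaves
\[
E[T_2] = -2\theta + \frac{2\theta}{T}\cdot\frac{1-\theta^{2T}}{1-\theta^2},
\]
and since $|\theta|<1$ the second term is $O(T^{-1})$, which is the claim.

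The computation is bookkeeping rather than anything deep, so the main obstacle is organizing the term-by-term expansion of $y_{t-1}^2$ cleanly and being sure no surviving contribution is overlooked; the decomposition of $y_{t-1}$ into a part predictable before time $s$, the single $\varepsilon_s$ term, and an independent future remainder is precisely what makes this transparent, and it guarantees the surviving terms decay geometrically in $t-s$ so that no accumulation across the double sum occurs. It is worth flagging that the argument uses the stationary initialization (so that $E[y_{s-1}^2]=c$ uniformly in $s$) and uses Gaussianity only through the vanishing of the third moment of $\varepsilon_s$; both are built into the model as stated.
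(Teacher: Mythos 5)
Your proposal is correct and follows essentially the same route as the paper: split the double sum into the diagonal and the two triangular parts, kill the terms where the score factor $y_{s-1}\varepsilon_s$ sits at or after the squared factor by conditioning, reduce the remainder to the cross-moment $E[(y_{t-1}^2-c)\,y_{s-1}\varepsilon_s]=2\theta^{2(t-s)-1}\sigma^4/(1-\theta^2)$, and sum the geometric series. The only cosmetic difference is that the paper obtains this cross-moment by a one-step recursion in the lag $j=t-1-s$ whereas you expand $y_{t-1}=\theta^{t-s}y_{s-1}+\theta^{t-1-s}\varepsilon_s+R_{s,t}$ in closed form; both give the same answer, and as you note the argument needs only zero means and independence (your appeal to $E[\varepsilon_s^3]=0$ is not even required, since $E[y_{s-1}]=0$ already kills that term).
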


\begin{proof}
We will write%
\begin{align*}
&  \left(  \frac{1}{\sqrt{T}}\sum_{t=1}^{T}\left(  \frac{y_{t-1}^{2}}%
{\sigma^{2}}-\frac{1}{1-\theta^{2}}\right)  \right)  \left(  \frac{1}{\sqrt
{T}}\sum_{t=1}^{T}\frac{y_{t-1}\varepsilon_{t}}{\sigma^{2}}\right) \\
&  =\frac{1}{T}\sum_{t=1}^{T}\left(  \frac{y_{t-1}^{2}}{\sigma^{2}}-\frac
{1}{1-\theta^{2}}\right)  \frac{y_{t-1}\varepsilon_{t}}{\sigma^{2}}\\
&  +\frac{1}{T}\sum_{s=1}^{T}\sum_{t>s}\left(  \frac{y_{s-1}^{2}}{\sigma^{2}%
}-\frac{1}{1-\theta^{2}}\right)  \frac{y_{t-1}\varepsilon_{t}}{\sigma^{2}}\\
&  +\frac{1}{T}\sum_{s=1}^{T}\sum_{t>s}\left(  \frac{y_{t-1}^{2}}{\sigma^{2}%
}-\frac{1}{1-\theta^{2}}\right)  \frac{y_{s-1}\varepsilon_{s}}{\sigma^{2}}%
\end{align*}
and note that the first two sums on the right should have zero expectations.
The conclusion is obtained by using Lemma \ref{lem-prop1-step3}.
\end{proof}

\begin{proposition}
\label{prop-second-order cross-product}%
\[
\sqrt{T}E\left[  T_{1}T_{2}\right]  =-8\theta^{2}-4+O\left(  \frac{1}%
{T}\right)
\]

\end{proposition}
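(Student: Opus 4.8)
The plan is to collapse the triple sum into a single third moment of partial sums and then use the martingale structure of the AR(1) score to retain only finitely many index patterns.

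First I would set $a_v \equiv y_{v-1}\varepsilon_v/\sigma^2$ and $b_v \equiv y_{v-1}^2/\sigma^2 - 1/(1-\theta^2)$, so that $T_1 = (1-\theta^2)T^{-1/2}S_a$ and $T_2 = -(1-\theta^2)^2 T^{-1}S_a S_b$, with $S_a=\sum_{v=1}^{T}a_v$ and $S_b=\sum_{v=1}^{T}b_v$. Then $T_1T_2 = -(1-\theta^2)^3 T^{-3/2}S_a^2S_b$ and
\[
\sqrt{T}\,E[T_1T_2] = -(1-\theta^2)^3\,\frac{1}{T}\sum_{s,u,w}E[a_sa_ub_w].
\]
Because the process starts from its stationary law $y_0\sim N(0,\sigma^2/(1-\theta^2))$, each $y_j$ is centered Gaussian with $E[y_jy_k]=\frac{\sigma^2}{1-\theta^2}\theta^{|j-k|}$ exactly, so the second-order structure is exact and boundary contributions decay geometrically in the index range.

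Second, I would observe that $a_v$ is a martingale difference for $\mathcal{F}_v=\sigma(y_0,\varepsilon_1,\dots,\varepsilon_v)$ while $b_w$ is $\mathcal{F}_{w-1}$-measurable. Conditioning at the largest index $M=\max(s,u,w)$ shows $E[a_sa_ub_w]=0$ whenever the innovation $\varepsilon_M$ is exposed to the first power, which discards every term in which a single $a$-index is the strict maximum as well as every tie of the maximal $a$-index with $w$. The surviving configurations are exactly $s=u$ (with $w$ free) and $w>\max(s,u)$ with $s\neq u$, giving
\[
\frac{1}{T}E[S_a^2S_b] = \frac{1}{T}\sum_{v,w}E[a_v^2b_w] + \frac{2}{T}\sum_{s<u<w}E[a_sa_ub_w] + o(1).
\]

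Third, I would evaluate both families by Wick's theorem together with the exact covariance above. For the coincident family, $\varepsilon_v^2\perp\mathcal{F}_{v-1}$ gives, for $w\le v$, $E[a_v^2b_w]=\frac{2}{(1-\theta^2)^2}\theta^{2(v-w)}$ via the identity $\operatorname{Cov}(X^2,Y^2)=2\operatorname{Cov}(X,Y)^2$, and a companion pairing computation handles $w>v$; for family (B) only the $y_{w-1}^2$ part of $b_w$ contributes (the constant pairs $\varepsilon_u$ with earlier factors and vanishes), leaving finitely many products $\theta^{\alpha(u-s)+\beta(w-u)}$. Summing over translates (which produces the factor $T$ matched by the $1/T$ prefactor) and over the gaps (convergent geometric series, justified by the geometric decay and dominated convergence) reduces each family to a rational function of $\theta$; the total collapses to $\frac{8\theta^2+4}{(1-\theta^2)^3}$, and multiplying by $-(1-\theta^2)^3$ delivers $-8\theta^2-4$ with remainder $O(1/T)$ from the truncated gaps near the ends of the index range.

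The main obstacle will be the Wick bookkeeping in family (B) and the $w>v$ part of family (A): one must enumerate all pairings of $(\varepsilon_s,\varepsilon_u,y_{s-1},y_{u-1},y_{w-1})$, attach to each the correct combinatorial factor and power of $\theta$, and resum the resulting double geometric series so that the separate pieces combine to exactly $8\theta^2+4$. This is where I would invoke auxiliary gap-sum lemmas in the style of Lemma \ref{lem-prop1-step3} and Lemma \ref{lem-var-Y2} to record the needed moment computations cleanly.
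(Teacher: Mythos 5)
Your proposal is correct and follows essentially the same route as the paper: the same reduction of $\sqrt{T}E[T_1T_2]$ to $-\left(1-\theta^2\right)^3\frac{1}{T}\sum_{s,u,w}E[a_sa_ub_w]$, the same martingale/iterated-expectations argument eliminating every index pattern except $s=u$ (with $w$ free) and $w>\max(s,u)$ with $s\neq u$, and the same geometric resummation, with your family totals summing to $\frac{8\theta^2+4}{\left(1-\theta^2\right)^3}$ exactly as the paper's Lemmas give. The only cosmetic difference is that you evaluate the surviving Gaussian moments by Wick's theorem with the exact stationary covariance, whereas the paper obtains the same formulas via one-step recursions in $y_{t+1}=\theta y_t+\varepsilon_{t+1}$.
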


\begin{proof}
We write%
\begin{align*}
&  \left(  \sum_{t=1}^{T}\left(  y_{t-1}^{2}-\frac{\sigma^{2}}{1-\theta^{2}%
}\right)  \right)  \left(  \sum_{t=1}^{T}y_{t-1}\varepsilon_{t}\right)  ^{2}\\
&  =\sum_{t=1}^{T}\left(  y_{t-1}^{2}-\frac{\sigma^{2}}{1-\theta^{2}}\right)
\left(  y_{t-1}\varepsilon_{t}\right)  ^{2}\\
&  +\sum_{s=1}^{T}\sum_{t>s}\left(
\begin{array}
[c]{c}%
\left(  y_{s-1}^{2}-\frac{\sigma^{2}}{1-\theta^{2}}\right)  \left(
y_{t-1}\varepsilon_{t}\right)  \left(  y_{t-1}\varepsilon_{t}\right)
+2\left(  y_{t-1}^{2}-\frac{\sigma^{2}}{1-\theta^{2}}\right)  \left(
y_{s-1}\varepsilon_{s}\right)  \left(  y_{t-1}\varepsilon_{t}\right) \\
+\left(  y_{t-1}^{2}-\frac{\sigma^{2}}{1-\theta^{2}}\right)  \left(
y_{s-1}\varepsilon_{t}\right)  \left(  y_{s-1}\varepsilon_{s}\right)
+2\left(  y_{s-1}^{2}-\frac{\sigma^{2}}{1-\theta^{2}}\right)  \left(
y_{s-1}\varepsilon_{s}\right)  \left(  y_{t-1}\varepsilon_{t}\right)
\end{array}
\right) \\
&  +\sum_{s<t<u}\left(
\begin{array}
[c]{c}%
\left(  y_{s-1}^{2}-\frac{\sigma^{2}}{1-\theta^{2}}\right)  \left(
y_{t-1}\varepsilon_{t}\right)  \left(  y_{u-1}\varepsilon_{u}\right)  +\left(
y_{s-1}^{2}-\frac{\sigma^{2}}{1-\theta^{2}}\right)  \left(  y_{u-1}%
\varepsilon_{u}\right)  \left(  y_{t-1}\varepsilon_{t}\right) \\
+\left(  y_{u-1}^{2}-\frac{\sigma^{2}}{1-\theta^{2}}\right)  \left(
y_{s-1}\varepsilon_{s}\right)  \left(  y_{t-1}\varepsilon_{t}\right)  +\left(
y_{u-1}^{2}-\frac{\sigma^{2}}{1-\theta^{2}}\right)  \left(  y_{t-1}%
\varepsilon_{t}\right)  \left(  y_{s-1}\varepsilon_{s}\right) \\
+\left(  y_{t-1}^{2}-\frac{\sigma^{2}}{1-\theta^{2}}\right)  \left(
y_{s-1}\varepsilon_{s}\right)  \left(  y_{u-1}\varepsilon_{u}\right)  +\left(
y_{t-1}^{2}-\frac{\sigma^{2}}{1-\theta^{2}}\right)  \left(  y_{u-1}%
\varepsilon_{u}\right)  \left(  y_{s-1}\varepsilon_{s}\right)
\end{array}
\right)
\end{align*}
Using Lemmas \ref{lem-prop2-stepI-summary}, , \ref{lem-prop2-stepII-summary},
and \ref{lem-prop2-stepIII-summary}, we conclude that the expectation of the
above is equal to%
\begin{align*}
&  \frac{1}{T}E\left[  \left(  \sum_{t=1}^{T}\left(  y_{t-1}^{2}-\frac
{\sigma^{2}}{1-\theta^{2}}\right)  \right)  \left(  \sum_{t=1}^{T}%
y_{t-1}\varepsilon_{t}\right)  ^{2}\right] \\
&  =\frac{2\left(  \sigma^{2}\right)  ^{3}}{\left(  1-\theta^{2}\right)  ^{2}%
}+\frac{2\left(  \sigma^{2}\right)  ^{3}}{\left(  1-\theta^{2}\right)  ^{3}%
}\left(  \theta^{2}+1\right)  +\frac{8\theta^{2}\left(  \sigma^{2}\right)
^{3}}{\left(  1-\theta^{2}\right)  ^{3}}+O\left(  \frac{1}{T}\right)
\end{align*}
The conclusion follows from%
\begin{align*}
\sqrt{T}E\left[  T_{1}T_{2}\right]   &  =-\frac{\left(  1-\theta^{2}\right)
^{3}}{\left(  \sigma^{2}\right)  ^{3}}\frac{1}{T}E\left[  \left(  \sum
_{t=1}^{T}\left(  y_{t-1}^{2}-\frac{\sigma^{2}}{1-\theta^{2}}\right)  \right)
\left(  \sum_{t=1}^{T}y_{t-1}\varepsilon_{t}\right)  ^{2}\right] \\
&  =-8\theta^{2}-4+O\left(  \frac{1}{T}\right)
\end{align*}

\end{proof}

\begin{lemma}
\label{lem-prop1-step2}%
\begin{align*}
E\left[  y_{s-1}\varepsilon_{s}y_{s+j}^{2}\right]   &  =\frac{2\theta
^{2j+1}\sigma^{4}}{1-\theta^{2}}\\
E\left[  y_{s-1}\varepsilon_{s}\left(  \frac{y_{s+j}^{2}}{\sigma^{2}}-\frac
{1}{1-\theta^{2}}\right)  \right]   &  =\frac{2\theta^{2j+1}\sigma^{2}%
}{1-\theta^{2}}%
\end{align*}

\end{lemma}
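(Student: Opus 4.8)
The plan is to reduce everything to moments of the Gaussian innovations by unrolling the AR(1) recursion. For $j \geq 0$, iterating $y_t = \theta y_{t-1} + \varepsilon_t$ back from $t = s+j$ to the reference point $s-1$ gives
\[
y_{s+j} = \theta^{j+1} y_{s-1} + \sum_{m=0}^{j}\theta^{j-m}\varepsilon_{s+m}.
\]
The only innovation in this expression carrying the index $s$ is $\varepsilon_s$ (the $m=0$ term), so I would split off its contribution and write $y_{s+j} = R_j + \theta^{j}\varepsilon_s$, where $R_j \equiv \theta^{j+1}y_{s-1} + \sum_{m=1}^{j}\theta^{j-m}\varepsilon_{s+m}$ collects only $y_{s-1}$ and strictly future innovations. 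The key structural observation is that $\varepsilon_s$ is independent of the collection $(y_{s-1}, R_j)$, since $y_{s-1}$ depends only on $y_0,\varepsilon_1,\dots,\varepsilon_{s-1}$ and $R_j$ adds only innovations with index strictly greater than $s$.

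Next I would expand $y_{s+j}^2 = R_j^2 + 2\theta^j R_j\varepsilon_s + \theta^{2j}\varepsilon_s^2$ and evaluate $E[y_{s-1}\varepsilon_s y_{s+j}^2]$ term by term. The first term $E[y_{s-1}\varepsilon_s R_j^2]$ factors as $E[\varepsilon_s]\,E[y_{s-1}R_j^2]=0$, and the third term $\theta^{2j}E[y_{s-1}\varepsilon_s^3]=\theta^{2j}E[y_{s-1}]\,E[\varepsilon_s^3]=0$ since odd Gaussian moments vanish. Only the middle term survives: $2\theta^j E[y_{s-1}R_j\varepsilon_s^2] = 2\theta^j\sigma^2\,E[y_{s-1}R_j]$, using independence of $\varepsilon_s^2$ from $y_{s-1}R_j$ together with $E[\varepsilon_s^2]=\sigma^2$. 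Computing $E[y_{s-1}R_j]$ is then immediate: the future-innovation terms in $R_j$ are orthogonal to $y_{s-1}$, leaving $E[y_{s-1}R_j]=\theta^{j+1}E[y_{s-1}^2]=\theta^{j+1}\sigma^2/(1-\theta^2)$ by stationarity. Combining gives $E[y_{s-1}\varepsilon_s y_{s+j}^2]=2\theta^{2j+1}\sigma^4/(1-\theta^2)$, the first identity.

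The second identity follows by linearity:
\[
E\!\left[y_{s-1}\varepsilon_s\Big(\tfrac{y_{s+j}^2}{\sigma^2}-\tfrac{1}{1-\theta^2}\Big)\right]
=\frac{1}{\sigma^2}E[y_{s-1}\varepsilon_s y_{s+j}^2]-\frac{1}{1-\theta^2}E[y_{s-1}\varepsilon_s],
\]
and the subtracted term vanishes because $E[y_{s-1}\varepsilon_s]=E[y_{s-1}]\,E[\varepsilon_s]=0$ by the same independence. Substituting the first identity yields $2\theta^{2j+1}\sigma^2/(1-\theta^2)$, as claimed.

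There is no genuine analytical obstacle here; the entire argument is driven by the isolation of $\varepsilon_s$ and the Gaussian moment structure. The one point requiring care is the bookkeeping of independence: I must verify that $\varepsilon_s$ truly does not reappear inside $R_j$ (so that the odd-moment and mean-zero cancellations are legitimate), and that the empty-sum convention handles the boundary case $j=0$ correctly, where $R_0=\theta y_{s-1}$ and the formula still reduces to $2\theta\sigma^4/(1-\theta^2)$.
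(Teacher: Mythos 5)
Your proof is correct, and the final expressions match the lemma (including the boundary case $j=0$). However, you take a genuinely different route from the paper. The paper's proof is recursive: it first computes the base case $E[y_{s-1}\varepsilon_s y_s^2]$ by a single substitution $y_s=\theta y_{s-1}+\varepsilon_s$, obtaining $2\theta\sigma^4/(1-\theta^2)$, and then establishes the one-step relation $E[y_{s-1}\varepsilon_s y_{s+j+1}^2]=\theta^2\,E[y_{s-1}\varepsilon_s y_{s+j}^2]$ by expanding $y_{s+j+1}=\theta y_{s+j}+\varepsilon_{s+j+1}$ and noting that the cross terms vanish; iterating gives the geometric factor $\theta^{2j}$. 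You instead unroll the AR(1) recursion all the way to $y_{s+j}=\theta^{j+1}y_{s-1}+\sum_{m=0}^{j}\theta^{j-m}\varepsilon_{s+m}$, isolate the single appearance of $\varepsilon_s$, and evaluate the expectation in one shot via independence and the vanishing of $E[\varepsilon_s]$, $E[\varepsilon_s^3]$, and $E[y_{s-1}]$. The paper's recursion keeps each algebraic step minimal and avoids writing the moving-average representation, at the cost of an induction; your direct decomposition $y_{s+j}=R_j+\theta^j\varepsilon_s$ makes the mechanism (exactly one surviving cross term, contributing $2\theta^j\sigma^2\cdot\theta^{j+1}E[y_{s-1}^2]$) fully transparent and handles all $j\ge 0$ uniformly, provided one is careful, as you note, that $\varepsilon_s$ does not reappear in $R_j$. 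As a minor observation, your derivation also makes clear that Gaussianity is not needed for this lemma beyond stationarity and the mean-zero innovations, since the term $\theta^{2j}E[y_{s-1}\varepsilon_s^3]$ already vanishes by independence and $E[y_{s-1}]=0$.
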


\begin{proof}
We have%
\begin{align*}
y_{s-1}\varepsilon_{s}y_{s}^{2}  &  =y_{s-1}\varepsilon_{s}\left(  \theta
y_{s-1}+\varepsilon_{s}\right)  ^{2}\\
&  =y_{s-1}\varepsilon_{s}\left(  \theta^{2}y_{s-1}^{2}+2\theta y_{s-1}%
\varepsilon_{s}+\varepsilon_{s}^{2}\right) \\
&  =\theta^{2}y_{s-1}^{3}\varepsilon_{s}+2\theta y_{s-1}^{2}\varepsilon
_{s}^{2}+y_{s-1}\varepsilon_{s}^{3}%
\end{align*}
so
\[
E\left[  y_{s-1}\varepsilon_{s}y_{s}^{2}\right]  =2\theta E\left[  y_{s-1}%
^{2}\right]  \sigma^{2}=2\theta\frac{\sigma^{2}}{1-\theta^{2}}\sigma^{2}%
=\frac{2\theta\sigma^{4}}{1-\theta^{2}}.
\]
We also have%
\begin{align*}
y_{s-1}\varepsilon_{s}y_{s+1}^{2}  &  =y_{s-1}\varepsilon_{s}\left(  \theta
y_{s}+\varepsilon_{s+1}\right)  ^{2}\\
&  =y_{s-1}\varepsilon_{s}\left(  \theta^{2}y_{s}^{2}+2\theta y_{s}%
\varepsilon_{s+1}+\varepsilon_{s+1}^{2}\right) \\
&  =\theta^{2}y_{s-1}\varepsilon_{s}y_{s}^{2}+2\theta y_{s-1}\varepsilon
_{s}y_{s}\varepsilon_{s+1}+y_{s-1}\varepsilon_{s}\varepsilon_{s+1}^{2}%
\end{align*}
so%
\[
E\left[  y_{s-1}\varepsilon_{s}y_{s+1}^{2}\right]  =\theta^{2}E\left[
y_{s-1}\varepsilon_{s}y_{s}^{2}\right]
\]
In generl, we have%
\begin{align*}
y_{s-1}\varepsilon_{s}y_{s+j+1}^{2}  &  =y_{s-1}\varepsilon_{s}\left(  \theta
y_{s+j+1}+\varepsilon_{s+j+1}\right)  ^{2}\\
&  =y_{s-1}\varepsilon_{s}\left(  \theta^{2}y_{s+j}^{2}+2\theta y_{s+j}%
\varepsilon_{s+j+1}+\varepsilon_{s+j+1}^{2}\right) \\
&  =\theta^{2}y_{s-1}\varepsilon_{s}y_{s+j}^{2}+2\theta y_{s-1}\varepsilon
_{s}y_{s+j}\varepsilon_{s+j+1}+y_{s-1}\varepsilon_{s}\varepsilon_{s+j+1}^{2}%
\end{align*}
so%
\[
E\left[  y_{s-1}\varepsilon_{s}y_{s+j+1}^{2}\right]  =\theta^{2}E\left[
y_{s-1}\varepsilon_{s}y_{s+j}^{2}\right]
\]
so we obtain
\[
E\left[  y_{s-1}\varepsilon_{s}y_{s+j}^{2}\right]  =\frac{2\theta^{2j+1}%
\sigma^{4}}{1-\theta^{2}}%
\]
from which we obtain%
\[
E\left[  y_{s-1}\varepsilon_{s}\left(  \frac{y_{s+j}^{2}}{\sigma^{2}}-\frac
{1}{1-\theta^{2}}\right)  \right]  =E\left[  \frac{y_{s-1}\varepsilon
_{s}y_{s+j}^{2}}{\sigma^{2}}-\frac{y_{s-1}\varepsilon_{s}}{1-\theta^{2}%
}\right]  =\frac{2\theta^{2j-1}\sigma^{2}}{1-\theta^{2}}%
\]

\end{proof}

\begin{lemma}
\label{lem-prop1-step3-prep}%
\[
\sum_{t=s+1}^{T}E\left[  \left(  \frac{y_{t-1}^{2}}{\sigma^{2}}-\frac
{1}{1-\theta^{2}}\right)  \frac{y_{s-1}\varepsilon_{s}}{\sigma^{2}}\right]
=\frac{2\theta}{\left(  1-\theta^{2}\right)  ^{2}}\left(  1-\left(  \theta
^{2}\right)  ^{T-s}\right)
\]

\end{lemma}

\begin{proof}
It follows from%
\begin{align*}
&  \sum_{t=s+1}^{T}E\left[  \left(  \frac{y_{t-1}^{2}}{\sigma^{2}}-\frac
{1}{1-\theta^{2}}\right)  \frac{y_{s-1}\varepsilon_{s}}{\sigma^{2}}\right] \\
&  =E\left[  \frac{y_{s-1}\varepsilon_{s}}{\sigma^{2}}\left(  \frac{y_{s}^{2}%
}{\sigma^{2}}-\frac{1}{1-\theta^{2}}\right)  \right]  +\cdots+E\left[
\frac{y_{s-1}\varepsilon_{s}}{\sigma^{2}}\left(  \frac{y_{T-1}^{2}}{\sigma
^{2}}-\frac{1}{1-\theta^{2}}\right)  \right] \\
&  =\frac{2\theta}{1-\theta^{2}}+\frac{2\theta^{3}}{1-\theta^{2}}%
+\frac{2\theta^{2\left(  T-1-s\right)  +1}}{1-\theta^{2}}\\
&  =\frac{2\theta}{1-\theta^{2}}\left(  1+\theta^{2}+\cdots+\left(  \theta
^{2}\right)  ^{T-1-s}\right) \\
&  =\frac{2\theta}{1-\theta^{2}}\frac{1-\left(  \theta^{2}\right)  ^{T-s}%
}{1-\theta^{2}}%
\end{align*}
where we used Lemma \ref{lem-prop1-step2} for the second equality.
\end{proof}

\begin{lemma}
\label{lem-prop1-step3}%
\[
\frac{1}{T}\sum_{s=1}^{T}\sum_{t=s+1}^{T}E\left[  \left(  \frac{y_{t-1}^{2}%
}{\sigma^{2}}-\frac{1}{1-\theta^{2}}\right)  \frac{y_{s-1}\varepsilon_{s}%
}{\sigma^{2}}\right]  =\frac{2\theta}{\left(  1-\theta^{2}\right)  ^{2}}%
-\frac{1}{T}\frac{2\theta\left(  1-\left(  \theta^{2}\right)  ^{T}\right)
}{\left(  1-\theta^{2}\right)  ^{3}}%
\]

\end{lemma}

\begin{proof}%
\begin{align*}
\frac{1}{T}\sum_{s=1}^{T}\sum_{t=s+1}^{T}E\left[  \left(  \frac{y_{t-1}^{2}%
}{\sigma^{2}}-\frac{1}{1-\theta^{2}}\right)  \frac{y_{s-1}\varepsilon_{s}%
}{\sigma^{2}}\right]   &  =\frac{1}{T}\sum_{s=1}^{T}\frac{2\theta}{\left(
1-\theta^{2}\right)  ^{2}}\left(  1-\left(  \theta^{2}\right)  ^{T-s}\right)
\\
&  =\frac{2\theta}{\left(  1-\theta^{2}\right)  ^{2}}-\frac{2\theta}{\left(
1-\theta^{2}\right)  ^{2}}\frac{1}{T}\sum_{s=1}^{T}\left(  \theta^{2}\right)
^{T-s}\\
&  =\frac{2\theta}{\left(  1-\theta^{2}\right)  ^{2}}-\frac{2\theta}{\left(
1-\theta^{2}\right)  ^{2}}\frac{1}{T}\sum_{j=0}^{T-1}\left(  \theta
^{2}\right)  ^{j}\\
&  =\frac{2\theta}{\left(  1-\theta^{2}\right)  ^{2}}-\frac{2\theta}{\left(
1-\theta^{2}\right)  ^{2}}\frac{1}{T}\frac{1-\left(  \theta^{2}\right)  ^{T}%
}{1-\theta^{2}}\\
&  =\frac{2\theta}{\left(  1-\theta^{2}\right)  ^{2}}-\frac{1}{T}\frac
{2\theta\left(  1-\left(  \theta^{2}\right)  ^{T}\right)  }{\left(
1-\theta^{2}\right)  ^{3}}%
\end{align*}
where we used Lemma \ref{lem-prop1-step3-prep} for the first equality.
\end{proof}

\begin{lemma}
\label{lem-prop2-stepI-summary}%
\[
E\left[  \left(  y_{t-1}^{2}-\frac{\sigma^{2}}{1-\theta^{2}}\right)  \left(
y_{t-1}\varepsilon_{t}\right)  ^{2}\right]  =\frac{2\left(  \sigma^{2}\right)
^{3}}{\left(  1-\theta^{2}\right)  ^{2}}%
\]
and therefore%
\[
\frac{1}{T}E\left[  \sum_{t=1}^{T}\left(  y_{t-1}^{2}-\frac{\sigma^{2}%
}{1-\theta^{2}}\right)  \left(  y_{t-1}\varepsilon_{t}\right)  ^{2}\right]
=\frac{2\left(  \sigma^{2}\right)  ^{3}}{\left(  1-\theta^{2}\right)  ^{2}}%
\]

\end{lemma}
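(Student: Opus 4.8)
The plan is to exploit two structural features of the strictly stationary Gaussian AR(1): first, that the innovation $\varepsilon_{t}$ is independent of the lagged value $y_{t-1}$, since under the recursion $y_{t-1}$ is a measurable function of $y_{0}$ and $\varepsilon_{1},\dots,\varepsilon_{t-1}$ only; and second, that with the stationary initialization $y_{0}\sim N(0,\sigma^{2}/(1-\theta^{2}))$ the marginal law of every $y_{t-1}$ is exactly $N(0,\tau^{2})$ with $\tau^{2}\equiv\sigma^{2}/(1-\theta^{2})$. These two facts reduce the whole computation to elementary Gaussian moment formulas.

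Concretely, I would first write the integrand as $\bigl(y_{t-1}^{2}-\tau^{2}\bigr)y_{t-1}^{2}\varepsilon_{t}^{2}$ and use the independence of $\varepsilon_{t}$ from $y_{t-1}$ to factor the expectation as
\[
E\!\left[\bigl(y_{t-1}^{2}-\tau^{2}\bigr)y_{t-1}^{2}\right]\cdot E\!\left[\varepsilon_{t}^{2}\right]
=\sigma^{2}\,E\!\left[\bigl(y_{t-1}^{2}-\tau^{2}\bigr)y_{t-1}^{2}\right].
\]
Then, using $E[y_{t-1}^{2}]=\tau^{2}$ and the Gaussian fourth-moment identity $E[y_{t-1}^{4}]=3\tau^{4}$, the central factor becomes
\[
E\!\left[\bigl(y_{t-1}^{2}-\tau^{2}\bigr)y_{t-1}^{2}\right]
=E[y_{t-1}^{4}]-\tau^{2}E[y_{t-1}^{2}]
=3\tau^{4}-\tau^{4}=2\tau^{4}.
\]
Substituting $\tau^{2}=\sigma^{2}/(1-\theta^{2})$ gives $\sigma^{2}\cdot 2\tau^{4}=2(\sigma^{2})^{3}/(1-\theta^{2})^{2}$, which is exactly the claimed first identity.

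For the second (time-averaged) statement, the point is that strict stationarity of the process makes every summand $\bigl(y_{t-1}^{2}-\tau^{2}\bigr)(y_{t-1}\varepsilon_{t})^{2}$ identically distributed across $t=1,\dots,T$, so each has the common expectation just computed. Averaging $T$ identical terms leaves the value unchanged, and the factor $1/T$ cancels the count, delivering the same constant $2(\sigma^{2})^{3}/(1-\theta^{2})^{2}$.

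There is no substantive obstacle here; the whole result is a short moment calculation. The only two points that require care, rather than difficulty, are (i) justifying that the stationary initialization is what guarantees the independence of $\varepsilon_{t}$ from $y_{t-1}$ and the exact $N(0,\tau^{2})$ marginal for each $t$ — both of which are needed for the factorization and the Gaussian moment identities to apply uniformly in $t$ — and (ii) noting that it is precisely this stationarity that lets the time average collapse to the single-term expectation, so that no vanishing remainder or $O(T^{-1})$ correction appears in this particular term.
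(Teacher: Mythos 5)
Your argument is correct and coincides with the paper's own proof: both factor out $E[\varepsilon_{t}^{2}]=\sigma^{2}$ by independence of $\varepsilon_{t}$ from $y_{t-1}$, then apply the stationary Gaussian moments $E[y_{t-1}^{2}]=\sigma^{2}/(1-\theta^{2})$ and $E[y_{t-1}^{4}]=3\bigl(\sigma^{2}/(1-\theta^{2})\bigr)^{2}$ to obtain $2(\sigma^{2})^{3}/(1-\theta^{2})^{2}$, with the time average following immediately from stationarity. No gaps.
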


\begin{proof}
We have%
\begin{align*}
E\left[  \left(  y_{t-1}^{2}-\frac{\sigma^{2}}{1-\theta^{2}}\right)  \left(
y_{t-1}\varepsilon_{t}\right)  ^{2}\right]   &  =E\left[  \left(  y_{t-1}%
^{2}-\frac{\sigma^{2}}{1-\theta^{2}}\right)  y_{t-1}^{2}\varepsilon_{t}%
^{2}\right] \\
&  =E\left[  \left(  y_{t-1}^{2}-\frac{\sigma^{2}}{1-\theta^{2}}\right)
y_{t-1}^{2}\right]  \sigma^{2}\\
&  =\left(  E\left[  y_{t-1}^{4}\right]  -\frac{\sigma^{2}}{1-\theta^{2}%
}E\left[  y_{t-1}^{2}\right]  \right)  \sigma^{2}%
\end{align*}
Using
\[
y_{0}\sim N\left(  0,\frac{\sigma^{2}}{1-\theta^{2}}\right)
\]
we get%
\begin{align*}
E\left[  y_{t-1}^{4}\right]   &  =3\left(  \frac{\sigma^{2}}{1-\theta^{2}%
}\right)  ^{2}\\
E\left[  y_{t-1}^{2}\right]   &  =\frac{\sigma^{2}}{1-\theta^{2}}%
\end{align*}
so we conclude%
\[
E\left[  \left(  y_{t-1}^{2}-\frac{\sigma^{2}}{1-\theta^{2}}\right)  \left(
y_{t-1}\varepsilon_{t}\right)  ^{2}\right]  =\frac{2\left(  \sigma^{2}\right)
^{3}}{\left(  1-\theta^{2}\right)  ^{2}}%
\]

\end{proof}

\begin{lemma}
\label{Eys2yt2}%
\[
E\left[  y_{s}^{2}y_{s+j}^{2}\right]  =\left(  1+2\left(  \theta^{2}\right)
^{j}\right)  \left(  \frac{\sigma^{2}}{1-\theta^{2}}\right)  ^{2}%
\]

\end{lemma}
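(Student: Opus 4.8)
The plan is to exploit the stationary Gaussian structure of the AR(1) process. Since $y_0\sim N(0,\sigma^2/(1-\theta^2))$ and each $\varepsilon_t\sim N(0,\sigma^2)$ is independent of the past, every $y_t$ is a zero-mean linear combination of $y_0$ and the innovations, so $(y_s,y_{s+j})$ is jointly normal; moreover the initialization is chosen precisely so that $\operatorname{Var}(y_t)=\gamma_0:=\sigma^2/(1-\theta^2)$ for all $t$, a fact already used in the proof of Lemma \ref{lem-prop2-stepI-summary}. Writing $y_{s+j}=\theta^j y_s+\sum_{i=1}^{j}\theta^{j-i}\varepsilon_{s+i}$ and using that $\varepsilon_{s+1},\dots,\varepsilon_{s+j}$ are independent of $y_s$ gives $\operatorname{Cov}(y_s,y_{s+j})=\theta^j\gamma_0$. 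The bivariate Gaussian fourth-moment identity (Isserlis/Wick), $E[X^2Y^2]=E[X^2]E[Y^2]+2E[XY]^2$, then yields
\[
E\left[y_s^2 y_{s+j}^2\right]=\gamma_0^2+2\theta^{2j}\gamma_0^2=\left(1+2(\theta^2)^j\right)\gamma_0^2,
\]
which is the claim.

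Alternatively, and more in keeping with the recursive computations in Lemma \ref{lem-prop1-step2}, I would avoid invoking joint normality and instead argue by induction on $j$. Define $a_j:=E[y_s^2 y_{s+j}^2]$. The base case $a_0=E[y_s^4]=3\gamma_0^2=(1+2)\gamma_0^2$ reuses the Gaussian fourth moment of $y_s$ already established. For the inductive step, substitute $y_{s+j+1}=\theta y_{s+j}+\varepsilon_{s+j+1}$ to get
\[
a_{j+1}=\theta^2 E\left[y_s^2 y_{s+j}^2\right]+2\theta E\left[y_s^2 y_{s+j}\varepsilon_{s+j+1}\right]+E\left[y_s^2\varepsilon_{s+j+1}^2\right].
\]
Since $\varepsilon_{s+j+1}$ is a future innovation, independent of both $y_s$ and $y_{s+j}$, the cross term vanishes and the last term equals $\gamma_0\sigma^2$, so $a_{j+1}=\theta^2 a_j+\gamma_0\sigma^2$. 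Using $\gamma_0\sigma^2=(1-\theta^2)\gamma_0^2$ together with the induction hypothesis $a_j=(1+2\theta^{2j})\gamma_0^2$ gives $a_{j+1}=(1+2\theta^{2(j+1)})\gamma_0^2$, closing the induction.

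There is no serious obstacle here: the only points requiring care are the bookkeeping of which innovations are independent of which lagged outcomes (ensuring the odd-moment cross terms vanish) and the identity $\gamma_0\sigma^2=(1-\theta^2)\gamma_0^2$ that collapses the recursion. Both are elementary. The main decision is stylistic—whether to invoke Isserlis directly or to reprove the needed fourth moment through the recursion used throughout this appendix. I would present the recursion, since it is self-contained and mirrors the adjacent lemmas.
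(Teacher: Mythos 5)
Your proof is correct, and the recursive argument you say you would present is essentially the paper's own proof: the same substitution $y_{s+j+1}=\theta y_{s+j}+\varepsilon_{s+j+1}$, the same base case $E[y_s^4]=3\gamma_0^2$, and the same recursion $a_{j+1}=\theta^2 a_j+\gamma_0\sigma^2$, which the paper solves by unrolling the geometric sum rather than by induction. Your Isserlis alternative is a valid and slightly quicker route, but since you elect not to lead with it, there is nothing of substance separating your argument from the paper's.
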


\begin{proof}
We note that%
\begin{align*}
E\left[  y_{s}^{2}y_{s+j+1}^{2}\right]   &  =E\left[  y_{s}^{2}\left(  \theta
y_{s+j}+\varepsilon_{s+j+1}\right)  ^{2}\right] \\
&  =\theta^{2}E\left[  y_{s}^{2}y_{s+j}^{2}\right]  +2\theta E\left[
y_{s}^{2}y_{s+j}\varepsilon_{s+j+1}\right]  +E\left[  y_{s}^{2}\varepsilon
_{s+j+1}^{2}\right] \\
&  =\theta^{2}E\left[  y_{s}^{2}y_{s+j}^{2}\right]  +E\left[  y_{s}%
^{2}\right]  \sigma^{2}\\
&  =\theta^{2}E\left[  y_{s}^{2}y_{s+j}^{2}\right]  +\frac{\left(  \sigma
^{2}\right)  ^{2}}{1-\theta^{2}}%
\end{align*}
and%
\[
E\left[  y_{s}^{2}y_{s}^{2}\right]  =E\left[  y_{s}^{4}\right]  =3\left(
\frac{\sigma^{2}}{1-\theta^{2}}\right)  ^{2}%
\]
It follows that%
\begin{align*}
E\left[  y_{s}^{2}y_{s+1}^{2}\right]   &  =3\theta^{2}\left(  \frac{\sigma
^{2}}{1-\theta^{2}}\right)  ^{2}+\frac{\left(  \sigma^{2}\right)  ^{2}%
}{1-\theta^{2}},\\
E\left[  y_{s}^{2}y_{s+2}^{2}\right]   &  =\theta^{2}\left(  3\theta
^{2}\left(  \frac{\sigma^{2}}{1-\theta^{2}}\right)  ^{2}+\frac{\left(
\sigma^{2}\right)  ^{2}}{1-\theta^{2}}\right)  +\frac{\left(  \sigma
^{2}\right)  ^{2}}{1-\theta^{2}}\\
&  =3\left(  \theta^{2}\right)  ^{2}\left(  \frac{\sigma^{2}}{1-\theta^{2}%
}\right)  ^{2}+\left(  \theta^{2}+1\right)  \frac{\left(  \sigma^{2}\right)
^{2}}{1-\theta^{2}},\\
E\left[  y_{s}^{2}y_{s+3}^{2}\right]   &  =\theta^{2}\left(  3\left(
\theta^{2}\right)  ^{2}\left(  \frac{\sigma^{2}}{1-\theta^{2}}\right)
^{2}+\left(  \theta^{2}+1\right)  \frac{\left(  \sigma^{2}\right)  ^{2}%
}{1-\theta^{2}}\right)  +\frac{\left(  \sigma^{2}\right)  ^{2}}{1-\theta^{2}%
}\\
&  =3\left(  \theta^{2}\right)  ^{3}\left(  \frac{\sigma^{2}}{1-\theta^{2}%
}\right)  ^{2}+\left(  \left(  \theta^{2}\right)  ^{2}+\theta^{2}+1\right)
\frac{\left(  \sigma^{2}\right)  ^{2}}{1-\theta^{2}},
\end{align*}
and we can conclude that%
\begin{align*}
E\left[  y_{s}^{2}y_{s+j}^{2}\right]   &  =3\left(  \theta^{2}\right)
^{j}\left(  \frac{\sigma^{2}}{1-\theta^{2}}\right)  ^{2}+\left(  \left(
\theta^{2}\right)  ^{j-1}+\left(  \theta^{2}\right)  ^{j-2}+\cdots+1\right)
\frac{\left(  \sigma^{2}\right)  ^{2}}{1-\theta^{2}}\\
&  =3\left(  \theta^{2}\right)  ^{j}\left(  \frac{\sigma^{2}}{1-\theta^{2}%
}\right)  ^{2}+\frac{1-\left(  \theta^{2}\right)  ^{j}}{1-\theta^{2}}%
\frac{\left(  \sigma^{2}\right)  ^{2}}{1-\theta^{2}}\\
&  =3\left(  \theta^{2}\right)  ^{j}\left(  \frac{\sigma^{2}}{1-\theta^{2}%
}\right)  ^{2}+\left(  1-\left(  \theta^{2}\right)  ^{j}\right)  \left(
\frac{\sigma^{2}}{1-\theta^{2}}\right)  ^{2}\\
&  =\left(  1+2\left(  \theta^{2}\right)  ^{j}\right)  \left(  \frac
{\sigma^{2}}{1-\theta^{2}}\right)  ^{2}%
\end{align*}

\end{proof}

\begin{lemma}
\label{lem-prop2-step2}%
\[
E\left[  \left(  y_{s-1}^{2}-\frac{\sigma^{2}}{1-\theta^{2}}\right)  \left(
y_{t-1}\varepsilon_{t}\right)  ^{2}\right]  =2\left(  \theta^{2}\right)
^{t-s}\frac{\left(  \sigma^{2}\right)  ^{3}}{\left(  1-\theta^{2}\right)
^{2}}%
\]

\end{lemma}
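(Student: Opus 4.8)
The plan is to exploit the causal structure of the AR(1) process to detach the innovation $\varepsilon_t$ from the lagged levels, and then reduce the remaining cross moment to the second-order moments already catalogued in Lemma \ref{Eys2yt2}.

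First I would observe that, because $s<t$, both $y_{s-1}$ and $y_{t-1}$ are measurable functions of $\{y_0,\varepsilon_1,\dots,\varepsilon_{t-1}\}$, all of which are independent of $\varepsilon_t$. Hence $\varepsilon_t^2$ can be pulled out of the expectation, and using $E[\varepsilon_t^2]=\sigma^2$ we get
\[
E\left[\left(y_{s-1}^2-\tfrac{\sigma^2}{1-\theta^2}\right)\left(y_{t-1}\varepsilon_t\right)^2\right]
=\sigma^2\,E\left[\left(y_{s-1}^2-\tfrac{\sigma^2}{1-\theta^2}\right)y_{t-1}^2\right].
\]

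Next I would evaluate the surviving cross moment. Setting $j=t-s$ so that the second index satisfies $t-1=(s-1)+j$, Lemma \ref{Eys2yt2} (applied with $s-1$ in place of $s$) gives $E[y_{s-1}^2 y_{t-1}^2]=\left(1+2(\theta^2)^{t-s}\right)\left(\tfrac{\sigma^2}{1-\theta^2}\right)^2$, while stationarity gives $E[y_{t-1}^2]=\tfrac{\sigma^2}{1-\theta^2}$. Subtracting $\tfrac{\sigma^2}{1-\theta^2}E[y_{t-1}^2]=\left(\tfrac{\sigma^2}{1-\theta^2}\right)^2$ cancels the leading $1$ in the bracket, leaving $2(\theta^2)^{t-s}\left(\tfrac{\sigma^2}{1-\theta^2}\right)^2$. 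Multiplying by the factored-out $\sigma^2$ then delivers the claimed value $2(\theta^2)^{t-s}\tfrac{(\sigma^2)^3}{(1-\theta^2)^2}$.

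The argument presents no genuine obstacle once the independence of $\varepsilon_t$ from the strictly earlier levels is noted; the only place demanding care is the bookkeeping of the time index when invoking Lemma \ref{Eys2yt2}, ensuring that the exponent tracks the gap $t-s$ between $y_{s-1}$ and $y_{t-1}$ rather than between $y_s$ and $y_t$. I would therefore keep the index relabelling explicit so that the cancellation of the constant term is transparent.
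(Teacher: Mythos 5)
Your proposal is correct and follows essentially the same route as the paper's proof: factor out $E[\varepsilon_t^2]=\sigma^2$ by independence, apply Lemma \ref{Eys2yt2} with gap $t-s$ to evaluate $E[y_{s-1}^2 y_{t-1}^2]$, and cancel the constant term against $\frac{\sigma^2}{1-\theta^2}E[y_{t-1}^2]$. The index bookkeeping you flag is handled correctly.
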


\begin{proof}
We have%
\begin{align*}
E\left[  \left(  y_{s-1}^{2}-\frac{\sigma^{2}}{1-\theta^{2}}\right)
y_{t-1}^{2}\varepsilon_{t}^{2}\right]   &  =E\left[  \left(  y_{s-1}^{2}%
-\frac{\sigma^{2}}{1-\theta^{2}}\right)  y_{t-1}^{2}\right]  \sigma^{2}\\
&  =\left(  E\left[  y_{s-1}^{2}y_{t-1}^{2}\right]  -\frac{\sigma^{2}%
}{1-\theta^{2}}E\left[  y_{t-1}^{2}\right]  \right)  \sigma^{2}\\
&  =\left(  \left(  1+2\left(  \theta^{2}\right)  ^{t-s}\right)  \left(
\frac{\sigma^{2}}{1-\theta^{2}}\right)  ^{2}-\frac{\sigma^{2}}{1-\theta^{2}%
}\frac{\sigma^{2}}{1-\theta^{2}}\right)  \sigma^{2}\\
&  =2\left(  \theta^{2}\right)  ^{t-s}\frac{\left(  \sigma^{2}\right)  ^{3}%
}{\left(  1-\theta^{2}\right)  ^{2}}%
\end{align*}
where we used Lemma \ref{Eys2yt2} for the third equality.
\end{proof}

\begin{lemma}
\label{lem-prop2-step3&5}%
\begin{align*}
E\left[  \left(  y_{t-1}^{2}-\frac{\sigma^{2}}{1-\theta^{2}}\right)  \left(
y_{s-1}\varepsilon_{s}\right)  \left(  y_{t-1}\varepsilon_{t}\right)  \right]
&  =0\\
E\left[  \left(  y_{s-1}^{2}-\frac{\sigma^{2}}{1-\theta^{2}}\right)  \left(
y_{s-1}\varepsilon_{s}\right)  \left(  y_{t-1}\varepsilon_{t}\right)  \right]
&  =0
\end{align*}

\end{lemma}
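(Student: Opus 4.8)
The plan is to exploit the martingale-difference structure of the innovations together with the maintained ordering $s<t$. First I would note that in the AR(1) recursion $y_{t-1}$ is a measurable function of the initial condition $y_0$ and the innovations $\varepsilon_1,\dots,\varepsilon_{t-1}$ only, and likewise $y_{s-1}$ and $\varepsilon_s$ depend only on quantities with time index at most $t-1$, since $s\le t-1$. Writing $\mathcal{G}_{t-1}$ for the $\sigma$-field generated by $y_0,\varepsilon_1,\dots,\varepsilon_{t-1}$, every factor appearing in either expectation other than the single $\varepsilon_t$ is $\mathcal{G}_{t-1}$-measurable.

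The key step is then the observation that $\varepsilon_t$ is independent of $\mathcal{G}_{t-1}$ with $E[\varepsilon_t]=0$. For the first identity I would apply the law of iterated expectations, conditioning on $\mathcal{G}_{t-1}$:
\[
E\left[\left(y_{t-1}^{2}-\tfrac{\sigma^{2}}{1-\theta^{2}}\right)y_{s-1}\varepsilon_{s}\,y_{t-1}\varepsilon_{t}\right]
=E\left[\left(y_{t-1}^{2}-\tfrac{\sigma^{2}}{1-\theta^{2}}\right)y_{s-1}\varepsilon_{s}\,y_{t-1}\,E\!\left[\varepsilon_{t}\mid\mathcal{G}_{t-1}\right]\right]=0,
\]
because $E[\varepsilon_{t}\mid\mathcal{G}_{t-1}]=E[\varepsilon_{t}]=0$. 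The identical argument, pulling $\varepsilon_{t}$ out as the only non-$\mathcal{G}_{t-1}$-measurable factor, delivers the second identity, since $\left(y_{s-1}^{2}-\tfrac{\sigma^{2}}{1-\theta^{2}}\right)y_{s-1}\varepsilon_{s}\,y_{t-1}$ is again $\mathcal{G}_{t-1}$-measurable.

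There is no substantial obstacle here; the only point requiring care is the bookkeeping of time indices confirming that $\varepsilon_{t}$ is the unique factor carrying the maximal index $t$, so that the strict inequality $s<t$ (rather than $s\le t$) is exactly what the argument needs. I would make this explicit by recording that $s\le t-1$ implies $s-1,\,s,\,t-1\le t-1$, after which the independence-and-mean-zero step applies verbatim to both displayed expectations.
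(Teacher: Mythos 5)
Your argument is correct and is essentially identical to the paper's proof, which also conditions on the information up to time $t-1$ and uses $E_{t-1}[\varepsilon_t]=0$ via the law of iterated expectations. You simply make explicit the measurability bookkeeping that the paper leaves implicit.
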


\begin{proof}
Conditioning on up to the random variables up to $t-1$, we get%
\[
E_{s}\left[  \left(  y_{t-1}^{2}-\frac{\sigma^{2}}{1-\theta^{2}}\right)
\left(  y_{s-1}\varepsilon_{s}\right)  \left(  y_{t-1}\varepsilon_{t}\right)
\right]  =\left(  y_{s-1}\varepsilon_{s}\right)  \left(  y_{t-1}^{2}%
-\frac{\sigma^{2}}{1-\theta^{2}}\right)  y_{t-1}E_{t-1}\left[  \varepsilon
_{t}\right]  =0
\]
from which we get the first result. The second result is obtained similarly.
\end{proof}

\begin{lemma}
\label{lem-prop2-step4-prep}%
\[
E\left[  \left(  y_{s-1}\varepsilon_{s}\right)  ^{2}y_{t-1}^{2}\right]
=\frac{\left(  \sigma^{2}\right)  ^{3}}{\left(  1-\theta^{2}\right)  ^{2}%
}+\frac{2\left(  \sigma^{2}\right)  ^{3}}{\left(  1-\theta^{2}\right)  ^{2}%
}\left(  \theta^{2}\right)  ^{t-s-1}%
\]

\end{lemma}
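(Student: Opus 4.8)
The plan is to condition on the information available through time $s$ and exploit the fact that $\varepsilon_s$ is independent of $y_{s-1}$ while the future innovations $\varepsilon_{s+1},\dots,\varepsilon_{t-1}$ are independent of both.

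First I would let $\mathcal{F}_s$ denote the $\sigma$-field generated by $y_0,\varepsilon_1,\dots,\varepsilon_s$, so that $y_{s-1}$, $\varepsilon_s$, and hence $y_s$ are $\mathcal{F}_s$-measurable. Iterating the recursion $y_r=\theta y_{r-1}+\varepsilon_r$ from $s$ to $t-1$ gives $y_{t-1}=\theta^{t-1-s}y_s+\sum_{j=s+1}^{t-1}\theta^{t-1-j}\varepsilon_j$, where the sum is independent of $\mathcal{F}_s$ and has variance $\sigma^2\sum_{l=0}^{t-s-2}\theta^{2l}=\sigma^2\frac{1-\theta^{2(t-s-1)}}{1-\theta^2}$. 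Taking conditional expectations, the cross term vanishes and
\[
E[y_{t-1}^2\mid\mathcal{F}_s]=\theta^{2(t-s-1)}y_s^2+\sigma^2\frac{1-\theta^{2(t-s-1)}}{1-\theta^2}.
\]
By the tower property,
\[
E[(y_{s-1}\varepsilon_s)^2y_{t-1}^2]=\theta^{2(t-s-1)}E[(y_{s-1}\varepsilon_s)^2y_s^2]+\sigma^2\frac{1-\theta^{2(t-s-1)}}{1-\theta^2}\,E[(y_{s-1}\varepsilon_s)^2],
\]
which reduces the claim to two elementary moment computations.

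For the two remaining expectations I would use independence of $y_{s-1}$ and $\varepsilon_s$ together with the Gaussian moments $E[\varepsilon_s^2]=\sigma^2$, $E[\varepsilon_s^3]=0$, $E[\varepsilon_s^4]=3\sigma^4$ and $E[y_{s-1}^2]=\sigma^2/(1-\theta^2)$, $E[y_{s-1}^3]=0$, $E[y_{s-1}^4]=3(\sigma^2/(1-\theta^2))^2$. Immediately $E[(y_{s-1}\varepsilon_s)^2]=(\sigma^2)^2/(1-\theta^2)$. For the other, substituting $y_s=\theta y_{s-1}+\varepsilon_s$ and expanding $y_{s-1}^2\varepsilon_s^2(\theta y_{s-1}+\varepsilon_s)^2$, the odd cross term drops out and
\[
E[(y_{s-1}\varepsilon_s)^2y_s^2]=\theta^2E[y_{s-1}^4]E[\varepsilon_s^2]+E[y_{s-1}^2]E[\varepsilon_s^4]=\frac{3(\sigma^2)^3}{(1-\theta^2)^2}\big(\theta^2+(1-\theta^2)\big)=\frac{3(\sigma^2)^3}{(1-\theta^2)^2}.
\]
The pleasant collapse $\theta^2+(1-\theta^2)=1$ here is what keeps the final formula clean.

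Finally I would substitute these into the tower-property expression and collect terms. Writing $p=(\theta^2)^{t-s-1}$, the two contributions combine as $\frac{(\sigma^2)^3}{(1-\theta^2)^2}\big(3p+(1-p)\big)=\frac{(\sigma^2)^3}{(1-\theta^2)^2}(1+2p)$, which is exactly the asserted identity. There is no real obstacle beyond careful bookkeeping of the powers of $\theta$; the only point requiring attention is getting the exponent in the geometric variance sum right (the innovations run from $s+1$ to $t-1$, giving $t-s-1$ terms), since an off-by-one there would spoil the matching of the deterministic and $y_s^2$ contributions.
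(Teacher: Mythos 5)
Your proof is correct and is essentially the same argument as the paper's: both reduce the claim to the base moment $E\left[\left(y_{s-1}\varepsilon_{s}\right)^{2}y_{s}^{2}\right]=3\left(\sigma^{2}\right)^{3}/\left(1-\theta^{2}\right)^{2}$ and $E\left[\left(y_{s-1}\varepsilon_{s}\right)^{2}\right]=\left(\sigma^{2}\right)^{2}/\left(1-\theta^{2}\right)$, and then propagate forward by $\theta^{2}$ per step. The only cosmetic difference is that the paper iterates the one-step recursion $E\left[\left(y_{s-1}\varepsilon_{s}\right)^{2}y_{s+j+1}^{2}\right]=\theta^{2}E\left[\left(y_{s-1}\varepsilon_{s}\right)^{2}y_{s+j}^{2}\right]+\left(\sigma^{2}\right)^{3}/\left(1-\theta^{2}\right)$ and sums the geometric series, whereas you write the solved form directly by conditioning on $\mathcal{F}_{s}$; your exponent bookkeeping ($t-s-1$ innovations between $s$ and $t-1$) is right.
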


\begin{proof}
We have%
\begin{align*}
\left(  y_{s-1}\varepsilon_{s}\right)  ^{2}y_{s}^{2}  &  =\left(
y_{s-1}\varepsilon_{s}\right)  ^{2}\left(  \theta y_{s-1}+\varepsilon
_{s}\right)  ^{2}\\
&  =y_{s-1}^{2}\varepsilon_{s}^{2}\left(  \theta^{2}y_{s-1}^{2}+2\theta
y_{s-1}\varepsilon_{s}+\varepsilon_{s}^{2}\right) \\
&  =\theta^{2}y_{s-1}^{4}\varepsilon_{s}^{2}+2\theta y_{s-1}^{3}%
\varepsilon_{s}^{3}+y_{s-1}^{2}\varepsilon_{s}^{4}%
\end{align*}
so
\begin{align*}
E\left[  \left(  y_{s-1}\varepsilon_{s}\right)  ^{2}y_{s}^{2}\right]   &
=\theta^{2}E\left[  y_{s-1}^{4}\right]  \sigma^{2}+E\left[  y_{s-1}%
^{2}\right]  3\left(  \sigma^{2}\right)  ^{2}\\
&  =\theta^{2}\left(  3\left(  \frac{\sigma^{2}}{1-\theta^{2}}\right)
^{2}\right)  \sigma^{2}+\left(  \frac{\sigma^{2}}{1-\theta^{2}}\right)
3\left(  \sigma^{2}\right)  ^{2}\\
&  =\frac{3\left(  \sigma^{2}\right)  ^{3}}{\left(  1-\theta^{2}\right)  ^{2}}%
\end{align*}
We also have%
\begin{align*}
\left(  y_{s-1}\varepsilon_{s}\right)  ^{2}y_{s+j+1}^{2}  &  =\left(
y_{s-1}\varepsilon_{s}\right)  ^{2}\left(  \theta y_{s+j}+\varepsilon
_{s+j+1}\right)  ^{2}\\
&  =\left(  y_{s-1}\varepsilon_{s}\right)  ^{2}\left(  \theta^{2}y_{s+j}%
^{2}+2\theta y_{s+j}\varepsilon_{s+j+1}+\varepsilon_{s+j+1}^{2}\right) \\
&  =\theta^{2}\left(  y_{s-1}\varepsilon_{s}\right)  ^{2}y_{s+j}^{2}%
+2\theta\left(  y_{s-1}\varepsilon_{s}\right)  ^{2}y_{s+j}\varepsilon
_{s+j+1}+\left(  y_{s-1}\varepsilon_{s}\right)  ^{2}\varepsilon_{s+j+1}^{2}%
\end{align*}
so%
\begin{align*}
E\left[  \left(  y_{s-1}\varepsilon_{s}\right)  ^{2}y_{s+j+1}^{2}\right]   &
=\theta^{2}E\left[  \left(  y_{s-1}\varepsilon_{s}\right)  ^{2}y_{s+j}%
^{2}\right]  +E\left[  \left(  y_{s-1}\varepsilon_{s}\right)  ^{2}\right]
\sigma^{2}\\
&  =\theta^{2}E\left[  \left(  y_{s-1}\varepsilon_{s}\right)  ^{2}y_{s+j}%
^{2}\right]  +\left(  \frac{\sigma^{2}}{1-\theta^{2}}\sigma^{2}\right)
\sigma^{2}\\
&  =\theta^{2}E\left[  \left(  y_{s-1}\varepsilon_{s}\right)  ^{2}y_{s+j}%
^{2}\right]  +\frac{\left(  \sigma^{2}\right)  ^{3}}{1-\theta^{2}}%
\end{align*}
so in general, we have%
\begin{align*}
E\left[  \left(  y_{s-1}\varepsilon_{s}\right)  ^{2}y_{s+j}^{2}\right]   &
=\frac{3\left(  \sigma^{2}\right)  ^{3}}{\left(  1-\theta^{2}\right)  ^{2}%
}\left(  \theta^{2}\right)  ^{j}+\frac{\left(  \sigma^{2}\right)  ^{3}%
}{1-\theta^{2}}\left(  \left(  \theta^{2}\right)  ^{j-1}+\cdots+1\right) \\
&  =\frac{3\left(  \sigma^{2}\right)  ^{3}}{\left(  1-\theta^{2}\right)  ^{2}%
}\left(  \theta^{2}\right)  ^{j}+\frac{\left(  \sigma^{2}\right)  ^{3}%
}{1-\theta^{2}}\frac{1-\left(  \theta^{2}\right)  ^{j}}{1-\theta^{2}}\\
&  =\frac{\left(  \sigma^{2}\right)  ^{3}}{\left(  1-\theta^{2}\right)  ^{2}%
}+\frac{2\left(  \sigma^{2}\right)  ^{3}}{\left(  1-\theta^{2}\right)  ^{2}%
}\left(  \theta^{2}\right)  ^{j}%
\end{align*}
from which the conclusion follows.
\end{proof}

\begin{lemma}
\label{lem-prop2-step4}%
\[
E\left[  \left(  y_{t-1}^{2}-\frac{\sigma^{2}}{1-\theta^{2}}\right)  \left(
y_{s-1}\varepsilon_{s}\right)  \left(  y_{s-1}\varepsilon_{s}\right)  \right]
=\frac{2\left(  \sigma^{2}\right)  ^{3}}{\left(  1-\theta^{2}\right)  ^{2}%
}\left(  \theta^{2}\right)  ^{t-s-1}%
\]

\end{lemma}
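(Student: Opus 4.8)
The plan is to reduce the statement directly to the preparatory computation in Lemma \ref{lem-prop2-step4-prep}, since that lemma already performs the recursive work that makes the claim non-trivial. First I would note that $(y_{s-1}\varepsilon_s)(y_{s-1}\varepsilon_s) = (y_{s-1}\varepsilon_s)^2$, so the quantity of interest splits by linearity of expectation into
\[
E\left[\left(y_{t-1}^2-\tfrac{\sigma^2}{1-\theta^2}\right)(y_{s-1}\varepsilon_s)^2\right]
= E\left[(y_{s-1}\varepsilon_s)^2 y_{t-1}^2\right] - \frac{\sigma^2}{1-\theta^2}\,E\left[(y_{s-1}\varepsilon_s)^2\right].
\]
The first term on the right is exactly the content of Lemma \ref{lem-prop2-step4-prep}, which (with $j=t-1-s$) equals $\tfrac{(\sigma^2)^3}{(1-\theta^2)^2} + \tfrac{2(\sigma^2)^3}{(1-\theta^2)^2}(\theta^2)^{t-s-1}$.

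For the second term I would use the Markovian/adaptedness structure of the AR(1) process: since $y_{s-1}$ is a measurable function of $y_0$ and $\{\varepsilon_j : j\le s-1\}$, it is independent of $\varepsilon_s$. Hence $E[(y_{s-1}\varepsilon_s)^2] = E[y_{s-1}^2]\,E[\varepsilon_s^2] = \tfrac{\sigma^2}{1-\theta^2}\cdot\sigma^2$, using the stationary second moment $E[y_{s-1}^2]=\sigma^2/(1-\theta^2)$ that is fixed by the initial condition $y_0\sim N(0,\sigma^2/(1-\theta^2))$. Multiplying by the centering factor $\sigma^2/(1-\theta^2)$ gives $\tfrac{(\sigma^2)^3}{(1-\theta^2)^2}$.

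Subtracting, the two constant (non-decaying) contributions $\tfrac{(\sigma^2)^3}{(1-\theta^2)^2}$ cancel exactly, leaving only the geometrically decaying piece
\[
E\left[\left(y_{t-1}^2-\tfrac{\sigma^2}{1-\theta^2}\right)(y_{s-1}\varepsilon_s)^2\right]
= \frac{2(\sigma^2)^3}{(1-\theta^2)^2}(\theta^2)^{t-s-1},
\]
which is the claimed identity. There is no real obstacle here beyond the preparatory lemma: the only point requiring a moment of care is justifying the factorization $E[y_{s-1}^2\varepsilon_s^2]=E[y_{s-1}^2]E[\varepsilon_s^2]$ via independence, and making sure the centering constant is precisely what cancels the non-decaying term in Lemma \ref{lem-prop2-step4-prep}. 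The role of subtracting $\sigma^2/(1-\theta^2)$ is exactly to remove the persistent $O(1)$ component of $E[(y_{s-1}\varepsilon_s)^2 y_{t-1}^2]$, which is the conceptual reason the result decays in the lag $t-s$.
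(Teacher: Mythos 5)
Your proposal is correct and matches the paper's own proof exactly: the paper likewise writes the expectation as $E[(y_{s-1}\varepsilon_s)^2 y_{t-1}^2]-\frac{\sigma^2}{1-\theta^2}E[(y_{s-1}\varepsilon_s)^2]$, invokes Lemma \ref{lem-prop2-step4-prep} for the first term, and uses $E[(y_{s-1}\varepsilon_s)^2]=\frac{\sigma^2}{1-\theta^2}\sigma^2$ so that the non-decaying pieces cancel. Nothing is missing.
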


\begin{proof}%
\begin{align*}
&  E\left[  \left(  y_{t-1}^{2}-\frac{\sigma^{2}}{1-\theta^{2}}\right)
\left(  y_{s-1}\varepsilon_{s}\right)  \left(  y_{s-1}\varepsilon_{s}\right)
\right] \\
&  =E\left[  \left(  y_{s-1}\varepsilon_{s}\right)  ^{2}y_{t-1}^{2}\right]
-\frac{\sigma^{2}}{1-\theta^{2}}E\left[  \left(  y_{s-1}\varepsilon
_{s}\right)  ^{2}\right] \\
&  =\frac{\left(  \sigma^{2}\right)  ^{3}}{\left(  1-\theta^{2}\right)  ^{2}%
}+\frac{2\left(  \sigma^{2}\right)  ^{3}}{\left(  1-\theta^{2}\right)  ^{2}%
}\left(  \theta^{2}\right)  ^{t-s-1}-\frac{\sigma^{2}}{1-\theta^{2}}\left(
\frac{\sigma^{2}}{1-\theta^{2}}\sigma^{2}\right) \\
&  =\frac{2\left(  \sigma^{2}\right)  ^{3}}{\left(  1-\theta^{2}\right)  ^{2}%
}\left(  \theta^{2}\right)  ^{t-s-1}%
\end{align*}
where we used Lemma \ref{lem-prop2-step4-prep} for the second equality.
\end{proof}

\begin{lemma}
\label{lem-prop2-stepII-summary}%
\begin{align*}
&  \frac{1}{T}\sum_{s=1}^{T}\sum_{t>s}E\left(
\begin{array}
[c]{c}%
\left(  y_{s-1}^{2}-\frac{\sigma^{2}}{1-\theta^{2}}\right)  \left(
y_{t-1}\varepsilon_{t}\right)  \left(  y_{t-1}\varepsilon_{t}\right)
+2\left(  y_{t-1}^{2}-\frac{\sigma^{2}}{1-\theta^{2}}\right)  \left(
y_{s-1}\varepsilon_{s}\right)  \left(  y_{t-1}\varepsilon_{t}\right) \\
+\left(  y_{t-1}^{2}-\frac{\sigma^{2}}{1-\theta^{2}}\right)  \left(
y_{s-1}\varepsilon_{s}\right)  \left(  y_{s-1}\varepsilon_{s}\right)
+2\left(  y_{s-1}^{2}-\frac{\sigma^{2}}{1-\theta^{2}}\right)  \left(
y_{s-1}\varepsilon_{s}\right)  \left(  y_{t-1}\varepsilon_{t}\right)
\end{array}
\right) \\
&  =\frac{2\left(  \sigma^{2}\right)  ^{3}}{\left(  1-\theta^{2}\right)  ^{3}%
}\left(  \theta^{2}+1\right)  +O\left(  \frac{1}{T}\right)
\end{align*}

\end{lemma}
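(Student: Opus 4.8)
The plan is to evaluate the double sum term by term using the four single-lag moment computations already in hand, and then collapse the resulting geometric series. First I would use linearity of expectation to split the bracketed summand into its four pieces. Two of these—namely $2(y_{t-1}^2 - \sigma^2/(1-\theta^2))(y_{s-1}\varepsilon_s)(y_{t-1}\varepsilon_t)$ and $2(y_{s-1}^2 - \sigma^2/(1-\theta^2))(y_{s-1}\varepsilon_s)(y_{t-1}\varepsilon_t)$—are triple-product terms whose expectations vanish identically by the conditioning argument (conditioning on the information up to time $t-1$ isolates a factor $E_{t-1}[\varepsilon_t]=0$), so they contribute nothing to the sum.

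For the remaining two terms I would invoke Lemma \ref{lem-prop2-step2}, which gives the value $2(\theta^2)^{t-s}(\sigma^2)^3/(1-\theta^2)^2$ for $E[(y_{s-1}^2-\sigma^2/(1-\theta^2))(y_{t-1}\varepsilon_t)^2]$, together with Lemma \ref{lem-prop2-step4}, which gives $2(\theta^2)^{t-s-1}(\sigma^2)^3/(1-\theta^2)^2$ for $E[(y_{t-1}^2-\sigma^2/(1-\theta^2))(y_{s-1}\varepsilon_s)^2]$. Adding these and factoring out the common $(\theta^2)^{t-s-1}$ produces a summand equal to $\frac{2(\sigma^2)^3}{(1-\theta^2)^2}(\theta^2)^{t-s-1}(\theta^2+1)$, which—crucially—depends on $s$ and $t$ only through the lag $t-s$. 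This is a direct consequence of the stationarity enforced by the initialization $y_0\sim N(0,\sigma^2/(1-\theta^2))$.

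The final step is the summation itself, which mirrors the geometric-series calculation already carried out in Lemma \ref{lem-prop1-step3}. Writing $j=t-s$, the inner sum becomes $\sum_{j=1}^{T-s}(\theta^2)^{j-1}=(1-(\theta^2)^{T-s})/(1-\theta^2)$, which introduces the third power of $(1-\theta^2)$ in the denominator. Summing over $s$ and dividing by $T$, the constant part yields $\frac{2(\sigma^2)^3(\theta^2+1)}{(1-\theta^2)^3}$, while the leftover piece $\frac{1}{T}\sum_{s=1}^T(\theta^2)^{T-s}$ is itself a convergent geometric sum bounded by $1/(1-\theta^2)$ and is therefore $O(1/T)$. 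This delivers the claimed leading term plus the $O(1/T)$ remainder.

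There is no deep obstacle here; the work is essentially bookkeeping, and the only points demanding care are (i) correctly matching each of the four bracketed terms to the right lemma and tracking the exponents $t-s$ versus $t-s-1$ so that the factorization of $(\theta^2+1)$ comes out right, and (ii) confirming that the geometric tail $\frac{1}{T}\sum_{s=1}^T(\theta^2)^{T-s}$ is genuinely $O(1/T)$ rather than growing with $T$, which holds precisely because $|\theta|<1$ guarantees absolute summability. The stationarity assumption is what reduces every expectation to a function of the lag alone; without it the collapse to a single geometric series would not be available.
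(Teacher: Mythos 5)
Your proposal is correct and follows essentially the same route as the paper's own proof: the two mixed triple-product terms are killed by conditioning (Lemma \ref{lem-prop2-step3&5}), the two surviving terms are evaluated by Lemmas \ref{lem-prop2-step2} and \ref{lem-prop2-step4}, giving a summand proportional to $(\theta^{2}+1)(\theta^{2})^{t-s-1}$, and the double sum collapses to the same geometric series with an $O(1/T)$ tail. No substantive differences.
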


\begin{proof}
From Lemmas \ref{lem-prop2-step2}, \ref{lem-prop2-step3&5}, and
\ref{lem-prop2-step4}, we see the sum of the four expectations in the lemma is
equal to%
\begin{align*}
&  E\left[  \left(  y_{s-1}^{2}-\frac{\sigma^{2}}{1-\theta^{2}}\right)
\left(  y_{t-1}\varepsilon_{t}\right)  ^{2}\right]  +2E\left[  \left(
y_{s-1}\varepsilon_{s}\right)  \left(  y_{t-1}\varepsilon_{t}\right)  \left(
y_{t-1}^{2}-\frac{\sigma^{2}}{1-\theta^{2}}\right)  \right] \\
&  +E\left[  \left(  y_{t-1}^{2}-\frac{\sigma^{2}}{1-\theta^{2}}\right)
\left(  y_{s-1}\varepsilon_{s}\right)  ^{2}\right]  +2E\left[  \left(
y_{s-1}^{2}-\frac{\sigma^{2}}{1-\theta^{2}}\right)  \left(  y_{s-1}%
\varepsilon_{s}\right)  \left(  y_{t-1}\varepsilon_{t}\right)  \right] \\
&  =2\left(  \theta^{2}\right)  ^{t-s}\frac{\left(  \sigma^{2}\right)  ^{3}%
}{\left(  1-\theta^{2}\right)  ^{2}}+0+\frac{2\left(  \sigma^{2}\right)  ^{3}%
}{\left(  1-\theta^{2}\right)  ^{2}}\left(  \theta^{2}\right)  ^{t-s-1}+0\\
&  =\frac{2\left(  \sigma^{2}\right)  ^{3}}{\left(  1-\theta^{2}\right)  ^{2}%
}\left(  \theta^{2}+1\right)  \left(  \theta^{2}\right)  ^{t-s-1}%
\end{align*}
so the sum is equal to $\frac{2\left(  \sigma^{2}\right)  ^{3}}{\left(
1-\theta^{2}\right)  ^{2}}\left(  \theta^{2}+1\right)  $ times%
\begin{align*}
\frac{1}{T}\sum_{s=1}^{T}\sum_{t=s+1}^{T}\left(  \theta^{2}\right)  ^{t-s-1}
&  =\frac{1}{T}\sum_{s=1}^{T}\left(  1+\cdots+\left(  \theta^{2}\right)
^{T-s-1}\right) \\
&  =\frac{1}{T}\sum_{s=1}^{T}\frac{1-\left(  \theta^{2}\right)  ^{T-s}%
}{1-\theta^{2}}\\
&  =\frac{1}{T}\frac{1}{1-\theta^{2}}\sum_{s=1}^{T}\left(  1-\left(
\theta^{2}\right)  ^{T-s}\right) \\
&  =\frac{1}{T}\frac{1}{1-\theta^{2}}\left(  T-\frac{1-\left(  \theta
^{2}\right)  ^{T-s}}{1-\left(  \theta^{2}\right)  }\right) \\
&  =\frac{1}{1-\theta^{2}}+O\left(  \frac{1}{T}\right)
\end{align*}

\end{proof}

\begin{lemma}
\label{lem-prop2-step6&7&10&11}%
\[
E\left[
\begin{array}
[c]{c}%
\left(  y_{s-1}^{2}-\frac{\sigma^{2}}{1-\theta^{2}}\right)  \left(
y_{t-1}\varepsilon_{t}\right)  \left(  y_{u-1}\varepsilon_{u}\right)  +\left(
y_{s-1}^{2}-\frac{\sigma^{2}}{1-\theta^{2}}\right)  \left(  y_{u-1}%
\varepsilon_{u}\right)  \left(  y_{t-1}\varepsilon_{t}\right) \\
+\left(  y_{t-1}^{2}-\frac{\sigma^{2}}{1-\theta^{2}}\right)  \left(
y_{s-1}\varepsilon_{s}\right)  \left(  y_{u-1}\varepsilon_{u}\right)  +\left(
y_{t-1}^{2}-\frac{\sigma^{2}}{1-\theta^{2}}\right)  \left(  y_{u-1}%
\varepsilon_{u}\right)  \left(  y_{s-1}\varepsilon_{s}\right)
\end{array}
\right]  =0
\]

\end{lemma}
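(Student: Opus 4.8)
The plan is to exploit the martingale structure of the AR(1) innovations, exactly as in the proof of Lemma~\ref{lem-prop2-step3&5}. The key observation is that since $s<t<u$, the index $u$ is strictly the largest, so $\varepsilon_u$ enters each of the four products through the single factor $y_{u-1}\varepsilon_u$ and nowhere else: every other factor — namely $(y_{s-1}^2-\sigma^2/(1-\theta^2))$, $(y_{t-1}^2-\sigma^2/(1-\theta^2))$, $y_{s-1}\varepsilon_s$, $y_{t-1}\varepsilon_t$ — together with $y_{u-1}$ itself is a function of $\{y_0,\varepsilon_1,\dots,\varepsilon_{u-1}\}$, and $\varepsilon_u$ is independent of this collection with $E[\varepsilon_u]=0$.

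First I would note that the first and second summands are identical (they differ only in the order in which the scalar factors are written), as are the third and fourth, so it suffices to handle the two distinct products $(y_{s-1}^2-\sigma^2/(1-\theta^2))(y_{t-1}\varepsilon_t)(y_{u-1}\varepsilon_u)$ and $(y_{t-1}^2-\sigma^2/(1-\theta^2))(y_{s-1}\varepsilon_s)(y_{u-1}\varepsilon_u)$. For each I would condition on the information up to time $u-1$ and apply the tower property, writing $E_{u-1}[\cdot]$ for the conditional expectation given $\{y_0,\varepsilon_1,\dots,\varepsilon_{u-1}\}$. Since all factors other than $\varepsilon_u$ are $E_{u-1}$-measurable, they pull out of the inner conditional expectation, leaving $y_{u-1}\,E_{u-1}[\varepsilon_u]=y_{u-1}\cdot 0=0$. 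Taking outer expectations then yields zero for each product, and hence zero for the full sum of four terms.

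There is essentially no obstacle here beyond bookkeeping: the only thing to verify carefully is that $\varepsilon_u$ truly appears just once, i.e.\ that none of the $y$-factors carries the index $u$. This is immediate, because every $y$ appearing in the products has subscript $u-1$, $t-1$, or $s-1$, all of which are at most $u-1$, so they depend only on innovations strictly before time $u$. Once this is confirmed, the conditioning argument is identical to the one already used for Lemma~\ref{lem-prop2-step3&5}, and no further computation is needed.
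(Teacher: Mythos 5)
Your proof is correct and is exactly the paper's argument: the paper's proof of this lemma is the single line ``Law of iterated expectations,'' which is precisely the conditioning-on-$\mathcal{F}_{u-1}$ argument you spell out. Since $s<t<u$, every factor other than $\varepsilon_u$ is measurable with respect to the information up to time $u-1$, so $E_{u-1}[\varepsilon_u]=0$ kills each of the four terms.
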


\begin{proof}
Law of iterated expectations
\end{proof}

\begin{lemma}
\label{lem-prop2-step8&9}%
\[
E\left[  \left(  y_{u-1}^{2}-\frac{\sigma^{2}}{1-\theta^{2}}\right)  \left(
y_{s-1}\varepsilon_{s}\right)  \left(  y_{t-1}\varepsilon_{t}\right)  \right]
=\theta^{2\left(  u-t-1\right)  }\frac{4\left(  \sigma^{2}\right)  ^{3}%
}{1-\theta^{2}}\theta^{2\left(  t-s\right)  }%
\]

\end{lemma}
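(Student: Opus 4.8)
The plan is to exploit the Markov structure of the AR(1) process to strip off the dependence on the largest index $u$ first, and then reduce the remaining single-gap expectation using Lemma~\ref{lem-prop1-step2}. Write $c\equiv\sigma^{2}/(1-\theta^{2})$ and let $\mathcal{F}_{k}$ denote the information generated by $y_{0},\varepsilon_{1},\dots,\varepsilon_{k}$. The first step is to establish the geometric-decay identity for the centered square,
\[
E\big[\,y_{k+n}^{2}-c\,\big|\,\mathcal{F}_{k}\big]=\theta^{2n}\big(y_{k}^{2}-c\big),\qquad n\geq0,
\]
which follows by iterating $y_{k+n}=\theta^{n}y_{k}+\sum_{i=k+1}^{k+n}\theta^{k+n-i}\varepsilon_{i}$, squaring, and using that the innovation sum is independent of $\mathcal{F}_{k}$ with variance $\sigma^{2}(1-\theta^{2n})/(1-\theta^{2})$. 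The constant $c$ is chosen precisely so that this variance combines with $-c$ to leave the clean multiplicative factor $\theta^{2n}$; verifying this exact cancellation is the only delicate point of the first step.

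Since $s<t<u$, the product $(y_{s-1}\varepsilon_{s})(y_{t-1}\varepsilon_{t})$ depends only on innovations through time $t$ and is therefore $\mathcal{F}_{t}$-measurable. Applying the identity with $k=t$ and $n=u-1-t$ and then the tower property pulls this product outside the conditional expectation, giving
\[
E\big[\big(y_{u-1}^{2}-c\big)(y_{s-1}\varepsilon_{s})(y_{t-1}\varepsilon_{t})\big]=\theta^{2(u-1-t)}\,E\big[\big(y_{t}^{2}-c\big)(y_{s-1}\varepsilon_{s})(y_{t-1}\varepsilon_{t})\big].
\]
This isolates the factor $\theta^{2(u-t-1)}$ of the claimed answer and reduces the problem to an expectation in which the two surviving gaps are $t-s$ (and a coincident index at $t$).

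For that remaining expectation I would substitute $y_{t}=\theta y_{t-1}+\varepsilon_{t}$, so that $y_{t}^{2}-c=\theta^{2}y_{t-1}^{2}+2\theta y_{t-1}\varepsilon_{t}+\varepsilon_{t}^{2}-c$. Conditioning on $\mathcal{F}_{t-1}$, every term carrying an odd power of $\varepsilon_{t}$ vanishes: the $\theta^{2}y_{t-1}^{2}$ and $-c$ contributions each pick up the single $\varepsilon_{t}$ from $y_{t-1}\varepsilon_{t}$, and the $\varepsilon_{t}^{2}\cdot y_{t-1}\varepsilon_{t}=y_{t-1}\varepsilon_{t}^{3}$ piece has zero mean. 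Only $2\theta y_{t-1}\varepsilon_{t}\cdot y_{t-1}\varepsilon_{t}=2\theta y_{t-1}^{2}\varepsilon_{t}^{2}$ survives, and using $E[\varepsilon_{t}^{2}]=\sigma^{2}$ with the independence of $\varepsilon_{t}$ from $\mathcal{F}_{t-1}$ this leaves $2\theta\sigma^{2}\,E[y_{t-1}^{2}\,y_{s-1}\varepsilon_{s}]$. Lemma~\ref{lem-prop1-step2} with $j=t-1-s$ then evaluates $E[y_{s-1}\varepsilon_{s}\,y_{t-1}^{2}]=2\theta^{2(t-1-s)+1}\sigma^{4}/(1-\theta^{2})$.

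Combining the three factors gives $\theta^{2(u-1-t)}\cdot2\theta\sigma^{2}\cdot2\theta^{2(t-1-s)+1}\sigma^{4}/(1-\theta^{2})$, and collecting powers of $\theta$ through $2\theta\cdot\theta^{2(t-1-s)+1}=2\theta^{2(t-s)}$ yields $\theta^{2(u-t-1)}\,\frac{4(\sigma^{2})^{3}}{1-\theta^{2}}\,\theta^{2(t-s)}$, as claimed. I expect no genuine obstacle: the argument is a double application of the iterated-expectations/martingale-difference idea already used in Lemmas~\ref{lem-prop2-step3&5} and~\ref{lem-prop2-step6&7&10&11}. The two places demanding care are the exact cancellation of the centering constant $c$ in the first step and the bookkeeping of the $\theta$-exponents in the last step, which is where arithmetic slips are most likely.
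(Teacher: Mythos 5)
Your proposal is correct and follows essentially the same route as the paper's proof: you reduce the $u$-dependence via the AR(1) recursion (packaged as the $n$-step conditional-expectation identity $E[y_{k+n}^{2}-c\mid\mathcal{F}_{k}]=\theta^{2n}(y_{k}^{2}-c)$ rather than the paper's explicit one-step iteration), then expand $y_{t}^{2}$ so that only the cross term $2\theta y_{t-1}^{2}\varepsilon_{t}^{2}$ survives, and finish with Lemma~\ref{lem-prop1-step2}. The exponent bookkeeping checks out and the answer matches.
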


\begin{proof}
We first note that%
\[
E\left[  \left(  y_{s-1}\varepsilon_{s}\right)  \left(  y_{t-1}\varepsilon
_{t}\right)  \right]  =0
\]
so it suffices to consider $E\left[  \left(  y_{s-1}\varepsilon_{s}\right)
\left(  y_{t-1}\varepsilon_{t}\right)  y_{u-1}^{2}\right]  $.

We start with $u=t+1$, i.e.,%
\begin{align*}
\left(  y_{s-1}\varepsilon_{s}\right)  \left(  y_{t-1}\varepsilon_{t}\right)
y_{t}^{2}  &  =\left(  y_{s-1}\varepsilon_{s}\right)  \left(  y_{t-1}%
\varepsilon_{t}\right)  \left(  \theta^{2}y_{t-1}^{2}+2\theta y_{t-1}%
\varepsilon_{t}+\varepsilon_{t}^{2}\right) \\
&  =\theta^{2}\left(  y_{s-1}\varepsilon_{s}\right)  y_{t-1}^{3}%
\varepsilon_{t}+2\theta\left(  y_{s-1}\varepsilon_{s}\right)  \left(
y_{t-1}\varepsilon_{t}\right)  ^{2}+\left(  y_{s-1}\varepsilon_{s}\right)
y_{t-1}\varepsilon_{t}^{3}%
\end{align*}
so%
\begin{align*}
E\left[  \left(  y_{s-1}\varepsilon_{s}\right)  \left(  y_{t-1}\varepsilon
_{t}\right)  y_{t}^{2}\right]   &  =2\theta E\left[  \left(  y_{s-1}%
\varepsilon_{s}\right)  y_{t-1}^{2}\varepsilon_{t}^{2}\right]  \\ &=2\theta
E\left[  \left(  y_{s-1}\varepsilon_{s}\right)  y_{t-1}^{2}\right]  \sigma
^{2}=2\theta\frac{2\theta^{2\left(  t-s\right)  -1}\sigma^{4}}{1-\theta^{2}%
}\sigma^{2}\\
&  =\frac{4\left(  \sigma^{2}\right)  ^{3}}{1-\theta^{2}}\theta^{2\left(
t-s\right)  }%
\end{align*}
where we used Lemma \ref{lem-prop1-step2} for the third equality. Repeating
\begin{align*}
E\left[  \left(  y_{s-1}\varepsilon_{s}\right)  \left(  y_{t-1}\varepsilon
_{t}\right)  y_{t+1}^{2}\right]   &  =E\left[  \left(  y_{s-1}\varepsilon
_{s}\right)  \left(  y_{t-1}\varepsilon_{t}\right)  \left(  \theta^{2}%
y_{t}^{2}+2\theta y_{t}\varepsilon_{t+1}+\varepsilon_{t+1}^{2}\right)  \right]
\\
&  =\theta^{2}E\left[  \left(  y_{s-1}\varepsilon_{s}\right)  \left(
y_{t-1}\varepsilon_{t}\right)  y_{t}^{2}\right]  +E\left[  \left(
y_{s-1}\varepsilon_{s}\right)  \left(  y_{t-1}\varepsilon_{t}\right)
\varepsilon_{t+1}^{2}\right] \\
&  =\theta^{2}E\left[  \left(  y_{s-1}\varepsilon_{s}\right)  \left(
y_{t-1}\varepsilon_{t}\right)  y_{t}^{2}\right]  ,
\end{align*}%
\begin{align*}
E\left[  \left(  y_{s-1}\varepsilon_{s}\right)  \left(  y_{t-1}\varepsilon
_{t}\right)  y_{t+2}^{2}\right]   &  =E\left[  \left(  y_{s-1}\varepsilon
_{s}\right)  \left(  y_{t-1}\varepsilon_{t}\right)  \left(  \theta^{2}%
y_{t+1}^{2}+2\theta y_{t+1}\varepsilon_{t+2}+\varepsilon_{t+2}^{2}\right)
\right] \\
&  =\theta^{2}E\left[  \left(  y_{s-1}\varepsilon_{s}\right)  \left(
y_{t-1}\varepsilon_{t}\right)  y_{t+1}^{2}\right]  +E\left[  \left(
y_{s-1}\varepsilon_{s}\right)  \left(  y_{t-1}\varepsilon_{t}\right)
\varepsilon_{t+2}^{2}\right] \\
&  =\theta^{2}E\left[  \left(  y_{s-1}\varepsilon_{s}\right)  \left(
y_{t-1}\varepsilon_{t}\right)  y_{t+1}^{2}\right]  ,
\end{align*}
we obtain
\[
E\left[  \left(  y_{s-1}\varepsilon_{s}\right)  \left(  y_{t-1}\varepsilon
_{t}\right)  y_{t+j}^{2}\right]  =\theta^{2j}\frac{4\left(  \sigma^{2}\right)
^{3}}{1-\theta^{2}}\theta^{2\left(  t-s\right)  }%
\]
or%
\[
E\left[  \left(  y_{s-1}\varepsilon_{s}\right)  \left(  y_{t-1}\varepsilon
_{t}\right)  y_{u-1}^{2}\right]  =\theta^{2\left(  u-t-1\right)  }%
\frac{4\left(  \sigma^{2}\right)  ^{3}}{1-\theta^{2}}\theta^{2\left(
t-s\right)  }%
\]

\end{proof}

\begin{lemma}
\label{lem-prop2-stepIII-summary}%
\begin{align*}
&  \frac{1}{T}\sum_{s<t<u}E\left(
\begin{array}
[c]{c}%
\left(  y_{s-1}^{2}-\frac{\sigma^{2}}{1-\theta^{2}}\right)  \left(
y_{t-1}\varepsilon_{t}\right)  \left(  y_{u-1}\varepsilon_{u}\right)  +\left(
y_{s-1}^{2}-\frac{\sigma^{2}}{1-\theta^{2}}\right)  \left(  y_{u-1}%
\varepsilon_{u}\right)  \left(  y_{t-1}\varepsilon_{t}\right) \\
+\left(  y_{u-1}^{2}-\frac{\sigma^{2}}{1-\theta^{2}}\right)  \left(
y_{s-1}\varepsilon_{s}\right)  \left(  y_{t-1}\varepsilon_{t}\right)  +\left(
y_{u-1}^{2}-\frac{\sigma^{2}}{1-\theta^{2}}\right)  \left(  y_{t-1}%
\varepsilon_{t}\right)  \left(  y_{s-1}\varepsilon_{s}\right) \\
+\left(  y_{t-1}^{2}-\frac{\sigma^{2}}{1-\theta^{2}}\right)  \left(
y_{s-1}\varepsilon_{s}\right)  \left(  y_{u-1}\varepsilon_{u}\right)  +\left(
y_{t-1}^{2}-\frac{\sigma^{2}}{1-\theta^{2}}\right)  \left(  y_{u-1}%
\varepsilon_{u}\right)  \left(  y_{s-1}\varepsilon_{s}\right)
\end{array}
\right) \\
&  =\frac{8\theta^{2}\left(  \sigma^{2}\right)  ^{3}}{\left(  1-\theta
^{2}\right)  ^{3}}+O\left(  \frac{1}{T}\right)
\end{align*}

\end{lemma}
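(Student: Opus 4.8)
The plan is to lean entirely on the two immediately preceding lemmas, so that no fresh moment evaluation is required; the remaining work is to collect the surviving terms and carry out one geometric double sum. Write $a_r \equiv y_{r-1}^2 - \sigma^2/(1-\theta^2)$ for the centered second-moment factor and $g_r \equiv y_{r-1}\varepsilon_r$ for the score factor, so that each of the six summands is a triple product containing one $a$ and two $g$'s. Since the two $g$ factors commute, the six terms collapse into three pairs, $2\,a_s g_t g_u$, $2\,a_u g_s g_t$, and $2\,a_t g_s g_u$, classified by which of $s,t,u$ carries the squared factor.

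First I would discard the two pairs in which the largest index $u$ sits in a score factor, namely $a_s g_t g_u$ and $a_t g_s g_u$. Each of these contains $g_u = y_{u-1}\varepsilon_u$, and since $s<t<u$ the innovation $\varepsilon_u$ is independent of every variable dated $u-1$ or earlier and has mean zero; conditioning on the $\sigma$-field up to time $u-1$ therefore annihilates the expectation. This is precisely the content of Lemma \ref{lem-prop2-step6&7&10&11}, which records $E[a_s g_t g_u] + E[a_s g_u g_t] + E[a_t g_s g_u] + E[a_t g_u g_s] = 0$. Consequently only the pair with the squared factor at the latest index survives, and the sum of the six expectations reduces to $2\,E[a_u g_s g_t]$.

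Next I would substitute Lemma \ref{lem-prop2-step8&9}, which yields $E[a_u g_s g_t] = \tfrac{4(\sigma^2)^3}{1-\theta^2}\,\theta^{2(u-t-1)}\theta^{2(t-s)} = \tfrac{4(\sigma^2)^3}{1-\theta^2}\,\theta^{2(u-s-1)}$, the exponent depending only on the gap $u-s$. It then remains to evaluate
\[
\frac{1}{T}\sum_{s<t<u} 2\cdot\frac{4(\sigma^2)^3}{1-\theta^2}\theta^{2(u-s-1)} = \frac{8(\sigma^2)^3}{1-\theta^2}\cdot\frac{1}{T}\sum_{s<t<u}\theta^{2(u-s-1)}.
\]
Here I would reindex by the gap $d = u-s$: for each $d\in\{2,\dots,T-1\}$ there are $T-d$ admissible pairs $(s,u)$ and exactly $d-1$ intermediate indices $t$, so $\sum_{s<t<u}\theta^{2(u-s-1)} = \sum_{d=2}^{T-1}(T-d)(d-1)\theta^{2(d-1)}$. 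Dividing by $T$ separates this into $\sum_{d\ge 2}(d-1)\theta^{2(d-1)}$ minus $T^{-1}\sum_{d\ge 2}d(d-1)\theta^{2(d-1)}$. With $k=d-1$ the leading series is $\sum_{k\ge 1}k\theta^{2k} = \theta^2/(1-\theta^2)^2$, while the second series converges and so contributes $O(T^{-1})$; multiplying by the prefactor produces $\tfrac{8\theta^2(\sigma^2)^3}{(1-\theta^2)^3} + O(T^{-1})$.

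The argument is routine once the two lemmas are in hand; the only step demanding care is the bookkeeping in the reindexing, where one must correctly identify the multiplicity $(T-d)(d-1)$ of each gap and check that truncating the geometric series at $d=T-1$ together with discarding the $T^{-1}\sum d(d-1)\theta^{2(d-1)}$ term each costs only $O(T^{-1})$. Because $|\theta|<1$ under the stationarity assumption $y_0\sim N(0,\sigma^2/(1-\theta^2))$, all the series involved converge absolutely, so this error control is immediate.
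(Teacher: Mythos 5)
Your proposal is correct and follows essentially the same route as the paper: the same two preceding lemmas kill the four terms containing $\varepsilon_{u}$ and reduce the bracket to $2E\left[\left(y_{u-1}^{2}-\tfrac{\sigma^{2}}{1-\theta^{2}}\right)\left(y_{s-1}\varepsilon_{s}\right)\left(y_{t-1}\varepsilon_{t}\right)\right]$, after which only an elementary geometric triple sum remains. The sole (cosmetic) difference is that you collapse the exponent to $\theta^{2(u-s-1)}$ and reindex by the gap $d=u-s$ with multiplicity $(T-d)(d-1)$, whereas the paper keeps the nested sums over $t-s$ and $u-t$; both yield $\tfrac{8\theta^{2}(\sigma^{2})^{3}}{(1-\theta^{2})^{3}}+O(T^{-1})$.
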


\begin{proof}
From Lemmas \ref{lem-prop2-step6&7&10&11} and \ref{lem-prop2-step8&9}, we see
that the expectation in the lemma above is equal to%
\[
E\left[  2\left(  y_{u-1}^{2}-\frac{\sigma^{2}}{1-\theta^{2}}\right)  \left(
y_{s-1}\varepsilon_{s}\right)  \left(  y_{t-1}\varepsilon_{t}\right)  \right]
=\frac{8\left(  \sigma^{2}\right)  ^{3}}{1-\theta^{2}}\theta^{2\left(
u-t-1\right)  }\theta^{2\left(  t-s\right)  -1}%
\]
so the sum is equal to%
\begin{align*}
\frac{1}{T} &\sum_{s}\sum_{t>s}\sum_{u>t}\frac{8\left(  \sigma^{2}\right)  ^{3}%
}{1-\theta^{2}}\theta^{2\left(  u-t-1\right)  }\theta^{2\left(  t-s\right)  } \\
&  =\frac{1}{T}\sum_{u=3}^{T}\sum_{t=2}^{u-1}\sum_{s=1}^{t-1}\frac{8\left(
\sigma^{2}\right)  ^{3}}{1-\theta^{2}}\theta^{2\left(  u-t-1\right)  }%
\theta^{2\left(  t-s\right)  }\\
&  =\frac{1}{T}\sum_{u=3}^{T}\sum_{t=2}^{u-1}\left(  \frac{8\left(  \sigma
^{2}\right)  ^{3}}{1-\theta^{2}}\theta^{2\left(  u-t-1\right)  }\sum
_{s=1}^{t-1}\theta^{2\left(  t-s\right)  }\right) \\
&  =\frac{1}{T}\sum_{u=3}^{T}\sum_{t=2}^{u-1}\frac{8\left(  \sigma^{2}\right)
^{3}}{1-\theta^{2}}\theta^{2\left(  u-t-1\right)  }\left(  \theta^{2}%
+\cdots+\theta^{2\left(  t-1\right)  }\right) \\
&  =\frac{1}{T}\sum_{u=3}^{T}\sum_{t=2}^{u-1}\frac{8\left(  \sigma^{2}\right)
^{3}}{1-\theta^{2}}\theta^{2\left(  u-t-1\right)  }\theta^{2}\frac{1-\left(
\theta^{2}\right)  ^{t-1}}{1-\theta^{2}}\\
&  =\frac{1}{T}\sum_{u=3}^{T}\sum_{t=2}^{u-1}\frac{8\left(  \sigma^{2}\right)
^{3}}{\left(  1-\theta^{2}\right)  ^{2}}\left(  \theta^{2}\right)
^{u-t}\left(  1-\left(  \theta^{2}\right)  ^{t-1}\right) \\
&  =\frac{1}{T}\sum_{u=3}^{T}\left(  \frac{8\left(  \sigma^{2}\right)  ^{3}%
}{\left(  1-\theta^{2}\right)  ^{2}}\sum_{t=2}^{u-1}\left(  \left(  \theta
^{2}\right)  ^{u-t}-\left(  \theta^{2}\right)  ^{u-1}\right)  \right) \\
&  =\frac{1}{T}\sum_{u=3}^{T}\frac{8\left(  \sigma^{2}\right)  ^{3}}{\left(
1-\theta^{2}\right)  ^{2}}\left(  \left(  \theta^{2}+\cdots+\left(  \theta
^{2}\right)  ^{u-2}\right)  -\left(  u-2\right)  \left(  \theta^{2}\right)
^{u-1}\right) \\
&  =\frac{1}{T}\frac{8\left(  \sigma^{2}\right)  ^{3}}{\left(  1-\theta
^{2}\right)  ^{2}}\sum_{u=3}^{T}\left(  \frac{\theta^{2}\left(  1-\left(
\theta^{2}\right)  ^{u-2}\right)  }{1-\theta^{2}}-\left(  u-2\right)  \left(
\theta^{2}\right)  ^{u-1}\right)
\end{align*}

Because%
\[
\sum_{u=3}^{T}\left(  u-2\right)  \left(  \theta^{2}\right)  ^{u-1}\equiv
A=\Theta^{2}+2\Theta^{3}+3\Theta^{4}+\cdots+\left(  T-2\right)  \Theta^{T-1}%
\]
for $\Theta=\theta^{2}$, we have%
\[
\Theta A=\Theta^{3}+2\Theta^{4}+3\Theta^{5}+\cdots+\left(  T-2\right)
\Theta^{T}%
\]
and%
\begin{align*}
\left(  1-\Theta\right)  A  &  =\Theta^{2}+\Theta^{3}+\Theta^{4}+\cdots
+\Theta^{T-1}-\left(  T-2\right)  \Theta^{T}\\
&  =\frac{\Theta^{2}\left(  1-\Theta^{T-2}\right)  }{1-\Theta}-\left(
T-2\right)  \Theta^{T}\\
&  =\frac{\theta^{4}\left(  1-\left(  \theta^{2}\right)  ^{T-2}\right)
}{1-\theta^{2}}-\left(  T-2\right)  \left(  \theta^{2}\right)  ^{T}%
\end{align*}
and therefore,%
\[
\sum_{u=3}^{T}\left(  u-2\right)  \left(  \theta^{2}\right)  ^{u-1}%
=\frac{\theta^{4}\left(  1-\left(  \theta^{2}\right)  ^{T-2}\right)
}{1-\theta^{2}}-\left(  T-2\right)  \left(  \theta^{2}\right)  ^{T}%
\]

It follows that%
\begin{align*}
&  \frac{1}{T}\sum_{s}\sum_{t>s}\sum_{u>t}\frac{8\left(  \sigma^{2}\right)
^{3}}{1-\theta^{2}}\theta^{2\left(  u-t-1\right)  }\theta^{2\left(
t-s\right)  -1}\\
&  =\frac{1}{T}\frac{8\left(  \sigma^{2}\right)  ^{3}}{\left(  1-\theta
^{2}\right)  ^{2}}\sum_{u=3}^{T}\left(  \frac{\theta^{2}\left(  1-\left(
\theta^{2}\right)  ^{u-2}\right)  }{1-\theta^{2}}-\left(  u-2\right)  \left(
\theta^{2}\right)  ^{u-1}\right) \\
&  =\frac{1}{T}\frac{8\left(  \sigma^{2}\right)  ^{3}}{\left(  1-\theta
^{2}\right)  ^{2}}\frac{\theta^{2}}{1-\theta^{2}}\sum_{u=3}^{T}\left(
1-\left(  \theta^{2}\right)  ^{u-2}\right)  -\frac{1}{T}\frac{8\left(
\sigma^{2}\right)  ^{3}}{\left(  1-\theta^{2}\right)  ^{2}}\sum_{u=3}%
^{T}\left(  u-2\right)  \left(  \theta^{2}\right)  ^{u-1}\\
&  =\frac{1}{T}\frac{8\left(  \sigma^{2}\right)  ^{3}}{\left(  1-\theta
^{2}\right)  ^{2}}\frac{\theta^{2}}{1-\theta^{2}}\left(  T-2-\frac{\theta
^{2}\left(  1-\left(  \theta^{2}\right)  ^{T-2}\right)  }{1-\theta^{2}}\right)
\\
&  -\frac{1}{T}\frac{8\left(  \sigma^{2}\right)  ^{3}}{\left(  1-\theta
^{2}\right)  ^{2}}\left(  \frac{\theta^{4}\left(  1-\left(  \theta^{2}\right)
^{T-2}\right)  }{1-\theta^{2}}-\left(  T-2\right)  \left(  \theta^{2}\right)
^{T}\right) \\
&  =\frac{8\theta^{2}\left(  \sigma^{2}\right)  ^{3}}{\left(  1-\theta
^{2}\right)  ^{3}}+O\left(  \frac{1}{T}\right)
\end{align*}

\end{proof}

\begin{lemma}
\label{lem-var-Y2}%
\[
E\left[  \left(  \frac{1}{\sqrt{T}}\sum_{t=1}^{T}\left(  \frac{y_{t-1}^{2}%
}{\sigma^{2}}-\frac{1}{1-\theta^{2}}\right)  \right)  ^{2}\right]
=\frac{2\left(  1+\theta^{2}\right)  }{\left(  1-\theta^{2}\right)  ^{3}%
}+O\left(  T^{-1}\right)
\]

\end{lemma}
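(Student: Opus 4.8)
The plan is to recognize the quantity as the normalized variance of a partial sum of a strictly stationary process and to reduce it to a sum of autocovariances that can be evaluated in closed form using the cross-moment formula already proved in Lemma \ref{Eys2yt2}. First I would set $w_{t}\equiv y_{t-1}^{2}/\sigma^{2}-1/(1-\theta^{2})$, which has mean zero because $E[y_{t-1}^{2}]=\sigma^{2}/(1-\theta^{2})$ under the stationary initial condition $y_{0}\sim N(0,\sigma^{2}/(1-\theta^{2}))$. Expanding the square gives the double sum
\[
E\left[\left(\frac{1}{\sqrt{T}}\sum_{t=1}^{T}w_{t}\right)^{2}\right]=\frac{1}{T}\sum_{s=1}^{T}\sum_{t=1}^{T}E[w_{s}w_{t}].
\]
Because the chosen initial distribution makes $y_{t}$ strictly stationary, $E[w_{s}w_{t}]$ depends only on the lag $|t-s|$, so I would write $\gamma(j)\equiv E[w_{t}w_{t+j}]$ and reorganize the double sum into $\gamma(0)+2\sum_{j=1}^{T-1}(1-j/T)\gamma(j)$.

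Next I would compute $\gamma(j)$ explicitly. Using $E[w_{s}w_{t}]=\sigma^{-4}E[y_{s-1}^{2}y_{t-1}^{2}]-(1-\theta^{2})^{-2}$ (the cross terms cancel against the squared mean), the diagonal term follows from the fourth moment of a mean-zero normal, $E[y_{t-1}^{4}]=3(\sigma^{2}/(1-\theta^{2}))^{2}$, giving $\gamma(0)=2/(1-\theta^{2})^{2}$. For $j\geq 1$, Lemma \ref{Eys2yt2} gives $E[y_{s}^{2}y_{s+j}^{2}]=(1+2(\theta^{2})^{j})(\sigma^{2}/(1-\theta^{2}))^{2}$, so that $\gamma(j)=2(\theta^{2})^{j}/(1-\theta^{2})^{2}$. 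Conveniently this single formula also reproduces $\gamma(0)$, so the autocovariance sequence is exactly geometric.

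I would then sum the geometric series. The leading term is $\gamma(0)+2\sum_{j=1}^{\infty}\gamma(j)=\frac{2}{(1-\theta^{2})^{2}}\bigl(1+2\sum_{j=1}^{\infty}(\theta^{2})^{j}\bigr)$, and using $\sum_{j=1}^{\infty}(\theta^{2})^{j}=\theta^{2}/(1-\theta^{2})$ the parenthesis simplifies to $(1+\theta^{2})/(1-\theta^{2})$, yielding the target value $\frac{2(1+\theta^{2})}{(1-\theta^{2})^{3}}$. The remaining two discrepancies are the truncated tail $2\sum_{j=T}^{\infty}\gamma(j)$ and the weighting correction $-\frac{2}{T}\sum_{j=1}^{T-1}j\,\gamma(j)$; the former is $O((\theta^{2})^{T})=o(T^{-1})$ by geometric decay, and the latter is $O(T^{-1})$ because $\sum_{j=1}^{\infty}j(\theta^{2})^{j}=\theta^{2}/(1-\theta^{2})^{2}<\infty$. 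Together these give the stated $O(T^{-1})$ error.

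There is no serious obstacle here: the heavy lifting is done by Lemma \ref{Eys2yt2}, and once stationarity collapses the double sum to a geometric autocovariance sequence the result is a one-line summation. The only point requiring minor care is the error term, where I would verify that both the tail and the $j/T$ correction are controlled by the finiteness of $\sum_{j}j(\theta^{2})^{j}$, which is where I expect the mild bookkeeping rather than any genuine difficulty.
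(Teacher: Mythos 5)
Your proposal is correct and follows essentially the same route as the paper: both split the double sum into the diagonal term (evaluated via the normal fourth moment) and the off-diagonal terms (evaluated via Lemma \ref{Eys2yt2}, giving the geometric autocovariance $\gamma(j)=2(\theta^{2})^{j}/(1-\theta^{2})^{2}$), and then sum the geometric series to obtain $2(1+\theta^{2})/(1-\theta^{2})^{3}$ with an $O(T^{-1})$ remainder. Your reorganization into the weighted form $\gamma(0)+2\sum_{j=1}^{T-1}(1-j/T)\gamma(j)$ is just a cleaner bookkeeping of the same finite sums the paper computes explicitly, and your control of the error via $\sum_{j}j(\theta^{2})^{j}<\infty$ matches the paper's explicit evaluation.
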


\begin{proof}
We have%
\begin{align*}
\left(  \frac{1}{\sqrt{T}}\sum_{t=1}^{T}\left(  \frac{y_{t-1}^{2}}{\sigma^{2}%
}-\frac{1}{1-\theta^{2}}\right)  \right)  ^{2}  &  =\frac{1}{T}\sum_{t=1}%
^{T}\left(  \frac{y_{t-1}^{2}}{\sigma^{2}}-\frac{1}{1-\theta^{2}}\right)
^{2}\\
&  +2\sum_{s=1}^{T-1}\sum_{t=s+1}^{T}\left(  \frac{y_{s-1}^{2}}{\sigma^{2}%
}-\frac{1}{1-\theta^{2}}\right)  \left(  \frac{y_{t-1}^{2}}{\sigma^{2}}%
-\frac{1}{1-\theta^{2}}\right)
\end{align*}
Note that%
\begin{align*}
E\left[  \left(  \frac{y_{t-1}^{2}}{\sigma^{2}}-\frac{1}{1-\theta^{2}}\right)
^{2}\right]   &  =\frac{1}{\left(  \sigma^{2}\right)  ^{2}}E\left[  \left(
y_{t-1}^{2}-E\left[  y_{t-1}^{2}\right]  \right)  ^{2}\right] \\
&  =\frac{1}{\left(  \sigma^{2}\right)  ^{2}}\left(  E\left[  y_{t-1}%
^{4}\right]  -\left(  E\left[  y_{t-1}^{2}\right]  \right)  ^{2}\right) \\
&  =\frac{1}{\left(  \sigma^{2}\right)  ^{2}}\left(  3\left(  \frac{\sigma
^{2}}{1-\theta^{2}}\right)  ^{2}-\left(  \frac{\sigma^{2}}{1-\theta^{2}%
}\right)  ^{2}\right) \\
&  =\frac{2}{\left(  1-\theta^{2}\right)  ^{2}}%
\end{align*}
Also note that%
\begin{align*}
E&\left[  \left(  \frac{y_{s-1}^{2}}{\sigma^{2}}-\frac{1}{1-\theta^{2}}\right)
\left(  \frac{y_{t-1}^{2}}{\sigma^{2}}-\frac{1}{1-\theta^{2}}\right)  \right] \\
&  =\frac{1}{\left(  \sigma^{2}\right)  ^{2}}\left(  E\left[  y_{s-1}%
^{2}y_{t-1}^{2}\right]  -\left(  \frac{\sigma^{2}}{1-\theta^{2}}\right)
^{2}\right) \\
&  =\frac{1}{\left(  \sigma^{2}\right)  ^{2}}\left(  \left(  1+2\left(
\theta^{2}\right)  ^{t-s}\right)  \left(  \frac{\sigma^{2}}{1-\theta^{2}%
}\right)  ^{2}-\left(  \frac{\sigma^{2}}{1-\theta^{2}}\right)  ^{2}\right) \\
&  =\frac{2\left(  \theta^{2}\right)  ^{t-s}}{\left(  1-\theta^{2}\right)
^{2}}%
\end{align*}
using Lemma \ref{Eys2yt2} for the second equality. So%
\begin{align*}
\sum_{t=s+1}^{T}E\left[  \left(  \frac{y_{s-1}^{2}}{\sigma^{2}}-\frac
{1}{1-\theta^{2}}\right)  \left(  \frac{y_{t-1}^{2}}{\sigma^{2}}-\frac
{1}{1-\theta^{2}}\right)  \right]   &  =\frac{2}{\left(  1-\theta^{2}\right)
^{2}}\sum_{t=s+1}^{T}\left(  \theta^{2}\right)  ^{t-s}\\
&  =\frac{2}{\left(  1-\theta^{2}\right)  ^{2}}\left(  \theta^{2}%
+\cdots+\left(  \theta^{2}\right)  ^{T-s}\right) \\
&  =\frac{2\theta^{2}}{\left(  1-\theta^{2}\right)  ^{3}}\left(  1-\left(
\theta^{2}\right)  ^{T-s}\right)
\end{align*}
implying that%
\begin{align*}
&  \sum_{s=1}^{T-1}\sum_{t=s+1}^{T}E\left[  \left(  \frac{y_{s-1}^{2}}%
{\sigma^{2}}-\frac{1}{1-\theta^{2}}\right)  \left(  \frac{y_{t-1}^{2}}%
{\sigma^{2}}-\frac{1}{1-\theta^{2}}\right)  \right] \\
&  =\frac{2\theta^{2}}{\left(  1-\theta^{2}\right)  ^{3}}\sum_{s=1}%
^{T-1}\left(  1-\left(  \theta^{2}\right)  ^{T-s}\right) \\
&  =\frac{2\theta^{2}}{\left(  1-\theta^{2}\right)  ^{3}}\left(  T-2-\left(
\theta^{2}+\cdots+\left(  \theta^{2}\right)  ^{T-1}\right)  \right) \\
&  =\frac{2\theta^{2}}{\left(  1-\theta^{2}\right)  ^{3}}\left(
T-2-\frac{\theta^{2}\left(  1-\left(  \theta^{2}\right)  ^{T-1}\right)
}{1-\theta^{2}}\right)
\end{align*}
To conclude,%
\begin{align*}
&  E\left[  \left(  \frac{1}{\sqrt{T}}\sum_{t=1}^{T}\left(  \frac{y_{t-1}^{2}%
}{\sigma^{2}}-\frac{1}{1-\theta^{2}}\right)  \right)  ^{2}\right] \\
&  =\frac{1}{T}\sum_{t=1}^{T}\frac{2}{\left(  1-\theta^{2}\right)  ^{2}}\\
&  +\frac{2}{T}\frac{2\theta^{2}}{\left(  1-\theta^{2}\right)  ^{3}}\left(
T-2-\frac{\theta^{2}\left(  1-\left(  \theta^{2}\right)  ^{T-1}\right)
}{1-\theta^{2}}\right) \\
&  =\frac{2}{\left(  1-\theta^{2}\right)  ^{2}}+\frac{4\theta^{2}}{\left(
1-\theta^{2}\right)  ^{3}}+O\left(  T^{-1}\right)
\end{align*}

\end{proof}

\section{Details for the Proof of Lemma \ref{var-intuition}%
\label{sec-var-intuition}}%

\begin{lemma}
\label{Chanda}For the random sequence $\left\{  \varepsilon_{t}\right\}
_{t=-\infty}^{\infty}$ defined in (\ref{AR1}) let $\xi_{t}=\sum_{j=0}^{\infty
}\theta^{j}\varepsilon_{t-j}$ for $t=0,\pm1,\pm2,...$ Define $\mathcal{M}%
_{a}^{b}=\sigma\left(  \xi_{a},...\xi_{b}\right)  .$ Then, for all
$t\in\left\{  1,...,T\right\}  $, $y_{t}=\xi_{t},$ and for $\lambda\in\left(
0,1\right)  $ and for all sets $A,B$ such that $A\in\mathcal{M}_{-\infty}%
^{t},$\textbf{ }$B\in\mathcal{M}_{t+k}^{\infty}$ for $t=0,\pm1,\pm
2,...$\textbf{ }it follows that
\begin{equation}
\left\vert P\left(  AB\right)  -P\left(  A\right)  P\left(  B\right)
\right\vert \leq M\alpha\left(  k\right)  \label{alpha-mix}%
\end{equation}
\textbf{ }where $M$ is a positive and finite constant and $\alpha\left(
k\right)  =\sum_{j=k}^{\infty}\left\vert j\right\vert \left\vert \theta
^{j}\right\vert ^{\lambda}$.
\end{lemma}

\begin{proof}
First note that by (\ref{AR1}) $y_{t}=\sum_{j=0}^{t-1}\theta^{j}%
\varepsilon_{t-j}+\theta^{t}y_{0}.$ Since $y_{0}=\sum_{l=0}^{\infty}\theta
^{l}\varepsilon_{-l}$ it follows that $y_{t}=\sum_{j=0}^{t-1}\theta
^{j}\varepsilon_{t-j}+\sum_{l=0}^{\infty}\theta^{t+l}\varepsilon_{-l}.$ By a
change of variables $u=l+t$ it follows that $\sum_{l=0}^{\infty}\theta
^{t+l}\varepsilon_{-l}=\sum_{u=t}^{\infty}\theta^{u}\varepsilon_{t-u}$ such
that $\xi_{t}=y_{t}$ as claimed. Next note that $\xi_{t}$ is striclty
stationary. This means that $P\left(  AB\right)  $ and $P\left(  A\right)
P\left(  B\right)  $ are invariant to translations $t$ of $\mathcal{M}%
_{t+a}^{t+b}=\sigma\left(  \xi_{t+a},...\xi_{t+b}\right)  $. Therefore choose
$t=0$ in $\mathcal{M}_{-\infty}^{t}$ and $\mathcal{M}_{t+k}^{\infty}.$ To
apply Chanda (1974, Theorem 2.1), note that $E\left[  \left\vert
\varepsilon_{t}\right\vert ^{\delta}\right]  <\infty$ for any $\delta>0$ and
$\delta$ bounded. By the discussion of Chanda (1974, p.402) it follows that
the Gaussian distribution satisfies the integrability condition for the
characteristc function. Finally, for any $\lambda>0,$ we have $\sum
_{j=0}^{\infty}j\left(  \theta^{j}\right)  ^{\lambda}<\infty$ for $\theta
\in\left(  -1,1\right)  $. Then, by Chanda (1974), Theorem 2.1 it follows that
(\ref{alpha-mix}) holds.
\end{proof}

The next result establishes summability conditions for covariances and higher
oder cumulants related to our stochastic expansions. For this define
\begin{equation}
Y_{1,t}=\left(  1-\theta^{2}\right)  \frac{y_{t-1}\varepsilon_{t}}{\sigma^{2}}
\label{Def_Y1t}%
\end{equation}
and%
\begin{equation}
Y_{2,t}=-\left(  1-\theta^{2}\right)  \left(  \frac{y_{t-1}^{2}}{\sigma^{2}%
}-\frac{1}{1-\theta^{2}}\right)  \label{Def_Y2t}%
\end{equation}
Let $\gamma_{a,b}\left(  t_{1},t_{2}\right)  =\operatorname*{Cov}\left(
Y_{a,t_{1}},Y_{b,t_{2}}\right)  $ for $a,b\in\left\{  1,2\right\}  .$ Let
$\gamma_{a_{1}a_{2},a_{3}}\left(  t_{1},t_{2},t_{3}\right)
=\operatorname*{Cov}\left(  Y_{a_{1},t_{1}}Y_{a_{2},t_{2}},Y_{a_{3},t_{3}%
}\right)  $ be a third order moment of $Y_{a_{1},t_{1}}Y_{a_{2},t_{2}}%
Y_{a_{3},t_{3}}$ for $a_{j}\in\left\{  1,2\right\}  $ and $j=1,...,3.$ Let
$cum_{a_{1},a_{2},a_{3},a_{4}}\left(  t_{1},t_{2},t_{3},t_{4}\right)  $ be the
fourth order cumulant of $Y_{a_{1},t_{1}}Y_{a_{2},t_{2}}Y_{a_{3},t_{3}%
}Y_{a_{4},t_{4}}$ for $a_{j}\in\left\{  1,2\right\}  $ and $j=1,...,4.$ Then,
the following summability conditions hold.

\begin{lemma}
\label{cumsum}Assume that $y_{t}$ satisfies (\ref{AR1}) and that $Y_{1,t}$ and
$Y_{2,t}$ are as defined in (\ref{Def_Y1t}) and (\ref{Def_Y2t}). Then,
\newline i) $\sum_{u=-\infty}^{\infty}\left\vert u\right\vert \sup
_{t}\left\vert \gamma_{a,b}\left(  t,t+u\right)  \right\vert \leq K_{1}%
<\infty.$\newline ii) $\sum_{u=-\infty}^{\infty}\left\vert u\right\vert
\sup_{t}\left\vert \gamma_{a_{1}a_{2},a_{3}}\left(  t,t,t+u\right)
\right\vert \leq K_{2}<\infty,$\newline iii) $\sum_{u=1}^{\infty}u\sup
_{t\leq0}\left\vert \gamma_{a_{1}a_{2},a_{3}}\left(  t,0,u\right)  \right\vert
\leq K_{2}<\infty,$\newline iv) $\sum_{u=1}^{\infty}u\sup_{t\leq0}\left\vert
\gamma_{a_{1}a_{2},a_{3}}\left(  0,t,u\right)  \right\vert \leq K_{2}<\infty
,$\newline v) $\sum_{u=1}^{\infty}u\sup_{t>0}\left\vert \gamma_{a_{1}%
a_{2},a_{3}}\left(  u,t+u,0\right)  \right\vert \leq K_{2}<\infty,$\newline
vi) $\sum_{t_{2},t_{3},t_{4}=-\infty}^{\infty}\sup_{t_{1}}\left\vert
cum_{a_{1},a_{2},a_{3},a_{4}}\left(  t_{1},t_{2},t_{3},t_{4}\right)
\right\vert \leq K_{3}<\infty$\newline vii) $\sum_{T>\left\vert v\right\vert
\geq T/2}\left\vert \gamma_{a,b}\left(  0,v\right)  \right\vert =o\left(
T^{-1}\right)  .$
\end{lemma}

\begin{proof}
For the first result note that $Y_{1,t}$ and $Y_{2,t}$ are both strictly
stationary. Thus, $\gamma_{a,b}\left(  t_{1},t_{2}\right)  =\gamma
_{a,b}\left(  ,t_{2}-t_{1}\right)  $ and thus $\sup_{t}\left\vert \gamma
_{a,b}\left(  t,t+u\right)  \right\vert =\left\vert \gamma_{a,b}\left(
0,u\right)  \right\vert .$ Because $\varepsilon_{t}=y_{t}-\theta y_{t-1}$ for
$t\in\left\{  1,...,T\right\}  $ it follows that $Y_{1,t}$ is measurable
w.r.t. $\mathcal{M}_{-\infty}^{t}$ and $\mathcal{M}_{t+1}^{\infty}$ and
$Y_{t,2}$ is measurable w.r.t $\mathcal{M}_{-\infty}^{t}$ and $\mathcal{M}%
_{t}^{\infty}$. By Doukhan (1994, Theorem 3, p.9), it follows that
\begin{equation}
\left\vert \gamma_{a,b}\left(  t_{1},t_{2}\right)  \right\vert \leq
8\alpha\left(  t_{1}-t_{2}-1\right)  ^{1/2}E\left[  \left\vert Y_{a,t_{1}%
}\right\vert ^{4}\right]  ^{1/4}E\left[  \left\vert Y_{b,t_{2}}\right\vert
^{4}\right]  ^{1/4}, \label{Cov_mix_bound}%
\end{equation}
and where the moments on the RHS are bounded by the fact that $y_{t}$ is
jointly Gaussian. Then,
\begin{align*}
\sum_{u=-\infty}^{\infty}\left\vert u\right\vert \sup_{t}\left\vert
\gamma_{a,b}\left(  t,t+u\right)  \right\vert  &  =\sum_{u=-\infty}^{\infty
}\left\vert u\right\vert \left\vert \gamma_{a,b}\left(  0,u\right)
\right\vert \\
&  \leq8E\left[  \left\vert Y_{a,t_{1}}\right\vert ^{4}\right]  ^{1/4}E\left[
\left\vert Y_{b,t_{2}}\right\vert ^{4}\right]  ^{1/4}\sum_{u=-\infty}^{\infty
}\left\vert u\right\vert \alpha\left(  u-1\right)  ^{1/2}.
\end{align*}
Note that
\[
\sum_{u=1}^{\infty}u\alpha\left(  u-1\right)  ^{1/2}\leq\left(  \sum
_{u=1}^{\infty}u^{-2}\right)  ^{1/2}\left(  \sum_{u=1}^{\infty}u^{4}%
\alpha\left(  u-1\right)  \right)  ^{1/2}=\frac{\pi}{\sqrt{6}}\left(
\sum_{u=1}^{\infty}u^{4}\alpha\left(  u-1\right)  \right)  ^{1/2}%
\]
by the Cauchy-Schwartz inequality. In addition
\[
\sum_{u=1}^{\infty}u^{4}\alpha\left(  u-1\right)  =\sum_{v=0}^{\infty}\left(
v+1\right)  ^{4}\alpha\left(  v\right)  \leq16\sum_{v=1}^{\infty}v^{4}%
\alpha\left(  v\right)  +\alpha\left(  0\right)  <\infty,
\]
where the fact that the expression is bounded holds for $\theta\in\left(
-1,1\right)  $ can be checked by direct calculation using the result in Lemma
\ref{Chanda}. It follows that $\sum_{u=-\infty}^{\infty}\left\vert
u\right\vert \sup_{t}\left\vert \gamma_{a,b}\left(  t,t+u\right)  \right\vert
<\infty$.

For (ii) consider $\operatorname*{Cov}\left(  Y_{a_{1},t_{1}}Y_{a_{2}t_{1}%
},Y_{a_{3},t_{3}}\right)  =\gamma_{11,2}\left(  t_{1},t_{1},t_{3}\right)  $
where $Y_{a_{1},t_{1}}Y_{a_{2}t_{1}}$ is measurable $\mathcal{M}_{-\infty
}^{t_{1}}$ and $Y_{a_{3},t_{3}}$ is measurable $\mathcal{M}_{t_{3}-1}^{\infty
}.$ Jensen's inequality yields $E\left[  \left\vert Y_{a,t_{1}}\right\vert
^{4}\right]  ^{1/4}\leq E\left[  \left\vert Y_{a,t_{1}}\right\vert
^{8}\right]  ^{1/8}\leq K^{1/8}$ for a $K$ such that $E\left[  \left\vert
Y_{a,t}\right\vert ^{8}\right]  \leq K<\infty$ for $a\in\left\{  1,2\right\}
$ and all $t$. Note that such a $K$ exists because $y_{t}$ is Gaussian and
strictly stationary. Then,
\begin{align*}
E\left[  \left\vert Y_{a_{1},t_{1}}Y_{a_{2}t_{1}}\right\vert ^{4}\right]
^{1/4}E\left[  \left\vert Y_{a_{3},t_{3}}\right\vert ^{4}\right]  ^{1/4}  &
\leq E\left[  \left\vert Y_{a_{1},t_{1}}\right\vert ^{8}\right]
^{1/8}E\left[  \left\vert Y_{a_{2},t_{1}}\right\vert ^{8}\right]
^{1/8}E\left[  \left\vert Y_{a_{3},t_{3}}\right\vert ^{8}\right]  ^{1/8}\\
&  \leq K^{3/8}.
\end{align*}
Then, it follows from Doukhan (1994, Theorem 3, p.9) and strict stationarity
that
\[
\left\vert \gamma_{a_{1}a_{2},a_{3}}\left(  t_{1},t_{1},t_{3}\right)
\right\vert =\left\vert \gamma_{a_{1}a_{2},a_{3}}\left(  0,0,t_{3}%
-t_{1}\right)  \right\vert \leq8K^{3/8}\alpha\left(  t_{3}-t_{1}-1\right)  .
\]
Then, using strict stationarity again such that $\sup_{t}\left\vert
\gamma_{a_{1}a_{2},a_{3}}\left(  t,t,t+u\right)  \right\vert =\gamma
_{a_{1}a_{2},a_{3}}\left(  0,0,u\right)  $%
\[
\sum_{u=-\infty}^{\infty}\left\vert u\right\vert \sup_{t}\left\vert
\gamma_{a_{1}a_{2},a_{3}}\left(  t,t,t+u\right)  \right\vert \leq8K^{3/8}%
\sum_{u=-\infty}^{\infty}\left\vert u\right\vert \alpha\left(  u-1\right)
\leq K_{2}<\infty,
\]
where the summability of $\alpha\left(  u\right)  $ follows from Lemma
\ref{Chanda}.

For (iii) consider $\operatorname*{Cov}\left(  Y_{a_{1},t}Y_{a_{2},0}%
,Y_{a_{3},u}\right)  =\gamma_{a_{1}a_{2},a_{3}}\left(  t,0,u\right)  $ with
$t\leq0,$ $u>0.$ Then, $Y_{a_{1},t}Y_{a_{2},0}$ is measurable $\mathcal{M}%
_{-\infty}^{0}$ and $Y_{a_{3},u}$ is measurable $\mathcal{M}_{u-1}^{\infty}.$
It then follows from Doukhan (1994, Theorem 3, p.9) that
\[
\left\vert \gamma_{a_{1}a_{2},a_{3}}\left(  t,0,u\right)  \right\vert
\leq8K^{3/8}\alpha\left(  u-1\right)
\]
such that
\[
\sum_{u=1}^{\infty}u\sup_{t\leq0}\left\vert \gamma_{a_{1}a_{2},a_{3}}\left(
t,0,u\right)  \right\vert \leq8K^{3/8}\sum_{u=1}^{\infty}u\alpha\left(
u-1\right)  \leq K_{2}<\infty,
\]
where the upper bound for the sum over $u\alpha\left(  u-1\right)  $ follows
from Lemma \ref{Chanda}.

For (iv) consider $\operatorname*{Cov}\left(  Y_{a_{1},0}Y_{a_{2},t}%
,Y_{a_{3},u}\right)  =\gamma_{a_{1}a_{2},a_{3}}\left(  0,t,u\right)  $ for
$t\leq0,$ $u>0.$ Then the proof proceeds in exactly the same way as for (iii).

For (v) consider $\operatorname*{Cov}\left(  Y_{a_{1},u}Y_{a_{2},t+u}%
,Y_{a_{3},0}\right)  =\gamma_{a_{1}a_{2},a_{3}}\left(  u,t+u,0\right)  $ for
$t>0,$ $u>0$. Then, $Y_{a_{1},u}Y_{a_{2},t+u}$ is measurable w.r.t.
$\mathcal{M}_{u-1}^{\infty}$ and $Y_{a_{3},0}$ is measurable w.r.t.
$\mathcal{M}_{-\infty}^{0}$. Then, it follows from Doukhan (1994, Theorem 3,
p.9) that
\[
\left\vert \gamma_{a_{1}a_{2},a_{3}}\left(  u,t+u,0\right)  \right\vert
\leq8K^{3/8}\alpha\left(  u-1\right)
\]
and
\[
\sum_{u=1}^{\infty}\sup_{t>0}\left\vert \gamma_{a_{1}a_{2},a_{3}}\left(
u,t+u,0\right)  \right\vert \leq8K^{3/8}\sum_{u=1}^{\infty}u\alpha\left(
u-1\right)  \leq K_{2}<\infty,
\]
where the sum over $u\alpha\left(  u-1\right)  $ is finite by Lemma
\ref{Chanda}.

For the result in (vi) verify that the conditions of Lemma 1 of Andrews (1991)
hold. For some $\nu>1$ we have
\[
E\left[  \left\vert Y_{1,t}\right\vert ^{8\nu}\right]  =\sigma^{-16\nu
}E\left[  \left\vert y_{t-1}\varepsilon_{t}\right\vert ^{8\nu}\right]
\leq\sigma^{-16\nu}E\left[  \left\vert y_{t-1}\right\vert ^{16\nu}\right]
^{1/2}E\left[  \left\vert \varepsilon_{t}\right\vert ^{16\nu}\right]
^{1/2}<\infty
\]
where the inequality is the Cauchy-Schwartz inequality and the moments are
bounded because both $y_{t-1}$ and $\varepsilon_{t}$ are Gaussian. Also,
\[
E\left[  \left\vert Y_{2,t}\right\vert ^{8\nu}\right]  =\frac{1}{\left(
1-\theta^{2}\right)  ^{8\nu}}E\left[  \left\vert y_{t-1}^{2}\left(
1-\theta^{2}\right)  \sigma^{-2}-1\right\vert ^{8\nu}\right]  .
\]
Letting $\xi=y_{t-1}^{2}\left(  1-\theta^{2}\right)  \sigma^{-2}$ and noting
that $\xi$ is $\chi_{1}^{2}$ distributed it follows that
\[
E\left[  \left\vert Y_{2,t}\right\vert ^{8\nu}\right]  \leq\frac{1}{\left(
1-\theta^{2}\right)  ^{8\nu}}E\left[  \left(  \xi+1\right)  ^{8\nu}\right]
\leq\frac{2^{8\nu-1}}{\left(  1-\theta^{2}\right)  ^{8\nu}}\left(  E\left[
\xi^{8\nu}\right]  ^{1/8\nu}+1\right)  ^{8\nu}<\infty
\]
by the Minkowski inequality and where the bound holds by the fact that the
finite moments of the $\chi_{1}^{2}$ distribution are bounded. The second
condition of Lemma 1 in Andrews is that $\sum_{j=1}^{\infty}j^{2}\alpha\left(
j-1\right)  ^{\left(  \nu-1\right)  /\nu}<\infty\ $which holds after
subsitution of the expression for the $\alpha$ coefficient from Lemma
\ref{Chanda} as long as $\left\vert \theta\right\vert <1.$ Note that we need
to adjust the $\alpha$-mixing term relative to Andrews (1991) because
$Y_{1,t}$ is measurable w.r.t to $\mathcal{M}_{t-1}^{\infty}$ but not
$\mathcal{M}_{t}^{\infty}.$

For (vii) use the bound in (\ref{Cov_mix_bound}) such that
\[
\sum_{T>\left\vert v\right\vert \geq T/2}\left\vert \gamma_{a,b}\left(
0,v\right)  \right\vert \leq8K^{1/4}\sum_{T>\left\vert v\right\vert \geq
T/2}\alpha\left(  v\right)  ^{1/2}\leq16K^{1/4}\sum_{v=T/2}^{\infty}%
\alpha\left(  v\right)  ^{1/2}=o\left(  T^{-1}\right)
\]
where the second inequality uses $\alpha\left(  v\right)  =\alpha\left(
-v\right)  $ which holds by to strict stationarity. Finally, $\sum
_{v=T/2}^{\infty}\alpha\left(  v\right)  =o\left(  T^{-1}\right)  $ follows
from Lemma \ref{Chanda} and the fact that
\[
\alpha\left(  k\right)  =\sum_{j=k}^{\infty}j\left\vert \theta\right\vert
^{\lambda j}=\left(  1-\left\vert \theta\right\vert ^{\lambda}\right)
^{-2}\left(  \left\vert \theta\right\vert ^{\lambda}\right)  ^{k}\left(
k+\left\vert \theta\right\vert ^{\lambda}\left(  1-k\right)  \right)  .
\]

\end{proof}

We now turn to the analysis of terms that appear in the split sample
estimator. Let
\begin{align*}
\mathcal{Y}_{1}^{\left(  1\right)  }  &  =\frac{1}{\sqrt{T/2}}\sum_{t=1}%
^{T/2}Y_{1,t},\quad\mathcal{Y}_{1}^{\left(  2\right)  }=\frac{1}{\sqrt{T/2}%
}\sum_{t=T/2+1}^{T}Y_{1,t},\\
\mathcal{Y}_{2}^{\left(  1\right)  }  &  =\frac{1}{\sqrt{T/2}}\sum_{t=1}%
^{T/2}Y_{2,t},\quad\mathcal{Y}_{2}^{\left(  2\right)  }=\frac{1}{\sqrt{T/2}%
}\sum_{t=T/2+1}^{T}Y_{2,t},
\end{align*}
and write%
\[
\operatorname{Var}\left(  \mathcal{Y}_{1}\mathcal{Y}_{2}\right)  =E\left[
\mathcal{Y}_{1}^{2}\mathcal{Y}_{2}^{2}\right]  -\left(  E\left[
\mathcal{Y}_{1}\mathcal{Y}_{2}\right]  \right)  ^{2}%
\]
Further write%
\begin{align}
&  E\left[  \mathcal{Y}_{1}^{2}\mathcal{Y}_{2}^{2}\right] \nonumber\\
&  =E\left[  \left(  \frac{1}{\sqrt{2}}\left(  \mathcal{Y}_{1}^{\left(
1\right)  }+\mathcal{Y}_{1}^{\left(  2\right)  }\right)  \right)  ^{2}\left(
\frac{1}{\sqrt{2}}\left(  \mathcal{Y}_{2}^{\left(  1\right)  }+\mathcal{Y}%
_{2}^{\left(  2\right)  }\right)  \right)  ^{2}\right] \nonumber\\
&  =\frac{1}{4}E\left[  \left(  \left(  \mathcal{Y}_{1}^{\left(  1\right)
}\right)  ^{2}+2\mathcal{Y}_{1}^{\left(  1\right)  }\mathcal{Y}_{1}^{\left(
2\right)  }+\left(  \mathcal{Y}_{1}^{\left(  2\right)  }\right)  ^{2}\right)
\left(  \left(  \mathcal{Y}_{2}^{\left(  1\right)  }\right)  ^{2}%
+2\mathcal{Y}_{2}^{\left(  1\right)  }\mathcal{Y}_{2}^{\left(  2\right)
}+\left(  \mathcal{Y}_{2}^{\left(  2\right)  }\right)  ^{2}\right)  \right]
\nonumber\\
&  =\frac{1}{4}\left(
\begin{array}
[c]{c}%
E\left[  \left(  \mathcal{Y}_{1}^{\left(  1\right)  }\right)  ^{2}\left(
\mathcal{Y}_{2}^{\left(  1\right)  }\right)  ^{2}\right]  +E\left[  \left(
\mathcal{Y}_{1}^{\left(  2\right)  }\right)  ^{2}\left(  \mathcal{Y}%
_{2}^{\left(  1\right)  }\right)  ^{2}\right] \\
+4E\left[  \mathcal{Y}_{1}^{\left(  1\right)  }\mathcal{Y}_{1}^{\left(
2\right)  }\mathcal{Y}_{2}^{\left(  1\right)  }\mathcal{Y}_{2}^{\left(
2\right)  }\right] \\
+E\left[  \left(  \mathcal{Y}_{1}^{\left(  1\right)  }\right)  ^{2}\left(
\mathcal{Y}_{2}^{\left(  2\right)  }\right)  ^{2}\right]  +E\left[  \left(
\mathcal{Y}_{1}^{\left(  2\right)  }\right)  ^{2}\left(  \mathcal{Y}%
_{2}^{\left(  2\right)  }\right)  ^{2}\right]
\end{array}
\right) \nonumber\\
&  +\frac{1}{4}\left(
\begin{array}
[c]{c}%
2E\left[  \left(  \mathcal{Y}_{1}^{\left(  1\right)  }\right)  ^{2}%
\mathcal{Y}_{2}^{\left(  1\right)  }\mathcal{Y}_{2}^{\left(  2\right)
}\right] \\
+2E\left[  \mathcal{Y}_{1}^{\left(  1\right)  }\mathcal{Y}_{1}^{\left(
2\right)  }\left(  \mathcal{Y}_{2}^{\left(  1\right)  }\right)  ^{2}\right]
+2E\left[  \left(  \mathcal{Y}_{1}^{\left(  2\right)  }\right)  ^{2}%
\mathcal{Y}_{2}^{\left(  1\right)  }\mathcal{Y}_{2}^{\left(  2\right)
}\right] \\
+2E\left[  \mathcal{Y}_{1}^{\left(  1\right)  }\mathcal{Y}_{1}^{\left(
2\right)  }\left(  \mathcal{Y}_{2}^{\left(  2\right)  }\right)  ^{2}\right]
\end{array}
\right)  , \label{eq-AR1-split}%
\end{align}
where we note that the sum of the terms in the second round bracket on the far
RHS are zero under the IID setting. Under fairly general dependence conditions
it follows from Equation (\ref{eq-AR1-split}) and Lemma \ref{lem-AR1-4-terms},
that%
\begin{equation}
E\left[  \mathcal{Y}_{1}^{2}\mathcal{Y}_{2}^{2}\right]  =\frac{1}{4}\left(
\begin{array}
[c]{c}%
E\left[  \left(  \mathcal{Y}_{1}^{\left(  1\right)  }\right)  ^{2}\left(
\mathcal{Y}_{2}^{\left(  1\right)  }\right)  ^{2}\right]  +E\left[  \left(
\mathcal{Y}_{1}^{\left(  2\right)  }\right)  ^{2}\left(  \mathcal{Y}%
_{2}^{\left(  1\right)  }\right)  ^{2}\right] \\
+4E\left[  \mathcal{Y}_{1}^{\left(  1\right)  }\mathcal{Y}_{2}^{\left(
1\right)  }\mathcal{Y}_{1}^{\left(  2\right)  }\mathcal{Y}_{2}^{\left(
2\right)  }\right] \\
+E\left[  \left(  \mathcal{Y}_{1}^{\left(  1\right)  }\right)  ^{2}\left(
\mathcal{Y}_{2}^{\left(  2\right)  }\right)  ^{2}\right]  +E\left[  \left(
\mathcal{Y}_{1}^{\left(  2\right)  }\right)  ^{2}\left(  \mathcal{Y}%
_{2}^{\left(  2\right)  }\right)  ^{2}\right]
\end{array}
\right)  +o\left(  1\right)  . \label{eq-AR1-split-alt}%
\end{equation}
The conditions of Lemma \ref{lem-AR1-4-terms} are satisfied by the stationary
AR(1) model as showin in Lemmas \ref{Chanda} and \ref{cumsum}.
Furthermore\textbf{ }%
\[
E\left[  \mathcal{Y}_{1}^{2}\mathcal{Y}_{2}^{2}\right]  =\frac{1}{2}E\left[
\mathcal{Y}_{1}^{2}\mathcal{Y}_{2}^{2}\right]  +\frac{1}{2}E\left[
\mathcal{Y}_{1}^{2}\right]  E\left[  \mathcal{Y}_{2}^{2}\right]  +E\left[
\mathcal{Y}_{1}\mathcal{Y}_{2}\right]  ^{2}+o\left(  1\right)
\]
by Lemma \ref{lem-y1y2_T/2} and (\ref{y1y2_T/2_1})-(\ref{y1y2_T/2_4}). We
state the following general dependence assumptions.

\begin{condition}
\label{Assume_CumSum}Let $Y_{a_{j},t_{j}}$ be random variables with $E\left[
Y_{a_{j},t_{j}}\right]  =0$ and $E\left[  \left\vert Y_{a_{j},t_{k}%
}\right\vert ^{4}\right]  \leq K_{0}<\infty$ for all $a_{j}\in\left\{
1,2\right\}  ,t_{j}\in%
%TCIMACRO{\U{2115} }%
%BeginExpansion
\mathbb{N}
%EndExpansion
$ and $j=1,...,4.$ Assume that $T$ is even. Assume that
\begin{equation}
\sum_{u=-\infty}^{\infty}\left\vert u\right\vert \sup_{t}\left\vert
\gamma_{a,b}\left(  t,t+u\right)  \right\vert \leq K_{1}<\infty,
\label{Cond_Cov_Sum}%
\end{equation}%
\begin{align}
\sum_{u=-\infty}^{\infty}\left\vert u\right\vert \sup_{t}\left\vert
\gamma_{a_{1}a_{2},a_{3}}\left(  t,t,t+u\right)  \right\vert  &  \leq
K_{2}<\infty,\label{Cond_3rdCum_Sum1}\\
\sum_{u=1}^{\infty}u\sup_{t\leq0}\left\vert \gamma_{a_{1}a_{2},a_{3}}\left(
t,0,u\right)  \right\vert  &  \leq K_{2}<\infty,\label{Cond_3rdCum_Sum2}\\
\sum_{u=1}^{\infty}u\sup_{t\leq0}\left\vert \gamma_{a_{1}a_{2},a_{3}}\left(
0,t,u\right)  \right\vert  &  \leq K_{2}<\infty,\label{Cond_3rdCum_Sum3}\\
\sum_{u=1}^{\infty}u\sup_{t>0}\left\vert \gamma_{a_{1}a_{2},a_{3}}\left(
u,t+u,0\right)  \right\vert  &  \leq K_{2}<\infty, \label{Cond_3rdCum_Sum4}%
\end{align}
and
\begin{equation}
\sum_{t_{2},t_{3},t_{4}=-\infty}^{\infty}\sup_{t_{1}}\left\vert cum_{a_{1}%
,a_{2},a_{3},a_{4}}\left(  t_{1},t_{2},t_{3},t_{4}\right)  \right\vert \leq
K_{3}<\infty. \label{Cond_Cum_Sum}%
\end{equation}
In addition, assume
\begin{equation}
\sum_{T>\left\vert v\right\vert \geq T/2}\left\vert \gamma_{a,b}\left(
0,v\right)  \right\vert =o\left(  T^{-1}\right)  \label{Cond_gam_Tail}%
\end{equation}

\end{condition}

\begin{remark}
In the stationary case $\gamma_{a,b}\left(  t_{1},t_{1}+u\right)
=\gamma_{a,b}\left(  0,u\right)  \equiv\gamma_{a,b}\left(  u\right)  ,$
$\gamma_{a_{1}a_{2},a_{3}}\left(  t_{1},t,t_{1}+u\right)  =\gamma_{a_{1}%
a_{2},a_{3}}\left(  0,t,u\right)  $ and $cum_{a_{1},a_{2},a_{3},a_{4}}\left(
t_{1},t_{2},t_{3},t_{4}\right)  \equiv cum_{a_{1},a_{2},a_{3},a_{4}}\left(
t_{2},t_{3},t_{4}\right)  $. In the stationary case the conditions of the
Lemma simplify to
\[
\sum_{u=-\infty}^{\infty}\left\vert u\right\vert \left\vert \gamma
_{a,b}\left(  u\right)  \right\vert \leq K_{1}<\infty,
\]%
\[
\sum_{u=-\infty}^{\infty}\left\vert u\right\vert \left\vert \gamma
_{\gamma_{a_{1}a_{2},a_{3}}}\left(  0,0,u\right)  \right\vert \leq
K_{2}<\infty,,
\]
and
\[
\sum_{t_{2},t_{3},t_{4}=-\infty}^{\infty}\left\vert cum_{a_{1},a_{2}%
,a_{3},a_{4}}\left(  t_{2},t_{3},t_{4}\right)  \right\vert \leq K_{2}<\infty.
\]

\end{remark}

\begin{remark}
When $\gamma_{a,a}\left(  u\right)  $ is the covariance function of a
stationary AR(1)\ model then $\gamma_{a,a}\left(  u\right)  =\theta
^{u}/\left(  1-\theta^{2}\right)  $ with $\left\vert \theta\right\vert <1$ it
which case it follows immediately that Condition (\ref{Cond_Cov_Sum}) is satisfied.
\end{remark}

\begin{remark}
Cumulant summability conditions are standard in the time series literature,
see for example Brillinger (1981, Section 2.6). Andrews (1991, Lemma 1) shows
that (\ref{Cond_Cum_Sum}) holds for mixing processes under regularity
conditions on the existence of moments and restrictions on the decay rates of
the mixing coefficients.
\end{remark}

The next two lemmas are established when Condition \ref{Assume_CumSum} holds
which is satisfied for the AR(1) model we consider.

\begin{lemma}
\label{lem-AR1-4-terms}Assume that Condition \ref{Assume_CumSum} holds. Then,
\newline(i) $E\left[  \left(  \mathcal{Y}_{1}^{\left(  1\right)  }\right)
^{2}\mathcal{Y}_{2}^{\left(  1\right)  }\mathcal{Y}_{2}^{\left(  2\right)
}\right]  =O\left(  T^{-1}\right)  $, \newline(ii) $E\left[  \mathcal{Y}%
_{1}^{\left(  1\right)  }\mathcal{Y}_{1}^{\left(  2\right)  }\left(
\mathcal{Y}_{2}^{\left(  1\right)  }\right)  ^{2}\right]  =O\left(
T^{-1}\right)  $, \newline(iii) $E\left[  \left(  \mathcal{Y}_{1}^{\left(
2\right)  }\right)  ^{2}\mathcal{Y}_{2}^{\left(  1\right)  }\mathcal{Y}%
_{2}^{\left(  2\right)  }\right]  =O\left(  T^{-1}\right)  $, and \newline(iv)
$E\left[  \mathcal{Y}_{1}^{\left(  1\right)  }\mathcal{Y}_{1}^{\left(
2\right)  }\left(  \mathcal{Y}_{2}^{\left(  2\right)  }\right)  ^{2}\right]
=O\left(  T^{-1}\right)  $.
\end{lemma}

\begin{proof}
[Proof of Lemma \ref{lem-AR1-4-terms}]For (i) write
\begin{equation}
\left(  \mathcal{Y}_{1}^{\left(  1\right)  }\right)  ^{2}\mathcal{Y}%
_{2}^{\left(  1\right)  }\mathcal{Y}_{2}^{\left(  2\right)  }=\frac{1}{\left(
\sqrt{T/2}\right)  ^{4}}\sum_{t_{1},t_{2},t_{3}=1}^{T/2}\sum_{t_{4}=T/2+1}%
^{T}Y_{1,t_{1}}Y_{1,t_{2}}Y_{2,t_{3}}Y_{2,t_{4}} \label{Q1(i)}%
\end{equation}
where $E\left[  Y_{1,t_{1}}\right]  =E\left[  Y_{1,t_{2}}\right]  =E\left[
Y_{2,t_{3}}\right]  =E\left[  Y_{2,t_{4}}\right]  =0.$ By Brillinger (1981,
Theorem 2.3.2) as before it follows that
\begin{align*}
E\left[  Y_{1,t_{1}}Y_{1,t_{2}}Y_{2,t_{3}}Y_{2,t_{4}}\right]   &
=\operatorname*{Cov}\left(  Y_{1,t_{1}},Y_{1,t_{2}}\right)
\operatorname*{Cov}\left(  Y_{2,t_{3}},Y_{2,t_{4}}\right) \\
&  +\operatorname*{Cov}\left(  Y_{1,t_{1}},Y_{2,t_{3}}\right)
\operatorname*{Cov}\left(  Y_{1,t_{2}},Y_{2,t_{4}}\right) \\
&  +\operatorname*{Cov}\left(  Y_{1,t_{1}},Y_{2,t_{4}}\right)
\operatorname*{Cov}\left(  Y_{2,t_{3}},Y_{1,t_{2}}\right) \\
&  +cum\left(  Y_{1,t_{1}},Y_{1,t_{2}},Y_{2,t_{3}},Y_{2,t_{4}}\right) \\
&  =\gamma_{1,1}\left(  t_{1},t_{2}\right)  \gamma_{2,2}\left(  t_{3}%
,t_{4}\right)  +\gamma_{1,2}\left(  t_{1},t_{3}\right)  \gamma_{1,2}\left(
t_{2},t_{4}\right) \\
&  +\gamma_{1,2}\left(  t_{1},t_{4}\right)  \gamma_{1,2}\left(  t_{2}%
,t_{3}\right)  +cum_{1,1,2,2}\left(  t_{1},t_{2},t_{3},t_{4}\right)
\end{align*}
Then,
\begin{align*}
&  \frac{1}{\left(  \sqrt{T/2}\right)  ^{4}}\sum_{t_{1},t_{2},t_{3}=1}%
^{T/2}\sum_{t_{4}=T/2+1}^{T}\gamma_{1,1}\left(  t_{1},t_{2}\right)
\gamma_{2,2}\left(  t_{3},t_{4}\right) \\
&  =\frac{2}{T}\sum_{t_{1},t_{2}=1}^{T/2}\gamma_{1,1}\left(  t_{1}%
,t_{2}\right)  \frac{2}{T}\sum_{t_{3}=1}^{T/2}\sum_{t_{4}=T/2+1}^{T}%
\gamma_{2,2}\left(  t_{3},t_{4}\right)  ,
\end{align*}
where the first term is bounded by
\begin{align}
\frac{2}{T}\left\vert \sum_{t_{1}=1}^{T/2}\sum_{v=-t_{1}+1}^{T/2-t_{1}}%
\gamma_{1,1}\left(  t_{1},t_{1}+v\right)  \right\vert  &  \leq\frac{2}{T}%
\sum_{t_{1}=1}^{T/2}\sum_{v=-\infty}^{\infty}\left\vert \gamma_{1,1}\left(
t_{1},t_{1}+v\right)  \right\vert \leq\sum_{v=-\infty}^{\infty}\sup_{t_{1}%
}\left\vert \gamma_{1,1}\left(  t_{1},t_{1}+v\right)  \right\vert \nonumber\\
&  =\sup_{t_{1}}\left\vert \gamma_{1,1}\left(  t_{1},t_{1}\right)  \right\vert
+\sum_{v\neq0}\sup_{t_{1}}\left\vert \gamma_{1,1}\left(  t_{1},t_{1}+v\right)
\right\vert \nonumber\\
&  \leq\sup_{t_{1}}\operatorname*{Cov}\left(  Y_{1,t_{1}},Y_{1,t_{1}}\right)
+\sum_{v\neq0}\left\vert v\right\vert \sup_{t_{1}}\left\vert \gamma
_{1,1}\left(  t_{1},t_{1}+v\right)  \right\vert \nonumber\\
&  \leq\sup_{t_{1}}E\left[  \left\vert Y_{1,t_{1}}\right\vert ^{2}\right]
+\sum_{v\neq0}\left\vert v\right\vert \sup_{t_{1}}\left\vert \gamma
_{1,1}\left(  t_{1},t_{1}+v\right)  \right\vert \leq K_{0}^{1/2}%
+K_{1}\mathbf{\ } \label{TS_Cov_Bound_1}%
\end{align}
where (\ref{Cond_Cov_Sum}) was used in the last inequaltiy. Use the change of
variables $t_{4}=t_{3}+u$ such that $u=t_{3}-t_{4}$. The second term becomes
\begin{align*}
\frac{2}{T}\sum_{t_{3}=1}^{T/2}\sum_{t_{4}=T/2+1}^{T}\gamma_{2,2}\left(
t_{3},t_{4}\right)   &  =\frac{2}{T}\sum_{t_{3}=1}^{T/2}\sum_{u=T/2+1-t_{3}%
}^{T-t_{3}}\gamma_{2,2}\left(  t_{3},t_{3}+u\right) \\
&  =\frac{2}{T}\sum_{u=1}^{T/2}\sum_{t_{3}=T/2+1-u}^{T/2}\gamma_{2,2}\left(
t_{3},t_{3}+u\right)
\end{align*}
such that we can bound it by
\begin{align}
\frac{2}{T}\left\vert \sum_{t_{3}=1}^{T/2}\sum_{t_{4}=T/2+1}^{T}\gamma
_{2,2}\left(  t_{3},t_{4}\right)  \right\vert  &  \leq\frac{2}{T}\sum
_{u=1}^{T/2}\sum_{t_{3}=T/2+1-u}^{T/2}\sup_{t_{3}}\left\vert \gamma
_{2,2}\left(  t_{3},t_{3}+u\right)  \right\vert \nonumber\\
&  =\frac{2}{T}\sum_{u=1}^{T/2}\left\vert u\right\vert \sup_{t_{3}}\left\vert
\gamma_{2,2}\left(  t_{3},t_{3}+u\right)  \right\vert \nonumber\\
&  \leq\frac{2}{T}\sum_{u=-\infty}^{\infty}\left\vert u\right\vert \sup
_{t_{3}}\left\vert \gamma_{2,2}\left(  t_{3},t_{3}+u\right)  \right\vert
\leq\frac{2K_{1}}{T}=O\left(  T^{-1}\right)  , \label{TS_Cov_Bound_2}%
\end{align}
where (\ref{Cond_Cov_Sum}) was used in the last inequaltiy. Similarly,
\begin{align*}
&  \frac{1}{\left(  \sqrt{T/2}\right)  ^{4}}\left\vert \sum_{t_{1},t_{2}%
,t_{3}=1}^{T/2}\sum_{t_{4}=T/2+1}^{T}\gamma_{1,2}\left(  t_{1},t_{3}\right)
\gamma_{1,2}\left(  t_{2},t_{4}\right)  \right\vert \\
&  \leq\frac{2}{T}\sum_{t_{1},t_{3}=1}^{T/2}\left\vert \gamma_{1,2}\left(
t_{1},t_{3}\right)  \right\vert \frac{2}{T}\sum_{t_{2}=1}^{T/2}\sum
_{t_{4}=T/2+1}^{T}\left\vert \gamma_{1,2}\left(  t_{2},t_{4}\right)
\right\vert =O\left(  T^{-1}\right)
\end{align*}
by the same arguments. The third term is
\begin{align*}
&  \frac{1}{\left(  \sqrt{T/2}\right)  ^{4}}\left\vert \sum_{t_{1},t_{2}%
,t_{3}=1}^{T/2}\sum_{t_{4}=T/2+1}^{T}\gamma_{1,2}\left(  t_{1},t_{4}\right)
\gamma_{1,2}\left(  t_{2},t_{3}\right)  \right\vert \\
&  \leq\frac{2}{T}\sum_{t_{2},t_{3}=1}^{T/2}\left\vert \gamma_{1,2}\left(
t_{2},t_{3}\right)  \right\vert \frac{2}{T}\sum_{t_{1}=1}^{T/2}\sum
_{t_{4}=T/2+1}^{T}\left\vert \gamma_{1,2}\left(  t_{1},t_{4}\right)
\right\vert =O\left(  T^{-1}\right)
\end{align*}
again by the same arguments. Finally, note that
\begin{align}
\frac{1}{\left(  \sqrt{T/2}\right)  ^{4}}&\left\vert \sum_{t_{1},t_{2},t_{3}%
=1}^{T/2}\sum_{t_{4}=T/2+1}^{T}cum_{1,1,2,2}\left(  t_{1},t_{2},t_{3}%
,t_{4}\right)  \right\vert  \\
&  \leq\frac{4}{T^{2}}\sum_{t_{1},t_{2}%
,t_{3},t_{4}=1}^{T}\left\vert cum_{1,1,2,2}\left(  t_{1},t_{2},t_{3}%
,t_{4}\right)  \right\vert \nonumber\\
&  \leq\frac{4}{T^{2}}\sum_{t_{1}=1}^{T}\sum_{t_{2},t_{3},t_{4}=-\infty
}^{\infty}\sup_{t_{1}}\left\vert cum_{1,1,2,2}\left(  t_{1},t_{2},t_{3}%
,t_{4}\right)  \right\vert \nonumber\\
&  =O\left(  T^{-1}\right)  , \label{TS_Cum_Bound}%
\end{align}
where (\ref{Cond_Cum_Sum}) was used in the last inequality. Combining
(\ref{TS_Cov_Bound_1}), (\ref{TS_Cov_Bound_2}) and (\ref{TS_Cum_Bound}) shows
that
\begin{align*}
& \left\vert E\left[  \left(  \mathcal{Y}_{1}^{\left(  1\right)  }\right)
^{2}\mathcal{Y}_{2}^{\left(  1\right)  }\mathcal{Y}_{2}^{\left(  2\right)
}\right]  \right\vert  \\
&  =\frac{1}{\left(  \sqrt{T/2}\right)  ^{4}}\left\vert
\sum_{t_{1},t_{2},t_{3}=1}^{T/2}\sum_{t_{4}=T/2+1}^{T}E\left[  Y_{1,t_{1}%
}Y_{1,t_{2}}Y_{2,t_{3}}Y_{2,t_{4}}\right]  \right\vert \\
&  =\frac{1}{\left(  \sqrt{T/2}\right)  ^{4}}\left\vert \sum_{t_{1}%
,t_{2},t_{3}=1}^{T/2}\sum_{t_{4}=T/2+1}^{T}%
\begin{array}
[c]{c}%
\gamma_{1,1}\left(  t_{1},t_{2}\right)  \gamma_{2,2}\left(  t_{3}%
,t_{4}\right)  +\gamma_{1,2}\left(  t_{1},t_{3}\right)  \gamma_{1,2}\left(
t_{2},t_{4}\right) \\
+\gamma_{1,2}\left(  t_{1},t_{4}\right)  \gamma_{1,2}\left(  t_{2}%
,t_{3}\right)  +cum_{1,1,2,2}\left(  t_{1},t_{2},t_{3},t_{4}\right)
\end{array}
\right\vert \\
&  \leq\sum_{v=-\infty}^{\infty}\sup_{t_{1}}\left\vert \gamma_{1,1}\left(
t_{1},t_{1}+v\right)  \right\vert \frac{2}{T}\sum_{u=-\infty}^{\infty
}\left\vert u\right\vert \sup_{t_{3}}\left\vert \gamma_{2,2}\left(
t_{3},t_{3}+u\right)  \right\vert \\
&  +2\sum_{v=-\infty}^{\infty}\sup_{t_{1}}\left\vert \gamma_{1,2}\left(
t_{1},t_{1}+v\right)  \right\vert \frac{2}{T}\sum_{u=-\infty}^{\infty
}\left\vert u\right\vert \sup_{t_{3}}\left\vert \gamma_{2,1}\left(
t_{3},t_{3}+u\right)  \right\vert \\
&  +\frac{4}{T^{2}}\sum_{t_{1}=1}^{T}\sum_{t_{2},t_{3},t_{4}=-\infty}^{\infty
}\sup_{t_{1}}\left\vert cum_{1,1,2,1}\left(  t_{1},t_{2},t_{3},t_{4}\right)
\right\vert \\
&  =O\left(  T^{-1}\right)  .
\end{align*}

For (ii) write%
\[
\mathcal{Y}_{1}^{\left(  1\right)  }\mathcal{Y}_{1}^{\left(  2\right)
}\left(  \mathcal{Y}_{2}^{\left(  1\right)  }\right)  ^{2}=\frac{1}{\left(
\sqrt{T/2}\right)  ^{4}}\sum_{t_{1},t_{2},t_{3}=1}^{T/2}\sum_{t_{4}=T/2+1}%
^{T}Y_{1,t_{1}}Y_{2,t_{2}}Y_{2,t_{3}}Y_{1,t_{4}}%
\]
and
\begin{align*}
E\left[  Y_{1,t_{1}}Y_{2,t_{2}}Y_{2,t_{3}}Y_{1,t_{4}}\right]   &
=\gamma_{1,2}\left(  t_{1},t_{2}\right)  \gamma_{2,1}\left(  t_{3}%
,t_{4}\right)  +\gamma_{1,2}\left(  t_{1},t_{3}\right)  \gamma_{2,1}\left(
t_{2},t_{4}\right) \\
&  +\gamma_{1,2}\left(  t_{1},t_{4}\right)  \gamma_{2,1}\left(  t_{2}%
,t_{3}\right)  +cum_{1,2,2,1}\left(  t_{1},t_{2},t_{3},t_{4}\right)  .
\end{align*}
It follows that
\[
E\left[  \mathcal{Y}_{1}^{\left(  1\right)  }\mathcal{Y}_{1}^{\left(
2\right)  }\left(  \mathcal{Y}_{2}^{\left(  2\right)  }\right)  ^{2}\right]
=\frac{1}{\left(  \sqrt{T/2}\right)  ^{4}}\sum_{t_{1},t_{2},t_{3}=1}^{T/2}%
\sum_{t_{4}=T/2+1}^{T}E\left[  Y_{1,t_{1}}Y_{2,t_{2}}Y_{2,t_{3}}Y_{1,t_{4}%
}\right]  =O\left(  T^{-1}\right)
\]
because the range of summation as well as the summability of the covariance
and cumulant terms is identical to (\ref{Q1(i)}).

For case (iii)\ consider
\begin{equation}
\left(  \mathcal{Y}_{1}^{\left(  2\right)  }\right)  ^{2}\mathcal{Y}%
_{2}^{\left(  1\right)  }\mathcal{Y}_{2}^{\left(  2\right)  }=\frac{1}{\left(
\sqrt{T/2}\right)  ^{4}}\sum_{t_{1}=1}^{T/2}\sum_{t_{2},t_{3},t_{4}=T/2+1}%
^{T}Y_{2,t_{1}}Y_{1,t_{2}}Y_{1,t_{3}}Y_{2,t_{4}} \label{Q1(ii)}%
\end{equation}
where $E\left[  Y_{1,t_{1}}\right]  =E\left[  Y_{2,t_{2}}\right]  =E\left[
Y_{2,t_{3}}\right]  =E\left[  Y_{1,t_{4}}\right]  =0.$ By Brillinger (1981,
Theorem 2.3.2) as before it follows that
\begin{align*}
E\left[  Y_{1,t_{1}}Y_{1,t_{2}}Y_{2,t_{3}}Y_{1,t_{4}}\right]   &
=\gamma_{2,1}\left(  t_{1},t_{2}\right)  \gamma_{1,2}\left(  t_{3}%
,t_{4}\right)  +\gamma_{2,1}\left(  t_{1},t_{3}\right)  \gamma_{1,2}\left(
t_{2},t_{4}\right) \\
&  +\gamma_{2,2}\left(  t_{1},t_{4}\right)  \gamma_{1,1}\left(  t_{2}%
,t_{3}\right)  +cum_{2,1,1,2}\left(  t_{1},t_{2},t_{3},t_{4}\right)  .
\end{align*}
Again considering the three covariance terms in turn one has
\begin{align*}
&  \frac{1}{\left(  \sqrt{T/2}\right)  ^{4}}\sum_{t_{1}=1}^{T/2}\sum
_{t_{2},t_{3},t_{4}=T/2+1}^{T}\gamma_{2,1}\left(  t_{1},t_{2}\right)
\gamma_{1,2}\left(  t_{3},t_{4}\right) \\
&  =\frac{2}{T}\sum_{t_{1}=1}^{T/2}\sum_{t_{2}=T/2+1}^{T}\gamma_{2,1}\left(
t_{1},t_{2}\right)  \frac{2}{T}\sum_{t_{3},t_{4}=T/2+1}^{T}\gamma_{1,2}\left(
t_{3},t_{4}\right)
\end{align*}
where
\begin{align*}
\left\vert \frac{2}{T}\sum_{t_{3},t_{4}=T/2+1}^{T}\gamma_{1,2}\left(
t_{3},t_{4}\right)  \right\vert &\leq\frac{2}{T}\sum_{t_{3}=T/2+1}^{T}%
\sum_{v=T/2+-t_{3}}^{T-t_{3}}\left\vert \gamma_{1,2}\left(  t_{3}%
,t_{3}+v\right)  \right\vert \\
&\leq\frac{2}{T}\sum_{t_{3}=T/2+1}^{T}%
\sum_{v=-\infty}^{\infty}\left\vert \gamma_{1,2}\left(  t_{3},t_{3}+v\right)
\right\vert \leq\frac{K_{1}}{T}%
\end{align*}
and
\[
\left\vert \frac{2}{T}\sum_{t_{1}=1}^{T/2}\sum_{t_{2}=T/2+1}^{T}\gamma
_{2,1}\left(  t_{1},t_{2}\right)  \right\vert \leq\frac{2}{T}\sum_{u=1}%
^{T/2}\sum_{t_{1}=T/2+1-u}^{T/2}\left\vert \gamma_{2,1}\left(  t_{1}%
,t_{1}+u\right)  \right\vert \leq\frac{K_{2}}{T}.
\]
Similarly,
\begin{align*}
&  \frac{1}{\left(  \sqrt{T/2}\right)  ^{4}}\sum_{t_{1}=1}^{T/2}\sum
_{t_{2},t_{3},t_{4}=T/2+1}^{T}\gamma_{2,1}\left(  t_{1},t_{3}\right)
\gamma_{1,2}\left(  t_{2},t_{4}\right) \\
&  =\frac{2}{T}\sum_{t_{1}=1}^{T/2}\sum_{t_{3}=T/2+1}^{T}\gamma_{2,1}\left(
t_{1},t_{3}\right)  \frac{2}{T}\sum_{t_{2},t_{4}=T/2+1}^{T}\gamma_{1,2}\left(
t_{2},t_{4}\right)  =O\left(  T^{-1}\right)
\end{align*}
by the same analysis. Finally,
\begin{align*}
&  \frac{1}{\left(  \sqrt{T/2}\right)  ^{4}}\sum_{t_{1}=1}^{T/2}\sum
_{t_{2},t_{3},t_{4}=T/2+1}^{T}\gamma_{2,2}\left(  t_{1},t_{4}\right)
\gamma_{1,1}\left(  t_{2},t_{3}\right) \\
&  =\frac{2}{T}\sum_{t_{1}=1}^{T/2}\sum_{t_{4}=T/2+1}^{T}\gamma_{2,2}\left(
t_{1},t_{4}\right)  \frac{2}{T}\sum_{t_{2},t_{3}=T/2+1}^{T}\gamma_{1,1}\left(
t_{2},t_{3}\right)  =O\left(  T^{-1}\right)  .
\end{align*}
For case (iv) consider
\[
\mathcal{Y}_{1}^{\left(  1\right)  }\mathcal{Y}_{1}^{\left(  2\right)
}\left(  \mathcal{Y}_{2}^{\left(  2\right)  }\right)  ^{2}=\frac{1}{\left(
\sqrt{T/2}\right)  ^{4}}\sum_{t_{1}=1}^{T/2}\sum_{t_{2},t_{3},t_{4}=T/2+1}%
^{T}Y_{1,t_{1}}Y_{1,t_{2}}Y_{2,t_{3}}Y_{2,t_{4}}%
\]
and
\begin{align*}
E\left[  Y_{1,t_{1}}Y_{1,t_{2}}Y_{2,t_{3}}Y_{2,t_{4}}\right]   &
=\gamma_{1,1}\left(  t_{1},t_{2}\right)  \gamma_{2,2}\left(  t_{3}%
,t_{4}\right)  +\gamma_{1,2}\left(  t_{1},t_{3}\right)  \gamma_{1,2}\left(
t_{2},t_{4}\right) \\
&  +\gamma_{1,2}\left(  t_{1},t_{4}\right)  \gamma_{1,2}\left(  t_{2}%
,t_{3}\right)  +cum_{1,1,2,2}\left(  t_{1},t_{2},t_{3},t_{4}\right)  .
\end{align*}
It follows that
\[
\mathcal{Y}_{1}^{\left(  1\right)  }\mathcal{Y}_{1}^{\left(  2\right)
}\left(  \mathcal{Y}_{2}^{\left(  2\right)  }\right)  ^{2}=\frac{1}{\left(
\sqrt{T/2}\right)  ^{4}}\sum_{t_{1}=1}^{T/2}\sum_{t_{2},t_{3},t_{4}=T/2+1}%
^{T}E\left[  Y_{1,t_{1}}Y_{1,t_{2}}Y_{2,t_{3}}Y_{2,t_{4}}\right]  =O\left(
T^{-1}\right)
\]
because the range of summation as well as the summability of the covariance
and cumulant terms is identical to (\ref{Q1(ii)}).
\end{proof}

\begin{lemma}
\label{lem-y1y2_T/2}Assume that Condition \ref{Assume_CumSum} holds. In
addition assume that $Y_{1,t}$ and $Y_{2,t}$ are strictly stationary.
Then,\textbf{ }%
\begin{align}
E\left[  \left(  \mathcal{Y}_{1}^{\left(  1\right)  }\right)  ^{2}\left(
\mathcal{Y}_{2}^{\left(  1\right)  }\right)  ^{2}\right]   &  =E\left[
\left(  \mathcal{Y}_{1}^{\left(  2\right)  }\right)  ^{2}\left(
\mathcal{Y}_{2}^{\left(  2\right)  }\right)  ^{2}\right]  =E\left[
\mathcal{Y}_{1}^{2}\mathcal{Y}_{2}^{2}\right]  +o\left(  1\right)
\label{y1y2_T/2_1}\\
E\left[  \left(  \mathcal{Y}_{1}^{\left(  1\right)  }\right)  ^{2}\left(
\mathcal{Y}_{2}^{\left(  2\right)  }\right)  ^{2}\right]   &  =E\left[
\left(  \mathcal{Y}_{1}^{\left(  2\right)  }\right)  ^{2}\left(
\mathcal{Y}_{2}^{\left(  1\right)  }\right)  ^{2}\right]  =E\left[
\mathcal{Y}_{1}^{2}\right]  E\left[  \mathcal{Y}_{2}^{2}\right]  +o\left(
1\right) \label{y1y2_T/2_2}\\
E\left[  \mathcal{Y}_{1}^{\left(  1\right)  }\mathcal{Y}_{2}^{\left(
1\right)  }\right]   &  =E\left[  \mathcal{Y}_{1}^{\left(  2\right)
}\mathcal{Y}_{2}^{\left(  2\right)  }\right]  =E\left[  \mathcal{Y}%
_{1}\mathcal{Y}_{2}\right]  +o\left(  1\right)  . \label{y1y2_T/2_3}%
\end{align}%
\begin{equation}
E\left[  \mathcal{Y}_{1}^{\left(  1\right)  }\mathcal{Y}_{2}^{\left(
1\right)  }\mathcal{Y}_{1}^{\left(  2\right)  }\mathcal{Y}_{2}^{\left(
2\right)  }\right]  =E\left[  \mathcal{Y}_{1}\mathcal{Y}_{2}\right]
^{2}+o\left(  1\right)  . \label{y1y2_T/2_4}%
\end{equation}%
\begin{align}
E\left[  \left(  \mathcal{Y}_{1}^{\left(  1\right)  }\right)  ^{2}%
\mathcal{Y}_{2}^{\left(  2\right)  }\right]   &  =E\left[  \mathcal{Y}_{1}%
^{2}\right]  E\left[  \mathcal{Y}_{2}\right]  +o\left(  1\right)  =o\left(
1\right)  ,\label{y1y2_T/2_5}\\
E\left[  \mathcal{Y}_{1}^{\left(  1\right)  }\mathcal{Y}_{1}^{\left(
2\right)  }\mathcal{Y}_{2}^{\left(  2\right)  }\right]   &  =E\left[
\mathcal{Y}_{1}\right]  E\left[  \mathcal{Y}_{1}\mathcal{Y}_{2}\right]
=o\left(  1\right)  . \label{y1y2_T/2_6}%
\end{align}

\end{lemma}

\begin{proof}
First consider
\begin{align}
E\left[  \mathcal{Y}_{1}^{2}\mathcal{Y}_{2}^{2}\right]   &  =\frac{1}{\left(
\sqrt{T}\right)  ^{4}}\sum_{t_{1},t_{2},t_{3},t_{4}=1}^{T}E\left[  Y_{1,t_{1}%
}Y_{1,t_{2}}Y_{2,t_{3}}Y_{2,t_{4}}\right]  \label{y1y2_T}\\
&  =\frac{1}{\left(  \sqrt{T}\right)  ^{4}}\sum_{t_{1},t_{2}=1}^{T}%
\sum_{t_{3,}t_{4}=1}^{T}\gamma_{1,1}\left(  t_{1},t_{2}\right)  \gamma
_{2,2}\left(  t_{3},t_{4}\right)  \label{y1y2_T1}\\
&  +\frac{1}{\left(  \sqrt{T}\right)  ^{4}}\sum_{t_{1},t_{2}=1}^{T}%
\sum_{t_{3,}t_{4}=1}^{T}\gamma_{1,2}\left(  t_{1},t_{3}\right)  \gamma
_{1,2}\left(  t_{2},t_{4}\right)  \label{y1y2_T2}\\
&  +\frac{1}{\left(  \sqrt{T}\right)  ^{4}}\sum_{t_{1},t_{2}=1}^{T}%
\sum_{t_{3,}t_{4}=1}^{T}\gamma_{1,2}\left(  t_{1},t_{4}\right)  \gamma
_{1,2}\left(  t_{2},t_{3}\right)  \label{y1y2_T3}\\
&  +\frac{1}{\left(  \sqrt{T}\right)  ^{4}}\sum_{t_{1},t_{2}=1}^{T}%
\sum_{t_{3,}t_{4}=1}^{T}cum_{1,1,2,2}\left(  t_{1},t_{2},t_{3},t_{4}\right)
\nonumber
\end{align}
where again the cumulant term is $O\left(  T^{-1}\right)  .$ For the first
term%
\begin{equation}
\frac{1}{\left(  \sqrt{T}\right)  ^{4}}\sum_{t_{1},t_{2}=1}^{T}\sum
_{t_{3,}t_{4}=1}^{T}\gamma_{1,1}\left(  t_{1},t_{2}\right)  \gamma
_{2,2}\left(  t_{3},t_{4}\right)  =\frac{1}{T}\sum_{t_{1},t_{2}=1}^{T}%
\gamma_{1,1}\left(  t_{1},t_{2}\right)  \frac{1}{T}\sum_{t_{3,}t_{4}=1}%
^{T}\gamma_{2,2}\left(  t_{3},t_{4}\right)  \label{gam_11_T}%
\end{equation}
where
\begin{align*}
\frac{1}{T}\sum_{t_{1},t_{2}=1}^{T}\gamma_{1,1}\left(  t_{1},t_{2}\right)   &
=\sum_{v=-T+1}^{T-1}\left(  1-\frac{\left\vert v\right\vert }{T}\right)
\gamma_{1,1}\left(  0,v\right)  \\
&  =\sum_{v=-T+1}^{T-1}\gamma_{1,1}\left(  0,v\right)  +\frac{1}{T}%
\sum_{v=-T+1}^{T-1}\left\vert v\right\vert \gamma_{1,1}\left(  0,v\right)  \\
&  =\sum_{v=-T+1}^{T-1}\gamma_{1,1}\left(  0,v\right)  +O\left(
T^{-1}\right)
\end{align*}
where the first equality follows from stationarity of the AR(1) model and the
third equality follows from the fact that $\sum_{v=-T+1}^{T-1}\left\vert
v\right\vert \gamma_{1,1}\left(  0,v\right)  $ is uniformly bounded in $T$ by
Assumption \ref{Cond_Cov_Sum}. Similarly, it follows that the second term in
(\ref{gam_11_T}) satisfies
\[
\frac{1}{T}\sum_{t_{3,}t_{4}=1}^{T}\gamma_{2,2}\left(  t_{3},t_{4}\right)
=\sum_{v=-T+1}^{T-1}\gamma_{2,2}\left(  0,v\right)  +O\left(  T^{-1}\right)  .
\]
The terms in (\ref{y1y2_T2}) can be analyzed in the same way, noting in
particular that stationarity implies that for any $h,$ $\gamma_{1,2}\left(
t_{1},t_{3}\right)  =\gamma_{1,2}\left(  t_{1}+h,t_{3}+h\right)  =\gamma
_{1,2}\left(  0,t_{3}-t_{1}\right)  $ such that
\begin{equation}
\frac{1}{\left(  \sqrt{T}\right)  ^{4}}\sum_{t_{1},t_{2}=1}^{T}\sum
_{t_{3,}t_{4}=1}^{T}\gamma_{1,2}\left(  t_{1},t_{3}\right)  \gamma
_{1,2}\left(  t_{2},t_{4}\right)  =\sum_{v_{1},\nu_{2}=-T+1}^{T-1}\gamma
_{1,2}\left(  0,v_{1}\right)  \gamma_{1,2}\left(  0,v_{2}\right)  +O\left(
T^{-1}\right)  .\label{gam_12_T}%
\end{equation}
The same analysis applies to (\ref{y1y2_T3}) such that (\ref{y1y2_T}) can be
written as
\begin{align*}
E\left[  \mathcal{Y}_{1}^{2}\mathcal{Y}_{2}^{2}\right]   &  =\left(
\sum_{v=-T+1}^{T-1}\gamma_{1,1}\left(  0,v\right)  \right)  \left(
\sum_{v=-T+1}^{T-1}\gamma_{2,2}\left(  0,v\right)  \right)  \\
&  +2\left(  \sum_{v=-T+1}^{T-1}\gamma_{1,2}\left(  0,v_{1}\right)  \right)
^{2}+O\left(  T^{-1}\right)  .
\end{align*}
Now turn to the split sample moments. We start with (\ref{y1y2_T/2_1})
\begin{align}
E\left[  \left(  \mathcal{Y}_{1}^{\left(  1\right)  }\right)  ^{2}\left(
\mathcal{Y}_{2}^{\left(  1\right)  }\right)  ^{2}\right]   &  =\frac
{1}{\left(  \sqrt{T/2}\right)  ^{4}}\sum_{t_{1},t_{2},t_{3},t_{4}=1}%
^{T/2}E\left[  Y_{1,t_{1}}Y_{1,t_{2}}Y_{2,t_{3}}Y_{2,t_{4}}\right]
\label{y1y2_T/2_a}\\
&  =\frac{1}{\left(  \sqrt{T/2}\right)  ^{4}}\sum_{t_{1},t_{2},t_{3},t_{4}%
=1}^{T/2}\gamma_{1,1}\left(  t_{1},t_{2}\right)  \gamma_{2,2}\left(
t_{3},t_{4}\right)  \label{y1y2_T/2_b}\\
&  +\frac{1}{\left(  \sqrt{T/2}\right)  ^{4}}\sum_{t_{1},t_{2},t_{3},t_{4}%
=1}^{T/2}\gamma_{1,2}\left(  t_{1},t_{3}\right)  \gamma_{1,2}\left(
t_{2},t_{4}\right)  \label{y1y2_T/2_c}\\
&  +\frac{1}{\left(  \sqrt{T/2}\right)  ^{4}}\sum_{t_{1},t_{2},t_{3},t_{4}%
=1}^{T/2}\gamma_{1,2}\left(  t_{1},t_{4}\right)  \gamma_{1,2}\left(
t_{2},t_{3}\right)  \label{y1y2_T/2_d}\\
&  +\frac{1}{\left(  \sqrt{T/2}\right)  ^{4}}\sum_{t_{1},t_{2},t_{3},t_{4}%
=1}^{T/2}cum_{1,1,2,2}\left(  t_{1},t_{2},t_{3},t_{4}\right)  \nonumber
\end{align}
where the cumulant term is again $O\left(  T^{-1}\right)  $ by Assumption
\ref{Cond_Cum_Sum}. Now turn to (\ref{y1y2_T/2_b})%
\begin{equation}
\frac{1}{\left(  \sqrt{T/2}\right)  ^{4}}\sum_{t_{1},t_{2}=1}^{T/2}%
\sum_{t_{3,}t_{4}=1}^{T/2}\gamma_{1,1}\left(  t_{1},t_{2}\right)  \gamma
_{2,2}\left(  t_{3},t_{4}\right)  =\frac{2}{T}\sum_{t_{1},t_{2}=1}^{T/2}%
\gamma_{1,1}\left(  t_{1},t_{2}\right)  \frac{2}{T}\sum_{t_{3,}t_{4}=1}%
^{T/2}\gamma_{2,2}\left(  t_{3},t_{4}\right)  \label{y1y2_T/2_b1}%
\end{equation}
where the first term on the RHS of (\ref{y1y2_T/2_b1}) is
\begin{align}
\frac{2}{T}\sum_{t_{1},t_{2}=1}^{T/2}\gamma_{1,1}\left(  t_{1},t_{2}\right)
&  =\sum_{v=-T/2+1}^{T/2-1}\left(  1-\frac{2\left\vert v\right\vert }%
{T}\right)  \gamma_{1,1}\left(  0,v\right)  \label{gam_11_rep}\\
&  =\sum_{v=-T/2+1}^{T/2-1}\gamma_{1,1}\left(  0,v\right)  +O\left(
T^{-1}\right)  \nonumber
\end{align}
by the same arguments as in the full sample case. For the second term on the
RHS of (\ref{y1y2_T/2_b1}) we have
\begin{equation}
\frac{2}{T}\sum_{t_{3,}t_{4}=1}^{T}\gamma_{2,2}\left(  t_{3},t_{4}\right)
=\sum_{v=-T/2+1}^{T/2-1}\gamma_{2,2}\left(  0,v\right)  +O\left(
T^{-1}\right)  \label{gam_22_rep}%
\end{equation}
by the same arguments. Next turn to (\ref{y1y2_T/2_c}) which is
\[
\frac{1}{\left(  \sqrt{T/2}\right)  ^{4}}\sum_{t_{1},t_{2},t_{3},t_{4}%
=1}^{T/2}\gamma_{1,2}\left(  t_{1},t_{3}\right)  \gamma_{1,2}\left(
t_{2},t_{4}\right)  =\left(  \frac{2}{T}\sum_{t_{1},t_{2}=1}^{T/2}\gamma
_{1,2}\left(  t_{1},t_{2}\right)  \right)  ^{2}%
\]
and using stationarity such that for any integer $h,$ $\gamma_{1,2}\left(
t_{1},t_{2}\right)  =\gamma_{1,2}\left(  t_{1}+h,t_{2}+h\right)  $ leads to
\begin{equation}
\frac{2}{T}\sum_{t_{1},t_{2}=1}^{T/2}\gamma_{1,2}\left(  t_{1},t_{2}\right)
=\sum_{v=-T/2+1}^{T/2-1}\left(  1-\frac{2\left\vert v\right\vert }{T}\right)
\gamma_{1,2}\left(  0,v\right)  =\sum_{v=-T/2+1}^{T/2-1}\gamma_{1,2}\left(
0,v\right)  +O\left(  T^{-1}\right)  \label{gam_12_T/2}%
\end{equation}
such that
\[
\frac{1}{\left(  \sqrt{T/2}\right)  ^{4}}\sum_{t_{1},t_{2},t_{3},t_{4}%
=1}^{T/2}\gamma_{1,2}\left(  t_{1},t_{3}\right)  \gamma_{1,2}\left(
t_{2},t_{4}\right)  =\left(  \sum_{v=-T/2+1}^{T/2-1}\gamma_{1,2}\left(
0,v\right)  \right)  ^{2}+O\left(  T^{-1}\right)  .
\]
The same argument shows that (\ref{y1y2_T/2_d}) is
\[
\frac{1}{\left(  \sqrt{T/2}\right)  ^{4}}\sum_{t_{1},t_{2},t_{3},t_{4}%
=1}^{T/2}\gamma_{1,2}\left(  t_{1},t_{4}\right)  \gamma_{1,2}\left(
t_{2},t_{3}\right)  =\left(  \sum_{v=-T/2+1}^{T/2-1}\gamma_{1,2}\left(
0,v\right)  \right)  ^{2}+O\left(  T^{-1}\right)
\]
Now consider the difference
\begin{align}
E\left[  \left(  \mathcal{Y}_{1}^{\left(  1\right)  }\right)  ^{2}\left(
\mathcal{Y}_{2}^{\left(  1\right)  }\right)  ^{2}\right]  -E\left[
\mathcal{Y}_{1}^{2}\mathcal{Y}_{2}^{2}\right]   &  =\left(  \sum
_{v=-T/2+1}^{T/2-1}\gamma_{1,1}\left(  0,v\right)  \right)  \left(
\sum_{v=-T/2+1}^{T/2-1}\gamma_{2,2}\left(  0,v\right)  \right)
\label{Diff_Ey1y2}\\
&  +2\left(  \sum_{v=-T/2+1}^{T/2-1}\gamma_{1,2}\left(  0,v\right)  \right)
^{2}\nonumber\\
&  -\left(  \sum_{v=-T+1}^{T-1}\gamma_{1,1}\left(  0,v\right)  \right)
\left(  \sum_{v=-T+1}^{T-1}\gamma_{2,2}\left(  0,v\right)  \right)
\nonumber\\
&  -2\left(  \sum_{v=-T+1}^{T-1}\gamma_{1,2}\left(  0,v\right)  \right)
^{2}+O\left(  T^{-1}\right)  .\nonumber
\end{align}
The order of the difference of the product term in (\ref{Diff_Ey1y2}) then
depends on
\[
\sum_{v=-T/2+1}^{T/2-1}\gamma_{1,1}\left(  0,v\right)  -\sum_{v=-T+1}%
^{T-1}\gamma_{1,1}\left(  0,v\right)  =\sum_{T>\left\vert v\right\vert \geq
T/2}\gamma_{1,1}\left(  0,v\right)
\]
and $\sum_{T>\left\vert v\right\vert =T/2}\gamma_{2,2}\left(  0,v\right)  $.
By the condition in (\ref{Cond_gam_Tail}) it follows that
\[
\sum_{T>\left\vert v\right\vert \geq T/2}\gamma_{1,1}\left(  0,v\right)
=o\left(  T^{-1}\right)  ,\text{ }\sum_{T>\left\vert v\right\vert \geq
T/2}\gamma_{2,2}\left(  0,v\right)  =o\left(  T^{-1}\right)  .
\]
This shows that
\begin{align}
&  \left(  \sum_{v=-T/2+1}^{T/2-1}\gamma_{1,1}\left(  0,v\right)  \right)
\left(  \sum_{v=-T/2+1}^{T/2-1}\gamma_{2,2}\left(  0,v\right)  \right)
\label{Diff_Ey1y2_a}\\
&  -\left(  \sum_{v=-T+1}^{T-1}\gamma_{1,1}\left(  0,v\right)  \right)
\left(  \sum_{v=-T+1}^{T-1}\gamma_{2,2}\left(  0,v\right)  \right)
\nonumber\\
&  =o\left(  T^{-1}\right)  .\nonumber
\end{align}
Similarly, for the quadratic term in (\ref{Diff_Ey1y2}) one obtains by the
condition in (\ref{Cond_gam_Tail}) that
\begin{equation}
\left(  \sum_{v=-T/2+1}^{T/2-1}\gamma_{1,2}\left(  0,v\right)  \right)
^{2}-\left(  \sum_{v=-T+1}^{T-1}\gamma_{1,2}\left(  0,v\right)  \right)
^{2}=o\left(  T^{-1}\right)  .\label{Diff_Ey1y2_b}%
\end{equation}
Substituting (\ref{Diff_Ey1y2_a}) and (\ref{Diff_Ey1y2_b}) in
(\ref{Diff_Ey1y2}) shows that
\[
E\left[  \left(  \mathcal{Y}_{1}^{\left(  1\right)  }\right)  ^{2}\left(
\mathcal{Y}_{2}^{\left(  1\right)  }\right)  ^{2}\right]  -E\left[
\mathcal{Y}_{1}^{2}\mathcal{Y}_{2}^{2}\right]  =o\left(  T^{-1}\right)
+O\left(  T^{-1}\right)  =O\left(  T^{-1}\right)  .
\]
The term $E\left[  \left(  \mathcal{Y}_{1}^{\left(  2\right)  }\right)
^{2}\left(  \mathcal{Y}_{2}^{\left(  2\right)  }\right)  ^{2}\right]  $ in
(\ref{y1y2_T/2_1}) can be analyzed exactly in the same way. We therefore
conclude that
\[
E\left[  \left(  \mathcal{Y}_{1}^{\left(  1\right)  }\right)  ^{2}\left(
\mathcal{Y}_{2}^{\left(  1\right)  }\right)  ^{2}\right]  =E\left[  \left(
\mathcal{Y}_{1}^{\left(  2\right)  }\right)  ^{2}\left(  \mathcal{Y}%
_{2}^{\left(  2\right)  }\right)  ^{2}\right]  =E\left[  \mathcal{Y}_{1}%
^{2}\mathcal{Y}_{2}^{2}\right]  +O\left(  T^{-1}\right)  .
\]
Next turn to (\ref{y1y2_T/2_2}). First consider $E\left[  \mathcal{Y}_{1}%
^{2}\right]  $ and $E\left[  \mathcal{Y}_{2}^{2}\right]  $ where
\begin{align}
E\left[  \mathcal{Y}_{1}^{2}\right]   &  =\frac{1}{\left(  \sqrt{T}\right)
^{2}}\sum_{t_{1},t_{2}=1}^{T}E\left[  Y_{1,t_{1}}Y_{1,t_{2}}\right]
\label{EY1}\\
&  =\sum_{v=-T+1}^{T-1}\gamma_{1,1}\left(  0,v\right)  +O\left(
T^{-1}\right)  \nonumber
\end{align}
by previous arguments. Similarly,
\begin{equation}
E\left[  \mathcal{Y}_{2}^{2}\right]  =\sum_{v=-T+1}^{T-1}\gamma_{2,2}\left(
0,v\right)  +O\left(  T^{-1}\right)  .\label{EY2}%
\end{equation}
Now consider
\begin{align}
E\left[  \left(  \mathcal{Y}_{1}^{\left(  1\right)  }\right)  ^{2}\left(
\mathcal{Y}_{2}^{\left(  2\right)  }\right)  ^{2}\right]   &  =\frac
{1}{\left(  \sqrt{T/2}\right)  ^{4}}\sum_{t_{1},t_{2}=1}^{T/2}\sum
_{t_{3},t_{4}=T/2+1}^{T}E\left[  Y_{1,t_{1}}Y_{1,t_{2}}Y_{2,t_{3}}Y_{2,t_{4}%
}\right]  \label{Ey1y2_T/2_i}\\
&  =\frac{1}{\left(  \sqrt{T/2}\right)  ^{4}}\sum_{t_{1},t_{2}=1}^{T/2}%
\sum_{t_{3},t_{4}=T/2+1}^{T}\gamma_{1,1}\left(  t_{1},t_{2}\right)
\gamma_{2,2}\left(  t_{3},t_{4}\right)  \label{Ey1y2_T/2_ii}\\
&  +\frac{1}{\left(  \sqrt{T/2}\right)  ^{4}}\sum_{t_{1},t_{2}=1}^{T/2}%
\sum_{t_{3},t_{4}=T/2+1}^{T}\gamma_{1,2}\left(  t_{1},t_{3}\right)
\gamma_{1,2}\left(  t_{2},t_{4}\right)  \label{Ey1y2_T/2_iii}\\
&  +\frac{1}{\left(  \sqrt{T/2}\right)  ^{4}}\sum_{t_{1},t_{2}=1}^{T/2}%
\sum_{t_{3},t_{4}=T/2+1}^{T}\gamma_{1,2}\left(  t_{1},t_{4}\right)
\gamma_{1,2}\left(  t_{2},t_{3}\right)  \label{Ey1y2_T/2_iv}\\
&  +\frac{1}{\left(  \sqrt{T/2}\right)  ^{4}}\sum_{t_{1},t_{2}=1}^{T/2}%
\sum_{t_{3},t_{4}=T/2+1}^{T}cum_{1,1,2,2}\left(  t_{1},t_{2},t_{3}%
,t_{4}\right)  \nonumber
\end{align}
where the cumulant term is $O\left(  T^{-1}\right)  $ by Assumption
\ref{Cond_Cum_Sum}. The term (\ref{Ey1y2_T/2_ii}) is
\begin{align}
&  \frac{2}{T}\sum_{t_{1},t_{2}=1}^{T/2}\gamma_{1,1}\left(  t_{1}%
,t_{2}\right)  \frac{2}{T}\sum_{t_{3},t_{4}=T/2+1}^{T}\gamma_{2,2}\left(
t_{3},t_{4}\right)  \label{gam_11-22_T/2}\\
&  =\left(  \sum_{v=-T/2+1}^{T/2-1}\gamma_{1,1}\left(  0,v\right)  \right)
\left(  \sum_{v=-T/2+1}^{T/2-1}\gamma_{2,2}\left(  0,v\right)  \right)
+O\left(  T^{-1}\right)  \nonumber
\end{align}
by (\ref{gam_11_rep}) and (\ref{gam_22_rep}). For the term
(\ref{Ey1y2_T/2_iii}) it follows that
\begin{align*}
&  \frac{1}{\left(  \sqrt{T/2}\right)  ^{4}}\sum_{t_{1},t_{2}=1}^{T/2}%
\sum_{t_{3},t_{4}=T/2+1}^{T}\gamma_{1,2}\left(  t_{1},t_{3}\right)
\gamma_{1,2}\left(  t_{2},t_{4}\right)  \\
&  =\frac{2}{T}\sum_{t_{1}=1}^{T/2}\sum_{t_{3}=T/2+1}^{T}\gamma_{1,2}\left(
t_{1},t_{3}\right)  \frac{2}{T}\sum_{t_{2}=1}^{T/2}\sum_{t_{3}=T/2+1}%
^{T}\gamma_{1,2}\left(  t_{2},t_{4}\right)
\end{align*}
where with $\nu=t_{3}-t_{1}$ and $\gamma_{1,2}\left(  t_{1},t_{3}\right)
=\gamma_{1,2}\left(  0,t_{3}-t_{1}\right)  $ by stationarity%
\[
\left\vert \frac{2}{T}\sum_{t_{1}=1}^{T/2}\sum_{t_{3}=T/2+1}^{T}\gamma
_{1,2}\left(  t_{1},t_{3}\right)  \right\vert \leq\frac{2}{T}\sum_{\nu
=1}^{T/2}\left\vert \nu\right\vert \left\vert \gamma_{1,2}\left(
0,\nu\right)  \right\vert =O\left(  T^{-1}\right)
\]
by Assumption \ref{Cond_Cov_Sum}. This implies that (\ref{Ey1y2_T/2_iii}) and
(\ref{Ey1y2_T/2_iv}) are $O\left(  T^{-1}\right)  .$ It follows that
\begin{equation}
E\left[  \left(  \mathcal{Y}_{1}^{\left(  1\right)  }\right)  ^{2}\left(
\mathcal{Y}_{2}^{\left(  2\right)  }\right)  ^{2}\right]  =O\left(
T^{-1}\right)  .\label{y1y2_T/2_2a}%
\end{equation}
By the same arguments it follows that in (\ref{y1y2_T/2_2})%
\begin{equation}
E\left[  \left(  \mathcal{Y}_{1}^{\left(  2\right)  }\right)  ^{2}\left(
\mathcal{Y}_{2}^{\left(  1\right)  }\right)  ^{2}\right]  =O\left(
T^{-1}\right)  .\label{y1y2_T/2_2b}%
\end{equation}
Combining (\ref{EY1}), (\ref{EY2}), (\ref{gam_11-22_T/2}), (\ref{y1y2_T/2_2a})
and (\ref{y1y2_T/2_2b}) in (\ref{y1y2_T/2_2}) leads to
\begin{align*}
E\left[  \left(  \mathcal{Y}_{1}^{\left(  1\right)  }\right)  ^{2}\left(
\mathcal{Y}_{2}^{\left(  2\right)  }\right)  ^{2}\right]  -E\left[
\mathcal{Y}_{1}^{2}\right]  E\left[  \mathcal{Y}_{2}^{2}\right]   &  =\left(
\sum_{v=-T/2+1}^{T/2-1}\gamma_{1,1}\left(  0,v\right)  \right)  \left(
\sum_{v=-T/2+1}^{T/2-1}\gamma_{2,2}\left(  0,v\right)  \right)  \\
&  -\left(  \sum_{v=-T/2+1}^{T/2-1}\gamma_{1,1}\left(  0,v\right)  \right)
\left(  \sum_{v=-T/2+1}^{T/2-1}\gamma_{2,2}\left(  0,v\right)  \right)  \\
&  +O\left(  T^{-1}\right)  \\
&  =O\left(  T^{-1}\right)
\end{align*}
where the second equality follows from (\ref{Diff_Ey1y2_a}). Finally turn to
(\ref{y1y2_T/2_3}). We have
\begin{align}
E\left[  \mathcal{Y}_{1}\mathcal{Y}_{2}\right]   &  =\frac{1}{\left(  \sqrt
{T}\right)  ^{2}}\sum_{t_{1},t_{2}=1}^{T}E\left[  Y_{1,t_{1}}Y_{2,t_{2}%
}\right]  \nonumber\\
&  =\frac{1}{\left(  \sqrt{T}\right)  ^{2}}\sum_{t_{1},t_{2}=1}^{T}\gamma
_{12}\left(  t_{1},t_{2}\right)  \nonumber\\
&  =\sum_{v=-T+1}^{T-1}\gamma_{1,2}\left(  0,v\right)  +O\left(
T^{-1}\right)  \label{EY1Y2-alt}%
\end{align}
by (\ref{gam_12_T}). Similarly,
\begin{align*}
E\left[  \mathcal{Y}_{1}^{\left(  1\right)  }\mathcal{Y}_{2}^{\left(
1\right)  }\right]   &  =\frac{1}{\left(  \sqrt{T/2}\right)  ^{2}}\sum
_{t_{1},t_{2}=1}^{T/2}E\left[  Y_{1,t_{1}}Y_{2,t_{2}}\right]  \\
&  =\sum_{v=-T/2+1}^{T/2-1}\gamma_{1,2}\left(  0,v\right)  +O\left(
T^{-1}\right)
\end{align*}
by (\ref{gam_12_T/2}). By the same argument
\[
E\left[  \mathcal{Y}_{1}^{\left(  2\right)  }\mathcal{Y}_{2}^{\left(
2\right)  }\right]  =\sum_{v=-T/2+1}^{T/2-1}\gamma_{1,2}\left(  0,v\right)
+O\left(  T^{-1}\right)
\]
It then follows that
\begin{align*}
E\left[  \mathcal{Y}_{1}^{\left(  1\right)  }\mathcal{Y}_{2}^{\left(
1\right)  }\right]  -E\left[  \mathcal{Y}_{1}\mathcal{Y}_{2}\right]   &
=\sum_{T>\left\vert v\right\vert \geq T/2}\gamma_{1,2}\left(  0,v\right)
+O\left(  T^{-1}\right)  \\
&  =O\left(  T^{-1}\right)
\end{align*}
where the last equality follows from the condition in (\ref{Cond_gam_Tail}).

To establish (\ref{y1y2_T/2_4}) consider
\begin{align}
E\left[  \mathcal{Y}_{1}^{\left(  1\right)  }\mathcal{Y}_{2}^{\left(
1\right)  }\mathcal{Y}_{1}^{\left(  2\right)  }\mathcal{Y}_{2}^{\left(
2\right)  }\right]   &  =\frac{1}{\left(  \sqrt{T/2}\right)  ^{4}}\sum
_{t_{1},t_{2}=1}^{T/2}\sum_{t_{3},t_{4}=T/2+1}^{T}E\left[  Y_{1,t_{1}%
}Y_{2,t_{2}}Y_{1,t_{3}}Y_{2,t_{4}}\right]  \label{Ey1y2_T/2_i'}\\
&  =\frac{1}{\left(  \sqrt{T/2}\right)  ^{4}}\sum_{t_{1},t_{2}=1}^{T/2}%
\sum_{t_{3},t_{4}=T/2+1}^{T}\gamma_{1,2}\left(  t_{1},t_{2}\right)
\gamma_{1,2}\left(  t_{3},t_{4}\right)  \label{Ey1y2_T/2_ii'}\\
&  +\frac{1}{\left(  \sqrt{T/2}\right)  ^{4}}\sum_{t_{1},t_{2}=1}^{T/2}%
\sum_{t_{3},t_{4}=T/2+1}^{T}\gamma_{1,1}\left(  t_{1},t_{3}\right)
\gamma_{2,2}\left(  t_{2},t_{4}\right)  \label{Ey1y2_T/2_iii'}\\
&  +\frac{1}{\left(  \sqrt{T/2}\right)  ^{4}}\sum_{t_{1},t_{2}=1}^{T/2}%
\sum_{t_{3},t_{4}=T/2+1}^{T}\gamma_{1,2}\left(  t_{1},t_{4}\right)
\gamma_{2,1}\left(  t_{2},t_{3}\right)  \label{Ey1y2_T/2_iv'}\\
&  +\frac{1}{\left(  \sqrt{T/2}\right)  ^{4}}\sum_{t_{1},t_{2}=1}^{T/2}%
\sum_{t_{3},t_{4}=T/2+1}^{T}cum_{1,2,1,2}\left(  t_{1},t_{2},t_{3}%
,t_{4}\right)  \nonumber
\end{align}
where the cumulant term is $O\left(  T^{-1}\right)  $ by Assumption
\ref{Cond_Cum_Sum}. The term (\ref{Ey1y2_T/2_ii'}) is
\begin{equation}
\frac{2}{T}\sum_{t_{1},t_{2}=1}^{T/2}\gamma_{1,2}\left(  t_{1},t_{2}\right)
\frac{2}{T}\sum_{t_{3},t_{4}=T/2+1}^{T}\gamma_{1,2}\left(  t_{3},t_{4}\right)
=\left(  \sum_{v=-T/2+1}^{T/2-1}\gamma_{1,2}\left(  0,v\right)  \right)
^{2}+O\left(  T^{-1}\right)  \nonumber
\end{equation}
by (\ref{gam_12_T/2}). By (\ref{Diff_Ey1y2_b}), we further have%
\begin{equation}
\frac{2}{T}\sum_{t_{1},t_{2}=1}^{T/2}\gamma_{1,2}\left(  t_{1},t_{2}\right)
\frac{2}{T}\sum_{t_{3},t_{4}=T/2+1}^{T}\gamma_{1,2}\left(  t_{3},t_{4}\right)
=\left(  \sum_{v=-T+1}^{T-1}\gamma_{1,2}\left(  0,v\right)  \right)
^{2}+O\left(  T^{-1}\right)  .\nonumber
\end{equation}
By (\ref{EY1Y2-alt}), we further have%
\[
\frac{2}{T}\sum_{t_{1},t_{2}=1}^{T/2}\gamma_{1,2}\left(  t_{1},t_{2}\right)
\frac{2}{T}\sum_{t_{3},t_{4}=T/2+1}^{T}\gamma_{1,2}\left(  t_{3},t_{4}\right)
=E\left[  \mathcal{Y}_{1}\mathcal{Y}_{2}\right]  ^{2}+O\left(  T^{-1}\right)
.
\]
For the term (\ref{Ey1y2_T/2_iii'}) it follows that
\begin{align*}
&  \frac{1}{\left(  \sqrt{T/2}\right)  ^{4}}\sum_{t_{1},t_{2}=1}^{T/2}%
\sum_{t_{3},t_{4}=T/2+1}^{T}\gamma_{1,1}\left(  t_{1},t_{3}\right)
\gamma_{2,2}\left(  t_{2},t_{4}\right)  \\
&  =\frac{2}{T}\sum_{t_{1}=1}^{T/2}\sum_{t_{3}=T/2+1}^{T}\gamma_{1,1}\left(
t_{1},t_{3}\right)  \frac{2}{T}\sum_{t_{2}=1}^{T/2}\sum_{t_{3}=T/2+1}%
^{T}\gamma_{2,2}\left(  t_{2},t_{4}\right)
\end{align*}
where with $\nu=t_{3}-t_{1}$ and $\gamma_{1,1}\left(  t_{1},t_{3}\right)
=\gamma_{1,1}\left(  0,t_{3}-t_{1}\right)  $ by stationarity%
\[
\left\vert \frac{2}{T}\sum_{t_{1}=1}^{T/2}\sum_{t_{3}=T/2+1}^{T}\gamma
_{1,1}\left(  t_{1},t_{3}\right)  \right\vert \leq\frac{2}{T}\sum_{\nu
=1}^{T/2}\left\vert \nu\right\vert \left\vert \gamma_{1,1}\left(
0,\nu\right)  \right\vert =O\left(  T^{-1}\right)
\]
by Assumption \ref{Cond_Cov_Sum}. Similarly, we have
\[
\left\vert \frac{2}{T}\sum_{t_{2}=1}^{T/2}\sum_{t_{4}=T/2+1}^{T}\gamma
_{2,2}\left(  t_{2},t_{4}\right)  \right\vert \leq\frac{2}{T}\sum_{\nu
=1}^{T-1}\left\vert \nu\right\vert \left\vert \gamma_{2,2}\left(
0,\nu\right)  \right\vert =O\left(  T^{-1}\right)  .
\]
This implies that (\ref{Ey1y2_T/2_iii'}) is $O\left(  T^{-1}\right)  $.
Likewise (\ref{Ey1y2_T/2_iv'}) is $O\left(  T^{-1}\right)  $. It follows that
\[
E\left[  \left(  \mathcal{Y}_{1}^{\left(  1\right)  }\right)  ^{2}\left(
\mathcal{Y}_{2}^{\left(  2\right)  }\right)  ^{2}\right]  =E\left[
\mathcal{Y}_{1}\mathcal{Y}_{2}\right]  ^{2}+O\left(  T^{-1}\right)  .
\]
To establish (\ref{y1y2_T/2_5}) recall that $E\left[  Y_{1,t}\right]
=E\left[  Y_{2,t}\right]  =0$ and note that because $E\left[  Y_{a,t}\right]
=0$ for all $a\in\left\{  1,2\right\}  $ and all $t$ it follows that
\[
E\left[  Y_{a_{1},t_{1}}Y_{a_{2},t_{2}}Y_{a_{3},t_{3}}\right]
=\operatorname*{Cov}\left(  Y_{a_{1},t_{1}}Y_{a_{2},t_{2}},Y_{a_{3},t_{3}%
}\right)  =\operatorname*{Cov}\left(  Y_{a_{1},t_{1}},Y_{a_{2},t_{2}}%
Y_{a_{3},t_{3}}\right)  =\operatorname*{Cov}\left(  Y_{a_{1},t_{1}}%
Y_{a_{3},t_{3}},Y_{a_{2},t_{2}}\right)  .
\]
Then,%
\begin{align}
E\left[  \left(  \mathcal{Y}_{1}^{\left(  1\right)  }\right)  ^{2}%
\mathcal{Y}_{2}^{\left(  2\right)  }\right]   &  =\frac{1}{\left(  \sqrt
{T/2}\right)  ^{3}}\sum_{t_{1},t_{2}=1}^{T/2}\sum_{t_{3}=T/2+1}^{T}E\left[
Y_{1,t_{1}}Y_{1,t_{2}}Y_{2,t_{3}}\right]  \nonumber\\
&  =\frac{1}{\left(  \sqrt{T/2}\right)  ^{3}}\sum_{t_{1},t_{2}=1}^{T/2}%
\sum_{t_{3}=T/2+1}^{T}\operatorname*{Cov}\left(  Y_{1,t_{1}}Y_{1,t_{2}%
},Y_{2,t_{3}}\right)  \nonumber\\
&  =\frac{1}{\left(  \sqrt{T/2}\right)  ^{3}}\sum_{t_{1}=1}^{T/2}\sum
_{t_{3}=T/2+1}^{T}\operatorname*{Cov}\left(  Y_{1,t_{1}}^{2},Y_{2,t_{3}%
}\right)  \label{y1y1y2_T/2_a}\\
&  +\frac{1}{\left(  \sqrt{T/2}\right)  ^{3}}\sum_{t_{1}=1}^{T/2}\sum
_{t_{2}=t_{1}+1}^{T/2}\sum_{t_{3}=T/2+1}^{T}\operatorname*{Cov}\left(
Y_{1,t_{1}}Y_{1,t_{2}},Y_{2,t_{3}}\right)  \label{y1y1y2_T/2_b}\\
&  +\frac{1}{\left(  \sqrt{T/2}\right)  ^{3}}\sum_{t_{2}=1}^{T/2}\sum
_{t_{1}=t_{2}+1}^{T/2}\sum_{t_{3}=T/2+1}^{T}\operatorname*{Cov}\left(
Y_{1,t_{1}}Y_{1,t_{2}},Y_{2,t_{3}}\right)  \label{y1y1y2_T/2_b1}%
\end{align}
where $Y_{1,t_{1}}Y_{1,t_{2}}$ is measurable $\mathcal{M}_{-\infty}%
^{\max\left(  t_{1,}t_{2}\right)  }$ and $Y_{2,t_{3}}$ is measurable
$\mathcal{M}_{t_{3}}^{\infty}.$ Then, for (\ref{y1y1y2_T/2_a})%
\begin{equation}
\left\vert \frac{1}{\left(  \sqrt{T/2}\right)  ^{3}}\sum_{t_{1}=1}^{T/2}%
\sum_{t_{3}=T/2+1}^{T}\operatorname*{Cov}\left(  Y_{1,t_{1}}^{2},Y_{2,t_{3}%
}\right)  \right\vert \leq\frac{8K_{0}^{3/8}}{\left(  \sqrt{T/2}\right)  ^{3}%
}\sum_{t_{1}=1}^{T/2}\sum_{t_{3}=T/2+1}^{T}\left\vert \gamma_{11,2}\left(
t_{1},t_{1},t_{3}\right)  \right\vert \label{y1y1y2_T/2_c}%
\end{equation}
and letting $u=t_{3}-t_{1}$ one obtains
\begin{align}
\sum_{t_{1}=1}^{T/2}\sum_{t_{3}=T/2+1}^{T}\left\vert \gamma_{11,2}\left(
0,0,t_{3}-t_{1}\right)  \right\vert  &  =\sum_{u=1}^{T/2-1}u\left\vert
\gamma_{11,2}\left(  0,0,u\right)  \right\vert +\sum_{u=T/2}^{T-1}\left(
T-u\right)  \left\vert \gamma_{11,2}\left(  0,0,u\right)  \right\vert
\label{sum_alpha}\\
&  \leq\sum_{u=1}^{T-1}u\left\vert \gamma_{11,2}\left(  0,0,u\right)
\right\vert <\infty,\nonumber
\end{align}
where the inequality uses $T-u\leq u$ for $u\geq T/2$ and the fact
that$\left\vert \gamma_{11,2}\left(  0,0,u\right)  \right\vert \geq0$ for all
$u$. Then, $\sum_{u=1}^{T-1}u\left\vert \gamma_{11,2}\left(  0,0,u\right)
\right\vert <\infty$ holds by the condition in (\ref{Cond_3rdCum_Sum1}). This
shows that (\ref{y1y1y2_T/2_c}) is $O\left(  T^{-3/2}\right)  .$ For
(\ref{y1y1y2_T/2_b}) proceed similarly by bounding
\begin{align*}
\frac{2}{\left(  \sqrt{T/2}\right)  ^{3}}&\left\vert \sum_{t_{1}=1}^{T/2}%
\sum_{t_{2}=t_{1}+1}^{T/2}\sum_{t_{3}=T/2+1}^{T}\operatorname*{Cov}\left(
Y_{1,t_{1}}Y_{1,t_{2}},Y_{2,t_{3}}\right)  \right\vert  \\
&  \leq\frac
{8K_{0}^{3/8}}{\left(  \sqrt{T/2}\right)  ^{3}}\sum_{t_{1}=1}^{T/2}\sum
_{t_{2}=t_{1}+1}^{T/2}\sum_{t_{3}=T/2+1}^{T}\left\vert \gamma_{11,2}\left(
t_{1},t_{2},t_{3}\right)  \right\vert \\
&  \leq\frac{8K_{0}^{3/8}}{\left(  \sqrt{T/2}\right)  ^{3}}\sum_{t_{3}%
=T/2+1}^{T}\sum_{t_{2}=2}^{T/2}\sum_{t_{1}=1}^{t_{2}-1}\left\vert
\gamma_{11,2}\left(  t_{1}-t_{2},0,t_{3}-t_{2}\right)  \right\vert \\
&  =\frac{8K_{0}^{3/8}}{\left(  \sqrt{T/2}\right)  ^{3}}\sum_{t_{3}=T/2+1}%
^{T}\sum_{t_{2}=2}^{T/2}\sum_{t=1-t_{2}}^{-1}\left\vert \gamma_{11,2}\left(
t,0,t_{3}-t_{2}\right)  \right\vert \\
&  \leq\frac{8K_{0}^{3/8}}{\left(  \sqrt{T/2}\right)  ^{3}}\sum_{t_{3}%
=T/2+1}^{T}\sum_{t_{2}=2}^{T/2}\frac{T}{2}\sup_{s\leq0}\left\vert
\gamma_{11,2}\left(  s,0,t_{3}-t_{2}\right)  \right\vert \\
&  \leq\frac{8K_{0}^{3/8}}{\sqrt{T/2}}\sum_{u=1}^{T-1}u\sup_{s\leq0}\left\vert
\gamma_{11,2}\left(  s,0,u\right)  \right\vert =O\left(  T^{-1/2}\right)  ,
\end{align*}
because $\sum_{u=1}^{T-1}u\sup_{s<0}\left\vert \gamma_{11,2}\left(
s,0,u\right)  \right\vert \leq\sum_{u=1}^{\infty}u\sup_{s<0}\left\vert
\gamma_{11,2}\left(  s,0,u\right)  \right\vert <\infty$ is bounded by the
condition in (\ref{Cond_3rdCum_Sum2}).  Finally, for (\ref{y1y1y2_T/2_b1}) it
follows that $Y_{1,t_{1}}Y_{1,t_{2}}$ is measurable w.r.t. $\mathcal{M}%
_{-\infty}^{t_{1}}$ and $Y_{2,t_{3}}$ is measurable $\mathcal{M}_{t_{3}%
}^{\infty}$ such that
\begin{align*}
\frac{1}{\left(  \sqrt{T/2}\right)  ^{3}}&\left\vert \sum_{t_{2}=1}^{T/2}%
\sum_{t_{1}=t_{2}+1}^{T/2}\sum_{t_{3}=T/2+1}^{T}\operatorname*{Cov}\left(
Y_{1,t_{1}}Y_{1,t_{2}},Y_{2,t_{3}}\right)  \right\vert \\
&  \leq\frac
{8K_{0}^{3/8}}{\left(  \sqrt{T/2}\right)  ^{3}}\sum_{t_{2}=1}^{T/2}\sum
_{t_{1}=t_{2}+1}^{T/2}\sum_{t_{3}=T/2+1}^{T}\left\vert \gamma_{11,2}\left(
t_{1},t_{2},t_{3}\right)  \right\vert \\
&  \leq\frac{8K_{0}^{3/8}}{\sqrt{T/2}}\sum_{t_{1}=1}^{T/2}\sum_{t_{3}%
=T/2+1}^{T}\sup_{t\leq0}\left\vert \gamma_{11,2}\left(  0,t,t_{3}%
-t_{1}\right)  \right\vert \\
&  \leq\frac{8K_{0}^{3/8}}{\sqrt{T/2}}\sum_{u=1}^{T}u\sup_{t\leq0}\left\vert
\gamma_{11,2}\left(  0,t,u\right)  \right\vert =O\left(  T^{-1/2}\right)  .
\end{align*}
where $\sum_{u=1}^{T}u\sup_{t\leq0}\left\vert \gamma_{11,2}\left(
0,t,u\right)  \right\vert <\infty$ is bounded by the condition in
(\ref{Cond_3rdCum_Sum3}). It then follows that
\[
E\left[  \left(  \mathcal{Y}_{1}^{\left(  1\right)  }\right)  ^{2}%
\mathcal{Y}_{2}^{\left(  2\right)  }\right]  =O\left(  T^{-3/2}\right)
+O\left(  T^{-1/2}\right)  =O\left(  T^{-1/2}\right)  =o\left(  1\right)  .
\]
Finally, to establish (\ref{y1y2_T/2_6}) consider%
\begin{align}
E\left[  \mathcal{Y}_{1}^{\left(  1\right)  }\mathcal{Y}_{1}^{\left(
2\right)  }\mathcal{Y}_{2}^{\left(  2\right)  }\right]   &  =\frac{1}{\left(
\sqrt{T/2}\right)  ^{3}}\sum_{t_{1}=1}^{T/2}\sum_{t_{2},t_{3}=T/2+1}%
^{T}E\left[  Y_{1,t_{1}}Y_{1,t_{2}}Y_{2,t_{3}}\right]  \label{y1y1y2_T/2_d}\\
&  =\frac{1}{\left(  \sqrt{T/2}\right)  ^{3}}\sum_{t_{1}=1}^{T/2}\sum
_{t_{2},t_{3}=T/2+1}^{T}\operatorname*{Cov}\left(  Y_{1,t_{1}},Y_{1,t_{2}%
}Y_{2,t_{3}}\right)  \nonumber\\
&  =\frac{1}{\left(  \sqrt{T/2}\right)  ^{3}}\sum_{t_{1}=1}^{T/2}\sum
_{t_{2}=T/2+1}^{T}\operatorname*{Cov}\left(  Y_{1,t_{1}},Y_{1,t_{2}}%
Y_{2,t_{2}}\right)  \label{y1y1y2_T/2_d1}\\
&  +\frac{1}{\left(  \sqrt{T/2}\right)  ^{3}}\sum_{t_{1}=1}^{T/2}\sum
_{t_{2}=T/2+1}^{T}\sum_{t_{3}=t_{2}+1}^{T}\operatorname*{Cov}\left(
Y_{1,t_{1}},Y_{1,t_{2}}Y_{2,t_{3}}\right)  .\label{y1y1y2_T/2_d2}\\
&  +\frac{1}{\left(  \sqrt{T/2}\right)  ^{3}}\sum_{t_{1}=1}^{T/2}\sum
_{t_{2}=T/2+1}^{T}\sum_{t_{3}=t_{2}+1}^{T}\operatorname*{Cov}\left(
Y_{1,t_{1}},Y_{1,t_{3}}Y_{2,t_{2}}\right)  \label{y1y1y2_T/2_d3}%
\end{align}
Then, for (\ref{y1y1y2_T/2_d1}) we have%
\begin{align}
\frac{1}{\left(  \sqrt{T/2}\right)  ^{3}}&\left\vert \sum_{t_{1}=1}^{T/2}%
\sum_{t_{2}=T/2+1}^{T}\operatorname*{Cov}\left(  Y_{1,t_{1}},Y_{1,t_{2}%
}Y_{2,t_{3}}\right)  \right\vert \\
&  \leq\frac{1}{\left(  \sqrt{T/2}\right)
^{3}}\sum_{t_{1}=1}^{T/2}\sum_{t_{2}=T/2+1}^{T}\left\vert \gamma_{1,12}\left(
0,t_{2}-t_{1},t_{2}-t_{1}\right)  \right\vert \label{y1y1y2_T/2_e}\\
&  =\frac{1}{\left(  \sqrt{T/2}\right)  ^{3}}\sum_{u=1}^{T-1}u\left\vert
\gamma_{1,12}\left(  0,u,u\right)  \right\vert =O\left(  T^{-3/2}\right)
\nonumber
\end{align}
where $\sum_{u=1}^{\infty}u\left\vert \gamma_{1,12}\left(  0,u,u\right)
\right\vert <\infty$ \ by (\ref{Cond_3rdCum_Sum1}) because $\gamma
_{1,12}\left(  0,u,u\right)  =\gamma_{12,1}\left(  u,u,0\right)  $ and by
stationarity $\gamma_{a_{1}a_{2},a_{3}}\left(  t,t,t+u\right)  =\gamma
_{a_{1}a_{2},a_{3}}\left(  -u,-u,0\right)  .$ For (\ref{y1y1y2_T/2_d2}),
consider the second term in (\ref{y1y1y2_T/2_d}) by setting $u=t_{2}-t_{1}$
\begin{align}
&\left\vert \frac{1}{\left(  \sqrt{T/2}\right)  ^{3}}\sum_{t_{1}=1}^{T/2}%
\sum_{t_{2}=T/2+1}^{T}\sum_{t_{3}=t_{2}+1}^{T}\operatorname*{Cov}\left(
Y_{1,t_{1}},Y_{1,t_{2}}Y_{2,t_{3}}\right)  \right\vert \\
&  \leq\frac
{8K_{0}^{3/8}}{\left(  \sqrt{T/2}\right)  ^{3}}\sum_{t_{1}=1}^{T/2}\sum
_{t_{2}=T/2+1}^{T}\sum_{t_{3}=t_{2}+1}^{T}\gamma_{1,12}\left(  0,u,t_{3}%
-t_{1}\right)  \label{y1y1y2_T/2_f}\\
&  \leq\frac{8K_{0}^{3/8}}{\sqrt{T/2}}\sum_{t_{1}=1}^{T/2}\sum_{t_{2}%
=T/2+1}^{T}\sup_{t>0}\left\vert \gamma_{1,12}\left(  0,u,t+u\right)
\right\vert \nonumber\\
&  \leq\frac{8K_{0}^{3/8}}{\sqrt{T/2}}\sum_{u=1}^{T-1}u\sup_{t>0}\left\vert
\gamma_{1,12}\left(  0,u,t+u\right)  \right\vert =O\left(  T^{-1/2}\right)
\nonumber
\end{align}
where $\sum_{u=1}^{T-1}u\sup_{t>0}\left\vert \gamma_{1,12}\left(
0,u,t+u\right)  \right\vert $ is bounded by the condition in
(\ref{Cond_3rdCum_Sum4}). Finally, (\ref{y1y1y2_T/2_d3}) is analyzed exactly
the same way as (\ref{y1y1y2_T/2_d2}) and is also $O\left(  T^{-1/2}\right)
.$ Then, combining (\ref{y1y1y2_T/2_e}) and (\ref{y1y1y2_T/2_f}) and
substituting in (\ref{y1y1y2_T/2_d}) shows that
\[
E\left[  \mathcal{Y}_{1}^{\left(  1\right)  }\mathcal{Y}_{1}^{\left(
2\right)  }\mathcal{Y}_{2}^{\left(  2\right)  }\right]  =O\left(
T^{-3/2}\right)  +O\left(  T^{-1/2}\right)  =o\left(  1\right)  .
\]

\end{proof}

%%%%%%%%%%

%%% APPENDIX IV

%%%%%%%%%%%%%%%%

\newpage
\begin{center}
{\LARGE Supplementary Appendix IV: Efficient Bias Correction for Cross-section and Panel Data}
\end{center}

\setcounter{section}{0}

\section{V-statistics}

\subsection{Properties of normalized V-statistics}

Consider a statistic of the form

\begin{align*}
W_{i,T,m} & \equiv\frac{1}{T^{m/2}}\sum_{t_{1}=1}^{T}\sum_{t_{2}=1}^{T}\cdots\sum_{t_{m}=1}^{T}k_{1}\left(x_{i,t_{1}}\right)k_{2}\left(x_{i,t_{2}}\right)\cdots k_{m}\left(x_{i,t_{m}}\right)\\
 & \equiv T^{m/2}\overline{k}_{i,1}\overline{k}_{i,2}\cdots\overline{k}_{i,m}\\
 & \equiv T^{m/2}K_{i,T}
\end{align*}

where $E\left[k_{j}\left(x_{i,t}\right)\right]=0$. We will call the
average
\[
W_{T,(m)}=\frac{1}{n}\sum_{i=1}^{n}W_{i,T,m}
\]

the normalized V-statistic of order $m$.
\begin{condition}
\label{About-k}(i) $n=O\left(T\right)$; (ii) $E\left[k_{j}\left(x_{i,t}\right)\right]=0$;
(iii) $\left|k_{j}\left(x_{i,t}\right)\right|\leq CM\left(x_{i,t}\right)$
such that $\sup_{i}E\left[M\left(x_{i,t}\right)^{12}\right]<\infty$,
where $C$ denotes a generic constant.
\end{condition}
\begin{lemma}
\label{WitEOp(1)}Suppose that Condition \ref{About-k} holds. Then,
for $m=1,2,3,4,5,6$, $E[W_{i,T,m}]=O\left(1\right)$.
\end{lemma}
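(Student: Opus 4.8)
The plan is to expand $E[W_{i,T,m}]$ over the $m$-fold sum and exploit the mean-zero property of each $k_j$ together with independence of the data over $t$. Recalling that $W_{i,T,m}=T^{m/2}\overline{k}_{i,1}\cdots\overline{k}_{i,m}$, I would first write
\[
E[W_{i,T,m}]=\frac{1}{T^{m/2}}\sum_{t_{1}=1}^{T}\cdots\sum_{t_{m}=1}^{T}E\Big[\prod_{j=1}^{m}k_{j}(x_{i,t_{j}})\Big],
\]
and observe that, because the $\{x_{i,t}\}_{t}$ are i.i.d.\ over $t$, each summand depends only on the coincidence pattern of the indices $(t_{1},\dots,t_{m})$. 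Grouping the $m$ factors according to which distinct time index they share, the expectation factorizes into a product over these groups. The key observation is that any group consisting of a single factor contributes a factor $E[k_{j}(x_{i,t})]=0$ by Condition \ref{About-k}(ii). Hence a tuple $(t_{1},\dots,t_{m})$ contributes a nonzero term only if every distinct index value it contains is repeated at least twice.

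Next I would count the surviving tuples. If such a tuple has $b$ distinct index values, then since each appears at least twice we must have $b\le\lfloor m/2\rfloor$. The number of tuples with exactly $b$ distinct values is $O(T^{b})$ (choose the $b$ values from $\{1,\dots,T\}$ and assign the $m$ positions among them, the latter a combinatorial count independent of $T$ that correctly handles the fact that the $k_{j}$ need not be equal). Summing over $b\le\lfloor m/2\rfloor$, the total number of surviving tuples is $O(T^{\lfloor m/2\rfloor})$. For each such tuple the factorized expectation is bounded uniformly in the indices: by Condition \ref{About-k}(iii) each group factor $E[\prod_{j\in g}k_{j}(x_{i,t})]$ is bounded in absolute value by $C^{|g|}\big(\sup_{i}E[M(x_{i,t})^{12}]\big)^{|g|/12}$ via H\"older's inequality (valid since each $|g|\le m\le 6\le 12$), so the product over the at most $\lfloor m/2\rfloor$ groups is a finite constant.

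Combining the count and the uniform bound gives
\[
\big|E[W_{i,T,m}]\big|\le\frac{1}{T^{m/2}}\cdot O\!\left(T^{\lfloor m/2\rfloor}\right)=O\!\left(T^{\lfloor m/2\rfloor-m/2}\right),
\]
which is $O(1)$ when $m$ is even and $O(T^{-1/2})$ when $m$ is odd; in either case $E[W_{i,T,m}]=O(1)$ for $m=1,\dots,6$, as claimed. This is essentially the moment combinatorics underlying Lemma \ref{sum-bound}. The only delicate part is the bookkeeping of coincidence patterns and the verification that odd orders gain an extra factor $T^{-1/2}$; the conceptual content lies entirely in the mean-zero cancellation that eliminates every tuple containing a singleton index, and I expect that to be the crux while the counting and H\"older moment bounds are routine.
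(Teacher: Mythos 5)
Your proof is correct for the stated claim, but it takes a genuinely different route from the paper's. The paper first bounds the pure sixth moment $E\big[\big(T^{-1/2}\sum_{t}k_{j}(x_{i,t})\big)^{6}\big]$ for a single $j$ via the standard combinatorics of moments of normalized sums of i.i.d.\ mean-zero variables, then applies H\"older across the $m$ factors to control $E\big[\big|\prod_{j}\big(T^{-1/2}\sum_{t}k_{j}\big)\big|\big]$, and finally Jensen for $m<6$. You instead expand the mixed product $E[W_{i,T,m}]$ directly and run the coincidence-pattern argument on the full $m$-fold sum: singleton indices are killed by $E[k_{j}]=0$, surviving patterns have at most $\lfloor m/2\rfloor$ distinct indices, and each group factor is bounded via H\"older and Lyapunov under Condition \ref{About-k}(iii). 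Both arguments lean on the same mean-zero cancellation and on independence over $t$ (which comes from the panel sampling conditions rather than Condition \ref{About-k} itself — the paper's explicit sixth-moment formula uses it just as implicitly as you do). What each buys: your direct expansion yields the sharper rate $E[W_{i,T,m}]=O(T^{\lfloor m/2\rfloor-m/2})$, i.e.\ $O(T^{-1/2})$ for odd $m$, which the paper's route does not deliver. Conversely, the paper's H\"older route controls $E[|W_{i,T,m}|]$, a strictly stronger $L^{1}$ bound than your bound on the signed expectation $|E[W_{i,T,m}]|$ (the triangle inequality in your expansion only goes one way), and it is that stronger intermediate bound — in particular the single-factor sixth-moment estimate — that is reused in Lemma \ref{WitOp(1)} via Markov's inequality. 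So your argument fully establishes the lemma as stated, but if you intend to feed it into the subsequent lemmas you would still need the absolute-moment version, which your decomposition does not provide for general mixed products.
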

\begin{proof}
For $m=1$ the result is immediate from $E[\sum_{t}k_{1}(x_{i,t})/\sqrt{T}]=0$.
We next prove the result for $m=6$

\begin{eqnarray*}
E\left[\left(\frac{1}{\sqrt{T}}\sum_{t=1}^{T}k_{j}\left(x_{i,t}\right)\right)^{6}\right] & = & \frac{TE\left[k_{j}\left(x_{i,t}\right)^{6}\right]+15\frac{T\left(T-1\right)}{2}E\left[k_{j}\left(x_{i,t}\right)^{2}\right]E\left[k_{j}\left(x_{i,t}\right)^{4}\right]}{T^{3}}\\
 & \quad & +\frac{20\frac{T\left(T-1\right)}{2}E\left[k_{j}\left(x_{i,t}\right)^{3}\right]^{2}+90\frac{T\left(T-1\right)(T-2)}{6}E\left[k_{j}\left(x_{i,t}\right)^{2}\right]^{3}}{T^{3}}\\
 & = & \frac{1}{T^{2}}E\left[k_{j}\left(x_{i,t}\right)^{6}\right]+\frac{T-1}{T^{2}}\Big(\frac{15}{2}E\left[k_{1}\left(x_{i,t}\right)^{2}\right]E\left[k_{j}\left(x_{i,t}\right)^{4}\right]\\
 &  & +10E\left[k_{j}\left(x_{i,t}\right)^{3}\right]^{2}\Big)+15\frac{(T-1)(T-2)}{T^{2}}E\left[k_{j}\left(x_{i,t}\right)^{2}\right]^{3}\\
 & \leq & \Big(\frac{1}{T^{2}}+\frac{35}{2}\frac{T-1}{T^{2}}+15\frac{(T-1)(T-2)}{T^{2}}\Big)E\left[M\left(x_{i,t}\right)^{6}\right]\\
 & < & \infty
\end{eqnarray*}

Repeated application of Holder's inequality then gives
\begin{align*}
E\Big[\Big|\prod_{j=1}^{6}\Big(\frac{1}{\sqrt{T}}\sum_{t=1}^{T}k_{j}(x_{i,t})\Big)\Big|\Big] & \leq\prod_{j=1}^{6}E\Big[\Big|\Big(\frac{1}{\sqrt{T}}\sum_{t=1}^{T}k_{j}(x_{i,t})\Big)^{6}\Big|\Big]^{1/6}\\
 & =O(1)
\end{align*}

the result for smaller $m$ then follows by Jensen's inequality.
\end{proof}
\begin{lemma}
\label{WitOp(1)}Suppose that Condition \ref{About-k} holds. Then,
for $m=1,2,3,4,5,6$, $W_{i,T,m}=O_{p}\left(1\right)$.
\end{lemma}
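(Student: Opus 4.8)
The plan is to deduce stochastic boundedness directly from a first-moment bound via Markov's inequality, leveraging the estimates already established in Lemma \ref{WitEOp(1)}. The essential observation is that, although Lemma \ref{WitEOp(1)} is phrased in terms of $E[W_{i,T,m}]$, its proof in fact controls the \emph{absolute} first moment: the concluding H\"older step exhibits the bound
\[
E\bigl[\,\lvert W_{i,T,m}\rvert\,\bigr]
= E\Bigl[\Bigl\lvert \prod_{j=1}^{m}\Bigl(\tfrac{1}{\sqrt{T}}\sum_{t=1}^{T}k_{j}(x_{i,t})\Bigr)\Bigr\rvert\Bigr]
= O(1),
\]
uniformly in $T$ (and in $i$) for each $m\in\{1,\dots,6\}$. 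First I would make this explicit by recalling the intermediate computation: under Condition \ref{About-k} each factor $T^{-1/2}\sum_t k_j(x_{i,t})$ has bounded sixth moment, since a normalized sum of mean-zero terms whose sixth moments are dominated by $E[M(x_{i,t})^6]\le (E[M(x_{i,t})^{12}])^{1/2}<\infty$ has sixth moment $O(1)$; repeated application of H\"older's inequality then bounds the expectation of the product of up to six such factors.

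With the uniform bound $\sup_{T} E[\lvert W_{i,T,m}\rvert]\le C<\infty$ in hand, the argument is immediate. For any $\varepsilon>0$, setting $K=C/\varepsilon$ and applying Markov's inequality gives
\[
\Pr\bigl[\,\lvert W_{i,T,m}\rvert > K\,\bigr]
\le \frac{E[\lvert W_{i,T,m}\rvert]}{K}
\le \varepsilon
\qquad\text{for all }T.
\]
Since the bound is uniform in $T$, the family $\{W_{i,T,m}\}_{T}$ is tight, which is precisely the assertion $W_{i,T,m}=O_{p}(1)$.

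I do not anticipate a genuine obstacle, as the substantive estimate is inherited from Lemma \ref{WitEOp(1)}. The only point requiring care is to insist that the previous proof yields a bound on $E[\lvert W_{i,T,m}\rvert]$ rather than merely on the signed mean $E[W_{i,T,m}]$: the latter would be inadequate for a Markov-type argument, since cancellation in a signed expectation could coexist with large fluctuations of the statistic itself. Fortunately the final display in the proof of Lemma \ref{WitEOp(1)} already delivers the absolute-moment bound, so no additional work is needed.
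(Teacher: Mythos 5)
Your proof is correct and follows essentially the same route as the paper: both deduce $O_{p}(1)$ from the moment bounds underlying Lemma \ref{WitEOp(1)} via Markov's inequality. The only cosmetic difference is that the paper applies Markov to the sixth moment of each individual normalized sum and treats the product factor by factor, whereas you apply it once to the absolute first moment of the whole product; your remark that the H\"older step in the earlier proof controls $E[|W_{i,T,m}|]$ rather than merely the signed mean is exactly the right point of care.
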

\begin{proof}
For $m=6$ Markov's inequality gives
\[
Pr\big(\left(\frac{1}{\sqrt{T}}\sum_{t=1}^{T}k_{j}\left(x_{i,t}\right)\right)^{6}>t\big)\leq\frac{E\big[\left(\frac{1}{\sqrt{T}}\sum_{t=1}^{T}k_{j}\left(x_{i,t}\right)\right)^{6}\big]}{t}
\]

which, with Lemma \ref{WitEOp(1)}, implies $\big(\frac{1}{\sqrt{T}}\sum_{t=1}^{T}k_{j}(x_{i,t})\big)^{6}=O_{p}(1)$.
For other $m<6$ applying Jensen's inequality gives $\big|\big(\frac{1}{\sqrt{T}}\sum_{t=1}^{T}k_{j}(x_{i,t})\big)^{m}\big|=O_{p}(1)$.
\end{proof}
\begin{lemma}
\label{WitVOp(1)}Suppose that Condition \ref{About-k} holds. Then,
for $m=1,\dots,6$, $Var(W_{i,T,m})=O\left(1\right)$.
\end{lemma}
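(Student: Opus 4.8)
The plan is to reduce the variance to a raw second moment and then exploit the product structure of $W_{i,T,m}$. Since $\operatorname{Var}(W_{i,T,m})\leq E[W_{i,T,m}^{2}]$, it suffices to show $E[W_{i,T,m}^{2}]=O(1)$; equivalently one may write $\operatorname{Var}(W_{i,T,m})=E[W_{i,T,m}^{2}]-(E[W_{i,T,m}])^{2}$ and note that the subtracted term is already $O(1)$ by Lemma \ref{WitEOp(1)}, so in either case only the second moment needs to be controlled.

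First I would rewrite $W_{i,T,m}$ as a product of standardized sums. Setting $S_{i,j}\equiv T^{-1/2}\sum_{t=1}^{T}k_{j}(x_{i,t})$, the definition $W_{i,T,m}=T^{m/2}\bar{k}_{i,1}\cdots\bar{k}_{i,m}$ together with $T^{1/2}\bar{k}_{i,j}=S_{i,j}$ yields $W_{i,T,m}=\prod_{j=1}^{m}S_{i,j}$, and hence $W_{i,T,m}^{2}=\prod_{j=1}^{m}S_{i,j}^{2}$. I would then apply the generalized H\"older inequality with $m$ conjugate exponents all equal to $m$ (exactly as in the proof of Lemma \ref{WitEOp(1)}, but with squared factors), giving
\[
E\Big[\prod_{j=1}^{m}S_{i,j}^{2}\Big]\leq\prod_{j=1}^{m}\big(E[S_{i,j}^{2m}]\big)^{1/m}.
\]
It then remains to show that each $E[S_{i,j}^{2m}]=O(1)$.

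For this I would invoke the i.i.d.\ moment bound of Lemma \ref{sum-bound}, using that the $x_{i,t}$ are independent across $t$ and that $k_{j}$ is mean zero. Taking $k_{j}(x_{i,t})$ as the summand and $T$ as the sample size, Lemma \ref{sum-bound} gives $E[(\sum_{t=1}^{T}k_{j}(x_{i,t}))^{2m}]=C(m)T^{m}+o(T^{m})$, so that $E[S_{i,j}^{2m}]=T^{-m}(C(m)T^{m}+o(T^{m}))=O(1)$. That moment bound requires $E[k_{j}(x_{i,t})^{2m}]<\infty$; by Condition \ref{About-k}(iii), $|k_{j}|\leq CM$ gives $E[k_{j}^{2m}]\leq C^{2m}E[M^{2m}]$, which is finite for $2m\leq 12$ by the envelope condition $\sup_{i}E[M^{12}]<\infty$ combined with Lyapunov's inequality. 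Combining this with the H\"older step yields $E[W_{i,T,m}^{2}]=O(1)$, and therefore $\operatorname{Var}(W_{i,T,m})=O(1)$.

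The main obstacle is merely the bookkeeping on the moment order: the restriction $m\leq 6$ is exactly what guarantees $2m\leq 12$, so that the twelfth-moment envelope condition supplies the finite $2m$-th moments needed both for the H\"older inequality and for Lemma \ref{sum-bound}. No genuinely new estimate is required beyond those already in hand; the computation mirrors that of Lemma \ref{WitEOp(1)}, with each single factor replaced by its square.
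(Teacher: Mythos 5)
Your proposal is correct, and it reaches the bound by a genuinely different route than the paper. The paper's proof attacks $\operatorname{Var}(W_{i,T,m})$ head-on: it expands the variance as a double sum over $2m$ time indices, uses the mean-zero property of the $k_j$ together with independence over $t$ to argue that only index tuples in which every time period appears at least twice contribute (hence at most $m$ distinct periods and $O(T^m)$ surviving tuples), and bounds each surviving term by a constant via repeated H\"older applied to the individual factors $k_j(x_{i,t})$. You instead discard the centering via $\operatorname{Var}(W_{i,T,m})\leq E[W_{i,T,m}^2]$, factor $W_{i,T,m}^2=\prod_j S_{i,j}^2$, apply generalized H\"older at the level of the normalized sums to reduce everything to $E[S_{i,j}^{2m}]$, and then invoke the i.i.d.\ moment bound of Lemma \ref{sum-bound} (which the paper states for the cross-sectional analysis but which is a general fact about centered i.i.d.\ sums) to conclude $E[S_{i,j}^{2m}]=O(1)$. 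The combinatorial work that the paper does explicitly is, in your version, delegated to Lemma \ref{sum-bound}; this makes your argument shorter, mirrors the structure of the paper's own proof of Lemma \ref{WitEOp(1)} more closely, and makes transparent that the cutoff $m\leq 6$ is exactly what the twelfth-moment envelope $\sup_i E[M(x_{i,t})^{12}]<\infty$ permits (you need $E[k_j^{2m}]<\infty$ with $2m\leq 12$ both for H\"older and for Lemma \ref{sum-bound}). Both approaches consume precisely the same moment condition, so neither is more general; yours is the more economical given what is already proved elsewhere in the paper.
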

\begin{proof}
We first prove the result for $m=6$. To reduce notation, we write
$K_{i}(t_{1},\dots,t_{m})=k_{1}(x_{i,t_{1}})\cdots k_{m}(x_{i,t_{m}})$.
First, note that for the term
\[
E\Big[\big(K_{i}(s_{1},\dots,s_{m})-E[K_{i}(s_{1},\dots,s_{m})]\big)\big(K_{i}(t_{1},\dots,t_{m})-E[K_{i}(t_{1},\dots,t_{m})]\big)\Big]
\]

to be non-zero, we must have at last two of each time period so that
there are at most $m$ unique time periods in $(s_{1},\dots,s_{m},t_{1},\dots,t_{m})$.
Using this observation, and letting $\sum_{\mathcal{T}_{6}}$ be the
summation over sets of $(s_{1},\dots,s_{m},t_{1},\dots,t_{m})$ with
at most six unique time periods,

\begin{align*}
Var\bigg( & \Big(\frac{1}{\sqrt{T}}\sum_{t=1}^{T}k_{1}(x_{i,t})\Big)\cdots\Big(\frac{1}{\sqrt{T}}\sum_{t=1}^{T}k_{6}(x_{i,t})\Big)\bigg)\\
 & =E\bigg[\bigg(\frac{1}{T^{3}}\sum_{t_{1},\dots,t_{6}}\big(K_{i}(t_{1},\dots,t_{6})-E[K_{i}(t_{1},\dots,t_{6})]\big)\bigg)^{2}\bigg]\\
 & =\frac{1}{T^{6}}\sum_{s_{1}\dots,s_{6}}\sum_{t_{1},\dots,t_{6}}E\Big[\big(K_{i}(s_{1},\dots,s_{6})-E[K_{i}(s_{1},\dots,s_{6})]\big) \times \\
 &\qquad\big(K_{i}(t_{1},\dots,t_{6})-E[K_{i}(t_{1},\dots,t_{6})]\big)\Big]\\
 & =\frac{1}{T^{6}}\sum_{\mathcal{T}_{6}}E\Big[K_{i}(s_{1},\dots,s_{6})K_{i}(t_{1},\dots,t_{6})\Big]-E[K_{i}(s_{1},\dots,s_{6})]E[K_{i}(t_{1},\dots,t_{6})]\\
 & \leq\frac{1}{T^{6}}\sum_{\mathcal{T}_{6}}CE\big[M(x_{it})^{12}\big]+O(1)\\
 & =O(1)
\end{align*}

where the inequality uses Condition \ref{About-k}, and the fact that
(from repeated application of Holder's inequality)
\[
E\Big[K_{i}(t_{1},\dots,t_{6})K_{i}(s_{1},\dots,s_{6})\Big]\leq\prod_{j=1}^{6}(E[|k_{j}(x_{is_{j}})|^{12}]E[|k_{j}(x_{it_{j}})|^{12}])^{1/12}
\]

It follows similarly, that $Var(W_{i,T,m})=O\left(1\right)$ for $m=1,\dots,5$.
\end{proof}

\subsection{Functions of normalized V-statistics}\label{subsec:nomV-def}

Now, consider a statistic of the form 
\[
W_{T,(m_{1},\dots,m_{L})}=W_{\left[1\right],T}W_{\left[2\right],T}\cdots W_{\left[L\right],T}
\]
where 
\begin{eqnarray*}
W_{\left[\ell\right],T} & \equiv & \frac{1}{n}\sum_{i=1}^{n}W_{i,T,m_{l}}
\end{eqnarray*}
Therefore, $W_{T}$ is the product of $m_{1},\cdots,m_{L}$ normalized
V-statistics. We will also call them normalized V-statistic of order
$\sum_{\ell=1}^{L}m_{\ell}$.
\begin{lemma}
\label{Wit_prodV}Suppose that Condition \ref{About-k} holds. Then,
for $\sum_{l=1}^{L}m_{l}\leq6$

\begin{align*}
Var(W_{i_{1},T,m_{1}}\cdots W_{i_{L},T,m_{L}}) & =O(1)
\end{align*}
\end{lemma}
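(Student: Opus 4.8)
The plan is to reduce the variance bound to a second-moment bound and then run a counting argument on the time indices, identical in spirit to the proof of Lemma \ref{WitVOp(1)}. Write $Y = W_{i_1,T,m_1}\cdots W_{i_L,T,m_L}$ and set $M=\sum_{\ell=1}^{L}m_\ell\le 6$. Since $\operatorname{Var}(Y)\le E[Y^2]$, it suffices to show $E[Y^2]=O(1)$. Each factor $W_{i_\ell,T,m_\ell}$ is a $T^{-m_\ell/2}$-scaled sum over $m_\ell$ time indices of a product of mean-zero functions $k_{\ell,j}(x_{i_\ell,\cdot})$, so $Y$ itself is a $T^{-M/2}$-scaled sum over $M$ time indices of a product of $M$ such functions, and $Y^2$ is a $T^{-M}$-scaled sum over $2M$ time indices of a product of $2M$ functions.

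First I would expand $E[Y^2]=T^{-M}\sum_{\mathbf s,\mathbf t}E[\prod_{(\ell,j)}k_{\ell,j}(x_{i_\ell,s_{\ell,j}})k_{\ell,j}(x_{i_\ell,t_{\ell,j}})]$, where $\mathbf s$ and $\mathbf t$ index the two copies of $Y$ and each of the $2M$ summation variables ranges freely over $\{1,\dots,T\}$. The key observation is that the data are independent across $t$ within each individual and independent across individuals, so each expectation factors over the distinct (individual, time) cells appearing in the term. Because every $k_{\ell,j}$ has mean zero, any term in which some cell is occupied by exactly one of the $2M$ functions factors out a zero mean and vanishes. Hence only configurations in which every occupied cell carries at least two of the $2M$ functions contribute; such configurations have at most $M$ distinct occupied cells, so at most $M$ of the summation variables range freely, leaving $O(T^M)$ surviving terms.

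Next I would bound a single surviving expectation. Each such term is the expectation of a product of at most $2M\le 12$ functions, each dominated by $C\,M(x_{it})$ by Condition \ref{About-k}(iii); generalized H\"older then bounds it by $C\sup_i E[M(x_{it})^{2M}]\le C\sup_i E[M(x_{it})^{12}]<\infty$, using $2M\le 12$ and Lyapunov's inequality. Combining the count and the bound gives $E[Y^2]\le T^{-M}\cdot O(T^M)\cdot C\sup_i E[M(x_{it})^{12}]=O(1)$, and the lemma follows.

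The main obstacle is the combinatorial bookkeeping of the second paragraph: one must argue carefully that, although $Y^2$ carries $2M$ separate summation indices coming from the $L$ distinct factors and the two copies, the ``every occupied cell hit at least twice'' requirement caps the number of free indices at $M$ rather than something larger, and that the grouping respects both the within-individual independence across $t$ and the across-individual independence (so that a time label shared between two different individuals still corresponds to two distinct cells, each needing its own pairing). I expect the moment requirement to be exactly tight: since $\sum_\ell m_\ell\le 6$ forces $2M\le 12$, the twelfth-moment bound in Condition \ref{About-k}(iii) is precisely what the H\"older step consumes, which is presumably why the order-$6$ ceiling appears in the statement.
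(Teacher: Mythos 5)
Your proof is correct, but it takes a genuinely different route from the paper's. The paper first reduces to the case of pairwise distinct individuals $i_{1}\neq\dots\neq i_{L}$ by observing that any two factors sharing the same $i$ can be merged into a single V-statistic of order $m_{1}+m_{2}$ (still $\leq 6$), and then uses independence across individuals to write
\[
Var\big(W_{i_{1},T,m_{1}}\cdots W_{i_{L},T,m_{L}}\big)=E[W_{i_{1},T,m_{1}}^{2}]\cdots E[W_{i_{L},T,m_{L}}^{2}]-\big(E[W_{i_{1},T,m_{1}}]\cdots E[W_{i_{L},T,m_{L}}]\big)^{2},
\]
each factor being $O(1)$ by Lemmas \ref{WitEOp(1)} and \ref{WitVOp(1)}; the whole argument is three lines once those lemmas are in hand. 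You instead bound $Var(Y)\leq E[Y^{2}]$ directly and run the occupancy-counting argument (every occupied $(i,t)$ cell must be hit at least twice, hence at most $M$ free indices, hence $O(T^{M})$ surviving terms against the $T^{-M}$ normalization), which is essentially the proof of Lemma \ref{WitVOp(1)} redone at the level of the full product. Your combinatorial bookkeeping is sound: the cells correctly respect both the across-$i$ and within-$i$ independence, the $\leq M$ cap on distinct cells is right, and the H\"older step consumes exactly the twelfth moment of Condition \ref{About-k}(iii) since $2M\leq 12$. What your approach buys is self-containment and the elimination of the case-split on coincident individuals (the cell structure absorbs it automatically), plus an explicit explanation of why the order-$6$ ceiling and the twelfth-moment assumption are tight; what the paper's approach buys is brevity, since the counting work has already been done once in Lemma \ref{WitVOp(1)} and need not be repeated.
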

\begin{proof}
Firstly, note that it is sufficient to consider the case $i_{1}\ne i_{2}\ne\dots\ne i_{L}$
since if any two V-statistics share the same $i$ (say $i_{1}=i_{2}$)
we may consider them as a single V-statistic with $m=m_{1}+m_{2}$.
Then,
\begin{align*}
Var\big(W_{i_{1},T,m_{1}}\cdots W_{i_{L},T,m_{L}}\big) & =E[W_{i_{1},T,m_{1}}^{2}]\cdots E[W_{i_{L},T,m_{L}}^{2}]-\big(E[W_{i_{1},T,m_{1}}]\cdots E[W_{i_{L},T,m_{L}}]\big)^{2}\\
 & =O(1)
\end{align*}

from Lemmas \ref{WitEOp(1)} and \ref{WitVOp(1)}.
\end{proof}
\begin{lemma}
\label{WtVOp()}Suppose that Condition \ref{About-k} holds. Then,
for $\bar{m}=\sum_{l=1}^{L}m_{l}=1,2,3,4,5,6$, we have $W_{T,(m_{1},\dots,m_{L})}=O_{p}(1)$,
and $Var(W_{T,(m_{1},\dots,m_{L})})=O(n^{-1})$.
\end{lemma}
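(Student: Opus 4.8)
The plan is to recognize that $W_{T,(m_1,\dots,m_L)}$ is nothing more than a product of $L$ sample averages over the cross-sectional index $i$, and to exploit independence of the data across $i$ together with the individual-level moment bounds already established in Lemmas \ref{WitEOp(1)}, \ref{WitVOp(1)}, and \ref{Wit_prodV}. First I would set $g_\ell(i):=W_{i,T,m_\ell}$ and $\bar g_\ell:=\frac1n\sum_{i=1}^n g_\ell(i)=W_{[\ell],T}$, so that
\[
W_{T,(m_1,\dots,m_L)}=\prod_{\ell=1}^L \bar g_\ell ,
\]
and note that for each fixed $\ell$ the summands $g_\ell(1),\dots,g_\ell(n)$ are independent across $i$. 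By Lemma \ref{WitEOp(1)} each $g_\ell(i)$ has mean $O(1)$ uniformly in $i$, and by Lemma \ref{WitVOp(1)} each has variance $O(1)$ uniformly in $i$. Hence $E[\bar g_\ell]=O(1)$ and, by independence across $i$, $Var(\bar g_\ell)=n^{-2}\sum_i Var(g_\ell(i))=O(n^{-1})$, so $\bar g_\ell=O_p(1)$. Since $L$ is fixed, the product of finitely many $O_p(1)$ factors is $O_p(1)$, giving the first claim.

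For the variance bound I would expand the second moment over the pair of index tuples $\mathbf i=(i_1,\dots,i_L)$ and $\mathbf j=(j_1,\dots,j_L)$:
\[
Var\!\left(\prod_{\ell=1}^L \bar g_\ell\right)=\frac{1}{n^{2L}}\sum_{\mathbf i,\mathbf j}\Big(E\big[A_{\mathbf i}B_{\mathbf j}\big]-E[A_{\mathbf i}]\,E[B_{\mathbf j}]\Big),
\]
where $A_{\mathbf i}=\prod_\ell g_\ell(i_\ell)$ and $B_{\mathbf j}=\prod_\ell g_\ell(j_\ell)$. The crucial observation is that whenever the individuals appearing in $\mathbf i$ and those appearing in $\mathbf j$ are disjoint as sets, $A_{\mathbf i}$ and $B_{\mathbf j}$ are functions of disjoint, hence independent, blocks of data, so $E[A_{\mathbf i}B_{\mathbf j}]=E[A_{\mathbf i}]E[B_{\mathbf j}]$ and the corresponding summand vanishes. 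Only configurations in which $\mathbf i$ and $\mathbf j$ share at least one common individual index survive, and there are at most $L^2 n^{2L-1}=O(n^{2L-1})$ of these.

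It then remains to bound each surviving summand uniformly by a constant. By Cauchy--Schwarz, $|E[A_{\mathbf i}B_{\mathbf j}]|\le (E[A_{\mathbf i}^2])^{1/2}(E[B_{\mathbf j}^2])^{1/2}$, and $E[A_{\mathbf i}^2]=Var(A_{\mathbf i})+(E[A_{\mathbf i}])^2=O(1)$ by Lemma \ref{Wit_prodV} (applicable precisely because $\sum_\ell m_\ell=\bar m\le 6$; when several indices in $\mathbf i$ coincide, the corresponding factors merge into a single normalized V-statistic of combined order at most $\bar m$) together with Lemma \ref{WitEOp(1)}; the same bound holds for $E[B_{\mathbf j}^2]$ and for $E[A_{\mathbf i}]E[B_{\mathbf j}]$. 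Hence each summand is $O(1)$ uniformly, and
\[
Var\!\left(\prod_{\ell=1}^L \bar g_\ell\right)=\frac{O(n^{2L-1})}{n^{2L}}=O(n^{-1}),
\]
as required. The main obstacle is the combinatorial bookkeeping in this last step: one must argue carefully that the disjoint-index configurations cancel exactly against $(E[\prod_\ell\bar g_\ell])^2$, count the shared-index configurations as $O(n^{2L-1})$, and verify that the per-configuration moment bounds are in force. This is exactly where the restriction $\bar m\le 6$ and the twelfth-moment requirement of Condition \ref{About-k} enter, since the worst case --- all $2L$ factors landing on a single individual --- produces the second moment of a combined-order-$\bar m$ V-statistic, whose finiteness requires moments of order $2\bar m\le 12$.
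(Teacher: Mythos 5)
Your proposal is correct and follows essentially the same route as the paper's own proof: the variance bound rests on the observation that covariance terms with disjoint cross-sectional index tuples vanish by independence, the surviving configurations number $O(n^{2L-1})$, and each is bounded via Cauchy--Schwarz together with Lemmas \ref{WitEOp(1)} and \ref{Wit_prodV} (with coinciding indices absorbed into a merged V-statistic of order at most $\bar m\le 6$), exactly as the paper does for its representative cases $L=1,2,6$. Your derivation of the $O_p(1)$ claim directly from the product of $O_p(1)$ factors, rather than via Chebyshev applied to the full statistic, is a cosmetic variation only.
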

\begin{proof}
We drop the subscript $(m_{1},\dots,m_{L})$ to reduce notation, and
prove the lemma for $L=1,2$ and $6$; the remaining values can be
shown in a similar manner. For $L=1$ we have

\begin{align*}
Var\big(W_{T}\big) & =Var\big(\frac{1}{n}\sum_{i=1}^{n}W_{i,T}\big)\\
 & =\frac{1}{n^{2}}\sum_{i=1}^{n}Var\big(W_{iT}\big)\\
 & =O(\frac{1}{n})
\end{align*}

from Lemma \ref{WitVOp(1)}. Next, for $L=2$, using Lemma \ref{Wit_prodV}

\begin{align*}
Var\big(W_{T}\big) & =Var\big(\frac{1}{n^{2}}\sum_{i,j=1}^{n}W_{i,T,m_{1}}W_{j,T,m_{2}}\big)\\
 & =\frac{1}{n^{4}}\Big\{\sum_{i}Var(W_{i,T,m_{1}}W_{i,T,m_{2}})+\sum_{i_{1}\ne i_{2}}Var(W_{i_{1},T,m_{1}}W_{i_{2},T,m_{2}})\\
 & \quad+\sum_{i_{1}\ne i_{2}}Cov(W_{i_{1},T,m_{1}}W_{i_{2},T,m_{2}},W_{i_{2},T,m_{1}}W_{i_{1},T,m_{2}})\\
 & \quad+\sum_{i_{1}\ne i_{2}\ne i_{3}}\Big(Cov(W_{i_{1},T,m_{1}}W_{i_{2},T,m_{2}},W_{i_{1},T,m_{1}}W_{i_{3},T,m_{2}})\\
 &\qquad +Cov(W_{i_{1},T,m_{1}}W_{i_{2},T,m_{2}},W_{i_{3},T,m_{1}}W_{i_{2},T,m_{2}})\Big)\Big\}\\
 & \leq\frac{1}{n^{4}}\sum_{i_{1}\ne i_{2}\ne i_{3}}\Big(\big(Var(W_{i_{1},T,m_{1}}W_{i_{2},T,m_{2}})Var(W_{i_{1},T,m_{1}}W_{i_{3},T,m_{2}})\big)^{1/2}\\
 & \quad+\big(Var(W_{i_{1},T,m_{1}}W_{i_{2},T,m_{2}})Var(W_{i_{3},T,m_{1}}W_{i_{2},T,m_{2}})\big)^{1/2}\Big)+O_{p}(\frac{1}{n^{2}})\\
 & =O_{p}(\frac{1}{n})
\end{align*}

Similarly, for $L=6$ , we write $W_{\{i\}}=W_{i_{1}T,m_{1}}\cdots W_{i_{6},T,m_{6}}$
for the product over the set for the set $\{i\}=\{i_{1},\dots,i_{6}\}$
and begin by noting that $Cov(W_{\{i\}},W_{\{j\}})$ is equal to zero
whenever $\{i\}$ and $\{j\}$ have no elements in common. Then, there
are at most $2m-1=11$ unique $i$ values, and $O(n^{11})$ possible
combinations of $\{i\},\{j\}$ giving non-zero covariances. By applying
Cauchy-Schwarz and Lemma \ref{Wit_prodV}, these covariances are $O(1)$
for any $\{i\},\{j\}$.

\begin{align*}
Var\big(W_{T}\big) & =Var\big(\frac{1}{n^{6}}\sum_{i_{1},\dots i_{6}}W_{\{i_{1},\dots,i_{6}\}}\big)\\
 & =\frac{1}{n^{12}}\sum_{i_{1},\dots i_{6}}\sum_{j_{1},\dots j_{6}}Cov\big(W_{\{i_{1},\dots,i_{6}\}},W_{\{j_{1},\dots,j_{6}\}}\big)\\
 & =\frac{1}{n^{12}}O(n^{11})=O(n^{-1})
\end{align*}

the remaining $L$ are proven similarly. Now, since $Var(W_{T})=o(1)$,
by Chebyshev's inequality we have $W_{T}-E[W_{T}]=o_{p}(1)$. By Lemma
\ref{WitEOp(1)}, $E[W_{T}]=O(1)$ so that we have $W_{T}=O_{p}(1)$
as required.
\end{proof}

\section{Jackknifed normalized V-statistics\label{sec:V-jack}}

Consider the average of delete-$t$ estimators 
\[
\frac{1}{T}\sum_{t=1}^{T}W_{T,(m_{1,\dots,}m_{L})}^{\left(-t\right)}=\frac{1}{T}\sum_{t=1}^{T}W_{\left[1\right],T,\left(m_{1}\right)}^{(-t)}\cdots W_{\left[L\right],T,\left(m_{L}\right)}^{(-t)}
\]
where 
\begin{eqnarray*}
W_{\left[\ell\right],T,(m_{l})}^{(-t)} & \equiv & \frac{1}{n}\sum_{i=1}^{n}\frac{1}{\left(T-1\right)^{m_{\ell}/2}}\sum_{t_{1}\neq t}^{T}\sum_{t_{2}\neq t}^{T}\cdots\sum_{t_{m_{\ell}}\neq t}^{T}k_{\left[\ell,1\right]}\left(x_{i,t_{1}}\right)k_{\left[\ell,2\right]}\left(x_{i,t_{2}}\right)\cdots k_{\left[\ell,m_{\ell}\right]}\left(x_{i,t_{m_{l}}}\right)\\
 & = & \left(T-1\right)^{m_{\ell}/2}\frac{1}{n}\sum_{i=1}^{n}\frac{T\overline{k}_{i,\left[\ell,1\right]}-k_{\left[\ell,1\right]}\left(x_{i,t}\right)}{T-1}\frac{T\overline{k}_{i,\left[\ell,2\right]}-k_{\left[\ell,2\right]}\left(x_{i,t}\right)}{T-1}\\
 &  & \quad\cdots\frac{T\overline{k}_{i,\left[\ell,m_{\ell}\right]}-k_{\left[\ell,m_{\ell}\right]}\left(x_{i,t}\right)}{T-1}\\
 & = & \frac{1}{\left(T-1\right)^{m_{\ell}/2}}\frac{1}{n}\sum_{i=1}^{n}\left(T\overline{k}_{i,\left[\ell,1\right]}-k_{\left[\ell,1\right]}\left(x_{i,t}\right)\right)\left(T\overline{k}_{i,\left[\ell,2\right]}-k_{\left[\ell,2\right]}\left(x_{i,t}\right)\right)\\
 &  & \quad\cdots\left(T\overline{k}_{i,\left[\ell,m_{\ell}\right]}-k_{\left[\ell,m_{\ell}\right]}\left(x_{i,t}\right)\right)
\end{eqnarray*}

The jackknifed V-statistic is
\[
\tilde{W}_{T,(m_{1,\dots,}m_{L})}=TW_{T,(m_{1,\dots,}m_{L})}-(T-1)\left(\frac{T}{T-1}\right)^{\bar{m}/2}\frac{1}{T}\sum_{t=1}^{T}W_{T,,(m_{1,\dots,}m_{L})}^{(-t)}
\]

where $\bar{m}=\sum_{l}m_{l}$.

We begin by characterizing the effect of the jackknife on these statistics.
\begin{lemma}
\label{WiT_jack}Let $\tilde{W}_{i,T,m}=TW_{i,T,m}-(T-1)(\frac{T}{T-1})^{m/2}\frac{1}{T}\sum_{t}W_{i,T,m}^{(-t)}$.
Then we have

For $m=1$
\[
\tilde{W}_{i,T,1}=W_{i,T,1}
\]

For $m=2$
\[
\tilde{W}_{i,T,2}=\frac{1}{T-1}\sum_{s\ne t}k_{1}(x_{i,s})k_{2}(x_{i,t})
\]

For $m=3$
\begin{align*}
\tilde{W}_{i,T,3} & =W_{i,T,3}+\frac{1-2T}{T^{1/2}(T-1)}\frac{1}{T}\sum_{t}k_{1}(x_{i,t})k_{2}(x_{i,t})k_{3}(x_{i,t})\\
 & \quad+\frac{T^{2}+3T-1}{(T-1)^{2}}\frac{1}{T^{3/2}}\sum_{t_{1}\ne t_{2}}\big(k_{1}(x_{i,t_{1}})k_{2}(x_{i,t_{2}})k_{3}(x_{i,t_{2}}) \\
 &\qquad +k_{1}(x_{i,t_{2}})k_{2}(x_{i,t_{1}})k_{3}(x_{i,t_{2}})+k_{1}(x_{i,t_{2}})k_{2}(x_{i,t_{2}})k_{3}(x_{i,t_{!}})\big)\\
 & \quad+\frac{3T-1}{(T-1)^{2}}\frac{1}{T^{3/2}}\sum_{t_{1}\neq t_{2}\ne t_{3}}k_{1}(x_{i,t_{1}})k_{2}(x_{i,t_{2}})k_{3}(x_{i,t_{3}})
\end{align*}

For $m=4$
\begin{align*}
\tilde{W}_{i,T,4} & =W_{i,T,4}-\frac{3T^{2}-3T+1}{T(T-1)^{2}}\frac{1}{T}\sum_{t}k_{1}(x_{i,t})k_{2}(x_{i,t})k_{3}(x_{i,t})k_{4}(x_{i,t})\\
 & \quad-\frac{2T^{3}-6T^{2}+4T-1}{T(T-1)^{3}}\Big\{\frac{1}{T}\sum_{t_{1}\ne t_{2}}\big(k_{1}(x_{i,t_{1}})k_{2}(x_{i,t_{2}})k_{3}(x_{i,t_{2}})k_{4}(x_{i,t_{2}}) \\
 & \quad\quad +k_{1}(x_{i,t_{2}})k_{2}(x_{i,t_{1}})k_{3}(x_{i,t_{2}})k_{4}(x_{i,t_{2}})+k_{1}(x_{i,t_{2}})k_{2}(x_{i,t_{2}})k_{3}(x_{i,t_{1}})k_{4}(x_{i,t_{2}})\\
 & \quad\quad +k_{1}(x_{i,t_{2}})k_{2}(x_{i,t_{2}})k_{3}(x_{i,t_{2}})k_{4}(x_{i,t_{1}})\big)\Big\}\\
 & \quad-\frac{2T^{3}-6T^{2}+4T-1}{T^{2}(T-1)^{3}}\sum_{t_{1}\ne t_{2}}\big(k_{1}(x_{i,t_{1}})k_{2}(x_{i,t_{1}})k_{3}(x_{i,t_{2}})k_{4}(x_{i,t_{2}})\\
 & \quad\quad +k_{1}(x_{i,t_{1}})k_{2}(x_{i,t_{2}})k_{3}(x_{i,t_{1}})k_{4}(x_{i,t_{2}})+k_{1}(x_{i,t_{1}})k_{2}(x_{i,t_{2}})k_{3}(x_{i,t_{2}})k_{4}(x_{i,t_{1}})\big)\\
 & \quad-\frac{T^{3}-6T^{2}+4T-1}{T^{2}(T-1)^{3}}\sum_{t_{1}\neq t_{2}\ne t_{3}}\big(k_{1}(x_{i,t_{1}})k_{2}(x_{i,t_{1}})k_{3}(x_{i,t_{2}})k_{4}(x_{i,t_{3}})\\
 & \quad\quad +k_{1}(x_{i,t_{1}})k_{2}(x_{i,t_{2}})k_{3}(x_{i,t_{1}})k_{4}(x_{i,t_{3}})+k_{1}(x_{i,t_{1}})k_{2}(x_{i,t_{2}})k_{3}(x_{i,t_{3}})k_{4}(x_{i,t_{1}})\\
 & \quad\quad +k_{1}(x_{i,t_{2}})k_{2}(x_{i,t_{1}})k_{3}(x_{i,t_{1}})k_{4}(x_{i,t_{3}})+k_{1}(x_{i,t_{2}})k_{2}(x_{i,t_{1}})k_{3}(x_{i,t_{3}})k_{4}(x_{i,t_{1}})\\
 & \quad\quad +k_{1}(x_{i,t_{2}})k_{2}(x_{i,t_{3}})k_{3}(x_{i,t_{1}})k_{4}(x_{i,t_{1}})\big)\\
 & \quad+\frac{6T^{2}-4T+1}{(T-1)^{3}}\frac{1}{T^{2}}\sum_{t_{1}\neq t_{2}\ne t_{3}\ne t_{4}}k_{1}(x_{i,t_{1}})k_{2}(x_{i,t_{1}})k_{3}(x_{i,t_{2}})k_{4}(x_{i,t_{4}})\Big\}
\end{align*}
\end{lemma}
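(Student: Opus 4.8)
The plan is to prove Lemma~\ref{WiT_jack} by direct algebraic expansion, exploiting the factored form of the leave-one-out statistic already recorded in the definition of $W_{[\ell],T,(m_\ell)}^{(-t)}$. Specialized to a single $i$ and a single V-statistic, that identity reads
\[
W_{i,T,m}^{(-t)}=\frac{1}{(T-1)^{m/2}}\prod_{j=1}^{m}\big(T\overline{k}_{i,j}-k_j(x_{i,t})\big),
\]
where $\overline{k}_{i,j}=T^{-1}\sum_{t}k_j(x_{i,t})$. First I would average this over $t$, then combine with $T\,W_{i,T,m}=T^{1+m/2}\prod_j\overline{k}_{i,j}$ and the jackknife weight $(T-1)(T/(T-1))^{m/2}$, and finally simplify. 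The cases $m=1,2$ serve as sanity checks: for $m=1$ the weights collapse and one recovers $\tilde W_{i,T,1}=W_{i,T,1}$ exactly, while for $m=2$ a short computation gives $\tilde W_{i,T,2}=\frac{T^2}{T-1}\overline{k}_{i,1}\overline{k}_{i,2}-\frac{T}{T-1}\cdot\frac1T\sum_t k_1(x_{i,t})k_2(x_{i,t})=\frac{1}{T-1}\sum_{s\neq t}k_1(x_{i,s})k_2(x_{i,t})$, matching the statement.

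The key step is expanding the product $\prod_{j=1}^m\big(T\overline{k}_{i,j}-k_j(x_{i,t})\big)$ over subsets $S\subseteq\{1,\dots,m\}$, where $S$ indexes the factors from which the ``$-k_j(x_{i,t})$'' piece is taken. Averaging over $t$ turns each such term into
\[
(-1)^{|S|}\,T^{m-|S|}\Big(\prod_{j\notin S}\overline{k}_{i,j}\Big)\,D_S,\qquad D_S:=\frac1T\sum_{t}\prod_{j\in S}k_j(x_{i,t}),
\]
a ``diagonal'' moment in which all functions indexed by $S$ share one common time argument. Reinstating the prefactor $(T-1)^{-m/2}$ and the jackknife weight, the $t$-averaged leave-one-out contribution becomes a linear combination of the products $\big(\prod_{j\notin S}\overline{k}_{i,j}\big)D_S$ with coefficients that are explicit rational functions of $T$.

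To reach the restricted-sum form displayed in the lemma, I would then convert via inclusion--exclusion: each factor $T^{r/2}\prod_{j\notin S}\overline{k}_{i,j}$ (with $r=m-|S|$) equals $T^{-r/2}\sum_{t_1,\dots,t_r}\prod_{j\notin S}k_j(x_{i,t_j})$, and both this unrestricted multi-index sum and the term $T\,W_{i,T,m}$ decompose into restricted sums over distinct indices. Collecting the coefficient of each coincidence pattern---the fully diagonal $\frac1T\sum_t\prod_j k_j(x_{i,t})$, the partial patterns $\tfrac{1}{T^{m/2}}\sum_{t_1\neq t_2}(\cdots)$, and the fully off-diagonal $\sum_{t_1\neq\cdots\neq t_m}(\cdots)$---and simplifying the accumulated $T$-polynomials should reproduce exactly the stated coefficients.

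The main obstacle is purely the bookkeeping growth at $m=3$ and especially $m=4$: there are $2^m$ subset terms, and each distinct-index pattern receives contributions from several subsets $S$ as well as from the inclusion--exclusion expansion of $T\,W_{i,T,m}$, so the rational-in-$T$ coefficients must be summed and reduced carefully to land on the precise forms appearing in the lemma (for instance $\frac{T^2+3T-1}{(T-1)^2}$ and $\frac{2T^3-6T^2+4T-1}{T(T-1)^3}$). No conceptual input beyond this accounting is required; the symmetry of each distinct-index pattern under permuting which $k_j$ sits at the shared time index both organizes the collection of like terms and furnishes a useful internal consistency check, which is why the statement is ultimately ``simple algebra.''
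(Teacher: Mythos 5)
Your plan is correct and is essentially the route the paper takes: the paper's own proof likewise expands the leave-one-out product $\prod_j\big(T\overline{k}_{i,j}-k_j(x_{i,t})\big)$, averages over $t$, and collects coefficients by coincidence pattern of the time indices (equivalently, noting that a tuple with $d$ distinct indices survives deletion of $t$ for exactly $T-d$ values of $t$), which is the same bookkeeping you describe via subsets and inclusion--exclusion. Your $m=1,2$ computations match the paper's exactly, and the remaining work for $m=3,4$ is, as the paper itself notes, ``simple algebra.''
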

\begin{proof}
For $m=1$
\begin{align*}
(T-1)(\frac{T}{T-1})^{1/2}\frac{1}{T}\sum_{t}W_{i,T,1}^{(-t)} & =(T-1)\frac{1}{\sqrt{T}}\sum_{t}\frac{1}{T-1}(T\bar{k}_{1,i}-k_{1}(x_{i,t}))\\
 & =T\sqrt{T}\bar{k}_{1,i}-\sqrt{T}\bar{k}_{1,i}\\
 & =(T-1)W_{i,T,1}
\end{align*}

Next, for $m=2$

\begin{align*}
\tilde{W}_{i,T,2} & =T\big(T\bar{k}_{1,i}\bar{k}_{2,i}\big)-(T-1)\frac{T}{T-1}\frac{1}{T}\sum_{t}\frac{1}{T-1}(\bar{k}_{1,i}-k_{1}(x_{i,t}))(\bar{k}_{2,i}-k_{2}(x_{i,t}))\\
 & =T^{2}\bar{k}_{1,i}\bar{k}_{2,i}-\frac{1}{T-1}\sum_{t}(\bar{k}_{1,i}-k_{1}(x_{i,t}))(\bar{k}_{2,i}-k_{2}(x_{i,t}))\\
 & =\frac{T^{2}}{T-1}\bar{k}_{1,i}\bar{k}_{2,i}-\frac{1}{T-1}\sum_{t}k_{1}(x_{i,t})k_{2}(x_{i,t})\\
 & =\frac{1}{T-1}\sum_{s\ne t}k_{1}(x_{i,s})k_{2}(x_{i,t})
\end{align*}

For $m=3$

\begin{align*}
\frac{1}{T^{3/2}}\tilde{W}_{i,T,3} & =\frac{1}{T^{2}}\sum_{t_{1}}\sum_{t_{2}}\sum_{t_{3}}k_{1}(x_{i,t_{1}})k_{2}(x_{i,t_{2}})k_{3}(x_{i,t_{3}}) \\
&-\frac{1}{(T-1)^{2}}\frac{1}{T}\Big\{(T-1)\sum_{t}k_{1}(x_{i,t})k_{2}(x_{i,t})k_{3}(x_{i,t})\\
 & \quad+(T-2)\sum_{t_{1}\ne t_{2}}\big(k_{1}(x_{i,t_{1}})k_{2}(x_{i,t_{2}})k_{3}(x_{i,t_{2}})+k_{1}(x_{i,t_{2}})k_{2}(x_{i,t_{1}})k_{3}(x_{i,t_{2}})\\
 & \quad\quad +k_{1}(x_{i,t_{2}})k_{2}(x_{i,t_{2}})k_{3}(x_{i,t_{1}})\big)\\
 & \quad+(T-3)\sum_{t_{1}\neq t_{2}\ne t_{3}}k_{1}(x_{i,t_{1}})k_{2}(x_{i,t_{2}})k_{3}(x_{i,t_{3}})\Big\}\\
 & =\frac{1}{T^{3}}\sum_{t_{1}}\sum_{t_{2}}\sum_{t_{3}}k_{1}(x_{i,t_{1}})k_{2}(x_{i,t_{2}})k_{3}(x_{i,t_{3}}) \\
 &+\big(\frac{T-1}{T^{3}}-\frac{1}{T(T-1)}\big)\sum_{t}k_{1}(x_{i,t})k_{2}(x_{i,t})k_{3}(x_{i,t})\\
 & \quad+\big(\frac{T-1}{T^{3}}-\frac{T-2}{T(T-1)^{2}}\big)\sum_{t_{1}\ne t_{2}}\big(k_{1}(x_{i,t_{1}})k_{2}(x_{i,t_{2}})k_{3}(x_{i,t_{2}})\\
 & \quad+k_{1}(x_{i,t_{2}})k_{2}(x_{i,t_{1}})k_{3}(x_{i,t_{2}})+k_{1}(x_{i,t_{2}})k_{2}(x_{i,t_{2}})k_{3}(x_{i,t_{1}})\big)\\
 & \quad+\big(\frac{T-1}{T^{3}}-\frac{T-3}{T(T-1)^{2}}\big)\sum_{t_{1}\neq t_{2}\ne t_{3}}k_{1}(x_{i,t_{1}})k_{2}(x_{i,t_{2}})k_{3}(x_{i,t_{3}})\\
 & =\frac{1}{T^{3/2}}W_{T}+\frac{1-2T}{T^{2}(T-1)}\frac{1}{T}\sum_{t}k_{1}(x_{i,t})k_{2}(x_{i,t})k_{3}(x_{i,t})\\
 & \quad+\frac{T^{2}+3T-1}{T^{3/2}(T-1)^{2}}\frac{1}{T^{3/2}}\sum_{t_{1}\ne t_{2}}\big(k_{1}(x_{i,t_{1}})k_{2}(x_{i,t_{2}})k_{3}(x_{i,t_{2}})+k_{1}(x_{i,t_{2}})k_{2}(x_{i,t_{1}})k_{3}(x_{i,t_{2}})\\
 & \quad+k_{1}(x_{i,t_{2}})k_{2}(x_{i,t_{2}})k_{3}(x_{i,t_{1}})\big)\\
 & \quad+\frac{3T-1}{T^{3/2}(T-1)^{2}}\frac{1}{T^{3/2}}\sum_{t_{1}\neq t_{2}\ne t_{3}}k_{1}(x_{i,t_{1}})k_{2}(x_{i,t_{2}})k_{3}(x_{i,t_{3}})
\end{align*}

For $m=4$

\begin{align*}
&\frac{1}{T^{2}}\tilde{W}_{i,T,4} \\
& =\frac{1}{T^{3}}\sum_{t_{1}}\sum_{t_{2}}\sum_{t_{3}}\sum_{t_{4}}k_{1}(x_{i,t_{1}})k_{2}(x_{i,t_{2}})k_{3}(x_{i,t_{3}})k_{4}(x_{i,t_{4}})\\
 & \quad-\frac{1}{(T-1)^{3}}\frac{1}{T}\Big\{(T-1)\sum_{t}k_{1}(x_{i,t})k_{2}(x_{i,t})k_{3}(x_{i,t})k_{4}(x_{i,t})\\
 & \quad+(T-2)\sum_{t_{1}\ne t_{2}}\big(k_{1}(x_{i,t_{1}})k_{2}(x_{i,t_{2}})k_{3}(x_{i,t_{2}})k_{4}(x_{i,t_{2}})+k_{1}(x_{i,t_{2}})k_{2}(x_{i,t_{1}})k_{3}(x_{i,t_{2}})k_{4}(x_{i,t_{2}})\\
 & \quad\quad+k_{1}(x_{i,t_{2}})k_{2}(x_{i,t_{2}})k_{3}(x_{i,t_{1}})k_{4}(x_{i,t_{2}})+k_{1}(x_{i,t_{2}})k_{2}(x_{i,t_{2}})k_{3}(x_{i,t_{2}})k_{4}(x_{i,t_{1}})\big)\\
 & \quad+(T-2)\sum_{t_{1}\ne t_{2}}\big(k_{1}(x_{i,t_{1}})k_{2}(x_{i,t_{1}})k_{3}(x_{i,t_{2}})k_{4}(x_{i,t_{2}})+k_{1}(x_{i,t_{1}})k_{2}(x_{i,t_{2}})k_{3}(x_{i,t_{1}})k_{4}(x_{i,t_{2}})\\
 & \quad\quad+k_{1}(x_{i,t_{1}})k_{2}(x_{i,t_{2}})k_{3}(x_{i,t_{2}})k_{4}(x_{i,t_{1}})\big)\\
 & \quad+(T-3)\sum_{t_{1}\neq t_{2}\ne t_{3}}\big(k_{1}(x_{i,t_{1}})k_{2}(x_{i,t_{1}})k_{3}(x_{i,t_{2}})k_{4}(x_{i,t_{3}})+k_{1}(x_{i,t_{1}})k_{2}(x_{i,t_{2}})k_{3}(x_{i,t_{1}})k_{4}(x_{i,t_{3}})\\
 & \quad\quad+k_{1}(x_{i,t_{1}})k_{2}(x_{i,t_{2}})k_{3}(x_{i,t_{3}})k_{4}(x_{i,t_{1}})+k_{1}(x_{i,t_{2}})k_{2}(x_{i,t_{1}})k_{3}(x_{i,t_{1}})k_{4}(x_{i,t_{3}})\\
 & \quad\quad+k_{1}(x_{i,t_{2}})k_{2}(x_{i,t_{1}})k_{3}(x_{i,t_{3}})k_{4}(x_{i,t_{1}})+k_{1}(x_{i,t_{2}})k_{2}(x_{i,t_{3}})k_{3}(x_{i,t_{1}})k_{4}(x_{i,t_{1}})\big)\\
 & +(T-4)\sum_{t_{1}\neq t_{2}\ne t_{3}\ne t_{4}}k_{1}(x_{i,t_{1}})k_{2}(x_{i,t_{1}})k_{3}(x_{i,t_{2}})k_{4}(x_{i,t_{4}})\Big\}\\
 & =\frac{1}{T^{4}}\sum_{t_{1}}\sum_{t_{2}}\sum_{t_{3}}\sum_{t_{4}}k_{1}(x_{i,t_{1}})k_{2}(x_{i,t_{2}})k_{3}(x_{i,t_{3}})k_{4}(x_{i,t_{4}})\\
 & \quad+\big(\frac{T-1}{T^{4}}-\frac{1}{T(T-1)^{2}}\big)\sum_{t}k_{1}(x_{i,t})k_{2}(x_{i,t})k_{3}(x_{i,t})k_{4}(x_{i,t})\\
 & \quad+\big(\frac{T-1}{T^{4}}-\frac{T-2}{T(T-1)^{3}}\big)\Big\{\sum_{t_{1}\ne t_{2}}\big(k_{1}(x_{i,t_{1}})k_{2}(x_{i,t_{2}})k_{3}(x_{i,t_{2}})k_{4}(x_{i,t_{2}})\\
 & \quad\quad +k_{1}(x_{i,t_{2}})k_{2}(x_{i,t_{1}})k_{3}(x_{i,t_{2}})k_{4}(x_{i,t_{2}})+k_{1}(x_{i,t_{2}})k_{2}(x_{i,t_{2}})k_{3}(x_{i,t_{1}})k_{4}(x_{i,t_{2}})\\
 & \quad\quad +k_{1}(x_{i,t_{2}})k_{2}(x_{i,t_{2}})k_{3}(x_{i,t_{2}})k_{4}(x_{i,t_{1}})\big)\\
 & \quad+\sum_{t_{1}\ne t_{2}}\big(k_{1}(x_{i,t_{1}})k_{2}(x_{i,t_{1}})k_{3}(x_{i,t_{2}})k_{4}(x_{i,t_{2}})+k_{1}(x_{i,t_{1}})k_{2}(x_{i,t_{2}})k_{3}(x_{i,t_{1}})k_{4}(x_{i,t_{2}})\\
 & \quad\quad+k_{1}(x_{i,t_{1}})k_{2}(x_{i,t_{2}})k_{3}(x_{i,t_{2}})k_{4}(x_{i,t_{1}})
 \end{align*}
 \begin{align*}
 & \quad+\big(\frac{T-1}{T^{4}}-\frac{T-3}{T(T-1)^{3}}\big)\sum_{t_{1}\neq t_{2}\ne t_{3}}\big(k_{1}(x_{i,t_{1}})k_{2}(x_{i,t_{1}})k_{3}(x_{i,t_{2}})k_{4}(x_{i,t_{3}})\\
 & \quad\quad +k_{1}(x_{i,t_{1}})k_{2}(x_{i,t_{2}})k_{3}(x_{i,t_{1}})k_{4}(x_{i,t_{3}})+k_{1}(x_{i,t_{1}})k_{2}(x_{i,t_{2}})k_{3}(x_{i,t_{3}})k_{4}(x_{i,t_{1}})\\
 & \quad\quad +k_{1}(x_{i,t_{2}})k_{2}(x_{i,t_{1}})k_{3}(x_{i,t_{1}})k_{4}(x_{i,t_{3}})+k_{1}(x_{i,t_{2}})k_{2}(x_{i,t_{1}})k_{3}(x_{i,t_{3}})k_{4}(x_{i,t_{1}})\\
 & \quad\quad +k_{1}(x_{i,t_{2}})k_{2}(x_{i,t_{3}})k_{3}(x_{i,t_{1}})k_{4}(x_{i,t_{1}})\big)\\
 & +\big(\frac{T-1}{T^{4}}-\frac{T-4}{T(T-1)^{3}}\big)\sum_{t_{1}\neq t_{2}\ne t_{3}\ne t_{4}}k_{1}(x_{i,t_{1}})k_{2}(x_{i,t_{1}})k_{3}(x_{i,t_{2}})k_{4}(x_{i,t_{4}})\Big\}\\
 & =\frac{1}{T^{2}}W_{iT}-\frac{3T^{2}-3T+1}{T^{4}(T-1)^{2}}\sum_{t}k_{1}(x_{i,t})k_{2}(x_{i,t})k_{3}(x_{i,t})k_{4}(x_{i,t})\\
 & \quad-\frac{2T^{3}-6T^{2}+4T-1}{T^{4}(T-1)^{3}}\Big\{\sum_{t_{1}\ne t_{2}}\big(k_{1}(x_{i,t_{1}})k_{2}(x_{i,t_{2}})k_{3}(x_{i,t_{2}})k_{4}(x_{i,t_{2}})\\
 & \quad\quad +k_{1}(x_{i,t_{2}})k_{2}(x_{i,t_{1}})k_{3}(x_{i,t_{2}})k_{4}(x_{i,t_{2}})\\
 & \quad\quad+k_{1}(x_{i,t_{2}})k_{2}(x_{i,t_{2}})k_{3}(x_{i,t_{1}})k_{4}(x_{i,t_{2}})+k_{1}(x_{i,t_{2}})k_{2}(x_{i,t_{2}})k_{3}(x_{i,t_{2}})k_{4}(x_{i,t_{1}})\big)\\
 & \quad+\sum_{t_{1}\ne t_{2}}\big(k_{1}(x_{i,t_{1}})k_{2}(x_{i,t_{1}})k_{3}(x_{i,t_{2}})k_{4}(x_{i,t_{2}})+k_{1}(x_{i,t_{1}})k_{2}(x_{i,t_{2}})k_{3}(x_{i,t_{1}})k_{4}(x_{i,t_{2}})\\
 & \quad\quad+k_{1}(x_{i,t_{1}})k_{2}(x_{i,t_{2}})k_{3}(x_{i,t_{2}})k_{4}(x_{i,t_{1}})\Big\}\\
 & \quad-\frac{T^{3}-6T^{2}+4T-1}{T^{4}(T-1)^{3}}\sum_{t_{1}\neq t_{2}\ne t_{3}}\big(k_{1}(x_{i,t_{1}})k_{2}(x_{i,t_{1}})k_{3}(x_{i,t_{2}})k_{4}(x_{i,t_{3}})\\
 & \quad\quad +k_{1}(x_{i,t_{1}})k_{2}(x_{i,t_{2}})k_{3}(x_{i,t_{1}})k_{4}(x_{i,t_{3}})\\
 & \quad\quad+k_{1}(x_{i,t_{1}})k_{2}(x_{i,t_{2}})k_{3}(x_{i,t_{3}})k_{4}(x_{i,t_{1}})+k_{1}(x_{i,t_{2}})k_{2}(x_{i,t_{1}})k_{3}(x_{i,t_{1}})k_{4}(x_{i,t_{3}})\\
 & \quad\quad+k_{1}(x_{i,t_{2}})k_{2}(x_{i,t_{1}})k_{3}(x_{i,t_{3}})k_{4}(x_{i,t_{1}})+k_{1}(x_{i,t_{2}})k_{2}(x_{i,t_{3}})k_{3}(x_{i,t_{1}})k_{4}(x_{i,t_{1}})\big)\\
 & +\frac{6T^{2}-4T+1}{T^{4}(T-1)^{3}}\sum_{t_{1}\neq t_{2}\ne t_{3}\ne t_{4}}k_{1}(x_{i,t_{1}})k_{2}(x_{i,t_{1}})k_{3}(x_{i,t_{2}})k_{4}(x_{i,t_{4}})\Big\}
\end{align*}
\end{proof}
The following Lemma is used to show that $\tilde{W}_{i,T}=O_{p}(1)$
for $\sum_{l}m_{l}\leq6$.
\begin{lemma}
\label{lem:WitJackOp(1)}Suppose that Condition \ref{About-k} holds.
Then for $1\leq p\leq5$ and $p\leq m\leq6$, 
\[
\big(T^{1/2}\bar{k}_{i_{1},1}\big)\cdots\big(T^{1/2}\bar{k}_{i_{p},p}\big)\frac{1}{T}\sum_{t}k_{p+1}(x_{i_{p+1},t})\cdots k_{m}(x_{i_{m},t})=O_{p}(1)
\]
\end{lemma}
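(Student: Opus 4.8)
The plan is to treat the displayed quantity as a product of $p+1$ factors, establish that each factor is $O_p(1)$ on its own, and then invoke the elementary fact that a finite product of $O_p(1)$ random variables is again $O_p(1)$ (which needs no independence across the factors).

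First I would note that each of the leading $p$ factors, $T^{1/2}\bar{k}_{i_j,j}=\frac{1}{\sqrt{T}}\sum_{t=1}^{T}k_j(x_{i_j,t})$, is precisely a first-order normalized V-statistic $W_{i_j,T,1}$ in the notation of this appendix. Hence Lemma \ref{WitOp(1)} applied with $m=1$ gives that each leading factor is $O_p(1)$ under Condition \ref{About-k}; the mean-zero property $E[k_j]=0$ together with the moment bound $\sup_i E[M^{12}]<\infty$ is what drives the underlying second-moment control through Lemma \ref{sum-bound} and Lemma \ref{WitEOp(1)}.

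Next I would treat the final factor $\frac{1}{T}\sum_t k_{p+1}(x_{i_{p+1},t})\cdots k_m(x_{i_m},t)$. When $p=m$ the product is empty and this factor equals $1$, so there is nothing to prove; I would flag this degenerate case explicitly so that the empty-product convention is unambiguous. For $p<m$, the factor is an average over $t$ of a product of $m-p\le 5$ kernels, and I would bound its first absolute moment by the generalized H\"older inequality,
\[
E\Big[\Big|\tfrac{1}{T}\sum_t \prod_{j=p+1}^{m} k_j(x_{i_j,t})\Big|\Big]
\le \tfrac{1}{T}\sum_t \prod_{j=p+1}^{m} E\big[|k_j(x_{i_j,t})|^{m-p}\big]^{1/(m-p)}
\le C\,\sup_i E\big[M(x_{i,t})^{m-p}\big],
\]
which is finite because $m-p\le 5<12$ and Condition \ref{About-k} guarantees $\sup_i E[M^{12}]<\infty$, hence finiteness of all lower moments. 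Markov's inequality then delivers that this averaging factor is $O_p(1)$. I emphasize that mean-zeroness is not needed here: I am only claiming $O_p(1)$, not any rate of decay, so a uniform bound on one absolute moment suffices.

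Finally I would combine the pieces: the full expression is a product of $p+1$ factors, each shown to be $O_p(1)$, and a finite product of $O_p(1)$ terms is $O_p(1)$ regardless of dependence among them. Rather than a genuine obstacle, the one point that warrants care is that the summands in the last factor pair kernels evaluated at possibly distinct individuals $i_{p+1},\dots,i_m$ but at a common time index $t$; the H\"older bound above handles this uniformly, since $\sup_i E[M^{12}]<\infty$ controls every individual's moment simultaneously, so whether the indices coincide or differ is immaterial.
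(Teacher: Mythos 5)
Your proof is correct, but it is organized differently from the paper's. The paper does not argue factor by factor: it bounds the first absolute moment of the \emph{entire} product in one pass, first applying Cauchy--Schwarz to separate $\big|T^{1/2}\bar{k}_{i_{1},1}\big|\cdots\big|T^{1/2}\bar{k}_{i_{p},p}\big|$ from the time-averaged factor, then Hölder with exponent $2p$ on the leading block (feeding in the $2p$-th moment bounds from Lemma~\ref{WitEOp(1)}) and the envelope $M$ with $\max_{i}E[M(x_{i,t})^{2(m-p)}]<\infty$ on the trailing block, and finally Markov. You instead show each of the $p+1$ factors is individually $O_{p}(1)$ (Lemma~\ref{WitOp(1)} with $m=1$ for the leading ones, a generalized Hölder bound on one absolute moment for the last one) and invoke the product rule for $O_{p}(1)$, which indeed requires no independence. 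Your route is more modular and proves the lemma exactly as stated; what the paper's route buys is a single $L^{1}$ bound on the product that is uniform over the choice of indices $i_{1},\dots,i_{m}$ (via the $\sup_{i}$ in Condition~\ref{About-k}), and that uniformity is what silently justifies the later step in Lemma~\ref{lem:jack:all} where these terms are averaged over $i=1,\dots,n$ and the average is still claimed to be $O_{p}(1)$ --- a pointwise-in-indices $O_{p}(1)$ statement with no moment control would not suffice there on its own. If you want your version to support that downstream use, note that your Hölder bound on the last factor and the uniform constants behind Lemma~\ref{WitEOp(1)} already give the uniform moment bound; it is only your final ``product of $O_{p}(1)$'s'' step that discards it.
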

\begin{proof}
Firstly,
\begin{align*}
E\Big[ & \Big|\big(T^{1/2}\bar{k}_{i_{1},1}\big)\cdots\big(T^{1/2}\bar{k}_{i_{p},p}\big)\frac{1}{T}\sum_{t}k_{p+1}(x_{i_{p+1},t})\cdots k_{m}(x_{i_{m},t})\Big|\Big]\\
 & \leq\frac{1}{T}\sum_{t}E\big[\big(\big|T^{1/2}\bar{k}_{i_{1},1}\big|\cdots\big|T^{1/2}\bar{k}_{i_{p},p}\big|\big)^{2}\big]^{1/2}E\big[M(x_{i_{p+1},t})^{2}\cdots M(x_{i_{m},t})^{2}\big]^{1/2}\\
 & \leq\frac{1}{T}\sum_{t}E\big[\big|T^{1/2}\bar{k}_{i_{1},1}\big|^{2p}\big]^{1/2p}\cdots E\big[\big|T^{1/2}\bar{k}_{i_{p},p}\big|^{2p}\big]^{1/2p}\big(\max_{i}E\big[M(x_{i,t})^{2(m-p)}\big]\big)^{1/2}\\
 & =O(1)
\end{align*}

where the final result follows since (1) $p\leq5$ implies by Lemma
\ref{WitEOp(1)} that $E\big[\big|T^{1/2}\bar{k}_{i_{1},1}\big|^{2p}\big]=O_{p}(1)$,
and (2) $m-p\leq6$ implies by Condition \ref{About-k} that $\max_{i}E\big[M(x_{i,t})^{2(m-p)}\big]<\infty$.
Markov's inequality then gives the result.
\end{proof}
\begin{lemma}
\label{lem:jack:all}

For all $\tilde{W}_{T,(m_{1},\dots,m_{L})}$, with $\sum_{l=1}^{L}m_{l}\leq6$,
we have
\[
\tilde{W}_{T,(m_{1},\dots,m_{L})}=O_{p}(1)
\]
\end{lemma}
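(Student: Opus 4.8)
The plan is to reduce the statement to the single-individual jackknife identities already recorded in Lemma \ref{WiT_jack}, and then control the resulting terms with the moment bounds of Lemmas \ref{WitEOp(1)}--\ref{WtVOp()}. First I would expand each averaged factor $W_{[\ell],T,(m_\ell)} = \frac1n\sum_i W_{i,T,m_\ell}$, so that both $W_{T,(m_1,\dots,m_L)}$ and its delete-$t$ counterpart $W^{(-t)}_{T,(m_1,\dots,m_L)}$ become averages over individual tuples $(i_1,\dots,i_L)$ of products $\prod_{\ell} W_{i_\ell,T,m_\ell}$. Crucially, each such product is jointly a normalized V-statistic of time-order $\bar m = \sum_\ell m_\ell$ in its $\bar m$ time indices, and the jackknife deletes a common index $t$ from all $\bar m$ inner sums. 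Because the delete-$t$ operation, the average over $t$, and the averages over individuals all act coordinate-wise and commute, it suffices to track how the jackknife reshapes the time-index structure for a generic individual configuration and then reassemble the individual averages, which are again normalized V-statistics of order at most $\bar m$.

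The algebraic core generalizes Lemma \ref{WiT_jack}. Writing each delete-$t$ factor through the identity $W^{(-t)}_{[\ell],T,(m_\ell)} = (T-1)^{-m_\ell/2}\frac1n\sum_i \prod_{r}\bigl(T\bar k_{i,[\ell,r]} - k_{[\ell,r]}(x_{i,t})\bigr)$, expanding the product over all $\bar m$ slots, and averaging over $t$, produces a linear combination of \emph{partially collapsed} V-statistics indexed by the set of time slots that are forced to coincide with the summed-over index. The non-collapsed term (all $\bar m$ time indices free) carries the highest power of $T$, namely the $O_p(T)$ piece $T\,W_{T,(m_1,\dots,m_L)}$. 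The jackknife weight $(T/(T-1))^{\bar m/2}$ in the definition of $\tilde W_{T,(m_1,\dots,m_L)}$ is chosen exactly so that this leading contribution cancels between $T\,W_{T,(m_1,\dots,m_L)}$ and $(T-1)(T/(T-1))^{\bar m/2}\frac1T\sum_t W^{(-t)}_{T,(m_1,\dots,m_L)}$; this is precisely the cancellation that Lemma \ref{WiT_jack} exhibits explicitly for $m=1,2,3,4$, and that Lemmas \ref{Vjack1}--\ref{Vjack5} verify in the cross-sectional single-factor case up to order five.

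After this cancellation, every surviving term is, up to a coefficient that is a bounded rational function of $T$, a product of normalized V-statistics of total time-order at most $\bar m \le 6$: collapsing $r \ge 1$ slots onto a common time merely replaces those kernels by a lower-order kernel equal to a product $\prod_j k_{[j]}(x_{i,t})$ at a shared $t$, which is still dominated by the envelope $M$ under Condition \ref{About-k}. I would then invoke Lemmas \ref{WitEOp(1)}, \ref{WitVOp(1)}, \ref{Wit_prodV} and \ref{WtVOp()}, together with the mixed product-and-diagonal bound of Lemma \ref{lem:WitJackOp(1)}, to conclude that each such term is $O_p(1)$. Since there are only finitely many collapse patterns and finitely many individual configurations (coincident versus distinct indices across factors), summing the $O_p(1)$ contributions yields $\tilde W_{T,(m_1,\dots,m_L)} = O_p(1)$.

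The hard part will be the $T$-coefficient bookkeeping in the second step: confirming, for every partition of the $\bar m \le 6$ time slots and every composition $(m_1,\dots,m_L)$, both that the leading $O_p(T)$ term cancels and that each residual coefficient stays $O(1)$ in $T$. Rather than grinding through this case by case, I would organize it through an inclusion--exclusion expansion of $\prod_\ell \prod_r (T\bar k - k(x_{\cdot,t}))$ indexed by collapse patterns, which makes the power of $T$ attached to each pattern transparent and reduces the cancellation to a single binomial identity in $T$ governed by the weight $(T/(T-1))^{\bar m/2}$; the single-factor Lemmas \ref{WiT_jack} and \ref{Vjack1}--\ref{Vjack5} then serve both as base cases and as a consistency check on the coefficients.
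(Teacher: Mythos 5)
Your proposal follows essentially the same route as the paper's proof: expand each delete-$t$ factor through the identity $W^{(-t)}_{i,T,m}=(T-1)^{-m/2}\prod_r\bigl(T\bar k_{i,r}-k_r(x_{i,t})\bigr)$, group the resulting terms by how many time slots collapse onto the deleted index, observe that the $O_p(T)$ leading contribution cancels (leaving only a bounded multiple of $W_{i,T,m}$), and control the partially collapsed remainders via Lemma \ref{lem:WitJackOp(1)} and the moment bounds of Lemmas \ref{WitEOp(1)}--\ref{WtVOp()}. The paper carries out exactly this computation explicitly for $L=1$, $\bar m=6$ and asserts the general case by the same steps, so your argument matches it, with your treatment of the $L>1$ individual-tuple bookkeeping being a slightly more explicit version of what the paper leaves implicit.
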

\begin{proof}
Here we show the result for $L=1$ and $m=6$ to simplify notation.
The same steps apply for any $L\leq6$ with $\sum_{l=1}^{L}m_{l}\leq6$
to give the results stated in the lemma. Firstly, $\tilde{W}_{T,(6)}=\frac{1}{n}\sum_{i}\tilde{W}_{i,T,6}$,
where

\begin{eqnarray*}
\tilde{W}_{i,T,6} & = & TW_{i,T,6}-(T-1)\frac{1}{T}\sum_{t=1}^{T}W_{i,T,(6)}^{(-t)}\\
 & = & TW_{i,T,6}-(T-1)\frac{1}{T}\sum_{t=1}^{T}\frac{1}{\left(T-1\right)^{3}}\left(T\overline{k}_{i,1}-k_{1}\left(x_{i,t}\right)\right)\\
 &  & \quad\left(T\overline{k}_{i,2}-k_{2}\left(x_{i,t}\right)\right)\cdots\left(T\overline{k}_{i,6}-k_{6}\left(x_{i,t}\right)\right)\\
 & = & TW_{i,T,6}-\frac{1}{\left(T-1\right)^{2}}\bigg[T^{6}\overline{k}_{i,1}\cdots\overline{k}_{i,6}-T^{5}\overline{k}_{i,1}\cdots\overline{k}_{i,6}\\
 &  & +T^{4}\big\{\frac{1}{T}\sum_{t=1}^{T}k_{1}\left(x_{i,t}\right)k_{2}\left(x_{i,t}\right)\overline{k}_{i,3}\cdots\overline{k}_{i,6}+\frac{1}{T}\sum_{t=1}^{T}k_{5}\left(x_{i,t}\right)k_{6}\left(x_{i,t}\right)\overline{k}_{i,1}\cdots\overline{k}_{i,4}\Big\}\\
 &  & -T^{3}\big\{\frac{1}{T}\sum_{t=1}^{T}k_{1}\left(x_{i,t}\right)\dots k_{3}\left(x_{i,t}\right)\overline{k}_{i,4}\cdots\overline{k}_{i,6}+\frac{1}{T}\sum_{t=1}^{T}k_{4}\left(x_{i,t}\right)\cdots k_{6}\left(x_{i,t}\right)\overline{k}_{i,1}\cdots\overline{k}_{i,3}\Big\}\\
 &  & +T^{2}\big\{\frac{1}{T}\sum_{t=1}^{T}k_{1}\left(x_{i,t}\right)\dots k_{4}\left(x_{i,t}\right)\overline{k}_{i,5}\overline{k}_{i,6}+\frac{1}{T}\sum_{t=1}^{T}k_{3}\left(x_{i,t}\right)\cdots k_{6}\left(x_{i,t}\right)\overline{k}_{i,1}\overline{k}_{i,2}\Big\}\\
 &  & -T\big\{\frac{1}{T}\sum_{t=1}^{T}k_{1}\left(x_{i,t}\right)\dots k_{5}\left(x_{i,t}\right)\overline{k}_{i,6}+\frac{1}{T}\sum_{t=1}^{T}k_{2}\left(x_{i,t}\right)\cdots k_{6}\left(x_{i,t}\right)\overline{k}_{i,1}\Big\}\\
 &  & +\frac{1}{T}\sum_{t=1}^{T}k_{1}\left(x_{i,t}\right)\dots k_{6}\left(x_{i,t}\right)\bigg]
\end{eqnarray*}

Now, using the result in Lemma \ref{lem:WitJackOp(1)}, we can write
this as
\begin{align*}
\tilde{W}_{i,T,6} & =TW_{i,T,6}-\frac{T^{2}}{(T-1)^{2}}\big(TW_{i,T,6}-W_{i,T,6}\big)\\
 & \quad-\frac{1}{(T-1)^{2}}\Big[T^{2}\times O_{p}(1)-T^{3/2}\times O_{p}(1)\\
 & \quad+T\times O_{p}(1)-\sqrt{T}\times O_{p}(1)+O_{p}(1)\Big]\\
 & =-\frac{T}{T-1}W_{i,T,6}+O_{p}(1)\\
 & =O_{p}(1)
\end{align*}

Then, $\tilde{W}_{T,(6)}=\frac{1}{n}\sum_{i}\tilde{W}_{i,T,6}=O_{p}(1)$
also. Similarly, other forms of the V-statistic $\tilde{W}_{T,(m_{1},\dots,m_{L})}$
are simply cross-sectional averages over such bounded (in probability)
objects.
\end{proof}

\section{Split-sample V-statistics\label{sec:V-split}}

The following lemmas contain results analogous to those in the previous
section, but for the split-sample jackknife. Here we consider the
half-sample version and, for simplicity and to avoid notational complication,
assume $T$ is divisible by 2. Let $\tau_{1}=\{1,\dots,T/2\}$ and
$\tau_{2}=\{T/2+1,\dots,T\}$ denote the sets of time periods in the
first and second halves of the sample.

Define the half-sample jackknife statistic $\check{W}_{T,(m_{1},\dots,m_{L})}$
as
\[
\check{W}_{T,(m_{1},\dots,m_{L})}=2W_{T(m_{1},\dots,m_{L})}-2^{\bar{m}/2}\frac{1}{2}(W_{T,(m_{1},\dots,m_{L})}^{(1)}+W_{T,(m_{1},\dots,m_{L})}^{(2)})
\]

where $W_{T,(m_{1},\dots,m_{L})}^{(1)}$ and $W_{T,(m_{1},\dots,m_{L})}^{(2)}$
are the equivalent terms using only observations from time periods
in $\tau_{1}$ and $\tau_{2}$ respectively.
\begin{lemma}
\label{lem:Wit_half}Let $\check{W}_{T,(m_{1},\dots,m_{L})}=2W_{T(m_{1},\dots,m_{L})}-2^{\bar{m}/2}\frac{1}{2}(W_{T,(m_{1},\dots,m_{L})}^{(1)}+W_{T,(m_{1},\dots,m_{L})}^{(2)})$.
Then

for $m=1$

\[
\check{W}_{i,T,1}=W_{i,T,1}
\]

for $m=2$ 
\[
\check{W}_{i,T,2}=2\Big(\frac{1}{T}\sum_{t_{1}\in\tau_{1}}\sum_{t_{2}\in\tau_{2}}k_{1}(x_{it_{1}})k_{2}(x_{it_{2}})+\frac{1}{T}\sum_{t_{1}\in\tau_{2}}\sum_{t_{2}\in\tau_{1}}k_{1}(x_{it_{1}})k_{2}(x_{it_{2}})\Big)
\]

for $m=3$
\begin{align*}
\check{W}_{i,T,3} & =2W_{i,T,3}\\
 & -4\Big(\frac{1}{T^{3/2}}\sum_{t_{1},t_{2},t_{3}\in\tau_{1}}k_{1}(x_{it_{1}})k_{2}(x_{it_{2}})k_{3}(x_{it_{3}})+\frac{1}{T^{3/2}}\sum_{t_{1},t_{2},t_{3}\in\tau_{2}}k_{1}(x_{it_{1}})k_{2}(x_{it_{2}})k_{3}(x_{it_{3}})\Big)
\end{align*}

for $m=4$
\begin{align*}
\check{W}_{i,T,4} & =2W_{i,T,4}\\
 & -8\Big(\frac{1}{T^{2}}\sum_{t_{1},t_{2},t_{3},t_{4}\in\tau_{1}}k_{1}(x_{it_{1}})k_{2}(x_{it_{2}})k_{3}(x_{it_{3}})k_{4}(x_{it_{4}})\\
 &+\frac{1}{T^{2}}\sum_{t_{1},t_{2},t_{3},t_{4}\in\tau_{2}}k_{1}(x_{it_{1}})k_{2}(x_{it_{2}})k_{3}(x_{it_{3}})k_{4}(x_{it_{4}})\Big)
\end{align*}
\end{lemma}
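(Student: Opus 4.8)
The plan is to reduce all four displayed cases to one master identity obtained by sorting the summation indices of the full V-statistic according to which half-sample they fall in. Writing $M=T/2$, $\tau_1=\{1,\dots,M\}$ and $\tau_2=\{M+1,\dots,T\}$, I would first expand the defining sum
\[
W_{i,T,m}=\frac{1}{T^{m/2}}\sum_{t_1=1}^{T}\cdots\sum_{t_m=1}^{T}k_1(x_{i,t_1})\cdots k_m(x_{i,t_m})
\]
and note that the half-sample statistic $W^{(s)}_{i,T,m}=M^{-m/2}\sum_{t_1,\dots,t_m\in\tau_s}k_1(x_{i,t_1})\cdots k_m(x_{i,t_m})$ is exactly $M^{-m/2}$ times the block of this sum in which every index lies in $\tau_s$. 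The key bookkeeping step is the normalization conversion: since $M^{-m/2}=2^{m/2}T^{-m/2}$, we have $2^{m/2}\tfrac12 W^{(s)}_{i,T,m}=2^{m-1}T^{-m/2}\sum_{t_1,\dots,t_m\in\tau_s}k_1(x_{i,t_1})\cdots k_m(x_{i,t_m})$. Substituting into the definition $\check{W}_{i,T,m}=2W_{i,T,m}-2^{m/2}\tfrac12(W^{(1)}_{i,T,m}+W^{(2)}_{i,T,m})$ yields the master identity
\[
\check{W}_{i,T,m}=2W_{i,T,m}-\frac{2^{m-1}}{T^{m/2}}\Big(\sum_{t_1,\dots,t_m\in\tau_1}+\sum_{t_1,\dots,t_m\in\tau_2}\Big)k_1(x_{i,t_1})\cdots k_m(x_{i,t_m}),
\]
valid for every $m$.

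Each of the four cases then follows by specializing this identity and simplifying only as far as the stated form requires. For $m=1$ the coefficient is $2^{0}=1$, and since a single index lies in exactly one of $\tau_1,\tau_2$ the two blocks reassemble the full sum, giving $\check{W}_{i,T,1}=2W_{i,T,1}-T^{-1/2}\sum_{t}k_1(x_{i,t})=2W_{i,T,1}-W_{i,T,1}=W_{i,T,1}$. For $m=2$ the coefficient is $2$; writing the full double sum in $2W_{i,T,2}$ as the four blocks (pure $\tau_1$, pure $\tau_2$, and the two cross blocks), the subtraction removes the two pure blocks with the matching coefficient, leaving exactly twice the two cross-block sums, which is the stated expression. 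For $m=3$ and $m=4$ the coefficients are $2^{2}=4$ and $2^{3}=8$, matching the displayed prefactors, and no further recombination is needed, so I would read off $\check{W}_{i,T,3}=2W_{i,T,3}-4T^{-3/2}(\sum_{\tau_1}+\sum_{\tau_2})$ and $\check{W}_{i,T,4}=2W_{i,T,4}-8T^{-2}(\sum_{\tau_1}+\sum_{\tau_2})$ directly from the master identity, where each inner sum runs over all indices in the indicated half.

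The computation is entirely algebraic, so there is no genuine analytic obstacle; the only thing that needs care is the repeated conversion between the full-sample normalization $T^{-m/2}$ and the half-sample normalization $M^{-m/2}$, which is responsible for the $2^{m-1}$ prefactor and hence for the $1,2,4,8$ pattern across $m=1,2,3,4$. A secondary point to check is the cosmetic difference in how the cases are presented, since $m=1,2$ are given in fully reduced form while $m=3,4$ retain the $2W_{i,T,m}$ term, so I would make sure the recombination of the pure blocks back into $W_{i,T,m}$ is carried out exactly for $m=1,2$ and deliberately left undone for $m=3,4$ to match the statement verbatim.
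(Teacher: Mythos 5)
Your proposal is correct and follows essentially the same route as the paper's proof: a direct algebraic substitution of the half-sample statistics into the definition, with the normalization conversion $M^{-m/2}=2^{m/2}T^{-m/2}$ producing the $2^{m-1}$ coefficients and the block decomposition of the index set handling the $m=1,2$ simplifications. Packaging the computation as a single master identity for all $m$ is a minor organizational improvement over the paper's case-by-case presentation, but the underlying argument is identical.
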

\begin{proof}
For $m=1$ we have
\begin{align*}
\check{W}_{i,T,1} & =2\frac{1}{\sqrt{T}}\sum_{t}k_{1}(x_{it})-\frac{1}{\sqrt{2}}\Big(\sqrt{\frac{2}{T}}\sum_{\tau_{1}}k_{1}(x_{it})+\sqrt{\frac{2}{T}}\sum_{\tau_{2}}k_{1}(x_{it})\Big)\\
 & =\frac{1}{\sqrt{T}}\sum_{t}k_{1}(x_{it})
\end{align*}

For $m=2$ we have
\begin{align*}
\check{W}_{i,T,2} & =2\frac{1}{T}\sum_{t_{1}}\sum_{t_{2}}k_{1}(x_{it_{1}})k_{2}(x_{it_{2}})\\
&-\Big(\frac{2}{T}\sum_{t_{1,}t_{2}\in\tau_{1}}k_{1}(x_{it_{1}})k_{2}(x_{it_{2}})+\frac{2}{T}\sum_{t_{1,}t_{2}\in\tau_{2}}k_{1}(x_{it_{1}})k_{2}(x_{it_{2}})\Big)\\
 & =2\Big(\frac{1}{T}\sum_{t_{1}\in\tau_{1}}\sum_{t_{2}\in\tau_{2}}k_{1}(x_{it_{1}})k_{2}(x_{it_{2}})+\frac{1}{T}\sum_{t_{1}\in\tau_{2}}\sum_{t_{2}\in\tau_{1}}k_{1}(x_{it_{1}})k_{2}(x_{it_{2}})\Big)
\end{align*}

The remaining results simply follow from their definitions.
\end{proof}
\begin{lemma}
\label{lem:half:all}Suppose that Condition \ref{About-k} holds.
Then for all $\check{W}_{T,(m_{1},\dots,m_{L})}$ for which $\sum_{l}m_{l}\leq6$,
we have
\[
\check{W}_{T,(m_{1},\dots,m_{L})}=O_{p}(1)
\]
\end{lemma}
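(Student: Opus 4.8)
The plan is to treat this as the split-sample counterpart of Lemma \ref{lem:jack:all}, exploiting the defining identity
\[
\check{W}_{T,(m_{1},\dots,m_{L})}=2W_{T,(m_{1},\dots,m_{L})}-2^{\bar{m}/2}\tfrac{1}{2}\big(W_{T,(m_{1},\dots,m_{L})}^{(1)}+W_{T,(m_{1},\dots,m_{L})}^{(2)}\big),
\]
where $\bar{m}=\sum_{l}m_{l}\le 6$ and the superscripts $(1),(2)$ denote the statistics built from the time indices in $\tau_{1}$ and $\tau_{2}$. Since a finite linear combination, with fixed non-random coefficients, of $O_{p}(1)$ quantities is again $O_{p}(1)$, it suffices to show that each of the three pieces $W_{T,(m_{1},\dots,m_{L})}$, $W_{T,(m_{1},\dots,m_{L})}^{(1)}$, and $W_{T,(m_{1},\dots,m_{L})}^{(2)}$ is $O_{p}(1)$. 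The full-sample term is immediate from Lemma \ref{WtVOp()}.

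For the half-sample terms I would first observe that, up to the scalar $2^{\bar{m}/2}$, each $W_{T,(m_{1},\dots,m_{L})}^{(s)}$ is itself a normalized V-statistic of the form introduced in Section \ref{subsec:nomV-def}, but built from the subsample of time indices $\tau_{s}$ of length $T/2$. Indeed, replacing the normalization $T^{-m_{l}/2}$ in each factor by the ``native'' $(T/2)^{-m_{l}/2}$ multiplies the statistic by exactly $2^{\bar{m}/2}$, so that $2^{\bar{m}/2}W_{T,(m_{1},\dots,m_{L})}^{(s)}$ is precisely a normalized V-statistic of total order $\bar{m}$ on a sample with effective time dimension $T'=T/2$. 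The key verification is then that Condition \ref{About-k} is inherited by this subsample: parts (ii) and (iii) hold verbatim, because the kernels $k_{j}$ and the envelope $M$ are the same functions and the data are i.i.d.\ over $t$, while part (i), $n=O(T')$, follows from $n=O(T)$ together with $T'=T/2$. With Condition \ref{About-k} in force on the half-sample, Lemma \ref{WtVOp()} applies directly and gives $2^{\bar{m}/2}W_{T,(m_{1},\dots,m_{L})}^{(s)}=O_{p}(1)$, hence $W_{T,(m_{1},\dots,m_{L})}^{(s)}=O_{p}(1)$, for $s=1,2$. Assembling the three $O_{p}(1)$ contributions then yields $\check{W}_{T,(m_{1},\dots,m_{L})}=O_{p}(1)$.

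I expect the only real obstacle to be bookkeeping rather than substance: one must check carefully that the rescaling convention (the full normalization $T^{-m_{l}/2}$ used in the definition versus the native $(T/2)^{-m_{l}/2}$ normalization of a subsample statistic) is matched exactly by the prefactor $2^{\bar{m}/2}$, so that Lemma \ref{WtVOp()} can be quoted with no residual order-of-magnitude correction. A cleaner but more laborious alternative, paralleling the proof of Lemma \ref{lem:jack:all}, would be to substitute the explicit expressions for $\check{W}_{i,T,m}$ from Lemma \ref{lem:Wit_half} and bound each resulting within- and cross-sample sum using Lemma \ref{lem:WitJackOp(1)} together with Lemma \ref{Wit_prodV}; since Lemma \ref{lem:Wit_half} is tabulated only through $m=4$, this route would require first extending those expansions to $m=5,6$, which is why the subsample-rescaling argument above is preferable.
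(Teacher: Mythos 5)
Your argument is correct and is essentially the paper's own (very terse) proof: the paper simply observes that $W_{T,(m_{1},\dots,m_{L})}=O_{p}(1)$ holds for any $T$, hence also for the two half-sample versions (with $T/2$ in place of $T$), and concludes from the defining linear combination. Your only bookkeeping slip is harmless: under the paper's convention (visible in the proof of Lemma \ref{lem:Wit_half}) the half-sample terms $W^{(s)}_{T,(m_{1},\dots,m_{L})}$ are already normalized by $(T/2)^{-m_{l}/2}$, so the prefactor $2^{\bar{m}/2}$ is an extra fixed constant rather than the conversion factor you describe, but since fixed constants do not affect the $O_{p}(1)$ conclusion the argument goes through unchanged.
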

\begin{proof}
Since $W_{T,(m_{1},\dots,m_{L})}=O_{p}(1)$ holds for any $T$, the
two half-sample versions of the statistic are also bounded in probability.
The result then follows by the definition of $\check{W}_{T,(m_{1},\dots,m_{L})}$.
\end{proof}

\section{Expansions for panel data models}

This section establishes the expansions that are used in the panel
data results of the main paper. As described in the main appendix, the
expansions are generated from a Taylor expansion of $\theta(F(\epsilon))$
, where $F(\epsilon)=F+\epsilon\sqrt{T}(\hat{F}-F)$ for $\epsilon\in[0,T^{-1/2}]$,
and $\alpha_{i}(\theta,F_{i}(\epsilon))$ and $\theta(F(\epsilon))$
are the solutions to the moment equations
\begin{align*}
0 & =\int V\big(z_{it};\theta,\alpha_{i}(\theta,F_{i}(\epsilon))\big)dF_{i}(\epsilon),\quad\text{for all }i\\
0 & =\sum_{i=1}^{n}\int U\Big(z_{it};\theta(F(\epsilon)),\alpha_{i}\big(\theta(F(\epsilon)),F_{i}(\epsilon)\big)\Big)dF_{i}(\epsilon)
\end{align*}

This process is described in detail in Hahn and Newey (2004), and
the precise form of a number of terms are derived in the supplementary
appendix to that paper. This gives an expansion of the form 
\begin{align*}
\sqrt{nT}(\hat{\theta}-\theta_{0}) & =\sqrt{n}\theta^{\epsilon}(0)+\frac{1}{2}\frac{\sqrt{n}}{\sqrt{T}}\theta^{\epsilon\epsilon}(0)+\frac{1}{6}\frac{\sqrt{n}}{T}\theta^{\epsilon\epsilon\epsilon}(0)+\frac{1}{24}\frac{\sqrt{n}}{T\sqrt{T}}\theta^{4\epsilon}(0)\\
 & +\frac{1}{120}\frac{\sqrt{n}}{T^{2}}\theta^{5\epsilon}(0)+\frac{1}{720}\frac{\sqrt{n}}{T^{5/2}}\theta^{6\epsilon}(0)+\frac{1}{5040}\frac{\sqrt{n}}{T^{3}}\theta^{7\epsilon}(\tilde{\epsilon})
\end{align*}

for some $\tilde{\epsilon}\in[0,T^{-1/2}]$. The key feature of this
expansion for our purposes, is that the $m$-th order term in the
expansion is made of normalized V-statistics of order $m$. We study
the properties of the expansion by looking at the properties of these
V-statistics.

Write the $m$-th expansion term as $\theta^{m\epsilon}(0)$. From
Lemma \ref{WitOp(1)}, we have that (for $m\leq6$) $\theta^{m\epsilon}(0)=O_{p}(1)$
and so,
\begin{align*}
\frac{\sqrt{n}}{T\sqrt{T}}\theta^{4\epsilon}(0) & =O_{p}(T^{-1})\\
\frac{\sqrt{n}}{T^{2}}\theta^{5\epsilon}(0) & =o_{p}(T^{-1})\\
\frac{\sqrt{n}}{T^{5/2}}\theta^{6\epsilon}(0) & =o_{p}(T^{-1})
\end{align*}

The next set of results extends two existing results in Hahn and Newey
(2004) which bound $\theta^{5\epsilon}(\tilde{\epsilon})$, allowing
us to bound the remainder term $\theta^{7\epsilon}(\tilde{\epsilon})$.
\begin{lemma}
\label{lem:lem5_new}\textbf{ }Suppose that, for each $i$, $\{\xi_{it}(\phi),t=1,2,\dots\}$
is a sequence of zero mean i.i.d. variables indexed by some parameter
$\phi\in\Phi$. We assume that the $\xi_{it}(\phi)$ are independent
across $i$, and that $\sup_{\phi\in\Phi}|\xi_{it}(\phi)|\leq B_{it}$
for some sequence of variables $B_{it}$ that is i.i.d. across $t$
and independent across $i$., with $\max_{i}E[|B_{it}|^{64}]<\infty$,
and $n=o(T^{3})$. Then
\[
Pr\Big[\max_{i}\Big|\frac{1}{\sqrt{T}}\sum_{t=1}^{T}\xi_{it}(\phi_{i})\Big|>T^{\frac{1}{15}-\nu}\Big]=o(T^{-1})
\]

for every $0\leq v<\frac{1}{160}$, with $\{\phi_{i}\}$ some arbitrary
sequence in $\Phi$.
\end{lemma}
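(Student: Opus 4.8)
The plan is to reduce the maximal statement to a single-index tail bound via a union bound over $i$, and then control that single tail with a high-order Markov inequality backed by the moment bound for i.i.d.\ sums. This mirrors the cross-sectional argument of Lemma \ref{modifiedHH:Lem3}, with the time average $T^{-1/2}\sum_t$ replacing $T^{-1/2}\sum_i$ and with an extra union bound over the $n$ individuals. First I would write
\[
\Pr\Big[\max_{i}\Big|\tfrac{1}{\sqrt{T}}\sum_{t=1}^{T}\xi_{it}(\phi_{i})\Big|>T^{\frac{1}{15}-\nu}\Big]\leq\sum_{i=1}^{n}\Pr\Big[\Big|\tfrac{1}{\sqrt{T}}\sum_{t=1}^{T}\xi_{it}(\phi_{i})\Big|>T^{\frac{1}{15}-\nu}\Big]\leq n\max_{i}\Pr\Big[\Big|\sum_{t=1}^{T}\xi_{it}(\phi_{i})\Big|>T^{\frac{1}{2}+\frac{1}{15}-\nu}\Big],
\]
so it suffices to bound the single-$i$ tail uniformly in $i$ (and in the arbitrary deterministic sequence $\phi_i$) and then absorb the inflation factor $n=o(T^{3})$.

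For the single tail I would apply Markov's inequality with the highest available even moment, $2k=64$, producing a bound of the form $E[(\sum_{t}\xi_{it}(\phi_{i}))^{2k}]\big/T^{(\frac{1}{2}+\frac{1}{15}-\nu)2k}$. The numerator is handled by Lemma \ref{sum-bound}: since the $\xi_{it}(\phi_{i})$ are mean zero and i.i.d.\ across $t$, every partition in which some index appears exactly once vanishes, so only $O(T^{k})$ of the $T^{2k}$ terms survive and $E[(\sum_{t}\xi_{it}(\phi_{i}))^{2k}]\leq C(k)T^{k}$. The essential point is that $C(k)$ is \emph{uniform} in $i$ and in the chosen $\phi_{i}$: the envelope condition $\sup_{\phi}|\xi_{it}(\phi)|\leq B_{it}$ together with $\max_{i}E[|B_{it}|^{64}]<\infty$ dominates every mixed moment $E[|\xi_{it}(\phi)|^{j}]$, $j\leq 64$, entering the expansion in Lemma \ref{sum-bound}, so the constant depends only on the envelope moments. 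This is exactly the device by which Lemma \ref{modifiedHH:Lem3} moves $\sup_{\phi}$ outside the expectation, and it is why the envelope $B_{it}$, rather than $\xi_{it}$ itself, must carry the moment assumption.

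Combining the two steps, the single tail is $O\big(T^{-(2/15-2\nu)k}\big)$, and the union bound gives $n\cdot O\big(T^{-(2/15-2\nu)k}\big)=o\big(T^{\,3-(2/15-2\nu)k}\big)$ after using $n=o(T^{3})$. It then remains to verify that the resulting exponent satisfies $3-(2/15-2\nu)k\leq -1$, i.e.\ $(2/15-2\nu)k\geq 4$, throughout the stated range $0\leq\nu<\tfrac{1}{160}$ with $2k=64$. I expect this bookkeeping of exponents to be the main obstacle, since it is precisely the inequality that ties together the admissible range of $\nu$, the number of available moments, and the permitted growth $n=o(T^{3})$; reconciling these constants (and, if needed, sharpening the moment step or the union-bound inflation) is the delicate part, whereas the probabilistic content—union bound, high-moment Markov, and the uniform i.i.d.\ sum bound—is otherwise routine.
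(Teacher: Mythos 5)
Your proposal is correct and follows essentially the same route as the paper's proof: a union bound over the $n=o(T^{3})$ individuals, Markov's inequality with the $64$th moment, the $O(T^{32})$ bound for the $64$th absolute moment of a centered i.i.d.\ sum over $t$, and the envelope $B_{it}$ with $\max_{i}E[B_{it}^{64}]<\infty$ to make the constant uniform in $i$ and in the arbitrary sequence $\phi_{i}$. The exponent bookkeeping you flag does close, but only on the range $\nu<\tfrac{1}{240}$ (your condition $(2/15-2\nu)\cdot32>4$ is exactly $\nu<1/240$), which is precisely where the paper's own proof ends up and the range invoked in the subsequent lemma, so your worry points to a typo in the stated range $\nu<\tfrac{1}{160}$ rather than a defect in the argument.
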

\begin{proof}
We have
\begin{align*}
Pr\Big[\sup_{\phi\in\Phi} & \Big|\frac{1}{\sqrt{T}}\sum_{t=1}^{T}\xi_{it}(\phi_{i})\Big|>T^{\frac{1}{15}-\nu}\Big]=Pr\Big[\sup_{\phi\in\Phi}\Big|\sum_{t=1}^{T}\xi_{it}(\phi_{i})\Big|>T^{\frac{17}{30}-\nu}\Big]\\
 & \leq\frac{E\Big[\sup_{\phi\in\Phi}\Big|\sum_{t=1}^{T}\xi_{it}(\phi_{i})\Big|^{64}\Big]}{T^{64\times\frac{17}{30}-64\nu}}\\
 & =\frac{\sup_{\phi\in\Phi}E\Big[\Big|\sum_{t=1}^{T}\xi_{it}(\phi_{i})\Big|^{64}\Big]}{T^{64\times\frac{17}{30}-64\nu}}\\
 & \leq\frac{\sup_{\phi\in\Phi}CT^{31}E\Big[\xi_{it}(\phi_{i})^{64}\Big]}{T^{\frac{1088}{30}-64\nu}}\\
 & \leq\frac{CT^{31}E\Big[B_{it}^{64}\Big]}{T^{\frac{1088}{30}-64\nu}}
\end{align*}

for some constant $C>0$. The first inequality above is the Markov
inequality, while the second to last inequality comes from Loeve (1977,
p. 276). Therefore, we have
\begin{align*}
TPr\Big[\max_{i}\Big|\frac{1}{\sqrt{T}}\sum_{t=1}^{T}\xi_{it}(\phi_{i})\Big|>T^{\frac{1}{15}-\nu}\Big] & \leq T\sum_{i=1}^{n}Pr\Big[\Big|\frac{1}{\sqrt{T}}\sum_{t=1}^{T}\xi_{it}(\phi_{i})\Big|>T^{\frac{1}{15}-\nu}\Big]\\
 & =o(T^{4})\frac{CT^{31}}{T^{\frac{1088}{30}-64\nu}}\\
 & =o(T^{-\frac{38}{30}+64\nu})\\
 & =o(1)
\end{align*}

for $0\leq\nu<\frac{1}{240}$.
\end{proof}
\begin{lemma}
\label{thm:7bound}Suppose that Conditions 1, 2, 3, and 4 hold. Then
we have
\begin{align*}
Pr\Big[\max_{0\leq\epsilon\leq\frac{1}{\sqrt{T}}}\big|\theta^{\epsilon\epsilon\epsilon\epsilon\epsilon\epsilon}(\epsilon)\big| & >C\big(T^{\frac{1}{15}-\nu}\big)^{6}\Big]=o(T^{-1})\\
Pr\Big[\max_{0\leq\epsilon\leq\frac{1}{\sqrt{T}}}\big|\theta^{\epsilon\epsilon\epsilon\epsilon\epsilon\epsilon\epsilon}(\epsilon)\big| & >C\big(T^{\frac{1}{15}-\nu}\big)^{7}\Big]=o(T^{-1})
\end{align*}
for some constant $C$ and $0\leq\nu<\frac{1}{240}$.
\end{lemma}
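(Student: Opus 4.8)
The plan is to run the same recursive scheme that proves the cross-sectional Lemma~\ref{multi:bound:theta-epsilon-1}, adapted to the profiled fixed-effects structure of Hahn and Newey (2004), with the concentration inequality of Lemma~\ref{lem:lem5_new} playing the role that Lemma~\ref{multi:k} plays in the cross-section. The derivatives $\theta^{m\epsilon}(\epsilon)$ are generated by repeatedly differentiating in $\epsilon$ the two moment equations that define $\theta(F(\epsilon))$ and the profiled effects $\{\alpha_i(\theta,F_i(\epsilon))\}_i$. Each differentiation produces, at order $k$, an identity of the schematic form
\[
E_\epsilon[U^\theta]\,\theta^{k\epsilon}(\epsilon)=P_k\big(\{\theta^{j\epsilon}(\epsilon)\}_{j<k},\,\{\alpha_i^{j\epsilon}(\epsilon)\}_{j<k},\,\text{moment terms},\,\textstyle\int(\cdot)\,d\Delta\big),
\]
where $P_k$ is a finite sum of products whose total degree in the first-order stochastic objects equals $k$, together with an analogous identity for each $\alpha_i^{k\epsilon}$. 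First I would record these identities explicitly through order seven, exactly as (\ref{multi:alt-h1})--(\ref{multi:alt-h6}) do in the cross-section, but keeping the extra chain-rule terms arising from $\alpha_i(\theta,F_i(\epsilon))$.

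Second, I would bound the two kinds of inputs appearing in $P_k$, uniformly in $\epsilon\in[0,T^{-1/2}]$ and, crucially, uniformly in $i$. The moment terms $E_\epsilon[\cdot]$ and the denominators $E_\epsilon[U^\theta]$ and $E_\epsilon[V^\alpha]$ are handled by a uniform law of large numbers, with the latter bounded away from zero by Condition 4(ii)--(iii); this mirrors the first half of Lemma~\ref{multi:k}. The empirical-process integrals, each of the form $\max_i|T^{-1/2}\sum_t\xi_{it}(\phi_i)|$ for some indexing sequence $\phi_i$, are controlled by Lemma~\ref{lem:lem5_new}: each is $O_p(T^{1/15-\nu})$ with tail probability $o(T^{-1})$, for $0\le\nu<1/240$. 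The $64$th-moment envelope that lemma requires is exactly what Condition 4(i) supplies through $Q>64$, and is needed precisely because the argument must reach seventh derivatives.

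Third, I would induct on $k$. Suppose $\theta^{j\epsilon}(\epsilon)$ and $\max_i|\alpha_i^{j\epsilon}(\epsilon)|$ are bounded by $C(T^{1/15-\nu})^j$, on an event of probability $1-o(T^{-1})$, for all $j<k$. Feeding these into $P_k$ together with the input bounds of the previous step, every product in $P_k$ is a product of factors each bounded by a constant or by $T^{1/15-\nu}$, with total degree $k$; since there are only $O(1)$ such products, $P_k=O_p\big((T^{1/15-\nu})^k\big)$ off an event of probability $o(T^{-1})$. Dividing by the denominator $E_\epsilon[U^\theta]$, bounded away from zero, preserves the bound, and likewise for the $\alpha_i^{k\epsilon}$ equations. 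Taking $k=6$ and $k=7$ then yields the two claimed statements.

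The hard part will be the coupled recursion between $\theta^{k\epsilon}$ and the individual-effect derivatives $\alpha_i^{k\epsilon}(\theta,F_i(\epsilon))$, which has no cross-sectional analogue: each $\theta$-derivative depends on derivatives of the profiled effects, which solve their own implicit equations and must be controlled simultaneously and uniformly over the $n$ individuals. This uniform-in-$i$ control is exactly what the $\max_i$ in Lemma~\ref{lem:lem5_new} delivers; that lemma already absorbs the union bound over the $n=O(T)$ individuals into a single $o(T^{-1})$ tail, which is the genuinely delicate estimate (its proof needs the $64$th-moment envelope of Condition 4(i) and the restriction $n=o(T^3)$, both of which hold here because Condition~1 gives $n/T\to\rho$). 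Granting that, each step of the recursion invokes the lemma only $O(1)$ times, so the exceptional probability stays $o(T^{-1})$ throughout, and the remaining degree-counting that turns order-$k$ products into the bound $(T^{1/15-\nu})^k$ is routine.
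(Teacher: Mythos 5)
Your proposal is correct and follows essentially the same route as the paper, which simply cites the recursive derivative-bounding argument of Hahn and Newey (2004) with Lemma \ref{lem:lem5_new} substituted for their Lemma 5; your sketch fills in exactly the steps that citation relies on (the coupled $\theta$--$\alpha_i$ recursion, the uniform-in-$i$ concentration with $o(T^{-1})$ tails from the 64th-moment envelope, and the degree count giving $(T^{1/15-\nu})^k$). No gap.
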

\begin{proof}
The proof is virtually identical to the related proofs in Hahn and
Newey (2004), with Lemma \ref{lem:lem5_new} replacing their Lemma
5.
\end{proof}
From Theorem \ref{thm:7bound}, we have that 
\[
Pr\big(\big|\frac{1}{\sqrt{T}}\theta^{7\epsilon}(\tilde{\epsilon})\big|>C\big)=o(1)
\]

so that $\frac{\sqrt{n}}{T^{3}}\theta^{7\epsilon}(\tilde{\epsilon})=\frac{\sqrt{n}}{\sqrt{T}}\frac{1}{T^{2}}\frac{1}{\sqrt{T}}\theta^{7\epsilon}(\tilde{\epsilon})=o(T^{-1})$.
We now have
\begin{align*}
\sqrt{nT}(\hat{\theta}-\theta_{0}) & =\sqrt{n}\theta^{\epsilon}(0)+\frac{1}{2}\frac{\sqrt{n}}{\sqrt{T}}\theta^{\epsilon\epsilon}(0)+\frac{1}{6}\frac{\sqrt{n}}{T}\theta^{\epsilon\epsilon\epsilon}(0)+\frac{1}{24}\frac{\sqrt{n}}{T\sqrt{T}}\theta^{4\epsilon}(0)+o_{p}(T^{-1})\\
 & =\sqrt{n}\theta^{\epsilon}(0)+\frac{1}{2}\frac{\sqrt{n}}{\sqrt{T}}\theta^{\epsilon\epsilon}(0)+\frac{1}{6}\frac{\sqrt{n}}{T}\theta^{\epsilon\epsilon\epsilon}(0)+O_{p}(T^{-1})
\end{align*}

which is the expansion given in the main appendix. Note that, by Lemma
\ref{WtVOp()}, we have $Var(\sqrt{n}\theta^{\epsilon}(0))=O(1)$,
$Var(\sqrt{n}\theta^{\epsilon\epsilon}(0))=O(1)$, and $Var(\sqrt{n}\theta^{\epsilon\epsilon\epsilon}(0))=O(1)$.

\subsection{Jackknife and split-sample expansions}

We now show that the same expansion applies to the jackknife and split-sample
bias-corrected estimators. Let $\tilde{\theta}_{J}^{m\epsilon}(0)$
and $\tilde{\theta}_{1/2}^{m\epsilon}$ be the $m$-th order terms
in the expansions for the jackknife and split-sample estimators. Specifically,
\[
T^{-m/2}\tilde{\theta}_{J}^{m\epsilon}(0)=T^{-m/2}\theta^{m\epsilon}(0)-\frac{1}{T}\sum_{t=1}^{T}(T-1)^{-m/2}\theta_{(t)}^{m\epsilon}(0)
\]

where $\theta_{(t)}^{m\epsilon}(0)$ is the $m$-th order term for
the estimator that excludes period $t$. Similarly,
\[
T^{-m/2}\tilde{\theta}_{1/2}^{m\epsilon}(0)=2T^{-m/2}\theta^{m\epsilon}(0)-\frac{1}{2}(T/2)^{-m/2}\big(\theta_{1}^{m\epsilon}(0)+\theta_{2}^{m\epsilon}(0)\big)
\]

where $\theta_{1}^{m\epsilon}(0)$ and $\theta_{2}^{m\epsilon}(0)$
are the $m$-th order terms for the estimators that use only the first
and second halves of time periods.

As for the MLE expansion, the results of Sections \ref{sec:V-jack}
and \ref{sec:V-split} show that that $\tilde{\theta}_{J}^{m\epsilon}$
and $\tilde{\theta}_{1/2}^{m\epsilon}$ are both $O_{p}(1)$ for $m\leq6$.
It remains to establish that the remainder terms are negligible. Lemma
\ref{thm:7bound} immediately gives so that $\frac{\sqrt{n}}{T^{3}}\theta_{1}^{7\epsilon}(\tilde{\epsilon}_{1})=o(T^{-1})$
and $\frac{\sqrt{n}}{T^{3}}\theta_{2}^{7\epsilon}(\tilde{\epsilon}_{2})=o(T^{-1})$.
For the jackknife term
\begin{align*}
Pr\Big[\max_{t}\Big|\frac{1}{\sqrt{T}}\theta_{(t)}^{7\epsilon}(\tilde{\epsilon}_{(t)})\Big|>\eta\Big] & \leq\sum_{t}Pr\Big[\Big|\frac{1}{\sqrt{T}}\theta_{(t)}^{7\epsilon}(\tilde{\epsilon}_{(t)})\Big|>\eta\Big]\\
 & =T\cdot o(T^{-1})=o(1)
\end{align*}

which implies

\begin{align*}
T\times\Big|\frac{1}{T}\sum_{t=1}^{T}\sqrt{nT}\frac{1}{(T-1)^{7/2}}\theta_{(t)}^{\epsilon\epsilon\epsilon\epsilon\epsilon\epsilon}(\tilde{\epsilon}_{(t)})\Big| & \leq C\sqrt{nT}\frac{T\sqrt{T}}{(T-1)^{7/2}}\max_{t}\Big|\frac{1}{\sqrt{T}}\theta_{(t)}^{\epsilon\epsilon\epsilon\epsilon\epsilon\epsilon}(\tilde{\epsilon}_{(t)})\Big|\\
 & =\sqrt{\frac{n}{T}}\frac{T^{5/2}}{(T-1)^{7/2}}O_{p}(1)\\
 & =o_{p}(1)
\end{align*}

Therefore,
\begin{align*}
\sqrt{nT}T^{-7/2}\tilde{\theta}_{J}^{7\epsilon}(0) & =\sqrt{\frac{n}{T}}T^{-2}\frac{1}{\sqrt{T}}\theta^{7\epsilon}(0)-\frac{1}{T}\sum_{t=1}^{T}\sqrt{nT}(T-1)^{-m/2}\theta_{(t)}^{m\epsilon}(0)\\
 & =o_{p}(T^{-1})
\end{align*}

Given this, we may write the expansions used in the paper as 
\begin{align*}
\sqrt{nT}(\tilde{\theta}_{J}-\theta_{0}) & =\sqrt{n}\tilde{\theta}_{J}^{\epsilon}(0)+\frac{1}{2}\frac{\sqrt{n}}{\sqrt{T}}\tilde{\theta}_{J}^{\epsilon\epsilon}(0)+\frac{1}{6}\frac{\sqrt{n}}{T}\tilde{\theta}_{J}^{\epsilon\epsilon\epsilon}(0)+\frac{1}{24}\frac{\sqrt{n}}{T\sqrt{T}}\tilde{\theta}_{J}^{4\epsilon}(0)+o_{p}(T^{-1})\\
 & =\sqrt{n}\tilde{\theta}_{J}^{\epsilon}(0)+\frac{1}{2}\frac{\sqrt{n}}{\sqrt{T}}\tilde{\theta}_{J}^{\epsilon\epsilon}(0)+\frac{1}{6}\frac{\sqrt{n}}{T}\tilde{\theta}_{J}^{\epsilon\epsilon\epsilon}(0)+O_{p}(T^{-1})
\end{align*}

\begin{align*}
\sqrt{nT}(\tilde{\theta}_{1/2}-\theta_{0}) & =\sqrt{n}\tilde{\theta}_{1/2}^{\epsilon}(0)+\frac{1}{2}\frac{\sqrt{n}}{\sqrt{T}}\tilde{\theta}_{1/2}^{\epsilon\epsilon}(0)+\frac{1}{6}\frac{\sqrt{n}}{T}\tilde{\theta}_{1/2}^{\epsilon\epsilon\epsilon}(0)+\frac{1}{24}\frac{\sqrt{n}}{T\sqrt{T}}\tilde{\theta}_{1/2}^{4\epsilon}(0)+o_{p}(T^{-1})\\
 & =\sqrt{n}\tilde{\theta}_{1/2}^{\epsilon}(0)+\frac{1}{2}\frac{\sqrt{n}}{\sqrt{T}}\tilde{\theta}_{1/2}^{\epsilon\epsilon}(0)+\frac{1}{6}\frac{\sqrt{n}}{T}\tilde{\theta}_{1/2}^{\epsilon\epsilon\epsilon}(0)+O_{p}(T^{-1})
\end{align*}

The proof of Proposition 8 establishes: $Var(\sqrt{n}\tilde{\theta}_{J}^{\epsilon}(0))=O(1)$,
$Var(\sqrt{n}\tilde{\theta}_{J}^{\epsilon\epsilon}(0))=O(1)$, and
$Var(\sqrt{n}\tilde{\theta}_{J}^{\epsilon\epsilon\epsilon}(0))=O(1)$,
as well as $Var(\sqrt{n}\tilde{\theta}_{1/2}^{\epsilon}(0))=O(1)$,
$Var(\sqrt{n}\tilde{\theta}_{1/2}^{\epsilon\epsilon}(0))=O(1)$, and
$Var(\sqrt{n}\tilde{\theta}_{1/2}^{\epsilon\epsilon\epsilon}(0))=O(1)$.

\section{Additional results for the proof of Proposition 9}

For completion of the proof of Proposition 9, we show here that $\frac{\sqrt{n}}{T}(C_{n}-\tilde{C}_{n,J})=O_{p}(T^{-1})$
and $\frac{\sqrt{n}}{T}(C_{n}-\tilde{C}_{n,1/2})=O_{p}(T^{-1/2})$,
where, following the notation of this appendix, $C_{n}=\frac{1}{6}\theta^{\epsilon\epsilon\epsilon}(0)$,
$\tilde{C}_{n,J}=\frac{1}{6}\tilde{\theta}_{J}^{\epsilon\epsilon\epsilon}(0)$,
and $\tilde{C}_{n,1/2}=\frac{1}{6}\tilde{\theta}_{1/2}^{\epsilon\epsilon\epsilon}(0)$.

For the jackknife case, we have from Lemma \ref{WiT_jack} that
\begin{align*}
\frac{\sqrt{n}}{T}\big(\tilde{W}_{T,(3)}-W_{T,(3)}\big) & =\frac{1}{\sqrt{n}}\sum_{i}\Big\{\frac{1-2T}{T^{5/2}(T-1)}\sum_{t}k_{1}(x_{i,t})k_{2}(x_{i,t})k_{3}(x_{i,t})\\
 & \quad+\frac{T^{2}+3T-1}{T^{5/2}(T-1)^{2}}\sum_{t_{1}\ne t_{2}}\big(k_{1}(x_{i,t_{1}})k_{2}(x_{i,t_{2}})k_{3}(x_{i,t_{2}})\\
 & \quad+k_{1}(x_{i,t_{2}})k_{2}(x_{i,t_{1}})k_{3}(x_{i,t_{2}})+k_{1}(x_{i,t_{2}})k_{2}(x_{i,t_{2}})k_{3}(x_{i,t_{1}})\big)\\
 & \quad+\frac{3T-1}{T^{5/2}(T-1)^{2}}\sum_{t_{1}\neq t_{2}\ne t_{3}}k_{1}(x_{i,t_{1}})k_{2}(x_{i,t_{2}})k_{3}(x_{i,t_{3}})\Big\}
\end{align*}

The first term is $\sqrt{\frac{n}{T}}\frac{1-2T}{T(T-1)}\frac{1}{nT}\sum_{i}\sum_{t}k_{1}(x_{i,t})k_{2}(x_{i,t})k_{3}(x_{i,t})=O_{p}(T^{-1})$.
The second term is
\[
\frac{T^{2}+3T-1}{T(T-1)^{2}}\frac{1}{\sqrt{n}}\sum_{i}\frac{1}{\sqrt{T}}\sum_{t_{1}}k_{1}(x_{i,t_{1}})\frac{1}{T}\sum_{t_{2}\ne t_{1}}k_{2}(x_{i,t_{2}})k_{3}(x_{i,t_{2}})=O_{p}(T^{-1})
\]
while the final term is
\[
\frac{3T-1}{T(T-1)^{2}}\frac{1}{\sqrt{n}}\sum_{i}\frac{1}{T\sqrt{T}}\sum_{t_{1}\neq t_{2}\ne t_{3}}k_{1}(x_{i,t_{1}})k_{2}(x_{i,t_{2}})k_{3}(x_{i,t_{3}})=o_{p}(T^{-1})
\]

Hence we have $\frac{\sqrt{n}}{T}\big(\tilde{W}_{T,(3)}-W_{T,(3)}\big)=O_{p}(T^{-1})$.
Similar results for $\tilde{W}_{T,(2,1)}$ and $\tilde{W}_{T,(1,1,1)}$
are easily shown, giving $\frac{\sqrt{n}}{T}(C_{n}-\tilde{C}_{n,J})=O_{p}(T^{-1})$
as required.

In the split-sample case, Lemma \ref{lem:Wit_half} gives
\begin{align*}
\frac{\sqrt{n}}{T}\big(\check{W}_{T,(3)}-W_{T,(3)}\big) & =\frac{\sqrt{n}}{T}W_{T,(3)}-4\frac{\sqrt{n}}{T}\Big(\frac{1}{n}\sum_{i}\frac{1}{T\sqrt{T}}\sum_{t_{1},t_{2},t_{3}\in\tau_{1}}k_{1}(x_{it_{1}})k_{2}(x_{it_{2}})k_{3}(x_{it_{3}})\\
 & +\frac{1}{T\sqrt{T}}\sum_{t_{1},t_{2},t_{3}\in\tau_{2}}k_{1}(x_{it_{1}})k_{2}(x_{it_{2}})k_{3}(x_{it_{3}})\Big)\\
 & =\frac{1}{\sqrt{T}}\sqrt{\frac{n}{T}}\Big(W_{T,(3)}-4\big(W_{T,(3)}^{(1)}+W_{T,(3)}^{(2)}\big)\Big)\\
 & =O_{p}(T^{-1/2})
\end{align*}

where the final line follows form $W_{T,(3)}=O_{p}(1)$, $W_{T,(3)}^{(1)}=O_{p}(1)$,
and $W_{T,(3)}^{(2)}=O_{p}(1)$.

\section{Higher-order Bias}

Here we demonstrate the effect of the two bias corrections on the
$O(T^{-2})$ bias of the estimators, which remains even after bias
correction. To clarify this, it is useful to extend the expansion
in the main paper to include a further term 
\[
\sqrt{nT}(\hat{\theta}-\theta_{0})=\sqrt{n}A_{n}+\frac{\sqrt{n}}{\sqrt{T}}B_{n}+\frac{\sqrt{n}}{T}C_{n}+\frac{\sqrt{n}}{T\sqrt{T}}D_{n}+o_{p}(T^{-1})
\]
 where again, $Var(\sqrt{n}D_{n})=O(1)$. Using this expansion, we
can define \\ $\frac{1}{T}\sqrt{\rho}\mathbf{D}=\frac{1}{T}\sqrt{\rho}\lim_{n\to\infty}\{\sqrt{T}E[C_{n}]+E[D_{n}]\}$
as the $O(T^{-2})$ bias. While both bias-corrections remove the asymptotic
bias of the MLE, they do not remove the $O(T^{-2})$ bias, and in
fact will inflate this term. The following theorem establishes the
higher-order biases of the two bias-corrected estimators, relative
to that of the MLE $\hat{\theta}$.
\begin{theorem}
Let $\tilde{\theta}_{J}$ and $\tilde{\theta}_{1/2}$ be the jackknife
and split-sample bias-corrected estimates. The $O(T^{-2})$ biases
of these estimators are $\frac{1}{T}\sqrt{\rho}\mathbf{D}_{J}$ and
$\frac{1}{T}\sqrt{\rho}\mathbf{D}_{1/2}$ respectively, where 
\begin{align*}
\mathbf{D}_{J} & =\lim_{n\to\infty}\{\sqrt{T}E[\tilde{C}_{n,J}]+E[\tilde{D}_{n,J}]\}=-\frac{T}{T-1}\mathbf{D}+o(1)\\
\mathbf{D}_{1/2} & =\lim_{n\to\infty}\{\sqrt{T}E[\tilde{C}_{n,1/2}]+E[\tilde{D}_{n,1/2}]\}=-2\mathbf{D}+o(1)
\end{align*}
\end{theorem}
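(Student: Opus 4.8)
The plan is to avoid evaluating the jackknifed and split-sample expansion terms through the explicit V-statistic formulas of Lemma~\ref{WiT_jack} and Lemma~\ref{lem:Wit_half}, and instead to work directly with the exact bias of the fixed-effects MLE as a function of the number of time periods. Write $\hat\theta(T')$ for the estimator computed from $T'$ periods with $n$ held fixed. First I would establish, from the expansion $\hat\theta(T')-\theta_0=\tfrac{1}{\sqrt{T'}}\theta^\epsilon(0)+\tfrac12\tfrac{1}{T'}\theta^{\epsilon\epsilon}(0)+\cdots$ together with the leading $T'$-order of the expectations of the normalized V-statistics in each term, that
\[
E[\hat\theta(T')-\theta_0]=\frac{b_1}{T'}+\frac{b_2}{T'^2}+\frac{b_3}{T'^3}+o(T'^{-3}),
\]
where $b_1,b_2,b_3$ depend on $n$ (through cross-sectional averages) but not on $T'$. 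The inputs are: $E[\theta^\epsilon(0)]=0$; $E[\theta^{\epsilon\epsilon}(0)]$ is exactly $T'$-free, since the only surviving time pairings in $W_{T',(2)}$ and $W_{T',(1,1)}$ are the diagonal ones, giving $\tfrac1n\sum_iE[k_1k_2]$; $E[\theta^{\epsilon\epsilon\epsilon}(0)]=6\gamma_0/\sqrt{T'}$ because a nonzero mean of an order-three V-statistic forces all three time indices to coincide; and $E[\theta^{\epsilon\epsilon\epsilon\epsilon}(0)]=24\delta_0+O(T'^{-1})$ from its two-pair terms. Substituting yields $b_1=\beta_0$ and $b_2=\gamma_0+\delta_0$, and in particular $\sqrt{T}E[C_n]+E[D_n]\to\gamma_0+\delta_0=b_2$, so that $\lim_{n\to\infty}b_2=\mathbf{D}$.

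Next I would invoke the i.i.d.\ assumption over $t$ (the $z_{it}$ are identically distributed across $t$): the delete-$t$ estimators are exchangeable, so $\tfrac1T\sum_tE[\hat\theta_{(t)}]=E[\hat\theta(T-1)]$, and each half-sample estimator satisfies $E[\hat\theta_s]=E[\hat\theta(T/2)]$. Feeding the bias expansion into $\tilde\theta_J=T\hat\theta-(T-1)\tfrac1T\sum_t\hat\theta_{(t)}$ and $\tilde\theta_{1/2}=2\hat\theta-\tfrac12(\hat\theta_1+\hat\theta_2)$, the $O(T^{-1})$ terms cancel by construction and I obtain
\[
E[\tilde\theta_J-\theta_0]=b_2\Big(\tfrac1T-\tfrac1{T-1}\Big)+O(T^{-3})=-\frac{b_2}{T(T-1)}+O(T^{-3}),
\]
\[
E[\tilde\theta_{1/2}-\theta_0]=\Big(2-\tfrac12\cdot2\cdot4\Big)\frac{b_2}{T^2}+O(T^{-3})=-\frac{2b_2}{T^2}+O(T^{-3}).
\]
These reproduce the elementary computation behind the Neyman--Scott example, and the ratio $2\tfrac{T-1}{T}$ of the two leading coefficients is exactly the one quoted in the Monte Carlo discussion.

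To finish, I would translate these biases into the coefficients $\mathbf D_J$ and $\mathbf D_{1/2}$. The $O(T^{-2})$ bias of $\sqrt{nT}(\tilde\theta-\theta_0)$ equals both $\tfrac{\sqrt n}{T\sqrt T}\big(\sqrt{T}E[\tilde C_n]+E[\tilde D_n]\big)$ and $\sqrt{nT}\,E[\tilde\theta-\theta_0]$ up to $o(\tfrac{\sqrt n}{T\sqrt T})$ (the $O(T^{-1})$ piece having been removed). Solving for the bracket, and using $\tfrac{T\sqrt T}{\sqrt n}\sqrt{nT}=T^2$, gives $\sqrt{T}E[\tilde C_{n,J}]+E[\tilde D_{n,J}]=-\tfrac{T}{T-1}b_2(n)+o(1)$ and $\sqrt{T}E[\tilde C_{n,1/2}]+E[\tilde D_{n,1/2}]=-2b_2(n)+o(1)$; letting $n\to\infty$ with $b_2(n)\to\mathbf D$ delivers $\mathbf D_J=-\tfrac{T}{T-1}\mathbf D+o(1)$ and $\mathbf D_{1/2}=-2\mathbf D+o(1)$.

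The hard part will be the first step. I must justify the bias expansion with $T'$-independent coefficients through order $T'^{-3}$ and, crucially, control the $o(T'^{-3})$ remainder uniformly enough that the jackknife differences $T\,r(T)-(T-1)\,r(T-1)$ (and the split-sample analogue) are $o(T^{-2})$; note the explicit $b_3/T'^3$ term must be retained, since its jackknife difference is $O(T^{-3})$ whereas a crude $O(T'^{-3})$ remainder would only give $O(T^{-2})$. This is where the V-statistic order-counting and the higher-order remainder bound of Lemma~\ref{thm:7bound} are needed, and where the joint $(n,T)$ dependence of the coefficients must be handled with care. A more computational but self-contained alternative is to evaluate $E[\tilde C_{n,J}]$ and $E[\tilde D_{n,J}]$ termwise from the explicit jackknifed statistics $\tilde W_{i,T,3},\tilde W_{i,T,4}$ of Lemma~\ref{WiT_jack} and their split-sample counterparts in Lemma~\ref{lem:Wit_half}, which would reproduce the same factors.
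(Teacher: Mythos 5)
Your proposal is correct in its conclusions and takes a genuinely different route from the paper. The paper proves the theorem by brute force: it writes out the jackknifed and split-sample versions of the third- and fourth-order V-statistics explicitly (the analogues of $\tilde{W}_{i,T,3}$, $\tilde{W}_{i,T,4}$ and their half-sample counterparts), takes expectations term by term, and reads off the factors $-\tfrac{T}{T-1}$ and $-2$ from the surviving diagonal and two-pair contributions. You instead exploit exchangeability of the time periods so that the delete-one average has the same expectation as the $(T-1)$-period statistic and each half-panel estimator the same expectation as the $(T/2)$-period one, reducing everything to elementary algebra on a bias function $b_1/T'+b_2/T'^2+b_3/T'^3+o(T'^{-3})$ with $T'$-free coefficients. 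Your order-counting for why the coefficients are $T'$-free (only the all-coincident time pairing survives in an odd-order V-statistic, only all-coincident and two-pair configurations in an even one) is exactly the computation the paper performs, just organized once for the full-sample statistic rather than repeated for each bias-corrected variant; and you correctly flag that the $b_3/T'^3$ term must be carried explicitly because $T\cdot o(T^{-3})$ is only $o(T^{-2})$ while $T\cdot O(T^{-3})$ is not. What your approach buys is transparency and generality: it makes plain that any linear combination of subsample estimators that annihilates the $T^{-1}$ bias inflates the $T^{-2}$ bias by a factor determined purely by the subsample lengths, which is the mechanism behind the $2\tfrac{T-1}{T}$ ratio quoted in the Monte Carlo discussion. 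What the paper's approach buys is that it never has to leave the objects the theorem is actually about.

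That last point is the one genuine gap you should close. The theorem defines $\mathbf{D}_J$ and $\mathbf{D}_{1/2}$ through $\sqrt{T}E[\tilde{C}_{n,\cdot}]+E[\tilde{D}_{n,\cdot}]$, i.e.\ through expectations of expansion terms, not through $E[\tilde{\theta}-\theta_0]$. Your final paragraph identifies the two "up to $o(\cdot)$", but the paper only controls the expansion remainder in probability (the seventh-order bound), not in expectation, so the identification of the literal bias of $\hat{\theta}(T')$ with the truncated expansion's bias is not available at the stated level of rigor. The fix is cheap: run your exchangeability argument directly on the expansion terms. Since $\tilde{C}_{n,J}$ and $\tilde{D}_{n,J}$ are by construction exact linear combinations of the full-sample terms and the delete-one (or half-sample) terms with known rescalings, and since $E[C_n]=\gamma_0 T^{-1/2}$ and $E[D_n]=\delta_0+O(T^{-1})$ with $\gamma_0,\delta_0$ free of $T$, substituting $T\mapsto T-1$ (resp.\ $T\mapsto T/2$) in these exact expressions and combining reproduces $-\tfrac{T}{T-1}$ and $-2$ without ever invoking the bias of the estimator itself. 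With that substitution your argument is complete and needs none of the explicit jackknifed V-statistic formulas.
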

The jackknife correction inflates the $O(T^{-2})$ bias by a factor
of $\frac{T}{T-1}$, while in the split-sample case the bias is inflated
by a factor of 2, leading to strictly larger bias for $T>2$. As shown
by Dhaene and Jochmans (2015), this factor of 2 is the smallest inflation
of higher-order bias among a variety of forms of split-sample correction,
for example, splits into more than two subsets and/or splits into
subsets of unequal length. It can also be shown that the general leave-$k$-out
jackknife inflates the $O(T^{-2})$ bias by a factor of $\frac{T}{T-k}$,
so that the choice $k=1$ has the smallest higher-order bias among
this class. It is of course possible to remove this $O(T^{-2})$ bias
also. For example, the average of all leave-$k$-out and all leave-$j$-out
estimates (with $j<k$) can be combined with the MLE to give $\tilde{\theta}_{2J}=\frac{T^{2}}{jk}\hat{\theta}-\frac{(T-j)^{2}}{j(k-j)}\bar{\theta}_{j}+\frac{(T-k)^{2}}{k(k-j)}\bar{\theta}_{k}$,
which has bias of order $O(T^{-3})$. Similarly, Dhaene and Jochmans
(2015) suggest $\tilde{\theta}_{1/(2,3)}=3\hat{\theta}-3\bar{\theta}_{1/2}+\bar{\theta}_{1/3}$,
where $\bar{\theta}_{1/2}$ and $\bar{\theta}_{1/3}$ are the averages
of the two half-sample and three third-sample estimates. The properties
of these higher-order bias corrections, in particular their impact
on the higher-order variance is a subject of future research.

\subsection{Proof of Theorem}

As previously, the proof proceeds in terms of general V-statistics
that match the order of those in the relevant expansion terms. We
begin by deriving the form of $E[C_{n}]$ and $E[D_{n}]$, by taking
the expectations of third and fourth-order V-statistics. For a third-order
statistic we have 
\begin{align*}
E\big[\sqrt{T}W_{T,(3)}\big] & =\frac{1}{n}\sum_{i}\frac{1}{T}\sum_{t_{1},t_{2},t_{3}}E\big[k_{1}(x_{i,t_{1}})k_{2}(x_{i,t_{2}})k_{3}(x_{i,t_{3}})\big]\\
 & =\frac{1}{n}\sum_{i}E\big[k_{1}(x_{i,t})k_{2}(x_{i,t})k_{3}(x_{i,t})\big]E\big[\sqrt{T}W_{T,(2,1)}\big]\\
 & =\frac{1}{n^{2}}\sum_{i}E\big[k_{1}(x_{i,t})k_{2}(x_{i,t})k_{3}(x_{i,t})\big]E\big[\sqrt{T}W_{T,(1,1,1)}\big]\\
 & =\frac{1}{n^{3}}\sum_{i}E\big[k_{1}(x_{i,t})k_{2}(x_{i,t})k_{3}(x_{i,t})\big]
\end{align*}
 Clearly $\lim_{n\to\infty}E\big[W_{T,(2,1)}\big]=\lim_{n\to\infty}E\big[W_{T,(1,1,1)}\big]=0$
and we may ignore these terms. From Lemmas 1 and 2 in Section B3 of
the Appendix, this is true of the jackknife and split-sample versions
of these statistics as well, so that we need only focus on $W_{T,(3)}$
and its bias-corrected counterparts. Computing these expectations
gives 
\begin{align*}
E\big[\sqrt{T}\tilde{W}_{T,(3)}\big] & =-\frac{T}{(T-1)}\frac{1}{n}\sum_{i}E\big[k_{1}(x_{i,t})k_{2}(x_{i,t})k_{3}(x_{i,t})\big]\\
 & =-\frac{T}{T-1}E\big[\sqrt{T}W_{T,(3)}\big]E\big[\sqrt{T}\check{W}_{T,(3)}\big]\\
 & =2E\big[\sqrt{T}W_{T,(3)}\big]-4\frac{1}{T}\sum_{t\in\tau_{1}}E[k_{1}(x_{it})k_{2}(x_{it})k_{3}(x_{it})]\\
 &+\sum_{t\in\tau_{2}}E[k_{1}(x_{it})k_{2}(x_{it})k_{3}(x_{it})]\\
 & =-2E\big[\sqrt{T}W_{T,(3)}\big]
\end{align*}
 Next we consider the fourth-order V-statistics in $D_{n}$. Taking
the expectation of $W_{T,(4)}$ gives 
\begin{align*}
E[W_{T,(4)}] & =\frac{1}{n}\sum_{i}\frac{1}{T^{2}}\sum_{t_{1},t_{2},t_{3},t_{4}}E\big[k_{1}(x_{i,t_{1}})k_{2}(x_{i,t_{1}})k_{3}(x_{i,t_{2}})k_{4}(x_{i,t_{4}})\big]\\
 & =\frac{1}{n}\sum_{i}\frac{1}{T^{2}}\sum_{t}E\big[k_{1}(x_{i,t})k_{2}(x_{i,t})k_{3}(x_{i,t})k_{4}(x_{i,t})\big]\\
 & \quad+\frac{1}{n}\sum_{i}\frac{1}{T^{2}}\sum_{t_{1}\ne t_{2}}E\Big[k_{1}(x_{i,t_{1}})k_{2}(x_{i,t_{1}})k_{3}(x_{i,t_{2}})k_{4}(x_{i,t_{2}})\\
 & \quad+k_{1}(x_{i,t_{1}})k_{2}(x_{i,t_{2}})k_{3}(x_{i,t_{1}})k_{4}(x_{i,t_{2}})\\
 & \quad+k_{1}(x_{i,t_{1}})k_{2}(x_{i,t_{2}})k_{3}(x_{i,t_{2}})k_{4}(x_{i,t_{1}})\Big]\\
 & =\frac{T-1}{T}\frac{1}{n}\sum_{i}\Big(E[k_{1}(x_{i,t})k_{2}(x_{i,t})]E[k_{3}(x_{i,t})k_{4}(x_{i,t})]\\
 & \quad+E[k_{1}(x_{i,t})k_{3}(x_{i,t})]E[k_{2}(x_{i,t})k_{4}(x_{i,t})]\\
 & \quad+E[k_{1}(x_{i,t})k_{4}(x_{i,t})]E[k_{2}(x_{i,t})k_{3}(x_{i,t})]\Big)+o(1)
\end{align*}
 It has been shown that the fourth-order jackknifed V-statistic has
the form 
\begin{align*}
\tilde{W}_{i,T,4} & =\frac{-2T+1}{T(T-1)^{2}}\sum_{t}k_{1}(x_{i,t})k_{2}(x_{i,t})k_{3}(x_{i,t})k_{4}(x_{i,t})\\
 & -\frac{T^{2}-3T+1}{T(T-1)^{3}}\sum_{t_{1}\ne t_{2}}\Big(k_{1}(x_{i,t_{1}})k_{2}(x_{i,t_{2}})k_{3}(x_{i,t_{2}})k_{4}(x_{i,t_{2}})\\
 & +k_{1}(x_{i,t_{2}})k_{2}(x_{i,t_{1}})k_{3}(x_{i,t_{2}})k_{4}(x_{i,t_{2}})+k_{1}(x_{i,t_{2}})k_{2}(x_{i,t_{2}})k_{3}(x_{i,t_{1}})k_{4}(x_{i,t_{2}})\\
 & +k_{1}(x_{i,t_{2}})k_{2}(x_{i,t_{2}})k_{3}(x_{i,t_{2}})k_{4}(x_{i,t_{1}})\Big)\\
 & -\frac{T^{2}-3T+1}{T(T-1)^{3}}\sum_{t_{1}\ne t_{2}}\Big(k_{1}(x_{i,t_{1}})k_{2}(x_{i,t_{1}})k_{3}(x_{i,t_{2}})k_{4}(x_{i,t_{2}})\\
 & +k_{1}(x_{i,t_{1}})k_{2}(x_{i,t_{2}})k_{3}(x_{i,t_{1}})k_{4}(x_{i,t_{2}})+k_{1}(x_{i,t_{1}})k_{2}(x_{i,t_{2}})k_{3}(x_{i,t_{2}})k_{4}(x_{i,t_{1}})\Big)\\
 & +\frac{3T-1}{T(T-1)^{3}}\sum_{t_{1}\neq t_{2}\ne t_{3}}\Big(k_{1}(x_{i,t_{1}})k_{2}(x_{i,t_{1}})k_{3}(x_{i,t_{2}})k_{4}(x_{i,t_{3}})\\
 & +k_{1}(x_{i,t_{1}})k_{2}(x_{i,t_{2}})k_{3}(x_{i,t_{1}})k_{4}(x_{i,t_{3}})+k_{1}(x_{i,t_{1}})k_{2}(x_{i,t_{2}})k_{3}(x_{i,t_{3}})k_{4}(x_{i,t_{1}})\\
 & +k_{1}(x_{i,t_{2}})k_{2}(x_{i,t_{1}})k_{3}(x_{i,t_{1}})k_{4}(x_{i,t_{3}})+k_{1}(x_{i,t_{2}})k_{2}(x_{i,t_{1}})k_{3}(x_{i,t_{3}})k_{4}(x_{i,t_{1}})\\
 & +k_{1}(x_{i,t_{2}})k_{2}(x_{i,t_{3}})k_{3}(x_{i,t_{1}})k_{4}(x_{i,t_{1}})\Big)\\
 & +\frac{T^{2}+3T-1}{T(T-1)^{3}}\sum_{t_{1}\neq t_{2}\ne t_{3}\ne t_{4}}k_{1}(x_{i,t_{1}})k_{2}(x_{i,t_{1}})k_{3}(x_{i,t_{2}})k_{4}(x_{i,t_{4}})
\end{align*}
 Taking expectations gives 
\begin{align*}
E[\tilde{W}_{T,4}] & =-\frac{T^{2}-3T+1}{(T-1)^{2}}\frac{1}{n}\sum_{i}\Big(E[k_{1}(x_{i,t})k_{2}(x_{i,t})]E[k_{3}(x_{i,t})k_{4}(x_{i,t})]\\
 & +E[k_{1}(x_{i,t})k_{3}(x_{i,t})]E[k_{2}(x_{i,t})k_{4}(x_{i,t})]\\
 & +E[k_{1}(x_{i,t})k_{4}(x_{i,t})]E[k_{2}(x_{i,t})k_{3}(x_{i,t})]\Big)+o(1)\\
 & =-\frac{T}{T-1}E[W_{T,4}]+o(1)
\end{align*}
 The fourth order term $\tilde{D}_{n,J}$ also contains fourth-order
V-statistics of the form $\tilde{W}_{T,(3,1)}$, $\tilde{W}_{T,(2,2)}$,
$\tilde{W}_{T,(2,1,1)}$, and $\tilde{W}_{T,(1,1,1,1)}$, and the
same relationship holds for these terms.

The fourth-order split-sample statistic has the form 
\begin{align*}
\check{W}_{i,T,4} & =2W_{i,T,4}\\
 & -8\Big(\frac{1}{T^{2}}\sum_{t_{1},t_{2},t_{3},t_{4}\in\tau_{1}}k_{1}(x_{it_{1}})k_{2}(x_{it_{2}})k_{3}(x_{it_{3}})k_{4}(x_{it_{4}})\\
 & +\frac{1}{T^{2}}\sum_{t_{1},t_{2},t_{3},t_{4}\in\tau_{2}}k_{1}(x_{it_{1}})k_{2}(x_{it_{2}})k_{3}(x_{it_{3}})k_{4}(x_{it_{4}})\Big)
\end{align*}
 and taking expectations it is straightforward to see that 
\[
E[\check{W}_{T,(4)}]=2E[W_{T,(4)}]=-2E[W_{T,(4)}]+o(1)
\]
 As before, identical results can be shown for the other forms of
fourth-order statistics. Combining these results gives the statement
in the theorem.

\section{Derivation of bias for average effects}

As with the expansions for other parameters we first let $F_{i}(\epsilon)=F_{i}+\epsilon\sqrt{T}(\hat{F}_{i}-F_{i})=F_{i}+\epsilon\Delta_{i}$,
and then consider an expansion of $\mu(\epsilon)$ (for $\epsilon=1/\sqrt{T}$)
around $\epsilon=0$.
\[
\mu(\epsilon)=\mu_{0}+\frac{1}{\sqrt{T}}\frac{\partial\mu(0)}{\partial\epsilon}+\frac{1}{2}\frac{1}{T}\frac{\partial^{2}\mu(0)}{\partial\epsilon^{2}}+o_{p}(\frac{1}{T})
\]

Using the definition $\mu(\epsilon)=\frac{1}{n}\sum_{i}\int m\big(\theta(\epsilon),\alpha_{i}(\theta(\epsilon),\epsilon)\big)dF_{i}(\epsilon)$,
and taking two derivatives with respect to $\epsilon$ gives
\begin{align}
\mu^{\epsilon}=\frac{\partial\mu(0)}{\partial\epsilon} & =\frac{1}{n}\sum_{i}\int\frac{\partial m_{it}}{\partial\epsilon}dF_{i}+\frac{1}{n}\sum_{i}\int m_{it}d\Delta_{i}\nonumber \\
\mu^{\epsilon\epsilon}=\frac{\partial^{2}\mu(0)}{\partial\epsilon^{2}} & =\frac{1}{n}\sum_{i}\int\frac{\partial^{2}m_{it}}{\partial\epsilon^{2}}dF_{i}+\frac{2}{n}\sum_{i}\int\frac{\partial m_{it}}{\partial\epsilon}d\Delta_{i}\label{eq:mu_eps}
\end{align}

The first expression is given by

\begin{align*}
\mu^{\epsilon}=\frac{\partial\mu(0)}{\partial\epsilon} & =\frac{1}{n}\sum_{i}\int\frac{\partial m_{it}}{\partial\theta'}dF_{i}\theta^{\epsilon}(0)+\frac{1}{n}\sum_{i}\int\frac{\partial m_{it}}{\partial\alpha_{i}}dF_{i}\alpha_{i}^{\theta'}\theta^{\epsilon}(0)\\
 & \quad+\frac{1}{n}\sum_{i}\int\frac{\partial m_{it}}{\partial\alpha_{i}}\alpha_{i}^{\epsilon}(0)dF_{i}+\frac{1}{n}\sum_{i}\frac{1}{\sqrt{T}}\sum_{t}\tilde{m}_{it}\\
 & =\frac{1}{n}\sum_{i}\Big\{\big(E[m_{it}^{\theta'}]+E[m_{it}^{\alpha_{i}}]\alpha_{i}^{\theta'}\big)\theta^{\epsilon}+E[m_{it}^{\alpha_{i}}]\alpha_{i}^{\epsilon}+\frac{1}{\sqrt{T}}\sum_{t}\tilde{m}_{it}\Big\}
\end{align*}

with $m_{it}=m(w,z_{it},\theta_{0},\alpha_{i}(\theta_{0}))$ and $\tilde{m}_{it}=m_{it}-E[m_{it}]$.

Taking a second derivative of $m$ gives

\begin{align*}
\frac{1}{n}\sum_{i}\int\frac{\partial^{2}m_{it}}{\partial\epsilon^{2}}dF_{i} & =\frac{1}{n}\sum_{i}\Big\{(\theta^{\epsilon})'E[m^{\theta\theta'}]\theta^{\epsilon}+2E[m^{\alpha_{i}\theta'}]\theta^{\epsilon}\alpha_{i}^{\theta'}\theta^{\epsilon}+2E[m^{\alpha_{i}\theta'}]\alpha_{i}^{\epsilon}\theta^{\epsilon}\\
 & +E[m^{\alpha_{i}\alpha_{i}}](\alpha_{i}^{\theta'}\theta^{\epsilon})^{2}+2E[m^{\alpha_{i}\alpha_{i}}]\alpha_{i}^{\epsilon}\alpha_{i}^{\theta'}\theta^{\epsilon}+E[m^{\alpha_{i}\alpha_{i}}](\alpha_{i}^{\epsilon})^{2}\\
 & +\big(E[m^{\theta'}]+E[m^{\alpha_{i}}]\alpha_{i}^{\theta'}\big)\theta^{\epsilon\epsilon}+E[m^{\alpha_{i}}](\theta^{\epsilon})'\alpha_{i}^{\theta\theta'}\theta^{\epsilon}+2E[m^{\alpha_{i}}]\alpha_{i}^{\epsilon\theta'}\theta^{\epsilon}\\
 & +E[m^{\alpha_{i}}]\alpha_{i}^{\epsilon\epsilon}\Big\}\\
 & =\frac{1}{n}\sum_{i}\Big\{(\theta^{\epsilon})'\big(E[m^{\theta\theta'}]+E[m^{\alpha_{i}}]\alpha_{i}^{\theta\theta'}\big)\theta^{\epsilon}\\
 & +2E[m^{\alpha_{i}\theta'}]\theta^{\epsilon}\alpha_{i}^{\theta'}\theta^{\epsilon}+2\big(E[m^{\alpha_{i}\theta'}]+E[m^{\alpha_{i}\alpha_{i}}]\alpha_{i}^{\theta'}\big)\alpha_{i}^{\epsilon}\theta^{\epsilon}\\
 & +E[m^{\alpha_{i}\alpha_{i}}](\alpha_{i}^{\theta'}\theta^{\epsilon})^{2}+E[m^{\alpha_{i}\alpha_{i}}]\alpha_{i}^{\epsilon}+E[m^{\alpha_{i}}]\alpha_{i}^{\epsilon\epsilon}\\
 & +\big(E[m^{\theta'}]+E[m^{\alpha_{i}}]\alpha_{i}^{\theta'}\big)\theta^{\epsilon\epsilon}+2E[m^{\alpha_{i}}]\alpha_{i}^{\epsilon\theta'}\theta^{\epsilon}\Big\}
\end{align*}

It is straightforward to then show that

\begin{align*}
\frac{1}{n}\sum_{i}E[m^{\alpha_{i}\alpha_{i}}](\alpha_{i}^{\epsilon})^{2} & =O_{p}(1)\\
\frac{1}{n}\sum_{i}\big(E[m^{\theta'}]+E[m^{\alpha_{i}}]\alpha_{i}^{\theta'}\big)\theta^{\epsilon\epsilon} & =O_{p}(1)\\
\frac{1}{n}\sum_{i}E[m^{\alpha_{i}}]\alpha_{i}^{\epsilon\epsilon} & =O_{p}(1)
\end{align*}

while the remaining terms are all $O_{p}(n^{-1})$. By differentiating
the first order condition for $\alpha_{i}$ twice with respect to
$\epsilon$ we find that 
\begin{align*}
\alpha_{i}^{\epsilon\epsilon} & =-\frac{1}{E[V_{it}^{\alpha_{i}}]}\Big\{ E[V_{it}^{\alpha\alpha}](\alpha_{i}^{\epsilon})^{2}+\alpha_{i}^{\epsilon}\frac{2}{\sqrt{T}}\sum_{t}\tilde{V}_{it}^{\alpha_{i}}\Big\}=O_{p}(1)
\end{align*}

Finally, we evaluate the remaining term in Equation \ref{eq:mu_eps},
which is (letting $\tilde{m}=m-E[m]$)
\begin{align*}
\frac{1}{n}\sum_{i}\int\frac{\partial m_{it}}{\partial\epsilon}d\Delta_{i} & =\frac{1}{n}\sum_{i}\Big\{\Big(\frac{1}{\sqrt{T}}\sum_{t}(\tilde{m}_{it}^{\theta}+\tilde{m}_{it}^{\alpha_{i}}\alpha_{i}^{\theta'})\Big)\theta^{\epsilon}+\Big(\frac{1}{\sqrt{T}}\sum_{t}\tilde{m}_{it}^{\alpha_{i}}\Big)\alpha_{i}^{\epsilon}\Big\}\\
 & =\frac{1}{n}\sum_{i}\frac{1}{\sqrt{T}}\sum_{t}\tilde{m}_{it}^{\alpha_{i}}\alpha_{i}^{\epsilon}+O_{p}(n^{-1})
\end{align*}

Combining the above results gives
\begin{align*}
\mu^{\epsilon\epsilon} & =\frac{1}{n}\sum_{i}E[m^{\alpha_{i}\alpha_{i}}](\alpha_{i}^{\epsilon})^{2}+\frac{1}{n}\sum_{i}E[m^{\alpha_{i}}]\alpha_{i}^{\epsilon\epsilon}+\frac{1}{n}\sum_{i}\big(E[m^{\theta'}]+E[m^{\alpha_{i}}]\alpha_{i}^{\theta'}\big)\theta^{\epsilon\epsilon}\\
 & \quad+\frac{2}{n}\sum_{i}\Big(\frac{1}{\sqrt{T}}\sum_{t}\tilde{m}_{it}^{\alpha_{i}}\Big)\alpha_{i}^{\epsilon}+O_{p}(n^{-1})
\end{align*}

We can take the expectation of the first and second-order terms in
the expansion to compute the first-order bias of the estimator. Taking
the expectation of the first term gives
\begin{align*}
E[\mu^{\epsilon}] & =\frac{1}{n}\sum_{i}\Big\{\big(E[m_{it}^{\theta'}]+E[m_{it}^{\alpha_{i}}]\alpha_{i}^{\theta'}\big)E[\theta^{\epsilon}]+E[m_{it}^{\alpha_{i}}\alpha_{i}^{\epsilon}]+\frac{1}{\sqrt{T}}\sum_{t}E[\tilde{m}_{it}]\Big\}\\
 & =0
\end{align*}
Looking next at the components of the second term
\begin{align*}
E[(\alpha_{i}^{\epsilon})^{2}] & =\frac{1}{E[V_{it}^{2}]^{2}}\frac{1}{T}\sum_{s,t}E[V_{is}V_{it}]=\frac{1}{E[V_{it}^{2}]}\\
E[\alpha_{i}^{\epsilon\epsilon}] & =\frac{1}{E[V_{it}^{2}]^{2}}\Big(E[V_{it}^{\alpha\alpha}]+2E[V_{it}V_{it}^{\alpha_{i}}]\Big)\\
E[\theta^{\epsilon\epsilon}] & =\big(\frac{1}{n}\sum_{i}\mathcal{I}_{i}\big)^{-1}\frac{1}{n}\sum_{i}\frac{E[V_{it}U_{it}^{\alpha_{i}}]}{E[V_{it}^{2}]}+o(1)\\
E[\alpha_{i}^{\epsilon}\frac{1}{\sqrt{T}}\sum_{t}\tilde{m}_{it}^{\alpha_{i}}] & =\frac{1}{E[V_{it}^{2}]}\frac{1}{T}\sum_{s,t}E[m_{it}^{\alpha_{i}}V_{is}]=-\frac{E[m_{it}^{\alpha_{i}}V_{it}]}{E[V_{it}^{2}]}
\end{align*}

Combining gives
\begin{align*}
E[ & \sqrt{nT}(\hat{\mu}-\mu_{0})]\\
 & =\frac{1}{2}\sqrt{\frac{n}{T}}\frac{1}{n}\sum_{i}\Big\{\frac{E[m_{it}^{\alpha_{i}\alpha_{i}}]}{E[V_{it}^{2}]}+E[m_{it}^{\alpha_{i}}]\frac{E[V_{it}^{\alpha\alpha}]+2E[V_{it}V_{it}^{\alpha_{i}}]}{E[V_{it}^{2}]^{2}}+\frac{2E[m_{it}^{\alpha_{i}}V_{it}]}{E[V_{it}^{2}]}\\
 & +\big(E[m_{it}^{\theta'}]+E[m_{it}^{\alpha_{i}}]\alpha_{i}^{\theta'}\big)\big(\frac{1}{n}\sum_{j}\mathcal{I}_{j}\big)^{-1}\Big(\frac{1}{n}\sum_{j}\frac{E[V_{jt}U_{jt}^{\alpha_{j}}]+\frac{1}{2}E[U_{it}^{\alpha\alpha}]}{E[V_{jt}^{2}]}\Big)\Big\}+o(1)
\end{align*}

The first three terms capture the bias that comes directly from the
estimation of the fixed effects and match the formula for $\Delta(w)$
in Hahn and Newey (2004). The terms in the second line capture the
effect of the bias in the parameter $\theta$ (and hence an indirect
impact of the fixed effects).

\end{document}